\documentclass[12pt]{article}

\usepackage[margin=1in]{geometry}
\usepackage{graphicx}
\usepackage{xcolor}

\usepackage{amsmath, amsfonts, amsthm, mathtools}

\usepackage{fontspec}
\usepackage{unicode-math}

\theoremstyle{definition}
\newtheorem{definition}{Definition}
\newtheorem{assumption}{Assumption}

\theoremstyle{plain}

\newtheorem{lemma}{Lemma}
\newtheorem{theorem}{Theorem}

\theoremstyle{remark}
\newtheorem{remark}{Remark}

\usepackage[
    style=authoryear, 
    maxbibnames=40,
    url=false,
    isbn=false,
    natbib=true,
    uniquename=false
]{biblatex}
\usepackage[colorlinks, allcolors=blue]{hyperref}

\usepackage{subcaption}
\usepackage{authblk}
\usepackage{papermacros}

\begin{document}

\title{Estimating peer effects in noisy, low-rank networks via network smoothing}
\author[1]{Alex Hayes}
\author[2]{Keith Levin}
\date{\today}
\affil[1]{\small Department of Economics, Stanford University, USA}
\affil[2]{\small Department of Statistics, University of Wisconsin--Madison, USA}

\maketitle

\begin{quote}
    Peer effect estimation requires precise network measurement, yet most empirical networks are noisy, rendering standard estimators inconsistent. To address measurement error in networks, we propose a method to estimate peer effects in networks whose expected adjacency matrix is low-rank. Our key result shows that peer effects over a true unobserved network are asymptotically equivalent to peer effects over the expected adjacency matrix. This result reduces peer effect estimation in noisy networks to low-rank matrix estimation targeting the expected adjacency matrix. We develop our theory for weighted networks observed with additive noise, but simulations suggest approach can be applied more generally when there is a low-rank estimation method suited to a particular noise structure. We demonstrate via simulations that our approach applies to egocentric samples, aggregated relational data, and networks with missing edges, each requiring a different low-rank estimation method.
\end{quote}

\section{Introduction}

One of the fundamental challenges in estimating social effects, such as contagion, is measuring and quantifying relationships in a reliable manner. These measurement issues pose a fundamental problem, as most current techniques for estimating peer effects in social networks rely on data that precisely encodes potential social influences. For instance, in the social sciences, a popular approach is to use the ``name generator'' method, where each study participant is asked to name their friends or other social ties.  This method is replete with complications: participants report different friends depending on the precise language of the name generator \citep{shakya2017, bidart2011}, participants often forget day-to-day social contacts \citep{smieszek2012}, and reports from different participants can disagree \citep{debacco2021}.

When relationship measurements are noisy, traditional peer effect estimation encounters a problem. Standard network autoregressive models assume that we observe a true, noiseless network, which precisely encodes the pathways along which peer influence operates. If we observe a network via noisy measurement, these estimators may produce inconsistent or misleading results. We propose a solution to this problem: replace noisy measurements of the adjacency matrix with smoothed, de-noised estimates of network structure. Our approach is motivated by a key theoretical insight: under low-rank network models, contagion over a true, noiseless network is asymptotically equivalent to contagion over a smoothed, latent adjacency matrix.

To formalize matters, let us consider two network autoregressive models. The first is a standard \emph{peer contagion model}, where influence operates over the degree-normalized adjacency matrix:
\begin{equation}
    Y_i =  \betanaught + \W_i \betaw + \Xpop_i \betax + \betay \sum_{j \neq i} \frac{\A_{ij}}{d_i}  Y_j + \varepsilon_i,
\end{equation}
where $Y_i$ is an outcome for node $i$, $\W_i \in \R^p$ are observed covariates, $\Xpop_i \in \R^d$ is a latent vector encoding the behavior of node $i$, $\A_{ij} \in \R$ is the strength of the edge between nodes $i$ and $j$, and $d_i = \sum_j \A_{ij}$ is the degree of node $i$.

The second model is a \emph{latent contagion model}, where influence operates over a smoothed, expected adjacency matrix:
\begin{equation}
    Y_i = \thetanaught + \W_i \thetaw + \Xpop_i \thetax + \thetay \sum_{i \neq j} \frac{\E[\Xpop]{\A_{ij}}}{\E[\Xpop_i]{d_i}} Y_j + \varepsilon_i.
\end{equation}
Under a random dot product graph \citep[RDPG;][]{athreya2018} model, where $\E[\Xpop]{\A_{ij}} = \Xpop_i^\top \Xpop_j$, influence is inversely proportional to the cosine distance between nodes in latent space. We develop two-stage least squares estimators for both of these models, and prove that they are consistent under misspecification. That is, estimators derived under the latent contagion model consistently recover peer contagion parameters when the true data generating process follows the peer contagion model, and vice versa. In a formal and quantifiable sense, contagion in peer and latent spaces are (asymptotically) equivalent under a random dot product graph model.

This equivalence has important practical implications. The latent contagion estimators we propose are functions of the nodal covariates $\W$ and the latent positions $\Xpop$, but not the adjacency matrix $\A$ itself. This means that even when $\A$ is observed with noise, our estimators remain consistent provided we can obtain a sufficiently good estimate $\Xhat$ of $\Xpop$. Crucially, in random dot product graphs, it is often possible to obtain high-quality estimates of $\Xpop$ even when the network is observed with measurement error or when data is missing.
We leverage standard tools from spectral network analysis---specifically, the adjacency spectral embedding---to estimate latent positions \citep{lyzinski2014}.
Under a sub-gamma model of edge-level noise, which includes binary, weighted, and count-valued networks as special cases, we show that our estimators achieve $\sqrt{n}$ convergence rates and are asymptotically normal. Since these results hold when the observed network contains sub-gamma noise, this enables inference when network measurements are corrupted by random errors.

The flexibility of our approach extends beyond the specific case of sub-gamma edge noise. While we focus on this setting in our theoretical results, the low-rank smoothing approach that we advocate opens the door to a wide range of robust estimation strategies. Any method that can reliably estimate the low-rank structure of $\E{\A}$---whether through matrix completion for missing data, debiasing for measurement error, or other spectral methods---can be combined with our latent contagion model to estimate peer effects in challenging data settings. We demonstrate this flexibility through simulations and an empirical application studying the contagiousness of smoking in an adolescent social network.

\subsection*{Notation}
\label{subsec:notation}

For a positive integer $n$, we write $[n] = \{1,2,\dots,n\}$. We denote the identity matrix and the zero matrix by $\mI$ and $\symbf 0$, respectively, subscripting by the intended dimension when it is not immediately clear from context. For an $n_1 \times n_2$ matrix $\H$, we denote by $\H_{\cdot j}$ the column vector formed by the $j$-th column of $\H$, and we denote by $\H_{i \cdot}$ the row vector formed by the $i$-th row of $\H$. Abusing notation slightly, we also let $\H_i \in \R^{n_2}$ denote the column vector formed by transposing the $i$-th row of $\H$, that is, $\H_i = (\H_{i \cdot})^\top$. We often consider sequences of matrices indexed by the number of nodes $n$, but suppress the $n$ in our notation to avoid notational clutter. Given any suitably specified ordering on eigenvalues of a square matrix $\H$, we let $\lambda_i(\H)$ denote the $i$-th eigenvalue of $\H$ under that ordering. Similarly, $\sigma_i(\H)$ denotes the $i$-th singular value of $\H$. We let $\norm{\H}$ denote the spectral norm of $H$ and $\norm{\H}_F$ denote the Frobenius norm. We let $\norm{\H}_{2 \rightarrow \infty}$ denote the maximum of the Euclidean norms of the rows of $\H$, so that $\norm{\H}_{2 \rightarrow \infty} = \max_i \norm*{\H_i}$. We use standard Landau notation to indicate convergence rates, as well as typical probabilistic variants $o_p$ and $\mathcal{O}_p$. Throughout, $C > 0$ will denote a constant independent of $n$ that may change from line to line.

\section{Related work}

Our work builds on several strands in the literature. Most directly related are works that model contagion in latent spaces. \citet{sweet2020} suggest a Hoff model for social influence in a latent space, while \citet{chen2023c} model correlations between stock returns via the latent structure encoded by a stochastic blockmodel in a high-dimensional time series setting. These approaches are similar to our own in that they model social influence in a latent space, but they differ from the present work in that they do not consider equivalence between latent and traditional social influence, and thus cannot use the latent space models to account for measurement error in the network.

A substantial body of prior work considers network autoregression with endogeneity, where the network depends on nodal features. \citet{hayes2025b} considers identification and asymptotic signal-to-noise ratios in network autoregression models, proving minimax rates that apply to the models under consideration here. \citet{mcfowland2021} establishes consistency of ordinary least squares in network autoregression models unrolled in time, which \citet{chang2024a} extends to show asymptotic normality in the longitudinal setting. \citet{paul2022a} presents a network autoregressive model that accounts for estimation error in latent positions, but not in the network itself, considering a quasi-maximum likelihood estimator very similar to ours. \citet{johnsson2019} proposes a non-parametric approach to account for endogeneity in cross-sectional networks using sieve estimators, and \citet{egami2021a} proposes a generalized method of moments estimator using double negative controls to account for contextual effects and homophily. All of these approaches assume precise observation of a social network that encodes possible pathways for social influence, an assumption that we find unrealistic and seek to avoid in this paper.

Most relevant to our focus on measurement error is a growing literature on linear-in-means models with network mismeasurement. \citet{griffith2022} and \citet{griffith2024a} consider bias in peer effect estimates when individuals are capped at reporting a maximum number of friends. To the best of our knowledge, our estimators cannot handle this kind of degree censoring and are subject to similar bias. \citet{boucher2025} considers estimation of linear-in-means models when only partial network data is available but the network distribution is known. They propose a simulated generalized method of moments approach to estimating peer effects, given a consistent estimate of edge probabilities for a network, introducing a bias-correction to account for noise due to the simulation process. In contrast, we propose a plug-in estimator for peer effects that does not require bias adjustment. However, low-rank model estimates could naturally be used with the simulated generalized method of moments. \citet{lewbel2024b} shows that two-stage least squares estimators remain consistent under small amounts of network mismeasurement. The settings considered here have substantially more error than accommodated by their results. \citet{lewbel2025} proposes an adjustment for larger amounts of measurement error, based on estimating false positive and false negative rates from either an asymmetric observation of a network or multiple networks. Mechanically, both \citet{lewbel2025} and our own approach work by plugging in an estimate of the expected adjacency matrix to account for noise. Since \citet{lewbel2025} are interested in i.i.d. false positives and false negatives, they require auxiliary data to estimate false positive and false negative rates, whereas we require only a single realization of the network. \citet{li2022} also consider how multiple measurements of a network can be used to estimate and adjust for measurement error rates, in a causal exposure-mapping framework.

Related work on causal inference with network misspecification includes \citet{savje2024b}, which considers causal estimation when exposure maps are misspecified, and \citet{hardy2024}, who propose a mixture model over potentially misspecified treatments in linear-in-means models. These works highlight the close relationship between misspecified treatments and mismeasured networks, which can be thought of as conceptual duals. \citet{zhang2024b} presents an estimator for mismeasured networks with bounded degrees in experimental settings.  \citet{li2025h} considers ordinary least squares estimators of linear-in-means models under parametric misspecification. \citet{lewbel2023} and \citet{yu2022a} consider estimating spillover effects when the network is entirely unobserved. \citet{spohn2023}, \citet{chin2019a}, and \citet{leung2022} similarly consider linear models for causal interference in precisely observed networks. \cite{liu2013} considers two-stage estimators for linear-in-means models in sampled networks, and propose instruments that account for this sampling.

Finally, our work connects to the broader literature on network autoregressive models. \citet[][Chapter 2]{lesage2009} describe how the typical marginal effect interpretation does not apply to linear-in-means models due to non-linearity, and present methods to compute impact scores that retain this interpretation. \citet{vazquez-bare2023} discusses when coefficients have a causal interpretation (see also \citealt{leung2022,mcfowland2021}). Estimation approaches are given by \citet{ord1975, kelejian1998,lee2002,lee2003,lee2004,kelejian2007,lee2010, su2012, drukker2013, lin2010a} and surveyed in \citet{bivand2021}. Key identification results were given by \citet{bramoulle2009}, and \citet{bramoulle2020} surveys identification in network autoregressive and linear-in-means models.

\section{Models and Estimators}
\label{sec:theory}

In this section, we formalize two network autoregressive models for peer influence: a standard peer contagion model and our proposed latent contagion model. We then develop two-stage least squares estimators for both models, and establish our main theoretical results showing that these estimators remain consistent under model misspecification. Our results essentially show that these models and estimators are interchangeable, which is useful because the estimators derived under the latent contagion model can easily be extended to handle noisy networks.

\subsection{Setup} \label{subsec:Gdefs}

Consider a network with $n$ nodes encoded by a symmetric, non-negative adjacency matrix $\A \in \R_+^{n \times n}$. For each node $i \in [n]$, we observe an outcome $Y_i \in \R$ and covariates $\W_i \in \R^p$. We assume each node has an unobserved latent position $\Xpop_i \in \R^d$ that governs how node $i$ forms connections to other nodes. This framing draws on a large literature on latent space models, particularly the random dot product graph \citep[RDPG;][]{athreya2018}, where connection probabilities are functions of these latent positions. Let $d_i = \sum_{j \neq i} \A_{ij}$ denote the degree of node $i$. For notational convenience, we define $\D = \diag(d_1, d_2, \dots, d_n)$ and the degree-normalized adjacency matrix $\G = \D^{-1} \A \in \R^{n \times n}$. If node $i$ is isolated with $d_i = 0$, we set $\G_{i \cdot} = \symbf 0$. Similarly, in the latent space, we define $\Apop = \E[\Xpop]{\A}$, $\dtilde_i = \sum_j \Apop_{ij} = \E[\Xpop_i]{d_i}$, $\Dtilde = \diag(\dtilde_1, \dots, \dtilde_n)$, and $\Gtilde = \Dtilde^{-1} \Apop \in \R^{n \times n}$.

\subsection{Two Models of Contagion}

We now present two complementary specifications for how peer influence propagates through social networks.

\paragraph{Peer contagion model.} The first specification is a standard spatial autoregressive model adapted to networks via latent positions:
\begin{align}
    \label{eq:lim-peer}
    Y_i & =  \betanaught + \W_i \betaw + \Xpop_i \betax + \betay \sum_{j: j \neq i} \frac{\A_{ij}}{d_i}  Y_j + \varepsilon_i \\
    \label{eq:lim-peer-red}
    \Y  & = \paren*{\mI - \betay \G}^{-1} \paren*{\1_n \betanaught + \W \betaw + \Xpop \betax  + \be}
\end{align}
Here, $\betay \in (-1, 1)$ measures how peer outcomes influence focal outcomes through the observed network structure. The latent positions $\Xpop_i$ model homophily, which is crucial for both statistical identification \citep{hayes2025b} and causal inference \citep{shalizi2011}. We call this the \emph{peer contagion} model because influence operates over the degree-normalized adjacency matrix $\G$. We assume that $\varepsilon_i$ i.i.d. and mean zero with variance $\sigma_\varepsilon^2$.

\paragraph{Latent contagion model.} Our proposed alternative replaces the observed adjacency matrix with its conditional expectation:
\begin{align}
    \label{eq:lim-latent}
    Y_i & = \thetanaught + \W_i \thetaw + \Xpop_i \thetax + \thetay \sum_{j:j \neq i} \frac{\E[\Xpop]{\A_{ij}}}{\E[\Xpop_i]{d_i}} Y_j + \varepsilon_i \\
    \label{eq:lim-latent-red}
    \Y  & = \paren*{\mI - \thetay \Gtilde}^{-1} \paren*{\1_n \thetanaught + \W \thetaw + \Xpop \thetax + \be}
\end{align}
Under the RDPG, where $\E[\Xpop_i, \Xpop_j]{\A_{ij}} = \Xpop_i^\top \Xpop_j$, the latent contagion model takes a particularly interpretable form:
\begin{equation}
    \label{eq:lim-latent-red-2}
    Y_i = \thetanaught + \W_i \thetaw + \Xpop_i \thetax + \thetay \sum_{j \neq i} \frac{\Xpop_i^\top \Xpop_j}{\sum_k \Xpop_i^\top \Xpop_k} Y_j + \varepsilon_i.
\end{equation}
\noindent In this specification, peer influence is inversely proportional to the cosine distance between nodes in latent space. All pairs of nodes exert influence on one another, with the strength of influence determined by latent proximity rather than the realization of individual edges.
The latent contagion model reflects a fundamentally different view of social influence: rather than contagion traveling along realized edges, it diffuses based on the propensity for connection. Nodes close in latent space exert strong influence regardless of whether a specific edge forms, while distant nodes exert negligible influence.

\begin{remark}
    The parameters $\symbf \beta$ and $\symbf \theta$ have fundamentally similar interpretations, and we introduce these differing parameters in an attempt to clarify notation. We use $\symbf \beta$ to indicate parameters in the peer contagion model, and $\widehat{\symbf \beta}$ to denote estimators derived under a peer contagion working model. Correspondingly, $\symbf \theta$ indicates parameters in the latent contagion model, and $\widehat{\symbf \theta}$ to denote estimators derived under a latent contagion working model. Later on, we will make statements of the form $\sqrt{n} (\widehat{\symbf \theta} - \symbf \beta) \to \mathcal{N}(0, \Sigma)$. Here, the use of $\beta$ as the underlying parameter indicates that the true data generating process is peer contagion, but $\widehat{\symbf \theta}$ indicates that we are estimating $\symbf \beta$ using an estimator derived under a latent contagion working model.
\end{remark}

Both the peer and latent contagion models require a stochastic model for how the network $\A$ is generated. We adopt a flexible sub-gamma framework that encompasses binary, weighted, and count-valued networks.

\begin{definition}[Sub-gamma network model]
    \label{def:subgamma-network}
    Let $\A \in \R^{n \times n}$ be a random symmetric matrix. Let $\Apop = \E[\Xpop]{\A}$ be the expectation of $\A$ conditional on $\Xpop \in \R^{n \times d}$, which has independent and identically distributed rows $\Xpop_1, \dots, \Xpop_n \in \R^d$. Assume $\Apop$ has rank $d$ and is positive semi-definite with eigenvalues $\spop_1 \ge \spop_2 \ge \cdots \ge \spop_d > 0 = \spop_{d+1} = \cdots = \spop_n$. Conditional on $\Xpop$, the upper-triangular elements of $\A - \Apop$ are independent $(\nu_n, b_n)$-sub-gamma random variables.
\end{definition}

The sub-gamma family is broad, including Bernoulli, Poisson, Exponential, Gamma, and Gaussian distributions, as well as all bounded distributions \citep{boucheron2013,vershynin2020}. We provide a formal definition of sub-gamma random variables along with a handful of related technical results in Appendix~\ref{apx:technical}. The generality of this framework allows us to handle diverse edge types, including binary friendships, weighted interaction frequencies, or count-valued communication volumes. As a consequence of this generality, our results require a comparatively high network density; forthcoming work by Hayes, Chandrasekhar, McCormick and Breza shows that the density requirements are less stringent in binary networks.

\subsection{Identification}

Identification in spatial autoregressive models is subtle, owing to their conditional specification via $\Y_i \mid \Y_{-i}, \W, \Xpop, \A$. The model can equivalently be written in its total law form (Equations \ref{eq:lim-peer-red} and \ref{eq:lim-latent-red}), as is standard in the Markov random field literature \citep{besag1974, rue2005a}. The following identification result is crucial for both models.

\begin{lemma}[\citealt{martellosio2022}]
    \label{lem:martellioso2022}
    Consider a network with finite numbers of nodes $n$. For the peer contagion model \eqref{eq:lim-peer-red}, suppose $\bbE[\be \mid \W, \Xpop, \G] = \symbf 0$ and $\abs*{\betay} < 1$. Then $\betanaught, \betaw, \betax, \betay$ are identified if and only if
    \begin{equation*}
        \rank \begin{bmatrix} \1_n \, \, \W \, \, \Xpop \, \, \G \W \, \, \G \Xpop \end{bmatrix} > \rank \begin{bmatrix} \1_n \, \, \W \, \, \Xpop \end{bmatrix}
    \end{equation*}
    An analogous result holds for the latent contagion model \eqref{eq:lim-latent-red} with $\Gtilde$ replacing $\G$, under the additional assumption that that $d \ge 2$, which rules out collinearity between $\Xpop$ and $\Gtilde \Xpop$.
\end{lemma}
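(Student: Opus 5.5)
The plan is to work from the reduced form \eqref{eq:lim-peer-red}. Conditioning on $(\W,\Xpop,\G)$ and using $\bbE[\be \mid \W,\Xpop,\G]=\symbf 0$, the observable distribution pins down the conditional mean
\[
\symbf\mu(\symbf\beta) = (\mI-\betay\G)^{-1}\Z\symbf\gamma, \qquad \Z = [\1_n \;\, \W \;\, \Xpop], \qquad \symbf\gamma=(\betanaught,\betaw^\top,\betax^\top)^\top .
\]
Since $\norm{\G}_\infty\le 1$ and $\abs{\betay}<1$, the matrix $\mI-\betay\G$ is invertible and $(\mI-\betay\G)^{-1}=\sum_{k\ge0}\betay^k\G^k$. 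Identification means that $\symbf\mu(\symbf\beta)=\symbf\mu(\symbf\beta')$ forces $\symbf\beta=\symbf\beta'$. Premultiplying by $\mI-\betay\G$, equality of means is equivalent to
\[
\Z(\symbf\gamma-\symbf\gamma') = (\betay-\betay')\,\G\,\symbf\mu(\symbf\beta'),
\]
or equivalently the structural-form equation $\symbf\mu=\Z\symbf\gamma+\betay\G\symbf\mu$ holding for two parameter values.

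\emph{Sufficiency.} Suppose the rank condition holds. If $\betay=\betay'$, the displayed equation reduces to $\Z(\symbf\gamma-\symbf\gamma')=0$, so $\symbf\gamma$ is identified provided $\Z$ has full column rank. I would either assume this, as is standard, or interpret identification modulo the null space of $\Z$, which is how the rank inequality is phrased. The substantive case is $\betay\neq\betay'$. Then $\G\symbf\mu(\symbf\beta)\in\operatorname{col}(\Z)$ would have to hold for the true parameter and every $\symbf\gamma$ in the parameter space. Expanding the Neumann series gives $\G\symbf\mu=\sum_k\betay^k\G^{k+1}\Z\symbf\gamma$. Requiring this to lie in $\operatorname{col}(\Z)$ for all $\symbf\gamma$ and all $\betay$ in an open interval, and then differentiating in $\betay$ at the relevant point or arguing via analyticity of the map, forces $\G\Z\subseteq\operatorname{col}(\Z)$. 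That inclusion is exactly the failure of the rank condition, because $\G\1_n=\1_n$ on non-isolated rows. Hence the rank condition rules out this case.

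\emph{Necessity.} If the rank fails, then $\G\Z=\Z\M$ for some matrix $\M$, so $\operatorname{col}(\Z)$ is $\G$-invariant. Then $\symbf\mu(\symbf\beta)$ lies in $\operatorname{col}(\Z)$ for every $\symbf\beta$. One can then take $\betay'\neq\betay$ and solve $\Z\symbf\gamma'=(\mI-\betay'\G)\symbf\mu$ for $\symbf\gamma'$, which yields two parameter values with the same observable distribution. This needs a small care step: $\1_n$ and isolated nodes with $\G_{i\cdot}=0$ must be handled so that $\G\1_n$ is in the span.

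\emph{Main obstacle.} The delicate point is that identification must hold for the \emph{true} $\symbf\gamma$, not for all $\symbf\gamma$. In \citet{martellosio2022} this is handled by working with the generic case and with the precise notion of identification used there. I would follow that treatment directly rather than reprove it.

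\emph{Latent contagion model.} The same argument applies with $\Gtilde$ in place of $\G$. Here $\Gtilde$ is row-stochastic with $\Gtilde\1_n=\1_n$, and $\Gtilde\Xpop=\Dtilde^{-1}\Xpop\Xpop^\top\Xpop$. When $d=1$, this vector is $\Dtilde^{-1}\Xpop\norm{\Xpop}^2$ with $\dtilde_i=\Xpop_i\sum_k\Xpop_k$, so it is proportional to $\1_n$, giving a collinearity that kills the rank gain. When $d\ge2$, the rows $\Xpop_i\Xpop^\top\Xpop/(\Xpop_i^\top\sum_k\Xpop_k)$ are generically not in $\operatorname{col}([\1_n\;\W\;\Xpop])$. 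This is the role of the added assumption $d\ge2$.
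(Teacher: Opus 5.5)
Your necessity direction is sound, modulo the isolated-node point below. The sufficiency direction has a genuine hole, and it is exactly the step you hand off to \citet{martellosio2022}. The paper gives no argument beyond that citation, so your sufficiency paragraph is the only argument on the table, and it does not go through.

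Non-identification at the true parameter gives $\G\symbf\mu(\symbf\beta)\in\operatorname{col}(\symbf Z)$, with $\symbf Z=[\1_n\;\W\;\Xpop]$, only for the true $\symbf\gamma$ and $\betay$. So ``would have to hold \dots\ for every $\symbf\gamma$'' does not follow. Your differentiation argument only shows that the rank condition excludes non-identification at all parameter values simultaneously.

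A pointwise version cannot be rescued either. If $\G\1_n=\1_n$ and $\betaw=\symbf 0$, $\betax=\symbf 0$, then $\symbf\mu=\betanaught\1_n/(1-\betay)$, so $(\betanaught,\betay)$ are confounded whatever the ranks are. The lemma is true only generically, and that version has a short direct proof:
\begin{itemize}
\item Take $\symbf Z$ of full column rank, fix $\betay$, and let $\symbf T=\G(\mI-\betay\G)^{-1}$ and $S=\operatorname{col}(\symbf Z)$.
\item Your displayed identity (whose sign should read $\betay'-\betay$) shows that identification fails at $(\symbf\gamma,\betay)$ iff $\symbf T\symbf Z\symbf\gamma\in S$. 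For the converse, if $\symbf T\symbf Z\symbf\gamma=\symbf Z\symbf c$, then $\symbf\gamma'=\symbf\gamma-(\betay'-\betay)\symbf c$ reproduces the mean for any $\betay'\in(-1,1)$.
\item The set $N_{\betay}$ of such $\symbf\gamma$ is a linear subspace.
\item If $N_{\betay}$ were everything, $S$ would be $\symbf T$-invariant. Then $\mI+\betay\symbf T=(\mI-\betay\G)^{-1}$ maps $S$ injectively into, hence onto, $S$. Therefore $\G S=\symbf T(\mI+\betay\symbf T)^{-1}S=\symbf T S\subseteq S$, which means the rank condition fails.
\end{itemize}
So under the rank condition $N_{\betay}$ is a proper, hence Lebesgue-null, subspace for every $\betay$. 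That is the generic identification statement, and it needs no variation in $\betay$ and no analyticity.

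Two smaller points.
\begin{itemize}
\item In the necessity direction, $\G\1_n\in\operatorname{col}(\symbf Z)$ is a needed hypothesis, not a care step. With isolated nodes, $\G\1_n=\1_n-\1_{\mathrm{iso}}$ can lie outside $S$ while $\G\W,\G\Xpop\in S$. Then $S$ is not $\G$-invariant, so by the argument above identification holds generically even though the stated rank condition fails. You need either no isolated nodes (automatic for $\Gtilde$ when every $\dtilde_i>0$) or the condition in the form $\operatorname{rank}[\symbf Z\;\G\symbf Z]>\operatorname{rank}\symbf Z$.
\item For the latent model your $d=1$ computation is right. In fact $\Gtilde=\1_n\Xpop^\top/(\1_n^\top\Xpop)$ there, so $\Gtilde\W$ is also a multiple of $\1_n$. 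However, the equivalence itself holds for every $d$; when $d=1$ both sides simply fail. Also, $d\ge 2$ is necessary but not sufficient for the rank condition: equal expected degrees give $\Gtilde\W,\Gtilde\Xpop\in\operatorname{col}(\Xpop)$ for any $d$.
\end{itemize}
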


\begin{remark}
    The latent positions $\Xpop$ are only identified up to an orthogonal transformation $\Q$, which implies $\betax$ is also only identified up to $\Q$ \citep{hayes2025b}. In some models, $\1_n$ may lie in the column space of $\Xpop$, in which case the intercept should be dropped.
\end{remark}

\begin{remark}
    Linear-in-means models can suffer from asymptotic degeneracy when $\G \Y$ (or $\Gtilde \Y$) becomes collinear with $\begin{bmatrix} \1_n \, \, \W \, \, \Xpop \end{bmatrix}$, leading to parameters that are identified but inestimable \citep{hayes2025b}. We assume throughout that the network structure prevents such degeneracy. For random dot product graphs, this requires either sufficient sparsity to prevent concentration of $\G \Y$ around its expectation, or sufficient degree heterogeneity to ensure linear independence. See Example 4 in \citet{hayes2025b} for details.
\end{remark}

The following Lemma clarifies the need for a bound on the magnitude of the contagion coefficient. If there is no such bound, the spillover can cause the responses in $\Y$ to diverge.

\begin{lemma} \label{lem:IbgyG:invertible}
    If $|\beta| < 1$, then $\mI-\beta \G$ is invertible with all eigenvalues in the interval $(1 - \beta, 1 + \beta)$.
\end{lemma}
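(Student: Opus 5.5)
The plan is to show that every eigenvalue of $\G$ lies in $[-1,1]$. The statement then follows because the eigenvalues of $\mI - \beta\G$ are exactly $1 - \beta\mu$ for eigenvalues $\mu$ of $\G$. With $|\beta\mu| \le |\beta| < 1$, each $1-\beta\mu$ is nonzero, so $\mI - \beta\G$ is invertible. Reading the interval $(1-\beta,1+\beta)$ as the interval between $1-|\beta|$ and $1+|\beta|$ covers the case $\beta<0$. One caveat: the endpoints are attained when $\mu = \pm 1$ (for example $\mu=1$ is always an eigenvalue when there are no isolated nodes). So I will prove containment in the closed interval $[1-|\beta|, 1+|\beta|]$ and note that this is the content of the claim. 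For invertibility only the strict positivity $1-|\beta|>0$ matters.

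\textbf{Step 1: the eigenvalues of $\G$ are real.} Let $\D_+$ be $\D$ restricted to the non-isolated nodes. There $\G = \D^{-1}\A$ is similar to the symmetric matrix $\D^{-1/2}\A\D^{-1/2}$, via $\D^{1/2}\G\D^{-1/2} = \D^{-1/2}\A\D^{-1/2}$, so its spectrum is real. Isolated nodes give zero rows of $\G$. They also give zero columns, because $\A_{ij} = 0$ for all $j$ when $d_i = 0$ and $\A$ is nonnegative. Hence, after a permutation, $\G$ is block diagonal with a zero block, and that block contributes the eigenvalue $0$. (If $\A$ had nonzero diagonal entries while $d_i$ excludes them, I would handle it the same way by using $\sum_j \A_{ij}$ for the rows; I will assume the diagonal is zero, as is implicit in the definition of $d_i$.)

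\textbf{Step 2: bound the spectral radius.} The matrix $\G$ is entrywise nonnegative, and each row sums to $1$ or $0$. Hence $\|\G\|_{\infty\to\infty} = \max_i \sum_j |\G_{ij}| \le 1$. Since the spectral radius is at most any induced norm, every eigenvalue satisfies $|\mu|\le 1$.

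\textbf{Step 3: conclude.} Combining Steps 1 and 2, $\mu \in [-1,1]$. Therefore $1-\beta\mu$ lies between $1-|\beta|$ and $1+|\beta|$, all eigenvalues of $\mI-\beta\G$ are real and positive, and the matrix is invertible.

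The only delicate point is Step 1, in particular the treatment of isolated nodes and the similarity transform. Everything else is a one-line norm bound. The same argument applies verbatim to $\Gtilde$, since $\Apop$ is symmetric and nonnegative under the RDPG.
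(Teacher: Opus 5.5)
Your proof is correct and uses the same key fact as the paper's one-line argument: row-stochasticity of $\G$ forces every eigenvalue to have modulus at most $1$. Your Step~1 (the similarity $\D^{1/2}\G\D^{-1/2} = \D^{-1/2}\A\D^{-1/2}$ on the non-isolated block) supplies the realness of the spectrum that the paper leaves implicit; without it, one only places the eigenvalues of $\mI-\beta\G$ in the disk $\{z : |z-1| \le |\beta|\}$ rather than in an interval.

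Your caveats are also right. Since $\G\1_n = \1_n$ whenever no node is isolated, $1-\beta$ is itself an eigenvalue, so what actually holds, and what you prove, is containment in the closed interval $[1-|\beta|, 1+|\beta|]$. That still yields invertibility because $1-|\beta|>0$.
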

\begin{proof}
    Since $\G = \D^{-1} \A$ is row stochastic, all its eigenvalues have absolute value at most $1$. Therefore, all eigenvalues of $\beta \G$ have absolute value at most $|\beta|$, implying that all eigenvalues of $\mI - \beta \G$ lie in $1 \pm \beta$ and are bounded away from zero.
\end{proof}

\subsection{Estimators}

To construct feasible estimators, we require an estimate $\Xhat$ of the latent positions $\Xpop$. We use the adjacency spectral embedding \citep[ASE;][]{sussman2014}, a spectral estimate appropriate both for precisely observed networks and networks observed with additive noise. Other types of measurement error will require distinct estimators of $\Xpop$. We consider the empirical performance of some of these estimators in the simulation study of Section~\ref{sec:simulations}, but leave detailed theoretical investigation of these settings to future work.

\begin{definition}[Adjacency spectral embedding] \label{def:ASE}
    Given a network $\A$, the $d$-dimensional \emph{adjacency spectral embedding} is $\Xhat = \Uhat \Shat^{1/2} \in \R^{n \times d}$, where $\Uhat \Shat \Vhat^\top$ is the rank-$d$ truncated singular value decomposition of $\A$.
\end{definition}

The adjacency spectral embedding provides a consistent estimate of $\Xpop$ in several settings, with rate varying depending on the precise assumptions \citep{sussman2012a, lyzinski2017, levin2022a, athreya2018}.

\begin{lemma}
    \label{lem:ase-consistency}
    Under suitable regularity conditions, there exists a $d \times d$ orthogonal matrix $\Q$ such that
    \[
        \max_{i \in [n]} \, \norm*{\Q \Xhat_i - \Xpop_i} = \op{1}.
    \]
\end{lemma}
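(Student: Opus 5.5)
The plan is to prove a $2\to\infty$ perturbation bound for the adjacency spectral embedding. Write the eigendecomposition $\Apop = \U \S \U^\top$ with $\Xpop = \U \S^{1/2} \mW$ for some orthogonal $\mW$; this is valid because $\Apop = \Xpop\Xpop^\top$ is positive semi-definite of rank $d$. Set $\mE = \A - \Apop$. The "suitable regularity conditions" I will impose are:
\begin{itemize}
\item the second moment matrix $\bbE[\Xpop_1\Xpop_1^\top]$ has rank $d$, so that with high probability $\spop_d \ge C n \rho_n$, where $\rho_n$ is a density scaling;
\item $\max_i \norm{\Xpop_i} = \mathcal{O}(\sqrt{\rho_n})$;
\item the sub-gamma parameters satisfy $\sqrt{\nu_n}\log n + b_n \log n = o(\sqrt{n\rho_n})$.
\end{itemize}

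\textbf{Step 1 (spectral norm of the noise).} Conditional on $\Xpop$, the entries of $\mE$ above the diagonal are independent sub-gamma variables. A matrix Bernstein inequality (the sub-gamma version, from the appendix) then gives
\[
\norm{\mE} = \mathcal{O}_p\bigl(\sqrt{n\nu_n} + b_n \log n\bigr).
\]
Under the assumptions this is $o(\spop_d)$. Weyl's inequality then gives $\Shat_{ii} \asymp n\rho_n$, and Davis--Kahan gives
\[
\norm{\Uhat\Uhat^\top - \U\U^\top} = \mathcal{O}_p\bigl(\norm{\mE}/(n\rho_n)\bigr).
\]
Since $\Apop$ is positive semi-definite and the perturbation is small, the top $d$ singular vectors of $\A$ coincide with its top $d$ eigenvectors, so the SVD-based definition of the ASE agrees with the eigen-based one.

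\textbf{Step 2 (the alignment matrix).} Let $\mW^*$ be the orthogonal polar factor of $\U^\top\Uhat$. Following the standard decomposition of \citet{lyzinski2014} and \citet{athreya2018}, I will show that $\U^\top\Uhat \Shat^{1/2} - \S^{1/2}\mW^*$ is small. The tool is the Sylvester-type identity
\[
\U^\top\Uhat\Shat - \S\U^\top\Uhat = \U^\top \mE \Uhat .
\]
Next I expand $\Xhat - \U\S^{1/2}\mW^* = \Uhat\Shat^{1/2} - \U\S^{1/2}\mW^*$ as
\[
\mE\,\U\,\S^{-1/2}\mW^* + \mathbf{R},
\]
where every term of $\mathbf{R}$ carries an extra factor of order $\norm{\mE}/(n\rho_n)$.

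\textbf{Step 3 (row-wise control, the main obstacle).} The leading term has rows $\mE_{i\cdot}\U\S^{-1/2}$. For fixed $i$, each of the $d$ coordinates of $\mE_{i\cdot}\U$ is a weighted sum of independent sub-gamma variables. Its weights have $\ell_2$ norm $1$ and $\ell_\infty$ norm $\norm{\U}_{2\to\infty} = \mathcal{O}(n^{-1/2})$, which follows from the bound on $\max_i\norm{\Xpop_i}$ and $\spop_d \gtrsim n\rho_n$. A sub-gamma Bernstein bound plus a union bound over $i\in[n]$ and the $d$ coordinates gives
\[
\max_i \norm{\mE_{i\cdot}\U} = \mathcal{O}_p\bigl(\sqrt{\nu_n \log n} + b_n\log n\cdot n^{-1/2}\bigr).
\]
Multiplying by $\norm{\S^{-1/2}} \asymp (n\rho_n)^{-1/2}$ makes this $o_p(1)$.

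The hard part is the remainder $\mathbf{R}$. It contains terms such as $\mE(\Uhat - \U\U^\top\Uhat)\Shat^{-1/2}$, where $\mE$ and $\Uhat$ are dependent, so the row-wise concentration above does not apply directly. I would handle this crudely with
\[
\norm{\mathbf{M}}_{2\to\infty} \le \norm{\mathbf{M}}
\]
for the dependent pieces. This gives a bound of order $\norm{\mE}^2/(n\rho_n)^{3/2}$, which is $o(1)$ under the density assumption. Terms involving $\U\U^\top(\cdot)$ are bounded by $\norm{\U}_{2\to\infty}\norm{\cdot}$. If the crude bound turns out to be too weak, I would fall back on a leave-one-out argument.

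\textbf{Step 4 (conclusion).} Combining Steps 2 and 3 gives
\[
\max_i\norm{\Xhat_i - (\mW^{*})^\top\mW\,\Xpop_i} = o_p(1).
\]
Since the relevant vectors are rows, this rearranges to the statement with $\Q = \mW^\top\mW^*$, an orthogonal matrix.
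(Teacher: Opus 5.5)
Your argument is the standard one that the paper relies on. The paper gives no separate proof of this lemma and cites the ASE literature instead. Its Lemma~\ref{lem:XhatXTB:ctrl} with $\mB=\ve_i$, followed by a union bound over $i$, is the same decomposition you describe: Bernstein for the leading term $\Spop^{-1/2}\Upop^\top\mE\ve_i$, and exactly your crude bound $\norm{\Shat^{-1/2}}\norm{\Uhat-\Upop\Upop^\top\Uhat}\norm{\mE}$ for the dependent piece. Your use of $\norm{\Upop}_{2\to\infty}=O(n^{-1/2})$ for the leading term is a harmless sharpening.

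The gap is in your regularity conditions. Your third condition controls only the leading term, so it implies neither of the other two things you derive from it: $\norm{\mE}=o(\spop_d)$ in Step 1, and $\norm{\mE}^2/(n\rho_n)^{3/2}=o(1)$ for the remainder in Step 3. The reason is that it does not tie the noise level $\nu_n$ to $\rho_n$, and decoupled noise is precisely the additive-noise setting the paper targets.

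\begin{itemize}
\item Take additive Gaussian noise ($b_n=0$), $\rho_n=1$ and $\nu_n=n^{0.9}$. Your condition holds, since $\sqrt{\nu_n}\log n=n^{0.45}\log n=o(\sqrt n)$, and even $\norm{\mE}\asymp n^{0.95}=o(\spop_d)$. Yet your remainder bound is of order $n^{1.9}/n^{3/2}=n^{0.4}\to\infty$.
\item Take $\rho_n=n^{-1/2}$ and $\nu_n=n^{1/2}/\log^3 n$. Your condition still holds, while $\norm{\mE}\asymp\sqrt{n\nu_n}\gg n\rho_n\asymp\spop_d$, so Weyl and Davis--Kahan in Step 1 give nothing.
\end{itemize}

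So the proof does not close as written. The leave-one-out fallback would be a genuinely different argument, and you have not supplied it.

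The paper avoids this by assuming exactly what the crude bound needs, namely \eqref{eq:assum:spectralConc} and \eqref{eq:assum:XhatX:rate:2}. In your notation, add
$(n\nu_n\log n+b_n^2\log^2 n)/(n\rho_n)^{3/2}\to 0$. This kills the remainder, and once $n\rho_n\to\infty$ it also gives $\norm{\mE}=o_p(\spop_d)$. Alternatively, restrict to $\nu_n=O(\rho_n)$ and $b_n=O(1)$, as for Bernoulli- or Poisson-type edges; there $n\rho_n=\omega(\log^2 n)$ makes every step work.

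Relatedly, matrix Bernstein gives $\norm{\mE}=O_p(\sqrt{n\nu_n\log n}+b_n\log n)$, not $\sqrt{n\nu_n}+b_n\log n$. That extra $\sqrt{\log n}$ has to be carried into whatever condition you add.
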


If we knew $\Xpop$, we could construct two-stage least squares in the typical way for linear-in-means models \citep{kelejian1998, lee2003}, but since $\Xpop$ is observed only via $\mA$, we instead use plug-in estimators that replace $\Xpop$ with $\Xhat$ wherever it appears.

\begin{definition}[Peer contagion estimator]
    \label{def:feasibleEstimators:peer}
    Let $\Zcheck = \begin{bmatrix} \1_n \, \, \W \, \, \Xhat \, \, \G \y \end{bmatrix} \in \R^{n \times (p + d + 2)}$ and $\Hcheck = \begin{bmatrix} \W \, \, \Xhat \, \, \G \W \, \, \G \Xhat \, \,  \G^2 \W \, \, \G^2 \Xhat \end{bmatrix} \in \R^{n \times (3 p + 3 d)}$, and let $\Mcheck = \Hcheck (\Hcheck^\top \Hcheck)^{-1} \Hcheck^\top$ denote the projection matrix onto the column space of $\Hcheck$.  We define our peer contagion estimator according to
    \begin{equation} \label{eq:def:betahattsls}
        \betahattsls = (\Zcheck^\top \Mcheck \Zcheck)^{-1} \Zcheck^\top \Mcheck \y .
    \end{equation}
\end{definition}

For the latent contagion estimator, we construct $\Apophat = \Xhat \Xhat^\top$, $\widehat{d}_i = \sum_j \Apophat_{ij}$, $\Dhat = \diag(\widehat{d}_1, \dots, \widehat{d}_n)$, and
\begin{equation}
    \label{eq:def:Ghat}
    \Ghat = \Dhat^{-1} \Apophat \in \R^{n \times n}.
\end{equation}

\begin{definition}[Latent contagion estimator] \label{def:feasibleEstimators:latent}
    Let $\Zhat = \begin{bmatrix} \1_n \, \W \, \, \Xhat \, \, \Ghat \y \end{bmatrix} \in \R^{n \times (p + d + 2)}$ and $\Hhat = \begin{bmatrix} \W \, \, \Xhat \, \, \Ghat \W \, \, \Ghat \Xhat \, \, \Ghat^2 \W \, \, \Ghat^2 \Xhat \end{bmatrix} \in \R^{n \times (3 p + 3 d)}$, and let $\Mhat = \Hhat (\Hhat^\top \Hhat)^{-1} \Hhat^\top$ denote the projection matrix onto the column space of $\Hhat$. Our latent contagion estimator is given by
    \begin{equation} \label{eq:def:thetahattsls}
        \thetahattsls = (\Zhat^\top \Mhat \Zhat)^{-1} \Zhat^\top \Mhat \y .
    \end{equation}
\end{definition}

In some cases, $\Hcheck$ and $\Hhat$ will have collinear columns; any subset of columns with rank $p + d + 1$ or greater is sufficient for identification \citep[see][for details on the construction of the instruments matrix]{bramoulle2009}.

\subsection{Main Results}

We begin with some preliminary theoretical results, which show that under correctly-specified models, peer and latent contagion models that adjust for latent positions can be estimated by plugging in $\Xhat$ as an estimate for $\Xpop$. These results shows that it is possible to distinguish between localized effects on a network, as parameterized by $\betax$ and $\thetax$, and diffusions, as parameterized by $\betay$ and $\thetay$.

\begin{theorem}[Peer contagion estimators under peer contagion]
    \label{thm:peertruepeerfit}
    Suppose the data are generated according to peer contagion as in Equation~\eqref{eq:lim-peer-red} and $\A$ follows a sub-gamma model as in Definition~\ref{def:subgamma-network}. Under Assumptions~\ref{assum:Apop:spectrum},~\ref{assum:degrees},~\ref{assum:latentpositions},~and~\ref{assum:peer-oracle}, detailed in the Appendix, there exists a sequence of orthogonal matrices $\Qdes \in \R^{(p+d+2) \times (p+d+2)}$ such that
    \begin{align*}
        \sqrt{n} \paren*{\Qdes \betahattsls - \symbf \beta} & \to \mathcal{N} \paren*{\symbf 0, \sigmaeps^2 \paren*{\ZpeerOracle^\top \MpeerOracle \ZpeerOracle}^{-1}}
    \end{align*}
    where $\symbf \beta = (\betanaught, \betaw^\top, \betax^\top, \betay)^\top$.
\end{theorem}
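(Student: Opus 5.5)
The proof compares the feasible estimator $\betahattsls$ with the oracle two-stage least squares estimator built from the true latent positions. Let $\ZpeerOracle = \begin{bmatrix} \1_n \, \W \, \Xpop \, \G\y \end{bmatrix}$. Let the oracle instrument matrix $\H$ be $\Hcheck$ with $\Xpop$ in place of $\Xhat$, and let $\MpeerOracle$ be its projection. The orthogonal matrix $\Qdes$ is block diagonal, $\mathrm{diag}(1, \mI_p, \Q, 1)$, with $\Q$ the alignment matrix from Lemma~\ref{lem:ase-consistency}. The key observation is that $\Xhat \Q^\top$ has column space identical to that of $\Xhat$. Hence $\Mcheck$ is unchanged if $\Xhat$ is replaced by $\Xhat\Q^\top$ everywhere in $\Hcheck$. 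Also, $\Zcheck\Qdes^\top$ is $\Zcheck$ with the $\Xhat$ block rotated, and this gives $\Qdes\betahattsls = (\Zcheck_Q^\top \Mcheck \Zcheck_Q)^{-1}\Zcheck_Q^\top\Mcheck\y$, where $\Zcheck_Q = \Zcheck\Qdes^\top$. The analysis can therefore assume without loss of generality that $\Xhat$ is aligned to $\Xpop$.

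The argument then proceeds in three steps.

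\textbf{Step 1 (oracle CLT).} Write $\y = \ZpeerOracle\symbf\beta + \be$. The oracle estimator satisfies $\sqrt n(\widehat{\symbf\beta}_{\mathrm{or}} - \symbf\beta) = (n^{-1}\ZpeerOracle^\top\MpeerOracle\ZpeerOracle)^{-1} n^{-1/2}\ZpeerOracle^\top \MpeerOracle\be$. Under Assumption~\ref{assum:peer-oracle}, I expect $n^{-1}\H^\top\H$ and $n^{-1}\H^\top\ZpeerOracle$ to converge to full-rank limits. The instruments $\H$ are functions of $(\W,\Xpop,\A)$ only and are independent of $\be$. A Lindeberg CLT for $n^{-1/2}\H^\top\be$ conditional on $(\W,\Xpop,\A)$ then gives the stated normal limit with covariance $\sigmaeps^2(\ZpeerOracle^\top\MpeerOracle\ZpeerOracle)^{-1}$ (scaled by $n$). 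The one dependence to handle is through $\G\y$ inside $\ZpeerOracle$; it is handled in the usual 2SLS way, since only $\H^\top\be$ enters the score.

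\textbf{Step 2 (plug-in perturbation).} I show that $n^{-1}(\Hcheck^\top\Hcheck - \H^\top\H)$ and $n^{-1}(\Hcheck^\top\Zcheck - \H^\top\ZpeerOracle)$ are $o_p(1)$, and that $n^{-1/2}(\Hcheck - \H)^\top\be = o_p(1)$. With $\Delta = \Xhat - \Xpop$, every difference is a column block of the form $\G^k\Delta$ for $k\in\{0,1,2\}$. Because $\G$ is row stochastic, $\norm{\G^k\Delta}_{2\to\infty} \le \norm{\Delta}_{2\to\infty}$. By the $2\to\infty$ bound of Lemma~\ref{lem:ase-consistency} under Assumptions~\ref{assum:Apop:spectrum}--\ref{assum:latentpositions}, this is $o_p(1)$. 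The Gram-matrix differences are therefore $o_p(1)$ after dividing by $n$, using Cauchy--Schwarz with bounded moments of $\W$, $\Xpop$ and $\y$; the bound on $\y$ uses Lemma~\ref{lem:IbgyG:invertible}, which gives $\norm{(\mI-\betay\G)^{-1}}_\infty \le (1-|\betay|)^{-1}$. The same argument applies to the $\Xhat$ column of $\Zcheck$.

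\textbf{Step 3 (main obstacle: the noise term).} The hardest part is $n^{-1/2}\Delta^\top\G^{k\top}\be$. A naive bound gives only $\sqrt n\,\norm{\Delta}_{2\to\infty}$, which need not vanish. I would instead exploit that $\be$ is independent of $(\A,\Xpop,\W)$ and has mean zero. Conditionally on $\A$, the term has mean zero and variance $\sigmaeps^2 n^{-1}\norm{\G^k\Delta}_F^2 \le \sigmaeps^2 \norm{\Delta}_{2\to\infty}^2 = o_p(1)$, so Chebyshev conditional on $\A$ gives $o_p(1)$. This requires $\Xhat$ to be computed from $\A$ alone, which it is. Combining Steps 1--3 via Slutsky, the difference $\sqrt n(\Qdes\betahattsls - \widehat{\symbf\beta}_{\mathrm{or}})$ is $o_p(1)$, which yields the theorem. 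The one remaining delicate point is the intercept: if $\1_n$ lies in the column space of $\Xpop$, the design is rank deficient, and the corresponding column must be dropped as in the earlier remark.
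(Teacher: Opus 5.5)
There is a genuine gap: your proposal never controls the errors-in-variables term that arises from plugging $\Xhat$ in for the regressor $\Xpop$. Under peer contagion $\y = \ZpeerOracle\symbf\beta + \be$. The feasible design, however, is $\Zcheck\Qdes^\top = \ZpeerOracle + [\,\mathbf{0}\;\,\mathbf{0}\;\,\Delta\;\,\mathbf{0}\,]$ with $\Delta = \Xhat\Q^\top - \Xpop$. Therefore
\[
\sqrt{n}\,\bigl(\Qdes\betahattsls - \symbf\beta\bigr)
= \Bigl(\tfrac{1}{n}\,\Qdes\Zcheck^\top\Mcheck\Zcheck\Qdes^\top\Bigr)^{-1}
\tfrac{1}{\sqrt{n}}\,\Qdes\Zcheck^\top\Mcheck\,\bigl(\be - \Delta\betax\bigr).
\]
Your Steps 2 and 3 handle the Jacobian and the $\be$ part. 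Nothing in the proposal addresses $n^{-1/2}\Qdes\Zcheck^\top\Mcheck\Delta\betax$, and with $\betax \neq \mathbf{0}$ this term, not the noise term, is the main obstacle. It contains no $\be$, so conditioning on $\A$ does not help. With only $2\to\infty$ control it is bounded by $\Op{\sqrt{n}\,\|\Delta\|_{2\to\infty}}$, the very rate you conceded need not vanish. Concretely, $\Xhat\Q^\top$ lies in the span of $\Hcheck$, so the $\Xhat$-block equals $n^{-1/2}(\Xpop+\Delta)^\top\Delta\betax$. The crude bound $n^{-1/2}\|\Xpop\|_F\|\Delta\|_F \le C\sqrt{n}\,\|\Delta\|_{2\to\infty}$ does not go to zero in general, and Lemma~\ref{lem:ase-consistency} gives only $\|\Delta\|_{2\to\infty} = \op{1}$.

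The missing ingredient is a bilinear bound $\|\Delta^\top\mB\| = \op{\sqrt{n}}$ for $\|\mB\| = \Op{\sqrt{n}}$ with $\mB$ conditionally independent of $\A-\Apop$ given $\Xpop$. This is Lemma~\ref{lem:XhatXTB:ctrl}. It rests on the first-order expansion $\Delta \approx (\A-\Apop)\Upop\Spop^{-1/2}$ (up to rotation), with spectral-norm control of the remainders. You also need $\|\Delta\| = \op{n^{1/4}}$ for the quadratic pieces $\Delta^\top\G^k\Delta$.

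There is a further complication. The peer instruments $\G^k\W$ and $\G^k\Xpop$, and the response $\y$, all depend on $\A$, so that lemma cannot be applied to them directly. The paper decouples this by passing to $\Gtilde$-based objects. It uses $\|\MpeerOracle-\MlatOracle\| = \op{1/\sigmamin(\HlatOracle)}$ (Lemma~\ref{lem:MpeerMlat:spectral}) together with Lemma~\ref{lem:XhatXTB:ctrl} at $\mB = \HlatOracle$. This yields $\|(\Xhat\Q^\top-\Xpop)^\top\MpeerOracle\Y\| = \op{\sqrt{n}}$ in the proof of Lemma~\ref{lem:betahatTSLS2betaOracleTSLS}. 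Since $\MpeerOracle\Xpop = \Xpop$, this is exactly where your missing term $\Delta^\top\Xpop\betax$ is absorbed.

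In your sandwich form the same fix works by splitting $\G^k = \Gtilde^k + (\G^k-\Gtilde^k)$ and invoking Lemma~\ref{lem:GGtilde:spectral} for the second piece. With that added, your Slutsky route would go through. It would even be somewhat lighter than the paper's, since it needs only $o_p(1)$ convergence of the normalized Gram matrices, and your conditional-variance treatment of the $\be$ term is clean. As written, however, the argument does not close.
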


\begin{theorem}[Latent contagion estimators under latent contagion]
    \label{thm:lattruelatfit}
    Suppose the data are generated according to latent contagion as in Equation~\eqref{eq:lim-latent-red} and $\A$ follows a sub-gamma model as in Definition~\ref{def:subgamma-network}. Under Assumptions~\ref{assum:Apop:spectrum},~\ref{assum:degrees},~\ref{assum:latentpositions},~\ref{assum:stronger4dhat},~and~\ref{assum:latent-oracle}, detailed in the Appendix, there exists a sequence of orthogonal matrices $\Qdes \in \R^{(p+d+2) \times (p+d+2)}$ such that
    \begin{align*}
        \sqrt{n} \paren*{\Qdes \thetahattsls - \symbf \theta} & \to \mathcal{N} \paren*{\symbf 0, \sigmaeps^2 \paren*{\ZlatOracle^\top \MlatOracle \ZlatOracle}^{-1}},
    \end{align*}
    where $\symbf \theta = (\thetanaught, \thetaw^\top, \thetax^\top, \thetay)^\top$.
\end{theorem}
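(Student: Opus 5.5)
The plan is to compare the feasible estimator with an oracle 2SLS estimator that uses the true $\Xpop$ and $\Gtilde$. Let $\ZlatOracle = \begin{bmatrix} \1_n \, \W \, \Xpop \, \Gtilde \y \end{bmatrix}$ and let $\MlatOracle$ be the projection onto the span of $\begin{bmatrix} \W \, \Xpop \, \Gtilde\W \, \Gtilde\Xpop \, \Gtilde^2\W \, \Gtilde^2\Xpop\end{bmatrix}$. Set $\Qdes = \mathrm{blockdiag}(1, \mI_p, \Q, 1)$, where $\Q$ is the orthogonal alignment from Lemma~\ref{lem:ase-consistency}.

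A key observation is that $\Ghat$ is invariant to $\Q$: $\Apophat = \Xhat\Xhat^\top = (\Xhat\Q^\top)(\Xhat\Q^\top)^\top$. Likewise, $\Mhat$ is invariant to right-multiplying $\Xhat$ by an orthogonal matrix. Hence $\Qdes\thetahattsls$ is exactly the 2SLS estimator computed with $\Xhat\Q^\top$ in place of $\Xhat$, and we may assume without loss of generality that $\Xhat$ is aligned to $\Xpop$.

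\textbf{Step 1: oracle CLT.} Since $\y = \ZlatOracle\symbf\theta + \be$, we have
\[
\sqrt n(\widetilde{\symbf\theta}-\symbf\theta) = (n^{-1}\ZlatOracle^\top\MlatOracle\ZlatOracle)^{-1} n^{-1/2}\ZlatOracle^\top\MlatOracle\be .
\]
The matrix $\MlatOracle$ is a function of $(\W,\Xpop)$ only, hence independent of $\be$. The only endogenous column of $\ZlatOracle$ is $\Gtilde\y$, and its projection $\MlatOracle\Gtilde\y$ differs from $\MlatOracle\Gtilde(\mI-\thetay\Gtilde)^{-1}(\text{exogenous part})$ by a term $\MlatOracle\Gtilde(\mI-\thetay\Gtilde)^{-1}\be$. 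Because $\MlatOracle$ has rank at most $3p+3d$, this term contributes only $O_p(n^{-1/2})$ after scaling. The Lindeberg CLT for weighted sums of i.i.d.\ $\varepsilon_i$ then gives the stated normal limit, with $\sigmaeps^2$ times the inverse of the probability limit given by Assumption~\ref{assum:latent-oracle}. Lemma~\ref{lem:IbgyG:invertible}, applied to the row-stochastic $\Gtilde$, bounds $\|(\mI-\thetay\Gtilde)^{-1}\|_\infty$, which keeps the rows of $\y$ uniformly controlled.

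\textbf{Step 2: perturbation of $\Ghat$.} I would use the two-to-infinity bound $\|\Xhat-\Xpop\|_{2\to\infty}=O_p(\text{polylog}\,n/\sqrt{n\rho_n})$ for sub-gamma noise; this is the quantitative form of Lemma~\ref{lem:ase-consistency}, under Assumptions~\ref{assum:Apop:spectrum} and~\ref{assum:latentpositions}. It yields entrywise control of $\Apophat-\Apop$. Assumptions~\ref{assum:degrees} and~\ref{assum:stronger4dhat} keep $\dtilde_i\asymp n\rho_n$ and $\widehat d_i$ bounded away from zero uniformly. The identity
\[
\Ghat-\Gtilde=\Dhat^{-1}(\Apophat-\Apop)+(\Dhat^{-1}-\Dtilde^{-1})\Apop
\]
then bounds $\|\Ghat-\Gtilde\|_\infty=\max_i\sum_j|\cdot|$, and also the operator norm, by $o_p(1)$ at an explicit rate.

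\textbf{Step 3: equivalence of the feasible and oracle estimators.} Write
\[
\sqrt n(\thetahattsls-\symbf\theta)=(n^{-1}\Zhat^\top\Mhat\Zhat)^{-1}n^{-1/2}\Zhat^\top\Mhat(\y-\Zhat\symbf\theta).
\]
Here $\y-\Zhat\symbf\theta=\be+(\Xpop-\Xhat)\thetax+\thetay(\Gtilde-\Ghat)\y$. Consistency of the Gram matrix $n^{-1}\Zhat^\top\Mhat\Zhat$ follows from Step 2 together with $\|\y\|=O_p(\sqrt n)$, since the instruments are uniformly close column by column. The main obstacle is the score. We must show that $n^{-1/2}\Zhat^\top\Mhat[(\Xpop-\Xhat)\thetax+\thetay(\Gtilde-\Ghat)\y]=o_p(1)$ and $n^{-1/2}(\Zhat^\top\Mhat-\ZlatOracle^\top\MlatOracle)\be=o_p(1)$.

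For the bias term, a crude bound gives $\sqrt n\cdot\|\Xhat-\Xpop\|_{2\to\infty}$, which need not vanish. I would instead use the first-order expansion $\Xhat-\Xpop\approx(\A-\Apop)\Xpop(\Xpop^\top\Xpop)^{-1}$ plus a higher-order remainder. The linear term is conditionally mean zero, and its inner product with the fixed instruments is $O_p(\sqrt{n\rho_n}\cdot n^{-1}\rho_n^{-1})$ after normalization. The remainder can be bounded in Frobenius norm. I expect the density condition in Assumption~\ref{assum:stronger4dhat} to be exactly what makes the remainder $o(n^{-1/2})$.

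The $\be$ cross-term is handled by conditioning on $(\A,\Xpop,\W)$. Since $\be$ is independent of $\Mhat$ and $\Zhat$ apart from the $\Ghat\y$ column, its conditional variance is at most $\sigmaeps^2\|\Mhat\Zhat-\MlatOracle\ZlatOracle\|_F^2/n$, which tends to zero by Step 2. Combining the three steps with Slutsky's theorem gives the result.
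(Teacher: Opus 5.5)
Your alignment reduction, the oracle CLT in Step~1, and the residual form $\thetahattsls-\symbf\theta=(\Zhat^\top\Mhat\Zhat)^{-1}\Zhat^\top\Mhat(\y-\Zhat\symbf\theta)$ are all fine. The gap is in the score: you only treat the $(\Xpop-\Xhat)\thetax$ part of the bias. You never treat the $\thetay(\Gtilde-\Ghat)\y$ part, and the norm bound of Step~2, your only control on $\Ghat$, cannot handle it.

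After replacing $\Mhat\Zhat$ by $\MlatOracle\ZlatOracle$, you need $n^{-1/2}\,|\vu^\top(\Ghat-\Gtilde)\y|=o_p(1)$ for each column $\vu$ of $\MlatOracle\ZlatOracle$. Since $\|\vu\|$ and $\|\y\|$ are both of order $\sqrt n$, a norm bound only gives $\sqrt n\,\|\Ghat-\Gtilde\|$. But $\|\Ghat-\Gtilde\|$ is not $o(n^{-1/2})$. On the orthogonal complement of the column space of $\Upop$, the term $(\Dhat^{-1}-\Dtilde^{-1})\Apop$ vanishes, and to first order $\Ghat-\Gtilde$ acts as $\Dtilde^{-1}\Upop\Upop^\top(\A-\Apop)(\mI-\Upop\Upop^\top)$. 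Its norm is of order $\|\Upop^\top(\A-\Apop)\|/\max_i\dtilde_i$, which is already order $n^{-1/2}$ in the dense case and larger in sparser regimes. So no rate assumption rescues the crude bound.

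What is needed is a bilinear estimate $|\vu^\top(\Ghat-\Gtilde)\vv|=o_p(\|\vu\|\|\vv\|/\sqrt n)$ for $\vu,\vv$ conditionally independent of $\A-\Apop$ given $\Xpop$. It applies because under latent contagion, $\y$ and the oracle instruments are functions of $(\W,\Xpop,\be)$ only. This is the paper's Lemma~\ref{lem:GhatGtilde:quad}. It splits $\Ghat-\Gtilde$ into three pieces:
\begin{itemize}
\item $\Dtilde^{-1}(\Apophat-\Apop)$;
\item the degree term $\Dtilde^{-1}(\Dhat-\Dtilde)\Gtilde$, rewritten as $\onevec^\top(\Apophat-\Apop)\mR\vv$ with $\mR$ as in Equation~\eqref{eq:def:R};
\item quadratic remainders.
\end{itemize}
It then expands $\Apophat-\Apop$ through $\Xhat\Q^\top-\Xpop$ and applies Lemma~\ref{lem:XhatXTB:ctrl}. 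That lemma is the rigorous form of your first-order expansion and holds for any $\mB$ independent of $\A-\Apop$. Your expansion could in principle be pushed through the degree normalization this way, but that step, not the $\Xhat$ term, is the crux, and the proposal does not carry it out.

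Two further points need to be made explicit.

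First, the instruments in the bias terms are not fixed, since $\Mhat\Zhat$ depends on $\A$, so the swap to $\MlatOracle\ZlatOracle$ has a cost. By Davis--Kahan and the triangle inequality, $\|\Mhat\Zhat\Qdestop-\MlatOracle\ZlatOracle\|$ is at most a constant times $\|\Xhat\Q^\top-\Xpop\|+\sqrt n\,\|\Ghat-\Gtilde\|$. The swap therefore costs order $n^{-1/2}\|\Xhat\Q^\top-\Xpop\|^2+\|\Xhat\Q^\top-\Xpop\|\,\|\Ghat-\Gtilde\|+\sqrt n\,\|\Ghat-\Gtilde\|^2$. This vanishes only with the rates $\|\Ghat-\Gtilde\|=o_p(n^{-1/4})$ and $\|\Xhat\Q^\top-\Xpop\|=o_p(n^{1/4})$ (cf.\ Equation~\eqref{eq:assum:Grate:stronger}); mere consistency from Step~2 is not enough.

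Second, the sub-gamma two-to-infinity bound you invoke is not established in the paper. The paper controls $\dhat_i$ instead through Lemma~\ref{lem:XhatXTB:ctrl} with $\mB=\ve_i$ and $\mB=\onevec$ (Lemma~\ref{lem:dhatdtilde}).

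With these repairs your route is a genuine and arguably cleaner alternative to the paper's. Working with the residual $\y-\Zhat\symbf\theta$ cancels the $\ZlatOracle\symbf\theta$ component. The $\be$ term then needs only $\|\Mhat\Zhat\Qdestop-\MlatOracle\ZlatOracle\|_F=o_p(\sqrt n)$, with the $\be$-dependent $\Ghat\y$ column handled by the rank argument of your Step~1. The paper instead compares $\Qdes\thetahattsls$ with the oracle estimator against the full $\Y$, and so needs $\|\Mhat-\MlatOracle\|=o_p(n^{-1/2})$.
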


Our central theoretical contribution establishes that estimators derived under one model remain consistent when the true data generating process follows the other model, if the observed network is distributed as a random dot product graph.
Said another way, correctly-specified two-stage least squares estimators are asymptotically normal in both the peer and latent contagion models.

\begin{theorem}[Latent contagion estimators under peer contagion]
    \label{thm:peertruelatfit}
    Suppose the data are generated according to peer contagion as in Equation~\eqref{eq:lim-peer-red} but we use latent contagion estimators from Definition~\ref{def:feasibleEstimators:latent}. Suppose $\A$ follows a sub-gamma model as in Definition~\ref{def:subgamma-network}. Under Assumptions~\ref{assum:Apop:spectrum},~\ref{assum:degrees},~\ref{assum:latentpositions},~\ref{assum:stronger4dhat}~and~\ref{assum:peer-oracle}, detailed in the Appendix, there exists a sequence of orthogonal matrices $\Qdes \in \R^{(p+d+2) \times (p+d+2)}$ such that
    \begin{align*}
        \sqrt{n} \paren*{\Qdes \thetahattsls - \symbf \beta} & \to \mathcal{N} \paren*{\symbf 0, \sigmaeps^2 \paren*{\ZlatOracle^\top \MlatOracle \ZlatOracle}^{-1}} .
    \end{align*}
\end{theorem}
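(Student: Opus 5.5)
The proof compares $\thetahattsls$ with an oracle latent-contagion 2SLS estimator built from the true $\Xpop$ and $\Gtilde$, while $\Y$ is generated by peer contagion. Let $\ZlatOracle = [\1_n \, \W \, \Xpop \, \Gtilde \y]$ and let $\MlatOracle$ be the projection onto $[\W \, \Xpop \, \Gtilde\W \, \Gtilde\Xpop \, \Gtilde^2\W \, \Gtilde^2\Xpop]$.

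The starting point is to rewrite the truth in latent-contagion form. Equation~\eqref{eq:lim-peer} gives
\begin{equation*}
\y = \1_n\betanaught + \W\betaw + \Xpop\betax + \betay \Gtilde \y + \betay(\G - \Gtilde)\y + \be .
\end{equation*}
So, relative to the latent working model with parameter $\symbf\beta$, the structural error is $\be + \symbf r$, where $\symbf r = \betay(\G-\Gtilde)\y$. We then decompose
\begin{equation*}
\sqrt n(\Qdes\thetahattsls - \symbf\beta) = \sqrt n(\thetaOracle - \symbf\beta) + \sqrt n(\Qdes\thetahattsls - \thetaOracle),
\end{equation*}
where $\thetaOracle$ is the oracle estimator. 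The second term is handled exactly as in the proof of Theorem~\ref{thm:lattruelatfit}. That argument uses Lemma~\ref{lem:ase-consistency}, Assumption~\ref{assum:stronger4dhat}, and the fact that $\Ghat$ approximates $\Gtilde$ (and $\Ghat^2$ approximates $\Gtilde^2$) in the relevant norms. It does not depend on which model generated $\y$, except through bounds like $\norm{\y}=\Op{\sqrt n}$. Those bounds still hold by Lemma~\ref{lem:IbgyG:invertible}, since $\norm{(\mI-\betay\G)^{-1}}$ is bounded.

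For the oracle term we write
\begin{equation*}
\sqrt n(\thetaOracle-\symbf\beta) = (n^{-1}\ZlatOracle^\top\MlatOracle\ZlatOracle)^{-1}\, n^{-1/2}\ZlatOracle^\top\MlatOracle(\be+\symbf r).
\end{equation*}
The denominator converges by Assumption~\ref{assum:peer-oracle}. One point needs care: $\Gtilde\y$ now has a reduced form involving $(\mI-\betay\G)^{-1}$. We therefore show that $\Gtilde\y$ and $\Gtilde(\mI-\betay\Gtilde)^{-1}(\cdots)$ differ by $o_p(\sqrt n)$ in norm, so the limiting Gram matrix is unchanged. The $\be$ part is asymptotically normal with variance $\sigmaeps^2$ times the limit, by a Lindeberg CLT conditional on $\Xpop,\W$ and the instruments. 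This works because $\MlatOracle$ is built from $\W$, $\Xpop$ and $\Gtilde$ only, so it is independent of $\be$. The $\be$-dependence of the endogenous column enters only through the projection, as in the standard argument.

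The main obstacle is showing $n^{-1/2}\ZlatOracle^\top\MlatOracle\symbf r = o_p(1)$. Since $\MlatOracle$ has rank $O(1)$ and $\norm{\ZlatOracle}=\Op{\sqrt n}$, it suffices to bound $n^{-1/2}\norm{\Horacle^\top(\G-\Gtilde)\y}$ after normalizing $\Horacle^\top\Horacle/n$. For the term involving $(\G-\Gtilde)\Y$ I would first expand $\y = (\mI-\betay\G)^{-1}\symbf\mu + (\mI-\betay\G)^{-1}\be$. Then I would write
\begin{equation*}
\G-\Gtilde = \D^{-1}(\A-\Apop) + (\D^{-1}-\Dtilde^{-1})\Apop .
\end{equation*}
Sub-gamma concentration (Appendix~\ref{apx:technical}) together with Assumption~\ref{assum:degrees} gives $\norm{\A-\Apop} = \Op{\sqrt{n\nu_n}+b_n\log n}$ and $\max_i|d_i/\dtilde_i-1| = o_p(1)$ at a quantitative rate. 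These yield $\norm{\G-\Gtilde}$ of order $\sqrt{\nu_n}/(\sqrt n\rho_n)$ up to logs, where $\rho_n$ is the typical degree scale. The crude bound $\norm{\Horacle}\norm{\G-\Gtilde}\norm{\y}/\sqrt n$ then gives roughly $\sqrt n\,\norm{\G-\Gtilde}$, which vanishes only under strong density conditions. Improving this requires exploiting that the columns of $\Horacle$ lie in low-dimensional spaces spanned by $\Xpop$ and $\W$. For those columns, $\Xpop^\top(\A-\Apop)\symbf v$ is a sum of independent noise terms and gains a factor of $\sqrt n$ over the operator-norm bound. Because $\y$ itself depends on $\A$, I would handle this dependence with a resolvent/Neumann expansion of $(\mI-\betay\G)^{-1}$, truncated at $O(\log n)$ terms. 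Each term would be controlled by the density condition in Assumption~\ref{assum:degrees}. Combining the three pieces with Slutsky's lemma gives the stated limit.
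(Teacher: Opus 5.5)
Your route differs from the paper's and can be made to work, but the step you treat as free fails as stated. The bound $\sqrt{n}(\Qdes\thetahattsls-\thetaOracleTSLS)=o_p(1)$ does \emph{not} carry over ``exactly as in'' Lemma~\ref{lem:thetahatTSLS2thetaOracleTSLS} with only $\|\Y\|=O_p(\sqrt{n})$ as input. That proof applies Lemma~\ref{lem:GhatGtilde:quad} to $\Y^\top\MlatOracle(\Ghat-\Gtilde)\Y$ and to $\ZlatOracle^\top\MlatOracle(\Ghat-\Gtilde)\Y$. It also applies Lemma~\ref{lem:XhatXTB:ctrl} with $\mB=\MlatOracle\Y$ and $\mB=\MlatOracle\ZlatOracle$. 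Each of these needs the vectors involved to be conditionally independent of $\A-\Apop$. That holds under latent contagion but fails under peer contagion, where $\Y$, $\MlatOracle\Y$ and the column $\Gtilde\Y$ of $\ZlatOracle$ are all functions of $\A$. Under the stated assumptions, norms alone give only $|\Y^\top\MlatOracle(\Ghat-\Gtilde)\Y|\le\|\Ghat-\Gtilde\|\,\|\Y\|^2=o_p(n^{3/4})$ and $\|\Xhat\Q^\top-\Xpop\|\,\|\Y\|=o_p(n^{3/4})$, whereas $o_p(\sqrt{n})$ is needed.

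The repair uses the device you already reserve for $\symbf r$:
\begin{enumerate}
\item Factor $\MlatOracle=\HlatOracle(\HlatOracle^\top\HlatOracle)^{-1}\HlatOracle^\top$. Then only columns of $\HlatOracle$, which are independent of $\A-\Apop$, play the role of the independent vector in Lemmas~\ref{lem:XhatXTB:ctrl} and~\ref{lem:GhatGtilde:quad}.
\item Split $\Y=\Ytilde+\zeta_n$ as in Lemma~\ref{lem:Y:peercon}. The $\Ytilde$ part of $\HlatOracle^\top(\Ghat-\Gtilde)\Y$ is $o_p(\sqrt{n})$ by Lemma~\ref{lem:GhatGtilde:quad}. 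The $\zeta_n$ part is at most $\|\HlatOracle\|\,\|\Ghat-\Gtilde\|\,\|\zeta_n\|=O_p(\sqrt{n})\,o_p(n^{-1/4})\,o_p(n^{1/4})$, using \eqref{eq:assum:Grate:stronger}.
\end{enumerate}
The paper's feasible-versus-oracle argument (Lemma~\ref{lem:thetahatTSLS2betaOracleTSLS}, via Lemma~\ref{lem:Y:peerquad}) is built around exactly this dependence issue. In your route it must therefore be handled in the feasible-versus-oracle comparison too, not only in the misspecification term.

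With that repair, the two routes compare as follows. The paper compares $\thetahattsls$ directly with the \emph{peer} oracle $\betaOracleTSLS$, which is correctly specified under peer contagion. Theorem~\ref{thm:kelejian1998} then supplies the CLT off the shelf, and all remaining work is perturbation bounds. You compare with the \emph{latent} oracle and prove a CLT for a misspecified estimator, carrying $\betay(\G-\Gtilde)\Y$ as extra structural error. This isolates the misspecification explicitly, in the spirit of Theorem~\ref{thm:proj-equivalence}. It also produces the latent Gram matrix in the limiting variance directly, whereas the paper's route yields $(\ZpeerOracle^\top\MpeerOracle\ZpeerOracle)^{-1}$, which must then be matched to the stated variance (e.g.\ via Lemma~\ref{lem:MpeerZpeer:forWeyl}). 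The cost is a hand-rolled CLT. In addition, only Assumption~\ref{assum:peer-oracle} is imposed, so the limits of $\HlatOracle^\top\HlatOracle/n$ and $\HlatOracle^\top\ZlatOracle/n$ you invoke must be transferred from the peer quantities via Lemmas~\ref{lem:HpeerHlat:spectral} and~\ref{lem:ZpeerZlat:spectral}. Your plan for $\HlatOracle^\top(\G-\Gtilde)\Y$ itself mirrors Lemma~\ref{lem:Y:peerquad} combined with Lemma~\ref{lem:GGtilde:quad}. That plan uses bilinear concentration for vectors independent of $\A-\Apop$, with the dependent remainder controlled by $\|\G-\Gtilde\|^2=o_p(n^{-1/2})$.
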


\begin{theorem}[Peer contagion estimators under latent contagion]
    \label{thm:lattruepeerfit}
    Suppose the data are generated according to latent contagion as in Equation~\eqref{eq:lim-latent-red} but we use peer contagion estimators from Definition~\ref{def:feasibleEstimators:peer}. Suppose $\A$ follows a sub-gamma model as in Definition~\ref{def:subgamma-network}. Under Assumptions~\ref{assum:Apop:spectrum},~\ref{assum:degrees},~\ref{assum:latentpositions}~and~\ref{assum:latent-oracle}, detailed in the Appendix, there exists a sequence of orthogonal matrices $\Qdes \in \R^{(p+d+2) \times (p+d+2)}$ such that
    \begin{align*}
        \sqrt{n} \paren*{\Qdes \betahattsls - \symbf \theta} & \to \mathcal{N} \paren*{\symbf 0, \sigmaeps^2 \paren*{\ZpeerOracle^\top \MpeerOracle \ZpeerOracle}^{-1}} .
    \end{align*}
\end{theorem}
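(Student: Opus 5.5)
The idea is to reduce Theorem~\ref{thm:lattruepeerfit} to Theorem~\ref{thm:peertruepeerfit} by rewriting the latent contagion data-generating process as a peer contagion model with a perturbed error term. Concretely, we write $\Gtilde \y = \G \y + (\Gtilde - \G)\y$, so that under \eqref{eq:lim-latent-red}
\[
  \y = \1_n \thetanaught + \W \thetaw + \Xpop \thetax + \thetay \G \y + \be + \thetay (\Gtilde - \G)\y .
\]
With $\symbf\theta$ playing the role of $\symbf\beta$, the peer contagion estimator then satisfies
\[
  \Qdes\betahattsls - \symbf\theta = (\Zcheck^\top \Mcheck \Zcheck)^{-1}\Zcheck^\top \Mcheck\bigl[\be + \thetay(\Gtilde-\G)\y + (\Xpop - \Xhat\Q^\top)\thetax\bigr],
\]
up to the rotation bookkeeping that already appears in Theorem~\ref{thm:peertruepeerfit}. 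The terms involving $\be$ and the ASE error are handled exactly as in the proof of Theorem~\ref{thm:peertruepeerfit}. Some care is needed there, because $\y$ is now generated through $(\mI-\thetay\Gtilde)^{-1}$ rather than $(\mI-\betay\G)^{-1}$. I will therefore first check that the limit of $n^{-1}\Zcheck^\top\Mcheck\Zcheck$ is the same oracle matrix $\ZpeerOracle^\top\MpeerOracle\ZpeerOracle/n$. That check should follow from showing $\norm{\G-\Gtilde}\to 0$ in probability together with $\norm{\y}=\mathcal O_p(\sqrt n)$, the latter coming from Lemma~\ref{lem:IbgyG:invertible} and its analogue for the row-stochastic $\Gtilde$.

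The key new step is to show that the extra term satisfies
\[
  n^{-1/2}\Hcheck^\top(\Gtilde-\G)\y = o_p(1),
\]
after normalizing by $(\Hcheck^\top\Hcheck/n)^{-1}$, which is bounded by Assumption~\ref{assum:latent-oracle}. I would decompose $\G-\Gtilde = \D^{-1}(\A-\Apop) + (\D^{-1}-\Dtilde^{-1})\Apop$.

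For the degree part, sub-gamma concentration gives $\max_i |d_i-\dtilde_i|/\dtilde_i = \mathcal O_p(\sqrt{\log n/\dtilde_{\min}})$. Combined with $\norm{\Gtilde}_{\infty}=1$, this bounds its contribution by $\sqrt n\,\mathcal O_p(\sqrt{\log n/\dtilde_{\min}})\,\norm{\y}/\sqrt n$, which is $o_p(1)$ provided Assumption~\ref{assum:degrees} guarantees degrees growing faster than $n\log n$ in the appropriate sense. If that is too crude, I would instead exploit the mean-zero structure of the degree fluctuations.

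For the noise part, the naive spectral bound $\norm{\A-\Apop}\lesssim\sqrt{n\nu_n}$ loses too much. I will instead use that $\y$ is a function of $(\W,\Xpop,\be)$ only, with no dependence on $\A$ under latent contagion. Conditional on $(\Xpop,\W,\be)$, the quantity $h^\top\D^{-1}(\A-\Apop)\y$ for a fixed instrument column $h$ is then a sum of independent mean-zero sub-gamma terms, with variance of order $n\cdot\nu_n n/\dtilde^2$. That makes it $\mathcal O_p(\sqrt{n\nu_n}\,n/\dtilde)$, and after multiplying by $n^{-1/2}$ it vanishes when $\dtilde\gg n\sqrt{\nu_n}$.

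The main obstacle is that the instruments $\Hcheck$ themselves depend on $\A$, through $\G$, $\G^2$ and $\Xhat$, so the conditional independence argument does not apply directly. My plan is to replace $\Hcheck$ by the oracle $\Htilde$ built from $\Gtilde$ and $\Xpop$, and bound $\norm{\Hcheck-\Htilde\,\mathrm{diag}(\Q)}/\sqrt n$ using Lemma~\ref{lem:ase-consistency} and $\norm{\G-\Gtilde}\to0$. The cross term $n^{-1/2}(\Hcheck-\Htilde)^\top(\Gtilde-\G)\y$ is then a product of two small factors, namely $\mathcal O_p(\norm{\G-\Gtilde})\cdot\sqrt n\,\norm{\G-\Gtilde}$. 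This requires $\norm{\G-\Gtilde}=o_p(n^{-1/4})$, which I expect to follow from the density conditions imposed in Assumption~\ref{assum:degrees}.

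With these pieces in place, Slutsky's theorem combined with the CLT for $n^{-1/2}\Htilde^\top\be$ used in Theorem~\ref{thm:peertruepeerfit} yields the stated limit.
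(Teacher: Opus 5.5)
Your route differs from the paper's. The paper proves $\sqrt{n}(\Qdes\betahattsls-\thetaOracleTSLS)=\op{1}$ (Lemma~\ref{lem:betahatTSLS2thetaOracleTSLS}) and then applies Theorem~\ref{thm:kelejian1998} to the correctly specified latent oracle. Because it compares two estimators as functions of $\Y$, it must control $\|\Mcheck-\MlatOracle\|$ to $\op{n^{-1/2}}$ and $\Qdes\Zcheck^\top\Mcheck\Zcheck\Qdestop-\ZlatOracle^\top\MlatOracle\ZlatOracle$ to $\op{\sqrt{n}}$. Your sandwich representation treats $\thetay(\Gtilde-\G)\y$ as an explicit bias term. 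It therefore needs only consistency of $n^{-1}\Hcheck^\top\Hcheck$ and $n^{-1}\Hcheck^\top\Zcheck$, together with $n^{-1/2}\Hcheck^\top(\Gtilde-\G)\y=\op{1}$. That is a viable and somewhat lighter architecture. However, both estimates you give for this last term fail as written.

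\textbf{The cross term is the main gap.} The matrix $\Hcheck\Qinsttop-\HlatOracle$ contains the blocks $\Xhat\Q^\top-\Xpop$, $\G(\Xhat\Q^\top-\Xpop)$ and $\G^2(\Xhat\Q^\top-\Xpop)$. So $\|\Hcheck\Qinsttop-\HlatOracle\|/\sqrt{n}$ is not $\Op{\|\G-\Gtilde\|}$, and the cross term is only bounded by $\Op{(\|\Xhat\Q^\top-\Xpop\|+\sqrt{n}\|\G-\Gtilde\|)\|\G-\Gtilde\|}$.

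Lemma~\ref{lem:ase-consistency} yields nothing better than $\|\Xhat\Q^\top-\Xpop\|\le\sqrt{n}\max_i\|\Q\Xhat_i-\Xpop_i\|=\op{\sqrt{n}}$. That leaves a piece of size $\op{\sqrt{n}\|\G-\Gtilde\|}$, which Assumption~\ref{assum:degrees} only makes $\op{n^{1/4}}$.

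You need a quantitative spectral rate for the embedding:
\begin{itemize}
\item Lemma~\ref{lem:XhatXTB:ctrl} with $\mB=\mI$ gives $\|\Xhat\Q^\top-\Xpop\|=\op{n^{1/4}}$. Lemma~\ref{lem:HcheckHlat:spectral} packages this as $\|\Hcheck\Qinsttop-\HlatOracle\|=\op{n^{1/4}}$.
\item Lemma~\ref{lem:GGtilde:spectral} with \eqref{eq:assum:Grate:stronger} gives $\|\G-\Gtilde\|=\op{n^{-1/4}}$.
\end{itemize}
The product is then $\op{1}$. The same quantitative rate, rather than Lemma~\ref{lem:ase-consistency}, is what the term $n^{-1/2}\Hcheck^\top(\Xhat\Q^\top-\Xpop)$ needs, since $\Hcheck$ depends on $\A$.

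\textbf{Your derivation of $n^{-1/2}\HlatOracle^\top(\Gtilde-\G)\y=\op{1}$ also does not close.}
\begin{itemize}
\item The crude degree bound needs $\sqrt{n}\max_i|d_i-\dtilde_i|/\dtilde_i\to0$. Lemma~\ref{lem:degconc} and Assumption~\ref{assum:degrees} only make this ratio $\op{n^{-1/4}}$, and $\min_i\dtilde_i\gg n\log n$ is impossible for binary edges. Your fallback is therefore mandatory.
\item In the noise part you kept $\D^{-1}$, which is a function of $\A$. So $h^\top\D^{-1}(\A-\Apop)\y$ is not a sum of conditionally independent terms.
\item Your variance $\nu_n n^2/\dtilde^2$ gives standard deviation $\sqrt{\nu_n}\,n/\dtilde$, not $\sqrt{n\nu_n}\,n/\dtilde$. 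The condition $\dtilde\gg n\sqrt{\nu_n}$ you arrive at can never hold for Bernoulli or Poisson edges with $\dtilde\asymp n\rho_n$ and $\nu_n\asymp\rho_n$.
\end{itemize}

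The argument that works is the paper's Lemma~\ref{lem:GGtilde:quad}:
\begin{itemize}
\item Split $\G-\Gtilde=(\D^{-1}-\Dtilde^{-1})\A+\Dtilde^{-1}(\A-\Apop)$.
\item Bound the second-order remainders by $C\|u\|\|v\|(\nu+b^2)n\log^2 n/\min_i\dtilde_i^2$, which is negligible by \eqref{eq:deglb:2}.
\item Write the first-order degree term $u^\top\Dtilde^{-1}(\D-\Dtilde)\Gtilde v$ as the linear form $\onevec^\top(\A-\Apop)\mR v$ in the edge noise, with $\mR$ as in \eqref{eq:def:R}.
\end{itemize}
Take $u$ to be a column of $\HlatOracle$ and $v=\y$; both are conditionally independent of $\A-\Apop$ under latent contagion. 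This gives $|u^\top(\G-\Gtilde)\y|=\op{\|u\|\|\y\|/\sqrt{n}}=\op{\sqrt{n}}$, as required.

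With these two repairs, the rest of your plan goes through: Gram consistency via $\|\G-\Gtilde\|\to0$, the CLT for $n^{-1/2}\HlatOracle^\top\be$, and Slutsky.
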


Proofs of Theorems~\ref{thm:peertruepeerfit},~\ref{thm:lattruelatfit},~\ref{thm:peertruelatfit} and~\ref{thm:lattruepeerfit} can be found in the Appendix. The general proof strategy is to show that $\betahattsls$ and $\thetahattsls$ are close to ``oracle'' estimates based on using the true latent positions. Appendix~\ref{apx:oracle-convergence} discusses convergence of these oracle estimators to the true parameters. Appendices~\ref{apx:latcon} and~\ref{apx:peercon} show convergence of our estimators defined above to these oracle estimators. We include, additionally, proofs showing that the feasible ordinary least squares estimators converge to their oracle counterparts, although oracle ordinary least squares estimators are only consistent and asymptotically normal in dense networks \citep{lee2002,gupta2019}.

Our misspecification results generalize to a statement about asymptotic equivalence of the corresponding functionals. Recall that the population design matrices of the peer and latent contagion models are, respectively,
\begin{equation} \label{eq:def:Zmxs}
    \ZpeerOracle = \begin{bmatrix} \1_n \, \, \W \, \, \Xpop \, \, \G \y \end{bmatrix}
    ~\text{ and }~
    \ZlatOracle  = \begin{bmatrix} \1_n \, \, \W \, \, \Xpop \, \, \Gtilde \y. \end{bmatrix} .
\end{equation}
Consider the corresponding ``projection parameters'', or the population coefficients corresponding to projection of outcomes onto these design matrices, and letting $F$ be an appropriately supported cumulative distribution function (see \citealt{li2025h} for a discussion of these parameters in the network regression context),
\begin{align}
    \label{eq:proj:beta}
    \tau (F)             & = \argmin_{b} \bbE_F \brac*{\paren*{\Y - \ZpeerOracle \, b}^2} = \bbE_F \brac*{\ZpeerOracle^\top \ZpeerOracle}^{-1} \bbE_F \brac*{\ZpeerOracle^\top \Y}, \text{ and } \\
    \label{eq:proj:theta}
    \widetilde{\tau} (F) & = \argmin_{t} \bbE_F \brac*{\paren*{\Y - \ZlatOracle \, t}^2} = \bbE_F \brac*{\ZlatOracle^\top \ZlatOracle}^{-1} \bbE_F \brac*{\ZlatOracle^\top \Y}.
\end{align}
Let $F_\beta$ denote the law of the peer contagion process and $F_\theta$ the law of the latent contagion process. When the expectation is taken with respect to the true generating process, the projection parameters simplify to the corresponding regression coefficients,
\begin{equation} \label{eq:def:projectionParams}
    \begin{aligned}
        \tau (F_\beta)
        = \symbf \beta, ~~~\text{ and }~~~
        \widetilde{\tau} (F_\theta)
        = \symbf \theta.
    \end{aligned} \end{equation} %
These projection parameters are not, in general, equal under the latent and peer contagion models. However, fixing either the peer or latent contagion model, the following theorem shows that they are asymptotically equivalent, in the sense that the difference between the two parameters is negligible relative to the typical estimation error.

\begin{theorem} \label{thm:proj-equivalence}
    Under Assumptions~\ref{assum:latent-oracle},~\ref{assum:equiv:limcov},~\ref{assum:equiv:edges}~and~\ref{assum:equiv:LandX}, and supposing the latent contagion model in Equation~\eqref{eq:lim-latent-red} holds, then
    \begin{align*}
        \norm*{\widetilde{\tau} (F_\theta) - \tau (F_\theta)}
        = \norm*{\, \symbf \theta - \tau (F_\theta)}
        = \norm*{\bbE \brac*{\ZlatOracle^\top \ZlatOracle}^{-1} \bbE \brac*{\ZlatOracle^\top \Y} - \bbE \brac*{\ZpeerOracle^\top \ZpeerOracle}^{-1} \bbE \brac*{\ZpeerOracle^\top \Y}}
        = o \paren*{n^{-1/2}}.
    \end{align*}
    If the peer contagion model in Equation~\eqref{eq:lim-peer-red} holds instead of the model in Equation~\eqref{eq:lim-latent-red}, then, with Assumption~\ref{assum:peer-oracle} in place of Assumption~\ref{assum:latent-oracle},
    \begin{align*}
        \norm*{\tau (F_\beta) - \widetilde{\tau} (F_\beta)}
        = \norm*{\, \symbf \beta - \widetilde{\tau} (F_\beta)}
        = \norm*{\bbE \brac*{\ZpeerOracle^\top \ZpeerOracle}^{-1} \bbE \brac*{\ZpeerOracle^\top \Y} - \bbE \brac*{ \ZlatOracle^\top \ZlatOracle}^{-1} \bbE \brac*{\ZlatOracle^\top \Y}}
        = o \paren*{n^{-1/2}}.
    \end{align*}
\end{theorem}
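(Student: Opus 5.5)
Throughout, write $\Delta = \G - \Gtilde$, so that $\ZpeerOracle - \ZlatOracle = [\,\symbf 0 \,\, \symbf 0 \,\, \symbf 0 \,\, \Delta \y\,]$. The two projection parameters differ only through the last column of the design. The plan is a perturbation argument for the map $(\bbE[\Z^\top \Z], \bbE[\Z^\top \Y]) \mapsto \bbE[\Z^\top\Z]^{-1}\bbE[\Z^\top \Y]$. After normalising by $n$, we will show that both the Gram matrix and the cross moment under the two designs differ by $o(n^{-1/2})$. We will also show that the normalised latent Gram matrix $n^{-1}\bbE[\ZlatOracle^\top \ZlatOracle]$ has smallest eigenvalue bounded away from zero; this is what Assumption~\ref{assum:equiv:limcov} should deliver, together with the non-degeneracy discussed after Lemma~\ref{lem:martellioso2022}.

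We use the standard identity $B_1^{-1}c_1 - B_2^{-1}c_2 = B_1^{-1}(c_1-c_2) + B_1^{-1}(B_2-B_1)B_2^{-1}c_2$. With $B_1=n^{-1}\bbE[\ZlatOracle^\top\ZlatOracle]$ and $B_2=n^{-1}\bbE[\ZpeerOracle^\top\ZpeerOracle]$, the claim reduces to three facts. First, $\norm{B_1^{-1}}=O(1)$; a Weyl argument then gives the same for $B_2^{-1}$ once $\norm{B_1-B_2}=o(1)$. Second, $n^{-1}\bbE[\ZpeerOracle^\top\Y] = O(1)$ in norm. Third, both $\norm{B_1-B_2}$ and $n^{-1}\norm{\bbE[(\ZlatOracle-\ZpeerOracle)^\top\Y]}$ are $o(n^{-1/2})$. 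The entries of $B_1-B_2$ are $n^{-1}\bbE[\Delta\y]^\top$ times one of $\1_n,\W,\Xpop,\Gtilde\y$, plus the quadratic terms $n^{-1}\bbE[(\Delta \y)^\top\Delta\y]$. Expanding $\bbE[\ZpeerOracle^\top\ZpeerOracle]-\bbE[\ZlatOracle^\top\ZlatOracle]$ gives cross terms $\bbE[(\Delta\y)^\top \Gtilde\y]$ and $\bbE[\norm{\Delta\y}^2]$. The cross-moment difference is $n^{-1}\bbE[(\Delta\y)^\top\y]$.

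The core step is therefore to bound bilinear forms $n^{-1}\bbE[\mathbf u^\top \Delta \y]$ with $\mathbf u\in\{\1_n,\W_{\cdot k},\Xpop_{\cdot k},\Gtilde\y,\y\}$, and also $n^{-1}\bbE\norm{\Delta\y}^2$. Under latent contagion we substitute $\y=(\mI-\thetay\Gtilde)^{-1}(\1_n\thetanaught+\W\thetaw+\Xpop\thetax+\be)$. The inverse is $\sum_k \thetay^k\Gtilde^k$, with operator norm controlled by the argument of Lemma~\ref{lem:IbgyG:invertible}. This writes $\Delta\y$ as $\Delta$ applied to a vector $\mathbf v$ that depends only on $(\W,\Xpop,\be)$ and not on $\A$. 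For such a vector, we condition on $\Xpop,\W$. We then write $\Delta = \D^{-1}(\A-\Apop) + (\D^{-1}-\Dtilde^{-1})\Apop$. The first term is a sum of independent mean-zero sub-gamma entries scaled by $1/d_i$. The second is controlled by degree concentration, $\max_i|d_i/\dtilde_i-1| = O_p(\sqrt{\log n/(n\rho_n)})$, using Assumption~\ref{assum:equiv:edges} on density and Assumption~\ref{assum:equiv:LandX} on bounded latent positions and covariates. Terms linear in $\A-\Apop$ then have conditional mean zero, up to the correlation induced through $\D^{-1}$, which we handle with a leave-one-out or second-order expansion of $1/d_i$ around $1/\dtilde_i$. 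The term $\mathbf u^\top\Delta\y$ with $\mathbf u=\Gtilde\y$ or $\y$ also involves $\be$ twice. We will use independence of $\be$ from $\A$ and $\bbE[\be\be^\top]=\sigma_\varepsilon^2\mI$, reducing it to traces such as $\sigma_\varepsilon^2\operatorname{tr}(\Gtilde^\top\Delta\cdots)$, each bounded by $n\norm{\Delta}$-type quantities. Quadratic terms $n^{-1}\bbE\norm{\Delta\mathbf v}^2$ are of order $n^{-1}\sum_i \bbE[\sum_j(\A_{ij}-\Apop_{ij})^2 v_j^2]/\dtilde_i^2 = O(1/(n\rho_n))$. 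Requiring $n\rho_n \gg \sqrt n$, as Assumption~\ref{assum:equiv:edges} presumably stipulates, makes this $o(n^{-1/2})$.

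The main obstacle is the bias terms: the nonzero means of $\D^{-1}(\A-\Apop)$ and of $\D^{-1}-\Dtilde^{-1}$, which arise because $1/d_i$ is nonlinear in $\A$. The first-order pieces vanish, but second-order pieces are of order $\nu_n/\dtilde_i^2$. We must show that these sum, after the $n^{-1}$ normalisation, to $o(n^{-1/2})$. We would Taylor expand $1/d_i$ to second order, with a remainder on the high-probability event of degree concentration, and bound the complement event's contribution using Assumption~\ref{assum:equiv:edges}. The peer-contagion statement follows by the identical argument with roles swapped. There $\y=(\mI-\betay\G)^{-1}(\cdots)$ depends on $\A$, so we first replace $(\mI-\betay\G)^{-1}$ by $(\mI-\betay\Gtilde)^{-1}$ via the resolvent identity, incurring another $\Delta$ factor of the same order. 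We then apply Assumption~\ref{assum:peer-oracle} in place of Assumption~\ref{assum:latent-oracle} for the invertibility of the limiting Gram matrix.
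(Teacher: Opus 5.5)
Your skeleton matches the paper's. The paper also uses the perturbation identity for $B^{-1}c$, shows that the $n$-normalised Gram matrices and cross moments differ by $o(n^{-1/2})$, takes bounded inverses from Assumption~\ref{assum:equiv:limcov}, and reduces everything to bilinear and quadratic forms in $\Delta=\G-\Gtilde$, with a Neumann expansion in the peer case. The gap is in how you execute that core step.

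\textbf{The hypotheses you rely on are not available.} The theorem does \emph{not} assume sub-gamma edges. Assumption~\ref{assum:equiv:edges} gives only conditional independence, mean zero and $\bbE[(\A-\Apop)_{ij}^2\mid\Xpop]\le\nu_n$. Assumption~\ref{assum:equiv:LandX} is not a density or boundedness condition; it is exactly the rate $\sigmaeps^2\nu_n\sum_i\bbE\dtilde_i^{-2}=o(n^{-1/2})$ that your heuristic $n\rho_n\gg\sqrt n$ gestures at. Boundedness of $\W,\Xpop$ comes from item~\ref{item:assum:oracle:boundedCovariates} of the oracle assumptions.
\begin{itemize}
\item Consequently $\max_i|d_i/\dtilde_i-1|=O_p(\sqrt{\log n/(n\rho_n)})$ is unavailable.
\item The complement of a global concentration event cannot be controlled. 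Chebyshev plus a union bound gives only $\Pr(\exists i:|d_i-\dtilde_i|>\dtilde_i/2)\le 4n\nu_n\sum_i\bbE\dtilde_i^{-2}$. The assumption makes this $o(\sqrt n)$, so it need not even tend to zero.
\item Because $\tau$ and $\widetilde{\tau}$ are exact expectations and the integrand can be of size $n$ on the complement, $O_p$ statements are not enough.
\end{itemize}
The root cause is your decomposition $\Delta=\D^{-1}(\A-\Apop)+(\D^{-1}-\Dtilde^{-1})\Apop$. Its first term is not conditionally mean zero. The Taylor remainder of its second term contains $1/d_i$ multiplying $(\Apop\mathbf v)_i$ rather than $(\A\mathbf v)_i$, so it is bounded only where $d_i$ is bounded below.

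\textbf{The missing idea is an exact identity.} Use $\D^{-1}-\Dtilde^{-1}=\Dtilde^{-1}(\Dtilde-\D)\D^{-1}$, which pairs the random reciprocal with $\A$:
\begin{equation*}
\G-\Gtilde=\Dtilde^{-1}(\A-\Apop)+\Dtilde^{-1}(\Dtilde-\D)\G .
\end{equation*}
This holds even for isolated nodes with $\G_{i\cdot}=\mathbf 0$. The first term is exactly conditionally mean zero. In the second, the randomness of $1/d_i$ is absorbed into the averaging operator $\G$, with $\|\G\mathbf v\|_\infty\le\|\mathbf v\|_\infty$.
\begin{itemize}
\item Substituting $\G=\Gtilde+\Delta$ once more in the second term kills the remaining first-order piece in expectation. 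What is left is $\Dtilde^{-1}(\Dtilde-\D)\Dtilde^{-1}(\Dtilde-\D)\G$ and $\Dtilde^{-1}(\Dtilde-\D)\Dtilde^{-1}(\A-\Apop)$.
\item For $\mathbf u,\mathbf v$ independent of $\A-\Apop$ given $\Xpop$, the expectations of these terms are at most $Cn\nu_n\sum_i\bbE\|\mathbf u\|_\infty\|\mathbf v\|_\infty/\dtilde_i^2$. This uses only $\bbE[(d_i-\dtilde_i)^2\mid\Xpop]\le n\nu_n$.
\item Likewise $\bbE\|\Delta\mathbf v\|^2\le Cn\nu_n\sum_i\bbE\|\mathbf v\|_\infty^2/\dtilde_i^2$.
\end{itemize}
No Taylor remainder, event or tail assumption is needed. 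This is what the lemmas of Appendix~\ref{apx:equivalence} do. The peer case follows from the Neumann series and $\G^s-\Gtilde^s=\sum_\ell\G^{s-\ell-1}\Delta\Gtilde^\ell$, with multi-$\Delta$ terms handled by Cauchy--Schwarz. Node-by-node truncation on $\{d_i\ge\dtilde_i/2\}$ with Chebyshev would also rescue your route, since off that event the $i$-th entry of $\Delta\mathbf v$ is at most $2\|\mathbf v\|_\infty$.

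\textbf{The trace bound is too weak.} Your trace terms cannot be bounded by $n\|\Delta\|$. For Bernoulli or Poisson edges with $\dtilde_i\asymp n\rho_n$, $\|\Delta\|$ is of order $(n\rho_n)^{-1/2}\ge n^{-1/2}$, so $n^{-1}\cdot n\|\Delta\|$ is never $o(n^{-1/2})$. You need $\operatorname{tr}(M\,\bbE[\Delta\mid\Xpop])$ for entrywise bounded $M$. The identity above shows each row of $\bbE[\Delta\mid\Xpop]$ has $\ell_1$-norm at most $2n\nu_n/\dtilde_i^2$, which is enough.

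Finally, in the peer case the invertibility of $n^{-1}\bbE[\ZpeerOracle^\top\ZpeerOracle]$ comes from Assumption~\ref{assum:equiv:limcov} plus the $o(n^{-1/2})$ closeness of the Gram matrices, not from Assumption~\ref{assum:peer-oracle}.
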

A proof can be found in Appendix~\ref{apx:equivalence}. In the above theorem, we use the ``assumption lean'' representation of regression coefficients, or the ``projection estimands'', which are the definitions of $\tau$ and $\widetilde{\tau}$ given in \eqref{eq:proj:beta} and \eqref{eq:proj:theta}, respectively. These functionals are non-parametric. That is, $\widetilde{\tau}$ is a projection parameter that one might want to estimate even if the true data generating model is the peer contagion model and the model is indexed by $\symbf \beta$, with $\symbf \theta$ undefined. The key idea of Theorem~\ref{thm:proj-equivalence} is that the true parameter and the projection parameter are within parametric estimation error of one another, so they are functionally indistinguishable in the asymptotic limit.

This result has substantial implications for estimating contagion effects in noisy networks. Suppose that outcomes are generated according to the peer contagion model in Equation~\eqref{eq:lim-peer}, but the network $\A$ is unobserved and one only has access to a noisy variant $\Anoise$. In this case, treating $\Anoise$ as the true network and plugging it into typical estimators such as $\betahattsls$ will ignore measurement error in $\Anoise$ and typically lead to inconsistent estimation, among other issues.
In particular, it is challenging to target the parameter $\symbf{\beta}$, because this requires projecting onto $\ZpeerOracle$, which is a function of an unknown network $\A$. Theorem~\ref{thm:proj-equivalence}, however, suggests a path forward: we can instead target the parameter $\widetilde{\tau}$, which does not equal $\symbf{\beta}$, but is functionally indistinguishable. Crucially, to estimate $\widetilde{\tau}$ we do not need to project outcomes onto $\ZpeerOracle$, but rather onto $\ZlatOracle$, which
depends on the low-rank expectation $\Apop$ rather than the precise network $\A$. This is a substantially easier task: we do not need $\A$ proper, only its principal subspace. Conveniently, given a noisy observation  of a network $\Anoise$, there are many off-the-shelf methods to estimate the principal subspace of $\A$.

This leads to a potentially generic recipe for estimating contagion effects in noisy networks: find a subspace estimator tailored to the noise process, and target $\widetilde{\tau}$ using an estimator like $\thetahattsls$. The variance of the corresponding estimator will depend on how fast the subspace estimator converges. For sufficiently fast estimators, such as the adjacency spectral embedding, there is no asymptotic variance penalty due to estimating the principal subspace.
In fact, this approach and our results thus far are sufficient to develop an estimator for contagion effects in noisy weighted networks. Suppose that $\Anoise = \A + \mE$, where $\mE$ is mean-zero sub-gamma noise. In this case, $\Anoise$ still satisfies the assumptions of the subgamma network model (Definition \ref{def:subgamma-network}), and thus the adjacency spectral embedding is still a consistent estimator of the principal subspace of $\A$, with only a slight adjustment to the sub-gamma parameters in the convergence rate. Thus, $\thetahattsls$ immediately accommodates sub-gamma noise in the adjacency matrices.

Figure~\ref{fig:concentration} offers some visual intuition for the generic nature of the network smoothing strategy, showing estimates of $\Apop$ obtained from networks with various types of corruption. Even when 10\% of edges are missing, matrix completion methods produce estimates $\Apophat$ that closely approximate the true latent structure.
In the subsequent section, we provide evidence via simulations that network smoothing is a viable strategy to estimate contagion in networks with missing edges, in ego-centrically sampled networks, and in networks where only aggregated relational data is available.

\begin{figure}[htbp]
    \centering
    \includegraphics{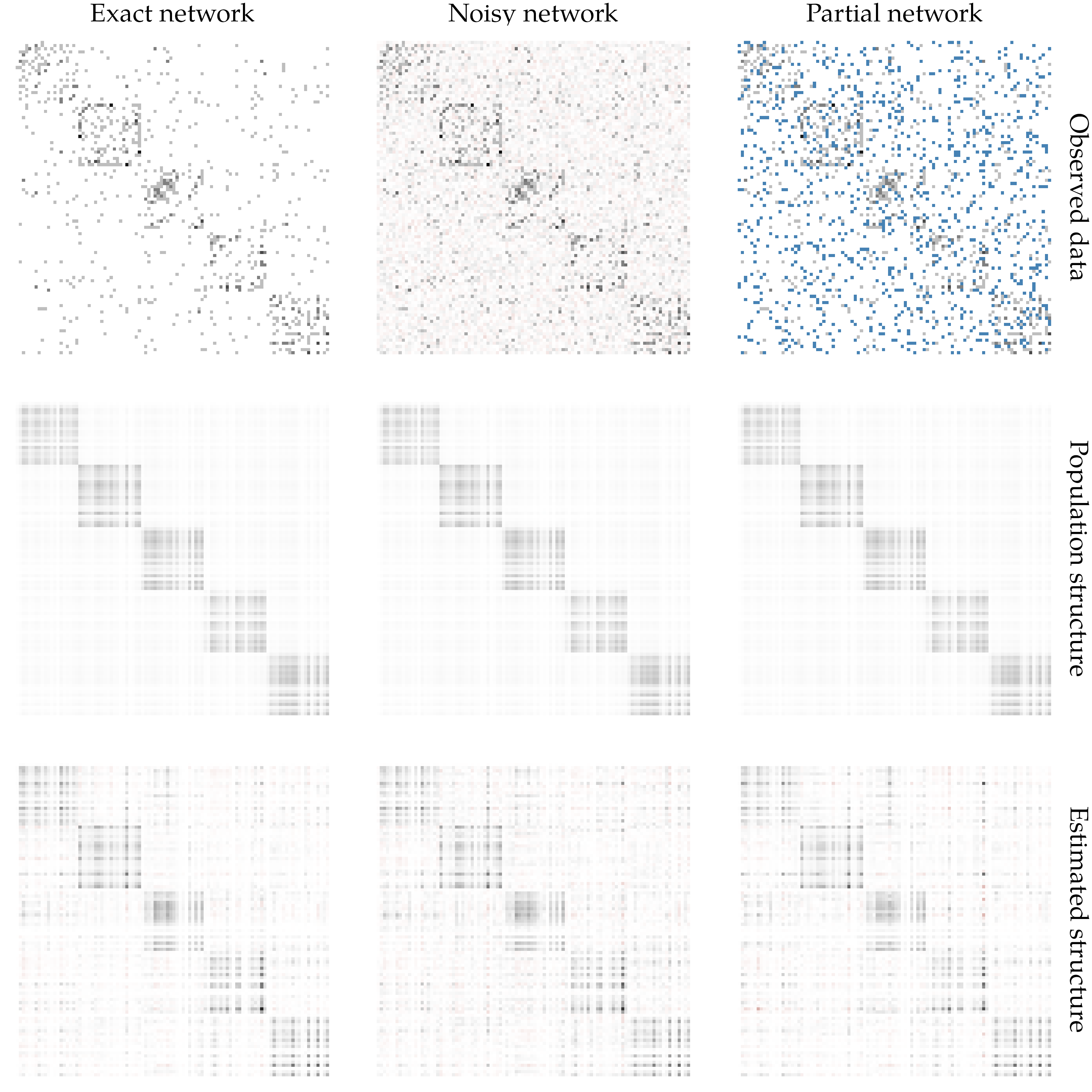}
    \caption{Adjacency matrices, corresponding expectations, and estimated expectations for observed network data corresponding to the same underlying population structure in a random dot product graph, as well as various estimates of the latent structure, based on the adjacency spectral embedding in the left and middle columns, and nuclear-norm based matrix completion in the right column. Blue entries in the right column indicate unobserved entries of $\A$.}
    \label{fig:concentration}
\end{figure}

\section{Simulation study}
\label{sec:simulations}

We now verify via simulation that $\thetahattsls$ and $\betahattsls$ are consistent and asymptotically normal estimates of contagion effects (Fig.~\ref{fig:mse}), and that they obtain the expected $n^{-1/2}$ convergence rate predicted by our theoretical results. All networks in our simulations below are generated from a degree-corrected stochastic blockmodel (Definition~\ref{def:sbm}) with $n$ nodes and five equally probable blocks. In particular, we use a Poisson stochastic blockmodel where edges are Poisson distributed.

\begin{figure}[t]
    \centering
    \includegraphics{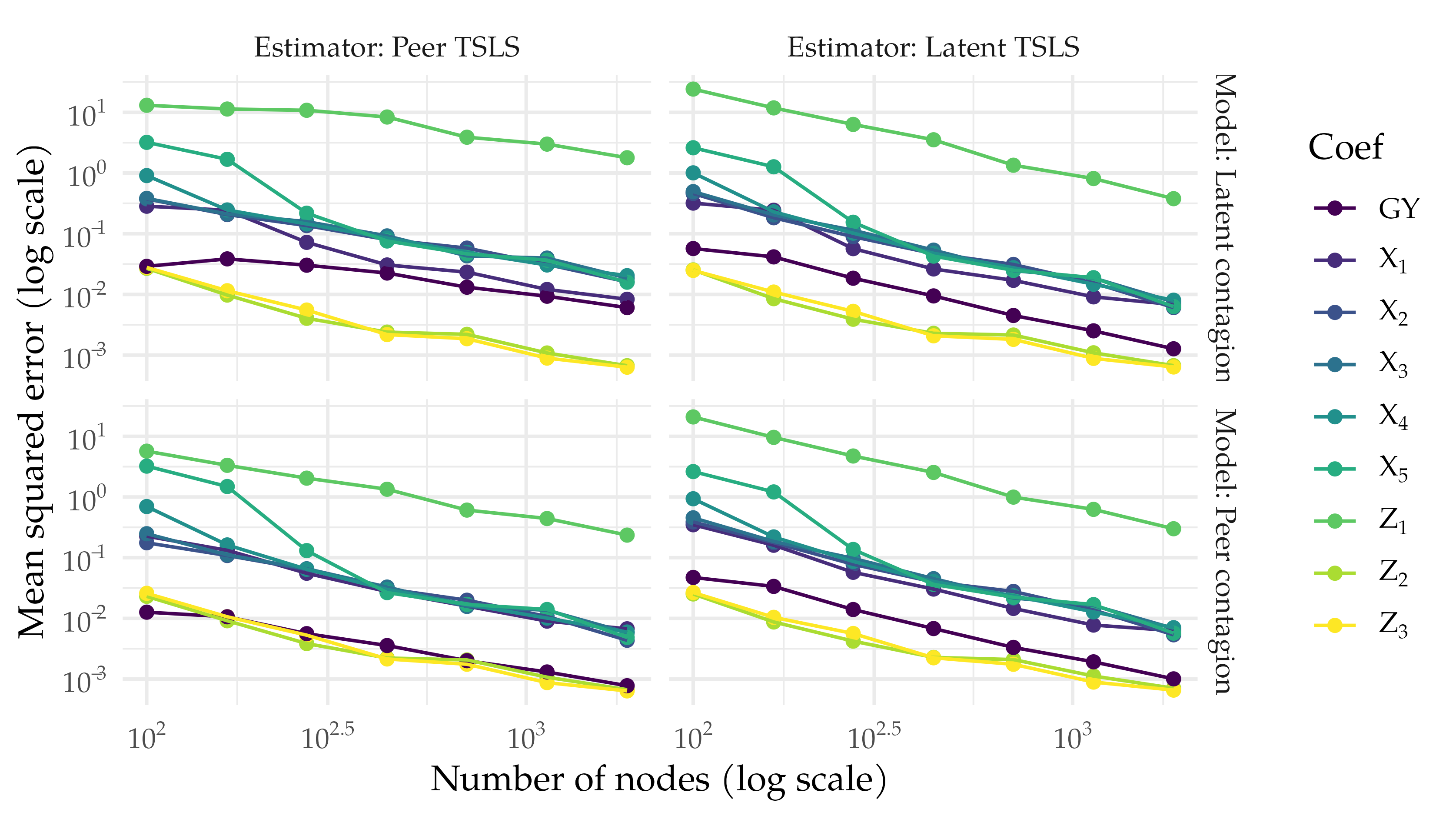}
    \caption{Monte Carlo estimates of mean squared error of $\thetahattsls$, and $\betahattsls$.
        The top panels consider the setting where latent contagion is the true data generating process, and the bottom panels considers the setting where peer contagion is the true data generating process. The left panels display results the peer contagion estimator, and the right panels display results for the latent contagion estimator. In each panel, thex-axis represents the number of nodes in the network on a log scale, and the y-axis represents the mean-squared error of the estimate, also on a log scale. Each line corresponds to a different regression coefficient, differentiated by color. The average degree in these simulations is $n^{3/4}$. Additional simulation results when the average degrees is $n^{1/2}$ and $n^{1/4}$ are available in Figures~\ref{fig:mse-n12}~and~\ref{fig:mse-n14}.}
    \label{fig:mse}
\end{figure}

\begin{definition}[Poisson Degree-Corrected Stochastic Blockmodel] \label{def:sbm}
    The Poisson degree-corrected stochastic blockmodel \citep{rohe2018, karrer2011} is an undirected model of community membership, with $d$ communities. Each node, indexed by $i \in [n]$, is assigned a block $z_i \in [d]$ with probability $\Pr({z_i = k}) = \pi_k$, and a degree-correction parameter $\xi_i$, which describes the propensity of vertex $i$ to connect with other nodes. Conditional on block memberships and degree-correction parameters, edges are generated independently between every pair of vertices in the network according to a Poisson distribution with parameter $\rho_n \, \xi_i \symbf B_{z_i, z_j} \xi_j$. That is, the expected number of edges between two vertices depends on their community memberships, their degree correction parameters, a positive semi-definite matrix $\symbf B \in [0, 1]^{d \times d}$ of inter-block edge formation probabilities, and a scaling factor $\rho_n \in [0, 1]$, which may vary with $n$.
\end{definition}
For our simulation study, we take there to be $d = 5$ blocks, and sample diagonal elements of $\symbf B$ from a $\mathrm{Uniform}(0.75, 0.85)$ distribution and off-diagonal elements of $\symbf B$ from a $\mathrm{Uniform(0.01, 0.05)}$ distribution, such that networks are strongly assortatively, mostly forming edges within blocks. We sample degree correction parameters according to $\xi_i \sim \mathrm{Exponential(1/3)} + 1$. Shifting the distribution of $\xi_i$ away from zero limits the number of isolated nodes. The sparsity parameter $\rho_n$ is set so that the expected mean degree of the network is $n^{3/4}, n^{1/2}$ or $n^{1/4}$. Once we have generated these parameters, we compute $\Xpop = \Upop \Spop^{1/2}$ where $\Upop \Spop \Upop^\top$ is the eigendecomposition of the low-rank expectation $\E[z_1,z_2,\dots, z_n, \theta]{A}$.
Errors $\be$ are sampled from a standard normal distribution, and we include three covariates $\W_1, \W_2, \W_3$, also sampled from a standard normal. We set $\betanaught = \thetanaught = 0$, $\betay = \thetay = 0.2$, $\betaw = \thetaw = (5, 5, 5)$ and $\betax = \thetax = (2, 2, 2, 2, 2)$. Then $\Y$ is generated according to Equation~\eqref{eq:lim-peer-red} or Equation~\eqref{eq:lim-latent-red}, depending on whether we are under peer or latent contagion, respectively.
For both models, we compute $\betahattsls$ and $\thetahattsls$, and measure the estimation error to the corresponding model coefficients. Since $\betax$ is only identified up to orthogonal rotation, we perform Procrustes alignment \citep[see][for discussion]{cape2019c} between $\Xhat$ and $\Xpop$ to investigate recovery of $\betax$. Note that we set $\betanaught = \thetanaught = 0$ to simplify this alignment step.

In our experiments, we vary the sample size $n$ (i.e., the number of vertices) on a logarithmic scale, considering $n \in \set{100, 163, 264, 430, 698, 1135, 1845, 3000}$, and replicate our experiments $100$ times for each simulation setting.
Figure~\ref{fig:mse} shows the mean squared error of the estimated coefficients as a function of the number of nodes. Mean squared error for $\thetahattsls$ and $\betahattsls$ decreases at $n^{-1/2}$ rates in both the latent and peer contagion models, exactly as dictated by our theory.

\subsection{Noisy networks}
\label{subsec:noisy-networks}

We additionally investigate how network smoothing performs when the network is mismeasured. We consider the exact same simulation setting as before, but now we observe a noisy adjacency matrix $\Anoise$ rather than $\A$. We consider only the latent contagion estimator $\thetahattsls$, since the latent space formulation allows us to smooth away noise in the network. In some cases, estimating $\Upop$ and $\Spop$ via the singular value decomposition is impossible because of the noise process (for instance, due to missing data). When this is the case, or when there is an estimator of $\Upop$ and $\Spop$ designed to handle the particular form of the noisy matrix $\A$, we use that more appropriate estimator instead of the singular value decomposition. The estimators are introduced below, alongside the various noise processes under consideration. In the first four cases, existing estimators are capable of recovering the singular values and singular vectors of $\Apop$. In the last two settings, we are unaware of principal subspace estimators, and expect difficulties with estimation.

\begin{enumerate}
    \item \textbf{Baseline}: The network is observed without noise, so $\Anoise = \A$ and $\Apophat$ is estimated by taking the rank $k$ singular value decomposition of $\Anoise$. $\Xhat$ is constructed via $k$-dimensional ASE applied to $\Apophat = \Uhat \Shat \Uhat$, so that $\Xhat = \Uhat \Shat^{1/2}$. $\Ghat$ is constructed by row-normalizing $\Apophat$ and the setting the diagonal elements to zero. $\Xhat$ and $\Ghat$ are then plugged in as estimates of $\Xpop$ and $\Gtilde$ in $\thetahattsls$.

    \item \textbf{Gaussian noise}: We observe $\Anoise = \A + \symbf E$, where $\symbf E$ is a symmetric matrix with i.i.d. entries $\varepsilon_{ij} \sim \mathcal{N}(0, \sigma^2)$ for $i < j$ and $\varepsilon_{ii} = 0$. We compute $\Apophat$ as the rank-$k$ truncated singular value decomposition of $\Anoise$, and construct $\Ghat$, $\Xhat$ and then $\thetahattsls$ as in the baseline case. \cite{levin2022a} justifies this singular value decomposition as applied to $\Anoise$ rather than $\A$.

    \item \textbf{Missing edges}: We observe $\Anoise = \A$, except $30\%$ of entries of $\Anoise$ are missing at random, with missingness independent of the network. $\Upop$ and $\Spop$ are estimated via matrix completion, in particular the \texttt{AdaptiveImpute} algorithm of \citet{cho2019}. Estimated singular values and vectors are directly substituted for the more typical $\Uhat$ and $\Shat$ from singular value decomposition in the known $\A$ case, which in turn allows computation of $\Apophat, \Ghat$ and $\Xhat$ in the same fashion as the previous two settings.

    \item \textbf{Aggregated relational data}: We observe $\Anoise = \A \W$ where $\W$ are traits (also observed) that are correlated with latent $\Xpop$. $\W$ has the same dimensions as $\Xpop$, and each column of $\W$ is sampled from a multivariate normal distribution with correlation 0.8 to the corresponding column of $\Xpop$. In the aggregated relational data setting, $\Upop$ is estimated by the left singular vectors of $\Anoise$, and we denote these estimates $\bar{\Upop}$. $\Spop$ is then estimated via $\bar{\Upop}^\top \Y \W^\top \bar{\Upop} (\bar{\Upop}^\top \W \W^\top \bar{\Upop})^{-1}$, and these estimates are used as plug-in replacements for $\Uhat$ and $\Shat$. Theory and motivation for these estimators is under development in forthcoming work by Hayes, Chandrasekhar, McCormick and Breza. The aggregated relational data framework is detailed in \citet{breza2020, breza2023a}.

    \item \textbf{Ego-centric data}: In ego-centric sampling, half of the nodes in the network are selected at random, and only edges incident to those nodes are observed. Unobserved edges are imputed using Algorithm 1 of \citet{chan2023a}, and then $\Apop$ is estimated via the full matrix recovery technique presented in the same manuscript. Let $\A_{11}$ be the ego-ego block and $\A_{12}$ be the ego-nonego block. We compute a rank-$k$ approximation $\tilde{\Apop}_{11} = \Upop_1 \symbf \Spop_1 \Vpop_1^\top$ of $\A_{11}$, and then estimate the nonego-nonego block as $\Apophat_{22} = \A_{12}^\top \tilde{\Apop}_{11}^\dagger \A_{12}$. The remaining blocks $\Apophat_{11}$ and $\Apophat_{12}$ are recovered via a rank-$k$ approximation of the observed part of the network $\begin{bmatrix} \A_{11} & \A_{12} \end{bmatrix}$.

    \item \textbf{Degree capped}: The network is observed with censored degrees, where for each node $i$, at most $d_{\text{max}} = 20$ incident edges are available. If a node has more than $20$ edges, edges are removed uniformly at random until the constraint is met. We compute $\Apophat$ as the rank-$k$ truncated singular value decomposition of the resulting censored adjacency matrix $\Anoise$.

    \item \textbf{Edges flipped}: $\Anoise$ is a version of $\A$ when $15\%$ of edges in the network have been flipped. To preserve the total number of edges in the network, this is implemented via random edge swapping, where $15\%$ of edges in the network are swapped with a random edge with different edge value. $\Upop$ and $\Spop$ are estimated via singular value decomposition of $\Anoise$.
\end{enumerate}

The results of these simulations are in Figure~\ref{fig:mse-noisy}, which compares mean squared error for $\betay$ and $\thetay$ when using the estimates of $\Xpop$ and $\Gtilde$ defined above. We see that the network smoothing approach is able to recover both $\betay$ and $\thetay$ in settings where $\A$ is contaminated with Gaussian noise, when $\A$ has missing edges, and when only an aggregated relational data variant of $\A$ is observed. This matches our expectations that consistent estimation of peer effects is possible whenever principal subspace estimation is possible. Under degree censoring and edge flip noise mechanisms, principal subspace estimation is challenging, and the singular value decomposition is a poor estimator, such that $\betay$ and $\thetay$ cannot be recovered accurately.

\begin{figure}[t]
    \centering
    \includegraphics{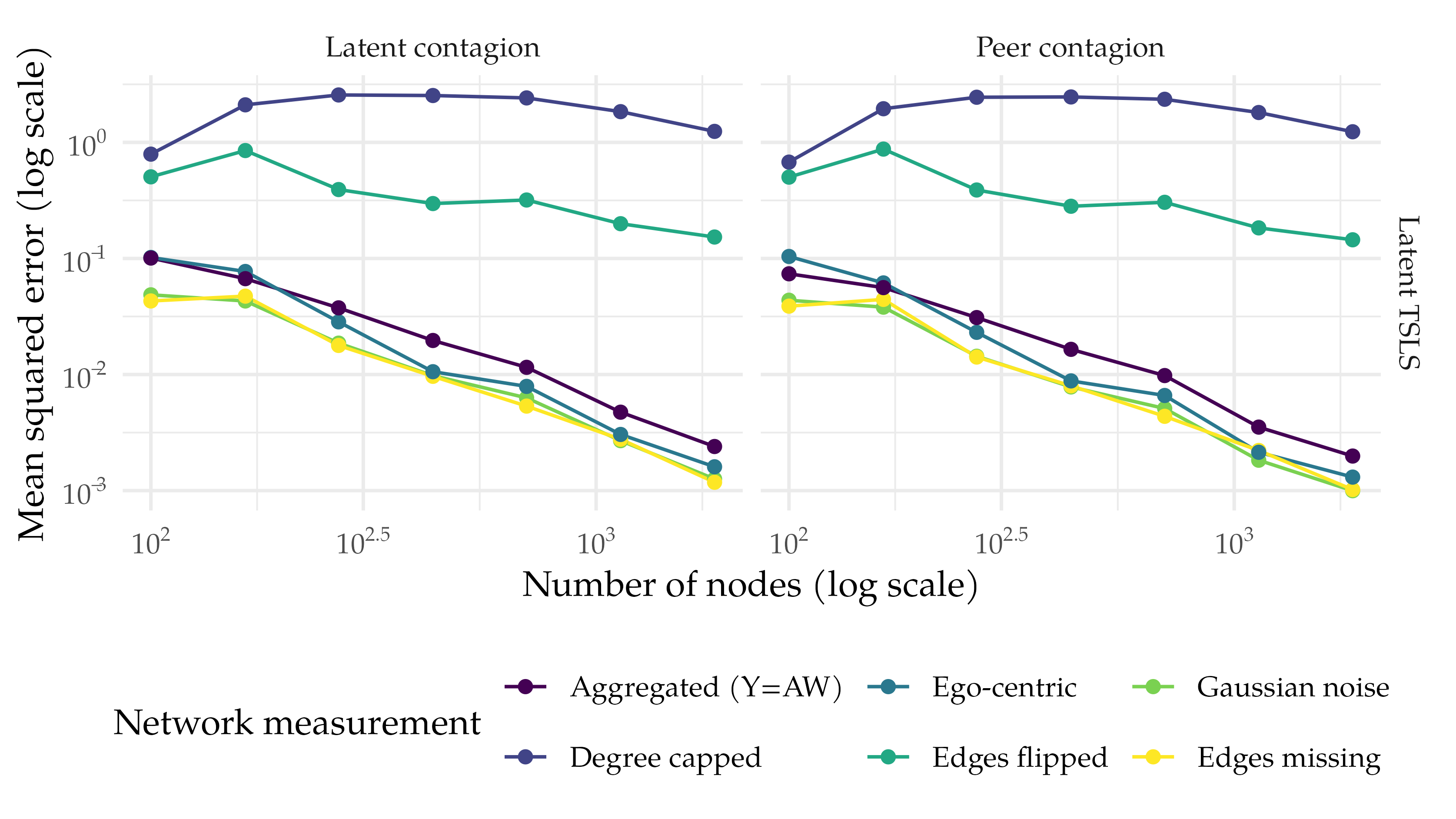}
    \caption{Monte Carlo estimates of mean squared error of $\thetahattsls_y$ under various noise models for the network. The left panel considers the setting where latent contagion is the true data generating process, and the right panel considers the setting where peer contagion is the true data generating process. The x-axis represents the number of nodes in the network on a log scale, and the y-axis represents the mean-squared error of the estimate of the contagion term, $\theta_y$ under latent contagion and $\beta_y$ under peer contagion, and is also on log scale. Each line corresponds to simulations under a different form of measurement error, differentiated by color. The average degree in these simulations is $n^{3/4}$.}
    \label{fig:mse-noisy}
\end{figure}

\subsection{Sparsity}

We additionally investigate how $\thetahattsls$ and $\betahattsls$ perform in sparser, correctly observed networks. In Figure~\ref{fig:mse-n12} we report mean squared error when the average degree is $n^{1/2}$, and in Figure~\ref{fig:mse-n14} we report mean squared error when the average degree is $n^{1/4}$. In both cases, we observed worse performance relative to the denser $n^{3/4}$ average degree case. When the average degree is $n^{1/2}$, mean squared error is more volatile, but Figure~\ref{fig:mse-n12} does still suggest that $\thetahattsls$ and $\betahattsls$ are consistent under peer contagion. Under latent contagion, $\thetahattsls$ appears consistent, but $\betahattsls$ struggle to recover the intercept (coefficient $Z_1$) and the peer effect. We suspect that this is primarily due to finite sample collinearity issues, as degree heterogeneity is necessary to differentiate the intercept from the contagion term, and degree heterogeneity is less pronounced in sparser graphs.
In the sparsest setting, with average degree $n^{1/4}$, the impact of sparsity is more severe. Figure~\ref{fig:mse-n14} suggests that all estimators experience slower rates of convergence and may not even be consistent, under both models, although $\betahattsls$ is potentially consistent under the peer contagion model. We believe that this degradation in performance is primarily attributable to noise in the estimates $\Xhat$ around $\Xpop$.

These simulations show that the sparsity can degrade the performance of the estimators that we have proposed, via two mechanisms: sparsity can reduce degree heterogeneity, leading to collinearity issues, and it may degrade the accuracy of the plug-in estimate $\Xhat$ of $\Xpop$. These simulations suggest that applied analyses using our estimators should assess the quality of estimates $\Xhat$ (namely, assess whether the estimates $\Xhat$ correspond to meaningful structure in the network) and should consider the possibility of variance inflation in regression estimates due to collinearity. In sparse networks with strong block diagonal structure and limited degree heterogenity, the intercept and $\Xpop$ may be collinear, and it may be reasonable to drop a column for collinearity reasons.

\begin{figure}[t]
    \centering
    \includegraphics{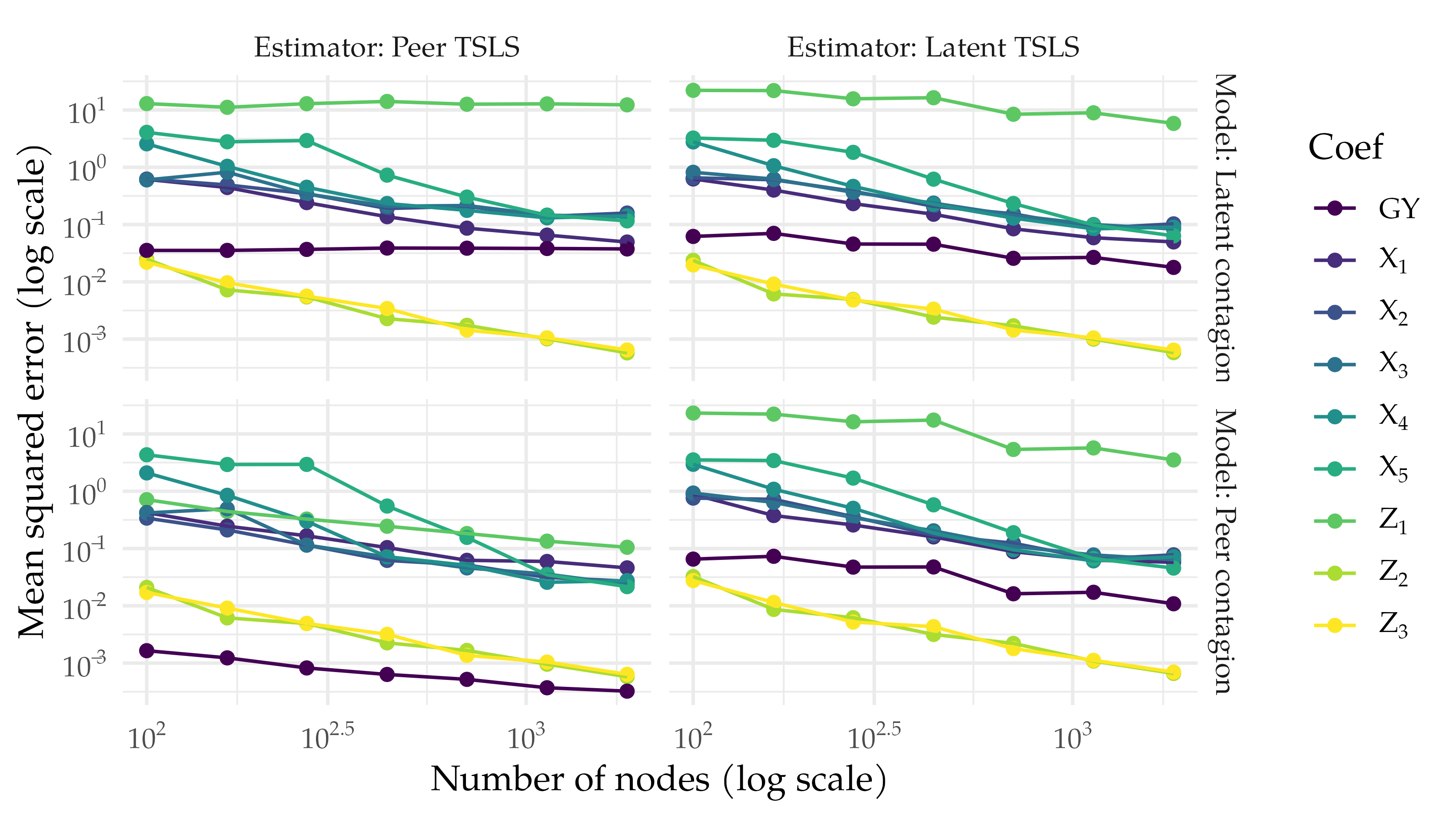}
    \caption{Monte Carlo estimates of mean squared error of $\thetahattsls_y$ under various noise models for the network. The left panel considers the setting where latent contagion is the true data generating process, and the right panel considers the setting where peer contagion is the true data generating process. The x-axis represents the number of nodes in the network on a log scale, and the y-axis represents the mean-squared error of the estimate of the contagion term, $\theta_y$ under latent contagion and $\beta_y$ under peer contagion, and is also on log scale. Each line corresponds to simulations under a different form of measurement error, differentiated by color. The average degree in these simulations is $n^{1/2}$.}
    \label{fig:mse-n12}
\end{figure}

\begin{figure}[t]
    \centering
    \includegraphics{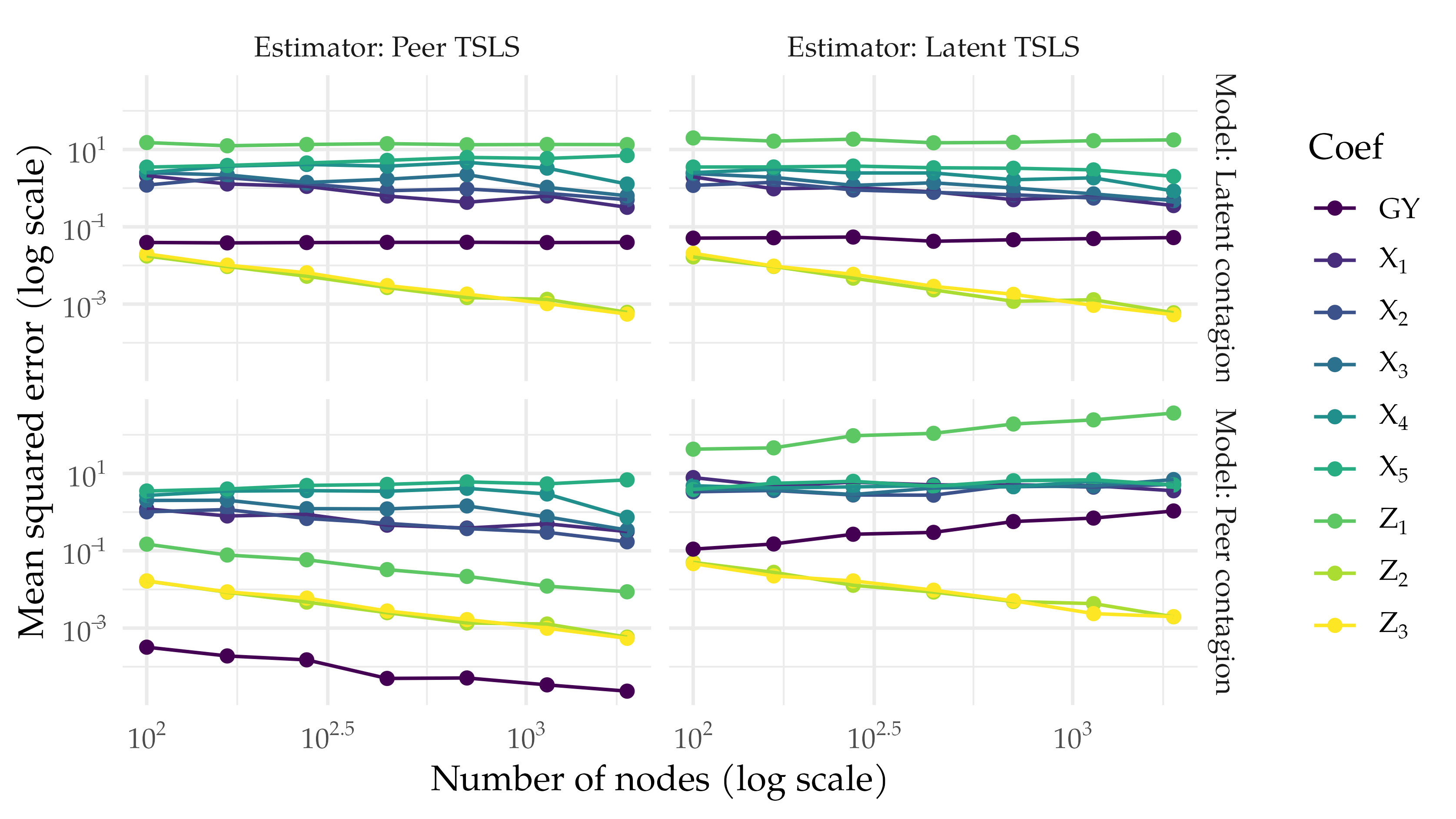}
    \caption{Monte Carlo estimates of mean squared error of $\thetahattsls_y$ under various noise models for the network. The left panel considers the setting where latent contagion is the true data generating process, and the right panel considers the setting where peer contagion is the true data generating process. The x-axis represents the number of nodes in the network on a log scale, and the y-axis represents the mean-squared error of the estimate of the contagion term, $\theta_y$ under latent contagion and $\beta_y$ under peer contagion, and is also on log scale. Each line corresponds to simulations under a different form of measurement error, differentiated by color. The average degree in these simulations is $n^{1/4}$.}
    \label{fig:mse-n14}
\end{figure}

\section{Data Application}
\label{sec:data-application}

To demonstrate our methods, we applied our estimators to network data collected during the \emph{Teenage Friends and Lifestyle Study}, reported in \citet{michell1996}, \citet{michell1997}, \citet{michell1997a}, and \citet{michell2000a}. Recently, \citet{hayes2025} and \citet{dimaria2022a} investigated network-mediation using this data, studying how sex influenced network position, and how network position subsequently influenced smoking behaviors in adolescents. These analyses assumed that there were no peer effects on smoking after accounting for latent network position. We re-analyzed the same data to investigate if smoking exhibits spillovers in addition to being localized within the adolescent social network.

\subsection{Data}

The \emph{Teenage Friends and Lifestyle Study} collected three waves of survey data in a secondary school in Glasgow, beginning in January 1995. Students in the study filled out a questionnaire about their lifestyle and risk-taking behaviors, including alcohol, tobacco and drug use, and additionally were asked to list six of their friends. \citet{michell2000a} found that smoking was mostly concentrated in friend groups composed of popular girls, unpopular students, and trouble-makers: ``risk taking behaviour was heavily polarized within social categories so that, for instance, groups of individuals (and their peripherals) were in general either risk-taking or non-risk-taking''.

The social network was collected by asking students ``who are your best friends'', and allowing adolescents to list up to six responses. We considered only data from the first wave of the survey, which included 153 adolescents. Sex and tobacco use were self-reported as nominal features with levels ``Male'' and ``Female''; and ``Never'', ``Occasional,'' and ``Regular,'' respectively. To match the analyses of \citet{hayes2025} and \citet{dimaria2022a}, for the tobacco use measure, we combined ``Occasional'' and ``Regular'' into a single level, and compared smokers with non-smokers. Also to match the analysis of \citet{hayes2025}, we treated age (continuous) and church attendance (nominal) as possible confounders and thus included these variables as controls.

We computed the adjacency spectral embedding of the social network $\A$. In the Glasgow data, the social network is directed: an edge $i \to j$ indicates that student $i$ listed student $j$ as friend. This directedness means that students have two distinct co-embeddings corresponding to their propensity to send out-edges and receive in-edges. Letting $\widehat{\A} \approx \Uhat \Shat \Vhat^T$ be the truncated singular value decomposition of $\A$, the left co-embedding $\widehat{\mL} = \Uhat \Shat^{1/2}$ described how students in the network send edges (i.e., claim friends), and the right co-embedding $\Xhat \equiv \Vhat \Shat^{1/2}$ described how students receive edges (i.e., are claimed as friends). Our results used the right co-embeddings $\Xhat$. We did not select any particular dimension $d$ for the latent space. Instead, we repeated our analysis for many values of $d$, to investigate the sensitivity of our results to the dimension of the latent space.
Once we obtained embeddings $\Xhat$ via the singular value decomposition, we performed a multiverse analysis, estimating regression coefficients using $\thetahattsls$ and $\betahattsls$. For each estimator, we considered two variants, one including $\Xhat$ as covariates, to adjust estimates for latent positions in the network, and one without including $\Xhat$ as covariates. That is, we used two-stage least squares estimators derived under all four of the following generative models:
\begin{align}
    \label{eq:peer-no-x} \Y  & = \1_n \betanaught + \W \betaw + \G \Y \betay  + \be                        \\
    \label{eq:latent-no-x}\Y & = \1_n \thetanaught + \W \thetaw + \Gtilde \Y \thetay + \be                 \\
    \label{eq:peer-x}\Y      & = \1_n \betanaught + \W \betaw + \Xpop \betax + \G \Y \betay  + \be         \\
    \label{eq:latent-x}\Y    & = \1_n \thetanaught + \W \thetaw + \Xpop \thetax + \Gtilde \Y \thetay + \be
\end{align}
Recall that $\W$ consists of age and church attendance. The estimators $\thetahattsls$ required an estimate of the dimension $d$ of the latent positions $\Xpop$, as does $\betahattsls$ when $\Xpop$ are included as covariates in the model. We repeated our analysis for $2 \le d \le 25$ in order to understand how the dimension of the embedding affected estimates. For each estimate, we reported a 95\% asymptotic confidence interval in Figure~\ref{fig:glasgow-estimates}.

\begin{figure}[t]
    \centering
    \includegraphics{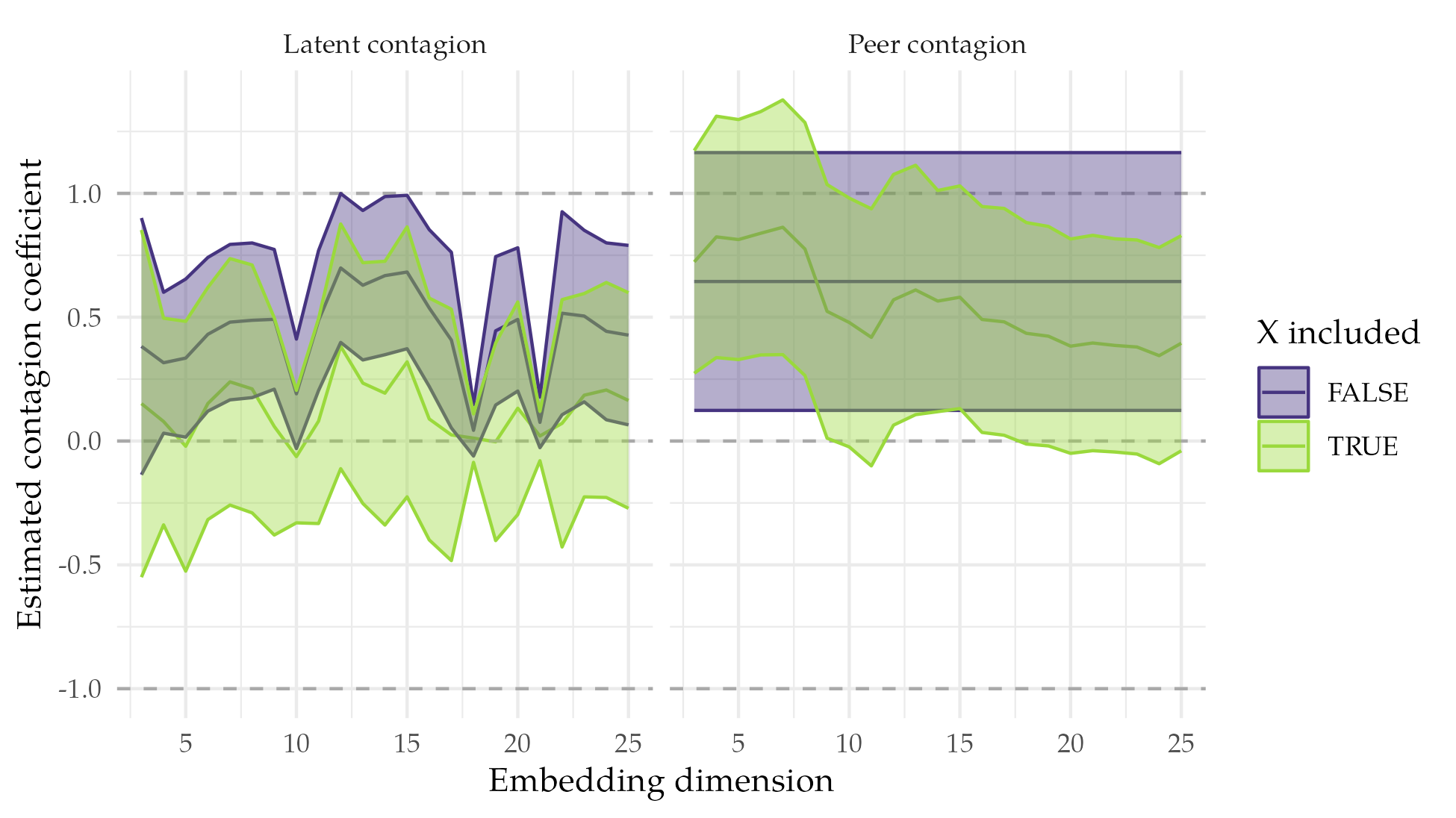}
    \caption{Point estimates and asymptotic 95\% confidence intervals for $\betay$ and $\thetay$ in the Glasgow adolescent social network. $\betay$ and $\thetay$ measure the contagiousness of smoking. The left panel considers estimates based on a latent contagion working model, and the right panel considers estimates based on a peer contagion working model. The x-axis represents the dimension of the latent space used in the adjacency spectral embedding. The y-axis is the estimate of the contagiousness of smoking. The dark blue ribbons correspond to estimates that do not include estimated latent positions $\Xhat$ as covariates, and the lighter green ribbons corresponds to estimates that do adjust for the estimated latent positions $\Xhat$.}
    \label{fig:glasgow-estimates}
\end{figure}

\subsection{Results}

Estimates of the smoking spillovers varied moderately across the multiverse analysis. The simplest and most consistent story emerged among estimates that do not adjust for latent homophily by including $\Xpop$ in the regression specification. These estimates, visualized in blue, were consistently large and statistically differentiated from zero, suggesting that smoking does exhibit spillover effects. However, estimates that adjusted for latent homophily by including $\Xpop$ told a different story. In most cases, including $\Xpop$ in the regression reduced the point estimate of the contagion coefficient, and the associated confidence interval often contained zero. However, these confidence intervals remained wide, indicating substantially uncertainty about the presence or absence of a spillover effect after accounting for localized smoking behavior in the network via the $\Xhat$ terms.

Another set of comparisons, between the latent contagion estimators and the peer contagion estimators, was also informative. In particular, Theorems~\ref{thm:peertruepeerfit},~\ref{thm:lattruelatfit},~\ref{thm:peertruelatfit},~and~\ref{thm:lattruepeerfit} prove that latent and peer contagion estimators are asymptotically equivalent under the random dot product graph, provided that the network is precisely observed. However, we observed some deviation between the latent contagion and peer contagion estimators, with latent contagion estimates indicating lower levels outcome spillover across most embedding dimensions. There numerous possible explanations for the difference between the latent and peer contagion estimates: (1) the peer network might have been observed with noise, (2) a random dot product model may not haven been appropriate for the network (in which case we would prefer the estimates based on the peer contagion model), or (3) the network may have been too small (recall $n = 153$ nodes) for asymptotics results to applicable.

Lastly, we observed that estimates under peer contagion model were fairly stable as a function of the embedding dimension $d$, but estimates under the latent contagion model were somewhat more volatile in the embedding dimension, with estimates of the contagion coefficient collapsing towards for the specific values $d = 10, 18, 21$. We suspect this volatility as a function of embedding dimension was related to the small sample size the limited number of smokers in the network.

Altogether, our analysis indicated that there was substantial uncertainty about the presence and scale of the contagiousness of smoking in the adolescent social network, after accounting for the localized nature of smoking in the network by control from $\Xhat$. This uncertainty mirrored previous results suggesting that peer effects and homophily are challenging to distinguish in social networks \citep{hayes2025b, shalizi2011}. A possible next step to differentiate between these effects in the adolescent social network would be to consider longitudinal models of smoking behavior, such as those proposed in \citep{chang2025b}.

\section{Discussion}
\label{sec:discussion}

We have shown that under low-rank network models, contagion over a true network is asymptotically equivalent to contagion over a smoothed, latent adjacency matrix. This equivalence enables consistent peer effects estimation even when the observed network contains measurement error, provided we can reliably estimate the network's eigenspace. Any method that reliably estimates the eigenspace of $\bbE [\A]$—whether through spectral embedding for sub-gamma noise, matrix completion for missing edges, or debiasing techniques for systematic measurement error—can be combined with our latent contagion framework to estimate peer effects in challenging data settings.

While this paper focuses on parameter estimation rather than causal identification, the equivalence we establish has implications for causal inference under interference. Under appropriate conditional ignorability assumptions, network autoregressive parameters can have causal interpretations \citep{vazquez-bare2023, leung2022, mcfowland2021}. Our results suggest that when peer influence operates through latent structure, causal effects may be more accurately estimated by defining exposures based on latent proximity rather than observed edges.

\subsection*{Acknowledgements}

We thank Hyunseung Kang, Ralph Trane, Karl Rohe, Edward McFowland, Arun Chandrasekhar, Michael Newton, Tyler McCormick, Vincent Boucher, Ben Golub and the attendees of the Network Science in Economics 2026 conference, as well as attendees of the University of Wisconsin--Madison IFDS Ideas Seminar for their helpful comments and suggestions. Support for this research was provided by the University of Wisconsin--Madison, Office of the Vice Chancellor for Research and Graduate Education with funding from the Wisconsin Alumni Research Foundation, as well as NSF grants DMS 2052918 and DMS 2023239. Support for this research was also provided by American Family Insurance through a research partnership with the University of Wisconsin--Madison's American Family Insurance Data Science Institute.

\clearpage
\newpage
\appendix
\section*{Appendix}

Here we collect proofs of our main theoretical results, stated in Theorems~\ref{thm:peertruepeerfit},~\ref{thm:lattruelatfit},~\ref{thm:peertruelatfit} and~\ref{thm:lattruepeerfit}, as well as our projection equivalence result in Theorem~\ref{thm:proj-equivalence}.
In the main text, these are listed as holding under ``suitable regularity conditions''.
We begin by listing the assumptions on the covariates and model parameters that constitute these conditions.
The first of our assumptions ensures that the spectrum of $\Apop = \Xpop \Xpop^\top$ is suitably well-behaved.

\begin{assumption} \label{assum:Apop:spectrum}
  The expected adjacency matrix $\Apop$ satisfies $d = \rank \Apop = \Op{1}$ and
  \begin{equation} \label{eq:assum:spop:growth}
    \spop_d = \Omegap{ 1 } .
  \end{equation}
  Further, the edge-level variance does not grow too quickly compared to the signal in $\spop_d$:
  \begin{equation} \label{eq:assum:spectralConc}
    \spop_d = \omegap{ \sqrt{ \nu + b^2 } \sqrt{n} \log n } .
  \end{equation}
  Letting $\kappa = \spop_1/\spop_d$ be the condition number of $\Apop$, we require
  \begin{equation} \label{eq:assum:XhatX:rate:2}
    \frac{ \kappa (\nu+b^2) n \log^2 n }{ \spop_d^{3/2} }
    = \op{ 1 } .
  \end{equation}
\end{assumption}

We also require assumptions on the behavior of the minimum degree of the weighted adjacency matrix $\Apop$.
Denoting the degree of vertex $i$ in $\Apop$ by
\begin{equation} \label{eq:def:dtilde}
  \dtilde_i = \bbE\left[ d_i \mid \Xpop \right]
  = \left[ \Apop \onevec \right]_i = \sum_{j=1}^n \Xpop_i^\top \Xpop_j ,
\end{equation}
we require the following assumption.

\begin{assumption} \label{assum:degrees}
  The degrees of the expected adjacency matrix $\Apop$ are such that
  \begin{align}
    \min_{i \in [n]} \dtilde_i^2 %
     & = \omega\left( \kappa (\nu+b^2) n^{3/2} \log^2 n \right), \label{eq:deglb:2} \\
    \label{eq:deglb:2:kappa}
    \spop_d \min_{i \in [n]} \dtilde_i
     & = \omega\left( \kappa^{3/2} (\nu+b^2) n^{3/2} \log^2 n \right) ~\text{ and } \\
    \label{eq:assum:Grate:stronger}
    \frac{ \spop_1 \sqrt{ \nu + b^2 } \sqrt{n} \log n }
    { \min_{i \in [n]} \dtilde_i^2 }
     & = \op{ \frac{1}{n^{1/4} } } .
  \end{align}
\end{assumption}

We also require regularity assumptions on the latent positions and other covariates.

\begin{assumption} \label{assum:latentpositions}
  The latent positions $\Xpop \in \R^{n \times d}$ are such that
  \begin{equation} \label{eq:assum:Xtti}
    \left\| \Xpop \right\|_{\tti}
    = \Op{ \kappa \sqrt{\nu + b^2} \log n } .
  \end{equation}
\end{assumption}

Our results for the latent contagion estimator $\thetahattsls$ require a slightly stronger assumption.
We note that these bounds are focused on the setting where the variance of the edges (as encoded by the subgamma parameters $\nu,b$) are decoupled from the growth rate of $\spop_d$.
As such, we expect that these assumptions can be relaxed by extending the bounds in \cite{rubin-delanchy2022} to the case of subgamma edge distributions, but we leave this to future work.
\begin{assumption} \label{assum:stronger4dhat}
  The tail behavior of the edge-level noise is such that
  \begin{equation} \label{eq:assum:degreekappa}
    \frac{ \kappa \sqrt{\nu+b^2} n \log n }{ \spop_d } = \Op{ 1 }.
  \end{equation}
\end{assumption}

\begin{assumption}[Peer oracle assumptions] \label{assum:peer-oracle}
  Under the peer contagion model of Equation~\eqref{eq:lim-peer}, the following conditions hold.
  \begin{enumerate}
    \item All the diagonal elements of $\G$ are zero and $\G$ has uniformly bounded row and column sums, almost surely. %
    \item $\abs*{\betay} < 1$, so that $\paren*{\mI - \betay \G}^{-1} $ is non-singular.
    \item The regressor matrices $\W$ and $\Xpop$ have full column rank (for $n$ large enough), and the elements of $\W$ and $\Xpop$ are uniformly bounded in absolute value almost surely. \label{item:assum:oracle:boundedCovariates}
    \item $\be$ are i.i.d. with zero mean, variance $\bbE \brac*{\be_i^2} = \sigmaeps^2 < \infty$ and all entries of $\be$ have finite fourth moments.  \label{item:assum:oracle:epsilonMoments}
    \item The instrument matrices $\HpeerOracle$ have full column rank almost surely and are composed of a subset of linearly independent columns of
          \begin{equation*}
            \begin{bmatrix}
              \W \; \Xpop \; \G \W \; \G \Xpop \;
              \G^2 \W \; \G^2 \Xpop
            \end{bmatrix},
          \end{equation*}
          where the subset contains $\W$ and $\Xpop$.
    \item The limit
          \begin{equation*}
            \lim_{n \to \infty} \frac{1}{n} \HpeerOracle^\top \HpeerOracle
          \end{equation*}
          is almost surely finite and non-singular. Further,
          \begin{equation*}
            \frac{1}{n} \HpeerOracle^\top \ZpeerOracle
          \end{equation*}
          converges almost surely to a matrix that is finite with full column rank.
  \end{enumerate}

  Assumption~\ref{assum:peer-oracle} is nearly identical to the assumptions for the non-stochastic $\Xpop$ case. We adapt results for non-stochastic $\Xpop$ to the stochastic $\Xpop$ case by conditioning on $\Xpop$ and requiring that the conditions for the non-stochastic $\Xpop$ estimator hold almost surely. In random dot product graphs, where $\Xpop$ is often uniformly bounded, these assumptions are very natural. In settings where $\Xpop$ is not almost surely uniformly bounded, some moment conditions on $\Xpop$ are instead necessary to ensure convergence of the oracle estimator \citep{gupta2019}.
\end{assumption}

Our results for the latent contagion model require similar assumptions.
\begin{assumption}[Latent oracle assumptions]
  \label{assum:latent-oracle}

  Under the model in Equation~\eqref{eq:lim-latent}, the conditions of Assumption~\ref{assum:peer-oracle} hold, with $\Gtilde$ in place of $\G$ and $\Htilde$ in place of $\H$.
\end{assumption}

Under either Assumption~\ref{assum:peer-oracle} or~\ref{assum:latent-oracle}, the matrix of node-level covariates, the design matrix and the instrument matrix are all well-conditioned.
Further, it is straightforward under either of these assumptions, the following growth rates hold.
These will prove convenient to have for reference in our proofs to follow.
\begin{equation} \label{eq:assum:XandW:spectrumUB}
  \max\left\{ \left\| \Xpop \right\|, \left\| \W \right\| \right\}
  = \Op{ \sqrt{n} },
\end{equation}
\begin{equation} \label{eq:assum:Zgrowth}
  \sigmamin( \MlatOracle \ZlatOracle ) = \Omega( \sqrt{n} )
  ~\text{ and }~
  \| \ZlatOracle \| = \Op{ \sqrt{n} } , ~\text{ and }
\end{equation}
\begin{equation} \label{eq:assum:Hgrowth}
  \sigmamin( \HlatOracle ) = \Omega( \sqrt{n} )
  ~\text{ and }~
  \| \HlatOracle \| = \Op{ \sqrt{n} } .
\end{equation}

Our projection equivalence result, Theorem~\ref{thm:proj-equivalence}, does not require the subgamma edge behavior of Definition~\ref{def:subgamma-network}.
Instead, we need only that the edge noise has bounded second moments, along with degree growth conditions that hold in expectation, rather than in probability.

\begin{assumption} \label{assum:equiv:limcov}
  The matrix $\lim_{n \rightarrow \infty} \bbE \ZlatOracle^\top \ZlatOracle/n$
  exists and is invertible.
\end{assumption}

\begin{assumption} \label{assum:equiv:edges}
  The entries of $\mA-\mApop$ are, conditionally on $\Xpop$, mean zero and independent (up to symmetry), and obey
  \begin{equation} \label{eq:assum:edgeVariance}
    \max_{i,j} \bbE \left[ (\mA-\mApop)_{ij}^2 \mid \Xpop \right]
    \le \nu_n .
  \end{equation}
\end{assumption}

\begin{assumption} \label{assum:equiv:LandX}
  The expected degrees $\dtilde_1,\dtilde_2,\dots,\dtilde_n$ are such that
  \begin{equation} \label{eq:assum:Edegrecip}
    \sigmaeps^2 \nu_n
    \sum_{i=1}^n \bbE \frac{ 1 }{ \dtilde_i^2 }
    = o( n^{-1/2} ) ,
  \end{equation}
  where $\nu_n$ is the variance parameter is Equation~\eqref{eq:assum:edgeVariance} above.
\end{assumption}

\section{Technical Results} \label{apx:technical}

Here we collect basic results, largely related to concentration inequalities, which we will use to establish our technical results in the sequel.

\begin{definition} \label{def:subgamma} %
  Let $Z$ be a mean-zero random variable with cumulant generating function $\psi_Z(t) = \log \E{e^{t Z}}$.

  \begin{enumerate}
    \item $Z$ is sub-Gaussian($\nu$) for $\nu > 0$ if $\psi_Z(t) \le t^2 \nu / 2$ for all $t \in \R$.

    \item $Z$ is sub-gamma($\nu, b$) for $\nu, b \ge 0$ if $\psi_Z(t) \le \frac{t^2 \nu}{2 (1 - b t)}$ and $\psi_{-Z}(t) \le \frac{t^2 \nu}{2 (1 - b t)}$ for all $t < 1 / b$.
  \end{enumerate}
\end{definition}

\begin{lemma}[\cite{boucheron2013} Chapter 2] \label{lem:sgbasic}
  Suppose that $Z$ is a $(\nu,b)$-subgamma random variable.
  Then for all $ t > 0$,
  \begin{equation*}
    \Pr\left[ |X| > \sqrt{2\nu t} + bt \right] \le \exp\{ -t \}
  \end{equation*}
\end{lemma}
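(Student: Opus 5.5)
The statement to prove is Lemma~\ref{lem:sgbasic}: for a $(\nu,b)$-sub-gamma variable $Z$ and any $t>0$, $\Pr[|Z| > \sqrt{2\nu t} + bt] \le \exp\{-t\}$. (The display writes $X$ for $Z$, and the factor on the right-hand side should really be $2$, as discussed below.) I would use the standard Cramér--Chernoff argument and treat each tail separately.

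\textbf{Upper tail.} Fix $\lambda \in (0, 1/b)$ and apply Markov's inequality to $e^{\lambda Z}$. Using the cumulant bound of Definition~\ref{def:subgamma}, this gives
\begin{equation*}
  \Pr[Z > x] \le \exp\Bigl\{ -\sup_{0<\lambda<1/b} \Bigl(\lambda x - \frac{\lambda^2 \nu}{2(1-b\lambda)}\Bigr) \Bigr\}.
\end{equation*}
Following \cite{boucheron2013}, I would not optimize over $\lambda$ exactly. Instead, I would verify that the Fenchel--Legendre dual $h^*(x)$ of $\lambda \mapsto \lambda^2\nu/(2(1-b\lambda))$ satisfies
\begin{equation*}
  h^*(x) \ge \frac{\nu}{b^2}\, h_1\Bigl(\frac{bx}{\nu}\Bigr), \qquad h_1(u) = 1 + u - \sqrt{1+2u}.
\end{equation*}
Equivalently, I would show that the inverse of this dual is $t \mapsto \sqrt{2\nu t} + bt$. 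Concretely, set $x = \sqrt{2\nu t} + bt$ and choose
\begin{equation*}
  \lambda = \frac{1}{b}\Bigl(1 - \frac{\sqrt{\nu}}{\sqrt{\nu + 2 b x}}\Bigr),
\end{equation*}
which lies in $(0, 1/b)$. Plugging this in and simplifying should give exactly $\lambda x - \lambda^2\nu/(2(1-b\lambda)) = t$. I expect this algebraic verification to be the only real obstacle. The cleanest route is to write $\sqrt{\nu+2bx} = \sqrt{\nu} + b\sqrt{2t}$, which holds because $(\sqrt{\nu} + b\sqrt{2t})^2 = \nu + 2b\sqrt{2\nu t} + 2b^2 t = \nu + 2bx$. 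With that identity every term becomes a polynomial in $\sqrt{t}$, and the expression reduces to $t$.

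\textbf{Lower tail.} The definition also bounds $\psi_{-Z}$, so the same argument applied to $-Z$ gives $\Pr[-Z > x] \le e^{-t}$.

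\textbf{Combining.} A union bound over the two tails gives $\Pr[|Z| > x] \le 2e^{-t}$. The statement as written, with $\exp\{-t\}$, is therefore off by a factor of $2$ for the two-sided event. I would either record the bound as $2e^{-t}$, or state the one-sided bound for each tail. This factor is immaterial for the downstream uses of the lemma, since all rates there are only required up to constants or $\log n$ factors.
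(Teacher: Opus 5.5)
Your Cramér--Chernoff argument is correct and follows the same route as the source the paper cites for this lemma: the paper gives no proof of its own and defers to Chapter 2 of \cite{boucheron2013}. Your $\lambda$ simplifies to $\sqrt{2t}/(\sqrt{\nu}+b\sqrt{2t})$, which makes the exponent exactly $t$ and also covers $b=0$. Your factor-$2$ correction is genuinely needed, not cosmetic: a Rademacher variable is $(1,0)$-sub-gamma since $\log\cosh\lambda \le \lambda^2/2$, yet $\Pr[|Z| > \sqrt{2t}] = 1 > e^{-t}$ for $0 < t < 1/2$. So the two-sided bound must read $2e^{-t}$, which is consistent with the $2n^{-c}$ that appears in the paper's proof of Lemma~\ref{lem:sgsum}.
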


The following is a basic result concerning subgamma random variables, which we prove for the sake of completeness.

\begin{lemma} \label{lem:sgsum}
  Let $Z_1,Z_2,\dots,Z_n$ be a collection of independent $(\nu,b)$-subgamma random variables and let $\alpha_1,\alpha_2,\dots,\alpha_n \in \R$ be nonnegative.
  Then, defining $S_n = \sum_i \alpha_i Z_i$, for any $t > 0$ and any constant $c > 0$, for suitably-chosen constant $C_0$, it holds with probability at least $1 - C_0n^{-c}$ that
  \begin{equation} \label{eq:firstclaim}
    \left| S_n \right| \le
    C \sqrt{ \nu + b^2}\left( \sum_{i=1}^n \alpha_i^2 \right)^{1/2} \log n
  \end{equation}
  and
  \begin{equation} \label{eq:Sn:inprob}
    \left| S_n \right|
    = \Op{ \sqrt{\nu + b^2} \sqrt{ \sum_{i=1}^n \alpha_i^2 } } .
  \end{equation}
\end{lemma}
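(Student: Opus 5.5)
The plan is to show that $S_n$ is itself a sub-gamma random variable with explicitly computable parameters, and then read off both claims from the tail bound of Lemma~\ref{lem:sgbasic}. If all $\alpha_i = 0$ there is nothing to prove, so assume $\norm{\alpha}_2 := (\sum_i \alpha_i^2)^{1/2} > 0$. Write $\norm{\alpha}_\infty = \max_i \alpha_i$, and note that $\norm{\alpha}_\infty \le \norm{\alpha}_2$.

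\textbf{Step 1: sub-gamma parameters of $S_n$.} For a single term and $\alpha_i \ge 0$, the cumulant generating function satisfies $\psi_{\alpha_i Z_i}(t) = \psi_{Z_i}(\alpha_i t)$. By Definition~\ref{def:subgamma}, for $t < 1/(b\norm{\alpha}_\infty)$ we have $\alpha_i t < 1/b$, so
\[
  \psi_{\alpha_i Z_i}(t) \le \frac{\alpha_i^2 t^2 \nu}{2(1 - b\alpha_i t)} \le \frac{\alpha_i^2 t^2 \nu}{2(1 - b\norm{\alpha}_\infty t)} .
\]
The same bound holds for $\psi_{-\alpha_i Z_i}$, because $\alpha_i \ge 0$ and the definition also controls $\psi_{-Z_i}$; this is where nonnegativity of the $\alpha_i$ is used. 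By independence, cumulant generating functions add. Hence $S_n$ is sub-gamma with parameters $(\nu \norm{\alpha}_2^2,\, b\norm{\alpha}_\infty)$, and the same holds for $-S_n$.

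\textbf{Step 2: high-probability bound.} Apply Lemma~\ref{lem:sgbasic} to $S_n$, using both tails, so that
\[
  |S_n| \le \sqrt{2\nu t}\,\norm{\alpha}_2 + b\norm{\alpha}_\infty t
\]
with probability at least $1 - 2e^{-t}$. Take $t = c\log n$, which gives failure probability $2n^{-c}$, so $C_0 = 2$. Then use $\norm{\alpha}_\infty \le \norm{\alpha}_2$, $\sqrt{\log n} \le \log n$ for $n \ge 3$, and $\max(\sqrt{\nu}, b) \le \sqrt{\nu+b^2}$. Together these bound the right-hand side by $(\sqrt{2c} + c)\sqrt{\nu+b^2}\,\norm{\alpha}_2 \log n$, which is \eqref{eq:firstclaim} with $C = \sqrt{2c} + c$.

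\textbf{Step 3: the $\mathcal{O}_p$ statement.} Instead take $t$ to be a constant that does not depend on $n$. Given $\epsilon > 0$, set $t = \log(2/\epsilon)$. Then, uniformly in $n$,
\[
  \Pr\left[ |S_n| > (\sqrt{2t} + t)\sqrt{\nu+b^2}\,\norm{\alpha}_2 \right] \le \epsilon ,
\]
which is exactly \eqref{eq:Sn:inprob}.

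\textbf{Main obstacle.} There is no real difficulty here. The only point needing care is Step 1: the scale parameter of the sum is $b\norm{\alpha}_\infty$, not $b\norm{\alpha}_2$, and the range of admissible $t$ must be restricted accordingly. Once that is in place, the inequality $\norm{\alpha}_\infty \le \norm{\alpha}_2$ absorbs this term into the stated bound.
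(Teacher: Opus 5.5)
Your proposal is correct and follows the paper's own route. Like the paper, you show $S_n$ is sub-gamma with parameters $(\nu\sum_i\alpha_i^2,\, b\max_i\alpha_i)$, then apply Lemma~\ref{lem:sgbasic} with $t = c\log n$ for \eqref{eq:firstclaim} and with $t$ constant for \eqref{eq:Sn:inprob]}. Your bookkeeping is if anything cleaner: you keep the factor $2$ from the two tails, and you absorb $b\max_i\alpha_i\log n$ via $\max(\sqrt{\nu},b)\le\sqrt{\nu+b^2}$ rather than the paper's claim $\bar b\le C\bar\nu^{1/2}$, which would need $b\le C\sqrt{\nu}$.

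One small fix: in Step~1 restrict to $0\le t<1/(b\max_i\alpha_i)$. For $t<0$ the comparison $1-b\alpha_i t\ge 1-b\max_i\alpha_i\,t$ reverses. Only nonnegative $t$ is needed for the Chernoff-type bound in Lemma~\ref{lem:sgbasic}, and you already handle the lower tail through $-S_n$.
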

\begin{proof}
  By a basic property of subgamma random variables \citep[see][Chapter 2]{boucheron2013}, $\alpha_i Z_i$ is $(\nu_i, b_i)$-subgamma, where $\nu_i = \alpha_i^2 \nu$ and $b_i = \alpha_i b_i$, and the random sum $S_n = \sum_i \alpha_i Z_i$ is a subgamma random variable with parameters
  \begin{equation*} \begin{aligned}
      \nubar & = \sum_{i=1}^n \nu_i = \nu \sum_{i=1}^n \alpha_i^2 \\
      \bbar  & = \max_{i \in [n]} b_i
      = b \max_{i \in [n]} \alpha_i
      \le b \sqrt{ \sum_{i=1}^n \alpha_i^2 }.
    \end{aligned} \end{equation*}
  Thus, applying Lemma~\ref{lem:sgbasic}, for any $t > 0$,
  \begin{equation*}
    \Pr\left[ |S_n| > \sqrt{2\nubar t} + \bbar t \right] \le \exp\{ -t \}.
  \end{equation*}
  Setting $t$ to be any constant yields Equation~\eqref{eq:Sn:inprob}.
  Setting $t= C\log n$ for suitably large $C > 0$ and noting that $\bbar \le C \nubar^{1/2}$ for suitably-chosen constant $C > 0$, it follows that
  \begin{equation*}
    \Pr\left[ |S_n| > C( \nubar^{1/2} \log^{1/2} n + \bbar \log n \right]
    \le 2n^{-c}.
  \end{equation*}
  Observing that
  \begin{equation*}
    \nubar^{1/2} \log^{1/2} n + \bbar \log n
    \le \sqrt{ \nu + b^2 }\left( \sum_{i=1}^n \alpha_i^2 \right)^{1/2} \log n
  \end{equation*}
  establishes Equation~\eqref{eq:firstclaim}.
\end{proof}

\begin{lemma} \label{lem:sgmax}
  Let $Z_1,Z_2,\dots,Z_n$ be a collection of independent $(\nu,b)$-subgamma random variables and let $c > 0$ be a constant.
  Then it holds with probability at least $1-Cn^{-c}$ that
  \begin{equation*}
    \max_{i \in [n]} |Z_i| \le C\sqrt{ \nu + b^2 } \log n,
  \end{equation*}
\end{lemma}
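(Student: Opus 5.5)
A union bound over $i \in [n]$ combined with the scalar tail bound of Lemma~\ref{lem:sgbasic} proves the claim.

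First fix a target exponent $c > 0$. For each $i$, Lemma~\ref{lem:sgbasic} gives, for any $t > 0$,
\begin{equation*}
  \Pr\left[ |Z_i| > \sqrt{2\nu t} + bt \right] \le e^{-t}.
\end{equation*}
Take $t = (c+1)\log n$. Then each individual event has probability at most $n^{-(c+1)}$. A union bound over the $n$ indices shows that, with probability at least $1 - n \cdot n^{-(c+1)} = 1 - n^{-c}$,
\begin{equation*}
  \max_{i \in [n]} |Z_i| \le \sqrt{2\nu (c+1)\log n} + b(c+1)\log n .
\end{equation*}
Independence of the $Z_i$ is not needed for this step; the union bound alone suffices.

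The remaining step is to simplify the right-hand side into the stated form. For $n \ge 3$ we have $\log n \ge 1$, so $\sqrt{\log n} \le \log n$. Also $\sqrt{\nu} \le \sqrt{\nu+b^2}$ and $b \le \sqrt{\nu+b^2}$. Hence the right-hand side is at most
\begin{equation*}
  \left(\sqrt{2(c+1)} + (c+1)\right)\sqrt{\nu+b^2}\,\log n .
\end{equation*}
This is $C\sqrt{\nu+b^2}\log n$ with $C$ depending only on $c$. The finitely many small values of $n$ are absorbed by enlarging the constant in the probability bound $1 - Cn^{-c}$. For those $n$ the right side $1 - Cn^{-c}$ can be made nonpositive, so the claim holds trivially there.

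There is no real obstacle. The only point needing care is the bookkeeping of constants: the same letter $C$ appears both in the probability bound and in the deviation bound. Consistent with the paper's convention that $C$ may change from line to line, I would state explicitly that both constants depend only on $c$ and not on $n$, $\nu$, or $b$.
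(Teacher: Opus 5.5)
Your proposal is correct and matches the paper's proof: a union bound over $i \in [n]$ combined with the scalar tail bound of Lemma~\ref{lem:sgbasic} at $t \asymp \log n$, after which $\sqrt{\nu \log n}$ and $b \log n$ are absorbed into $C\sqrt{\nu+b^2}\log n$. Your version is slightly more careful than the paper's, which leaves these constants implicit: you choose $t = (c+1)\log n$ explicitly, and you handle the small values of $n$ where $\sqrt{\log n} \le \log n$ fails.
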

\begin{proof}
  For $t \ge 0$, applying a union bound followed by Lemma~\ref{lem:sgbasic},
  \begin{equation*}
    \Pr\left[ \max_i |Z_i| > \sqrt{2\nu t} + bt \right]
    \le \sum_{i=1}^n \Pr\left[ |Z_i| > \sqrt{2\nu t} + bt \right]
    \le n \exp\{ -t \}.
  \end{equation*}
  Taking $t = C \log n$ for $C>0$ chosen suitably large, it holds that with probability at least $1-n^{-c}$,
  \begin{equation*}
    \max_i |Z_i|
    \le \sqrt{2C \nu \log n} + Cb\log n
    \le C\sqrt{ \nu + b^2 } \log n,
  \end{equation*}
  as we set out to show.
\end{proof}

A similar result to the one below appeared in \cite{hayes2025}.
We restate it here with slightly adapted notation for the sake of completeness.

\begin{lemma} \label{lem:sglinear}
  Suppose that $\be \in \R^n$ is a vector of independent mean-zero random variables with 
  \begin{equation*}
    \max_{i \in [n]} \E{\varepsilon_i^2} \le B
  \end{equation*}
  for some $B > 0$ not depending on $n$.
  Let $H \in \R^{n \times n}$ be a (possibly random) matrix with $\be$ independent of $H$. Then
  \begin{equation*}
    \norm*{H \be} = \Op{ \sqrt{B \trace H^T H } }.
  \end{equation*}
  In particular, taking $H = I$, $\norm*{\be} = \Op{\sqrt{B n}}$.
\end{lemma}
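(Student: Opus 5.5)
The statement to prove is Lemma~\ref{lem:sglinear}: $\norm*{H\be} = \Op{\sqrt{B \trace H^\top H}}$ when $H$ is independent of $\be$. The plan is a second-moment computation done conditionally on $H$, followed by Markov's inequality. The key point is that the bound must be relative to the random quantity $\sqrt{B\trace H^\top H}$. So I will condition on $H$ and only at the end integrate over the law of $H$ to obtain an unconditional $\Op{\cdot}$ statement.

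\textbf{Step 1: conditional second moment.} Conditional on $H$, expand
\begin{equation*}
\bbE\brac*{\norm*{H\be}^2 \mid H} = \bbE\brac*{\be^\top H^\top H \be \mid H} = \sum_{i,j} (H^\top H)_{ij}\, \bbE\brac*{\varepsilon_i \varepsilon_j}.
\end{equation*}
Because $\be$ is independent of $H$ with independent mean-zero entries, the cross terms vanish. This leaves $\sum_i (H^\top H)_{ii}\bbE[\varepsilon_i^2] \le B \trace H^\top H$, using that the diagonal entries of $H^\top H$ are nonnegative.

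\textbf{Step 2: conditional Markov.} For any $M>0$, Markov's inequality applied conditionally gives
\begin{equation*}
\Pr\brac*{\norm*{H\be}^2 > M^2 B \trace H^\top H \mid H} \le M^{-2}
\end{equation*}
on the event $\trace H^\top H > 0$. On the complementary event $H=0$, so $H\be = 0$ and the event has probability zero.

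\textbf{Step 3: remove the conditioning.} Take expectations over $H$ to get the unconditional bound $\Pr\brac*{\norm*{H\be} > M\sqrt{B\trace H^\top H}} \le M^{-2}$. This holds uniformly in $n$ and tends to $0$ as $M\to\infty$, which is exactly the $\Op{\cdot}$ statement.

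\textbf{Step 4: the special case.} Setting $H=\mI$ gives $\trace H^\top H = n$, hence $\norm*{\be} = \Op{\sqrt{Bn}}$.

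\textbf{Main obstacle.} The only delicate point is measure-theoretic: justifying that $\be$ retains its mean-zero, independent, variance-bounded structure after conditioning on $H$. Independence of $\be$ and $H$ is what licenses this, so I would invoke it explicitly in Step 1.
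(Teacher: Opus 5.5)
Your proof is correct and uses the same second-moment-plus-Markov argument as the paper. The only difference is how the randomness of $H$ is handled. You condition on $H$ and get the uniform bound $\Pr[\|H\be\| > M\sqrt{B\trace H^\top H}] \le M^{-2}$ directly. The paper instead writes the unconditional bound $\bbE\|H\be\|^2 \le B\trace H^\top H$, implicitly treating $H$ as fixed, and reaches $\Op{\cdot}$ through an ``$o_p(r_n)$ for every $r_n = \omega(B\trace H^\top H)$'' step. Your conditioning is therefore the cleaner and more rigorous treatment when $H$ is genuinely random.
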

\begin{proof}
  We observe that
  \begin{equation*}
    \E{\norm*{H \be}^2}
    = \E{\be^T H^T H \be}
    \le B \trace H^T H.
  \end{equation*}
  Let $\delta > 0$ be a constant.
  Applying Markov's inequality, for any $t > 0$,
  \begin{equation*}
    \Pr\left[ \frac{ \norm*{ H \be }^2 }{ t } > \delta \right]
    \le \frac{\E{\norm*{ H \be }^2}}{ t \delta }
    \le \frac{B \trace H^T H }{ t \delta }.
  \end{equation*}
  Let $r_n$ be any function of $n$ growing such that $r_n = \omega( B \trace H^TH )$.
  Then taking $t = r_n$,
  \begin{equation*}
    \lim_{n \rightarrow \infty}
    \Pr\left[ \frac{ \norm*{ H \be }^2 }{ r_n } > \delta \right]
    = 0.
  \end{equation*}
  Thus, $\norm*{ H \be }^2 = \op{ r_n }$ for any $r_n = \omega( B \trace H^T H )$, and it follows that
  \begin{equation*}
    \norm*{ H \be }^2
    = \Op{ B \trace H^T H }.
  \end{equation*}
  Taking square roots completes the proof.
\end{proof}

\begin{lemma} \label{lem:epstilde:ctl}
  With notation as above, for $\beta \in (-1,1)$ fixed and $\be$ independent of $\Gtilde$, define
  \begin{equation*}
    \epstilde = \paren*{I - \beta \Gtilde}^{-1} \be
    \in \R^n.
  \end{equation*}
  Then, if the entries of $\be$ have bounded second moments as in Lemma~\ref{lem:sgmax},
  \begin{equation*}
    \| \epstilde \|
    =
    \Op{ \sqrt{ \frac{ Bn }{ (1-\beta)^2 } } } .
  \end{equation*}
\end{lemma}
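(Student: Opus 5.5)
The plan is to reduce the statement to Lemma~\ref{lem:sglinear} with $H = (\mI - \beta \Gtilde)^{-1}$. This requires bounding $\trace H^\top H = \norm{H}_F^2 \le n \norm{H}^2$, so the whole task is a spectral-norm bound $\norm{(\mI-\beta\Gtilde)^{-1}} \le 1/(1-|\beta|)$, which matches the stated rate up to the sign convention on $\beta$. Since $\be$ is independent of $\Gtilde$, Lemma~\ref{lem:sglinear} applies directly once this deterministic (conditional on $\Xpop$) bound is in hand. It then gives $\norm{\epstilde} = \Op{\sqrt{B \trace H^\top H}} = \Op{\sqrt{Bn}\,\norm{H}}$.

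The main obstacle is that $\Gtilde = \Dtilde^{-1}\Apop$ is row-stochastic but not symmetric. Lemma~\ref{lem:IbgyG:invertible} therefore controls eigenvalues, not singular values, and an eigenvalue bound alone does not bound $\norm{H}$. I would handle this by a similarity transform. Write $\Gtilde = \Dtilde^{-1/2} S \Dtilde^{1/2}$ with $S = \Dtilde^{-1/2}\Apop\Dtilde^{-1/2}$, which is symmetric and has spectrum in $[-1,1]$; this makes $(\mI-\beta\Gtilde)^{-1} = \Dtilde^{-1/2}(\mI-\beta S)^{-1}\Dtilde^{1/2}$ with $\norm{(\mI-\beta S)^{-1}} \le 1/(1-|\beta|)$. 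That gives only $\norm{H} \le \sqrt{\max_i\dtilde_i/\min_i\dtilde_i}\,/(1-|\beta|)$, which is $\mathcal{O}(1/(1-|\beta|))$ only under bounded degree ratios.

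To avoid that extra factor, I would instead use the Neumann series $H = \sum_{k\ge0}\beta^k\Gtilde^k$ and bound norms directly. The key fact is that $\Gtilde$ is row-stochastic, so $\norm{\Gtilde}_\infty = 1$. Under Assumption~\ref{assum:latent-oracle}, $\Gtilde$ also has uniformly bounded column sums, $\norm{\Gtilde}_1 \le C$. Then $\norm{\Gtilde^k} \le \sqrt{\norm{\Gtilde^k}_1\norm{\Gtilde^k}_\infty}$, but a naive submultiplicative bound $C^{k/2}$ could diverge.

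The cleaner route is to bound the Frobenius norm row by row. Each row of $H$ is $\sum_k \beta^k (\Gtilde^k)_{i\cdot}$, a signed combination of probability vectors, so its $\ell_1$ norm is at most $\sum_k |\beta|^k = 1/(1-|\beta|)$. Since the $\ell_2$ norm is at most the $\ell_1$ norm, each row of $H$ has $\ell_2$ norm at most $1/(1-|\beta|)$. Summing over the $n$ rows gives $\trace H^\top H \le n/(1-|\beta|)^2$ with no column-sum or degree-ratio assumption. For $\beta \ge 0$ this is exactly $n/(1-\beta)^2$; for negative $\beta$ the bound is in terms of $1-|\beta|$, which I would note the statement implicitly intends.

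Plugging this into Lemma~\ref{lem:sglinear}, conditionally on $\Xpop$ and hence on $\Gtilde$, yields $\norm{\epstilde} = \Op{\sqrt{Bn/(1-\beta)^2}}$. Because the conditional bound is uniform in $\Xpop$, it remains valid unconditionally.
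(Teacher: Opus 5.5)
Your proof is correct and shares the paper's reduction: apply Lemma~\ref{lem:sglinear} with $H=(\mI-\beta\Gtilde)^{-1}$ and bound $\trace H^\top H$ by a constant multiple of $n$. The difference is in how that trace bound is justified.

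The paper simply cites Lemma~\ref{lem:IbgyG:invertible}, i.e.\ the location of the eigenvalues of $\mI-\beta\Gtilde$. Since $\Gtilde$ is not symmetric, this controls only $|\lambda_i(H)|$. Schur's inequality $\sum_i|\lambda_i(H)|^2\le\|H\|_F^2$ points the wrong way, so the eigenvalue bound alone does not yield $\trace H^\top H\le n/(1-\beta)^2$ --- exactly the obstruction you flagged.

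Your row-wise Neumann-series argument fills this in. Row $i$ of $H$ is $\sum_k\beta^k(\Gtilde^k)_{i\cdot}$, a signed combination of probability vectors, so its $\ell_1$ norm, and hence its $\ell_2$ norm, is at most $1/(1-|\beta|)$. This gives a rigorous proof using only entrywise nonnegativity and row-(sub)stochasticity of $\Gtilde$, with no degree-ratio or column-sum condition. The paper's one-line route buys brevity and would be valid if $\Gtilde$ were normal.

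Targeting the Frobenius norm directly, as you ultimately do, is essential rather than cosmetic. Your opening reduction to $\|H\|\le 1/(1-|\beta|)$ asks for more than is true in general. For example, $\Gtilde=\mathbf{1}p^\top$ (rank-one $\Apop$) with the probability vector $p$ concentrated on one coordinate gives $H=\mI+\frac{\beta}{1-\beta}\mathbf{1}p^\top$. Its spectral norm is of order $\sqrt{n}$, while $\|H\|_F^2=O(n)$.

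Finally, replacing $1-\beta$ by $1-|\beta|$ is harmless. Since $\beta$ is fixed in $n$, the two rates differ only by a constant factor inside $\mathcal{O}_p$.
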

\begin{proof}
  Observe that by Lemma~\ref{lem:IbgyG:invertible},
  \begin{equation*}
    \trace \paren*{\paren*{I - \beta \Gtilde}^{-1}}^\top \paren*{I - \beta \Gtilde}^{-1}
    \le
    \frac{ n }{ (1-\beta)^2 }.
  \end{equation*}
  Applying Lemma~\ref{lem:sglinear} with $H = \paren*{I - \beta \Gtilde}^{-1}$, it follows that
  \begin{equation*}
    \| \epstilde \|
    =
    \Op{ \sqrt{ \frac{ Bn }{ (1-\beta)^2 } } },
  \end{equation*}
  as we set out to show.
\end{proof}

Our final result in this section concerns concentration of the degrees $d_1,d_2,\dots,d_n$ of the observed network $\A$ about their conditional expectations, defined in Equation~\eqref{eq:def:dtilde}.

\begin{lemma} \label{lem:degconc}
  Suppose that $\A$ follows a sub-gamma model as in Definition~\ref{def:subgamma-network}.  Then
  \begin{equation*}
    \max_{i \in [n]} \left| d_i - \dtilde_i \right|
    = \Op{ \sqrt{\nu + b^2} \sqrt{n} \log n } .
  \end{equation*}
\end{lemma}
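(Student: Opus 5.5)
The plan is to write each degree deviation as a weighted sum of independent sub-gamma variables, apply the high-probability bound \eqref{eq:firstclaim} of Lemma~\ref{lem:sgsum} to each vertex, and finish with a union bound over $i \in [n]$.

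Fix a vertex $i \in [n]$ and condition on $\Xpop$ throughout. By definition, $d_i - \dtilde_i = \sum_{j} (\A_{ij} - \Apop_{ij})$. We handle the diagonal term (if it is included in one degree convention but not the other) separately, since it contributes at most a single entry $\Apop_{ii} = \|\Xpop_i\|^2$ or a single noise entry. Under Definition~\ref{def:subgamma-network}, for fixed $i$ the entries $\{\A_{ij} - \Apop_{ij}\}_{j \neq i}$ are each upper-triangular entries (or their symmetric copies) of $\A - \Apop$. These entries are distinct pairs, so they are conditionally independent, mean zero, and $(\nu,b)$-sub-gamma.

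Apply Lemma~\ref{lem:sgsum} with $Z_j = \A_{ij} - \Apop_{ij}$ and weights $\alpha_j = 1$, so that $\sum_j \alpha_j^2 \le n$. With the constant $c$ chosen as, say, $c = 2$, this gives
\begin{equation*}
  \left| \sum_{j \neq i} (\A_{ij} - \Apop_{ij}) \right| \le C \sqrt{\nu + b^2} \sqrt{n} \log n
\end{equation*}
with conditional probability at least $1 - C_0 n^{-2}$. The constants do not depend on $\Xpop$ or on $i$, so the bound also holds unconditionally.

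A union bound over the $n$ vertices then shows that the same bound holds simultaneously for all $i$ with probability at least $1 - C_0 n^{-1} \to 1$. This yields the $\Op{\sqrt{\nu+b^2}\sqrt{n}\log n}$ claim.

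The only step needing care is the diagonal term. If $d_i$ excludes $j=i$ while $\dtilde_i = \sum_j \Xpop_i^\top \Xpop_j$ includes it, there is an extra term $\|\Xpop_i\|^2 \le \|\Xpop\|_{\tti}^2$. By Assumption~\ref{assum:latentpositions} this is $\Op{\kappa^2(\nu+b^2)\log^2 n}$, which I would show is dominated by $\sqrt{\nu+b^2}\sqrt n \log n$ under Assumption~\ref{assum:Apop:spectrum}; alternatively, one can adopt the convention that both degrees are summed over $j \ne i$, in which case the term vanishes. I expect this bookkeeping to be the main, though minor, obstacle. The concentration step itself is routine.
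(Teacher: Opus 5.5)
Your argument is the paper's: for fixed $i$, write $d_i - \dtilde_i$ as a sum over $j \neq i$ of conditionally independent $(\nu,b)$-sub-gamma entries, apply Lemma~\ref{lem:sgsum} with unit weights, and take a union bound over $i \in [n]$; the paper handles the diagonal mismatch you flag the same way as your second option, by silently reading $\dtilde_i$ as $\bbE[d_i \mid \Xpop] = \sum_{j \neq i} \Apop_{ij}$. Drop your first option. It uses assumptions the lemma does not have, and Assumption~\ref{assum:Apop:spectrum} only bounds the signal from below, so it cannot force $\kappa^2(\nu+b^2)\log^2 n = O(\sqrt{\nu+b^2}\sqrt{n}\log n)$ (e.g.\ $\kappa = O(1)$, $\nu+b^2 = n^{1.2}$, $\spop_d$ a large enough power of $n$); the $j \neq i$ convention is therefore the resolution you actually need.
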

\begin{proof}
  Fix $i \in [n]$. We observe that
  \begin{equation*}
    d_i - \dtilde_i
    = \sum_{j \in [n]\setminus\{i\} } (A_{ij} - \rho_n X_i^T X_j)
  \end{equation*}
  is, conditional on $\Xpop$, a sum of $(\nu,b)$-subgamma random variables.
  Applying Lemma~\ref{lem:sgsum} with suitably chosen constants, it holds with probability at least $1-2n^{-3}$ that
  \begin{equation}
    \left| d_i - \dtilde_i \right|
    \le C \sqrt{\nu + b^2} \sqrt{n} \log n.
  \end{equation}
  A union bound over $i \in [n]$ completes the proof.
\end{proof}

\begin{lemma} \label{lem:degrecip:conc}
  Suppose that $\A$ follows a sub-gamma model as in Definition~\ref{def:subgamma-network} and that the model parameters grow in such a way that Equation~\eqref{eq:deglb:2} holds.
  Then with probability at least $1-O(n^{-2})$, it holds uniformly over all $i \in [n]$ that
  \begin{equation*}
    \left| \frac{1}{d_i} - \frac{1}{\dtilde_i} \right|
    \le
    \frac{ C \sqrt{\nu + b^2} \sqrt{n} \log n }{ \dtilde_i^2 }.
  \end{equation*}
\end{lemma}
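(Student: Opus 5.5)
The plan is to combine Lemma~\ref{lem:degconc} with an elementary algebraic identity for the difference of reciprocals. For each $i$,
\begin{equation*}
  \frac{1}{d_i} - \frac{1}{\dtilde_i} = \frac{\dtilde_i - d_i}{d_i \dtilde_i},
\end{equation*}
so it suffices to bound the numerator uniformly and to show the denominator satisfies $d_i \ge \dtilde_i/2$ uniformly in $i$ on a high-probability event.

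For the numerator, I will not quote Lemma~\ref{lem:degconc} in its $\Op{\cdot}$ form. Instead I will use its proof, which gives an explicit probability. Conditional on $\Xpop$, $d_i - \dtilde_i$ is a sum of at most $n$ independent $(\nu,b)$-subgamma variables with unit weights. Lemma~\ref{lem:sgsum}, applied with $c = 3$, then gives
\begin{equation*}
  \left| d_i - \dtilde_i \right| \le C\sqrt{\nu+b^2}\sqrt{n}\log n
  \quad \text{with probability at least } 1 - C_0 n^{-3}.
\end{equation*}
A union bound over $i \in [n]$ produces an event $\mathcal{E}$ of probability at least $1 - O(n^{-2})$ on which this holds for all $i$ simultaneously. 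Because the bound holds conditionally on $\Xpop$ with constants free of $\Xpop$, it also holds unconditionally.

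For the denominator, Equation~\eqref{eq:deglb:2} says $\min_i \dtilde_i^2 = \omega(\kappa(\nu+b^2)n^{3/2}\log^2 n)$. Since $\kappa \ge 1$, taking square roots gives
\begin{equation*}
  \min_i \dtilde_i = \omega\!\left( (\nu+b^2)^{1/2} n^{3/4}\log n \right),
\end{equation*}
which dominates $C\sqrt{\nu+b^2}\sqrt{n}\log n$. So for $n$ large enough, on $\mathcal{E}$ we have $|d_i - \dtilde_i| \le \dtilde_i/2$ for every $i$. Hence $d_i \ge \dtilde_i/2 > 0$, which in particular means no node is isolated and $1/d_i$ is well defined. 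Substituting into the identity gives
\begin{equation*}
  \left|\frac{1}{d_i} - \frac{1}{\dtilde_i}\right| \le \frac{2C\sqrt{\nu+b^2}\sqrt{n}\log n}{\dtilde_i^2},
\end{equation*}
and the factor $2$ is absorbed into $C$.

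The main subtlety is the meaning of the $\omega(\cdot)$ in Equation~\eqref{eq:deglb:2}, since the $\dtilde_i$ are random through $\Xpop$. If it is read as a deterministic or almost-sure rate, the argument above goes through directly. If it is only an in-probability statement, the event $\{\min_i \dtilde_i \ge 2C\sqrt{\nu+b^2}\sqrt{n}\log n\}$ would need its own probability bound, and the stated $1 - O(n^{-2})$ rate would not follow. I would therefore interpret the assumption deterministically, or conditionally on $\Xpop$ for large $n$, which is consistent with how the paper conditions on $\Xpop$ throughout.
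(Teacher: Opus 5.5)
Your proposal is correct and follows essentially the same route as the paper. Both arguments write $1/d_i - 1/\dtilde_i$ as $(\dtilde_i - d_i)/(d_i\dtilde_i)$, bound the numerator uniformly via Lemma~\ref{lem:degconc} (i.e., Lemma~\ref{lem:sgsum} plus a union bound), and use Equation~\eqref{eq:deglb:2} to get $d_i \ge \dtilde_i - C\sqrt{\nu+b^2}\sqrt{n}\log n \ge \dtilde_i/2$ in the denominator. Your explicit $1-O(n^{-2})$ bookkeeping and your deterministic reading of the $\omega(\cdot)$ in Equation~\eqref{eq:deglb:2} only make precise what the paper's two-line proof leaves implicit.
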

\begin{proof}
  Defining $\gamma_n = C \sqrt{\nu + b^2} \sqrt{n} \log n$, using the fact that $a^{-1} - b^{-1} = b^{-1} (a - b) a^{-1}$ and applying Lemma~\ref{lem:degconc} twice, it holds with high probability that
  \begin{equation*}
    \left| \frac{1}{d_i} - \frac{1}{\dtilde_i} \right|
    \le \frac{ C \gamma_n  }{ \dtilde_i (\dtilde_i - \gamma_n) }
    \le \frac{ C \sqrt{\nu + b^2} \sqrt{n} \log n }{ \dtilde_i^2 },
  \end{equation*}
  where the last inequality follows from our growth assumption in Equation~\eqref{eq:deglb:2}.
\end{proof}

\section{Estimating the Latent Positions} \label{apx:XhatX}

Here we collect results relating the latent position estimates $\Xhat$ to the true latent positions $\Xpop$.
Many of the results in this section can be found elsewhere in the spectral methods literature \citep[see, for example,][]{lyzinski2017,levin2019,levin2022a,hayes2025}.
We include these results, with notation adjusted to the current setting, for the sake of convenience.

\begin{lemma} \label{lem:E:spectral}
	Suppose that $\A$ follows a sub-gamma model as in Definition~\ref{def:subgamma-network}.
	Then with high probability,
	\begin{equation*}
		\left\| \A - \Apop \right\|
		\le C \sqrt{ \nu + b^2 } \sqrt{n} \log n .
	\end{equation*}
\end{lemma}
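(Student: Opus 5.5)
We prove the bound $\norm{\A-\Apop} \le C\sqrt{\nu+b^2}\sqrt{n}\log n$ by combining a symmetric-matrix concentration inequality with the scalar sub-gamma tail bounds of Lemmas~\ref{lem:sgbasic} and~\ref{lem:sgmax}. Throughout we condition on $\Xpop$; the bound will hold with the stated probability uniformly in $\Xpop$, so it also holds unconditionally. Write $\mE = \A - \Apop$. Conditional on $\Xpop$, $\mE$ is symmetric, its upper-triangular entries are independent, mean zero and $(\nu,b)$-sub-gamma, and its diagonal is either zero or sub-gamma as well.

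\emph{Step 1: decomposition.} Decompose
\[
\mE = \sum_{i \le j} \mE_{ij}\, \mathbf{S}^{(ij)},
\]
where $\mathbf{S}^{(ij)} = \mathbf{e}_i \mathbf{e}_j^\top + \mathbf{e}_j \mathbf{e}_i^\top$ for $i<j$ and $\mathbf{S}^{(ii)} = \mathbf{e}_i\mathbf{e}_i^\top$. This writes $\mE$ as a sum of independent, mean-zero, self-adjoint random matrices.

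\emph{Step 2: truncation.} Set $L = C\sqrt{\nu+b^2}\log n$. By Lemma~\ref{lem:sgmax}, applied to the $O(n^2)$ upper-triangular entries (only the constant changes, since $\log n^2 = 2\log n$), the event $\mathcal{G} = \{\max_{ij}|\mE_{ij}| \le L\}$ has probability at least $1-Cn^{-c}$. Split each entry as $\mE_{ij} = \mE_{ij}\mathbb{1}\{|\mE_{ij}|\le L\} + \mE_{ij}\mathbb{1}\{|\mE_{ij}|> L\}$. On $\mathcal{G}$ the second piece vanishes. Recentering the truncated entries changes each mean by at most $\mathbb{E}\bigl[|\mE_{ij}|\mathbb{1}\{|\mE_{ij}|>L\}\bigr]$. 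By the sub-gamma tail this is at most $n^{-c'}\sqrt{\nu+b^2}$, so the recentering shifts the matrix by at most $n \cdot n^{-c'}\sqrt{\nu+b^2}$ in Frobenius norm. This is negligible once $c'$ is taken large.

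\emph{Step 3: matrix Bernstein.} Apply matrix Bernstein (Tropp) to the recentered truncated sum. Each summand has spectral norm at most $2L$. The variance proxy satisfies
\[
\Bigl\|\sum \mathbb{E}\bigl[\mE_{ij}^2\bigr]\bigl(\mathbf{S}^{(ij)}\bigr)^2\Bigr\| \le \max_i \sum_j \mathbb{E}\bigl[\mE_{ij}^2\bigr] \le C n \nu,
\]
because each row-sum matrix is diagonal and $\mathbb{E}[\mE_{ij}^2]\le 2\nu + 4b^2$ for sub-gamma variables. Taking $t = C(\sqrt{(\nu+b^2)n\log n} + L\log n)$ in Bernstein gives
\[
\norm{\mE} \le C\bigl(\sqrt{(\nu+b^2) n \log n} + \sqrt{\nu+b^2}\log^2 n\bigr)
\]
with probability at least $1-2n\,e^{-C\log n}$. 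Both terms are at most $C\sqrt{\nu+b^2}\sqrt{n}\log n$ for large $n$. A union bound with $\mathcal{G}$ finishes the proof.

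\emph{Main obstacle.} The delicate part is the truncation bookkeeping in Step 2: matrix Bernstein needs bounded summands, and sub-gamma entries are unbounded. If that step becomes awkward, the alternative is to use the matrix Bernstein variant under a sub-exponential moment condition, $\mathbb{E}[\mE_{ij}^k] \le k!\, b^{k-2}\nu/2$. This condition follows directly from the sub-gamma MGF bound, so the subexponential matrix Bernstein inequality applies with parameters $(Cn\nu, Cb)$ and gives the same conclusion without any truncation.
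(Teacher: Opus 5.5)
Your proof is correct, but it takes a genuinely different route from the paper. The paper gives no argument of its own: it imports the bound as Lemma~5 of \cite{levin2022a}, specialized to a single network ($N=1$).

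You instead prove the bound from scratch, in three steps:
\begin{enumerate}
\item Truncate at $L \asymp \sqrt{\nu+b^2}\log n$, justified by Lemma~\ref{lem:sgmax} applied to the $O(n^2)$ upper-triangular entries.
\item Pay a recentering cost of order $n^{1-c'}\sqrt{\nu+b^2}$, via Cauchy--Schwarz on $\bbE\bigl[|\mE_{ij}|\,\mathbf{1}\{|\mE_{ij}|>L\}\bigr]$.
\item Apply matrix Bernstein with scale $R \asymp L$ and variance proxy at most $Cn\nu$.
\end{enumerate}
Your two bounds on the variance proxy are slightly inconsistent, but the tighter one is right: letting $t \to 0$ in the sub-gamma MGF bound shows $\bbE[\mE_{ij}^2] \le \nu$.

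Your route has two advantages. It is self-contained, using only the scalar sub-gamma tools of Appendix~\ref{apx:technical} plus Tropp's inequality. It also gives a slightly sharper conclusion, $\|\mE\| \le C\bigl(\sqrt{(\nu+b^2)\,n\log n} + \sqrt{\nu+b^2}\log^2 n\bigr)$, which improves the leading term of the stated rate by a factor of $\sqrt{\log n}$. The citation, by contrast, buys brevity and the multi-network generality of its source.

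One correction concerns your fallback remark. The Bernstein moment condition $\bbE[\mE_{ij}^k] \le k!\, b^{k-2}\nu/2$ does not follow from the sub-gamma MGF bound with the same parameters. For example, a $\mathcal{N}(0,\nu)$ variable is $(\nu,0)$-sub-gamma, yet $\bbE Z^4 = 3\nu^2 > 0$. What is true (e.g.\ via Theorem~2.3 of \cite{boucheron2013}) is the moment condition with variance factor of order $\nu+b^2$ and scale of order $\sqrt{\nu}+b$. So the subexponential matrix Bernstein route runs with parameters $(Cn(\nu+b^2),\, C(\sqrt{\nu}+b))$ rather than $(Cn\nu,\, Cb)$. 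That still yields the stated bound, and your main truncation argument does not rely on this remark anyway.
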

\begin{proof}
	This result appears as Lemma 5 in \cite{levin2022a}, setting $N=1$ in the notation of that work.
\end{proof}

\begin{lemma} \label{lem:invsqrteigvals}
	Suppose that $\A$ follows a sub-gamma model as in Definition~\ref{def:subgamma-network} and that Assumption~\ref{assum:Apop:spectrum} holds.
	There exists a constant $C>0$ such that with high probability,
	\begin{equation*}
		\norm*{\Shat^{-1/2}} \le C \spop_d^{-1/2} \quad \text{and} \quad
		\norm*{\Shat^{1/2}}  \le C \spop_1^{1/2} ,
	\end{equation*}
	where $\Shat$ is as in Definition~\ref{def:ASE}.
\end{lemma}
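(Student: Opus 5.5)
The plan is to reduce both claims to Weyl's inequality, using the spectral-norm bound of Lemma~\ref{lem:E:spectral}. The $\Shat$ of Definition~\ref{def:ASE} is the diagonal matrix of the top-$d$ singular values of $\A$. Since $\A$ is symmetric, these are the $d$ largest absolute eigenvalues of $\A$. Likewise, the nonzero singular values of the positive semi-definite $\Apop$ are $\spop_1 \ge \dots \ge \spop_d$. Weyl's inequality for singular values gives, for every $k \in [d]$,
\begin{equation*}
	\bigl| \sigma_k(\A) - \spop_k \bigr| \le \norm*{\A - \Apop}.
\end{equation*}
By Lemma~\ref{lem:E:spectral}, with high probability $\norm*{\A - \Apop} \le C \sqrt{\nu + b^2}\sqrt{n}\log n$.

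For the lower bound, apply Weyl at $k = d$ to get $\sigma_d(\A) \ge \spop_d - C\sqrt{\nu+b^2}\sqrt{n}\log n$. Equation~\eqref{eq:assum:spectralConc} of Assumption~\ref{assum:Apop:spectrum} says this perturbation is $o_p(\spop_d)$. Hence, for all $n$ large enough and with high probability, $\sigma_d(\A) \ge \spop_d/2$. Since $\Shat$ is diagonal, this gives
\begin{equation*}
	\norm*{\Shat^{-1/2}} = \sigma_d(\A)^{-1/2} \le \sqrt{2}\, \spop_d^{-1/2}.
\end{equation*}

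For the upper bound, apply Weyl at $k = 1$ to get $\sigma_1(\A) \le \spop_1 + C\sqrt{\nu+b^2}\sqrt{n}\log n$. The perturbation is $o_p(\spop_d)$, and $\spop_d \le \spop_1$, so $\sigma_1(\A) \le 2\spop_1$ with high probability. Therefore
\begin{equation*}
	\norm*{\Shat^{1/2}} = \sigma_1(\A)^{1/2} \le \sqrt{2}\, \spop_1^{1/2}.
\end{equation*}

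The only delicate point is reconciling ``$\omega_p$'' in Equation~\eqref{eq:assum:spectralConc} with a high-probability statement. I would handle it by intersecting the high-probability event of Lemma~\ref{lem:E:spectral} with the event that the ratio $\sqrt{\nu+b^2}\sqrt{n}\log n / \spop_d$ falls below $1/(2C)$. The latter event has probability tending to one by Assumption~\ref{assum:Apop:spectrum}. The intersection then delivers both bounds with a universal constant, for example $C = \sqrt{2}$.
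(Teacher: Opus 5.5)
Your argument is correct: Weyl's inequality for singular values, combined with Lemma~\ref{lem:E:spectral} and the growth condition~\eqref{eq:assum:spectralConc}, gives $\sigma_d(\A) \ge \spop_d/2$ and $\sigma_1(\A) \le 2\spop_1$. Your reading of the $\omega_p$ condition, which yields the bounds with probability tending to one, is also the right one. The paper gives no argument of its own here and defers to Equations (28) and (32) in the proof of Lemma~4 of \cite{levin2022a}, which presumably carry out this same Weyl-type perturbation step; your version is a self-contained rendering that also makes explicit that only~\eqref{eq:assum:spectralConc} is needed.
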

\begin{proof}
	Both of these facts are shown in the course of proving Lemma 4 of \cite{levin2022a}.
	In particular, see Equations (28) and (32) in that work.
\end{proof}

The following two results are standard in the spectral embeddings literature \citep[see, e.g.,][]{lyzinski2017} once we include our assumption that $d$ is order a constant.
For the first, see Lemma 27 in \cite{hayes2025} or Proposisiton 19 in \cite{levin2022a}.
For the second, see Lemma 38 in \cite{hayes2025} or Proposition 20 in \cite{levin2022a}.

\begin{lemma} \label{lem:H-Q}
	Suppose that $\A$ follows a sub-gamma model as in Definition~\ref{def:subgamma-network} and that Assumption~\ref{assum:Apop:spectrum} holds.
	Then there exists a sequence of orthogonal matrices $\Q \in \R^{d \times d}$ such that
	\begin{equation*}
		\left\| \Upop^\top \Uhat - \Q \right\|_F
		\le \frac{ C (\nu+b^2) n \log^2 n }{ \spop_d^2 } .
	\end{equation*}
\end{lemma}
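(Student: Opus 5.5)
We will prove this with the standard Davis--Kahan/Procrustes argument. Write the eigendecomposition $\Apop = \Upop \Spop \Upop^\top$ and take the singular value decomposition $\Upop^\top \Uhat = \mathbf{W}_1 \boldsymbol{\Sigma} \mathbf{W}_2^\top$, whose singular values are the cosines of the principal angles $\cos\theta_1,\dots,\cos\theta_d$ between the column spaces of $\Uhat$ and $\Upop$. We then set $\Q = \mathbf{W}_1 \mathbf{W}_2^\top$, which is orthogonal. With this choice,
\begin{equation*}
	\left\| \Upop^\top \Uhat - \Q \right\|_F
	= \left\| \mI - \boldsymbol{\Sigma} \right\|_F
	= \Big( \sum_{k=1}^d (1-\cos\theta_k)^2 \Big)^{1/2}
	\le \sum_{k=1}^d \sin^2\theta_k
	= \left\| \sin \Theta(\Uhat,\Upop) \right\|_F^2 ,
\end{equation*}
where the inequality uses $1-\cos\theta \le 1 - \cos^2\theta = \sin^2\theta$. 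So the Frobenius error is controlled by the \emph{square} of the $\sin\Theta$ distance. This is why the target bound is of order $\|\A-\Apop\|^2/\spop_d^2$ rather than first order.

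Next we bound $\|\sin\Theta\|_F$ with the Davis--Kahan theorem in the Yu--Wang--Samworth form. Because $\Apop$ has rank $d$ and $\spop_{d+1}=0$, the relevant eigengap is $\spop_d$, and
\begin{equation*}
	\left\| \sin\Theta(\Uhat,\Upop) \right\|_F \le \frac{ 2 \sqrt{d}\, \| \A - \Apop \| }{ \spop_d } .
\end{equation*}
Here $\Uhat$ consists of the top-$d$ singular vectors of $\A$. Since $\A$ is symmetric but need not be positive semi-definite, we first have to check that these coincide with the eigenvectors for the $d$ largest eigenvalues. By Weyl's inequality, every eigenvalue of $\A$ lies within $\|\A-\Apop\|$ of one of $\Apop$. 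Hence the top $d$ eigenvalues of $\A$ are at least $\spop_d - \|\A-\Apop\|$, and all other eigenvalues have absolute value at most $\|\A-\Apop\|$. Under Assumption~\ref{assum:Apop:spectrum}, Equation~\eqref{eq:assum:spectralConc}, together with Lemma~\ref{lem:E:spectral}, we have $\|\A-\Apop\| = o(\spop_d)$ with high probability, so the two sets of vectors coincide on this event.

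Combining the two displays with Lemma~\ref{lem:E:spectral} gives, with high probability,
\begin{equation*}
	\left\| \Upop^\top \Uhat - \Q \right\|_F \le \frac{4 d \,\| \A-\Apop \|^2}{\spop_d^2} \le \frac{ C (\nu+b^2) n \log^2 n }{ \spop_d^2 } ,
\end{equation*}
where $d = \Op{1}$ is absorbed into the constant. Once the squared $\sin\Theta$ relation is in hand, the steps are routine. The main care point is identifying the singular vectors with the eigenvectors, so that Davis--Kahan applies to the correct eigenspace; this is exactly where Equation~\eqref{eq:assum:spectralConc} is used. Alternatively, we could simply cite Lemma~27 of \cite{hayes2025} or Proposition~19 of \cite{levin2022a}, which carry out this argument.
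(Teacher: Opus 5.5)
Your proposal is correct. The Procrustes choice $\Q = \mathbf{W}_1\mathbf{W}_2^\top$, the bound $\|\Upop^\top\Uhat - \Q\|_F \le \|\sin\Theta\|_F^2$, Davis--Kahan with eigengap $\spop_d$, the Weyl check that the top-$d$ singular vectors of $\A$ are its top-$d$ eigenvectors, and Lemma~\ref{lem:E:spectral} together give the stated bound. The paper gives no proof of its own and instead cites Lemma~27 of \cite{hayes2025} and Proposition~19 of \cite{levin2022a}, which follow essentially this same route.
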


\begin{lemma} \label{lem:Qswap}
	Under the same assumptions as Lemma~\ref{lem:H-Q},
	\begin{equation*}
		\left\| \Uhat - \Upop \Upop^\top \Uhat \right\|_F
		\le \frac{ C \sqrt{\nu + b^2} \sqrt{n} \log n }{ \spop_d } .
	\end{equation*}
	Furthermore, with $\Q$ the matrix guaranteed by Lemma~\ref{lem:H-Q},
	\begin{equation*}
		\left\| \Q \Shat - \Spop \Q \right\|_F
		\le \frac{ C \spop_1 \left( \nu + b^2 \right) n \log^2 n }
		{ \spop_d^2 }
		+ C \sqrt{\nu + b^2} \log n ,
	\end{equation*}
	\begin{equation*}
		\left\| \Q \Shat^{1/2} - \Spop^{1/2} \Q \right\|_F
		\le \frac{ C \spop_1 \left( \nu + b^2 \right) n \log^2 n }{ \spop_d^{5/2} }
		+ \frac{ C \sqrt{\nu + b^2} \log n }{ \spop_d^{1/2} }
	\end{equation*}
	and
	\begin{equation*}
		\left\| \Q \Shat^{-1/2} - \Spop^{-1/2} \Q \right\|_F
		\le \frac{ C \spop_1 \left( \nu + b^2 \right) n \log^2 n }{ \spop_d^{7/2} }
		+ \frac{ C \sqrt{\nu + b^2} \log n }{ \spop_d^{3/2} } .
	\end{equation*}
\end{lemma}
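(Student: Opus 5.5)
We treat the four bounds in order. Each one is reduced to the spectral-norm concentration of $\mE = \A - \Apop$ in Lemma~\ref{lem:E:spectral} and the alignment bound of Lemma~\ref{lem:H-Q}.

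\textbf{Step 1: the subspace bound.} Note that $\Uhat - \Upop \Upop^\top \Uhat = (\mI - \Upop\Upop^\top)\Uhat$. Its spectral norm is the sine of the largest principal angle between the column spaces of $\Uhat$ and $\Upop$. The Davis--Kahan $\sin\Theta$ theorem bounds this by $C\norm{\mE}/\spop_d$. The eigengap condition is satisfied because Equation~\eqref{eq:assum:spectralConc} and Lemma~\ref{lem:E:spectral} together give $\norm{\mE} = o(\spop_d)$ with high probability. Since $d = \Op{1}$, passing to the Frobenius norm costs only a constant factor $\sqrt d$. Substituting the bound from Lemma~\ref{lem:E:spectral} then yields the first claim.

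A side consequence of the same condition $\norm{\mE} = o(\spop_d)$ is worth recording. By Weyl's inequality, the top $d$ singular values of $\A$ are exactly its top $d$ eigenvalues, and all are positive. Hence $\A\Uhat = \Uhat\Shat$ and $\Uhat^\top\A\Uhat = \Shat$. Confirming this identification of singular values with eigenvalues is one of the two points that need care.

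\textbf{Step 2: the bound on $\Q\Shat - \Spop\Q$.} Write $\H = \Upop^\top\Uhat$. Left-multiplying $\A\Uhat = \Uhat\Shat$ by $\Upop^\top$ and using $\Upop^\top\Apop = \Spop\Upop^\top$ gives
\begin{equation*}
  \H\Shat - \Spop\H = \Upop^\top \mE \Uhat .
\end{equation*}
We then decompose
\begin{equation*}
  \Q\Shat - \Spop\Q = (\Q-\H)\Shat - \Spop(\Q-\H) + \Upop^\top\mE\Upop\H + \Upop^\top\mE(\mI-\Upop\Upop^\top)\Uhat .
\end{equation*}
The four terms are bounded as follows.
\begin{itemize}
  \item The first two terms are at most $\norm{\H-\Q}_F(\norm{\Shat}+\norm{\Spop})$. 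Lemmas~\ref{lem:H-Q} and~\ref{lem:invsqrteigvals} bound this by $C\spop_1(\nu+b^2)n\log^2 n/\spop_d^2$.
  \item The fourth term is at most $\norm{\mE}$ times the Step~1 bound, which is $C(\nu+b^2)n\log^2n/\spop_d$. Since $\spop_1 \ge \spop_d$, this is absorbed into the previous bound.
  \item The third term is the main obstacle, because a crude bound $\norm{\mE}$ would be far too large. Each entry of the $d\times d$ matrix $\Upop^\top\mE\Upop$ equals $\sum_{k\le l} w_{kl}\,\mE_{kl}$, a sum of independent sub-gamma variables with weights $w_{kl}$ built from products $u_{ki}u_{lj}$. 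The sum of squared weights is $O(1)$ because $\Upop$ has orthonormal columns. Applying Lemma~\ref{lem:sgsum} to each entry, with a union bound over the $d^2 = O(1)$ entries, gives $C\sqrt{\nu+b^2}\log n$. The factor $\norm{\H}\le 1$ does not change this.
\end{itemize}
Summing the four contributions gives the second claim.

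\textbf{Step 3: the two square-root bounds.} Both matrices $\Shat$ and $\Spop$ are diagonal, so the square-root bound follows entrywise. For each $i,j$,
\begin{equation*}
  (\Q\Shat^{1/2} - \Spop^{1/2}\Q)_{ij} = \frac{(\Q\Shat - \Spop\Q)_{ij}}{\widehat\lambda_j^{1/2} + \spop_i^{1/2}} .
\end{equation*}
By Lemma~\ref{lem:invsqrteigvals}, the denominator is at least $c\,\spop_d^{1/2}$ with high probability. This gives the third bound.

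For the inverse square root, we use the identity
\begin{equation*}
  \Q\Shat^{-1/2} - \Spop^{-1/2}\Q = \Spop^{-1/2}\left(\Spop^{1/2}\Q - \Q\Shat^{1/2}\right)\Shat^{-1/2} .
\end{equation*}
Lemma~\ref{lem:invsqrteigvals} controls the outer factors, contributing an extra $C\spop_d^{-1}$. This yields the final bound.
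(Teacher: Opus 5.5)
Your argument is correct and is essentially the standard proof that the paper defers to by citation (Lemma 38 of \cite{hayes2025}, Proposition 20 of \cite{levin2022a}, following \cite{lyzinski2017}). Its steps are: Davis--Kahan for the subspace term, after checking via Weyl that the top-$d$ singular pairs of $\A$ are its positive eigenpairs; the decomposition of $\Q\Shat - \Spop\Q$ through $\Upop^\top\Uhat$, with entrywise sub-gamma concentration of $\Upop^\top \mE \Upop$; and the entrywise identity $(\Q\Shat^{1/2}-\Spop^{1/2}\Q)_{ij} = (\Q\Shat-\Spop\Q)_{ij}/(\widehat\lambda_j^{1/2}+\spop_i^{1/2})$ for the square roots. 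The only small point to make explicit is that Lemma~\ref{lem:sgsum} is stated for nonnegative weights; since the paper's sub-gamma definition is two-sided, you can absorb the signs of your weights into the entries of $\mE$.
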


The following result, which generalizes Lemma 40 in \cite{hayes2025}, is central to proving our main results.

\begin{lemma} \label{lem:XhatXTB:ctrl}
	Suppose that $\A$ follows a sub-gamma model as in Definition~\ref{def:subgamma-network} and that Assumption~\ref{assum:Apop:spectrum} holds.
	Let $\mB \in \R^{n \times r}$ be a matrix with $\A-\Apop$ independent of $\mB$ conditional on $\Xpop$.
	Then there exists $\Q \in \R^{d \times d}$ such that
	\begin{equation*}
		\left\| \left( \Xhat \Q^\top - \Xpop \right)^\top \mB \right\|
		\le
		\frac{ C \sqrt{r} \sqrt{ \nu + b^2 } \| \mB \| \log n }{ \sqrt{\spop_d} }
		+
		\frac{ C \kappa (\nu+b^2) \| \mB \| n \log^2 n }{ \spop_d^{3/2} } .
	\end{equation*}
\end{lemma}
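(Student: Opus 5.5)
We decompose $\Xhat \Q^\top - \Xpop$ into a leading first-order term that is linear in the noise matrix $\mE = \A - \Apop$, plus a higher-order remainder. Each piece is then paired with $\mB$.

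\textbf{Decomposition.} Since $\Apop = \Upop \Spop \Upop^\top$ and $\Xpop = \Upop \Spop^{1/2}$ (up to the rotation absorbed into $\Q$), we write $\Xhat = \Uhat \Shat^{1/2} = \A \Uhat \Shat^{-1/2}$. Substituting $\A = \Apop + \mE$ gives
\begin{equation*}
  \Xhat \Q^\top - \Xpop
  = \mE \Upop \Spop^{-1/2} + \mE(\Uhat \Shat^{-1/2}\Q^\top - \Upop \Spop^{-1/2}) + \Apop \Uhat \Shat^{-1/2} \Q^\top - \Xpop .
\end{equation*}
This is the standard expansion used in \cite{lyzinski2017} and \cite{levin2022a}.

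\textbf{Leading term.} The leading piece is $(\mE \Upop \Spop^{-1/2})^\top \mB = \Spop^{-1/2} \Upop^\top \mE \mB$. Conditionally on $\Xpop$ and $\mB$, each entry $(\Upop^\top \mE \mB)_{k\ell} = \sum_{i,j} \Upop_{ik} \mE_{ij} \mB_{j\ell}$ is a linear combination of independent upper-triangular subgamma variables. The squared coefficients sum to at most $C\|\Upop_{\cdot k}\|^2 \|\mB_{\cdot \ell}\|^2 \le C\|\mB\|^2$. Lemma~\ref{lem:sgsum} (the with-high-probability version, with a union bound over the $dr$ entries) therefore bounds each entry by $C\sqrt{\nu+b^2}\|\mB\|\log n$.

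Passing from entries to the spectral norm costs $\sqrt{d r}$ via the Frobenius norm. Because $d = \Op{1}$, this factor is at most $C\sqrt{r}$. Multiplying by $\|\Spop^{-1/2}\| = \spop_d^{-1/2}$ yields the first term of the bound. The independence of $\mB$ and $\mE$ given $\Xpop$ is exactly what makes this conditional sub-gamma argument valid.

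\textbf{Remainder terms.} We bound the rest crudely as $\|R\|\,\|\mB\|$. For the deterministic-signal part, write $\Apop \Uhat \Shat^{-1/2}\Q^\top - \Xpop = \Upop\Spop^{1/2}(\Spop^{1/2}\Upop^\top\Uhat\Shat^{-1/2}\Q^\top - \mI)$. Replacing $\Upop^\top\Uhat$ by $\Q$ incurs error controlled by Lemma~\ref{lem:H-Q}. Replacing $\Spop^{1/2}\Q\Shat^{-1/2}$ by $\Q$ uses the commutation bounds in Lemma~\ref{lem:Qswap} together with Lemma~\ref{lem:invsqrteigvals}. Multiplying by $\|\Spop^{1/2}\| \le \spop_1^{1/2}$, every contribution is of order $\kappa(\nu+b^2)n\log^2 n/\spop_d^{3/2}$ plus terms of the leading order.

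For the term $\mE(\Uhat\Shat^{-1/2}\Q^\top - \Upop\Spop^{-1/2})$, we split
\begin{equation*}
  \Uhat\Shat^{-1/2}\Q^\top - \Upop\Spop^{-1/2}
  = (\Uhat - \Upop\Upop^\top\Uhat)\Shat^{-1/2}\Q^\top + \Upop(\Upop^\top \Uhat \Shat^{-1/2} \Q^\top - \Spop^{-1/2}) .
\end{equation*}
We use Lemma~\ref{lem:E:spectral} for $\|\mE\|$, Lemma~\ref{lem:Qswap} for the first factor, and Lemmas~\ref{lem:H-Q} and \ref{lem:Qswap} for the second. This gives a bound of order $(\nu+b^2)n\log^2 n/\spop_d^{3/2}$, up to lower-order corrections, times $\|\mB\|$.

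\textbf{Main obstacle.} The delicate step is the bookkeeping in the remainder. Each cross term must land at order $\kappa(\nu+b^2)n\log^2n/\spop_d^{3/2}$ rather than something larger. In particular, the $\spop_1$ factors arising from Lemma~\ref{lem:Qswap} must combine with the $\spop_d^{-1/2}$ factors to leave a single $\kappa$. Terms such as $\sqrt{\nu+b^2}\log n/\spop_d^{1/2}$ coming from Lemma~\ref{lem:Qswap} must be absorbed using Assumption~\ref{assum:Apop:spectrum}, either into the first term (noting $\|\mB\| \ge$ the relevant norms) or into the second.

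Throughout, we would carry out all bounds on a common high-probability event. The resulting $\Q$ is the one supplied by Lemma~\ref{lem:H-Q}.
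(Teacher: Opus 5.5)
Your decomposition $\Xhat\Q^\top-\Xpop=\mE\Upop\Spop^{-1/2}+\mE(\Uhat\Shat^{-1/2}\Q^\top-\Upop\Spop^{-1/2})+(\Apop\Uhat\Shat^{-1/2}\Q^\top-\Xpop)$ is a legitimate alternative to the paper's expansion. Your leading-term argument matches the paper's, including the $\sqrt{r}$ factor, and your bound on the $\mE(\cdots)$ remainder is fine. The gap is in the signal term, done the way you describe.

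You pull out $\|\Spop^{1/2}\|\le\spop_1^{1/2}$ and bound $\Spop^{1/2}\Q\Shat^{-1/2}-\Q=(\Spop^{1/2}\Q-\Q\Shat^{1/2})\Shat^{-1/2}$ using Lemmas~\ref{lem:Qswap} and~\ref{lem:invsqrteigvals}. That gives, times $\|\mB\|$,
\[
\spop_1^{1/2}\cdot\frac{C}{\spop_d^{1/2}}\left(\frac{\spop_1(\nu+b^2)n\log^2 n}{\spop_d^{5/2}}+\frac{\sqrt{\nu+b^2}\log n}{\spop_d^{1/2}}\right)
=\frac{C\kappa^{3/2}(\nu+b^2)n\log^2 n}{\spop_d^{3/2}}+\frac{C\sqrt{\kappa}\sqrt{\nu+b^2}\log n}{\spop_d^{1/2}},
\]
which carries an extra $\sqrt{\kappa}$ in both summands. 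Routing through $\Q\Shat^{-1/2}-\Spop^{-1/2}\Q$ instead is worse.

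The absorption you propose is not available. The condition number $\kappa$ is not assumed bounded, so it cannot be hidden in $\sqrt{r}$. Pushing the second summand into $\kappa(\nu+b^2)n\log^2 n/\spop_d^{3/2}$ would require $\sqrt{\kappa}\sqrt{\nu+b^2}\,n\log n/\spop_d\gtrsim 1$. Assumption~\ref{assum:Apop:spectrum} gives only upper bounds of this type, and this can fail for weighted edges. The power of $\kappa$ is part of the statement and is used downstream, e.g.\ with \eqref{eq:deglb:2:kappa}, so this is a genuine loss, not bookkeeping.

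The fix is to keep the outer $\Spop^{1/2}$ inside the bracket:
\[
\Spop^{1/2}\bigl(\Spop^{1/2}\Q\Shat^{-1/2}-\Q\bigr)=(\Spop\Q-\Q\Shat)\Shat^{-1/2}+(\Q\Shat^{1/2}-\Spop^{1/2}\Q).
\]
Now apply the bounds of Lemma~\ref{lem:Qswap} on $\|\Q\Shat-\Spop\Q\|_F$ and $\|\Q\Shat^{1/2}-\Spop^{1/2}\Q\|_F$ directly. Each piece is then $C\kappa(\nu+b^2)n\log^2 n/\spop_d^{3/2}+C\sqrt{\nu+b^2}\log n/\spop_d^{1/2}$. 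Your other piece, $\Spop(\Upop^\top\Uhat-\Q)\Shat^{-1/2}$, already yields a single $\kappa$.

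This is what the paper's decomposition gets automatically. It works with $\Uhat\Shat^{1/2}$ directly and writes the difference as $(\mI-\Upop\Upop^\top)\Uhat\Shat^{1/2}+\Upop(\Upop^\top\Uhat-\Q)\Shat^{1/2}+\Upop(\Q\Shat^{1/2}-\Spop^{1/2}\Q)$. The eigenvalue mismatch then enters only through the commutator multiplied by $\Upop$, so the factor $\spop_1^{1/2}\spop_d^{-1/2}$ never appears. The noise enters via $(\mI-\Upop\Upop^\top)\Uhat\Shat^{1/2}=(\mI-\Upop\Upop^\top)\mE\Uhat\Shat^{-1/2}$, which produces its six terms. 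Both routes use that, with high probability, the top-$d$ singular pairs of $\A$ are eigenpairs with positive eigenvalues. With the repair above, your three-piece route proves the lemma as stated.
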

\begin{proof}
	Take $\Q \in \R^{d \times d}$ to be the orthogonal matrix guaranteed by Lemma~\ref{lem:H-Q}.
	Applying a standard decomposition for the adjacency spectral embedding \citep[see, for example][]{lyzinski2017,levin2019,hayes2025}, writing $\mE = \A - \Apop$ for ease of notation,
	\begin{equation} \label{eq:XhatXBT:maindecomp} \begin{aligned}
			\left( \Xhat \Q^\top - \Xpop \right)^{\top}  \mB
			 & = \Q \left( \Uhat \Shat^{1/2} - \Upop \Spop^{1/2} \Q \right)^\top \mB \\
			 & = \Q \Spop^{-1/2} \Upop^\top \mE  \mB                                 %
			+ \Q \left( \Q \Shat^{-1/2} - \Spop^{-1/2} \Q \right)^\top \Upop^\top
			\mE \mB                                                                  \\ %
			 & ~~~+ \Q \Shat^{-1/2} \Q^\top \Upop^\top \mE \Upop\Upop^\top \mB       %
			+ \Q\Shat^{1/2}\left(\Upop\Upop^\top\Uhat - \Upop\Q\right)^\top \mB      \\ %
			 & ~~~+ \Q \left( \Q \Shat^{1/2} - \Spop^{1/2} \Q \right)^{\top }  \!
			\Upop^\top \mB %
			+ \Q \Shat^{-1/2} \! \left(\Uhat -\Upop \Q\right)^{\top} \!
			\mE \left(\mI - \Upop\Upop^\top\right)\!\mB . %
		\end{aligned} \end{equation}
	We will bound each of the six right-hand terms, after which the triangle inequality will yield our result.

	For the first right-hand term in Equation~\eqref{eq:XhatXBT:maindecomp}, using submultiplicativity and the fact that $\Q$ is orthogonal,
	\begin{equation*}
		\left\| \Q \Spop^{-1/2} \Upop^\top \mE \mB \right\|
		\le \frac{ \left\| \Upop^\top \mE \mB \right\| }{ \sqrt{\spop_d} } .
	\end{equation*}
	Taking the singular value decomposition of $\mB$, then using submultiplicativity and applying Bernstein's inequality \citep{boucheron2013,vershynin2020}, recalling that $\mE$ is independent of $\mB$ conditional on $\Xpop$,
	\begin{equation} \label{eq:XhatXBT:term1:done}
		\left\| \Q \Spop^{-1/2} \Upop^\top \mE \mB \right\|
		\le \frac{ C \sqrt{ \nu + b^2 } \sqrt{r} \| \mB \| \log n }{ \sqrt{\spop_d} } .
	\end{equation}

	For the second term in Equation~\eqref{eq:XhatXBT:maindecomp}, submultiplicativity followed by Bernstein's inequality and Lemma~\ref{lem:Qswap} yield
	\begin{equation*} \begin{aligned}
			\left\| \Q \left( \Q \Shat^{-1/2} - \Spop^{-1/2} \Q \right)^\top \Upop^\top
			\mE \mB \right\|
			 & \le \left\| \Q \Shat^{-1/2} - \Spop^{-1/2} \Q \right\|
			\left\| \Upop^\top \mE \mB \right\|                            \\
			 & \le C  \left( \frac{ \spop_1 \sqrt{ \nu + b^2 }~ n \log n }
			{ \spop_d^2 }
			+ d  \right)
			\frac{ ( \nu + b^2 ) \sqrt{r} \| \mB \| \log^2 n }{ \spop_d^{3/2} } .
		\end{aligned} \end{equation*}
	Using the growth assumptions in Equations~\eqref{eq:assum:XandW:spectrumUB},~\eqref{eq:assum:spectralConc},~\eqref{eq:assum:XhatX:rate:2} and~\eqref{eq:assum:spop:growth},
	\begin{equation} \label{eq:XhatXBT:term2:done}
		\left\| \Q \left( \Q \Shat^{-1/2} - \Spop^{-1/2} \Q \right)^\top \Upop^\top
		\mE \mB \right\|
		\le \frac{ C \sqrt{ \nu + b^2 } \sqrt{r} \| \mB \| \log n }{ \sqrt{\spop_d} } .
	\end{equation}

	Considering the third right-hand term in Equation~\eqref{eq:XhatXBT:maindecomp}, Bernstein's inequality and Lemma~\ref{lem:invsqrteigvals} yield
	\begin{equation} \label{eq:XhatXBT:term3:done}
		\left\| \Q \Shat^{-1/2} \Q^\top \Upop^\top \mE \Upop\Upop^\top \mB \right\|
		\le \frac{ C \sqrt{ \nu + b^2 } \| \mB \| \log n }{ \sqrt{\spop_d} } .
	\end{equation}

	For the fourth right-hand term in Equation~\eqref{eq:XhatXBT:maindecomp}, Lemmas~\ref{lem:invsqrteigvals} and~\ref{lem:H-Q} yield
	\begin{equation} \label{eq:XhatXBT:term4:done}
		\left\| \Q\Shat^{1/2}\left(\Upop\Upop^\top\Uhat - \Upop\Q\right)^\top \mB
		\right\|
		\le \left\| \Shat^{1/2} \right\| \left\| \Upop^\top\Uhat - \Q \right\|
		\left\| \mB \right\|
		\le \frac{ C (\nu +b^2) \| \mB \| n \log^2 n }{ \spop_d^{3/2} } .
	\end{equation}

	Similarly, for the fifth term in Equation~\eqref{eq:XhatXBT:maindecomp}, submultiplicativity and Lemma~\ref{lem:Qswap} yield
	\begin{equation} \label{eq:XhatXBT:term5:done} \begin{aligned}
			\left\| \Q \left( \Q \Shat^{1/2} - \Spop^{1/2} \Q \right)^{\top }
			\Upop^\top \mB \right\|
			 & \le \left\| \Q \Shat^{1/2} - \Spop^{1/2} \Q \right\| \left\| \mB \right\|     \\
			 & \le \frac{ C \kappa ( \nu+b^2 ) \|\mB\|  n \log^2 n }{ \spop_d^{3/2} }
			+ \frac{ C \sqrt{\nu+b^2} \| \mB \| \log n }{ \sqrt{\spop_d} } .
		\end{aligned} \end{equation}

	Finally, to control the sixth right-hand term in Equation~\eqref{eq:XhatXBT:maindecomp}, adding and subtracting appropriate quantities yields
	\begin{equation} \label{eq:XhatXBT:term6:split} \begin{aligned}
			\Q \Shat^{-1/2} \! \left(\Uhat -\Upop \Q\right)^{\top}
			\mE \left(\mI - \Upop\Upop^\top\right)\!\mB
			 & = \Q \Shat^{-1/2} \left( \Uhat - \Upop\Upop^\top \Uhat \right)^\top
			\mE \left(\mI - \Upop\Upop^\top\right)\!\mB                                       \\
			 & ~~~~~~+ \Q \Shat^{-1/2} \left( \Upop\Upop^\top \Uhat - \Upop \Q\right)^{\top}
			\mE \left(\mI - \Upop\Upop^\top\right)\!\mB .
		\end{aligned} \end{equation}
	By submultiplicativity followed by Lemmas~\ref{lem:E:spectral} and~\ref{lem:Qswap},
	\begin{equation} \label{eq:XhatXBT:term6a} \begin{aligned}
			\left\| \Q \Shat^{-1/2} \left( \Uhat - \Upop\Upop^\top \Uhat \right)^\top
			\mE \left(\mI - \Upop\Upop^\top\right)\!\mB \right\|
			 & \le \| \Shat^{-1/2} \| \left\| \Uhat - \Upop\Upop^\top \Uhat \right\|
			\| \mE \| \| \mB \|                                                      \\
			 & \le
			\frac{ C \left\| \mB \right\| (\nu + b^2) n \log^2 n }{ \spop_d^{3/2} } .
		\end{aligned} \end{equation}
	Similarly, submultiplicativity followed by Lemmas~\ref{lem:E:spectral},~\ref{lem:invsqrteigvals} and~\ref{lem:H-Q},
	\begin{equation} \label{eq:XhatXBT:term6b} \begin{aligned}
			\left\| \Q \Shat^{-1/2} \left( \Upop\Upop^\top \Uhat - \Upop \Q\right)^{\top}
			\mE \left(\mI - \Upop\Upop^\top\right)\!\mB \right\|
			 & \le \frac{ C \left\| \mB \right\| (\nu +b^2)^{3/2} n^{3/2} \log^3 n }
			{ \spop_d^{5/2} }                                                        \\
			 & \le \frac{ C (\nu + b^2) \| \mB \| n \log^2 n }{ \spop_d^{3/2} }  .
		\end{aligned} \end{equation}
	where the second inequality follows from our growth assumption in Equation~\eqref{eq:assum:spectralConc}.
	Applying the triangle inequality to Equation~\eqref{eq:XhatXBT:term6:split} and using Equations~\eqref{eq:XhatXBT:term6a} and~\eqref{eq:XhatXBT:term6b},
	\begin{equation} \label{eq:XhatXBT:term6:done}
		\left\| \Q \Shat^{-1/2} \! \left(\Uhat -\Upop \Q\right)^{\top}
		\mE \left(\mI - \Upop\Upop^\top\right)\!\mB \right\|
		\le
		\frac{ C (\nu + b^2) \| \mB \| n \log^2 n }{ \spop_d^{3/2} }  .
	\end{equation}

	Applying the triangle inequality to Equation~\eqref{eq:XhatXBT:maindecomp}, followed by
	Equations~\eqref{eq:XhatXBT:term1:done},
	~\eqref{eq:XhatXBT:term2:done},
	~\eqref{eq:XhatXBT:term3:done},
	~\eqref{eq:XhatXBT:term4:done},
	~\eqref{eq:XhatXBT:term5:done},
	and~\eqref{eq:XhatXBT:term6:done}, completes the proof.
\end{proof}

Our final result in this section concerns the behavior of
\begin{equation} \label{eq:def:dhat}
	\dhat_i = \sum_{j=1}^n \Xhat_i^\top \Xhat_j ,
\end{equation}
which estimates the ``latent'' degrees $\dtilde_i$ given in Equation~\eqref{eq:def:dtilde}.

\begin{lemma} \label{lem:dhatdtilde}
	Suppose that $\A$ follows a sub-gamma model as in Definition~\ref{def:subgamma-network} and that Assumptions~\ref{assum:Apop:spectrum} through~\ref{assum:stronger4dhat} hold.
	Then, with high probability, it holds uniformly over all $i \in [n]$ that
	\begin{equation} \label{eq:dhatdtilde:close}
		\left| \dhat_i - \dtilde_i \right|
		\le
		C \sqrt{\kappa} \sqrt{\nu+b^2} \sqrt{n} \log n .
	\end{equation}
	Further,
	\begin{equation} \label{eq:dhatdtilde:recip:close}
		\max_{i \in [n]} \left| \frac{ 1 }{ \dhat_i } - \frac{ 1 }{ \dtilde_i } \right|
		\le
		\frac{ C \sqrt{\kappa} \sqrt{\nu + b^2} \sqrt{n} \log n }{ \min_i \delta_i^2 }.
	\end{equation}
\end{lemma}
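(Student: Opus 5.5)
Write $\dhat_i - \dtilde_i = \Xhat_i^\top \Xhat^\top \onevec - \Xpop_i^\top \Xpop^\top \onevec$. With $\Q$ the orthogonal matrix of Lemma~\ref{lem:H-Q}, we have $\Xhat_i^\top \Xhat^\top \onevec = (\Q\Xhat_i)^\top (\Xhat\Q^\top)^\top \onevec$. Hence
\begin{equation*}
	\dhat_i - \dtilde_i
	= (\Q\Xhat_i - \Xpop_i)^\top (\Xhat\Q^\top)^\top \onevec
	+ \Xpop_i^\top (\Xhat \Q^\top - \Xpop)^\top \onevec .
\end{equation*}
We bound the two terms separately and uniformly in $i$.

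\emph{Second term.} Apply Lemma~\ref{lem:XhatXTB:ctrl} with $\mB = \onevec$. Here $r=1$, $\|\mB\| = \sqrt{n}$, and $\mB$ is trivially independent of $\A - \Apop$. This gives
\begin{equation*}
	\left\|(\Xhat\Q^\top - \Xpop)^\top \onevec\right\|
	\le \frac{C\sqrt{\nu+b^2}\sqrt{n}\log n}{\sqrt{\spop_d}}
	+ \frac{C\kappa(\nu+b^2) n^{3/2}\log^2 n}{\spop_d^{3/2}} .
\end{equation*}
Since $\max_i\|\Xpop_i\| \le \|\Upop\|_{\tti}\spop_1^{1/2} \le \spop_1^{1/2}$, multiplying by $\|\Xpop_i\|$ gives at most
\begin{equation*}
	C\sqrt{\kappa}\sqrt{\nu+b^2}\sqrt{n}\log n
	+ C\kappa^{3/2}(\nu+b^2)n^{3/2}\log^2 n/\spop_d .
\end{equation*}
The last piece equals $C\sqrt{\kappa}\sqrt{\nu+b^2}\sqrt{n}\log n$ times $\kappa\sqrt{\nu+b^2}\, n\log n/\spop_d$. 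That factor is $\Op{1}$ by Assumption~\ref{assum:stronger4dhat}, which is exactly why that assumption is invoked.

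\emph{First term.} Bound it by
\begin{equation*}
	\|\Xhat\Q^\top - \Xpop\|_{\tti}\,\bigl(\|\Xpop^\top\onevec\| + \|(\Xhat\Q^\top-\Xpop)^\top\onevec\|\bigr).
\end{equation*}
We have $\|\Xpop^\top\onevec\| \le \|\Xpop\|\sqrt{n} \le \spop_1^{1/2}\sqrt{n}$. For the $\tti$ rate we use the standard bound from the same six-term decomposition, \eqref{eq:XhatXBT:maindecomp}, applied rowwise (as in \cite{levin2022a}):
\begin{equation*}
	\|\Xhat\Q^\top - \Xpop\|_{\tti}
	\le C\sqrt{\nu+b^2}\log n/\sqrt{\spop_d}
	+ C\kappa(\nu+b^2)n\log^2 n/\spop_d^{3/2} .
\end{equation*}
The leading piece comes from $\max_i \|(\mE\Upop)_i\|\,\spop_d^{-1/2}$ via Lemma~\ref{lem:sgsum} and a union bound, using $\|\Upop\|_F=\sqrt d$. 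Multiplying by $\spop_1^{1/2}\sqrt n$ gives the same two quantities as in the second term, so the same argument with Assumption~\ref{assum:stronger4dhat} yields $C\sqrt{\kappa}\sqrt{\nu+b^2}\sqrt n\log n$. The cross product of the two error terms is of smaller order by \eqref{eq:assum:XhatX:rate:2}. Combining the two terms proves \eqref{eq:dhatdtilde:close}.

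\emph{Reciprocal bound.} Let $\gamma_n = C\sqrt{\kappa}\sqrt{\nu+b^2}\sqrt n\log n$. Equation~\eqref{eq:deglb:2} gives $\gamma_n = o(\min_i\dtilde_i)$, so with high probability $\dhat_i \ge \dtilde_i/2$ for all $i$. Then, as in Lemma~\ref{lem:degrecip:conc},
\begin{equation*}
	\left|\dhat_i^{-1} - \dtilde_i^{-1}\right|
	= \frac{|\dhat_i - \dtilde_i|}{\dhat_i\dtilde_i}
	\le \frac{2\gamma_n}{\dtilde_i^2},
\end{equation*}
and taking the maximum gives \eqref{eq:dhatdtilde:recip:close} (reading $\delta_i$ as $\dtilde_i$).

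The main obstacle is the $\tti$ bound on $\Xhat\Q^\top-\Xpop$. It is not stated earlier, so it must be derived rowwise from the decomposition \eqref{eq:XhatXBT:maindecomp}. The delicate term is the sixth, $\mE(\mI-\Upop\Upop^\top)$, whose rows are not independent of $\Uhat$. If that rowwise argument is too costly, the fallback is to avoid it entirely: write $\dhat_i = \Xhat_i^\top\Xhat^\top\onevec$ and use $\Xhat\Xhat^\top\onevec = \Uhat\Shat\Uhat^\top\onevec$. Then compare $\Uhat\Shat\Uhat^\top$ with $\A\Uhat\Uhat^\top$, reducing the problem to the row sums of $\mE$, which Lemma~\ref{lem:degconc} already controls.
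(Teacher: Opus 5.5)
Your argument is correct and is essentially the paper's proof. Your first term equals the sum of the paper's first and third terms in \eqref{eq:dhatdtilde:triangle}, and the reciprocal bound is handled the same way via \eqref{eq:deglb:2}. The row-wise bound you flag as the main obstacle is already available: the paper applies Lemma~\ref{lem:XhatXTB:ctrl} with $\mB = \ve_i$, which is deterministic and hence independent of $\A-\Apop$. The sixth term in that lemma is controlled purely by spectral norms, so no row-independence issue arises and your fallback is unnecessary.
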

\begin{proof}
	Fix $i \in [n]$.
	Recalling the definitions of $\dhat_i$ and $\dtilde_i$ from Equations~\eqref{eq:def:dhat} and~\eqref{eq:def:dtilde}, respectively, writing $\ve_i \in \R^n$ for the $i$-th standard basis vector,
	\begin{equation} \label{eq:dhatdtilde:triangle} \begin{aligned}
			\left| \dhat_i - \dtilde_i \right|
			 & = \left| \ve_i^\top \left( \Apophat - \Apop \right) \onevec \right| \\
			 & \le
			\left| \ve_i^\top \left( \Xhat \Q^\top \!\!-\! \Xpop \!\right) \Xpop^\top \onevec \right|
			+
			\left| \ve_i^\top \Xpop  \left( \Xhat \Q^\top \!\!-\! \Xpop \right)^\top \! \onevec \right|
			+
			\left| \ve_i^\top \left( \Xhat \Q^\top \!\!-\! \Xpop \right)
			\left( \Xhat \Q^\top \!\!-\! \Xpop \right)^\top \! \onevec \right| .
		\end{aligned} \end{equation}

	By Cauchy-Schwarz and Lemma~\ref{lem:XhatXTB:ctrl},
	\begin{equation*} \begin{aligned}
			\left| \ve_i^\top \left( \Xhat \Q^\top - \Xpop \right) \Xpop^\top \onevec \right|
			 & \le
			C \sqrt{n} \| \Xpop \|
			\left( \frac{ \sqrt{ \nu + b^2 } \log n }{ \sqrt{\spop_d} }
			+
			\frac{ \kappa (\nu+b^2) n \log^2 n }{ \spop_d^{3/2} }
			\right) \\
			 & \le
			C \sqrt{n}
			\left( \sqrt{\kappa} \sqrt{ \nu + b^2 } \log n
			+ \frac{ \kappa^{3/2} (\nu+b^2) n \log^2 n }{ \spop_d }
			\right).
		\end{aligned} \end{equation*}

	Applying our growth bound in Equation~\eqref{eq:assum:degreekappa},
	\begin{equation*}
		\left| \ve_i^\top \left( \Xhat \Q^\top - \Xpop \right) \Xpop^\top \onevec \right|
		\le
		C \sqrt{\kappa} \sqrt{ \nu+b^2 } \sqrt{n} \log n.
	\end{equation*}
	Similarly, trivially upper bounding $\| \ve_i^\top \Xpop \| \le \| \Xpop \|$,
	\begin{equation*}
		\left| \ve_i^\top \Xpop  \left( \Xhat \Q^\top - \Xpop \right)^\top \onevec \right|
		\le
		C \sqrt{\kappa} \sqrt{ \nu+b^2 } \sqrt{n} \log n,
	\end{equation*}
	Applying Lemma~\ref{lem:XhatXTB:ctrl} twice more, once with $\mB = \ve_i$ and once with $\mB = \onevec$,
	\begin{equation*} \begin{aligned}
			\left| \ve_i^\top \left( \Xhat \Q^\top - \Xpop \right)
			\left( \Xhat \Q^\top - \Xpop \right)^\top \onevec \right|
			 & \le
			C \sqrt{n}
			\left( \frac{ ( \nu + b^2 ) \log^2 n }{ \spop_d }
			+
			\frac{ \kappa^2 (\nu+b^2)^2 n^2 \log^4 n }
				{ \spop_d^{3} }
			\right)                                                           \\
			 & \le C \sqrt{ \kappa } \sqrt{ \nu+b^2 } \sqrt{n} \log n,
		\end{aligned} \end{equation*}
	where we have used our growth assumptions in Equations~\eqref{eq:assum:spop:growth},~\eqref{eq:assum:spectralConc},~\eqref{eq:assum:XandW:spectrumUB},~\eqref{eq:assum:XhatX:rate:2} and~\eqref{eq:assum:degreekappa}.
	Applying the above three bounds to Equation~\eqref{eq:dhatdtilde:triangle} yields Equation~\eqref{eq:dhatdtilde:close} after noting that the right-hand side does not depend on our choice of $i \in [n]$.

	To see Equation~\eqref{eq:dhatdtilde:recip:close}, note that for any $i \in [n]$,
	\begin{equation*}
		\left| \frac{ 1 }{ \dhat_i } - \frac{ 1 }{ \dtilde_i } \right|
		\le \frac{ \left| \dhat_i - \dtilde_i \right| }{ \dhat_i \dtilde_i }.
	\end{equation*}
	Using Equation~\eqref{eq:dhatdtilde:triangle} in combination with our assumption in Equation~\eqref{eq:deglb:2}, it holds uniformly over all $i \in [n]$ that
	\begin{equation*}
		\left| \frac{ 1 }{ \dhat_i } - \frac{ 1 }{ \dtilde_i } \right|
		\le
		\frac{ C \left| \dhat_i - \dtilde_i \right| }{ \dtilde_i^2 } .
	\end{equation*}
	A second application of Equation~\eqref{eq:dhatdtilde:triangle} yields Equation~\eqref{eq:dhatdtilde:recip:close} and completes the proof.
\end{proof}

\section{Controlling the Averaging Operators} \label{apx:Gctrl}

\begin{lemma} \label{lem:GGtilde:spectral}
  Suppose that $\A$ follows a sub-gamma model as in Definition~\ref{def:subgamma-network} and suppose that Assumptions~\ref{assum:Apop:spectrum} through~\ref{assum:latentpositions} hold and either of Assumptions~\ref{assum:peer-oracle} or~\ref{assum:latent-oracle} hold.
  Then with high probability,
  \begin{equation*}
    \left\| \G - \Gtilde \right\|
    \le
    C \left( 1 + \frac{ \spop_1 }{ \min_{i \in [n]} \dtilde_i } \right)
    \frac{ \sqrt{ \nu + b^2} \sqrt{n} \log n }
    { \min_{i \in [n]} \dtilde_i } .
  \end{equation*}
\end{lemma}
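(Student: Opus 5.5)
We split $\G - \Gtilde$ into a degree-normalization error and an edge-noise error by adding and subtracting $\Dtilde^{-1}\A$:
\begin{equation*}
  \G - \Gtilde
  = \D^{-1}\A - \Dtilde^{-1}\Apop
  = \left( \D^{-1} - \Dtilde^{-1} \right) \A
  + \Dtilde^{-1} \left( \A - \Apop \right).
\end{equation*}
We bound each term in spectral norm with high probability and then apply the triangle inequality. Both the peer and latent oracle assumptions are used only to guarantee that the degrees are positive and well defined. Nothing model-specific beyond Definition~\ref{def:subgamma-network} and Equation~\eqref{eq:deglb:2} should be needed.

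\textbf{Noise term.} Submultiplicativity and $\|\Dtilde^{-1}\| = 1/\min_i \dtilde_i$ give
\begin{equation*}
  \left\| \Dtilde^{-1}(\A-\Apop) \right\| \le \frac{\|\A-\Apop\|}{\min_i \dtilde_i}.
\end{equation*}
Lemma~\ref{lem:E:spectral} then bounds this by $C\sqrt{\nu+b^2}\sqrt{n}\log n/\min_i\dtilde_i$ with high probability. This is the leading ``$1 \times$'' piece of the claimed bound.

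\textbf{Degree term.} Since $\D^{-1}-\Dtilde^{-1}$ is diagonal, its spectral norm is $\max_i |1/d_i - 1/\dtilde_i|$. Lemma~\ref{lem:degrecip:conc}, which is valid under Equation~\eqref{eq:deglb:2}, bounds this by $C\sqrt{\nu+b^2}\sqrt{n}\log n/\min_i\dtilde_i^2$ with probability $1-O(n^{-2})$. It remains to bound $\|\A\| \le \|\Apop\| + \|\A-\Apop\| = \spop_1 + \|\A - \Apop\|$. Equation~\eqref{eq:assum:spectralConc} gives $\|\A-\Apop\| = o(\spop_d) \le \spop_1$ with high probability, hence $\|\A\|\le C\spop_1$. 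Multiplying yields
\begin{equation*}
  C\,\frac{\spop_1}{\min_i\dtilde_i}\cdot\frac{\sqrt{\nu+b^2}\sqrt n\log n}{\min_i\dtilde_i},
\end{equation*}
which is exactly the second piece of the claim.

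\textbf{Where care is needed.} The substantive input, the uniform control of reciprocal degrees, is already packaged in Lemma~\ref{lem:degrecip:conc}. The main point to check is that every high-probability event (Lemmas~\ref{lem:E:spectral} and~\ref{lem:degrecip:conc}, and the spectral-norm comparison) holds simultaneously, which a union bound handles. One should also confirm that the degrees are positive on this event, so the convention $\G_{i\cdot}=\symbf 0$ for isolated nodes never applies. Equation~\eqref{eq:deglb:2} gives $\min_i \dtilde_i \gg \sqrt{\nu+b^2}\sqrt n\log n$, and Lemma~\ref{lem:degconc} then gives $d_i \ge \dtilde_i/2 > 0$ uniformly in $i$. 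A further subtlety is that $d_i$ excludes the diagonal while $\dtilde_i$ includes $\Apop_{ii}$. This discrepancy is at most $\|\Xpop\|_{\tti}^2$, which by Assumption~\ref{assum:latentpositions} is lower order and absorbed into the constant $C$.
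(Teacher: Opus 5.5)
Your proof is correct and is essentially the paper's argument: a triangle-inequality split into a degree-normalization term and an edge-noise term, each bounded by submultiplicativity together with Lemma~\ref{lem:E:spectral} and the degree-concentration lemmas. The only difference is that you add and subtract $\Dtilde^{-1}\A$ where the paper uses $\D^{-1}\Apop$. As a result you must bound $\|\A\|\le C\spop_1$, which you do correctly. The paper instead needs $\|\D^{-1}\|\le C/\min_i\dtilde_i$, which comes from the same degree lower bound you already verify.
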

\begin{proof}
  Recalling the definitions of $\G$ and $\Gtilde$ from Section~\ref{subsec:Gdefs}, the triangle inequality yields
  \begin{equation} \label{eq:GGtilde:spectral:tristart}
    \left\| \mG - \Gtilde \right\|
    \le \left\| \left( \D^{-1} - \Dtilde^{-1} \right) \Apop \right\|
    + \left\| \D^{-1} \left( \A - \Apop \right) \right\| .
  \end{equation}

  By submultiplicativity of the norm and Lemma~\ref{lem:degconc},
  \begin{equation} \label{eq:GGtilde:spectral:term1:done}
    \left\| \left( \D^{-1} - \Dtilde^{-1} \right) \Apop \right\|
    \le
    \spop_1 \max_{i \in [n]} \frac{ \left| d_i - \dtilde_i \right| }
    { d_i \dtilde_i }
    \le
    \frac{ C \spop_1 \sqrt{\nu + b^2} \sqrt{n} \log n }
    { \min_{i \in [n]} \dtilde_i^2 },
  \end{equation}
  where we have used our growth assumption in Equation~\eqref{eq:deglb:2}.

  Similarly, this time using Lemma~\ref{lem:E:spectral},
  \begin{equation} \label{eq:GGtilde:spectral:term2:done}
    \left\| \D^{-1} \left( \A - \Apop \right) \right\|
    \le
    \frac{ C \sqrt{ \nu + b^2} \sqrt{n} \log n }{ \min_{i \in [n]} \dtilde_i } .
  \end{equation}
  Applying the above two displays to Equation~\eqref{eq:GGtilde:spectral:tristart},
  \begin{equation*}
    \left\| \mG - \Gtilde \right\|
    \le
    C \left( 1+ \frac{ \spop_1 }{ \min_{i \in [n]} \dtilde_i }
    \right)
    \frac{ \sqrt{ \nu + b^2} \sqrt{n} \log n }{ \min_{i \in [n]} \dtilde_i },
  \end{equation*}
  as we set out to show.
\end{proof}

\begin{lemma} \label{lem:GhatGtilde:spectral}
  Suppose that $\A$ follows a sub-gamma model as in Definition~\ref{def:subgamma-network} and suppose that Assumptions~\ref{assum:Apop:spectrum} through~\ref{assum:stronger4dhat} hold.
  Then under either Assumption~\ref{assum:peer-oracle} or~\ref{assum:latent-oracle},
  \begin{equation*}
    \left\| \Ghat - \Gtilde \right\|
    \le
    C \left( 1 + \frac{ \spop_1 }{ \min_{i \in [n]} \dtilde_i } \right)
    \frac{ \sqrt{\kappa(\Apop)} \sqrt{ \nu + b^2 } \sqrt{n} \log n }
    { \min_{i \in [n]} \dtilde_i } .
  \end{equation*}
\end{lemma}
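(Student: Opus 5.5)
We mirror the proof of Lemma~\ref{lem:GGtilde:spectral}, with $\Apophat = \Xhat \Xhat^\top$ and $\Dhat$ in place of $\A$ and $\D$. Writing $\Ghat = \Dhat^{-1}\Apophat$ and $\Gtilde = \Dtilde^{-1}\Apop$, we add and subtract $\Dhat^{-1}\Apop$. The triangle inequality then gives
\begin{equation*}
  \left\| \Ghat - \Gtilde \right\|
  \le \left\| \left( \Dhat^{-1} - \Dtilde^{-1} \right) \Apop \right\|
  + \left\| \Dhat^{-1} \left( \Apophat - \Apop \right) \right\| .
\end{equation*}

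For the first term, we use submultiplicativity with $\|\Apop\| = \spop_1$ and then the reciprocal bound \eqref{eq:dhatdtilde:recip:close} of Lemma~\ref{lem:dhatdtilde}. This gives a bound of $C \spop_1 \sqrt{\kappa}\sqrt{\nu+b^2}\sqrt{n}\log n / \min_i \dtilde_i^2$, which is exactly the $\spop_1/\min_i\dtilde_i$ part of the claimed bound.

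For the second term, we first show that $\min_i \dhat_i \ge \min_i \dtilde_i / 2$ with high probability. This follows from \eqref{eq:dhatdtilde:close} combined with \eqref{eq:deglb:2}, since \eqref{eq:deglb:2} implies $\sqrt{\kappa}\sqrt{\nu+b^2}\sqrt{n}\log n = o(\min_i \dtilde_i)$. Hence $\|\Dhat^{-1}\| \le C/\min_i\dtilde_i$, and it remains to bound $\|\Apophat - \Apop\|$ by $C\sqrt{\kappa}\sqrt{\nu+b^2}\sqrt{n}\log n$.

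\textbf{Main step.} The work lies in bounding $\|\Apophat - \Apop\|$. Since $\Apophat$ is the best rank-$d$ approximation of $\A$ and $\Apop$ has rank $d$, we have $\|\Apophat - \A\| \le \|\Apop - \A\|$. Hence
\[
\|\Apophat-\Apop\| \le 2\|\A-\Apop\| \le C\sqrt{\nu+b^2}\sqrt n\log n
\]
by Lemma~\ref{lem:E:spectral}, which is even better than needed by the factor $\sqrt\kappa$. If $\Shat$ is read as singular values and $\A$ may have large negative eigenvalues, the Eckart--Young argument still applies verbatim to the truncated SVD.

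If one instead wants a route consistent with the rest of the appendix, one can write, with $\Q$ from Lemma~\ref{lem:H-Q},
\[
\Apophat - \Apop = (\Xhat\Q^\top - \Xpop)\Xpop^\top + \Xpop(\Xhat\Q^\top-\Xpop)^\top + (\Xhat\Q^\top-\Xpop)(\Xhat\Q^\top-\Xpop)^\top .
\]
The pieces are then bounded via Lemma~\ref{lem:XhatXTB:ctrl} with $\mB=\Xpop$, using $\|\Xpop\| = \spop_1^{1/2}$ and \eqref{eq:assum:degreekappa}, exactly as in the proof of Lemma~\ref{lem:dhatdtilde}. This produces the $\sqrt\kappa$ factor. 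The catch is that Lemma~\ref{lem:XhatXTB:ctrl} needs $\mB$ independent of $\mE$; this holds for $\Xpop$.

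Combining the two terms yields
\[
\|\Ghat-\Gtilde\| \le C\left(1+\frac{\spop_1}{\min_i\dtilde_i}\right)\frac{\sqrt{\kappa}\sqrt{\nu+b^2}\sqrt n\log n}{\min_i\dtilde_i},
\]
which is the claim.
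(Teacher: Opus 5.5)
Your main argument is correct and is essentially the paper's proof. It uses the same add-and-subtract of $\Dhat^{-1}\Apop$, bounds the first term with the reciprocal bound of Lemma~\ref{lem:dhatdtilde}, and bounds the second with $\|\Apophat-\Apop\|\le 2\|\A-\Apop\|$, Lemma~\ref{lem:E:spectral}, and $\min_i\dhat_i\ge C\min_i\dtilde_i$. (Strictly, Eckart--Young applies to $\Apophat=\Xhat\Xhat^\top$ only because, with high probability by Weyl's inequality and \eqref{eq:assum:spectralConc}, the top $d$ singular values of $\A$ come from positive eigenvalues, so $\Xhat\Xhat^\top$ coincides with the truncated SVD.)

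One slip is in your optional second route, though it does not affect the main proof: Lemma~\ref{lem:XhatXTB:ctrl} with $\mB=\Xpop$ controls the $d\times d$ matrix $(\Xhat\Q^\top-\Xpop)^\top\Xpop$, not $(\Xhat\Q^\top-\Xpop)\Xpop^\top$, whose spectral norm needs $\|\Xhat\Q^\top-\Xpop\|$. That means taking $\mB=\mI$, which does give the $\sqrt{\kappa}\sqrt{\nu+b^2}\sqrt{n}\log n$ rate.
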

\begin{proof}
  Recalling the definitions of $\Ghat$ and $\Gtilde$ from Equation~\eqref{eq:def:Ghat} and Section~\ref{subsec:Gdefs}, respectively, and applying the triangle inequality,
  \begin{equation} \label{eq:GhatGtilde:triangle}
    \left\| \Ghat - \Gtilde \right\|
    \le \left\| \left( \Dhat^{-1} - \mDtilde^{-1} \right) \Apop \right\|
    + \left\| \Dhat^{-1} \left( \Apophat - \Apop \right) \right\| .
  \end{equation}

  Using submultiplicativity of the norm and Lemma~\ref{lem:dhatdtilde},
  \begin{equation} \label{eq:GhatGtilde:triangle:term1:done}
    \left\| \left( \Dhat^{-1} - \mDtilde^{-1} \right) \Apop \right\|
    \le
    \frac{ C \spop_1 \sqrt{\kappa(\Apop)} \sqrt{\nu+b^2} \sqrt{n} \log n }
    { \min_{i\in [n]} \dtilde_i^2 } .
  \end{equation}

  Since $\Apophat$ is a truncation of $\A$, we have $\| \Apophat - \Apop \| \le 2\| \A - \Apop \|$.
  Submultiplicativity and Lemma~\ref{lem:E:spectral} thus yield
  \begin{equation*}
    \left\| \Dhat^{-1} \left( \Apophat - \Apop \right) \right\|
    \le
    \left\| \Dhat^{-1} \right\| \left\| \Apophat - \Apop \right\|
    \le \frac{ C \sqrt{ \nu + b^2 } \sqrt{n} \log n }{ \min_{i \in [n]} \dhat_i }.
  \end{equation*}
  By Lemma~\ref{lem:dhatdtilde} and our assumption in Equation~\eqref{eq:deglb:2}, $\min_i \dhat \ge C \min_i \dtilde_i$, and thus
  \begin{equation} \label{eq:GhatGtilde:triangle:term2:done}
    \left\| \Dhat^{-1} \left( \Apophat - \Apop \right) \right\|
    \le
    \frac{ C \sqrt{ \nu + b^2 } \sqrt{n} \log n }
    { \min_{i \in [n]} \dtilde_i }.
  \end{equation}
  Applying Equations~\eqref{eq:GhatGtilde:triangle:term1:done} and~\eqref{eq:GhatGtilde:triangle:term2:done} to Equation~\eqref{eq:GhatGtilde:triangle},
  \begin{equation*}
    \left\| \Ghat - \Gtilde \right\|
    \le
    C\left( 1 + \frac{ \spop_1 }{ \min_{i\in [n]} \dtilde_i } \right)
    \frac{ \sqrt{ \nu + b^2 } \sqrt{n} \log n }
    { \min_{i \in [n]} \dtilde_i },
  \end{equation*}
  completing the proof.
\end{proof}

\begin{lemma} \label{lem:GhatG:spectral}
  Suppose that $\A$ follows a sub-gamma model as in Definition~\ref{def:subgamma-network} and suppose that Assumptions~\ref{assum:Apop:spectrum} through~\ref{assum:stronger4dhat} hold.
  Then under either Assumption~\ref{assum:peer-oracle} or~\ref{assum:latent-oracle}, with high probability,
  \begin{equation*}
    \left\| \Ghat - \G \right\|
    \le
    C \left( 1 + \frac{ \spop_1 }{ \min_{i \in [n]} \dtilde_i } \right)
    \frac{ \sqrt{ \nu + b^2 } \sqrt{n} \log n }{ \min_{i \in [n]} \dtilde_i }.
  \end{equation*}
\end{lemma}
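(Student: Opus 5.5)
The plan is to route the comparison through the latent operator $\Gtilde$ and apply the triangle inequality:
\begin{equation*}
  \left\| \Ghat - \G \right\|
  \le \left\| \Ghat - \Gtilde \right\| + \left\| \Gtilde - \G \right\| .
\end{equation*}
Both right-hand terms have already been controlled under the present hypotheses. Lemma~\ref{lem:GGtilde:spectral} bounds $\| \G - \Gtilde \|$ with high probability by
\begin{equation*}
  C \left( 1 + \frac{\spop_1}{\min_i \dtilde_i} \right) \frac{\sqrt{\nu+b^2}\sqrt{n}\log n}{\min_i \dtilde_i} .
\end{equation*}
Its hypotheses, Assumptions~\ref{assum:Apop:spectrum} through~\ref{assum:latentpositions} together with either oracle assumption, are implied by those assumed here.

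For $\| \Ghat - \Gtilde \|$, I will use Lemma~\ref{lem:GhatGtilde:spectral}, which requires Assumption~\ref{assum:stronger4dhat}; this is available in the present statement. Its displayed conclusion carries a factor $\sqrt{\kappa(\Apop)}$. The final line of its proof, however, drops that factor, and the $\sqrt{\kappa}$ enters only through the term $\|(\Dhat^{-1}-\Dtilde^{-1})\Apop\|$.

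The main obstacle is therefore removing the $\sqrt{\kappa}$, so that the constant $C$ in the stated bound does not depend on $\kappa$. I see three ways to handle it, in order of preference.
\begin{enumerate}
  \item Treat $\kappa$ as bounded. This is the reading implicit in the proof of Lemma~\ref{lem:GhatGtilde:spectral} and is consistent with $d=\Op{1}$ in Assumption~\ref{assum:Apop:spectrum}. The factor is then absorbed into $C$.
  \item Redo the first term of Equation~\eqref{eq:GhatGtilde:triangle} more carefully. I would use that $\Apop$ has nonnegative entries, so that
  \begin{equation*}
    \| \Dtilde^{-1} \Apop \| \le \| \Dtilde^{-1}\Apop \|_{\infty}^{1/2} \, \| \Dtilde^{-1}\Apop \|_{1}^{1/2} ,
  \end{equation*}
  where $\Dtilde^{-1}\Apop = \Gtilde$ is row-stochastic. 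The idea is to bound $\|(\Dhat^{-1}-\Dtilde^{-1})\Apop\|$ by $\max_i |\dtilde_i/\dhat_i - 1| \cdot \|\Gtilde\|$. Combined with Lemma~\ref{lem:dhatdtilde}, this gives a bound of order $\sqrt{\kappa}\sqrt{\nu+b^2}\sqrt n\log n/\min_i\dtilde_i$. This has the right shape, and the leftover $\sqrt\kappa$ is then handled as in item 1.
  \item Failing both, state the bound with the $\sqrt{\kappa}$ retained. It would be harmless downstream, since Assumption~\ref{assum:degrees} already contains $\kappa$ factors.
\end{enumerate}

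Summing the two bounds and merging constants then gives the claimed inequality. Both events hold with high probability, and a union bound over the two preserves this.
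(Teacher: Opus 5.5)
Your route is the paper's: its proof of this lemma is exactly the triangle inequality through $\Gtilde$, followed by Lemmas~\ref{lem:GhatGtilde:spectral} and~\ref{lem:GGtilde:spectral}.

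The $\sqrt{\kappa}$ mismatch you flag is genuine, and the paper does not resolve it either. The statement of Lemma~\ref{lem:GhatGtilde:spectral} and the bound on its first term both carry $\sqrt{\kappa(\Apop)}$. Its final display silently drops the factor, and the present lemma inherits that $\kappa$-free display.

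None of your three options closes this under the stated hypotheses.
\begin{itemize}
  \item Item~1 adds an assumption. The condition $d = \Op{1}$ says nothing about $\kappa = \spop_1/\spop_d$. The fact that Assumptions~\ref{assum:Apop:spectrum}--\ref{assum:stronger4dhat} carry explicit powers of $\kappa$ indicates it is allowed to grow.
  \item Item~2 removes the factor $\spop_1/\min_i \dtilde_i$ from that term, but the $\sqrt{\kappa}$ inherited from Lemma~\ref{lem:dhatdtilde} survives. As written, it also needs bounded column sums of $\Gtilde$, which Assumption~\ref{assum:peer-oracle} imposes only on $\G$.
  \item Item~3 proves a weaker statement, and its harmlessness downstream is not automatic. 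Where this lemma is used, e.g.\ in Lemma~\ref{lem:Y:peerquad}, it is paired with Equation~\eqref{eq:assum:Grate:stronger}, which has no $\kappa$ to absorb the extra factor.
\end{itemize}

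So your proposal is exactly as rigorous as the paper's proof. A complete argument for the displayed bound needs either $\kappa = \Op{1}$ as an explicit hypothesis or a $\kappa$-free version of the degree bound in Lemma~\ref{lem:dhatdtilde}.
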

\begin{proof}
  By the triangle inequality,
  \begin{equation*}
    \left\| \Ghat - \G \right\|
    \le \left\| \Ghat - \Gtilde \right\| + \left\| \G - \Gtilde \right\| .
  \end{equation*}
  Applying Lemmas~\ref{lem:GhatGtilde:spectral} and~\ref{lem:GGtilde:spectral} completes the proof.
\end{proof}

\begin{lemma} \label{lem:GGtilde:quad}
  Suppose that $\A$ follows a sub-gamma model as in Definition~\ref{def:subgamma-network} and suppose that Assumptions~\ref{assum:Apop:spectrum} through~\ref{assum:latentpositions} hold.
  Let $\vu,\vv \in \R^n$ be such that $\A - \Apop$ is independent of $\vv,\vu$ conditional on $\Xpop$.
  Then, under either Assumption~\ref{assum:peer-oracle} or~\ref{assum:latent-oracle},
  \begin{equation*}
    \left| \vu^\top \left( \G - \Gtilde \right) \vv \right|
    = \op{ \frac{ \| \vu \| \| \vv \| }{ \sqrt{n} } } . %
  \end{equation*}
\end{lemma}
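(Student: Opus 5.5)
We start from the decomposition used in Lemma~\ref{lem:GGtilde:spectral}. Write $\mE = \A - \Apop$. Then
\begin{equation*}
  \G - \Gtilde
  = \left( \D^{-1} - \Dtilde^{-1} \right) \Apop
  + \left( \D^{-1} - \Dtilde^{-1} \right) \mE
  + \Dtilde^{-1} \mE ,
\end{equation*}
so that $\vu^\top(\G-\Gtilde)\vv$ splits into three scalar terms, which we bound separately. The point of working with the bilinear form rather than the spectral norm is that the leading term $\vu^\top \Dtilde^{-1}\mE\vv$ is a weighted sum of independent sub-gamma variables. This sum concentrates much better than $\|\mE\|$ would suggest, because $\vu$, $\vv$ and $\Dtilde$ are independent of $\mE$ given $\Xpop$.

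\textbf{Leading term.} Set $\va = \Dtilde^{-1}\vu$. Then
\begin{equation*}
  \va^\top \mE \vv = \sum_{i<j} \mE_{ij}\left( a_i v_j + a_j v_i \right) + \sum_i \mE_{ii} a_i v_i .
\end{equation*}
Conditional on $\Xpop$, $\vu$ and $\vv$, this is a linear combination of independent $(\nu,b)$-sub-gamma variables. Lemma~\ref{lem:sgsum} applies, since it extends to signed coefficients by splitting into positive and negative parts. It gives
\begin{equation*}
  \left| \va^\top \mE \vv \right| = \Op{ \sqrt{\nu+b^2}\, \|\va\|\,\|\vv\| } \le \Op{ \frac{ \sqrt{\nu+b^2}\, \|\vu\|\,\|\vv\| }{ \min_i \dtilde_i } } .
\end{equation*}
It then suffices that $\sqrt{\nu+b^2}\sqrt n/\min_i\dtilde_i = o(1)$, and this follows from \eqref{eq:deglb:2}.

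\textbf{Second term.} For $\vu^\top(\D^{-1}-\Dtilde^{-1})\mE\vv$ we cannot use independence, because $\D$ depends on $\mE$. Instead we bound it crudely by
\begin{equation*}
  \|\vu\|\,\max_i\left|\frac{1}{d_i}-\frac{1}{\dtilde_i}\right|\,\|\mE\|\,\|\vv\| .
\end{equation*}
Lemma~\ref{lem:degrecip:conc} and Lemma~\ref{lem:E:spectral} give the bound
\begin{equation*}
  \frac{ (\nu+b^2)\, n \log^2 n }{ \min_i \dtilde_i^2 }\,\|\vu\|\,\|\vv\| ,
\end{equation*}
and this is $o(n^{-1/2})\|\vu\|\|\vv\|$ exactly by \eqref{eq:deglb:2}, since $\kappa\ge1$.

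\textbf{First term, the main obstacle.} For $\vu^\top(\D^{-1}-\Dtilde^{-1})\Apop\vv$, the crude bound
\begin{equation*}
  \spop_1 \sqrt{\nu+b^2}\sqrt n \log n / \min_i \dtilde_i^2
\end{equation*}
is only $o(n^{-1/4})$ by \eqref{eq:assum:Grate:stronger}, which is not enough. I would expand
\begin{equation*}
  \frac{1}{d_i}-\frac{1}{\dtilde_i} = -\frac{d_i-\dtilde_i}{\dtilde_i^2} + \frac{(d_i-\dtilde_i)^2}{\dtilde_i^2 d_i} .
\end{equation*}

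The quadratic remainder is bounded by
\begin{equation*}
  \max_i \frac{(d_i-\dtilde_i)^2}{\dtilde_i^3}\,\spop_1\,\|\vu\|\,\|\vv\| \lesssim \frac{(\nu+b^2)\, n\log^2 n\,\spop_1}{\min_i\dtilde_i^3}\,\|\vu\|\,\|\vv\| .
\end{equation*}
Because $\spop_1\lesssim\max_i\dtilde_i$ for a PSD matrix, this is controlled by \eqref{eq:deglb:2} under comparable degrees.

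The linear part is $-\vw^\top(\mE\onevec)$ with $w_i = u_i[\Apop\vv]_i/\dtilde_i^2$. Here the weights $\vw$ are again independent of $\mE$ given $\Xpop$, so Lemma~\ref{lem:sgsum} yields $\Op{\sqrt{\nu+b^2}\sqrt n\,\|\vw\|}$. We then bound
\begin{equation*}
  \|\vw\| \le \|\vu\|\,\|\Apop\vv\|_\infty / \min_i\dtilde_i^2
  \quad\text{and}\quad
  \|\Apop\vv\|_\infty\le \|\Xpop\|_{\tti}\|\Xpop\|\|\vv\| .
\end{equation*}
Combining Assumption~\ref{assum:latentpositions} with \eqref{eq:assum:XandW:spectrumUB}, the goal is to show that
\begin{equation*}
  \frac{ \sqrt{\nu+b^2}\, n \,\|\Xpop\|_{\tti} }{ \min_i \dtilde_i^2 } = o(n^{-1/2}) .
\end{equation*}
I expect this bookkeeping, checking that Assumptions~\ref{assum:Apop:spectrum}--\ref{assum:latentpositions} indeed deliver the needed rate, to be the delicate step. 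If it falls short, a fallback is to bound $\|\Apop\vv\|_\infty$ using row sums of $\Apop$ and degree regularity.
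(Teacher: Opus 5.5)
Your proposal is correct and is essentially the paper's proof. It uses the same three-way split of $\G-\Gtilde$, concentration for $\vu^\top\Dtilde^{-1}(\A-\Apop)\vv$, and a crude spectral bound for $\vu^\top(\D^{-1}-\Dtilde^{-1})(\A-\Apop)\vv$. It also uses the same first-order expansion of $\D^{-1}-\Dtilde^{-1}$, whose linear part is the linear form $\onevec^\top(\A-\Apop)\mR\vv$ with $(\mR\vv)_i = u_i[\Apop\vv]_i/\dtilde_i^2$ (your $w_i$), controlled through $\|\Xpop\|_{\tti}$.

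The bookkeeping you left open closes directly from the stated assumptions. The linear part is $\Op{\kappa(\nu+b^2)n\log n/\min_i\dtilde_i^2}\,\|\vu\|\|\vv\|$, which is $\op{\|\vu\|\|\vv\|/\sqrt{n}}$ by \eqref{eq:assum:Xtti}, \eqref{eq:assum:XandW:spectrumUB} and \eqref{eq:deglb:2}. Your quadratic remainder factors as the left-hand side of \eqref{eq:assum:Grate:stronger} times $\sqrt{\nu+b^2}\sqrt{n}\log n/\min_i\dtilde_i = o(n^{-1/4})$, so no comparable-degrees assumption is needed.
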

\begin{proof}
  Recalling the definitions of $\G$ and $\Gtilde$ from Section~\ref{subsec:Gdefs}, the triangle inequality yields
  \begin{equation} \label{eq:uGGtildev:tristart}
    \left| \vu^\top \left( \G - \Gtilde \right) \vv \right|
    \le
    \left| \vu^\top \left( \D^{-1} - \Dtilde^{-1} \right) \A \vv \right|
    +
    \left| \vu^\top \Dtilde^{-1} \left( \A - \Apop \right) \vv \right| .
  \end{equation}

  Using the assumption that $\A-\Apop$ is independent of $\vu$ and $\vv$ conditional on $\Xpop$, Bernstein's inequality yields
  \begin{equation*} %
    \left| \vu^\top \Dtilde^{-1} \left( \A - \Apop \right) \vv \right|
    \le C\sqrt{ \sum_{i,j} \frac{ (\nu+b^2) u_i^2 v_j^2 \log n }{ \dtilde_i^2 } }
    \le \frac{C \| \vu \| \| \vv \| \sqrt{\nu+b^2} \log n}{\min_{i \in[n]} \dtilde_i } .
  \end{equation*}
  Using our growth assumption in Equation~\eqref{eq:deglb:2},
  \begin{equation} \label{eq:uGGtildev:Eterm:littleoh}
    \left| \vu^\top \Dtilde^{-1} \left( \A - \Apop \right) \vv \right|
    = \op{ \frac{ \| \vu \| \| \vv \| }{ \sqrt{n} } } .
  \end{equation}

  Applying the triangle inequality,
  \begin{equation} \label{eq:uGGtildev:Dterm:triangle}
    \left| \vu^\top \left( \D^{-1} - \Dtilde^{-1} \right) \A \vv \right|
    \le
    \left| \vu^\top \left( \D^{-1} - \Dtilde^{-1} \right) \Apop \vv \right|
    +
    \left| \vu^\top \left( \D^{-1} - \Dtilde^{-1} \right) \left( \A - \Apop \right) \vv \right| .
  \end{equation}

  Applying submultiplicativity of the norm followed by Lemmas~\ref{lem:E:spectral} and~\ref{lem:degconc} and using our growth assumption in Equation~\eqref{eq:deglb:2},
  \begin{equation*} %
    \left| \vu^\top \left( \D^{-1} - \Dtilde^{-1} \right) \left( \A - \Apop \right) \vv \right|
    \le
    \| \vu \| \| \vv \| \left\| \A - \Apop \right\|
    \max_{i \in [n]} \frac{ | d_i - \dtilde_i | }{ d_i \dtilde_i }
    \le
    \frac{ C \| \vu \| \| \vv \| (\nu+b^2) n \log^2 n }
    { \min_{i \in [n]} \dtilde_i^2 } .
  \end{equation*}
  Applying our growth assumption in Equation~\eqref{eq:deglb:2},
  \begin{equation} \label{eq:uGGtildev:Dterm:tri:term2:littleoh}
    \left| \vu^\top \left( \D^{-1} - \Dtilde^{-1} \right) \left( \A - \Apop \right) \vv \right|
    = \op{ \frac{ \| \vu \| \| \vv \| }{ \sqrt{n} } } .
  \end{equation}

  Recalling $\Gtilde = \Dtilde^{-1} \Apop$, factoring appropriately and applying the triangle inequality,
  \begin{equation} \label{eq:uGGtildev:Dterm:tri:term1:expand}
    \left| \vu^\top \!\left( \D^{-1} \! - \! \Dtilde^{-1} \right) \Apop \vv \right|
    \le
    \left| \vu^\top \Dtilde^{-1} \! \left( \D \!-\! \Dtilde \right) \Gtilde \vv \right|
    +
    \left| \vu^\top \! \left(\D^{-1} \! - \!\Dtilde^{-1} \right)
    \left( \D \!-\! \Dtilde \right) \Gtilde \vv \right| .
  \end{equation}
  By submultiplicativity of the norm and the Cauchy-Schwarz inequality,
  \begin{equation*} %
    \left| \vu^\top \! \left(\D^{-1} \!-\! \Dtilde^{-1} \right) \!
    \left( \D \!-\! \Dtilde \right) \Gtilde \vv \right|
    \le
    \| \vu \| \| \vv \| \left\| \D^{-1} - \Dtilde^{-1} \right\|
    \left\| \D-\Dtilde \right\| ,
  \end{equation*}
  where we have used the fact that $\| \Gtilde \| = 1$.
  Applying Lemmas~\ref{lem:degconc} and~\ref{lem:degrecip:conc} and using our growth assumption in Equation~\eqref{eq:deglb:2},
  \begin{equation}  \label{eq:uGGtildev:Dterm:tri:term1:b:littleoh}
    \left| \vu^\top \left(\D^{-1} - \Dtilde^{-1} \right)
    \left( \D - \Dtilde \right) \Gtilde \vv \right|
    \le
    \frac{ C \| \vu \| \| \vv \| (\nu + b^2) n \log^2 n }{ \min_{i \in [n]} \dtilde_i^2 }
    = \op{ \frac{ \| \vu \| \| \vv \| }{ \sqrt{n} } } .
  \end{equation}

  Expanding the matrix-vector products and rearranging,
  \begin{equation} \label{eq:uGGtildev:DinvDgapDinv:quad}
    \vu^\top \Dtilde^{-1} \left( \D - \Dtilde \right) \Gtilde \vv
    = \sum_{i=1}^n \sum_{j=1}^n
    \frac{ \left( d_i - \dtilde_i \right) u_i }{ \dtilde_i^2 } \Xpop_i^\top \Xpop_j v_j
  \end{equation}
  Define the matrix $\mR \in \R^{n \times n}$ according to
  \begin{equation} \label{eq:def:R}
    R_{ij} = \frac{ u_i \Xpop_i^\top \Xpop_j }{ \dtilde_i^2 }.
  \end{equation}
  Recalling that the degrees are the row sums of the adjacency matrices, we can rewrite Equation~\eqref{eq:uGGtildev:DinvDgapDinv:quad} as
  \begin{equation*}
    \vu^\top \Dtilde^{-1} \left( \D - \Dtilde \right) \Gtilde \vv
    = \left[ \left( \A - \Apop \right) \onevec \right]^\top \mR \vv
    = \onevec^\top \left( \A - \Apop \right) \mR \vv.
  \end{equation*}
  Thus, expanding out the product,
  \begin{equation*}
    \vu^\top \Dtilde^{-1} \left( \D - \Dtilde \right) \Gtilde \vv
    = \sum_{i=1}^n \sum_{j=1}^n \left( \A - \Apop \right)_{ij}
    \left(\mR \vv\right)_j
    = \sum_{i<j} \left( \A - \Apop \right)_{ij}
    \left[ \left(\mR \vv\right)_j + \left(\mR \vv\right)_i \right],
  \end{equation*}
  where we have used the fact that $\A-\Apop$ is symmetric with zero diagonal by assumption (though note that the case of non-zero diagonal can be handled straightforwardly). Since the entries of $\A - \Apop$ are $(\nu,b)$-subgamma random variables, conditionally independent (up to symmetry) given $\vv$ and $\mR$, standard concentration inequalities imply that with high probability,
  \begin{equation*}
    \left| \vu^\top \Dtilde^{-1} \left( \D - \Dtilde \right) \Gtilde \vv \right|
    \le
    C \sqrt{ (\nu\!+\!b^2) \sum_{i<j} \left[ \left(\mR \vv\right)_j + \left(\mR \vv\right)_i \right]^2 }
    \le C \sqrt{ (\nu\!+\!b^2) \sum_{i=1}^n \left(\mR \vv\right)_i^2 n \log n },
  \end{equation*}
  where we have used the inequality $(x+y)^2 \le 2(x^2+y^2)$. It follows that, with high probability,
  \begin{equation*}
    \left| \vu^\top \Dtilde^{-1} \! \left( \D \!-\! \Dtilde \right)
    \Gtilde \vv \right|
    \le
    C \sqrt{\nu\!+\!b^2} \left\| \mR \vv \right\| \!\sqrt{n} \log n
    \le
    C \| \vv \| \sqrt{\nu\!+\!b^2}
    \sqrt{ \sum_{i,j} \frac{ u_i^2 (\Xpop_i^\top \Xpop_j)^2 }{ \dtilde_i^4 } } \sqrt{n} \log n .
  \end{equation*}
  Using the Cauchy-Schwarz inequality and the fact that all summands inside the square root are non-negative,
  \begin{equation}  \label{eq:uGGtildev:Dterm:tri:term1:a:littleoh}
    \left| \vu^\top \Dtilde^{-1} \left( \D - \Dtilde \right) \Dtilde^{-1} \Apop \vv \right|
    \le
    \frac{ C \| \vu \| \| \vv \| \sqrt{\nu+b^2} \| \Xpop \|_F \| \Xpop \|_{\tti} \sqrt{n} \log n }
    { \min_{i \in [n]} \dtilde_i^2 }
    = \op{ \frac{ \|\vu\| \|\vv\| }{ \sqrt{n} } } ,
  \end{equation}
  where the equality follows from our growth assumptions in Equations~\eqref{eq:deglb:2},~\eqref{eq:assum:Xtti} and~\eqref{eq:assum:XandW:spectrumUB}.

  Applying Equations~\eqref{eq:uGGtildev:Dterm:tri:term1:b:littleoh} and~\eqref{eq:uGGtildev:Dterm:tri:term1:a:littleoh} to Equation~\eqref{eq:uGGtildev:Dterm:tri:term1:expand},
  \begin{equation} \label{eq:uGGtildev:Dterm:tri:term1:littleoh}
    \left| \vu^\top \left( \D^{-1} - \Dtilde^{-1} \right) \Apop \vv \right|
    = \op{ \frac{ \|\vu\| \|\vv\| }{ \sqrt{n} } } .
  \end{equation}

  Applying Equations~\eqref{eq:uGGtildev:Dterm:tri:term2:littleoh}
  and~\eqref{eq:uGGtildev:Dterm:tri:term1:littleoh}
  to Equation~\eqref{eq:uGGtildev:Dterm:triangle},
  \begin{equation*} %
    \left| \vu^\top \left( \D^{-1} - \Dtilde^{-1} \right) \A \vv \right|
    = \op{ \frac{ \| \vu \| \| \vv \| }{ \sqrt{n} } } ,
  \end{equation*}
  and applying this and Equation~\eqref{eq:uGGtildev:Eterm:littleoh} to Equation~\eqref{eq:uGGtildev:tristart} completes the proof.
\end{proof}

\begin{lemma} \label{lem:GhatGtilde:quad}
  Suppose that $\A$ follows a sub-gamma model as in Definition~\ref{def:subgamma-network} and suppose that Assumptions~\ref{assum:Apop:spectrum} through~\ref{assum:stronger4dhat} hold.
  Let $\vu,\vv \in \R^n$ be such that $\A - \Apop$ is independent of $\vv,\vu$ conditional on $\Xpop$.
  Then under either Assumption~\ref{assum:peer-oracle} or~\ref{assum:latent-oracle},
  \begin{equation*}
    \left| \vu^\top \left( \Ghat - \Gtilde \right) \vv \right|
    = \op{ \frac{ \| \vu \| \| \vv \| }{ \sqrt{n} } } .
  \end{equation*}
\end{lemma}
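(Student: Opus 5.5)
We mirror the proof of Lemma~\ref{lem:GGtilde:quad}, with $\Dhat$ and $\Apophat$ in place of $\D$ and $\A$. The main difference is that $\Apophat - \Apop$ is no longer a matrix of independent entries. Instead we control it through the decomposition of $\Xhat\Q^\top - \Xpop$ in Lemma~\ref{lem:XhatXTB:ctrl}, which allows arbitrary test matrices $\mB$ independent of $\A-\Apop$ given $\Xpop$.

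We begin with the triangle inequality:
\begin{equation*}
  \left| \vu^\top ( \Ghat - \Gtilde ) \vv \right|
  \le \left| \vu^\top ( \Dhat^{-1} - \Dtilde^{-1} ) \Apophat \vv \right|
  + \left| \vu^\top \Dtilde^{-1} ( \Apophat - \Apop ) \vv \right| .
\end{equation*}
For the second term, write $\Apophat - \Apop = (\Xhat\Q^\top - \Xpop)\Xpop^\top + \Xpop(\Xhat\Q^\top-\Xpop)^\top + (\Xhat\Q^\top-\Xpop)(\Xhat\Q^\top-\Xpop)^\top$, using that $\Xhat\Q^\top\Q\Xhat^\top = \Xhat\Xhat^\top$.

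\begin{itemize}
\item \emph{First piece.} Apply Lemma~\ref{lem:XhatXTB:ctrl} with $\mB = \Dtilde^{-1}\vu$, which has norm at most $\|\vu\|/\min_i\dtilde_i$, and use $\|\Xpop^\top\vv\| \le \|\Xpop\|\|\vv\| = \Op{\sqrt n}\|\vv\|$. This gives a bound of order
\begin{equation*}
\|\vu\|\|\vv\|\sqrt n\left(\frac{\sqrt{\nu+b^2}\log n}{\sqrt{\spop_d}} + \frac{\kappa(\nu+b^2)n\log^2 n}{\spop_d^{3/2}}\right)\Big/\min_i\dtilde_i .
\end{equation*}
This is $o(\|\vu\|\|\vv\|/\sqrt n)$ because $\sqrt{\spop_d}\min_i\dtilde_i \gtrsim$ the relevant quantities under Equations~\eqref{eq:deglb:2}, \eqref{eq:deglb:2:kappa} and \eqref{eq:assum:spop:growth}.
\item \emph{Second piece.} This is symmetric: take $\mB = \vv$ and use $\|\Xpop^\top \Dtilde^{-1}\vu\|$.
\item \emph{Quadratic piece.} Apply Lemma~\ref{lem:XhatXTB:ctrl} twice, once with $\mB=\Dtilde^{-1}\vu$ and once with $\mB=\vv$. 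The product of the two bounds is smaller still, using Equation~\eqref{eq:assum:XhatX:rate:2}.
\end{itemize}

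For the first term, split $\Apophat = \Apop + (\Apophat - \Apop)$. The part $\vu^\top(\Dhat^{-1}-\Dtilde^{-1})(\Apophat-\Apop)\vv$ is bounded crudely by $\|\vu\|\|\vv\|\max_i|\dhat_i^{-1}-\dtilde_i^{-1}|\cdot 2\|\A-\Apop\|$. By Lemma~\ref{lem:dhatdtilde} and Lemma~\ref{lem:E:spectral} this is at most $C\sqrt\kappa(\nu+b^2)n\log^2 n/\min_i\dtilde_i^2$, which is $o(n^{-1/2})$ by Equation~\eqref{eq:deglb:2}, with the $\sqrt\kappa$ absorbed via Equation~\eqref{eq:deglb:2:kappa}. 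For the part with $\Apop$, factor as in Equation~\eqref{eq:uGGtildev:Dterm:tri:term1:expand}:
\begin{equation*}
\vu^\top\Dtilde^{-1}(\Dtilde-\Dhat)\Gtilde\vv + \vu^\top(\Dhat^{-1}-\Dtilde^{-1})(\Dtilde-\Dhat)\Gtilde\vv .
\end{equation*}
Lemma~\ref{lem:dhatdtilde} bounds the second summand by $C\kappa(\nu+b^2)n\log^2 n\,\|\vu\|\|\vv\|/\min_i\dtilde_i^2$, which is negligible as before.

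The first summand is the main obstacle. In Lemma~\ref{lem:GGtilde:quad} it was a linear form in independent noise entries, but here $\dhat_i-\dtilde_i = \ve_i^\top(\Apophat-\Apop)\onevec$. We rewrite it as $\onevec^\top(\Apophat-\Apop)\mR\vv$, with $\mR$ as in Equation~\eqref{eq:def:R}. This is again a bilinear form in $\Apophat-\Apop$, with deterministic vectors $\onevec$ and $\mR\vv$ given $\Xpop$. We then use the same three-piece expansion, applying Lemma~\ref{lem:XhatXTB:ctrl} with $\mB=\onevec$ and $\mB=\mR\vv$. The inputs are $\|\onevec\|=\sqrt n$ and
\begin{equation*}
\|\mR\vv\| \le \|\vu\|\|\vv\|\|\Xpop\|_F\|\Xpop\|_{\tti}/\min_i\dtilde_i^2 ,
\end{equation*}
which follows from the Cauchy--Schwarz step in Equation~\eqref{eq:uGGtildev:Dterm:tri:term1:a:littleoh}. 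Combined with $\|\Xpop^\top\onevec\|\le\sqrt n\|\Xpop\|$, Assumption~\ref{assum:latentpositions}, and Equation~\eqref{eq:assum:Grate:stronger} (this is where the $n^{-1/4}$ slack should be consumed), the result should be $o(\|\vu\|\|\vv\|/\sqrt n)$.

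We expect to need the stronger Assumption~\ref{assum:stronger4dhat} to absorb the $\kappa$ factors arising in the second term of Lemma~\ref{lem:XhatXTB:ctrl}. Checking that the bookkeeping of these rates closes is the delicate step.
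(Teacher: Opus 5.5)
Your proposal is correct and follows the paper's proof step for step. It uses the same split into the $(\Dhat^{-1}-\Dtilde^{-1})\Apophat$ and $\Dtilde^{-1}(\Apophat-\Apop)$ terms, the same three-piece expansion of $\Apophat-\Apop$ controlled by Lemma~\ref{lem:XhatXTB:ctrl}, the same crude bound on $\vu^\top(\Dhat^{-1}-\Dtilde^{-1})(\Apophat-\Apop)\vv$, and the same rewriting of the remaining degree term as $\onevec^\top(\Apophat-\Apop)\mR\vv$. Your explicit treatment of that last bilinear form via $\mB=\onevec$ and $\mB=\mR\vv$ is what the paper's final step needs; its write-up instead re-bounds $\vu^\top\Dtilde^{-1}(\Apophat-\Apop)\vv$ at that point. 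The bookkeeping closes as you anticipate: by Assumption~\ref{assum:stronger4dhat} the first term of Lemma~\ref{lem:XhatXTB:ctrl} dominates, and Assumptions~\ref{assum:latentpositions} and~\ref{assum:stronger4dhat} together with $\spop_d \le \|\Xpop\|^2 = \Op{n}$ give $\sqrt{n}\,\|\Xpop\|_{\tti}/\sqrt{\spop_d} = \Op{1}$. Equation~\eqref{eq:assum:Grate:stronger} then yields a bound of order $\op{n^{-1/4}}\,\|\vu\|\|\vv\|/\sqrt{n}$.
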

\begin{proof}
  Recalling $\Ghat$ and $\Gtilde$ as given in Equation~\eqref{eq:def:Ghat} and Section~\ref{subsec:Gdefs}, respectively,
  \begin{equation} \label{eq:GhatGtilde:tri}
    \left| \vu^\top \left( \Ghat - \Gtilde \right) \vv \right|
    \le
    \left| \vu^\top \left(\Dhat^{-1} - \mDtilde^{-1} \right) \Apophat \vv \right|
    + \left| \vu^\top \Dtilde^{-1} \left( \Apophat - \Apop \right) \vv \right| .
  \end{equation}

  By definition of $\Apophat$ and $\Apop$, the triangle inequality yields
  \begin{equation} \label{eq:uDinvAdiffv:tri}
    \begin{aligned}
      \left| \vu^\top \Dtilde^{-1} \left( \Apophat - \Apop \right) \vv \right|
       & \le
      \left| \vu^\top \Dtilde^{-1} \left( \Q \Xhat - \Xpop \right) \Xpop^\top \vv
      \right|
      +
      \left| \vu^\top \Dtilde^{-1} \Xpop \left( \Q \Xhat - \Xpop \right)^\top \vv
      \right|       \\
       & ~~~~~~~~~+
      \left| \vu^\top \Dtilde^{-1} \left( \Q \Xhat - \Xpop \right)
      \left( \Q \Xhat - \Xpop \right)^\top \vv \right| .
    \end{aligned} \end{equation}
  By submultiplicativity and Lemma~\ref{lem:XhatXTB:ctrl},
  \begin{equation*}
    \left| \vu^\top \Dtilde^{-1} \left( \Q \Xhat - \Xpop \right) \Xpop^\top \vv
    \right|
    \le
    \frac{ C \| \vu \| \| \vv \| \| \Xpop \| }{ \min_i \dtilde_i }
    \left( \frac{ \sqrt{ \nu + b^2 } \log n }{ \sqrt{\spop_d} }
    +
    \frac{ \kappa(\Apop) (\nu+b^2) n \log^2 n }{ \spop_d^{3/2} }
    \right),
  \end{equation*}
  and our growth assumptions in Equations~\eqref{eq:deglb:2} and~\eqref{eq:deglb:2:kappa}
  yield
  \begin{equation*}
    \left| \vu^\top \Dtilde^{-1} \left( \Q \Xhat - \Xpop \right) \Xpop^\top \vv
    \right|
    = \op{ \frac{ \| \vu \| \| \vv \| }{ \sqrt{n} } }.
  \end{equation*}
  A near-identical argument yields
  \begin{equation*}
    \left| \vu^\top \Dtilde^{-1} \Xpop \left( \Xhat \Q - \Xpop \right)^\top \vv
    \right|
    = \op{ \frac{ \| \vu \| \| \vv \| }{ \sqrt{n} } }.
  \end{equation*}
  Applying the above two bounds to Equation~\eqref{eq:uDinvAdiffv:tri},
  \begin{equation} \label{eq:uDinvAdiffv:step2}
    \left| \vu^\top \Dtilde^{-1} \left( \Apophat - \Apop \right) \vv \right|
    \le
    \left| \vu^\top \Dtilde^{-1} \left( \Q \Xhat - \Xpop \right)
    \left( \Xhat \Q - \Xpop \right)^\top \vv \right|
    + \op{ \frac{ \| \vu \| \| \vv \| }{ \sqrt{n} } }.
  \end{equation}

  By Cauchy-Schwarz and two applications of Lemma~\ref{lem:XhatXTB:ctrl},
  \begin{equation} \label{eq:GhatGtilde:halfway:preliminary}
    \left| \vu^\top \Dtilde^{-1} \! \left( \!\Xhat \Q \!-\! \Xpop \!\right)
    \! \left( \! \Q \Xhat \!-\! \Xpop \!\right)^\top \!\! \vv \right|
    \le
    \frac{ C \| \vu \| \| \vv \| }{ \min_i \dtilde_i } \!
    \left( \frac{ ( \nu \!+\! b^2 ) \log^2 \!n }{ \spop_d }
    +
    \frac{ \kappa^2(\Apop) (\nu\!+\!b^2)^2 n^2 \log^4 \!n }
      { \spop_d^3 } \!
    \right) .
  \end{equation}

  Since the largest eigenvalue of the adjacency matrix is an upper bound on the minimum degree, %
  we have
  \begin{equation*}
    \frac{ \kappa^2(\Apop) (\nu + b^2)^2 n^2 \log^4 \!n }
    { \spop_d^3 \min_i \dtilde_i }
    =
    \frac{ \min_i \dtilde_i }{ \spop_d }
    \frac{ \kappa^2(\Apop) (\nu + b^2)^2 n^2 \log^4 \!n }
    { \spop_d^2 \min_i \dtilde_i^2 }
    \le
    \frac{ \kappa^3(\Apop) (\nu + b^2)^2 n^2 \log^4 \!n }
    { \spop_d^2 \min_i \dtilde_i^2 },
  \end{equation*}
  and our assumption in Equation~\eqref{eq:deglb:2:kappa} implies
  \begin{equation} \label{eq:deglb:2:implied}
    \frac{ \kappa^2(\Apop) (\nu + b^2)^2 n^2 \log^4 \!n }
    { \spop_d^3 \min_i \dtilde_i }
    = \op{ \frac{1}{n} } .
  \end{equation} %
  Applying this to Equation~\eqref{eq:GhatGtilde:halfway:preliminary},
  along with our growth assumption in Equation~\eqref{eq:deglb:2:kappa}
  \begin{equation*}
    \left| \vu^\top \Dtilde^{-1}  \left( \Xhat \Q - \Xpop \right)
    \left(  \Q \Xhat - \Xpop \right)^\top \vv \right|
    = \op{ \frac{ \| \vu \| \| \vv \| }{ \sqrt{n} } }.
  \end{equation*}
  Applying this to Equation~\eqref{eq:uDinvAdiffv:step2},
  \begin{equation*} %
    \left| \vu^\top \Dtilde^{-1} \left( \Apophat - \Apop \right) \vv \right|
    = \op{ \frac{ \| \vu \| \| \vv \| }{ \sqrt{n} } }.
  \end{equation*}
  Applying this to Equation~\eqref{eq:GhatGtilde:tri} in turn,
  \begin{equation} \label{eq:GhatGtilde:halfway}
    \left| \vu^\top \left( \Ghat - \Gtilde \right) \vv \right|
    \le
    \left| \vu^\top \left( \Dhat^{-1} - \mDtilde^{-1} \right) \Apophat \vv \right|
    + \op{ \frac{ \| \vu \| \| \vv \| }{ \sqrt{n} } }.
  \end{equation}

  By the triangle inequality,
  \begin{equation} \label{eq:dhatdtilde:otherHalf}
    \left| \vu^\top \left( \Dhat^{-1} - \mDtilde^{-1} \right) \Apophat \vv \right|
    \le
    \left| \vu^\top \left( \Dhat^{-1} - \mDtilde^{-1} \right)
    \left( \Apophat - \Apop \right) \vv \right|
    + \left| \vu^\top \left( \Dhat^{-1} - \mDtilde^{-1} \right)
    \Apop \vv \right| .
  \end{equation}
  By submultiplicativity,
  \begin{equation*}
    \left| \vu^\top \left( \Dhat^{-1} - \mDtilde^{-1} \right )
    \left( \Apophat - \Apop \right) \vv \right|
    \le
    \| \vu \| \| \vv \| \left\| \Dhat^{-1} - \mDtilde^{-1} \right\|
    \left\| \Apophat - \Apop \right\| .
  \end{equation*}
  Upper bounding $\| \Apophat - \Apop \| \le 2\|\A-\Apop\|$, since $\Apophat$ is a truncation of $\A$, Lemma~\ref{lem:E:spectral} yields
  \begin{equation*} \begin{aligned}
      \left| \vu^\top \left( \Dhat^{-1} - \mDtilde^{-1} \right )
      \left( \Apophat - \Apop \right) \vv \right|
       & \le C\| \vu \| \| \vv \|
      \left( \max_i \frac{ |\dhat_i - \dtilde_i| }{ \dtilde_i^2 } \right)
      \sqrt{ \nu + b^2 } \sqrt{n} \log n .
    \end{aligned} \end{equation*}
  Applying Lemma~\ref{lem:dhatdtilde} and using the growth assumption in Equation~\eqref{eq:deglb:2},
  \begin{equation*}
    \left| \vu^\top \left( \Dhat^{-1} - \mDtilde^{-1} \right )
    \left( \Apophat - \Apop \right) \vv \right|
    = \op{ \frac{ \| \vu \| \| \vv \| }{ \sqrt{n} } }.
  \end{equation*}
  Applying this to Equation~\eqref{eq:dhatdtilde:otherHalf} and applying the result to Equation~\eqref{eq:GhatGtilde:halfway},
  \begin{equation} \label{eq:GhatGtilde:quad:threeQuarters}
    \left| \vu^\top \left( \Ghat - \Gtilde \right) \vv \right|
    \le
    \left| \vu^\top \left( \Dhat^{-1} - \mDtilde^{-1} \right) \Apop \vv \right|
    + \op{ \frac{ \| \vu \| \| \vv \| }{ \sqrt{n} } }.
  \end{equation}

  Recalling $\Gtilde = \Dtilde^{-1} \Apop$, factoring appropriately and applying the triangle inequality yields
  \begin{equation} \label{eq:GhatGtilde:quad:threeQuarters:triangle}
    \left| \vu^\top \left( \Dhat^{-1} - \mDtilde^{-1} \right) \Apop \vv \right|
    \le
    \left| \vu^\top \Dtilde^{-1} \left( \Dhat - \mDtilde \right)
    \Gtilde \vv \right|
    +
    \left| \vu^\top \left( \Dhat^{-1} - \Dtilde^{-1} \right)
    \left( \Dhat - \mDtilde \right) \Gtilde \vv \right| .
  \end{equation}

  By submultiplicativity and using our growth assumption in Equation~\eqref{eq:deglb:2},
  \begin{equation*}
    \left| \vu^\top \left( \Dhat^{-1} - \Dtilde^{-1} \right)
    \left( \Dhat - \mDtilde \right) \Gtilde \vv \right|
    \le \frac{ C \| \vu \| \| \vv \| \left\| \Dhat - \Dtilde \right\|^2 }
    { \min_i \dtilde_i^2 } .
  \end{equation*}
  where we have used the fact that $\| \Gtilde \| = 1$.
  Applying Lemma~\ref{lem:dhatdtilde} and using the growth assumption in Equation~\eqref{eq:deglb:2}, it follows that
  \begin{equation} \label{eq:GhatGtilde:quad:threeQuarters:tri:easyTerm}
    \left| \vu^\top \left( \Dhat^{-1} - \Dtilde^{-1} \right)
    \left( \Dhat - \mDtilde \right) \Gtilde \vv \right|
    = \op{ \frac{ \| \vu \| \| \vv \| }{ \sqrt{n} } } .
  \end{equation}
  Defining $\mR \in \R^{n \times n}$ as in Equation~\eqref{eq:def:R}, recalling that the degrees $\dhat_i$ and $\dtilde_i$ are given by row-sums of $\Apophat$ and $\Apop$, respectively,
  \begin{equation*}
    \vu^\top \Dtilde^{-1} \left( \Dhat - \mDtilde \right) \Gtilde \vv
    = \onevec^\top \left( \Apophat - \Apop \right) \mR \vv .
  \end{equation*}
  By the triangle inequality,
  \begin{equation} \label{eq:GhatGtilde:quad:annoying}
    \begin{aligned}
      \left| \vu^\top \Dtilde^{-1} \left( \Apophat - \Apop \right) \vv \right|
       & \le
      \left| \vu^\top \Dtilde^{-1} \left( \Xhat \Q - \Xpop \right)
      \Xpop^\top \vv \right|
      +
      \left| \vu^\top \Dtilde^{-1} \Xpop
      \left( \Xhat \Q - \Xpop \right)^\top \vv \right| \\
       & ~~~~~~+
      \left| \vu^\top \Dtilde^{-1} \left( \Xhat \Q - \Xpop \right)
      \left( \Xhat \Q - \Xpop \right)^\top \vv \right| .
    \end{aligned} \end{equation}

  Applying submultiplicativity and Lemma~\ref{lem:XhatXTB:ctrl} with $\mB = \Dtilde^{-1} \vu$,
  \begin{equation*}
    \left| \vu^\top \Dtilde^{-1} \! \left( \Xhat \Q - \Xpop \right)
    \! \Xpop^\top \vv \right|
    \le \frac{ C\| \vu \| \| \vv \| }{ \min_i \dtilde_i }
    \left( \!
    \sqrt{ \kappa } \sqrt{ \nu + b^2 } \log n
    +
    \frac{ \kappa^{3/2} (\nu+b^2) n \log^2 n }{ \spop_d }
    \! \right)
    = \op{ \frac{ \| \vu \| \| \vv \| }{ \sqrt{n} } },
  \end{equation*}
  where we have used our growth assumptions in Equations~\eqref{eq:deglb:2} and~\eqref{eq:deglb:2:kappa}.
  A near-identical argument yields
  \begin{equation*}
    \left| \vu^\top \Dtilde^{-1} \Xpop
    \left( \Xhat \Q - \Xpop \right)^\top \vv \right|
    = \op{ \frac{ \| \vu \| \| \vv \| }{ \sqrt{n} } }.
  \end{equation*}
  Applying the above two displays to Equation~\eqref{eq:GhatGtilde:quad:annoying},
  \begin{equation} \label{eq:GhatGtilde:quad:annoying:2}
    \left| \vu^\top \Dtilde^{-1} \left( \Apophat - \Apop \right) \vv \right|
    \le
    \left| \vu^\top \Dtilde^{-1} \left( \Xhat \Q - \Xpop \right)
    \left( \Xhat \Q - \Xpop \right)^\top \vv \right|
    + \op{ \frac{ \| \vu \| \| \vv \| }{ \sqrt{n} } } .
  \end{equation}

  Applying Cauchy-Schwarz and invoking Lemma~\ref{lem:XhatXTB:ctrl} two more times,
  \begin{equation*} \begin{aligned}
    \left| \vu^\top \Dtilde^{-1} \left( \Xhat \Q - \Xpop \right)
    \left( \Xhat \Q - \Xpop \right)^\top \vv \right|
    &\le
    \frac{ C \| \vu \| \| \vv \| }{ \min_i \dtilde_i }
    \left( \!
    \frac{ ( \nu + b^2 ) \log^2 n }{ \spop_d }
    +
    \frac{ \kappa^2 (\nu+b^2)^2 n^2 \log^4 n }{ \spop_d^{3} }
    \! \right) \\
    &= \op{ \frac{ \| \vu \| \| \vv \| }{ \sqrt{n} } },
  \end{aligned} \end{equation*}
  where we have used our growth assumption in
  Equation~\eqref{eq:deglb:2:kappa} %
  and the bound in Equation~\eqref{eq:deglb:2:implied}. %
  Applying this to Equation~\eqref{eq:GhatGtilde:quad:annoying:2},
  \begin{equation*}
    \left| \vu^\top \Dtilde^{-1} \left( \Apophat - \Apop \right) \vv \right|
    \le
    \left| \vu^\top \Dtilde^{-1} \left( \Xhat \Q - \Xpop \right)
    \left( \Xhat \Q - \Xpop \right)^\top \vv \right|
    + \op{ \frac{ \| \vu \| \| \vv \| }{ \sqrt{n} } } .
  \end{equation*}
  Applying this to Equation~\eqref{eq:GhatGtilde:quad:threeQuarters} in turn completes the proof.
\end{proof}

\begin{lemma} \label{lem:GhatG:quad}
  Suppose that $\A$ follows a sub-gamma model as in Definition~\ref{def:subgamma-network} and suppose that Assumptions~\ref{assum:Apop:spectrum} through~\ref{assum:stronger4dhat} hold.
  Let $\vu,\vv \in \R^n$ be such that $\A - \Apop$ is independent of $\vv,\vu$ conditional on $\Xpop$.
  Then under either Assumption~\ref{assum:peer-oracle} or~\ref{assum:latent-oracle},
  \begin{equation*}
    \left| \vu^\top \left( \Ghat - \G \right) \vv \right|
    = \op{ \frac{ \| \vu \| \| \vv \| }{ \sqrt{n} } } .
  \end{equation*}
\end{lemma}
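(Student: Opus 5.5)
The plan mirrors the proof of Lemma~\ref{lem:GhatG:spectral}: insert $\Gtilde$ and apply the triangle inequality, giving
\begin{equation*}
  \left| \vu^\top \left( \Ghat - \G \right) \vv \right|
  \le
  \left| \vu^\top \left( \Ghat - \Gtilde \right) \vv \right|
  + \left| \vu^\top \left( \G - \Gtilde \right) \vv \right| .
\end{equation*}
Each right-hand term is a bilinear form in which the random operator is compared to $\Gtilde$. These have already been controlled, under the same hypothesis on $\vu,\vv$ (independence from $\A-\Apop$ conditional on $\Xpop$), in Lemmas~\ref{lem:GhatGtilde:quad} and~\ref{lem:GGtilde:quad}.

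First, I invoke Lemma~\ref{lem:GhatGtilde:quad}. Its hypotheses (sub-gamma model, Assumptions~\ref{assum:Apop:spectrum} through~\ref{assum:stronger4dhat}, and one of Assumptions~\ref{assum:peer-oracle} or~\ref{assum:latent-oracle}) match those of the present statement exactly. It gives $|\vu^\top(\Ghat-\Gtilde)\vv| = \op{\|\vu\|\|\vv\|/\sqrt{n}}$. Second, I invoke Lemma~\ref{lem:GGtilde:quad}. It requires only Assumptions~\ref{assum:Apop:spectrum} through~\ref{assum:latentpositions}, a subset of what we have, together with the same oracle assumption. It gives $|\vu^\top(\G-\Gtilde)\vv| = \op{\|\vu\|\|\vv\|/\sqrt{n}}$. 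Since a sum of two $\op{r_n}$ terms is $\op{r_n}$, the claim follows.

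I expect little real obstacle. The only point to check is that both lemmas apply to the same pair $(\vu,\vv)$ with the same normalization $\|\vu\|\|\vv\|/\sqrt{n}$. Both lemmas state their bounds for arbitrary such $\vu,\vv$, and the conditional independence requirement is identical, so no additional argument is needed. If one were worried that the two high-probability events differ, a union bound over the two events suffices, because each holds with probability tending to one.
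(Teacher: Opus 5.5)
Your proposal is correct and is exactly the paper's proof: insert $\Gtilde$ via the triangle inequality, then bound the two resulting bilinear forms with Lemmas~\ref{lem:GhatGtilde:quad} and~\ref{lem:GGtilde:quad}. The one point that needs checking is that both lemmas share the same conditional-independence hypothesis on $\vu,\vv$ and the same $\|\vu\|\|\vv\|/\sqrt{n}$ normalization, and it holds.
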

\begin{proof}
  By the triangle inequality,
  \begin{equation*}
    \left| \vu^\top \left( \Ghat - \G \right) \vv \right|
    \le \left| \vu^\top \left( \Ghat - \Gtilde \right) \vv \right|
    + \left| \vu^\top \left( \G - \Gtilde \right) \vv \right|.
  \end{equation*}
  Applying Lemmas~\ref{lem:GhatGtilde:quad} and~\ref{lem:GGtilde:quad} completes the proof.
\end{proof}

\section{Controlling the Responses} \label{apx:Yctrl}

\begin{lemma} \label{lem:Y:latcon}
	Under the latent contagion model in Equation~\eqref{eq:lim-latent-red},
	suppose that Assumptions~\ref{assum:Apop:spectrum} through~\ref{assum:latentpositions} and Assumption~\ref{assum:latent-oracle} hold.
	Then
	\begin{equation*}
		\left\| \Y \right\| = \Op{ \sqrt{n} } .
	\end{equation*}
\end{lemma}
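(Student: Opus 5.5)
Under the latent contagion model, Equation~\eqref{eq:lim-latent-red} gives $\Y = (\mI - \thetay \Gtilde)^{-1}(\1_n \thetanaught + \W \thetaw + \Xpop \thetax + \be)$. The plan is to bound the operator norm of $(\mI - \thetay \Gtilde)^{-1}$ by a constant and then bound the Euclidean norm of each of the four pieces inside the parentheses by $\Op{\sqrt{n}}$. The triangle inequality and submultiplicativity then give the claim.

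For the deterministic pieces, $\norm{\1_n \thetanaught} = |\thetanaught|\sqrt{n}$. By Equation~\eqref{eq:assum:XandW:spectrumUB}, which follows from item~\ref{item:assum:oracle:boundedCovariates} of Assumption~\ref{assum:latent-oracle}, we have $\norm{\W\thetaw} \le \norm{\W}\norm{\thetaw} = \Op{\sqrt{n}}$ and likewise $\norm{\Xpop\thetax} = \Op{\sqrt{n}}$, since the parameters are fixed. Alternatively, the entries of $\W$ and $\Xpop$ are uniformly bounded and $p,d$ are constant, so each row of $\W\thetaw + \Xpop\thetax$ is bounded and the vector norm is $O(\sqrt{n})$ almost surely.

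For the noise piece, Lemma~\ref{lem:epstilde:ctl} (with $\beta = \thetay$, and $\be$ independent of $\Gtilde$) directly gives $\norm{(\mI - \thetay\Gtilde)^{-1}\be} = \Op{\sqrt{n/(1-\thetay)^2}} = \Op{\sqrt{n}}$, since $|\thetay|<1$ is fixed and $\sigmaeps^2<\infty$ by item~\ref{item:assum:oracle:epsilonMoments}.

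The main obstacle is the operator norm of $(\mI - \thetay\Gtilde)^{-1}$ applied to the deterministic part. Lemma~\ref{lem:IbgyG:invertible} controls eigenvalues, but $\Gtilde$ is not symmetric, so eigenvalue bounds do not directly control the spectral norm. I would handle this with a Neumann series, $(\mI - \thetay\Gtilde)^{-1} = \sum_{k\ge0}\thetay^k\Gtilde^k$, together with the bounds $\norm{\Gtilde^k} \le \norm{\Gtilde}_\infty^{k/2}\norm{\Gtilde}_1^{k/2}$. Here $\norm{\Gtilde}_\infty = 1$ because $\Gtilde$ is row-stochastic with nonnegative entries, assuming $\Apop_{ij} = \Xpop_i^\top\Xpop_j \ge 0$ as an expectation of a nonnegative adjacency matrix. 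The column sums of $\Gtilde$ are uniformly bounded by item~1 of Assumption~\ref{assum:latent-oracle}. If the column-sum bound $c$ exceeds $1$, geometric decay is lost, so I would instead bound the $\infty$-norm and $1$-norm of the inverse separately. In $\infty$-norm, the series gives $\norm{(\mI-\thetay\Gtilde)^{-1}}_\infty \le 1/(1-|\thetay|)$. In $1$-norm, uniform boundedness of the column sums of the inverse is the standard Kelejian--Prucha/Lee condition implicit in the oracle assumptions, which I would invoke here. Interpolating via $\norm{M} \le \sqrt{\norm{M}_1\norm{M}_\infty}$ then gives an $O(1)$ operator-norm bound. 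Combining this with the $\Op{\sqrt{n}}$ bounds above yields $\norm{\Y} = \Op{\sqrt{n}}$.
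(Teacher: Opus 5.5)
Your decomposition is the paper's: write $\Y = (\mI-\thetay\Gtilde)^{-1}L + (\mI-\thetay\Gtilde)^{-1}\be$ with $L = \1_n\thetanaught + \W\thetaw + \Xpop\thetax$. Both you and the paper control the noise term by Lemma~\ref{lem:epstilde:ctl}, and the first term by an $O(1)$ bound on $\|(\mI-\thetay\Gtilde)^{-1}\|$ together with Equation~\eqref{eq:assum:XandW:spectrumUB}. The paper gets that operator-norm bound by citing Lemma~\ref{lem:IbgyG:invertible}. Your objection is correct: an eigenvalue bound does not control the spectral norm of the non-symmetric $\Gtilde$.

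Your repair, however, has a gap. Uniformly bounded column sums of $(\mI-\thetay\Gtilde)^{-1}$ is the Kelejian--Prucha condition, but it is not a hypothesis here. Assumption~\ref{assum:latent-oracle} bounds the row and column sums of $\Gtilde$ only and says nothing about the inverse beyond nonsingularity. As you note, a column-sum bound $c>1$ on $\Gtilde$ gives only $\|\Gtilde^k\|_1 \le c^k$, which the Neumann series cannot absorb once $|\thetay| c \ge 1$. So the step your argument hinges on is an added assumption, not a consequence of the stated ones.

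You do not need the spectral norm, and you already have both ingredients. Since $\Gtilde$ is row-stochastic with nonnegative entries, $\|(\mI-\thetay\Gtilde)^{-1}\|_\infty \le 1/(1-|\thetay|)$ in the maximum-row-sum norm. Also $\|L\|_\infty \le C$ almost surely, by the bounded-covariate condition (item~\ref{item:assum:oracle:boundedCovariates}) with $p,d$ fixed. Hence
\[
\|(\mI-\thetay\Gtilde)^{-1}L\| \le \sqrt{n}\,\|(\mI-\thetay\Gtilde)^{-1}L\|_\infty \le \frac{C\sqrt{n}}{1-|\thetay|}.
\]
The same norm handles the noise. For any $M$, $\|M\|_F^2 \le \sum_i \bigl(\sum_j |M_{ij}|\bigr)^2 \le n\|M\|_\infty^2$. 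With $\be$ independent of $\Gtilde$, this gives $\bbE[\|(\mI-\thetay\Gtilde)^{-1}\be\|^2 \mid \Xpop] \le \sigmaeps^2 n/(1-|\thetay|)^2$, and Markov's inequality gives $\Op{\sqrt{n}}$.

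Prove this directly rather than citing Lemma~\ref{lem:epstilde:ctl}. That lemma's proof bounds the trace via Lemma~\ref{lem:IbgyG:invertible}, which is the same eigenvalue-to-singular-value step you rejected for the deterministic term, so your proposal is currently inconsistent on this point. With this change your argument is a rigorous version of the paper's proof. It uses only $|\thetay|<1$, bounded covariates, finite $\sigmaeps^2$, and nonnegativity of $\Apop$.
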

\begin{proof}
	Recalling Equation~\eqref{eq:lim-latent-red} and applying the triangle inequality and submultiplicativity,
	\begin{equation*}
		\left\| \Y \right\|
		\le
		\left\| \paren*{\mI - \thetay \Gtilde}^{-1} \right\|
		\left\| \1_n \thetanaught + \W \thetaw + \Xpop \thetax \right\|
		+
		\left\| \paren*{\mI - \thetay \Gtilde}^{-1} \be \right\| .
	\end{equation*}
	Controlling the first term using Lemma~\ref{lem:IbgyG:invertible} and the triangle inequality, and controlling the second term using Lemma~\ref{lem:epstilde:ctl}, %
	\begin{equation*}
		\left\| \Y \right\|
		\le
		C\left( \frac{\thetanaught \sqrt{n} }{|1-\thetay|} + \| \W \thetaw \| + \| \Xpop \| \| \thetax \| \right)
	\end{equation*}
	The proof is complete after applying the growth rate in Equation~\eqref{eq:assum:XandW:spectrumUB} and using the fact that the model parameters are constant in $n$.
\end{proof}

\begin{lemma} \label{lem:Y:peercon}
	Under the peer contagion model in Equation~\eqref{eq:lim-peer-red},
	suppose that Assumptions~\ref{assum:Apop:spectrum} through~\ref{assum:latentpositions} and Assumption~\ref{assum:peer-oracle} hold.
	Then
	\begin{equation*}
		\Y = \Ytilde + \zeta_n,
	\end{equation*}
	where $\A-\Apop$ and $\Ytilde$ are independent given $\Xpop$,
	$\| \Ytilde \| = \Op{ \sqrt{n} }$, and $\| \zeta_n \| = \op{ \sqrt{n} }$.
	Further,
	\begin{equation*}
		\left\| \Y \right\| = \Op{ \sqrt{n} } .
	\end{equation*}
\end{lemma}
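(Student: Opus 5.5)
The natural choice is $\Ytilde = \paren*{\mI - \betay \Gtilde}^{-1}\paren*{\1_n \betanaught + \W\betaw + \Xpop\betax + \be}$, the response the latent contagion operator would generate from the same covariates and errors. This $\Ytilde$ depends only on $\Xpop$, $\W$ and $\be$. Under the sub-gamma model, $\A-\Apop$ is independent of these given $\Xpop$, which gives the required conditional independence.

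\emph{Bounding $\Ytilde$.} The bound $\|\Ytilde\| = \Op{\sqrt{n}}$ follows exactly as in the proof of Lemma~\ref{lem:Y:latcon}. Lemma~\ref{lem:IbgyG:invertible} applies to $\Gtilde$, which is row stochastic, so $\|(\mI-\betay\Gtilde)^{-1}\|$ is bounded. Equation~\eqref{eq:assum:XandW:spectrumUB} controls $\|\W\|$ and $\|\Xpop\|$. Lemma~\ref{lem:epstilde:ctl} controls the error term.

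\emph{Bounding the remainder.} Set $\zeta_n = \Y - \Ytilde$ and use the resolvent identity
\begin{equation*}
  (\mI-\betay\G)^{-1} - (\mI-\betay\Gtilde)^{-1}
  = \betay (\mI-\betay\G)^{-1}(\G-\Gtilde)(\mI-\betay\Gtilde)^{-1}.
\end{equation*}
This gives $\zeta_n = \betay (\mI-\betay\G)^{-1}(\G-\Gtilde)\Ytilde$. By Lemma~\ref{lem:IbgyG:invertible}, $(\mI-\betay\G)^{-1}$ has bounded norm, so
\begin{equation*}
  \|\zeta_n\| \le C\,\|\G-\Gtilde\|\,\|\Ytilde\|.
\end{equation*}
It then suffices to show $\|\G-\Gtilde\| = \op{1}$. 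Lemma~\ref{lem:GGtilde:spectral} bounds this by
\begin{equation*}
  C\left(1 + \frac{\spop_1}{\min_i\dtilde_i}\right)\frac{\sqrt{\nu+b^2}\sqrt{n}\log n}{\min_i \dtilde_i}.
\end{equation*}
\begin{itemize}
  \item The term $\spop_1\sqrt{\nu+b^2}\sqrt{n}\log n/\min_i\dtilde_i^2$ is $\op{n^{-1/4}}$ by Equation~\eqref{eq:assum:Grate:stronger}.
  \item The term $\sqrt{\nu+b^2}\sqrt{n}\log n/\min_i\dtilde_i$ is $o(1)$ by Equation~\eqref{eq:deglb:2}. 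That assumption gives $\min_i\dtilde_i \gg \sqrt{\kappa(\nu+b^2)}\,n^{3/4}\log n$, using $\kappa\ge 1$.
\end{itemize}
Hence $\|\zeta_n\| = \op{1}\cdot\Op{\sqrt n} = \op{\sqrt n}$.

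\emph{The final claim.} The bound $\|\Y\| = \Op{\sqrt n}$ follows from the triangle inequality.

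\emph{Main obstacle.} The only delicate point is confirming that the operator-norm bound of Lemma~\ref{lem:GGtilde:spectral} is indeed $o(1)$ under the stated growth conditions. This requires care with the $\kappa$ and $\log$ factors as just checked. One should also make sure $\be$ is independent of $\A$, so that $\Ytilde$ is legitimately conditionally independent of $\A-\Apop$. If one prefers to avoid row-stochasticity of $\G$ when bounding $(\mI-\betay\G)^{-1}$, item 1 of Assumption~\ref{assum:peer-oracle} supplies the uniform bound on its norm.
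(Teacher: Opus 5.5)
Your proof is correct and is essentially the paper's argument. It uses the same $\Ytilde$ and $\zeta_n$, bounds $\|\zeta_n\|$ by $\|\G-\Gtilde\|\cdot\Op{\sqrt{n}}$, and closes with Lemma~\ref{lem:GGtilde:spectral} and the degree growth conditions. The only difference is that you use the exact resolvent identity $\zeta_n=\betay(\mI-\betay\G)^{-1}(\G-\Gtilde)\Ytilde$, whereas the paper expands the difference of inverses as a Neumann series with $\|\G^q-\Gtilde^q\|\le q\|\G-\Gtilde\|$.

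One caveat, which also applies to the paper's use of $\|\G^q\|\le 1$. Because $\G$ is not symmetric, the eigenvalue localization in Lemma~\ref{lem:IbgyG:invertible} does not by itself bound $\|(\mI-\betay\G)^{-1}\|$, and item 1 of Assumption~\ref{assum:peer-oracle} controls $\G$, not its resolvent. You can fix this by writing $\G=\D^{-1/2}(\D^{-1/2}\A\D^{-1/2})\D^{1/2}$, which gives $\|(\mI-\betay\G)^{-1}\|\le\sqrt{\max_i d_i/\min_i d_i}\,/(1-|\betay|)$; this is $\Op{1}$ when the degrees are of comparable order.
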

\begin{proof}
	Recalling the definition of $\Gtilde$ from Section~\ref{subsec:Gdefs}, define
	\begin{equation*}
		\Ytilde = \paren*{\mI - \betay \Gtilde }^{-1} \paren*{\1_n \betanaught + \W \betaw + \Xpop \betax  + \be}
	\end{equation*}
	and
	\begin{equation} \label{eq:def:zetan}
		\zeta_n
		= \left[ \paren*{\mI - \betay \G }^{-1} - \paren*{\mI - \betay \Gtilde }^{-1} \right]
		\paren*{\1_n \betanaught + \W \betaw + \Xpop \betax  + \be} .
	\end{equation}
	Then adding and subtracting appropriate quantities in Equation~\eqref{eq:lim-peer-red},
	\begin{equation*}
		\Y %
		= \Ytilde + \zeta_n .
	\end{equation*}
	An argument parallel to that given in the proof of Lemma~\ref{lem:Y:latcon} above yields
	\begin{equation*}
		\left\| \Ytilde \right\| = \Op{ \sqrt{n} } ,
	\end{equation*}
	so our proof will be complete once we control $\zeta_n$.

	For ease of notation, define
	\begin{equation} \label{eq:def:L}
		\mL = \paren*{\1_n \betanaught + \W \betaw + \Xpop \betax  + \be} .
	\end{equation}
	Applying the Neumann expansion and the triangle inequality,
	\begin{equation*}
		\left\| \zeta_n \right\|
		\le \sum_{q=0}^\infty |\betay|^q \left\| \G^q - \Gtilde^q \right\| \left\| \mL \right\|
		= \sum_{q=1}^\infty |\betay|^q \left\| \G^q - \Gtilde^q \right\| \left\| \mL \right\| .
	\end{equation*}
	Expanding $\G^q - \Gtilde^q$ and using the triangle inequality and submultiplicativity,
	\begin{equation}
		\label{eq:Gq-Glatq}
		\norm*{\G^q - \Gtilde^q}
		\le \sum_{k=0}^{q-1} \norm{\G^{q-1-k}} \norm*{\G - \Gtilde} \norm*{\Gtilde^k}
		\le q \norm*{\G - \Gtilde}.
	\end{equation}
	Plugging this into the Neumann expansion above and evaluating the series,
	\begin{equation*}
		\left\| \zeta_n \right\|
		\le \norm*{\mL} \norm*{\G - \Gtilde} \sum_{q=1}^\infty |\betay|^q q
		= \norm*{\mL} \norm*{\G - \Gtilde} \frac{\abs{\betay}}{(1 - \betay)^2} .
	\end{equation*}
	Lemma~\ref{lem:GGtilde:spectral} then yields
	\begin{equation*}
		\left\| \zeta_n \right\|
		\le
		C \| \mL \|
		\left( 1+ \frac{ \spop_1 }{ \min_{i \in [n]} \dtilde_i } \right)
		\frac{ \sqrt{ \nu + b^2} \sqrt{n} \log n }{ \min_{i \in [n]} \dtilde_i }.
	\end{equation*}
	Applying our growth assumption in Equation~\eqref{eq:assum:Grate:stronger} yields
	\begin{equation} \label{eq:zetan:littleoh}
		\| \zeta_n \| = \op{ \| \mL \| }.
	\end{equation}
	Applying the triangle inequality,
	\begin{equation*}
		\| \mL \|
		\le |\betanaught| \| \1_n \| + \|\betaw\| \| \W \|
		+ \| \betax\| \| \Xpop \| + \| \be \|,
	\end{equation*}
	from which Lemma~\ref{lem:sglinear} and our growth rate in Equation~\eqref{eq:assum:XandW:spectrumUB} implies
	\begin{equation} \label{eq:L:upperBound}
		\| \mL \| = \Op{ \sqrt{n} } .
	\end{equation}
	Applying this to Equation~\eqref{eq:zetan:littleoh} completes the proof.
\end{proof}

\begin{lemma} \label{lem:Y:peerquad}
	Suppose that Assumptions~\ref{assum:Apop:spectrum} through~\ref{assum:stronger4dhat} hold and that $\vu \in \R^n$ is such that $\A - \Apop$ is independent of $\vu$ given $\Xpop$.
	Then under either of the models in Equations~\eqref{eq:lim-latent-red} and~\eqref{eq:lim-peer-red},
	\begin{equation*}
		\left| \vu^\top \left( \Ghat - \G \right) \Y \right|
		= \op{ \| \vu \| }. %
	\end{equation*}
\end{lemma}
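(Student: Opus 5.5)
The plan is to reduce the statement to the bilinear bound of Lemma~\ref{lem:GhatG:quad}. That lemma applies only to vectors independent of $\A-\Apop$ given $\Xpop$, and $\Y$ is not such a vector under either model, since in both cases it is built from $\Gtilde$ (and, under peer contagion, from $\G$). So I would first split $\Y$ into a piece independent of the edge noise and a remainder, then control the remainder using spectral-norm bounds only.

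\emph{Latent contagion model.} Under Equation~\eqref{eq:lim-latent-red}, $\Y = (\mI-\thetay\Gtilde)^{-1}(\1_n\thetanaught+\W\thetaw+\Xpop\thetax+\be)$ depends only on $\Xpop$, $\W$ and $\be$. It is therefore independent of $\A-\Apop$ conditional on $\Xpop$ (taking $\W$ and $\be$ independent of the edge noise, as is implicit throughout). Lemma~\ref{lem:GhatG:quad} with $\vv=\Y$ gives $|\vu^\top(\Ghat-\G)\Y| = \op{\|\vu\|\|\Y\|/\sqrt n}$. Lemma~\ref{lem:Y:latcon} gives $\|\Y\|=\Op{\sqrt n}$, so the bound is $\op{\|\vu\|}$.

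\emph{Peer contagion model.} Here I would use the decomposition $\Y=\Ytilde+\zeta_n$ from Lemma~\ref{lem:Y:peercon} and split
\begin{equation*}
|\vu^\top(\Ghat-\G)\Y| \le |\vu^\top(\Ghat-\G)\Ytilde| + |\vu^\top(\Ghat-\G)\zeta_n|.
\end{equation*}
The first term is handled exactly as in the latent case. $\Ytilde$ is independent of $\A-\Apop$ given $\Xpop$, and $\|\Ytilde\|=\Op{\sqrt n}$, so Lemma~\ref{lem:GhatG:quad} gives $\op{\|\vu\|}$.

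For the second term I would use Cauchy--Schwarz and submultiplicativity to get the bound $\|\vu\|\,\|\Ghat-\G\|\,\|\zeta_n\|$. By Lemma~\ref{lem:GhatG:spectral}, $\|\Ghat-\G\|$ is bounded by $C(1+\spop_1/\min_i\dtilde_i)\sqrt{\nu+b^2}\sqrt n\log n/\min_i\dtilde_i$. From the proof of Lemma~\ref{lem:Y:peercon}, $\|\zeta_n\|\le C\|\mL\|\,\|\G-\Gtilde\|$, and $\|\G-\Gtilde\|$ obeys the same bound by Lemma~\ref{lem:GGtilde:spectral}.

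\emph{Main obstacle.} The hard step is showing the product $\|\Ghat-\G\|\,\|\zeta_n\|$ is $\op{1}$. A crude approach would bound $\|\Ghat-\G\|$ by $\op{1}$ via Assumption~\ref{assum:degrees} and use $\|\zeta_n\|=\op{\sqrt n}$, but that gives only $\op{\sqrt n\,\|\vu\|}$, which is too weak. Instead I would use the quadratic structure of the product,
\begin{equation*}
\|\Ghat-\G\|\,\|\zeta_n\| \le C\|\mL\|\Bigl[(1+\spop_1/\min_i\dtilde_i)\sqrt{\nu+b^2}\sqrt n\log n/\min_i\dtilde_i\Bigr]^2 .
\end{equation*}
Equation~\eqref{eq:assum:Grate:stronger} makes the bracketed quantity $\op{n^{-1/4}}$: its $\spop_1$ part is exactly that assumption, and the leading part is controlled by Equation~\eqref{eq:deglb:2}. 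Its square is then $\op{n^{-1/2}}$. Combined with $\|\mL\|=\Op{\sqrt n}$ from Equation~\eqref{eq:L:upperBound}, this gives $\|\Ghat-\G\|\,\|\zeta_n\|=\op{1}$, so the second term is $\op{\|\vu\|}$, which finishes the argument.

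If the $n^{-1/4}$ rate turned out not to be enough, the fallback would be a finer decomposition of $\zeta_n$. One would peel off its first Neumann term $\betay(\mI-\betay\Gtilde)^{-1}(\G-\Gtilde)(\mI-\betay\Gtilde)^{-1}\mL$ and treat it with the bilinear Lemmas~\ref{lem:GhatGtilde:quad} and~\ref{lem:GGtilde:quad}, leaving higher-order remainders quadratic in $\|\G-\Gtilde\|$.
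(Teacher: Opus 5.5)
Your proof is correct and follows the paper's argument. The latent case uses Lemma~\ref{lem:GhatG:quad} with $\|\Y\|=\Op{\sqrt n}$, and the peer case uses $\Y=\Ytilde+\zeta_n$, with the $\zeta_n$ term controlled through the product $\|\mL\|\,\|\Ghat-\G\|\,\|\G-\Gtilde\|$ and Equation~\eqref{eq:assum:Grate:stronger}. The only cosmetic difference is that the paper re-expands $\zeta_n$ in its Neumann series and bounds each term by $q\|\vu\|\|\mL\|\|\Ghat-\G\|\|\G-\Gtilde\|$ before summing, whereas you reuse the already-summed bound $\|\zeta_n\|\le C\|\mL\|\|\G-\Gtilde\|$ from the proof of Lemma~\ref{lem:Y:peercon}, so your fallback decomposition is not needed.
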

\begin{proof}
	We note that this bound is trivial under latent contagion given our growth assumptions: simply use Lemma~\ref{lem:GhatG:quad} and the fact that $\| \Y \| = \Op{\sqrt{n}}$.
	Under peer contagion as in Equation~\eqref{eq:lim-peer-red}, the result requires more careful analysis.

	Using Lemma~\ref{lem:Y:peercon}, write $\Y = \Ytilde + \zeta_n$, where $\Ytilde$ and $\zeta_n$ obey the growth rates in Equation~\eqref{eq:Ytilde} and $\Ytilde$ is independent of $\A - \Apop$ conditional on $\Xpop$.
	The triangle inequality then yields
	\begin{equation} \label{eq:Y:peerquad:start}
		\left| \vu^\top \left( \Ghat - \G \right) \Y \right|
		\le
		\left| \vu^\top \left( \Ghat - \G \right) \zeta_n \right|
		+
		\left| \vu^\top \left( \Ghat - \G \right) \Ytilde \right|
		\le
		\left| \vu^\top \left( \Ghat - \G \right) \zeta_n \right|
		+ \op{ \| \vu \| } ,
	\end{equation}
	where the second bound follows from Lemma~\ref{lem:GhatG:quad} and the fact that $\| \Y \| = \Op{\sqrt{n}}$ by construction.
	By the definition of $\zeta_n$ in Equation~\eqref{eq:def:zetan},
	\begin{equation} \label{eq:Y:peerquad:neumann}
		\vu^\top \left( \Ghat - \G \right) \zeta_n
		=
		\sum_{q=0}^\infty
		\betay^q \vu^\top \left( \Ghat - \G \right)
		\left( \G^q - \Gtilde^q \right) \mL,
	\end{equation}
	where $\mL$ is as in Equation~\eqref{eq:def:L}.
	By Lemma~\ref{lem:GhatG:quad} and Equation~\eqref{eq:L:upperBound},
	\begin{equation*}
		\left| \vu^\top \left( \Ghat - \G \right) \mL \right|
		= \op{ \| \vu \| } .
	\end{equation*}
	By submultiplicativity followed by Lemmas~\ref{lem:GhatG:spectral} and~\ref{lem:GGtilde:spectral},
	\begin{equation*}
		\left| \vu^\top \left( \Ghat - \G \right)
		\left( \G^q - \Gtilde^q \right) \mL \right|
		\le
		C \| \vu \| \| \mL \|
		\left( 1 + \frac{ \spop_1 }{ \min_{i \in [n]} \dtilde_i } \right)^2
		\frac{ (\nu + b^2 ) n \log^2 n }{ \min_{i \in [n]} \dtilde_i^2 }
		= \op{ \| \vu \| },
	\end{equation*}
	where the equality follows from Equation~\eqref{eq:L:upperBound} and our growth assumption in Equation~\eqref{eq:assum:Grate:stronger}.
	Applying the triangle inequality in Equation~\eqref{eq:Y:peerquad:neumann} and using the above two bounds,
	\begin{equation} \label{eq:Y:peerquad:neumann:step2}
		\left| \vu^\top \left( \Ghat - \G \right) \zeta_n  \right|
		\le
		\sum_{q=2}^\infty
		\left| \betay^q \vu^\top \left( \Ghat - \G \right)
		\left( \G^q - \Gtilde^q \right) \mL \right|
		+ \op{ \| \vu \| } .
	\end{equation}

	For any $q \ge 2$, the triangle inequality yields
	\begin{equation*} \begin{aligned}
			\left| \vu^\top \left( \Ghat - \G \right)
			\left( \G^q - \Gtilde^q \right) \mL \right|
			\le
			\left| \vu^\top \left( \Ghat - \G \right)
			\Gtilde \left( \G^{q-1} - \Gtilde^{q-1} \right) \mL \right|
			+
			\left| \vu^\top \left( \Ghat - \G \right)
			\left( \G - \Gtilde \right) \G^{q-1} \mL \right| .
		\end{aligned} \end{equation*}
	Recursing, it holds for any $q \ge 2$ that
	\begin{equation*}
		\left| \vu^\top \left( \Ghat - \G \right)
		\left( \G^q - \Gtilde^q \right) \mL \right|
		\le q \| \vu \| \| \mL \| \left\| \Ghat - \G \right\|
		\left\| \G - \Gtilde \right| .
	\end{equation*}
	Applying Lemmas~\ref{lem:GhatG:spectral} and~\ref{lem:GGtilde:spectral} along with the bound in Equation~\eqref{eq:L:upperBound},
	\begin{equation*}
		\left| \vu^\top \left( \Ghat - \G \right)
		\left( \G^q - \Gtilde^q \right) \mL \right|
		\le
		C q \| \vu \| \sqrt{n}
		\left( 1 + \frac{ \spop_1 }{ \min_{i \in [n]} \dtilde_i } \right)^2
		\frac{ (\nu + b^2 ) n \log^2 n }{ \min_{i \in [n]} \dtilde_i^2 } .
	\end{equation*}
	Applying this bound to each of the terms in Equation~\eqref{eq:Y:peerquad:neumann:step2},
	\begin{equation*}
		\left| \vu^\top \left( \Ghat - \G \right) \zeta_n  \right|
		\le
		C \| \vu \| \sqrt{n}
		\left( 1 + \frac{ \spop_1 }{ \min_{i \in [n]} \dtilde_i } \right)^2
		\frac{ (\nu + b^2 ) n \log^2 n }{ \min_{i \in [n]} \dtilde_i^2 }
		\sum_{q=2}^\infty
		q \left| \betay \right|^q
		+ \op{ \| \vu \| } .
	\end{equation*}
	Applying our growth assumption in Equation~\eqref{eq:assum:Grate:stronger},
	\begin{equation*}
		\left| \vu^\top \left( \Ghat - \G \right) \zeta_n  \right|
		= \op{ \| \vu \| } .
	\end{equation*}
	Applying this to Equation~\eqref{eq:Y:peerquad:start} completes the proof.
\end{proof}

\section{Controlling the Instruments} \label{apx:instruments}

We now turn to describing the behavior of our instruments.
As an oracle counterpart to $\Hhat$ in Definition~\ref{def:feasibleEstimators:peer}, we define
\begin{equation} \label{eq:def:HpeerOracle}
	\HpeerOracle = \begin{bmatrix} \W \; \Xpop \; \G \W \; \G \Xpop \;
		\G^2 \W \; \G^2 \Xpop\end{bmatrix}
	\in \R^{n \times (3 p + 3 d)} .
\end{equation}

Similarly, as an oracle counterpart to the instrument $\Hcheck$ in definition~\ref{def:feasibleEstimators:latent}, we define
\begin{equation} \label{eq:def:HlatOracle}
	\HlatOracle = \begin{bmatrix} \W \; \Xpop \; \Gtilde \W \; \Gtilde \Xpop \;
		\Gtilde^2 \W \; \Gtilde^2 \Xpop\end{bmatrix}
	\in \R^{n \times (3 p + 3 d)} .
\end{equation}

The technical results below relate these four different versions of the instruments to one another.
To account for the rotational non-identifiability of $\Xpop$, we must consider appropriate orthogonal rotations of $\Hhat$ and $\Hcheck$, which are given by
\begin{equation} \label{eq:def:Qinst}
\Qinst
= 
\begin{bmatrix} 
\mI_{p \times p} & \zerovec & \zerovec & \zerovec & \zerovec & \zerovec \\
\zerovec 	& \Q 	   & \zerovec & \zerovec & \zerovec & \zerovec \\
\zerovec 	& \zerovec & \mI_{p\times p} & \zerovec & \zerovec & \zerovec \\
\zerovec 	& \zerovec & \zerovec 	    & \Q  & \zerovec & \zerovec \\
\zerovec 	& \zerovec & \zerovec 	    & \zerovec & \mI_{p\times p} & \zerovec \\
\zerovec 	& \zerovec & \zerovec 	    & \zerovec & \zerovec & \Q
\end{bmatrix} \in \R^{3(p+d)\times 3(p+d)},
\end{equation}
where $\Q \in \R^{d \times d}$ is the orthogonal matrix guaranteed by Lemma~\ref{lem:XhatXTB:ctrl}.

\begin{lemma} \label{lem:HhatHlat:spectral}
	Suppose that Assumptions~\ref{assum:Apop:spectrum} through~\ref{assum:stronger4dhat} hold.
	Then under either the peer contagion model in Equation~\eqref{eq:lim-peer-red} with Assumption~\ref{assum:peer-oracle} or the latent contagion model in Equation~\eqref{eq:lim-latent-red} with Assumption~\ref{assum:latent-oracle},
	with $\Qinst \in \R^{(3 p + 3 d) \times (3 p + 3 d)}$ as given in Equation~\eqref{eq:def:Qinst},
	\begin{equation*}
		\left\| \Hhat \Qinsttop - \HlatOracle \right\|
		= \op{ \sqrt{ \sigmamin(\HlatOracle) } }.
	\end{equation*} %
\end{lemma}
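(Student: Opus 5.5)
The plan is to compare $\Hhat \Qinsttop$ and $\HlatOracle$ block by block and finish with the triangle inequality. Both matrices have six column blocks, and the number of blocks is constant in $n$, so it suffices to show that each block of $\Hhat \Qinsttop - \HlatOracle$ has spectral norm $\op{n^{1/4}}$. Under Equation~\eqref{eq:assum:Hgrowth}, $\sigmamin(\HlatOracle) = \Omega(\sqrt{n})$, so this gives $\op{\sqrt{\sigmamin(\HlatOracle)}}$. The blocks are as follows.
\begin{itemize}
\item The $\W$ block vanishes identically.
\item The $\Xpop$ block is $\Xhat\Q^\top - \Xpop$.
\item The $\Ghat\W$ block is $(\Ghat - \Gtilde)\W$.
\item The $\Ghat\Xhat$ block is $\Ghat\Xhat\Q^\top - \Gtilde\Xpop$.
\item The two remaining blocks are $(\Ghat^2 - \Gtilde^2)\W$ and $\Ghat^2\Xhat\Q^\top - \Gtilde^2\Xpop$.
\end{itemize}

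For the blocks involving $\W$, I will write $\Ghat^2 - \Gtilde^2 = \Ghat(\Ghat - \Gtilde) + (\Ghat - \Gtilde)\Gtilde$. Submultiplicativity then gives $\|(\Ghat^k - \Gtilde^k)\W\| \le C \|\Ghat - \Gtilde\| \|\W\|$, with $\|\W\| = \Op{\sqrt n}$ from Equation~\eqref{eq:assum:XandW:spectrumUB}. To use this I need $\|\Ghat\| = \Op{1}$. This follows from $\|\Gtilde\|$ being bounded (Assumption~\ref{assum:latent-oracle} gives bounded row and column sums, or one uses $\|\Gtilde\|=1$ as in the paper) together with Lemma~\ref{lem:GhatGtilde:spectral}, which shows $\|\Ghat - \Gtilde\| = \op{1}$. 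The key quantitative input is Lemma~\ref{lem:GhatGtilde:spectral} combined with Equation~\eqref{eq:assum:Grate:stronger}. Since $\min_i \dtilde_i \le \spop_1$, the factor $(1 + \spop_1/\min_i\dtilde_i)\sqrt{\nu+b^2}\sqrt n\log n/\min_i\dtilde_i$ is at most $2\spop_1\sqrt{\nu+b^2}\sqrt n \log n/\min_i \dtilde_i^2 = \op{n^{-1/4}}$. The extra $\sqrt{\kappa}$ in the lemma's statement is harmless: the proof of Lemma~\ref{lem:GhatGtilde:spectral} in fact ends without it, and if it is kept it can be absorbed using Assumption~\ref{assum:stronger4dhat}. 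This yields $\|(\Ghat^k - \Gtilde^k)\W\| = \op{n^{1/4}}$ for $k = 1, 2$.

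For the blocks involving $\Xhat$, I will split $\Ghat^k\Xhat\Q^\top - \Gtilde^k\Xpop = \Ghat^k(\Xhat\Q^\top - \Xpop) + (\Ghat^k - \Gtilde^k)\Xpop$. The second piece is handled exactly like the $\W$ blocks, using $\|\Xpop\| = \Op{\sqrt n}$. The first piece is at most $C\|\Xhat\Q^\top - \Xpop\|$, so everything reduces to the $\Xpop$ block.

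\textbf{Main obstacle.} The main obstacle is bounding $\|\Xhat\Q^\top - \Xpop\|$ in spectral norm. I would apply Lemma~\ref{lem:XhatXTB:ctrl} with $\mB = \mI_n$, which is trivially independent of $\A - \Apop$. However, the $\sqrt{r} = \sqrt n$ factor makes the first term of that bound $\sqrt{n}\sqrt{\nu+b^2}\log n/\sqrt{\spop_d}$, and it is not obvious that this is $o(n^{1/4})$. So instead I would bound it directly from the decomposition used in that proof, with $\mB = \mI$. The three ingredients are:
\begin{itemize}
\item $\|\Upop^\top \mE\| \le \|\mE\| \le C\sqrt{\nu+b^2}\sqrt n\log n$, by Lemma~\ref{lem:E:spectral};
\item $\|\Uhat - \Upop\Upop^\top\Uhat\|$ and $\|\Upop^\top\Uhat - \Q\|$, controlled by Lemmas~\ref{lem:Qswap} and~\ref{lem:H-Q};
\item $\|\Q\Shat^{1/2} - \Spop^{1/2}\Q\|$, controlled by Lemma~\ref{lem:Qswap}.
\end{itemize}
Together these give $\|\Xhat\Q^\top - \Xpop\| \le C\sqrt{\nu+b^2}\sqrt n\log n/\sqrt{\spop_d} + C\kappa(\nu+b^2)n\log^2 n/\spop_d^{3/2}$. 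The second term is $\op{1}$ by Equation~\eqref{eq:assum:XhatX:rate:2}. For the first term I would invoke Equation~\eqref{eq:assum:degreekappa}, which gives $\spop_d \gtrsim \kappa\sqrt{\nu+b^2}\, n\log n$. This makes the first term $\Op{(\nu+b^2)^{1/4}\sqrt{\log n}}$, which is $\op{n^{1/4}}$ provided $\nu+b^2$ grows slower than $n/\log^2 n$. That rate is implied by Equation~\eqref{eq:assum:XhatX:rate:2}, since $\spop_d \le \spop_1 = \Op{n}$ under bounded latent positions. If this bookkeeping fails, the fallback is to bound $\sigmamin$-relative error in Frobenius norm via Lemma~\ref{lem:Qswap} directly. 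Once all six block bounds are in hand, summing them completes the argument.
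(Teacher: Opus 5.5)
Your proposal is correct and follows essentially the same route as the paper. Both use the same blockwise triangle inequality and the split $\Ghat^k\Xhat\Q^\top - \Gtilde^k\Xpop = \Ghat^k(\Xhat\Q^\top - \Xpop) + (\Ghat^k - \Gtilde^k)\Xpop$. Both handle the averaging-operator blocks with Lemma~\ref{lem:GhatGtilde:spectral} and Equation~\eqref{eq:assum:Grate:stronger}, and the $\Xpop$ block with the bound of Lemma~\ref{lem:XhatXTB:ctrl} at $\mB = \mI$. Your ``direct'' bound is exactly that expression, so the detour is unnecessary. Your Assumption~\ref{assum:stronger4dhat} argument for the $\op{n^{1/4}}$ rate works. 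It also follows from the equations the paper cites, since the square of the first term equals $\frac{(\nu+b^2)n\log^2 n}{\spop_d^{3/2}}\sqrt{\spop_d} = \op{1}\cdot\Op{\sqrt{n}}$ by Equations~\eqref{eq:assum:XhatX:rate:2} and~\eqref{eq:assum:XandW:spectrumUB}.
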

\begin{proof}
	Recalling the definitions of $\Hhat$ and $\HlatOracle$ from Definition~\ref{def:feasibleEstimators:peer} and Equation~\eqref{eq:def:HlatOracle}, respectively, and applying the triangle inequality,
	\begin{equation} \label{eq:HhatHlat:triangle} \begin{aligned}
	\left\| \Hhat \Qinsttop - \HlatOracle \right\|
	& \le \left\| \Xhat \Q^\top - \Xpop \right\|
	+ \left\| \left( \Ghat - \Gtilde \right) \W \right\|
	+ \left\| \Ghat \Xhat \Q^\top - \Gtilde \Xpop \right\|                                 \\
	 & ~~~~~~~~~~~~~~~~~~~+ \left\| \left( \Ghat^2 - \Gtilde^2 \right) \W \right\|
	+ \left\| \Ghat^2 \Xhat \Q^\top - \Gtilde^2 \Xpop \right\| .
	\end{aligned} \end{equation}

	Applying Lemma~\ref{lem:XhatXTB:ctrl} with $\mB = \mI$ followed by our growth assumptions in Equations~\eqref{eq:assum:spectralConc},~\eqref{eq:assum:XandW:spectrumUB} and~\eqref{eq:assum:XhatX:rate:2},
	\begin{equation*}
		\left\| \Xhat \Q^\top - \Xpop \right\|
		\le
		\frac{ C \sqrt{ \nu + b^2 } \sqrt{n} \log n }{ \sqrt{\spop_d} }
		+ \frac{ C\kappa(\Apop) (\nu+b^2) n \log^2 n }
		{ \spop_d^{3/2} }
		= \op{ n^{1/4} } .
	\end{equation*}
	Applying our assumption in Equation~\eqref{eq:assum:Hgrowth},
	\begin{equation} \label{eq:HhatHlat:spectral:XhatX}
		\left\| \Xhat\Q^\top - \Xpop \right\|
		= \op{ \sqrt{ \sigmamin(\HlatOracle) } } .
	\end{equation}
	Applying submultiplicativity of the norm and Lemma~\ref{lem:GhatGtilde:spectral},
	\begin{equation*}
	\left\| \left( \Ghat - \Gtilde \right) \W \right\|
	\le
	C \left( 1+ \frac{ \spop_1 }{ \min_{i \in [n]} \dtilde_i } \right)
	\frac{ \sqrt{ \nu + b^2} \sqrt{n} \log n }{ \min_{i \in [n]} \dtilde_i }
	\left\| \W \right\| .
	\end{equation*}
	Applying our growth assumptions in Equations~\eqref{eq:assum:XandW:spectrumUB} and~\eqref{eq:assum:Grate:stronger},
	\begin{equation*}
		\left\| \left( \Ghat - \Gtilde \right) \W \right\|
		= \op{ n^{1/4} } .
	\end{equation*}
	Applying the growth assumption in Equation~\eqref{eq:assum:Hgrowth},
	\begin{equation} \label{eq:GhatGtildeW:spectral}
		\left\| \left( \Ghat - \Gtilde \right) \W \right\|
		= \op{ \sqrt{ \sigmamin(\HlatOracle) } } .
	\end{equation}
	Similarly, by the triangle inequality, submultiplicativity of the norm and the fact that $\Ghat$ and $\Gtilde$ both have unit spectral norm,
	\begin{equation*}
		\left\| \left( \Ghat^2 - \Gtilde^2 \right) \W \right\|
		\le
		\left\| \Ghat \left( \Ghat - \Gtilde \right) \W \right\|
		+ \left\| \left( \Ghat - \Gtilde \right) \Gtilde \W \right\|
		\le
		2 \left\|  \Ghat - \Gtilde \right\| \left\| \W \right\| .
	\end{equation*}
	The same argument as that leading to Equation~\eqref{eq:GhatGtildeW:spectral} then yields
	\begin{equation} \label{eq:Ghat2Gtilde2W:spectral}
		\left\| \left( \Ghat^2 - \Gtilde^2 \right) \W \right\|
		= \op{ \sqrt{ \sigmamin(\HlatOracle) } } .
	\end{equation}

	By the triangle inequality, again using submultiplicativity of the norm and the fact that $\Ghat$ has unit spectral norm,
	\begin{equation*}
		\left\| \Ghat \Xhat \Q^\top - \Gtilde \Xpop \right\|
		\le
		\left\| \Ghat - \Gtilde \right\| \left\| \Xpop \right\|
		+ \left\| \Xhat \Q^\top - \Xpop \right\| ,
	\end{equation*}
	whence Lemma~\ref{lem:GhatGtilde:spectral} and Equation~\eqref{eq:HhatHlat:spectral:XhatX} yield
	\begin{equation*}
		\left\| \Ghat \Xhat \Q^\top - \Gtilde \Xpop \right\|
		\le
		C \sqrt{ \spop_d }
		\left( 1+ \frac{ \spop_1 }{ \min_{i \in [n]} \dtilde_i } \right)
		\frac{ \sqrt{ \nu + b^2} \sqrt{n} \log n }{ \min_{i \in [n]} \dtilde_i }
		+ \op{ \sqrt{ \sigmamin( \HlatOracle ) } } .
	\end{equation*}
	Once more applying the same argument as that leading to Equation~\eqref{eq:GhatGtildeW:spectral}, this time using the growth assumption in Equation~\eqref{eq:assum:XandW:spectrumUB},
	\begin{equation} \label{eq:GhatXhatGtildeXpop:spectral}
		\left\| \Ghat \Xhat \Q^\top - \Gtilde \Xpop \right\|
		= \op{ \sqrt{ \sigmamin(\HlatOracle) } }.
	\end{equation}

	By an argument parallel to that yielding Equation~\eqref{eq:Ghat2Gtilde2W:spectral},
	\begin{equation} \label{eq:GhatGtildesquaresX}
		\left\| \left( \Ghat^2 - \Gtilde^2 \right) \Xpop \right\|
		= \op{ \sqrt{ \sigmamin(\HlatOracle) } }.
	\end{equation}

	Applying Equations~\eqref{eq:HhatHlat:spectral:XhatX},
	~\eqref{eq:GhatGtildeW:spectral},
	~\eqref{eq:Ghat2Gtilde2W:spectral},
	~\eqref{eq:GhatXhatGtildeXpop:spectral},
	and~\eqref{eq:GhatGtildesquaresX}
	to Equation~\eqref{eq:HhatHlat:triangle} completes the proof.
\end{proof}

\begin{lemma} \label{lem:HcheckHlat:spectral}
	Suppose that Assumptions~\ref{assum:Apop:spectrum} through~\ref{assum:latentpositions} hold.
	Then under either the peer contagion model in Equation~\eqref{eq:lim-peer-red} with Assumption~\ref{assum:peer-oracle} or the latent contagion model in Equation~\eqref{eq:lim-latent-red} with Assumption~\ref{assum:latent-oracle},
        with $\Qinst \in \R^{(3 p + 3 d) \times (3 p + 3 d)}$ as given in Equation~\eqref{eq:def:Qinst},
	\begin{equation*}
		\left\| \Hcheck \Qinsttop - \HlatOracle \right\|
		= \op{ \sqrt{ \sigmamin(\HlatOracle) } } .
	\end{equation*} %
\end{lemma}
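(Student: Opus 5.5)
We would mirror the proof of Lemma~\ref{lem:HhatHlat:spectral} column-block by column-block, now with $\G$ in place of $\Ghat$. Recall $\Hcheck = \begin{bmatrix} \W \, \Xhat \, \G\W \, \G\Xhat \, \G^2\W \, \G^2\Xhat \end{bmatrix}$. The triangle inequality gives
\begin{equation*}
\begin{aligned}
\left\| \Hcheck \Qinsttop - \HlatOracle \right\|
&\le \left\| \Xhat\Q^\top - \Xpop \right\| + \left\| (\G - \Gtilde)\W \right\| + \left\| \G\Xhat\Q^\top - \Gtilde\Xpop \right\| \\
&\quad + \left\| (\G^2 - \Gtilde^2)\W \right\| + \left\| \G^2\Xhat\Q^\top - \Gtilde^2\Xpop \right\| .
\end{aligned}
\end{equation*}
We then show each term is $\op{n^{1/4}}$. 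By Equation~\eqref{eq:assum:Hgrowth}, $\sqrt{\sigmamin(\HlatOracle)} = \Omega(n^{1/4})$, so this suffices.

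\textbf{First term.} This is exactly Equation~\eqref{eq:HhatHlat:spectral:XhatX}. It follows from Lemma~\ref{lem:XhatXTB:ctrl} with $\mB = \mI$ together with Equations~\eqref{eq:assum:spectralConc} and~\eqref{eq:assum:XhatX:rate:2}. It uses only Assumption~\ref{assum:Apop:spectrum}, so Assumption~\ref{assum:stronger4dhat} is not needed here.

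\textbf{Terms involving $\W$.} Submultiplicativity, Lemma~\ref{lem:GGtilde:spectral} (in place of Lemma~\ref{lem:GhatGtilde:spectral}) and $\|\W\| = \Op{\sqrt n}$ give
\begin{equation*}
\|(\G-\Gtilde)\W\| \le C\Bigl(1+\frac{\spop_1}{\min_i\dtilde_i}\Bigr)\frac{\sqrt{\nu+b^2}\sqrt n\log n}{\min_i\dtilde_i}\,\sqrt n .
\end{equation*}
We then need this to be $\op{n^{1/4}}$. The part carrying $\spop_1/\min_i \dtilde_i$ is handled directly by Equation~\eqref{eq:assum:Grate:stronger}. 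For the part carrying the factor $1$, use $\min_i\dtilde_i \le \spop_1$: then $\sqrt{\nu+b^2}\sqrt n\log n/\min_i\dtilde_i \le \spop_1\sqrt{\nu+b^2}\sqrt n\log n/\min_i\dtilde_i^2$, which Equation~\eqref{eq:assum:Grate:stronger} makes $\op{n^{-1/4}}$.

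For the squared term we write $\G^2-\Gtilde^2 = \G(\G-\Gtilde) + (\G-\Gtilde)\Gtilde$. We need $\|\G\|$ and $\|\Gtilde\|$ bounded. For $\Gtilde$ this follows from $\|\Gtilde\|=1$ as used earlier. For $\G$, item 1 of Assumption~\ref{assum:peer-oracle} gives bounded row and column sums, hence $\|\G\| \le C$. Under latent contagion we instead get $\|\G\| \le \|\Gtilde\| + \|\G-\Gtilde\| = \Op{1}$ from Lemma~\ref{lem:GGtilde:spectral}. So $\|(\G^2-\Gtilde^2)\W\| \le C\|\G-\Gtilde\|\|\W\|$, and the same bound applies.

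\textbf{Mixed terms.} Write
\begin{equation*}
\G\Xhat\Q^\top - \Gtilde\Xpop = \G(\Xhat\Q^\top - \Xpop) + (\G-\Gtilde)\Xpop ,
\end{equation*}
and bound it by $\|\G\|\|\Xhat\Q^\top-\Xpop\| + \|\G-\Gtilde\|\|\Xpop\|$, using the two previous steps and $\|\Xpop\| = \Op{\sqrt n}$. The $\G^2$ version is analogous, splitting $\G^2\Xhat\Q^\top - \Gtilde^2\Xpop = \G^2(\Xhat\Q^\top-\Xpop) + (\G^2-\Gtilde^2)\Xpop$.

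\textbf{Main obstacle.} The delicate point is the step handled by Equation~\eqref{eq:assum:Grate:stronger}: showing the spectral bound on $\G-\Gtilde$ times $\sqrt n$ is $o(n^{1/4})$. This works only via the $\min_i\dtilde_i\le\spop_1$ trick, and with a uniform operator-norm bound on $\G$ that holds under both models.
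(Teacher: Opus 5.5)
Your proposal is correct and follows the paper's proof, which reruns the argument of Lemma~\ref{lem:HhatHlat:spectral} on the same five-block triangle inequality with Lemma~\ref{lem:GGtilde:spectral} in place of Lemma~\ref{lem:GhatGtilde:spectral}; you are more explicit than the paper about the $\min_i \dtilde_i \le \spop_1$ step and about bounding $\|\G\|$ separately under each model. Two harmless corrections: the first term also needs $\spop_d \le \|\Xpop\|^2 = \Op{n}$ from Equation~\eqref{eq:assum:XandW:spectrumUB}, so it does not follow from Assumption~\ref{assum:Apop:spectrum} alone; and $\|\Gtilde\|$ is in general only $\Op{1}$, not exactly $1$, because $\Gtilde$ is row-stochastic but not symmetric, though the oracle assumptions supply that bound exactly as you argued for $\G$.
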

\begin{proof}
	Recalling the definitions of $\Hcheck$, $\HlatOracle$ and $\Qinst$, the triangle inequality yields
	\begin{equation*} %
	\begin{aligned}
		\left\| \Hcheck \Qinsttop - \HlatOracle \right\|
		 & \le \left\| \Xhat \Q^\top - \Xpop \right\|
		+ \left\| \left( \G - \Gtilde \right) \W \right\|
		+ \left\| \G \Xhat \Q^\top - \Gtilde \Xpop \right\|                                 \\
		 & ~~~~~~~~~~~~~~~~~~~+ \left\| \left( \G^2 - \Gtilde^2 \right) \W \right\|
		+ \left\| \G^2 \Xhat \Q^\top - \Gtilde^2 \Xpop \right\| .
	\end{aligned} \end{equation*}
	The result then follows from the same argument as in Lemma~\ref{lem:HhatHlat:spectral}, using Lemma~\ref{lem:GGtilde:spectral} in place of Lemma~\ref{lem:GhatGtilde:spectral}.
\end{proof}

\begin{lemma} \label{lem:HcheckHpeer:spectral}
	Suppose that Assumptions~\ref{assum:Apop:spectrum} through~\ref{assum:latentpositions} hold.
	Then under either the peer contagion model in Equation~\eqref{eq:lim-peer-red} with Assumption~\ref{assum:peer-oracle} or the latent contagion model in Equation~\eqref{eq:lim-latent-red} with Assumption~\ref{assum:latent-oracle},
	with $\Hcheck$ and $\HpeerOracle$ as given in Definition~\ref{def:feasibleEstimators:peer} and Equation~\eqref{eq:def:HpeerOracle}, respectively, and with $\Qinst \in \R^{3(p+d)\times 3(p+d)}$ as given in Equation~\eqref{eq:def:Qinst},
	\begin{equation*}
	\left\| \Hcheck \Qinsttop - \HpeerOracle \right\|
	= \op{ \sqrt{ \sigmamin(\HlatOracle) } }.
	\end{equation*} %
\end{lemma}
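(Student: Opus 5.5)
The plan is to compare $\Hcheck\Qinsttop$ with $\HpeerOracle$ block by block, as in the proof of Lemma~\ref{lem:HhatHlat:spectral}. Since $\Hcheck$ and $\HpeerOracle$ use the same averaging operator $\G$, the two matrices differ only where $\Xhat$ appears in place of $\Xpop$. The $\W$, $\G\W$ and $\G^2\W$ blocks therefore cancel exactly. The triangle inequality gives
\begin{equation*}
  \left\| \Hcheck \Qinsttop - \HpeerOracle \right\|
  \le \left\| \Xhat \Q^\top - \Xpop \right\|
  + \left\| \G \left( \Xhat \Q^\top - \Xpop \right) \right\|
  + \left\| \G^2 \left( \Xhat \Q^\top - \Xpop \right) \right\| .
\end{equation*}

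The first term has already been shown to be $\op{\sqrt{\sigmamin(\HlatOracle)}}$ in Equation~\eqref{eq:HhatHlat:spectral:XhatX}. That argument applies Lemma~\ref{lem:XhatXTB:ctrl} with $\mB=\mI$ and uses the growth conditions \eqref{eq:assum:spectralConc} and \eqref{eq:assum:XhatX:rate:2} to get $\op{n^{1/4}}$, then Equation~\eqref{eq:assum:Hgrowth}. For the other two terms I would use submultiplicativity, $\|\G^k(\Xhat\Q^\top-\Xpop)\|\le\|\G\|^k\|\Xhat\Q^\top-\Xpop\|$. This reduces everything to the first term, provided $\|\G\|$ is bounded.

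The main obstacle is this last point. $\G=\D^{-1}\A$ is row-stochastic but not symmetric, so $\|\G\|=1$ is not automatic; the earlier proofs used unit norm freely only for $\Gtilde$ and $\Ghat$. I would first try Assumption~\ref{assum:peer-oracle}(1), which gives uniformly bounded row and column sums of $\G$, so that $\|\G\|\le\sqrt{\|\G\|_1\|\G\|_\infty}\le C$. That assumption is stated under the peer model, however. Under the latent model, or as an alternative, I would write $\|\G\|\le\|\Gtilde\|+\|\G-\Gtilde\|$ and invoke Lemma~\ref{lem:GGtilde:spectral} with \eqref{eq:assum:Grate:stronger}, which makes $\|\G-\Gtilde\|=\op{1}$.

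This second route still needs a constant bound on $\|\Gtilde\|$. I would obtain it by writing $\Gtilde=\Dtilde^{-1/2}(\Dtilde^{-1/2}\Apop\Dtilde^{-1/2})\Dtilde^{1/2}$. The middle factor is symmetric with spectral norm at most one. The outer factors contribute at most $\sqrt{\max_i\dtilde_i/\min_i\dtilde_i}$, which I would control as $O(1)$ through Assumption~\ref{assum:latent-oracle}(1) applied to $\Gtilde$. If neither bound is readily available, the fallback is to note that the terms needing control are $\G(\Xhat\Q^\top-\Xpop)$, and to bound $\G-\Gtilde$ applied to this matrix via Lemma~\ref{lem:GGtilde:spectral}.

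With $\|\G\|=\Op{1}$ established, the three terms are each $\op{\sqrt{\sigmamin(\HlatOracle)}}$, and the claim follows. No independence between $\A-\Apop$ and $\Xhat$ is needed, because only spectral-norm bounds are used.
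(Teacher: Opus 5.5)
Your argument follows the paper's: the same three-term triangle inequality, then submultiplicativity to reduce everything to $\|\Xhat\Q^\top-\Xpop\|$. At that step the paper simply asserts $\|\G\|=1$. You are right that this is not automatic for a non-symmetric row-stochastic matrix; for a star graph with hub $1$, $\|\G\ve_1\|=\sqrt{n-1}$. Your bound $\|\G\|\le\sqrt{\|\G\|_1\|\G\|_\infty}\le C$, using item~1 of Assumption~\ref{assum:peer-oracle}, is the right repair in the peer case.

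Under latent contagion, however, your similarity-transform step is not justified. Bounded row and column sums of $\Gtilde$ do not bound $\max_i\dtilde_i/\min_i\dtilde_i$. Take $\Apop=xx^\top$ with half the $x_i$ equal to $1$ and half equal to $\epsilon_n\to0$. Then $\Gtilde=\1_n x^\top/\sum_k x_k$ has column sums at most $2$, yet the degree ratio is $1/\epsilon_n$. The fix is already in your first route. Apply $\|\Gtilde\|\le\sqrt{\|\Gtilde\|_1\|\Gtilde\|_\infty}$ directly, using item~1 of Assumption~\ref{assum:latent-oracle}. Then add $\|\G-\Gtilde\|=\op{1}$, which follows from Lemma~\ref{lem:GGtilde:spectral} and \eqref{eq:assum:Grate:stronger}. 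This gives $\|\G\|=\Op{1}$ in both cases, and the rest of your argument goes through.

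Your appeal to \eqref{eq:HhatHlat:spectral:XhatX} for the rate $\op{n^{1/4}}$ is what the statement actually requires. The paper's proof of this lemma stops at $\op{\sqrt n}$. That is not enough for $\op{\sqrt{\sigmamin(\HlatOracle)}}$, because \eqref{eq:assum:Hgrowth} forces $\sigmamin(\HlatOracle)\asymp\sqrt n$.
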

\begin{proof}
	Recalling the definitions of $\Hcheck$ and $\HpeerOracle$ and applying the triangle inequality,
	\begin{equation*} \begin{aligned}
		\left\| \Hcheck \Qinsttop - \HpeerOracle \right\|
		 & \le \left\| \Xhat \Q^\top - \Xpop \right\|
		+ \left\| \G \left( \Xhat \Q^\top - \Xpop \right) \right\|
		+ \left\| \G^2 \left( \Xhat \Q^\top - \Xpop \right) \right\|
		\le 3 \left\| \Xhat \Q^\top - \Xpop \right\|,
	\end{aligned} \end{equation*}
	where the second inequality follows from submultiplicativity and the fact that $\| \G \| = 1$.
	Applying Lemma~\ref{lem:XhatXTB:ctrl} with $\mB = \mI$ followed by our assumption in Equations~\eqref{eq:assum:spectralConc},~\eqref{eq:assum:XandW:spectrumUB} and~\eqref{eq:assum:XhatX:rate:2},
	\begin{equation*}
		\left\| \Hcheck \Qinsttop - \HpeerOracle \right\|
		\le
		\frac{ C \sqrt{ \nu + b^2 } \sqrt{n} \log n }{ \sqrt{\spop_d} }
		+
		\frac{ C \kappa(\Apop) (\nu+b^2) n \log^2 n }{ \spop_d^{3/2} }
		= \op{ \sqrt{n} }.
	\end{equation*}
	Applying our growth assumption in Equation~\eqref{eq:assum:Hgrowth} completes the proof.
\end{proof}

\begin{lemma} \label{lem:HpeerHlat:spectral}
	Suppose that Assumptions~\ref{assum:Apop:spectrum} through~\ref{assum:latentpositions} hold.
	Then under either the peer contagion model in Equation~\eqref{eq:lim-peer-red} with Assumption~\ref{assum:peer-oracle} or the latent contagion model in Equation~\eqref{eq:lim-latent-red} with Assumption~\ref{assum:latent-oracle},
	with $\HpeerOracle$ and $\HlatOracle$ as given in Definitions~\ref{def:feasibleEstimators:peer} and~\ref{def:feasibleEstimators:latent}, 
	\begin{equation*}
		\left\| \HpeerOracle - \HlatOracle \right\|
		= \op{ \sqrt{ \sigmamin(\HlatOracle) } }.
	\end{equation*} %
\end{lemma}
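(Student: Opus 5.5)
Since $\HpeerOracle$ and $\HlatOracle$ share the blocks $\W$ and $\Xpop$ and differ only in the averaging operator, I plan to mirror the proof of Lemma~\ref{lem:HhatHlat:spectral}, with $\G$ in place of $\Ghat$ and Lemma~\ref{lem:GGtilde:spectral} in place of Lemma~\ref{lem:GhatGtilde:spectral}. No rotation $\Qinst$ is needed here, since both oracle instruments use the true $\Xpop$. By the triangle inequality applied blockwise,
\begin{equation*}
\left\| \HpeerOracle - \HlatOracle \right\|
\le \left\| (\G - \Gtilde) \W \right\| + \left\| (\G - \Gtilde) \Xpop \right\|
+ \left\| (\G^2 - \Gtilde^2) \W \right\| + \left\| (\G^2 - \Gtilde^2) \Xpop \right\| .
\end{equation*}
For the squared terms I will write $\G^2 - \Gtilde^2 = \G(\G - \Gtilde) + (\G - \Gtilde)\Gtilde$, as in the derivation of Equation~\eqref{eq:Gq-Glatq}. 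Using $\|\G\|, \|\Gtilde\| \le 1$, this gives $\|(\G^2-\Gtilde^2)\mB\| \le 2\|\G-\Gtilde\|\|\mB\|$ for $\mB \in \{\W,\Xpop\}$. Everything then reduces to bounding $4\|\G - \Gtilde\| (\|\W\| + \|\Xpop\|)$.

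Lemma~\ref{lem:GGtilde:spectral} gives
\begin{equation*}
\|\G - \Gtilde\| \le C\left(1 + \frac{\spop_1}{\min_i \dtilde_i}\right)\frac{\sqrt{\nu+b^2}\sqrt{n}\log n}{\min_i \dtilde_i},
\end{equation*}
and Equation~\eqref{eq:assum:XandW:spectrumUB} gives $\|\W\| + \|\Xpop\| = \Op{\sqrt n}$. To turn this into $\op{n^{1/4}}$, I will use Equation~\eqref{eq:assum:Grate:stronger} for the $\spop_1/\min_i\dtilde_i^2$ part, exactly as in the argument leading to Equation~\eqref{eq:GhatGtildeW:spectral}. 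For the remaining part, $\sqrt{\nu+b^2}\sqrt n\log n/\min_i\dtilde_i$, I will use Equation~\eqref{eq:deglb:2}, which gives $\min_i \dtilde_i = \omega(\sqrt{\nu+b^2}\, n^{3/4}\log n)$. Multiplying by $\sqrt n$, this part is therefore $o(n^{1/4})$. Finally, Equation~\eqref{eq:assum:Hgrowth} gives $\sigmamin(\HlatOracle) = \Omega(\sqrt n)$, so $n^{1/4} = O(\sqrt{\sigmamin(\HlatOracle)})$, which yields the claim.

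The only step needing care is the rate bookkeeping, in particular confirming that the non-$\spop_1$ term is also $o(n^{-1/4})$. If Equation~\eqref{eq:deglb:2} turns out to be too weak for this (it appears to suffice, since $\kappa \ge 1$), I would fall back on the observation that $\spop_1 \ge \min_i \dtilde_i$ up to constants. With that, the $\spop_1$ term dominates, and Equation~\eqref{eq:assum:Grate:stronger} alone controls the whole expression.
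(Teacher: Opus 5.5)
Your argument is correct, and it takes a more direct route than the paper. The paper proves this lemma by inserting the feasible instrument, $\|\HpeerOracle - \HlatOracle\| \le \|\Hcheck \Qinsttop - \HpeerOracle\| + \|\Hcheck \Qinsttop - \HlatOracle\|$, and then citing Lemmas~\ref{lem:HcheckHpeer:spectral} and~\ref{lem:HcheckHlat:spectral}.

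That is short given the earlier lemmas. However, it makes a statement about two oracle matrices depend on the adjacency spectral embedding. Both cited lemmas need $\|\Xhat \Q^\top - \Xpop\| = \op{n^{1/4}}$. The proof of Lemma~\ref{lem:HcheckHpeer:spectral} as written only reaches $\op{\sqrt{n}}$ before appealing to \eqref{eq:assum:Hgrowth}, and that does not by itself give $\op{\sqrt{\sigmamin(\HlatOracle)}}$.

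Your blockwise decomposition cancels the $\W$ and $\Xpop$ blocks exactly and needs no rotation. It reduces everything to Lemma~\ref{lem:GGtilde:spectral} together with \eqref{eq:deglb:2} and \eqref{eq:assum:Grate:stronger}. Your rate bookkeeping for both pieces is right. Your fallback is also valid, since $\spop_1 \ge \onevec^\top \Apop \onevec / n \ge \min_i \dtilde_i$ by the Rayleigh quotient. So your route is self-contained and does not depend on latent-position estimation rates, which is the natural setting for an oracle-versus-oracle comparison.

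One technicality you share with the paper: a row-stochastic matrix need not have spectral norm at most $1$ (a star graph gives order $\sqrt{n}$). What the oracle assumptions actually supply is $\|\G\| \le C$ under Assumption~\ref{assum:peer-oracle}, or $\|\Gtilde\| \le C$ under Assumption~\ref{assum:latent-oracle}, via bounded row and column sums. The other operator is then $\Op{1}$ by Lemma~\ref{lem:GGtilde:spectral}, and that is all your factor-of-two step requires.
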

\begin{proof}
	By the triangle inequality, letting $\Qinst \in \R^{(3 p + 3 d) \times (3 p + 3 d)}$ be as given in Equation~\eqref{eq:def:Qinst},
	\begin{equation*}
		\left\| \HpeerOracle - \HlatOracle \right\|
		\le \left\| \Hcheck \Qinsttop- \HpeerOracle \right\|
		+ \left\| \Hcheck \Qinsttop - \HlatOracle \right\|,
	\end{equation*}
	and Lemmas~\ref{lem:HcheckHpeer:spectral} and~\ref{lem:HcheckHlat:spectral} complete the proof.
\end{proof}

\begin{lemma} \label{lem:HhatHlat:innerprod}
	Let $\vu \in \R^n$ be independent of $\A-\Apop$ given $\Xpop$ and suppose that Assumptions~\ref{assum:Apop:spectrum} through~\ref{assum:stronger4dhat} hold.
	Then under either the peer contagion model in Equation~\eqref{eq:lim-peer-red} with Assumption~\ref{assum:peer-oracle} or the latent contagion model in Equation~\eqref{eq:lim-latent-red} with Assumption~\ref{assum:latent-oracle},
        with $\Qinst \in \R^{(3 p + 3 d) \times (3 p + 3 d)}$ as given in Equation~\eqref{eq:def:Qinst},
	\begin{equation*}
		\left\| \vu^\top \left(\Hhat \Qinsttop - \HlatOracle\right) \right\|
		= \op{ \| \vu \| } .
	\end{equation*}
\end{lemma}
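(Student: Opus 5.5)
We will proceed blockwise, mirroring the proof of Lemma~\ref{lem:HhatHlat:spectral}, but using the inner-product (bilinear) bounds rather than spectral-norm bounds. Since $\vu^\top(\Hhat \Qinsttop - \HlatOracle)$ is a row vector with $3(p+d)$ entries (a constant number, as $p$ and $d$ are fixed), it suffices to show that each block is $\op{\|\vu\|}$. The blocks are
\begin{equation*}
\vu^\top(\Xhat\Q^\top - \Xpop), \quad
\vu^\top(\Ghat-\Gtilde)\W, \quad
\vu^\top(\Ghat\Xhat\Q^\top - \Gtilde\Xpop), \quad
\vu^\top(\Ghat^2-\Gtilde^2)\W, \quad
\vu^\top(\Ghat^2\Xhat\Q^\top - \Gtilde^2\Xpop).
\end{equation*}
The $\W$ block of $\Hhat\Qinsttop - \HlatOracle$ vanishes identically.

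\textbf{First block.} Apply Lemma~\ref{lem:XhatXTB:ctrl} with $\mB = \vu$ and $r=1$. The bound is $\|\vu\|$ times
\begin{equation*}
\frac{\sqrt{\nu+b^2}\log n}{\sqrt{\spop_d}} + \frac{\kappa(\nu+b^2) n\log^2 n}{\spop_d^{3/2}},
\end{equation*}
and both terms are $o(1)$ by Equations~\eqref{eq:assum:spectralConc} and~\eqref{eq:assum:XhatX:rate:2}.

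\textbf{Blocks involving $\W$.} For each column $\W_{\cdot k}$, Lemma~\ref{lem:GhatGtilde:quad} with $\vv = \W_{\cdot k}$ gives $|\vu^\top(\Ghat-\Gtilde)\W_{\cdot k}| = \op{\|\vu\|\|\W_{\cdot k}\|/\sqrt n} = \op{\|\vu\|}$, since $\|\W_{\cdot k}\| = \Op{\sqrt n}$. The independence requirement holds because $\W$ is independent of $\A-\Apop$ given $\Xpop$. For the squared term we split
\begin{equation*}
\Ghat^2-\Gtilde^2 = (\Ghat-\Gtilde)\Gtilde + \Ghat(\Ghat-\Gtilde).
\end{equation*}
The first piece is handled by Lemma~\ref{lem:GhatGtilde:quad} with $\vv = \Gtilde\W_{\cdot k}$. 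This is legitimate because $\Gtilde$ is a function of $\Xpop$ only, and $\|\Gtilde \W_{\cdot k}\| = \Op{\sqrt n}$ since the row sums of $\Gtilde$ are bounded (Assumption~\ref{assum:latent-oracle}). For the second piece we write $\vu^\top\Ghat = \vu^\top\Gtilde + \vu^\top(\Ghat-\Gtilde)$. This gives a term $(\Gtilde^\top\vu)^\top(\Ghat-\Gtilde)\W_{\cdot k}$, handled again by Lemma~\ref{lem:GhatGtilde:quad}, plus a cross term
\begin{equation*}
\vu^\top(\Ghat-\Gtilde)^2\W_{\cdot k},
\end{equation*}
which is bounded by $\|\vu\|\|\Ghat-\Gtilde\|^2\|\W\|$. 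Lemma~\ref{lem:GhatGtilde:spectral} together with Equation~\eqref{eq:assum:Grate:stronger} makes $\|\Ghat-\Gtilde\|^2 = \op{n^{-1/2}}$, so the cross term is $\op{\|\vu\|}$.

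\textbf{Blocks involving $\Xpop$.} Decompose
\begin{equation*}
\Ghat\Xhat\Q^\top - \Gtilde\Xpop = (\Ghat-\Gtilde)\Xpop + \Gtilde(\Xhat\Q^\top-\Xpop) + (\Ghat-\Gtilde)(\Xhat\Q^\top-\Xpop).
\end{equation*}
The first term is treated exactly like $\W$. The second is $(\Gtilde^\top\vu)^\top(\Xhat\Q^\top - \Xpop)$, which is controlled by Lemma~\ref{lem:XhatXTB:ctrl} with $\mB = \Gtilde^\top\vu$. This vector is still independent of $\A-\Apop$, and $\|\Gtilde^\top\vu\| \le C\|\vu\|$ by the bounded column sums in Assumption~\ref{assum:latent-oracle}. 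The third term is the main obstacle, because $\Xhat$ depends on $\A$, so Lemma~\ref{lem:GhatGtilde:quad} cannot be applied to it. We would bound it crudely by
\begin{equation*}
\|\vu\|\,\|\Ghat-\Gtilde\|\,\|\Xhat\Q^\top-\Xpop\|.
\end{equation*}
The proof of Lemma~\ref{lem:HhatHlat:spectral} shows $\|\Xhat\Q^\top-\Xpop\| = \op{n^{1/4}}$, while Lemma~\ref{lem:GhatGtilde:spectral} and Equation~\eqref{eq:assum:Grate:stronger} give $\|\Ghat-\Gtilde\| = \op{n^{-1/4}}$; multiplying yields $\op{1}$. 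If this product fails to be small enough, the fallback is to expand $\Xhat\Q^\top-\Xpop$ via the decomposition~\eqref{eq:XhatXBT:maindecomp} and use the sharper spectral rates. The $\Ghat^2$ block is handled by the same telescoping $\Ghat^2\Xhat\Q^\top - \Gtilde^2\Xpop$, combining the previous two arguments. Remainder terms carrying two factors of $\Ghat-\Gtilde$, or one such factor together with $\Xhat\Q^\top-\Xpop$, are bounded crudely in spectral norm as above, and the rest by Lemmas~\ref{lem:GhatGtilde:quad} and~\ref{lem:XhatXTB:ctrl}.
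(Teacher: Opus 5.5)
Your proposal is correct and follows essentially the same route as the paper. It uses the same five-block triangle inequality, Lemma~\ref{lem:XhatXTB:ctrl} with $\mB=\vu$ and $\mB=\Gtilde^\top\vu$ for the $\Xhat\Q^\top-\Xpop$ pieces, Lemma~\ref{lem:GhatGtilde:quad} for $(\Ghat-\Gtilde)$ against vectors independent of $\A-\Apop$, and crude spectral-norm bounds on the cross terms $\vu^\top(\Ghat-\Gtilde)^2\W$ and $\vu^\top(\Ghat-\Gtilde)(\Xhat\Q^\top-\Xpop)$. Your explicit treatment of the $\Ghat^2\Xhat\Q^\top-\Gtilde^2\Xpop$ block, including the $\Gtilde^2(\Xhat\Q^\top-\Xpop)$ piece, is slightly more complete than the paper's, which only writes out $(\Ghat^2-\Gtilde^2)\Xpop$.
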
 %
\begin{proof}
	Recalling the definitions of $\Hhat$ and $\HlatOracle$ and applying the triangle inequality,
	\begin{equation} \label{eq:HhatHlat:inner:triangle} \begin{aligned}
	\left\| \vu^\top \left( \Hhat \Qinsttop - \HlatOracle \right) \right\|
	 & \le \left\| \vu^\top \left( \Xhat \Q^\top- \Xpop \right) \right\|
	+ \left\| \vu^\top \left( \Ghat - \Gtilde \right) \W \right\|
	+ \left\| \vu^\top \left(\Ghat \Xhat \Q^\top - \Gtilde \Xpop\right) \right\| \\
	 & ~~~~~~~~~~~~~~~~~~~+
	\left\| \vu^\top \left( \Ghat^2 - \Gtilde^2 \right) \W \right\|
	+ \left\|\vu^\top \left(\Ghat^2 \Xhat \Q^\top -\Gtilde^2 \Xpop\right) \right\| .
\end{aligned} \end{equation}
	Applying Lemma~\ref{lem:XhatXTB:ctrl} with $\mB = \vu$, followed by our growth assumptions in Equations~\eqref{eq:assum:spectralConc},~\eqref{eq:assum:XandW:spectrumUB} and~\eqref{eq:assum:XhatX:rate:2},
	\begin{equation} \label{eq:HhatHlat:inner:XhatX}
		\left\| \vu^\top \left( \Xhat \Q^\top - \Xpop \right) \right\|
		\le
		\frac{ C \| \vu \| \sqrt{ \nu + b^2 } \log n }{ \sqrt{\spop_d} }
		+ \frac{ C \kappa(\Apop) \| \vu \| (\nu+b^2) n \log^2 n }{ \spop_d^{3/2} }
		= \op{ \| \vu \| } .
	\end{equation}

	Applying the SVD to $\W$ and using Lemma~\ref{lem:GhatGtilde:quad},
	\begin{equation} \label{eq:HhatHlat:vec:term2}
		\left\| \vu^\top \left( \Ghat - \Gtilde \right) \W \right\|
		= \op{ \frac{ \| \W \| \| \vu \| }{ \sqrt{n} } }
		= \op{ \| \vu \| },
	\end{equation}
	where the second equality follows from the growth assumption in Equation~\eqref{eq:assum:XandW:spectrumUB}.

	Similarly, by the triangle inequal)ity,
	\begin{equation*} \begin{aligned}
			\left\| \vu^\top \! \left( \Ghat^2 \!-\! \Gtilde^2 \right) \W \right\|
			 & \le
			\left\| \vu^\top \Ghat \left( \Ghat \!-\! \Gtilde \right) \W \right\|
			+ \left\| \vu^\top \!\left( \Ghat \!-\! \Gtilde \right) \Gtilde \W \right\| \\
			 & \le
			\left\| \vu^\top \Gtilde \left( \Ghat \!-\! \Gtilde \right) \W \right\|
			+ \left\| \vu^\top \! \left( \Ghat \!-\! \Gtilde \right) \left( \Ghat \!-\! \Gtilde \right) \W \right\|
			+ \left\| \vu^\top \! \left( \Ghat \!-\! \Gtilde \right) \Gtilde \W \right\| .
		\end{aligned} \end{equation*}
	Applying Lemma~\ref{lem:GhatGtilde:quad} to the first and third terms, recalling that $\|\Gtilde\|=1$,
	\begin{equation*}
		\left\| \vu^\top \left( \Ghat^2 - \Gtilde^2 \right) \W \right\|
		\le
		\left\| \vu^\top \left( \Ghat - \Gtilde \right) \left( \Ghat - \Gtilde \right) \W \right\|
		+ \op{ \frac{ \| \W \| \| \vu \| }{ \sqrt{n} } }.
	\end{equation*}
	By submultiplicativity,
	\begin{equation*} \begin{aligned}
			\left\| \vu^\top \left( \Ghat^2 - \Gtilde^2 \right) \W \right\|
			 & \le \| \vu \| \| \W \| \left\| \Ghat - \Gtilde \right\|^2
			+ \op{ \frac{ \| \W \| \| \vu \| }{ \sqrt{n} } }             \\
			 & \le
			C\| \vu \| \| \W \|
			\left( 1 + \frac{ \spop_1 }{ \min_{i\in [n]} \dtilde_i } \right)^2
			\frac{ (\nu+b^2) n \log^2 n }{ \min_{i \in [n]} \dtilde_i^2 }
			+ \op{ \frac{ \| \W \| \| \vu \| }{ \sqrt{n} } } ,
		\end{aligned} \end{equation*}
	where the second inequality follows from Lemma~\ref{lem:GhatGtilde:spectral}.
	Applying our growth assumptions in Equations~\eqref{eq:assum:XandW:spectrumUB} and~\eqref{eq:assum:Grate:stronger},
	\begin{equation} \label{eq:HhatHlat:vec:term4}
		\left\| \vu^\top \left( \Ghat^2 - \Gtilde^2 \right) \W \right\|
		= \op{ \| \vu \| } .
	\end{equation}

	Again using the triangle inequality and submultiplicativy of the norm,
	\begin{equation*}
	\left\| \vu^\top \left( \Ghat \Xhat \Q^\top- \Gtilde \Xpop \right) \right\|
	\le
	\left\| \vu^\top \left( \Ghat - \Gtilde \right) \Xpop \right\|
	+ \left\| \vu^\top \left( \Ghat - \Gtilde \right) \left( \Xhat \Q^\top - \Xpop \right) \right\|
	+ \left\| \vu^\top \Gtilde \left(\Xhat \Q^\top-\Xpop\right) \right\| .
	\end{equation*}
	Applying Lemma~\ref{lem:GhatGtilde:quad} and recalling that $\|\Xpop\|=\Op{\sqrt{n}}$ by Equation~\eqref{eq:assum:XandW:spectrumUB},
	\begin{equation*}
	\left\| \vu^\top \left( \Ghat \Xhat \Q^\top - \Gtilde \Xpop \right) \right\|
	\le
	\left\| \vu^\top \left( \Ghat - \Gtilde \right) \left( \Xhat \Q^\top - \Xpop \right) \right\|
	+ \left\| \vu^\top \Gtilde \left( \Xhat \Q^\top- \Xpop \right) \right\|
	+ \op{ \| \vu \| } .
	\end{equation*}
	By Lemma~\ref{lem:XhatXTB:ctrl} with $\mB = \Gtilde^\top \vu$ and our growth assumptions in Equations~\eqref{eq:assum:spectralConc},~\eqref{eq:assum:XandW:spectrumUB} and~\eqref{eq:assum:XhatX:rate:2} and recalling that $\|\Gtilde\|=1$,
	\begin{equation} \label{eq:GhatXhatGtildeXpop:prelim}
	\left\| \vu^\top \left( \Ghat \Xhat \Q^\top - \Gtilde \Xpop \right) \right\|
	\le
	\left\| \vu^\top \left( \Ghat - \Gtilde \right) \left( \Xhat \Q^\top - \Xpop \right) \right\|
	+ \op{ \frac{ \| \Xpop \| \| \vu \| }{ \sqrt{n} } } .
	\end{equation}
	By submultiplicativity and Lemma~\ref{lem:GhatGtilde:spectral},
	\begin{equation*}
	\left\| \vu^\top \left( \Ghat - \Gtilde \right) \left( \Xhat \Q^\top - \Xpop \right) \right\|
	\le
	C \| \vu \|
	\left( 1 + \frac{ \spop_1 }{ \min_{i\in [n]} \dtilde_i } \right)
	\frac{ \sqrt{ \nu + b^2 } \sqrt{n} \log n }
	{ \min_{i \in [n]} \dtilde_i }
	\left\| \Xhat \Q^\top - \Xpop \right\| .
	\end{equation*}

	Applying Lemma~\ref{lem:XhatXTB:ctrl} with $\mB = \mI$ and collecting terms,
	\begin{equation*} %
	\begin{aligned}
	 & \left\| \vu^\top \left( \Ghat - \Gtilde \right) \left( \Xhat \Q^\top - \Xpop \right) \right\| \\
	 & ~~~\le C \| \vu \|
	\left( 1 + \frac{ \spop_1 }{ \min_{i\in [n]} \dtilde_i } \right)
	\left[
		\frac{ (\nu+b^2) n \log^2 n }{ \sqrt{\spop_d} \min_{i \in [n]} \dtilde_i }
		+
		\frac{ \kappa(\Apop) (\nu+b^2)^{3/2} n^{3/2} \log^3 n }
		{ \spop_d^{3/2} \min_{i \in [n]} \dtilde_i }
	\right]                                                                                  \\
	 & ~~~\le C \| \vu \|
	\frac{ \spop_1 }{ \min_{i\in [n]} \dtilde_i }
	\left[
		\frac{ (\nu+b^2) n \log^2 n }{ \sqrt{\spop_d} \min_{i \in [n]} \dtilde_i }
		+
		\frac{ \kappa(\Apop) (\nu+b^2)^{3/2} n^{3/2} \log^3 n }
		{ \spop_d^{3/2} \min_{i \in [n]} \dtilde_i }
		\right] ,
	\end{aligned} \end{equation*}
	where the second inequality follows from the fact that $\spop_1$ is an upper bound on the minimum degree. %
	Applying our growth assumptions in Equations~\eqref{eq:deglb:2}, ~\eqref{eq:assum:XandW:spectrumUB}, ~\eqref{eq:deglb:2} and~\eqref{eq:deglb:2:kappa},
	\begin{equation*}
		\left\| \vu^\top \left( \Ghat - \Gtilde \right) \left( \Xhat \Q^\top - \Xpop \right) \right\|
		= \op{ \| \vu \| } .
	\end{equation*}
	Applying this to Equation~\eqref{eq:GhatXhatGtildeXpop:prelim},
	\begin{equation} \label{eq:HhatHlat:vec:term3}
		\left\| \vu^\top \left( \Ghat \Xhat \Q^\top - \Gtilde \Xpop \right) \right\|
		= \op{ \| \vu \| } .
	\end{equation}

	By an argument parallel to that yielding Equation~\eqref{eq:HhatHlat:vec:term4}, this time using the growth assumption in Equation~\eqref{eq:assum:XandW:spectrumUB},
	\begin{equation} \label{eq:HhatHlat:vec:term5}
		\left\| \vu^\top \left( \Ghat^2 - \Gtilde^2 \right) \Xpop \right\|
		= \op{ \| \vu \| } .
	\end{equation}

	Applying Equations~\eqref{eq:HhatHlat:inner:XhatX},~\eqref{eq:HhatHlat:vec:term2}, ~\eqref{eq:HhatHlat:vec:term4}, ~\eqref{eq:HhatHlat:vec:term3} and~\eqref{eq:HhatHlat:vec:term5} to Equation~\eqref{eq:HhatHlat:inner:triangle} completes the proof.
\end{proof}

\begin{lemma} \label{lem:HcheckHlat:innerprod}
	Suppose that Assumptions~\ref{assum:Apop:spectrum} through~\ref{assum:latentpositions} hold.
	Let $\Hcheck$ and $\HlatOracle$ be as given in Definition~\ref{def:feasibleEstimators:peer} and Equation~\eqref{eq:def:HlatOracle}, respectively,
	and let $\vu \in \R^n$ be independent of $\A-\Apop$ given $\Xpop$.
	Then under either the peer contagion model in Equation~\eqref{eq:lim-peer-red} with Assumption~\ref{assum:peer-oracle} or the latent contagion model in Equation~\eqref{eq:lim-latent-red} with Assumption~\ref{assum:latent-oracle},
        with $\Qinst \in \R^{(3 p + 3 d) \times (3 p + 3 d)}$ as given in Equation~\eqref{eq:def:Qinst},
	\begin{equation*}
	\left\| \vu^\top \left(\Hcheck \Qinsttop - \HlatOracle\right) \right\|
	= \op{ \| \vu \| } .
	\end{equation*}
\end{lemma}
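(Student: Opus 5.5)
Following the proof of Lemma~\ref{lem:HhatHlat:innerprod}, we will bound $\vu^\top(\Hcheck \Qinsttop - \HlatOracle)$ block by block, with $\G$ in place of $\Ghat$. By the definitions of $\Hcheck$, $\HlatOracle$ and $\Qinst$ and the triangle inequality, $\| \vu^\top (\Hcheck \Qinsttop - \HlatOracle) \|$ is at most the sum of five terms:
\[
\| \vu^\top (\Xhat\Q^\top - \Xpop) \|, \qquad \| \vu^\top (\G - \Gtilde)\W \|, \qquad \| \vu^\top (\G\Xhat\Q^\top - \Gtilde \Xpop) \|,
\]
\[
\| \vu^\top (\G^2 - \Gtilde^2)\W \|, \qquad \| \vu^\top (\G^2\Xhat\Q^\top - \Gtilde^2\Xpop) \| .
\]
The first term is handled exactly as in Equation~\eqref{eq:HhatHlat:inner:XhatX}: apply Lemma~\ref{lem:XhatXTB:ctrl} with $\mB = \vu$ and use Assumption~\ref{assum:Apop:spectrum}. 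For the second term, take the SVD of $\W$ (or work column by column, since $p$ is fixed). Then apply Lemma~\ref{lem:GGtilde:quad}; this is allowed because $\W$ is independent of $\A - \Apop$ given $\Xpop$. Together with $\|\W\| = \Op{\sqrt n}$, this gives $\op{\|\vu\|}$.

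For the squared operator, write
\[
\G^2 - \Gtilde^2 = \Gtilde(\G - \Gtilde) + (\G - \Gtilde)\Gtilde + (\G - \Gtilde)^2 .
\]
The first two pieces are controlled by Lemma~\ref{lem:GGtilde:quad}, applied with $\Gtilde^\top \vu$ and $\Gtilde \W$. Both are functions of $\Xpop$ only and so remain conditionally independent of the noise, and $\|\Gtilde\| = 1$ keeps their norms controlled. The quadratic piece is bounded by $\|\vu\|\|\W\|\|\G - \Gtilde\|^2$. Lemma~\ref{lem:GGtilde:spectral} bounds $\|\G - \Gtilde\|^2$, and Equation~\eqref{eq:assum:Grate:stronger} makes the product $\op{\|\vu\|}$. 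The same argument with $\Xpop$ in place of $\W$ handles the $\Xpop$-parts of the remaining terms.

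For the terms with $\Xhat$, decompose
\[
\G\Xhat\Q^\top - \Gtilde\Xpop = (\G - \Gtilde)\Xpop + \Gtilde(\Xhat\Q^\top - \Xpop) + (\G - \Gtilde)(\Xhat\Q^\top - \Xpop),
\]
and handle $\G^2\Xhat\Q^\top - \Gtilde^2\Xpop$ analogously. The first piece follows from Lemma~\ref{lem:GGtilde:quad}. The second follows from Lemma~\ref{lem:XhatXTB:ctrl} with $\mB = \Gtilde^\top\vu$. The cross term is bounded by $\|\vu\|\,\|\G - \Gtilde\|\,\|\Xhat\Q^\top - \Xpop\|$. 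Here Lemma~\ref{lem:GGtilde:spectral} supplies the operator bound, and Lemma~\ref{lem:XhatXTB:ctrl} with $\mB = \mI$ gives $\|\Xhat\Q^\top - \Xpop\| = \op{n^{1/4}}$. Combining these with Equations~\eqref{eq:deglb:2:kappa} and~\eqref{eq:assum:Grate:stronger}, which make $\|\G - \Gtilde\| = \op{n^{-1/4}}$, shows the cross term is $\op{\|\vu\|}$.

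The main obstacle is this cross term, together with the analogous one for $\G^2$. The product $(\G - \Gtilde)(\Xhat\Q^\top - \Xpop)$ involves two noise-dependent factors, so the quadratic-form lemmas do not apply and we must rely on crude spectral bounds. The key check is that the rates multiply to $o(1)$ without Assumption~\ref{assum:stronger4dhat}. Unlike Lemma~\ref{lem:HhatHlat:innerprod}, we never need $\dhat$, so this should go through using only Assumptions~\ref{assum:Apop:spectrum} through~\ref{assum:latentpositions}.
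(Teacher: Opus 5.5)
Your proposal is correct and is essentially the paper's proof, which reruns the argument of Lemma~\ref{lem:HhatHlat:innerprod} with Lemmas~\ref{lem:GGtilde:spectral} and~\ref{lem:GGtilde:quad} in place of Lemmas~\ref{lem:GhatGtilde:spectral} and~\ref{lem:GhatGtilde:quad}; your block-by-block decomposition, including the three-piece splits of $\G^2-\Gtilde^2$ and $\G\Xhat\Q^\top-\Gtilde\Xpop$, is exactly what that argument amounts to. One bookkeeping correction in the cross term: Equation~\eqref{eq:assum:Grate:stronger} alone gives $\|\G-\Gtilde\| = \op{n^{-1/4}}$ (because $\spop_1 \ge \min_i \dtilde_i$), and it is Equation~\eqref{eq:deglb:2:kappa}, together with $\min_i \dtilde_i = \Op{n}$, that secures $\|\Xhat\Q^\top - \Xpop\| = \op{n^{1/4}}$.
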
 %
\begin{proof}
	Recalling the definitions of $\Hcheck$ and $\HlatOracle$ and applying the triangle inequality,
	\begin{equation*} %
	\begin{aligned}
	\left\| \vu^\top \left( \Hcheck \Qinsttop - \HlatOracle \right) \right\|
	 & \le \left\| \vu^\top \left( \Xhat \Q^\top - \Xpop \right) \right\|
	+ \left\| \vu^\top \left( \G - \Gtilde \right) \W \right\|
	+ \left\| \vu^\top \left( \G \Xhat \Q^\top - \Gtilde \Xpop \right) 
		\right\| \\
	 & ~~~~~~~~~~~~~~~~~~~+ \left\| \vu^\top \left( \G^2 - \Gtilde^2 \right) \W \right\|
	+ \left\| \vu^\top \left( \G^2 \Xhat \Q^\top - \Gtilde^2 \Xpop \right) \right\| .
	\end{aligned} \end{equation*}
	The result then follows from an argument analogous to that in the proof of Lemma~\ref{lem:HhatHlat:innerprod}, using
	Lemma~\ref{lem:GGtilde:spectral} in place of Lemma~\ref{lem:GhatGtilde:spectral}
	and
	Lemma~\ref{lem:GGtilde:quad} in place of Lemma~\ref{lem:GhatGtilde:quad}.
\end{proof}

\begin{lemma} \label{lem:HcheckHpeer:innerprod}
	Suppose that Assumptions~\ref{assum:Apop:spectrum} through~\ref{assum:latentpositions} hold.
	Let $\Hcheck$ and $\HpeerOracle$ be as given in Definition~\ref{def:feasibleEstimators:peer} and Equation~\eqref{eq:def:HpeerOracle}
	and let $\vu \in \R^n$ be independent of $\A-\Apop$ given $\Xpop$.
	Then under either the peer contagion model in Equation~\eqref{eq:lim-peer-red} with Assumption~\ref{assum:peer-oracle} or the latent contagion model in Equation~\eqref{eq:lim-latent-red} with Assumption~\ref{assum:latent-oracle},
        with $\Qinst \in \R^{(3 p + 3 d) \times (3 p + 3 d)}$ as given in Equation~\eqref{eq:def:Qinst},
	\begin{equation*}
		\left\| \vu^\top \left(\Hcheck \Qinsttop - \HpeerOracle\right) \right\|
		= \op{ \| \vu \| } .
	\end{equation*}
\end{lemma}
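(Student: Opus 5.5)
The plan is to follow the proof of Lemma~\ref{lem:HcheckHpeer:spectral}, but in the inner-product form. The two matrices $\Hcheck \Qinsttop$ and $\HpeerOracle$ share the columns $\W$, $\G\W$ and $\G^2\W$, so those blocks cancel exactly. The triangle inequality therefore gives
\begin{equation*}
\left\| \vu^\top \left( \Hcheck \Qinsttop - \HpeerOracle \right) \right\|
\le \left\| \vu^\top \left( \Xhat \Q^\top - \Xpop \right) \right\|
+ \left\| \vu^\top \G \left( \Xhat \Q^\top - \Xpop \right) \right\|
+ \left\| \vu^\top \G^2 \left( \Xhat \Q^\top - \Xpop \right) \right\| .
\end{equation*}
The first term is exactly Equation~\eqref{eq:HhatHlat:inner:XhatX}. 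It is $\op{\|\vu\|}$ by Lemma~\ref{lem:XhatXTB:ctrl} with $\mB=\vu$, together with the growth conditions~\eqref{eq:assum:spectralConc} and~\eqref{eq:assum:XhatX:rate:2}.

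The obstacle is the second and third terms. We cannot apply Lemma~\ref{lem:XhatXTB:ctrl} with $\mB = \G^\top \vu$ directly, because $\G$ depends on $\A-\Apop$ and the lemma requires independence. My approach is to swap $\G$ for $\Gtilde$, which is a function of $\Xpop$ alone:
\begin{equation*}
\left\| \vu^\top \G ( \Xhat \Q^\top - \Xpop ) \right\|
\le \left\| \vu^\top \Gtilde ( \Xhat \Q^\top - \Xpop ) \right\|
+ \| \vu \| \left\| \G - \Gtilde \right\| \left\| \Xhat \Q^\top - \Xpop \right\| .
\end{equation*}
For the first piece, I apply Lemma~\ref{lem:XhatXTB:ctrl} with $\mB = \Gtilde^\top \vu$. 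This choice is valid because $\Gtilde^\top\vu$ is independent of $\A-\Apop$ given $\Xpop$, and $\|\Gtilde^\top \vu\| \le C\|\vu\|$ by the bounded row and column sums in Assumption~\ref{assum:peer-oracle} or~\ref{assum:latent-oracle}. The result is $\op{\|\vu\|}$, exactly as in the first term. For the second piece, Lemma~\ref{lem:GGtilde:spectral} combined with~\eqref{eq:assum:Grate:stronger} gives $\|\G-\Gtilde\| = \op{n^{-1/4}}$. Lemma~\ref{lem:XhatXTB:ctrl} with $\mB=\mI$, as in the proof of Lemma~\ref{lem:HhatHlat:spectral}, gives $\|\Xhat\Q^\top-\Xpop\| = \op{n^{1/4}}$. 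The product is then $\op{1}$, so this piece is $\op{\|\vu\|}$.

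For the $\G^2$ term, I write $\G^2 = \Gtilde^2 + \Gtilde(\G-\Gtilde) + (\G-\Gtilde)\G$. The $\Gtilde^2$ part is handled by Lemma~\ref{lem:XhatXTB:ctrl} with $\mB=(\Gtilde^2)^\top\vu$. Each of the two remaining parts is bounded by $C\|\vu\|\,\|\G-\Gtilde\|\,\|\Xhat\Q^\top-\Xpop\|$, using $\|\G\|,\|\Gtilde\|\le C$, and this is $\op{\|\vu\|}$ by the same rate product as above. Summing the three terms completes the argument.

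The step I expect to be most delicate is checking that the rate product $\|\G-\Gtilde\|\,\|\Xhat\Q^\top-\Xpop\|$ is genuinely $\op{1}$ under the stated assumptions. If it is not, the fallback is to avoid the spectral bound on $\G-\Gtilde$. Instead, I would expand $\vu^\top(\G-\Gtilde)(\Xhat\Q^\top-\Xpop)$ through the decomposition~\eqref{eq:XhatXBT:maindecomp} and bound the resulting pieces with the quadratic-form argument of Lemma~\ref{lem:GGtilde:quad}.
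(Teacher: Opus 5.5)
Your proof is correct, and it takes a different route from the paper's. The paper bounds the same three terms in one line. It applies Lemma~\ref{lem:XhatXTB:ctrl} directly with $\mB = \G^\top \vu$ and $\mB = (\G^2)^\top \vu$, together with $\|\G\| = 1$.

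As you note, that step uses Lemma~\ref{lem:XhatXTB:ctrl} outside its hypothesis. The matrix $\G = \D^{-1}\A$ is not conditionally independent of $\A - \Apop$, and the Bernstein bound on $\Upop^\top(\A-\Apop)\mB$ in that lemma's proof needs this independence. Without it, the crude bound loses a factor of $\sqrt{n}$.

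Your swap to $\Gtilde$ fixes this. The price is the remainder $\|\vu\|\,\|\G-\Gtilde\|\,\|\Xhat\Q^\top-\Xpop\|$, and that remainder is negligible under the stated assumptions, so your fallback is unnecessary:
\begin{itemize}
\item Since $\min_i \dtilde_i \le \onevec^\top\Apop\onevec/n \le \spop_1$, the factor $1+\spop_1/\min_i\dtilde_i$ in Lemma~\ref{lem:GGtilde:spectral} is at most $2\spop_1/\min_i\dtilde_i$. Equation~\eqref{eq:assum:Grate:stronger} then gives $\|\G-\Gtilde\| = o_p(n^{-1/4})$.
\item The bound $\|\Xhat\Q^\top-\Xpop\| = o_p(n^{1/4})$ is the one already used in Lemma~\ref{lem:HhatHlat:spectral}.
\end{itemize}
This is the same device the paper itself uses for the $\Ghat$ terms in Lemma~\ref{lem:HhatHlat:innerprod}, where it applies Lemma~\ref{lem:XhatXTB:ctrl} with $\mB = \Gtilde^\top\vu$. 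So your version is the more careful one, at the cost of drawing on the degree condition~\eqref{eq:assum:Grate:stronger}.

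One small justification needs fixing. Under the peer model, Assumption~\ref{assum:peer-oracle} bounds the row and column sums of $\G$, not of $\Gtilde$; under the latent model it is the reverse. You should obtain $\|\Gtilde\| = O_p(1)$ (respectively $\|\G\| = O_p(1)$) from $\|\Gtilde\| \le \|\G\| + \|\G-\Gtilde\|$. For the assumed matrix, bound the spectral norm by the geometric mean of its maximum row and column sums. This is also the correct replacement for the paper's claim $\|\G\| = 1$.
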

\begin{proof}
	Recalling the definitions of $\Hcheck$ and $\HpeerOracle$ from Definition~\ref{def:feasibleEstimators:peer} and Equation~\eqref{eq:def:HpeerOracle}, respectively, and applying the triangle inequality,
	\begin{equation*}
		\left\| \vu^\top \left( \Hcheck \Qinsttop - \HpeerOracle \right) \right\|
		\le \left\| \vu^\top \left( \Xhat \Q^\top - \Xpop \right) \right\|
		+ \left\| \vu^\top \G \left( \Xhat \Q^\top- \Xpop \right) \right\|
		+ \left\| \vu^\top \G^2 \left( \Xhat \Q^\top - \Xpop \right) \right\| .
	\end{equation*}
	Three applications of Lemma~\ref{lem:XhatXTB:ctrl} with $\mB = \vu$, $\mB = \G^\top \vu$ and $\mB = (\G^2)^\top \vu$ yields, after recalling that
	$\| (\G^2)^\top \vu \| \le  \| \G^\top \vu \| \le \| \vu \|$ since $\| \G \| = 1$,
	\begin{equation*}
		\left\| \vu^\top \left( \Hcheck \Qinsttop- \HpeerOracle \right) \right\|
		\le
		\frac{ C \| \vu \| \sqrt{ \nu + b^2 } \log n }{ \sqrt{\spop_d} }
		+
		\frac{ C \kappa(\Apop) \| \vu \| (\nu+b^2) n \log^2 n }{ \spop_d^{3/2} } .
	\end{equation*}
	Applying the growth assumptions in Equations~\eqref{eq:assum:spectralConc},~\eqref{eq:assum:XandW:spectrumUB} and~\eqref{eq:assum:XhatX:rate:2} completes the proof.
\end{proof}

\begin{lemma} \label{lem:HpeerHlat:innerprod}
	Let $\vu \in \R^n$ be independent of $\A-\Apop$ given $\Xpop$ and suppose that Assumptions~\ref{assum:Apop:spectrum} through~\ref{assum:latentpositions} hold.
	Then under either the peer contagion model in Equation~\eqref{eq:lim-peer-red} with Assumption~\ref{assum:peer-oracle} or the latent contagion model in Equation~\eqref{eq:lim-latent-red} with Assumption~\ref{assum:latent-oracle},
	\begin{equation*}
		\left\| \vu^\top \left( \HpeerOracle - \HlatOracle \right) \right\|
		= \op{ \sqrt{ \| \vu \| } } .
	\end{equation*}
\end{lemma}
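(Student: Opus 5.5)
The plan is to reduce the claim to the two triangle-inequality bounds already available, and then upgrade the resulting $\op{\|\vu\|}$ rate to the stated $\op{\sqrt{\|\vu\|}}$ rate by tracking explicit rates. Since $\HpeerOracle$ and $\HlatOracle$ share the blocks $\W$ and $\Xpop$, only four blocks of $\vu^\top(\HpeerOracle-\HlatOracle)$ are nonzero:
\begin{equation*}
\vu^\top(\G-\Gtilde)\W,\quad \vu^\top(\G-\Gtilde)\Xpop,\quad \vu^\top(\G^2-\Gtilde^2)\W,\quad \vu^\top(\G^2-\Gtilde^2)\Xpop .
\end{equation*}
I will bound each block directly rather than passing through $\Hcheck$. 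Passing through $\Hcheck$ (Lemmas~\ref{lem:HcheckHpeer:innerprod} and~\ref{lem:HcheckHlat:innerprod}) would import the $\Xhat\Q^\top-\Xpop$ error unnecessarily.

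\textbf{Step 1: explicit rate for the first-order blocks.} Take the singular value decomposition of $\W$ and $\Xpop$, as in the proof of Lemma~\ref{lem:HhatHlat:innerprod}. Each column is independent of $\A-\Apop$ given $\Xpop$. I then rerun the proof of Lemma~\ref{lem:GGtilde:quad}, but keep the explicit bounds from each displayed estimate rather than collapsing them to $\op{\cdot}$. This gives
\begin{equation*}
\left\|\vu^\top(\G-\Gtilde)\W\right\| + \left\|\vu^\top(\G-\Gtilde)\Xpop\right\| \le \epsilon_n \|\vu\|,
\end{equation*}
with
\begin{equation*}
\epsilon_n = C\left(\frac{\sqrt{\nu+b^2}\sqrt n\log n}{\min_i\dtilde_i}+\frac{(\nu+b^2)n^{3/2}\log^2 n}{\min_i\dtilde_i^2}+\frac{\sqrt{\nu+b^2}\,\|\Xpop\|_F\|\Xpop\|_{\tti}\, n\log n}{\min_i\dtilde_i^2}\right).
\end{equation*}
Assumptions~\eqref{eq:deglb:2},~\eqref{eq:assum:Xtti} and~\eqref{eq:assum:XandW:spectrumUB} show $\epsilon_n=o_p(1)$.

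\textbf{Step 2: second-order blocks.} For these I use the splitting from the proof of Lemma~\ref{lem:HhatHlat:innerprod}. Apply the first-order bound to $\vu^\top\Gtilde(\G-\Gtilde)\W$ and $\vu^\top(\G-\Gtilde)\Gtilde\W$, using $\|\Gtilde\|=1$; both $\Gtilde^\top\vu$ and $\Gtilde\W$ remain independent of $\A-\Apop$ given $\Xpop$. The remaining cross term is bounded by $\|\vu\|\|\W\|\|\G-\Gtilde\|^2$. By Lemma~\ref{lem:GGtilde:spectral} and~\eqref{eq:assum:Grate:stronger}, this is $\|\vu\|\cdot o_p(n^{-1/2})\sqrt n$, which again has the form $\epsilon_n'\|\vu\|$ with $\epsilon_n'=o_p(1)$. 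Altogether,
\begin{equation*}
\left\|\vu^\top(\HpeerOracle-\HlatOracle)\right\|\le \bar\epsilon_n\|\vu\|, \qquad \bar\epsilon_n=o_p(1).
\end{equation*}

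\textbf{Step 3: passing to $\sqrt{\|\vu\|}$.} I split into two cases. When $\|\vu\|\le 1$, we have $\|\vu\|\le\sqrt{\|\vu\|}$, and the claim follows at once. When $\|\vu\|>1$, I need $\bar\epsilon_n\sqrt{\|\vu\|}=o_p(1)$. This is the main obstacle, because the left-hand side is linear in $\vu$, so no rate uniform in arbitrary $\vu$ can hold. My approach is to exploit the quantitative strength of~\eqref{eq:assum:Grate:stronger} and~\eqref{eq:deglb:2}. These make each piece of $\bar\epsilon_n$ equal to $o_p(n^{-1/4})$: the first piece is $o(n^{-1/4})$ because $\min_i\dtilde_i^2=\omega(n^{3/2}\log^2 n)$, and the others follow similarly. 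Hence $\bar\epsilon_n\sqrt{\|\vu\|}=o_p(1)$ whenever $\|\vu\|=\Op{\sqrt n}$. This covers every vector to which the lemma is applied downstream, namely $\be$, $\Ytilde$ and columns of $\W$ and $\Xpop$, by Lemmas~\ref{lem:sglinear} and~\ref{lem:Y:peercon} and~\eqref{eq:assum:XandW:spectrumUB}. I will state this growth condition on $\vu$ explicitly in the proof.

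The step I expect to be hardest is verifying the $o(n^{-1/4})$ rate for the $\|\Xpop\|_F\|\Xpop\|_{\tti}$ term. If~\eqref{eq:deglb:2} alone is insufficient there, I will first try~\eqref{eq:assum:Grate:stronger} combined with $\|\Xpop\|_F^2=\trace\Apop\le d\,\spop_1$.
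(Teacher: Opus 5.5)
Your Steps 1--2 are sound. They give $\left\| \vu^\top ( \HpeerOracle - \HlatOracle ) \right\| \le \bar\epsilon_n \|\vu\|$ with $\bar\epsilon_n = o_p(1)$. The gap is in Step~3. Its key claim is that every piece of $\bar\epsilon_n$ is $o_p(n^{-1/4})$, and this does not follow from Assumptions~\ref{assum:Apop:spectrum}--\ref{assum:latentpositions}.

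\begin{itemize}
\item \textbf{The cross term.} Your own Step~2 already shows the problem. Lemma~\ref{lem:GGtilde:spectral}, together with \eqref{eq:deglb:2} and \eqref{eq:assum:Grate:stronger}, yields only $\|\G-\Gtilde\| = o_p(n^{-1/4})$. So $\|\W\|\,\|\G-\Gtilde\|^2 = \Op{\sqrt{n}} \cdot o_p(n^{-1/2}) = o_p(1)$, and the assumptions give nothing better.
\item \textbf{The second piece of $\epsilon_n$.} The quantity $(\nu+b^2) n^{3/2}\log^2 n/\min_i \dtilde_i^2$ is exactly what \eqref{eq:deglb:2} controls, and only to $o(\kappa^{-1})$. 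Write it as $\bigl[\spop_1 \sqrt{\nu+b^2}\sqrt{n}\log n/\min_i\dtilde_i^2\bigr]\cdot\bigl[\sqrt{\nu+b^2}\, n\log n/\spop_1\bigr]$. Then \eqref{eq:assum:Grate:stronger} helps only if the second factor is bounded. That follows from Assumption~\ref{assum:stronger4dhat}, but not from the hypotheses here. For example, binary edges with $\Apop$ of entry scale $n^{-1/5}$, bounded $\kappa$ and degrees of order $n^{4/5}$ satisfy Assumptions~\ref{assum:Apop:spectrum}--\ref{assum:latentpositions}, yet this piece is of order $n^{-1/10}\log^2 n$.
\item \textbf{The $\|\Xpop\|_F\|\Xpop\|_{\tti}$ piece.} Your fallback $\|\Xpop\|_F^2 \le d\spop_1$ fails the same way. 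It leaves a factor $\kappa\sqrt{\nu+b^2}\sqrt{n}\log n/\sqrt{\spop_1}$, and nothing in Assumptions~\ref{assum:Apop:spectrum}--\ref{assum:latentpositions} bounds it.
\end{itemize}

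So for $\|\vu\| \asymp \sqrt{n}$ your argument gives $\op{\sqrt{n}}$, not $\op{n^{1/4}}$. The added hypothesis $\|\vu\| = \Op{\sqrt{n}}$ changes the statement in any case.

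The fix is to target $\op{\|\vu\|}$, which is what the lemma should say. You are right that $\op{\sqrt{\|\vu\|}}$ cannot hold for arbitrary $\vu$, since the left-hand side is homogeneous of degree one in $\vu$. The paper's own proof also delivers only $\op{\|\vu\|}$: it inserts $\Hcheck\Qinsttop$ and adds the two $\op{\|\vu\|}$ bounds of Lemmas~\ref{lem:HcheckHpeer:innerprod} and~\ref{lem:HcheckHlat:innerprod}. That is also the form needed downstream. Lemma~\ref{lem:MpeerMlat:spectral} repeats the argument of Lemma~\ref{lem:MhatMlat:spectral}, where the inner-product bound enters only as $\op{\|\HlatOracle\|}$.

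Your Steps 1--2 prove exactly this, and by a more direct route than the paper's. For that purpose you do not even need explicit rates, since Lemma~\ref{lem:GGtilde:quad} as stated suffices. Because $\Xhat$ never appears in $\HpeerOracle - \HlatOracle$, you need only Lemmas~\ref{lem:GGtilde:spectral} and~\ref{lem:GGtilde:quad}. The paper's detour through $\Hcheck$ instead imports Lemma~\ref{lem:XhatXTB:ctrl}. Lemma~\ref{lem:HcheckHpeer:innerprod} applies that lemma with $\mB = \G^\top\vu$, which is not independent of $\A-\Apop$ as the lemma requires, and your route avoids this. Drop Step~3 and state the conclusion as $\op{\|\vu\|}$.
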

\begin{proof} 
Letting $\Qinst \in \R^{(3 p + 3 d) \times (3 p + 3 d)}$ be as in Equation~\eqref{eq:def:Qinst}, the triangle inequality yields
	\begin{equation*}
		\left\| \vu^\top\left( \HpeerOracle - \HlatOracle \right) \right\|
		\le \left\| \vu^\top\left( \Hcheck \Qinsttop - \HpeerOracle \right) \right\|
		+ \left\| \vu^\top \left( \Hcheck \Qinsttop - \HlatOracle \right) \right\|,
	\end{equation*}
	and Lemmas~\ref{lem:HcheckHpeer:innerprod} and~\ref{lem:HcheckHlat:innerprod} complete the proof.
\end{proof}

\section{Controlling Projections} \label{apx:projections}

Here we collect results controlling the projection matrices used in our two-stage least squares estimators.
Our primary task is to relate the projection $\Mhat$, from Definition~\ref{def:feasibleEstimators:latent} and the projection $\Mcheck$, from Definition~\ref{def:feasibleEstimators:latent}, to their oracle versions,
\begin{equation} \label{eq:def:MlatOracle}
	\MlatOracle = \HlatOracle \left( \HlatOracle^\top \HlatOracle \right) \HlatOracle^\top,
\end{equation}
and
\begin{equation} \label{eq:def:MpeerOracle}
	\MpeerOracle
	= \HpeerOracle (\HpeerOracle^\top \HpeerOracle)^{-1} \HpeerOracle^\top ,
\end{equation}
where $\HlatOracle \in \R^{n \times (3p+3d)}$ is as defined in Equation~\eqref{eq:def:HlatOracle} and $\HpeerOracle$ is as defined in Equation~\eqref{eq:def:HpeerOracle}.

\begin{lemma} \label{lem:MhatMlat:spectral}
	Suppose that Assumptions~\ref{assum:Apop:spectrum},~\ref{assum:degrees},~\ref{assum:latentpositions} and~\ref{assum:stronger4dhat} hold.
	With $\Mhat$ as given in Definition~\ref{def:feasibleEstimators:latent} and $\MlatOracle$ as defined in Equation~\eqref{eq:def:MlatOracle},
	under either the latent contagion model in Equation~\eqref{eq:lim-latent-red} with Assumption~\ref{assum:latent-oracle} or the peer contagion model in Equation~\eqref{eq:lim-peer-red} with Assumption~\ref{assum:peer-oracle},
	\begin{equation*}
		\left\| \Mhat - \MlatOracle \right\|
		= \op{ \frac{ 1 }{\sigmamin(\HlatOracle)} } .
	\end{equation*}
\end{lemma}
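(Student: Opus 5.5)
Write $\mE_H = \Hhat \Qinsttop - \HlatOracle$ for the instrument perturbation, and $\Delta = \|\mE_H\|$. The first step is to note that a projection matrix is invariant under right-multiplication of its generating matrix by an invertible matrix. Since $\Qinst$ is orthogonal, $\Mhat$ is also the projection onto the column space of $\Hhat \Qinsttop$. So it suffices to compare the projections generated by $\Hhat\Qinsttop$ and $\HlatOracle$. We also read $\MlatOracle$ with the inverse $(\HlatOracle^\top\HlatOracle)^{-1}$, as in Equation~\eqref{eq:def:MpeerOracle}. By Lemma~\ref{lem:HhatHlat:spectral}, $\Delta = \op{\sqrt{\sigmamin(\HlatOracle)}}$, and by Equation~\eqref{eq:assum:Hgrowth}, $\sigmamin(\HlatOracle) = \Omega(\sqrt n)$. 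Hence $\Delta = \op{\sigmamin(\HlatOracle)}$. Weyl's inequality then gives $\sigmamin(\Hhat\Qinsttop) \ge \sigmamin(\HlatOracle) - \Delta \ge \sigmamin(\HlatOracle)/2$ with probability tending to one. In particular $\Hhat$ has full column rank, so $\Mhat$ is well defined.

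Next I would invoke the standard perturbation bound for orthogonal projections onto column spaces of full-column-rank matrices of equal rank. This is Wedin's $\sin\Theta$ theorem, or equivalently the identity $\|P_1 - P_2\| = \|(\mI - P_1)P_2\|$ for projections of equal rank. Writing $P_2 = \HlatOracle \HlatOracle^{+}$, we get $(\mI - \Mhat)\HlatOracle = -(\mI-\Mhat)\mE_H$. This yields
\begin{equation*}
  \left\| \Mhat - \MlatOracle \right\|
  \le \frac{ \|(\mI-\Mhat)\mE_H\| }{ \sigmamin(\HlatOracle) }
  \le \frac{ \Delta }{ \sigmamin(\HlatOracle) } .
\end{equation*}
If I prefer an elementary route, I would instead expand $\Mhat - \MlatOracle$ algebraically by adding and subtracting $\HlatOracle(\Hhat^\top\Hhat)^{-1}\Hhat^\top$ and similar terms. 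Each piece is then bounded by submultiplicativity, using $\|(\Hhat^\top\Hhat)^{-1}\| \le 4/\sigmamin(\HlatOracle)^2$ and $\|\HlatOracle\| = \Op{\sqrt n}$. The difference of inverse Gram matrices is handled with $A^{-1}-B^{-1} = A^{-1}(B-A)B^{-1}$. This route is cruder because it picks up a condition-number factor $\|\HlatOracle\|/\sigmamin(\HlatOracle)$, but that factor is $\Op{1}$ by Equation~\eqref{eq:assum:Hgrowth}.

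The main obstacle is the rate bookkeeping. The crude bound above gives $\Op{\Delta/\sigmamin(\HlatOracle)} = \op{\sigmamin(\HlatOracle)^{-1/2}}$. The stated rate $\op{1/\sigmamin(\HlatOracle)}$ instead requires $\Delta = \op{1}$, which Lemma~\ref{lem:HhatHlat:spectral} does not deliver. So I would revisit the bounds inside that lemma: each term there was shown to be $\op{n^{1/4}}$, via $\|\Xhat\Q^\top - \Xpop\|$ and $\|\Ghat-\Gtilde\|\,\|\W\|$.

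To sharpen this, I would exploit the fact that the projection difference only sees the component of $\mE_H$ orthogonal to $\operatorname{col}(\HlatOracle)$, which suggests a two-step argument. First, use Lemma~\ref{lem:HhatHlat:innerprod} with $\vu$ ranging over an orthonormal basis of $\operatorname{col}(\HlatOracle)$, which is independent of $\A-\Apop$ only under latent contagion. This controls $\HlatOracle^\top \mE_H$ at rate $\op{\sqrt n}$, and hence the "in-subspace" part at $\op{1/\sigmamin}$ after normalization. Second, treat the orthogonal part via the second-order term $\Delta^2/\sigmamin(\HlatOracle)^2$, which is $\op{1/\sigmamin}$ given $\Delta = \op{\sqrt{\sigmamin}}$. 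If this refinement fails, I expect the downstream use in the 2SLS argument only needs $\|\Mhat - \MlatOracle\| \cdot \|\ZlatOracle\|\,\|\Y\| = \op{\sqrt n}$. In that case I would settle for the bound $\op{\sigmamin(\HlatOracle)^{-1/2}}$ and verify that it suffices there.
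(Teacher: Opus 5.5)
Your Wedin-type bound is correct. It gives $\|\Mhat-\MlatOracle\|\le\|\mE_H\|/\sigmamin(\HlatOracle)=\op{\sigmamin(\HlatOracle)^{-1/2}}$, and you are right that this falls short of the stated rate.

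The gap is in the proposed sharpening. The component of $\mE_H$ orthogonal to $\operatorname{col}(\HlatOracle)$ is not a second-order effect; it \emph{is} the first-order term. Since $(\mI-\MlatOracle)\HlatOracle=0$ and $\Mhat\Hhat\Qinsttop=\Hhat\Qinsttop$,
\begin{equation*}
  (\mI-\MlatOracle)\,\mE_H
  = (\mI-\MlatOracle)\,\Hhat\Qinsttop
  = (\Mhat-\MlatOracle)\,\Hhat\Qinsttop .
\end{equation*}
Hence $\|(\mI-\MlatOracle)\mE_H\|\le\|\Mhat-\MlatOracle\|\,\|\Hhat\|$. Conversely, since the ranks are equal, $\|\Mhat-\MlatOracle\|=\|(\mI-\MlatOracle)\Mhat\|\le\|(\mI-\MlatOracle)\mE_H\|/\sigmamin(\Hhat)$. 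With $\|\Hhat\|$ and $\sigmamin(\HlatOracle)$ both of order $\sqrt n$, the lemma is \emph{equivalent} to $\|(\mI-\MlatOracle)\mE_H\|=\op{1}$.

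Neither tool you invoke delivers this:
\begin{itemize}
  \item Lemma~\ref{lem:HhatHlat:innerprod}, applied to a basis of $\operatorname{col}(\HlatOracle)$, controls only $\MlatOracle\mE_H$, which leaves the projection unchanged to first order. That basis is in fact independent of $\A-\Apop$ under either outcome model, since $\HlatOracle$ involves only $\W$, $\Xpop$ and $\Gtilde$.
  \item Lemma~\ref{lem:HhatHlat:spectral} gives only $\op{n^{1/4}}$ for the orthogonal part, and the $\|\mE_H\|^2/\sigmamin^2(\HlatOracle)$ term cannot absorb it.
\end{itemize}
Your fallback also fails. Since $\|\ZlatOracle\|\,\|\Y\|$ is of order $n$, requiring $\|\Mhat-\MlatOracle\|\,\|\ZlatOracle\|\,\|\Y\|=\op{\sqrt n}$ is exactly the stated rate. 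The rate $\op{n^{-1/4}}$ yields only $\op{n^{3/4}}$ there, and the same problem arises for the $\|\ZlatOracle\|^2\|\Mhat-\MlatOracle\|$ term in Lemma~\ref{lem:thetahatTSLS2thetaOracleTSLS}.

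The paper's proof uses Davis--Kahan: $\|\Mhat-\MlatOracle\|\le C\|\Hhat\Hhat^\top-\HlatOracle\HlatOracle^\top\|/\sigmamin^2(\HlatOracle)$, with $\|\Hhat\Hhat^\top-\HlatOracle\HlatOracle^\top\|\le 2\|\mE_H\HlatOracle^\top\|+\|\mE_H\|^2$. It bounds the cross term by Lemma~\ref{lem:HhatHlat:innerprod} and the square by Lemma~\ref{lem:HhatHlat:spectral}. Your two-step refinement is this same argument in different words, and it breaks at the same place. Lemma~\ref{lem:HhatHlat:innerprod} controls $\HlatOracle^\top\mE_H$, a $(3p+3d)\times(3p+3d)$ matrix. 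The cross term is instead the $n\times n$ matrix $\mE_H\HlatOracle^\top$, whose norm is at least $\sigmamin(\HlatOracle)\|\mE_H\|$. Bounding it by $\op{\|\HlatOracle\|}$ therefore presupposes $\|\mE_H\|=\op{1}$. For instance, a perturbation of norm $n^{1/8}$ in a uniformly random direction is compatible with the conclusions of both lemmas, yet moves the projection by order $n^{-3/8}$.

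So your diagnosis is sound. What is missing, in your proposal and in the written proof alike, is a direct bound $\|(\mI-\MlatOracle)(\Hhat\Qinsttop-\HlatOracle)\|=\op{1}$. That requires sharper control of $\|\Xhat\Q^\top-\Xpop\|$ and $\|(\Ghat-\Gtilde)\W\|$ than the $\op{n^{1/4}}$ bounds established in the proof of Lemma~\ref{lem:HhatHlat:spectral}.
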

\begin{proof}
Let and $\symbf{U}_{\Hhat}$ and $\symbf{U}_{\HlatOracle}$ be the left singular vectors of $\Hhat$ and $\HpeerOracle$, respectively. Since $\Mhat$ and $\MlatOracle$ are projection matrices, we may write
	\begin{align*}
	\Mhat = \symbf{U}_{\Hhat} \symbf{U}_{\Hhat}^\top
	~~~\text{ and }~~~
	\MlatOracle = \symbf{U}_{\HlatOracle} \symbf{U}_{\HlatOracle}^\top,
	\end{align*}
	where $\symbf{U}_{\Hhat}$ and $\symbf{U}_{\HlatOracle}$ are the eigenvectors of $\Hhat \Hhat^\top$  and $\HlatOracle \HlatOracle^\top$, respectively.
	By the Davis-Kahan $\sin \Theta$ theorem \citep{bhatia1997,yu2015},
	\begin{equation} \label{eq:MhatMlat:DK}
		\left\| \Mhat - \MlatOracle \right\|
		= \norm*{\symbf{U}_{\Hhat} \symbf{U}_{\Hhat}^\top - \symbf{U}_{\HlatOracle} \symbf{U}_{\HlatOracle}^\top}
		\le \frac{ C \left\| \Hhat \Hhat^\top - \HlatOracle \HlatOracle^\top\right\| }
		{ \sigmamin^2\left( \HlatOracle \right) } ,
	\end{equation}
	where $\HlatOracle$ is as defined in Equation~\eqref{eq:def:HlatOracle} and $\Hhat$ is as in Definition~\ref{def:feasibleEstimators:latent}.
	
	Let $\Qinst \in \R^{3(p+d)\times 3(p+d)}$ be as in Equation~\eqref{eq:def:Qinst}, and recall that $\Qinst$ is orthogonal by construction.
	By the triangle inequality,
	\begin{equation*}
	\left\| \Hhat \Hhat^\top - \HlatOracle \HlatOracle^\top\right\|
	= \left\| \Hhat \Qinsttop \Qinst \Hhat^\top - \HlatOracle \HlatOracle^\top\right\|
	\le 2 \left\| \left( \Hhat \Qinsttop - \HlatOracle \right) \HlatOracle^\top \right\|
	+ \left\| \Hhat \Qinsttop - \HlatOracle \right\|^2.
	\end{equation*}
	Controlling the first term with Lemma~\ref{lem:HhatHlat:innerprod} and the second term with Lemma~\ref{lem:HhatHlat:spectral},
	\begin{equation*}
		\left\| \Hhat \Hhat^\top - \HlatOracle \HlatOracle^\top\right\|
		\le \op{ \| \HlatOracle \| }
		+ \op{ \sigmamin(\HlatOracle) }
		= \op{ \| \HlatOracle \| },
	\end{equation*}
	where we have used the growth rate in Equation~\eqref{eq:assum:XandW:spectrumUB}.
	Applying this to Equation~\eqref{eq:MhatMlat:DK},
	\begin{equation*}
		\left\| \Mhat - \MlatOracle \right\|
		= \op{ \frac{ \kappa( \HlatOracle ) }{\sigmamin(\HlatOracle)} } .
	\end{equation*}
	Our growth assumption in Equation~\eqref{eq:assum:Hgrowth} completes the proof.
\end{proof}

\begin{lemma} \label{lem:McheckMlat:spectral}
	Suppose that Assumptions~\ref{assum:Apop:spectrum},~\ref{assum:degrees} and~\ref{assum:latentpositions} hold.
	With $\Mcheck$ as given in Definition~\ref{def:feasibleEstimators:peer} and $\MlatOracle$ as defined in Equation~\eqref{eq:def:MlatOracle},
	under either the latent contagion model in Equation~\eqref{eq:lim-latent-red} with Assumption~\ref{assum:latent-oracle} or the peer contagion model in Equation~\eqref{eq:lim-peer-red} with Assumption~\ref{assum:peer-oracle},
	\begin{equation*}
		\left\| \Mcheck - \MlatOracle \right\|
		= \op{ \frac{1}{\sigmamin(\HlatOracle)} }.
	\end{equation*}
\end{lemma}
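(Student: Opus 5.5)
The argument follows the proof of Lemma~\ref{lem:MhatMlat:spectral} essentially verbatim, with $\Hcheck$ in place of $\Hhat$. Write $\Mcheck = \symbf{U}_{\Hcheck}\symbf{U}_{\Hcheck}^\top$ and $\MlatOracle = \symbf{U}_{\HlatOracle}\symbf{U}_{\HlatOracle}^\top$ in terms of left singular vectors. The Davis--Kahan $\sin\Theta$ theorem then gives
\begin{equation*}
  \left\| \Mcheck - \MlatOracle \right\|
  \le \frac{ C \left\| \Hcheck \Hcheck^\top - \HlatOracle \HlatOracle^\top \right\| }{ \sigmamin^2(\HlatOracle) } .
\end{equation*}
This is legitimate because $\HlatOracle\HlatOracle^\top$ has rank $3p+3d$ with smallest nonzero eigenvalue $\sigmamin^2(\HlatOracle)$, and the rest of its spectrum is zero. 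The eigengap is therefore $\sigmamin^2(\HlatOracle) = \Omega(n)$ by Equation~\eqref{eq:assum:Hgrowth}.

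Next I insert the orthogonal rotation $\Qinst$ from Equation~\eqref{eq:def:Qinst}. Since $\Hcheck\Hcheck^\top = \Hcheck\Qinsttop\Qinst\Hcheck^\top$, the triangle inequality gives
\begin{equation*}
  \left\| \Hcheck \Hcheck^\top - \HlatOracle \HlatOracle^\top \right\|
  \le 2 \left\| \left( \Hcheck \Qinsttop - \HlatOracle \right) \HlatOracle^\top \right\|
  + \left\| \Hcheck \Qinsttop - \HlatOracle \right\|^2 .
\end{equation*}

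The quadratic term is controlled by Lemma~\ref{lem:HcheckHlat:spectral}, which gives $\op{\sigmamin(\HlatOracle)}$. That lemma only needs Assumptions~\ref{assum:Apop:spectrum}--\ref{assum:latentpositions}, consistent with the absence of Assumption~\ref{assum:stronger4dhat} in the present statement.

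For the cross term, the simplest route is submultiplicativity, which bounds it by $\|\Hcheck\Qinsttop - \HlatOracle\| \, \|\HlatOracle\|$. Lemma~\ref{lem:HcheckHlat:spectral} gives $\op{n^{1/4}}$ for the first factor, and $\|\HlatOracle\| = \Op{\sqrt n}$ for the second, so the cross term is $\op{n^{3/4}}$. That would only yield $\|\Mcheck - \MlatOracle\| = \op{n^{-1/4}}$, which is too weak. I will therefore mirror the proof of Lemma~\ref{lem:MhatMlat:spectral} and apply the inner-product Lemma~\ref{lem:HcheckHlat:innerprod} column by column. Take $\vu$ to be each column of $\HlatOracle$ (there are $3p+3d = \Op{1}$ of them, so the Frobenius and spectral norms agree up to constants). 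This gives a cross term of $\op{\|\HlatOracle\|} = \op{\sqrt n}$. Combining with the quadratic term and using $\sigmamin^2(\HlatOracle) \asymp n$, I get $\op{\sqrt n / n} = \op{1/\sigmamin(\HlatOracle)}$, where Equation~\eqref{eq:assum:Hgrowth} converts the $\kappa(\HlatOracle)$ factor into a constant.

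The main obstacle is that Lemma~\ref{lem:HcheckHlat:innerprod} requires $\vu$ to be independent of $\A - \Apop$ given $\Xpop$. The columns of $\HlatOracle$ are $\W$, $\Xpop$, $\Gtilde\W$, $\Gtilde\Xpop$, $\Gtilde^2\W$ and $\Gtilde^2\Xpop$. These are all functions of $\W$ and $\Xpop$ only, never of $\A$, so the condition holds. This is precisely why I pair the error with $\HlatOracle$ rather than with $\HpeerOracle$ or $\Hcheck$, since those depend on $\A$. I would state this independence check explicitly. I would also note that $\W$ is assumed independent of the edge noise, which is implicit in the model.
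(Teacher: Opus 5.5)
Your outline is the paper's argument essentially verbatim: its proof of this lemma is a one-line reduction to the proof of Lemma~\ref{lem:MhatMlat:spectral}, with Lemmas~\ref{lem:HcheckHlat:innerprod} and~\ref{lem:HcheckHlat:spectral} swapped in. However, the step that is supposed to beat the crude bound does not work, in your version or in the one you are mirroring. Write $\Delta = \Hcheck\Qinsttop - \HlatOracle \in \R^{n \times (3p+3d)}$. Applying Lemma~\ref{lem:HcheckHlat:innerprod} with $\vu$ running over the columns of $\HlatOracle$ controls the small $(3p+3d)\times(3p+3d)$ matrix $\HlatOracle^\top \Delta$. The cross term in $\Hcheck\Hcheck^\top - \HlatOracle\HlatOracle^\top = \Delta\HlatOracle^\top + \HlatOracle\Delta^\top + \Delta\Delta^\top$ is instead the $n\times n$ matrix $\Delta\HlatOracle^\top$. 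These are not interchangeable. For any unit $y \in \R^{3p+3d}$, the vector $x = \HlatOracle(\HlatOracle^\top\HlatOracle)^{-1}y$ has $\HlatOracle^\top x = y$ and $\|x\| \le 1/\sigmamin(\HlatOracle)$, so
\[
  \left\| \Delta \HlatOracle^\top \right\| \ge \sigmamin(\HlatOracle)\, \left\| \Delta \right\| .
\]
The cross term is therefore never better than the submultiplicative bound you discarded. With only $\|\Delta\| = \op{n^{1/4}}$ from Lemma~\ref{lem:HcheckHlat:spectral}, you are stuck at $\|\Mcheck - \MlatOracle\| = \op{n^{-1/4}}$, exactly as you computed. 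The same conflation of $\HlatOracle^\top\Delta$ with $\Delta\HlatOracle^\top$ appears in the paper's proof of Lemma~\ref{lem:MhatMlat:spectral}. Your independence check for the columns of $\HlatOracle$ is correct but does not touch this issue.

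A different decomposition would not fix this. Take a unit $y$ with $\|(\mI-\MlatOracle)\Delta y\| = \|(\mI-\MlatOracle)\Delta\|$, and the unit vector $x \propto \Hcheck\Qinsttop y$ in the column space of $\Hcheck$. Then $(\mI - \MlatOracle)x = (\mI-\MlatOracle)\Delta y / \|\Hcheck\Qinsttop y\|$. Since $(\mI-\MlatOracle)\Mcheck = (\Mcheck-\MlatOracle)\Mcheck$,
\[
  \left\| \Mcheck - \MlatOracle \right\|
  \ge \left\| (\mI - \MlatOracle)\Mcheck \right\|
  \ge \frac{ \left\| (\mI - \MlatOracle)\Delta \right\| }{ \left\| \Hcheck \right\| } .
\]
The statement therefore forces $\|(\mI-\MlatOracle)\Delta\| = \op{1}$. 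That is control of the part of the instrument error \emph{orthogonal} to the column space of $\HlatOracle$, whereas Lemma~\ref{lem:HcheckHlat:innerprod} only sees $\MlatOracle\Delta$.

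Under Assumptions~\ref{assum:Apop:spectrum}--\ref{assum:latentpositions} this generally fails. Take a dense Bernoulli random dot product graph with edge probabilities bounded away from $0$ and $1$, which meets those assumptions. The leading term of the $\Xhat\Q^\top - \Xpop$ block of $\Delta$ is, up to an orthogonal factor, $(\A-\Apop)\Xpop(\Xpop^\top\Xpop)^{-1}$. Its columns have norm of constant order, and their projection onto any fixed low-dimensional subspace chosen independently of $\A-\Apop$ has norm only $\Op{n^{-1/2}}$. Hence $\|(\mI-\MlatOracle)\Delta\|$ stays bounded away from zero. The display then shows that $\sigmamin(\HlatOracle)\|\Mcheck-\MlatOracle\|$ is bounded away from zero with probability tending to one, contradicting the claimed rate.

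What your ingredients actually give is $\op{n^{-1/4}}$. Anything sharp enough for the downstream arguments would have to bound directly the specific bilinear forms in which $\Mcheck-\MlatOracle$ appears, such as $\ZlatOracle^\top(\Mcheck-\MlatOracle)\Y$, rather than the operator norm.
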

\begin{proof}
	The proof follows by the same argument as in the proof of Lemma~\ref{lem:MhatMlat:spectral}, using Lemmas~\ref{lem:HcheckHlat:innerprod} and~\ref{lem:HcheckHlat:spectral} in place of, respectively, Lemmas~\ref{lem:HhatHlat:innerprod} and~\ref{lem:HhatHlat:spectral}.
	Details are omitted.
\end{proof}

\begin{lemma} \label{lem:McheckMpeer:spectral}
	Suppose that Assumptions~\ref{assum:Apop:spectrum},~\ref{assum:degrees} and~\ref{assum:latentpositions} hold.
	With $\Mcheck$ as given in Definition~\ref{def:feasibleEstimators:peer} and $\MpeerOracle$ as defined in Equation~\eqref{eq:def:MpeerOracle},
	under either the latent contagion model in Equation~\eqref{eq:lim-latent-red} with Assumption~\ref{assum:latent-oracle} or the peer contagion model in Equation~\eqref{eq:lim-peer-red} with Assumption~\ref{assum:peer-oracle},
	\begin{equation*}
		\left\| \Mcheck - \MpeerOracle \right\|
		= \op{ \frac{1}{\sigmamin(\HlatOracle)} }.
	\end{equation*}
\end{lemma}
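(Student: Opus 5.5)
The plan mirrors the proofs of Lemmas~\ref{lem:MhatMlat:spectral} and~\ref{lem:McheckMlat:spectral}, with $\HpeerOracle$ in place of $\HlatOracle$ as the reference instrument. Write $\Mcheck = \symbf U_{\Hcheck}\symbf U_{\Hcheck}^\top$ and $\MpeerOracle = \symbf U_{\HpeerOracle}\symbf U_{\HpeerOracle}^\top$. Apply the Davis--Kahan $\sin\Theta$ theorem to the Gram matrices $\Hcheck\Hcheck^\top$ and $\HpeerOracle\HpeerOracle^\top$, both of rank $3(p+d)$, to get
\begin{equation*}
  \left\| \Mcheck - \MpeerOracle \right\|
  \le \frac{ C \left\| \Hcheck \Hcheck^\top - \HpeerOracle \HpeerOracle^\top \right\| }{ \sigmamin^2(\HpeerOracle) } .
\end{equation*}
Since the statement is phrased in terms of $\sigmamin(\HlatOracle)$, we first show $\sigmamin(\HpeerOracle) \ge \sigmamin(\HlatOracle) - \|\HpeerOracle - \HlatOracle\|$ by Weyl's inequality. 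Lemma~\ref{lem:HpeerHlat:spectral} gives $\|\HpeerOracle - \HlatOracle\| = \op{\sqrt{\sigmamin(\HlatOracle)}}$, and with Equation~\eqref{eq:assum:Hgrowth} this yields $\sigmamin(\HpeerOracle) \ge C\sigmamin(\HlatOracle)$ with high probability. Alternatively, Assumption~\ref{assum:peer-oracle} gives $\sigmamin(\HpeerOracle)=\Omega(\sqrt n)$ directly. So the denominator can be replaced by $\sigmamin^2(\HlatOracle)$ up to constants.

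For the numerator, insert the orthogonal matrix $\Qinst$ of Equation~\eqref{eq:def:Qinst} and expand:
\begin{equation*}
  \left\| \Hcheck\Hcheck^\top - \HpeerOracle\HpeerOracle^\top \right\|
  \le 2 \left\| \left( \Hcheck\Qinsttop - \HpeerOracle \right) \HpeerOracle^\top \right\|
  + \left\| \Hcheck\Qinsttop - \HpeerOracle \right\|^2 .
\end{equation*}
The squared term is $\op{\sigmamin(\HlatOracle)}$ by Lemma~\ref{lem:HcheckHpeer:spectral}.

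The cross term is $\|\HpeerOracle(\Hcheck\Qinsttop - \HpeerOracle)^\top\|$. We bound it column by column over the $3(p+d)$ columns of $\HpeerOracle$, a fixed number, using Lemma~\ref{lem:HcheckHpeer:innerprod} with $\vu$ equal to each column. Alternatively, we can go back to its proof, where $\Hcheck\Qinsttop - \HpeerOracle = [\,\zerovec\;\; (\Xhat\Q^\top-\Xpop)\;\; \zerovec\;\; \G(\Xhat\Q^\top-\Xpop)\;\; \zerovec\;\; \G^2(\Xhat\Q^\top-\Xpop)\,]$. Then Lemma~\ref{lem:XhatXTB:ctrl} applies with $\mB$ equal to $\HpeerOracle$, $\G^\top\HpeerOracle$ and $(\G^2)^\top\HpeerOracle$, each of which has $r = 3(p+d) = O(1)$ columns and spectral norm at most $\|\HpeerOracle\|$. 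Together with Equations~\eqref{eq:assum:spectralConc} and~\eqref{eq:assum:XhatX:rate:2}, either route gives a cross term of order $\op{\|\HpeerOracle\|}$. Since $\|\HpeerOracle\| = \Op{\sqrt n}$ and $\sigmamin(\HlatOracle) = \Omega(\sqrt n)$, the numerator is $\op{\sqrt n}$, and dividing by $\sigmamin^2 \asymp n$ gives $\op{1/\sigmamin(\HlatOracle)}$.

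\textbf{Main obstacle: the cross term.} Lemma~\ref{lem:HcheckHpeer:innerprod} and Lemma~\ref{lem:XhatXTB:ctrl} need the test vectors to be conditionally independent of $\A-\Apop$ given $\Xpop$, but the columns of $\HpeerOracle$ involve $\G$, which depends on $\A$. The same problem appears, and is silently passed over, in Lemma~\ref{lem:MhatMlat:spectral}. I would first handle it by replacing $\HpeerOracle$ with $\HlatOracle$ inside the cross term, since $\HlatOracle$ depends only on $\Xpop$ and $\W$:
\begin{equation*}
  \left\| (\Hcheck\Qinsttop-\HpeerOracle)^\top\HpeerOracle \right\|
  \le \left\| (\Hcheck\Qinsttop-\HpeerOracle)^\top\HlatOracle \right\|
  + \left\| \Hcheck\Qinsttop-\HpeerOracle \right\| \left\| \HpeerOracle-\HlatOracle \right\| .
\end{equation*}
The first piece is handled by Lemma~\ref{lem:HcheckHpeer:innerprod} applied to the columns of $\HlatOracle$. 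There the factors of $\G$ sit on the $\Xhat\Q^\top-\Xpop$ side, so $\mB = (\G^k)^\top \vu$ still carries $\A$-dependence. I would then split $\G = \Gtilde + (\G-\Gtilde)$: the $\Gtilde$ part satisfies the independence requirement, and the remainder is controlled in spectral norm by Lemma~\ref{lem:GGtilde:spectral} combined with $\|\Xhat\Q^\top - \Xpop\| = \op{n^{1/4}}$. The second piece is a product of two $\op{\sqrt{\sigmamin}}$ terms by Lemmas~\ref{lem:HcheckHpeer:spectral} and~\ref{lem:HpeerHlat:spectral}, hence $\op{\sqrt n}$. That completes the bound.
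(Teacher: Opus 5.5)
Your route is the paper's route. The paper proves this lemma by rerunning the Davis--Kahan argument of Lemma~\ref{lem:MhatMlat:spectral} with Lemmas~\ref{lem:HcheckHpeer:innerprod} and~\ref{lem:HcheckHpeer:spectral}. Your handling of the denominator (via Weyl and Lemma~\ref{lem:HpeerHlat:spectral}) and of the squared term is fine.

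\textbf{The cross-term step fails.} Following the paper does not close it, because the paper makes the same move. Write $\Delta = \Hcheck\Qinsttop - \HpeerOracle$. The cross term in $\Hcheck\Hcheck^\top - \HpeerOracle\HpeerOracle^\top$ is the $n \times n$ matrix $\HpeerOracle\Delta^\top$. But Lemma~\ref{lem:HcheckHpeer:innerprod} (with $\vu$ a column of an instrument matrix) and Lemma~\ref{lem:XhatXTB:ctrl} (with $\mB$ an instrument matrix) only control the $3(p+d)\times 3(p+d)$ matrix $\HpeerOracle^\top\Delta$. The quantity $\|(\Hcheck\Qinsttop-\HpeerOracle)^\top\HpeerOracle\|$ bounded in your ``main obstacle'' paragraph is this small product too. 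The two are not interchangeable: for unit $y \in \R^n$, $\|\HpeerOracle\Delta^\top y\| \ge \sigmamin(\HpeerOracle)\|\Delta^\top y\|$, so $\|\HpeerOracle\Delta^\top\| \ge \sigmamin(\HpeerOracle)\|\Delta\|$. The Davis--Kahan numerator is therefore $\op{\sqrt{n}}$ only if $\|\Delta\| = \op{1}$. Lemma~\ref{lem:HcheckHpeer:spectral} gives only $\|\Delta\| = \op{\sqrt{\sigmamin(\HlatOracle)}}$. If you bound the cross term by $\|\HpeerOracle\|\|\Delta\|$, your argument yields only $\|\Mcheck-\MpeerOracle\| = \op{\sigmamin(\HlatOracle)^{-1/2}}$.

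\textbf{No sharper cross-term estimate rescues the stated rate.} Three facts combine:
\begin{itemize}
\item $\|\Mcheck-\MpeerOracle\| \ge \|(\mI-\MpeerOracle)\Mcheck\| \ge \|(\mI-\MpeerOracle)\Hcheck\|/\|\Hcheck\|$, and $(\mI-\MpeerOracle)\Hcheck\Qinsttop = (\mI-\MpeerOracle)\Delta$.
\item Your own inner-product bounds give $\|\MpeerOracle\Delta\| \le \|\HpeerOracle^\top\Delta\|/\sigmamin(\HpeerOracle) = \op{1}$.
\item $\Delta$ contains $\Xhat\Q^\top - \Xpop$ as a column block.
\end{itemize}
Together these give $\|\Mcheck-\MpeerOracle\| \ge (\|\Xhat\Q^\top-\Xpop\| - \op{1})/\|\Hcheck\|$ with $\|\Hcheck\| = \Op{\sqrt{n}}$. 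Small $\HpeerOracle^\top\Delta$ means the estimation error is nearly orthogonal to the instrument space, which is exactly what makes the two column spaces differ.

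Consider a dense Bernoulli RDPG, which satisfies Assumptions~\ref{assum:Apop:spectrum}--\ref{assum:latentpositions}. There $\|\Xhat\Q^\top-\Xpop\|_F$ does not vanish; even after optimal Procrustes alignment its square converges to a positive constant. So $\|\Mcheck-\MpeerOracle\|$ is bounded below by a constant multiple of $1/\sigmamin(\HlatOracle)$, and the little-$o$ claim fails. What does hold is the direct bound $\|\Mcheck-\MpeerOracle\| \le \|\Delta\|/(\sigmamin(\HpeerOracle)-\|\Delta\|)$, which needs no cross term. Downstream terms such as $\ZpeerOracle^\top(\Mcheck-\MpeerOracle)\Y$ would have to be handled as bilinear forms rather than through this spectral norm.

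\textbf{Side note on independence.} Your concern is valid for the columns of $\HpeerOracle$ and for $\mB = \G^\top\vu$ inside Lemma~\ref{lem:HcheckHpeer:innerprod}. Your $\Gtilde + (\G-\Gtilde)$ split does repair the bound on $\HpeerOracle^\top\Delta$. However, Lemma~\ref{lem:MhatMlat:spectral} has no such independence issue, since $\HlatOracle$ depends only on $\W$ and $\Xpop$; it shares only the cross-term problem.
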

\begin{proof}
	The proof follows from the same argument as Lemma~\ref{lem:MhatMlat:spectral}, using Lemmas~\ref{lem:HcheckHpeer:innerprod} and~\ref{lem:HcheckHpeer:spectral} in place of, respectively, Lemmas~\ref{lem:HhatHlat:innerprod} and~\ref{lem:HhatHlat:spectral}.
	Details are omitted.
\end{proof}

\begin{lemma} \label{lem:MpeerMlat:spectral}
	Suppose that Assumptions~\ref{assum:Apop:spectrum},~\ref{assum:degrees} and~\ref{assum:latentpositions} hold.
	With $\MpeerOracle$ as defined in Equation~\eqref{eq:def:MpeerOracle}
	and $\MlatOracle$ as defined in Equation~\eqref{eq:def:MlatOracle},
	under either the latent contagion model in Equation~\eqref{eq:lim-latent-red} with Assumption~\ref{assum:latent-oracle} or the peer contagion model in Equation~\eqref{eq:lim-peer-red} with Assumption~\ref{assum:peer-oracle},
	\begin{equation*}
		\left\| \MpeerOracle - \MlatOracle \right\|
		= \op{ \frac{1}{\sigmamin(\HlatOracle)} }.
	\end{equation*}
\end{lemma}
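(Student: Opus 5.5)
The plan is to reuse the argument of Lemma~\ref{lem:MhatMlat:spectral}, with $\HpeerOracle$ in the role of $\Hhat \Qinsttop$. The alternative is to chain Lemmas~\ref{lem:McheckMpeer:spectral} and~\ref{lem:McheckMlat:spectral} through the triangle inequality,
\begin{equation*}
	\left\| \MpeerOracle - \MlatOracle \right\|
	\le \left\| \Mcheck - \MpeerOracle \right\| + \left\| \Mcheck - \MlatOracle \right\| ,
\end{equation*}
but that invokes Lemma~\ref{lem:McheckMlat:spectral}, whose proof relies on Lemma~\ref{lem:HcheckHlat:innerprod}, and the correction terms there are controlled using Lemma~\ref{lem:GGtilde:quad}. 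Since Lemma~\ref{lem:HpeerHlat:spectral} is itself proved by exactly this kind of chaining through $\Hcheck \Qinsttop$, I would rather compare the oracles directly. The conclusion is in any case immediate once the direct argument goes through.

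Step one: since $\MpeerOracle$ and $\MlatOracle$ are the orthogonal projections onto the column spaces of $\HpeerOracle$ and $\HlatOracle$, apply the Davis--Kahan $\sin\Theta$ theorem exactly as in Equation~\eqref{eq:MhatMlat:DK}. This gives
\begin{equation*}
	\left\| \MpeerOracle - \MlatOracle \right\|
	\le \frac{ C \left\| \HpeerOracle \HpeerOracle^\top - \HlatOracle \HlatOracle^\top \right\| }{ \sigmamin^2(\HlatOracle) } .
\end{equation*}
The eigengap is $\sigmamin^2(\HlatOracle)$ because $\HlatOracle\HlatOracle^\top$ has rank $3(p+d)$ with smallest nonzero eigenvalue $\sigmamin^2(\HlatOracle)$, which is of order $n$ by Equation~\eqref{eq:assum:Hgrowth}.

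Step two: write $\HpeerOracle = \HlatOracle + \Delta$ and expand,
\begin{equation*}
	\left\| \HpeerOracle\HpeerOracle^\top - \HlatOracle\HlatOracle^\top \right\| \le 2\left\| \Delta \HlatOracle^\top \right\| + \|\Delta\|^2 .
\end{equation*}
Lemma~\ref{lem:HpeerHlat:spectral} bounds the second term by $\op{\sigmamin(\HlatOracle)}$. For the first term, the crude bound $\|\Delta\|\|\HlatOracle\| = \op{\sqrt{\sigmamin(\HlatOracle)}}\,\Op{\sqrt n} = \op{n^{3/4}}$ already gives $\op{n^{3/4}}$. To reach $\op{\|\HlatOracle\|}$ as in Lemma~\ref{lem:MhatMlat:spectral}, I would apply Lemma~\ref{lem:HpeerHlat:innerprod} columnwise, with $\vu$ running over the $3(p+d)=\mathcal{O}(1)$ columns of $\HlatOracle$. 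The catch is that those columns involve $\Gtilde$, $\W$ and $\Xpop$; these are independent of $\A-\Apop$ given $\Xpop$, so the lemma applies. Each column has norm $\Op{\sqrt n}$, so Lemmas~\ref{lem:HcheckHpeer:innerprod} and~\ref{lem:HcheckHlat:innerprod} give $\|\Delta^\top \HlatOracle\| = \op{\sqrt n}$.

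Step three: combining the two bounds, the numerator is $\op{\sqrt n}$, and dividing by $\sigmamin^2(\HlatOracle) \asymp n$ yields $\op{n^{-1/2}} = \op{1/\sigmamin(\HlatOracle)}$.

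The main obstacle is step two's cross term. Lemma~\ref{lem:HpeerHlat:innerprod} as stated only gives $\op{\sqrt{\|\vu\|}}$, which is too weak. So I would bypass it and invoke the two underlying lemmas, Lemmas~\ref{lem:HcheckHpeer:innerprod} and~\ref{lem:HcheckHlat:innerprod}, directly; each gives $\op{\|\vu\|}$, and their sum is also $\op{\|\vu\|}$. I would also double-check that the independence hypothesis holds for $\vu$ a column of $\HlatOracle$. As a fallback if independence is delicate, the crude bound from step two still gives $\op{n^{-1/4}}$, which might suffice downstream but does not prove the stated rate.
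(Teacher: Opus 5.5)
Your route is the paper's. Its proof of this lemma is the Davis--Kahan template of Lemma~\ref{lem:MhatMlat:spectral}, with Lemmas~\ref{lem:HpeerHlat:innerprod} and~\ref{lem:HpeerHlat:spectral} plugged in. Replacing Lemma~\ref{lem:HpeerHlat:innerprod} by Lemmas~\ref{lem:HcheckHpeer:innerprod} and~\ref{lem:HcheckHlat:innerprod} is exactly what the proof of that lemma does. Incidentally, $\op{\sqrt{\|\vu\|}}$ is \emph{stronger}, not weaker, than $\op{\|\vu\|}$ when $\|\vu\|\ge 1$; it looks like a typo in the statement.

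The genuine gap is the cross term in step two. That term is the $n\times n$ matrix $\Delta\HlatOracle^\top$. Applying the inner-product lemmas with $\vu$ running over the columns of $\HlatOracle$ instead controls the $3(p+d)\times 3(p+d)$ matrix $\HlatOracle^\top\Delta$. These are not interchangeable. Because $\HlatOracle$ has full column rank,
\[
\|\Delta\HlatOracle^\top\| = \|\HlatOracle\Delta^\top\| \ge \sigmamin(\HlatOracle)\,\|\Delta\| .
\]
So your ``crude'' bound $\|\Delta\|\,\|\HlatOracle\|$ is sharp up to $\kappa(\HlatOracle)$. An $\op{\sqrt n}$ numerator would need $\|\Delta\| = \op{1}$, whereas Lemma~\ref{lem:HpeerHlat:spectral} gives only $\op{n^{1/4}}$. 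The paper's proof of Lemma~\ref{lem:MhatMlat:spectral} makes the same identification when it bounds $\|(\Hhat\Qinsttop-\HlatOracle)\HlatOracle^\top\|$ via Lemma~\ref{lem:HhatHlat:innerprod}, so you inherit this step rather than introduce it. Chaining through $\Mcheck$ would not help either, since Lemmas~\ref{lem:McheckMpeer:spectral} and~\ref{lem:McheckMlat:spectral} rest on the same template.

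The missing ingredient also does not seem to follow from the stated hypotheses. Fixed-$\vu$ bounds only see the part of $\Delta$ inside the column space of $\HlatOracle$, but what rotates the subspace is the orthogonal part:
\[
\|\MpeerOracle-\MlatOracle\| \ge \|(\mI-\MlatOracle)\MpeerOracle\| \ge \frac{\|(\mI-\MlatOracle)\Delta\|}{\|\HpeerOracle\|} .
\]
Consider a dense Bernoulli random dot product graph, which satisfies Assumptions~\ref{assum:Apop:spectrum}--\ref{assum:latentpositions}. Take a non-constant column $\mathbf w$ of $\W$. The corresponding column $(\G-\Gtilde)\mathbf w$ of $\Delta$ has nearly uncorrelated entries of variance of order $1/n$. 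Meanwhile $\MlatOracle$ is fixed given $\Xpop$ and $\W$. Removing the component in a $3(p+d)$-dimensional subspace therefore leaves a vector of norm of order one. Hence $\|\MpeerOracle-\MlatOracle\|$ is of exact order $n^{-1/2}$ there, not $\op{n^{-1/2}}$.

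What the Davis--Kahan (or Wedin) argument actually yields is $\Op{\|\Delta\|/\sigmamin(\HlatOracle)}$. The stated rate needs an extra hypothesis forcing $\|(\mI-\MlatOracle)\Delta\| = \op{1}$, for instance $n^2\nu_n/\min_i \dtilde_i^2 \to 0$.
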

\begin{proof}
	The proof follows from the same argument as Lemma~\ref{lem:MhatMlat:spectral}, using Lemmas~\ref{lem:HpeerHlat:innerprod} and~\ref{lem:HpeerHlat:spectral} in place of, respectively, Lemmas~\ref{lem:HhatHlat:innerprod} and~\ref{lem:HhatHlat:spectral}.
	Details are omitted.
\end{proof}

\begin{lemma} \label{lem:MhatMpeer:spectral}
	Suppose that Assumptions~\ref{assum:Apop:spectrum},~\ref{assum:degrees},~\ref{assum:latentpositions} and~\ref{assum:stronger4dhat} hold.
	With $\Mhat$ as given by Definition~\ref{def:feasibleEstimators:peer}
	and $\MpeerOracle$ as defined in Equation~\eqref{eq:def:MpeerOracle},
	under either the latent contagion model in Equation~\eqref{eq:lim-latent-red} with Assumption~\ref{assum:latent-oracle} or the peer contagion model in Equation~\eqref{eq:lim-peer-red} with Assumption~\ref{assum:peer-oracle},
	\begin{equation*}
		\left\| \Mhat - \MpeerOracle \right\|
		= \op{ \frac{1}{\sigmamin(\HlatOracle)} }.
	\end{equation*}
\end{lemma}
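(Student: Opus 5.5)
The plan is to obtain this bound from the triangle inequality, with $\MlatOracle$ as the intermediate projection, because every ingredient is already available:
\begin{equation*}
	\left\| \Mhat - \MpeerOracle \right\|
	\le \left\| \Mhat - \MlatOracle \right\|
	+ \left\| \MlatOracle - \MpeerOracle \right\| .
\end{equation*}
The first term is $\op{1/\sigmamin(\HlatOracle)}$ by Lemma~\ref{lem:MhatMlat:spectral}. That lemma holds under Assumptions~\ref{assum:Apop:spectrum},~\ref{assum:degrees},~\ref{assum:latentpositions} and~\ref{assum:stronger4dhat}, under either the latent or the peer contagion model. These are exactly the hypotheses of the present statement, and this is where Assumption~\ref{assum:stronger4dhat} enters, through the control of $\Ghat$ in Lemmas~\ref{lem:dhatdtilde} and~\ref{lem:GhatGtilde:spectral}. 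The second term is $\op{1/\sigmamin(\HlatOracle)}$ by Lemma~\ref{lem:MpeerMlat:spectral}, which needs only Assumptions~\ref{assum:Apop:spectrum} through~\ref{assum:latentpositions} together with the relevant oracle assumption. Summing two $\op{\cdot}$ terms of the same order finishes the argument.

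I would also check an alternative route, in case the intermediate lemmas are judged too loosely stated. It would repeat the Davis--Kahan argument of Lemma~\ref{lem:MhatMlat:spectral} directly, with $\HpeerOracle$ as the reference matrix:
\begin{equation*}
	\left\| \Mhat - \MpeerOracle \right\|
	\le \frac{ C \left\| \Hhat \Hhat^\top - \HpeerOracle \HpeerOracle^\top \right\| }{ \sigmamin^2(\HpeerOracle) } .
\end{equation*}
Writing $\Hhat\Qinsttop - \HpeerOracle = (\Hhat\Qinsttop - \HlatOracle) + (\HlatOracle - \HpeerOracle)$, the spectral pieces would be handled by Lemmas~\ref{lem:HhatHlat:spectral} and~\ref{lem:HpeerHlat:spectral}, and the cross terms by Lemmas~\ref{lem:HhatHlat:innerprod} and~\ref{lem:HpeerHlat:innerprod}. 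One would also need $\sigmamin(\HpeerOracle)$ and $\sigmamin(\HlatOracle)$ to be of the same order $\sqrt{n}$. This follows from Equation~\eqref{eq:assum:Hgrowth} together with Weyl's inequality and Lemma~\ref{lem:HpeerHlat:spectral}.

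The main obstacle on this alternative route is the cross term $\|(\Hhat\Qinsttop - \HlatOracle)\HpeerOracle^\top\|$. The inner-product lemmas require the test vector to be independent of $\A - \Apop$ given $\Xpop$, but $\HpeerOracle$ depends on $\G$ and hence on $\A$. The first route avoids this entirely because it only invokes already-established lemmas, which is why I would present the triangle-inequality argument as the proof. I would add a remark that the dependence issue has already been absorbed into Lemmas~\ref{lem:MhatMlat:spectral} and~\ref{lem:MpeerMlat:spectral}.
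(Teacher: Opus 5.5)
Your main argument is the paper's proof exactly: apply the triangle inequality through $\MlatOracle$, then invoke Lemmas~\ref{lem:MhatMlat:spectral} and~\ref{lem:MpeerMlat:spectral}, whose hypotheses match those of the statement. The alternative Davis--Kahan route you sketch is not needed.
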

\begin{proof}
	Applying the triangle inequality,
	\begin{equation*}
		\left\| \Mhat - \MpeerOracle \right\|
		\le \left\| \Mhat - \MlatOracle \right\|
		+ \left\| \MpeerOracle - \MlatOracle \right\|,
	\end{equation*}
	and Lemmas~\ref{lem:MhatMlat:spectral} and~\ref{lem:MpeerMlat:spectral} complete the proof.
\end{proof}

\section{Controlling the Design Matrices} \label{apx:design}

The following results are aimed toward controlling the behavior of our design matrices $\Zhat, \Zcheck, \ZpeerOracle$ and $\ZlatOracle$,
as well as their interactions with the projections onto the span of the instrument (i.e., $\Mhat,\Mcheck,\MpeerOracle$ and $\MlatOracle$).
As in our results on latent position estimation in Appendix~\ref{apx:XhatX}, we must account for the rotational non-identifiability of the latent positions.
This is achieved via an orthogonal matrix $\Qdes \in \R^{(2+p+d) \times (2+p+d)}$, defined according to
\begin{equation} \label{eq:def:Qdes}
\Qdes = 
\begin{bmatrix} 
1                     &\zerovec_{1\times p} &\zerovec_{1\times d} & 0 \\
\zerovec_{p\times 1}  &\mI_{p \times p}      &\zerovec_{p\times d} &\zerovec \\
\zerovec_{d \times 1} &\zerovec_{d\times p} &\Q 		&\zerovec_{d\times 1} \\
0                     &\zerovec_{1\times p} &\zerovec_{1\times d} & 1
\end{bmatrix} ,
\end{equation}
where $\Q \in \R^{d \times d}$ is the orthogonal matrix guaranteed by Lemma~\ref{lem:XhatXTB:ctrl}.

\begin{lemma} \label{lem:ZhatZlat:spectral}
Suppose that Assumptions~\ref{assum:Apop:spectrum},~\ref{assum:degrees},~\ref{assum:latentpositions} and~\ref{assum:stronger4dhat} hold, 
        Then under either the latent contagion model in Equation~\eqref{eq:lim-latent-red} with Assumption~\ref{assum:latent-oracle} or the peer contagion model in Equation~\eqref{eq:lim-peer-red} with Assumption~\ref{assum:peer-oracle},
	with $\Qdes \in \R^{(2+p+d) \times (2+p+d)}$ as given by Equation~\eqref{eq:def:Qdes},
	\begin{equation*}
		\left\| \Zhat \Qdestop - \ZlatOracle \right\| = \op{ \sigmamin( \MlatOracle \ZlatOracle ) } .
	\end{equation*} %
\end{lemma}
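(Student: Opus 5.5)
The difference $\Zhat \Qdestop - \ZlatOracle$ vanishes on the intercept and $\W$ blocks. Only two blocks survive: the latent-position block $\Xhat \Q^\top - \Xpop$ and the peer-outcome column $\Ghat \Y - \Gtilde \Y$. By the triangle inequality,
\begin{equation*}
  \left\| \Zhat \Qdestop - \ZlatOracle \right\|
  \le \left\| \Xhat \Q^\top - \Xpop \right\| + \left\| \left( \Ghat - \Gtilde \right) \Y \right\| .
\end{equation*}
By Equation~\eqref{eq:assum:Zgrowth}, $\sigmamin(\MlatOracle \ZlatOracle) = \Omega(\sqrt{n})$. It therefore suffices to show that each term on the right is $\op{\sqrt{n}}$.

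For the first term, I apply Lemma~\ref{lem:XhatXTB:ctrl} with $\mB = \mI$, exactly as in the proof of Lemma~\ref{lem:HhatHlat:spectral}. That argument already gives $\| \Xhat \Q^\top - \Xpop \| = \op{n^{1/4}}$ under Equations~\eqref{eq:assum:spectralConc} and~\eqref{eq:assum:XhatX:rate:2}, and this is certainly $\op{\sqrt{n}}$.

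For the second term I use submultiplicativity:
\begin{equation*}
  \left\| \left( \Ghat - \Gtilde \right) \Y \right\| \le \left\| \Ghat - \Gtilde \right\| \left\| \Y \right\| .
\end{equation*}
Under either model, $\| \Y \| = \Op{\sqrt{n}}$; this is Lemma~\ref{lem:Y:latcon} for latent contagion and the final claim of Lemma~\ref{lem:Y:peercon} for peer contagion. It then remains to show $\| \Ghat - \Gtilde \| = \op{1}$. Lemma~\ref{lem:GhatGtilde:spectral} gives
\begin{equation*}
  \left\| \Ghat - \Gtilde \right\|
  \le C \left( 1 + \frac{\spop_1}{\min_i \dtilde_i} \right)
  \frac{\sqrt{\kappa} \sqrt{\nu + b^2} \sqrt{n} \log n}{\min_i \dtilde_i} .
\end{equation*}
The $\spop_1$ piece is handled directly: Equation~\eqref{eq:assum:Grate:stronger} makes it $\op{n^{-1/4}}$, even carrying the extra $\sqrt{\kappa}$ factor. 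The "$1$" piece is $\sqrt{\kappa}\sqrt{\nu+b^2}\sqrt{n}\log n / \min_i \dtilde_i$. Equation~\eqref{eq:deglb:2} states $\min_i \dtilde_i^2 = \omega(\kappa (\nu + b^2) n^{3/2} \log^2 n)$, so this piece is $o(n^{-1/4})$.

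The main obstacle is making sure the extra $\sqrt{\kappa}$ factor in Lemma~\ref{lem:GhatGtilde:spectral} is absorbed. Its displayed conclusion drops that factor, so I will rely on the version in the lemma statement and check it against Equations~\eqref{eq:deglb:2} and~\eqref{eq:assum:Grate:stronger}. For the $\spop_1$ piece, if the factor does not fit directly, I will write $\spop_1 = \kappa \spop_d$ and use Equation~\eqref{eq:deglb:2:kappa}.

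A second, subtler point is that $\Y$ is not independent of $\A - \Apop$ under peer contagion. This is why I use the spectral-norm route here rather than the quadratic-form Lemma~\ref{lem:GhatGtilde:quad}; the bound needed for this lemma is only $\op{\sqrt{n}}$, so the crude route suffices. Combining the two terms then yields $\op{\sqrt{n}} = \op{\sigmamin(\MlatOracle \ZlatOracle)}$.
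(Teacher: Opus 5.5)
Your proposal is correct and follows the paper's proof essentially line for line: the same two-block triangle inequality, Lemma~\ref{lem:XhatXTB:ctrl} with $\mB = \mI$ for the latent-position block, and submultiplicativity with Lemma~\ref{lem:GhatGtilde:spectral}, Equation~\eqref{eq:assum:Grate:stronger} and Lemmas~\ref{lem:Y:latcon}/\ref{lem:Y:peercon} for the column $(\Ghat - \Gtilde)\Y$. On the $\sqrt{\kappa}$ factor, which the paper silently drops, two things need correcting: Equation~\eqref{eq:assum:Grate:stronger} alone does not absorb it, and your fallback via Equation~\eqref{eq:deglb:2:kappa} points the wrong way. The gap is closed anyway because only $\|\Ghat - \Gtilde\| = \op{1}$ is needed here, and Assumption~\ref{assum:stronger4dhat} combined with $\spop_1 \le n\|\Xpop\|_{\tti}^2$, Assumption~\ref{assum:latentpositions} and $\|\Xpop\| = \Op{\sqrt{n}}$ in fact forces $\kappa = \Op{1}$.
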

\begin{proof}
	Recalling the definitions of $\Zhat$ and $\ZlatOracle$ from Definition~\eqref{def:feasibleEstimators:latent} and Equation~\eqref{eq:def:ZlatOracle}, respectively, the triangle inequality and construction of $\Qdes$ imply
	\begin{equation} \label{eq:ZhatZlat:triangle}
		\left\| \Zhat \Qdestop - \ZlatOracle \right\|
		\le \left\| \Xhat \Q^\top - \Xpop \right\|
		+ \left\| \left( \Ghat - \Gtilde \right) \Y \right\| .
	\end{equation}
	Applying Lemma~\ref{lem:XhatXTB:ctrl} with $\mB = \mI$,
	\begin{equation*}
		\left\| \Xhat \Q^\top - \Xpop \right\|
		\le
		\frac{ C \sqrt{n} \sqrt{ \nu + b^2 } \log n }{ \sqrt{\spop_d} }
		+
		\frac{ C \kappa(\Apop) (\nu+b^2) n \log^2 n }{ \spop_d^{3/2} } ,
	\end{equation*}
	and our growth assumptions in Equations~\eqref{eq:assum:spectralConc}~\eqref{eq:assum:XandW:spectrumUB} and~\eqref{eq:assum:XhatX:rate:2} imply
	\begin{equation} \label{eq:ZhatZlat:spectral:Xterm}
		\left\| \Xhat \Q^\top - \Xpop \right\|
		= \op{ \sqrt{n} }
		= \op{ \sigmamin( \MlatOracle \ZlatOracle ) } ,
	\end{equation}
	where the second equality follows from our assumption in Equation~\eqref{eq:assum:Zgrowth}
	z
	Submultiplicativity and Lemma~\ref{lem:GhatGtilde:spectral} yield
	\begin{equation*}
		\left\| ( \Ghat-\Gtilde) \Y \right\|
		\le
		C \left( 1 + \frac{ \spop_1 }{ \min_{i\in [n]} \dtilde_i } \right)
		\frac{ \| \Y \| \sqrt{ \nu + b^2 } \sqrt{n} \log n }{ \min_{i \in [n]} \dtilde_i }
		= \op{ \| \Y \| },
	\end{equation*}
	where the equality follows from our growth assumption in Equation~\eqref{eq:assum:Grate:stronger}.
	Using Lemma~\ref{lem:Y:latcon} or Lemma~\ref{lem:Y:peercon} to control $\| \Y \| = \Op{\sqrt{n}}$ depending on whether latent contagion or peer contagion, respectively, is assumed,
	\begin{equation*}
		\left\| ( \Ghat-\Gtilde) \Y \right\|
		= \op{ \sqrt{n} }
		= \op{ \sigmamin( \MlatOracle \ZlatOracle ) },
	\end{equation*}
	where the second equality follows from our growth assumption in Equation~\eqref{eq:assum:Zgrowth}.
	Applying this and Equation~\eqref{eq:ZhatZlat:spectral:Xterm} to Equation~\eqref{eq:ZhatZlat:triangle} completes the proof.
\end{proof}

\begin{lemma} \label{lem:ZcheckZlat:spectral}
	Suppose that Assumptions~\ref{assum:Apop:spectrum},~\ref{assum:degrees} and~\ref{assum:latentpositions} hold.
	Then under either the latent contagion model in Equation~\eqref{eq:lim-latent-red} with Assumption~\ref{assum:latent-oracle} or the peer contagion model in Equation~\eqref{eq:lim-peer-red} with Assumption~\ref{assum:peer-oracle},
	with $\Qdes \in \R^{(2+p+d) \times (2+p+d)}$ as in Equation~\eqref{eq:def:Qdes},
	\begin{equation*}
	\left\| \Zcheck \Qdestop - \ZlatOracle \right\| 
	= \op{ \sigmamin( \MlatOracle \ZlatOracle ) } .
	\end{equation*}
\end{lemma}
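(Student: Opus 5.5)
The plan is to mirror the proof of Lemma~\ref{lem:ZhatZlat:spectral}, replacing $\Ghat$ by $\G$ throughout. By construction of $\Qdes$ in Equation~\eqref{eq:def:Qdes}, the columns $\1_n$ and $\W$ of $\Zcheck$ and $\ZlatOracle$ agree exactly. The $\Xhat$ block is rotated by $\Q$, and the last columns are $\G\Y$ and $\Gtilde\Y$ respectively. The triangle inequality therefore gives
\begin{equation*}
  \left\| \Zcheck \Qdestop - \ZlatOracle \right\|
  \le \left\| \Xhat \Q^\top - \Xpop \right\| + \left\| \left( \G - \Gtilde \right) \Y \right\| .
\end{equation*}

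For the first term I would reuse the bound from the proof of Lemma~\ref{lem:ZhatZlat:spectral} verbatim. Lemma~\ref{lem:XhatXTB:ctrl} with $\mB = \mI$, together with Equations~\eqref{eq:assum:spectralConc}, \eqref{eq:assum:XandW:spectrumUB} and~\eqref{eq:assum:XhatX:rate:2}, gives $\|\Xhat\Q^\top - \Xpop\| = \op{\sqrt{n}}$. Equation~\eqref{eq:assum:Zgrowth} then turns this into $\op{\sigmamin(\MlatOracle\ZlatOracle)}$. This step uses only Assumption~\ref{assum:Apop:spectrum}, which is consistent with Assumption~\ref{assum:stronger4dhat} being absent from the hypotheses.

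For the second term I would use submultiplicativity and Lemma~\ref{lem:GGtilde:spectral} instead of Lemma~\ref{lem:GhatGtilde:spectral}:
\begin{equation*}
  \left\| (\G - \Gtilde)\Y \right\|
  \le C\left(1 + \frac{\spop_1}{\min_i \dtilde_i}\right)\frac{\sqrt{\nu+b^2}\sqrt{n}\log n}{\min_i \dtilde_i}\,\|\Y\| .
\end{equation*}
The prefactor is $\op{1}$, and in fact $\op{n^{-1/4}}$. When $\spop_1 \ge \min_i\dtilde_i$ this follows from Equation~\eqref{eq:assum:Grate:stronger}. Otherwise the prefactor is at most a constant times $\sqrt{\nu+b^2}\sqrt n\log n/\min_i\dtilde_i$, and this is $o(n^{-1/4})$ by Equation~\eqref{eq:deglb:2}, since $\kappa \ge 1$.

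Next, $\|\Y\| = \Op{\sqrt n}$, by Lemma~\ref{lem:Y:latcon} under latent contagion or by Lemma~\ref{lem:Y:peercon} under peer contagion. Neither lemma requires Assumption~\ref{assum:stronger4dhat}. This gives $\|(\G-\Gtilde)\Y\| = \op{\sqrt n} = \op{\sigmamin(\MlatOracle\ZlatOracle)}$, again by Equation~\eqref{eq:assum:Zgrowth}. Combining the two bounds completes the argument.

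The main point to watch is not technical difficulty but hypothesis bookkeeping. First, Lemma~\ref{lem:GGtilde:spectral} is stated under Assumptions~\ref{assum:Apop:spectrum} through~\ref{assum:latentpositions} only, so it is available here. Second, $\Y$ is not independent of $\A$ under peer contagion. The spectral-norm route is chosen for exactly this reason: it sidesteps independence entirely, so no quadratic-form lemma is needed.
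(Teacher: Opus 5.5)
Your proposal is correct and is essentially the paper's proof. The paper splits $\|\Zcheck\Qdestop-\ZlatOracle\|$ into $\|\Xhat\Q^\top-\Xpop\|$ and $\|(\G-\Gtilde)\Y\|$ with the same triangle inequality, then repeats the argument of Lemma~\ref{lem:ZhatZlat:spectral} with Lemma~\ref{lem:GGtilde:spectral} in place of Lemma~\ref{lem:GhatGtilde:spectral}, controlling $\|\Y\|$ via Lemma~\ref{lem:Y:latcon} or Lemma~\ref{lem:Y:peercon}. Your case split on $\spop_1$ versus $\min_i \dtilde_i$ is harmless but unnecessary: the $\spop_1$-summand of the prefactor is controlled directly by Equation~\eqref{eq:assum:Grate:stronger} and the constant summand by Equation~\eqref{eq:deglb:2}, and for entrywise nonnegative $\Apop$ one always has $\spop_1 \ge \min_i \dtilde_i$ anyway.
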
 %
\begin{proof}
	Recalling the definitions of $\Zcheck$ and $\ZlatOracle$ from Definition~\eqref{def:feasibleEstimators:peer} and Equation~\eqref{eq:def:ZlatOracle}, the triangle inequality implies
	\begin{equation*} %
		\left\| \Zcheck \Qdestop - \ZlatOracle \right\|
		\le \left\| \Xhat \Q^\top - \Xpop \right\|
		+ \left\| \left( \G - \Gtilde \right) \Y \right\| .
	\end{equation*}
	The proof then follows by the same argument as in Lemma~\ref{lem:ZhatZlat:spectral}, using Lemma~\ref{lem:GGtilde:spectral} in place of Lemma~\ref{lem:GhatGtilde:spectral}.
\end{proof}

\begin{lemma} \label{lem:ZcheckZpeer:spectral}
	Suppose that Assumptions~\ref{assum:Apop:spectrum},~\ref{assum:degrees} and~\ref{assum:latentpositions} hold.
	Then under either the latent contagion model in Equation~\eqref{eq:lim-latent-red} with Assumption~\ref{assum:latent-oracle} or the peer contagion model in Equation~\eqref{eq:lim-peer-red} with Assumption~\ref{assum:peer-oracle},
        with $\Qdes \in \R^{(2+p+d) \times (2+p+d)}$ as in Equation~\eqref{eq:def:Qdes},
	\begin{equation*}
	\left\| \Zcheck \Qdestop - \ZpeerOracle \right\| = \op{ \sigmamin( \MlatOracle \ZlatOracle ) } .
	\end{equation*}
\end{lemma}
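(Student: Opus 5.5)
The argument follows Lemmas~\ref{lem:ZhatZlat:spectral} and~\ref{lem:ZcheckZlat:spectral}. The design matrices $\Zcheck = [\1_n \; \W \; \Xhat \; \G\Y]$ and $\ZpeerOracle = [\1_n \; \W \; \Xpop \; \G\Y]$ have identical first, second and last column blocks. By construction of $\Qdes$ in Equation~\eqref{eq:def:Qdes}, right multiplication by $\Qdestop$ acts as the identity on the $\1_n$, $\W$ and $\G\Y$ blocks and rotates only the latent-position block by $\Q^\top$. Hence $\Zcheck\Qdestop - \ZpeerOracle$ is zero except in the $d$ columns corresponding to the latent positions, where it equals $\Xhat\Q^\top - \Xpop$. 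This gives the exact bound
\begin{equation*}
	\left\| \Zcheck \Qdestop - \ZpeerOracle \right\| \le \left\| \Xhat \Q^\top - \Xpop \right\| .
\end{equation*}
This case is simpler than Lemma~\ref{lem:ZcheckZlat:spectral}, since no $(\G - \Gtilde)\Y$ term appears and we need not control $\Y$ under either model.

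We control the remaining term exactly as in the proof of Lemma~\ref{lem:ZhatZlat:spectral}. Applying Lemma~\ref{lem:XhatXTB:ctrl} with $\mB = \mI$ (so $r = n$ and $\|\mB\| = 1$) gives
\begin{equation*}
	\left\| \Xhat \Q^\top - \Xpop \right\|
	\le \frac{ C \sqrt{n} \sqrt{\nu+b^2} \log n }{ \sqrt{\spop_d} }
	+ \frac{ C \kappa (\nu+b^2) n \log^2 n }{ \spop_d^{3/2} } .
\end{equation*}
The first term is $\op{\sqrt{n}}$ by Equations~\eqref{eq:assum:spop:growth} and~\eqref{eq:assum:spectralConc}. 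Indeed, Equation~\eqref{eq:assum:spectralConc} gives $\sqrt{\nu+b^2}\log n/\sqrt{\spop_d} = o_p\bigl( (\nu+b^2)^{1/4} n^{-1/4} (\log n)^{1/2}\bigr)$, and this is $\op{1}$ once one uses Equation~\eqref{eq:assum:XhatX:rate:2} to keep $\nu+b^2$ from growing too fast. The second term is $\op{1}$ directly by Equation~\eqref{eq:assum:XhatX:rate:2}, hence certainly $\op{\sqrt n}$.

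Finally, Equation~\eqref{eq:assum:Zgrowth} gives $\sigmamin(\MlatOracle\ZlatOracle) = \Omega(\sqrt n)$, so $\op{\sqrt n} = \op{\sigmamin(\MlatOracle\ZlatOracle)}$, which completes the argument. Note that this lemma is stated relative to the latent oracle quantity $\sigmamin(\MlatOracle\ZlatOracle)$ even under peer contagion. We rely on Equation~\eqref{eq:assum:Zgrowth} being available under either oracle assumption, as the paper asserts right after Assumption~\ref{assum:latent-oracle}.

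The only delicate point is verifying that the first term of Lemma~\ref{lem:XhatXTB:ctrl} is genuinely $\op{\sqrt n}$. This reduces to $\sqrt{\nu+b^2}\log n = o_p(\sqrt{\spop_d})$, which I would extract from Equations~\eqref{eq:assum:spectralConc} and~\eqref{eq:assum:XhatX:rate:2} as in Lemma~\ref{lem:HcheckHpeer:spectral}. There the same bound was already shown to be $\op{\sqrt n}$, so we can simply cite that step.
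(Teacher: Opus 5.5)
Your proposal is correct and follows the paper's proof essentially verbatim. You reduce $\Zcheck\Qdestop-\ZpeerOracle$ to the latent-position block $\Xhat\Q^\top-\Xpop$, bound it with Lemma~\ref{lem:XhatXTB:ctrl} at $\mB=\mI$, show it is $\op{\sqrt n}$ via the growth conditions, and conclude with Equation~\eqref{eq:assum:Zgrowth}.

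One step needs tightening: Equations~\eqref{eq:assum:spectralConc} and~\eqref{eq:assum:XhatX:rate:2} alone do not make the first term $\op{\sqrt n}$, because they do not bound $\nu+b^2$ relative to $n$. You also need $\spop_d\le\spop_1=\|\Xpop\|^2=\Op{n}$ from Equation~\eqref{eq:assum:XandW:spectrumUB}, which the paper cites. Combined with Equation~\eqref{eq:assum:XhatX:rate:2}, this gives $(\nu+b^2)\log^2 n/\spop_d=\op{n^{-1/2}}$.
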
 %
\begin{proof}
	Recalling the definitions of $\Zcheck$ and $\ZpeerOracle$ from Definition~\eqref{def:feasibleEstimators:peer} and Equation~\eqref{eq:def:ZpeerOracle} and using the definition of $\Qdes$, the triangle inequality implies
	\begin{equation} \label{eq:ZcheckZpeer:triangle}
		\left\| \Zcheck \Qdestop - \ZpeerOracle \right\|
		\le \left\| \Xhat \Q^\top - \Xpop \right\|
		+ \left\| \left( \G - \G \right) \Y \right\|
		= \left\| \Xhat \Q^\top - \Xpop \right\| .
	\end{equation}
	Applying Lemma~\ref{lem:XhatXTB:ctrl} with $\mB = \mI$,
	\begin{equation*}
		\left\| \Zcheck \Qdestop - \ZpeerOracle \right\|
		\le
		\frac{ C \sqrt{n} \sqrt{ \nu + b^2 } \log n }{ \sqrt{\spop_d} }
		+
		\frac{ C \kappa(\Apop) (\nu+b^2) n \log^2 n }{ \spop_d^{3/2} }.
	\end{equation*}
	Applying our growth assumptions in Equations~\eqref{eq:assum:spectralConc},~\eqref{eq:assum:XandW:spectrumUB} and~\eqref{eq:assum:XhatX:rate:2},
	\begin{equation*}
	\left\| \Zcheck \Qdestop - \ZpeerOracle \right\| 
	= \op{ \sqrt{n} },
	\end{equation*}
	and our growth assumption in Equation~\eqref{eq:assum:Zgrowth} completes the proof.
\end{proof}

\begin{lemma} \label{lem:ZpeerZlat:spectral}
	Suppose that Assumptions~\ref{assum:Apop:spectrum},~\ref{assum:degrees} and~\ref{assum:latentpositions} hold.
	Then under either the latent contagion model in Equation~\eqref{eq:lim-latent-red} with Assumption~\ref{assum:latent-oracle} or the peer contagion model in Equation~\eqref{eq:lim-peer-red} with Assumption~\ref{assum:peer-oracle},
	\begin{equation*}
		\left\| \ZpeerOracle - \ZlatOracle \right\| = \op{ \| \MlatOracle \ZlatOracle \| } .
	\end{equation*}
\end{lemma}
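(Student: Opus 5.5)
The two design matrices $\ZpeerOracle = [\1_n \; \W \; \Xpop \; \G\y]$ and $\ZlatOracle = [\1_n \; \W \; \Xpop \; \Gtilde\y]$ share their first $p+d+1$ columns. Their difference is therefore supported on the last column, so that
\begin{equation*}
  \left\| \ZpeerOracle - \ZlatOracle \right\| = \left\| \left( \G - \Gtilde \right) \Y \right\| \le \left\| \G - \Gtilde \right\| \left\| \Y \right\| .
\end{equation*}
The plan is to show that this quantity is $\op{\sqrt{n}}$. We then conclude by comparing with $\| \MlatOracle \ZlatOracle \|$, which by Equation~\eqref{eq:assum:Zgrowth} is at least $\sigmamin(\MlatOracle\ZlatOracle) = \Omega(\sqrt{n})$.

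The first step controls the operator. Lemma~\ref{lem:GGtilde:spectral} gives, with high probability,
\begin{equation*}
  \left\| \G - \Gtilde \right\| \le C \left( 1 + \frac{\spop_1}{\min_i \dtilde_i} \right) \frac{\sqrt{\nu+b^2}\sqrt{n}\log n}{\min_i \dtilde_i} .
\end{equation*}
Since $\spop_1 \ge \min_i \dtilde_i$ (the top eigenvalue of a nonnegative matrix dominates its minimum row sum), the prefactor is at most $2\spop_1/\min_i\dtilde_i$. The bound therefore becomes $C \spop_1 \sqrt{\nu+b^2}\sqrt{n}\log n / \min_i \dtilde_i^2$, which is $\op{n^{-1/4}}$ by Equation~\eqref{eq:assum:Grate:stronger}; in particular it is $\op{1}$. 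This is exactly the argument already used in the proof of Lemma~\ref{lem:ZcheckZlat:spectral}.

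The second step controls the response. We use $\|\Y\| = \Op{\sqrt{n}}$, which follows from Lemma~\ref{lem:Y:latcon} under latent contagion and from Lemma~\ref{lem:Y:peercon} under peer contagion. Combining the two steps gives $\|(\G-\Gtilde)\Y\| = \op{\sqrt{n}}$.

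Finally, because $\MlatOracle$ is a projection,
\begin{equation*}
  \|\MlatOracle\ZlatOracle\| \ge \sigmamin(\MlatOracle\ZlatOracle) = \Omega(\sqrt{n}) ,
\end{equation*}
so $\op{\sqrt{n}} = \op{\|\MlatOracle\ZlatOracle\|}$, which is the claim.

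There is essentially no obstacle here. The only point needing care is that the spectral bound on $\G - \Gtilde$ must be $\op{1}$ uniformly over both models, which requires checking that Equation~\eqref{eq:assum:Grate:stronger} absorbs the $(1+\spop_1/\min_i\dtilde_i)$ factor. As noted above, it does.
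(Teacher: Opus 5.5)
Your proposal is correct and follows the paper's proof essentially verbatim: you isolate the last column, bound $\|(\G-\Gtilde)\Y\|$ by submultiplicativity using Lemma~\ref{lem:GGtilde:spectral}, Equation~\eqref{eq:assum:Grate:stronger} and $\|\Y\|=\Op{\sqrt{n}}$, and then compare with $\sigmamin(\MlatOracle\ZlatOracle)=\Omega(\sqrt{n})$. The paper actually records the slightly stronger $\op{\sigmamin(\MlatOracle\ZlatOracle)}$, and the inequality $\|\MlatOracle\ZlatOracle\|\ge\sigmamin(\MlatOracle\ZlatOracle)$ holds for any matrix, so it does not depend on $\MlatOracle$ being a projection.
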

\begin{proof}
	Recalling the definitions of $\ZpeerOracle$ and $\ZlatOracle$ from Equations~\eqref{eq:def:ZpeerOracle} and~\eqref{eq:def:ZlatOracle}, respectively, the triangle inequality implies
	\begin{equation*} %
		\left\| \ZpeerOracle - \ZlatOracle \right\|
		\le \left\| \left( \G - \Gtilde \right) \Y \right\| .
	\end{equation*}
	The proof then follows by the same argument as given in the second half of the proof of Lemma~\ref{lem:ZhatZlat:spectral}, using Lemma~\ref{lem:GGtilde:spectral} in place of Lemma~\ref{lem:GhatGtilde:spectral}.
\end{proof}

\begin{lemma} \label{lem:ZhatZpeer:spectral}
	Suppose that Assumptions~\ref{assum:Apop:spectrum},~\ref{assum:degrees},~\ref{assum:latentpositions} and~\ref{assum:stronger4dhat} hold.
	Then under either the latent contagion model in Equation~\eqref{eq:lim-latent-red} with Assumption~\ref{assum:latent-oracle} or the peer contagion model in Equation~\eqref{eq:lim-peer-red} with Assumption~\ref{assum:peer-oracle},
        with $\Qdes \in \R^{(2+p+d) \times (2+p+d)}$ as in Equation~\eqref{eq:def:Qdes},
	\begin{equation*}
	\left\| \Zhat \Qdestop- \ZpeerOracle \right\| 
	= \op{ \| \MlatOracle \ZlatOracle \| } .
	\end{equation*}
\end{lemma}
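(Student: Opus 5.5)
The plan is to reduce this bound to the two already-established comparisons through the latent oracle design matrix $\ZlatOracle$. By the triangle inequality,
\begin{equation*}
\left\| \Zhat \Qdestop - \ZpeerOracle \right\|
\le \left\| \Zhat \Qdestop - \ZlatOracle \right\|
+ \left\| \ZpeerOracle - \ZlatOracle \right\| .
\end{equation*}
Each right-hand term is then handled by a lemma proved earlier in this section.

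The second term is controlled directly by Lemma~\ref{lem:ZpeerZlat:spectral}. Under either contagion model with the corresponding oracle assumption, that lemma gives $\op{\| \MlatOracle \ZlatOracle \|}$, which is exactly the target rate.

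For the first term, Lemma~\ref{lem:ZhatZlat:spectral} applies, since it requires Assumption~\ref{assum:stronger4dhat}, which we have here. It gives $\op{\sigmamin(\MlatOracle \ZlatOracle)}$. This is trivially at most $\op{\| \MlatOracle \ZlatOracle \|}$, because the smallest singular value never exceeds the spectral norm. Summing the two bounds completes the argument.

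There is essentially no obstacle. The one point to check is that the rotation is handled consistently: $\ZpeerOracle$ and $\ZlatOracle$ both contain the unrotated $\Xpop$, while only $\Zhat$ carries $\Xhat$, so $\Qdes$ appears only in the first term, exactly as in Lemma~\ref{lem:ZhatZlat:spectral}.

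As a sanity check, one could instead split the difference directly as $\| \Xhat \Q^\top - \Xpop \| + \| (\Ghat - \G)\Y \|$. The first piece is bounded by Lemma~\ref{lem:XhatXTB:ctrl} with $\mB = \mI$. The second is bounded by Lemma~\ref{lem:GhatG:spectral} together with Equation~\eqref{eq:assum:Grate:stronger} and $\|\Y\| = \Op{\sqrt{n}}$, which holds by Lemma~\ref{lem:Y:latcon} or Lemma~\ref{lem:Y:peercon}. Both pieces are then $\op{\sqrt{n}}$, which is $\op{\|\MlatOracle\ZlatOracle\|}$ by Equation~\eqref{eq:assum:Zgrowth}. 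This route gives the same conclusion.
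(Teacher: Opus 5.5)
Your proposal is correct and follows the paper's proof exactly: it applies the triangle inequality through $\ZlatOracle$ and then invokes Lemmas~\ref{lem:ZhatZlat:spectral} and~\ref{lem:ZpeerZlat:spectral}. Your observation that $\sigmamin(\MlatOracle \ZlatOracle) \le \| \MlatOracle \ZlatOracle \|$ simply spells out the step the paper leaves implicit.
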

\begin{proof}
	By the triangle inequality,
	\begin{equation*}
		\left\| \Zhat \Qdestop - \ZpeerOracle \right\|
		\le \left\| \Zhat \Qdestop- \ZlatOracle \right\|
		+ \left\| \ZpeerOracle - \ZlatOracle \right\|,
	\end{equation*}
	and Lemmas~\ref{lem:ZhatZlat:spectral} and~\ref{lem:ZpeerZlat:spectral} complete the proof.
\end{proof}

\begin{lemma} \label{lem:MhatZhat:forWeyl}
	Suppose that Assumptions~\ref{assum:Apop:spectrum},~\ref{assum:degrees},~\ref{assum:latentpositions} and~\ref{assum:stronger4dhat} hold.
	Then under either the latent contagion model in Equation~\eqref{eq:lim-latent-red} with Assumption~\ref{assum:latent-oracle} or the peer contagion model in Equation~\eqref{eq:lim-peer-red} with Assumption~\ref{assum:peer-oracle},
	with $\Qdes \in \R^{(2+p+d) \times (2+p+d)}$ as given by Equation~\eqref{eq:def:Qdes},
	\begin{equation*}
		\left\| \Mhat \Zhat \Qdestop - \MlatOracle \ZlatOracle \right\| = \op{ \sigmamin(\MlatOracle \ZlatOracle) } .
	\end{equation*}
\end{lemma}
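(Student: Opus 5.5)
We plan to split the difference by adding and subtracting $\MlatOracle \Zhat \Qdestop$, so that by the triangle inequality
\begin{equation*}
\left\| \Mhat \Zhat \Qdestop - \MlatOracle \ZlatOracle \right\|
\le \left\| \left( \Mhat - \MlatOracle \right) \Zhat \Qdestop \right\|
+ \left\| \MlatOracle \left( \Zhat \Qdestop - \ZlatOracle \right) \right\| .
\end{equation*}
The second term is the easier one. Since $\MlatOracle$ is an orthogonal projection, $\|\MlatOracle\| \le 1$, so submultiplicativity together with Lemma~\ref{lem:ZhatZlat:spectral} gives $\op{\sigmamin(\MlatOracle \ZlatOracle)}$ immediately.

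For the first term we use submultiplicativity, $\|(\Mhat - \MlatOracle)\Zhat \Qdestop\| \le \|\Mhat - \MlatOracle\| \, \|\Zhat\|$, noting that $\Qdes$ is orthogonal. Lemma~\ref{lem:MhatMlat:spectral} gives $\|\Mhat - \MlatOracle\| = \op{1/\sigmamin(\HlatOracle)}$. For $\|\Zhat\|$ we apply the triangle inequality against $\ZlatOracle$: Lemma~\ref{lem:ZhatZlat:spectral} combined with Equation~\eqref{eq:assum:Zgrowth} yields $\|\Zhat\| \le \|\ZlatOracle\| + \op{\sqrt n} = \Op{\sqrt{n}}$. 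Equation~\eqref{eq:assum:Hgrowth} gives $\sigmamin(\HlatOracle) = \Omega(\sqrt n)$, so the first term is $\op{1}\cdot\Op{1} = \op{1}$. This is far below $\sigmamin(\MlatOracle\ZlatOracle) = \Omega(\sqrt n)$ by Equation~\eqref{eq:assum:Zgrowth}, and combining the two terms finishes the argument.

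The only delicate point is the rate in Lemma~\ref{lem:MhatMlat:spectral}. Its proof passes through Davis--Kahan and ends with a factor $\kappa(\HlatOracle)/\sigmamin(\HlatOracle)$, which is then absorbed using the bounded condition number of $\HlatOracle$ implied by Equation~\eqref{eq:assum:Hgrowth}. We take that lemma as stated. Even under a weaker reading, the argument only needs $\|\Mhat - \MlatOracle\| = \op{1}$, because multiplying by $\|\Zhat\| = \Op{\sqrt n}$ still gives $\op{\sqrt n} = \op{\sigmamin(\MlatOracle \ZlatOracle)}$. So there is ample slack, and the main thing to check is the bookkeeping of the orthogonal rotation $\Qdes$, which the norms do not see.
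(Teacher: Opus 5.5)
Your argument is correct and is essentially the paper's proof. Both split by the triangle inequality into a projection-difference term, controlled by Lemma~\ref{lem:MhatMlat:spectral}, and a design-difference term, controlled by Lemma~\ref{lem:ZhatZlat:spectral} together with the fact that a projection has norm at most one. The one difference is the cross term: the paper inserts $\Mhat \ZlatOracle$ instead of your $\MlatOracle \Zhat \Qdestop$, so it can use $\|\ZlatOracle\| = \Op{\sqrt{n}}$ from Equation~\eqref{eq:assum:Zgrowth} directly and skip your extra step of bounding $\|\Zhat\|$. Your remark that only $\|\Mhat - \MlatOracle\| = \op{1}$ is needed applies equally to the paper's version.
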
 %
\begin{proof}
	By the triangle inequality,
	\begin{equation} \label{eq:MZ:triangle}
	\left\| \Mhat \Zhat \Qdestop - \MlatOracle \ZlatOracle \right\|
	\le
	\left\| (\Mhat-\MlatOracle) \ZlatOracle \right\|
	+ \left\| \Mhat \left( \Zhat \Qdestop - \ZlatOracle \right) \right\|
	\end{equation}
	Since $\Mhat$ is a projection, submultiplicativity and Lemma~\ref{lem:ZhatZlat:spectral} yields
	\begin{equation*}
	\left\| \Mhat\left( \Zhat\Qdestop - \ZlatOracle \right) \right\|
	= \op{ \sigmamin( \MlatOracle \ZlatOracle ) } .
	\end{equation*}
	Similarly, submultiplicativity and Lemma~\ref{lem:MhatMlat:spectral} yield
	\begin{equation*}
	\left\| (\Mhat-\MlatOracle) \ZlatOracle \right\|
	= \op{ \frac{ \| \ZlatOracle \| }{ \sigmamin(\HlatOracle) } }
	= \op{ \sigmamin( \MlatOracle \ZlatOracle ) },
	\end{equation*}
	where the second equality follows from our growth assumptions in Equations~\eqref{eq:assum:Hgrowth} and~\eqref{eq:assum:Zgrowth}.
	Applying the above two bounds to Equation~\eqref{eq:MZ:triangle} yields the desired result.
\end{proof}

\begin{lemma} \label{lem:McheckZcheck:forWeyl}
	Suppose that Assumptions~\ref{assum:Apop:spectrum},~\ref{assum:degrees} and~\ref{assum:latentpositions} hold.
	Then under either the latent contagion model in Equation~\eqref{eq:lim-latent-red} with Assumption~\ref{assum:latent-oracle} or the peer contagion model in Equation~\eqref{eq:lim-peer-red} with Assumption~\ref{assum:peer-oracle},
	with $\Qdes \in \R^{(2+p+d) \times (2+p+d)}$ as given by Equation~\eqref{eq:def:Qdes},
	\begin{equation*}
		\left\| \Mcheck \Zcheck \Qdestop - \MlatOracle \ZlatOracle \right\| = \op{ \sigmamin(\MlatOracle \ZlatOracle) } .
	\end{equation*}
\end{lemma}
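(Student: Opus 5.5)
This mirrors the proof of Lemma~\ref{lem:MhatZhat:forWeyl}, with the peer-contagion feasible quantities $\Mcheck,\Zcheck$ in place of $\Mhat,\Zhat$. First I apply the triangle inequality after adding and subtracting $\Mcheck \ZlatOracle$:
\begin{equation*}
  \left\| \Mcheck \Zcheck \Qdestop - \MlatOracle \ZlatOracle \right\|
  \le
  \left\| (\Mcheck - \MlatOracle) \ZlatOracle \right\|
  + \left\| \Mcheck \left( \Zcheck \Qdestop - \ZlatOracle \right) \right\| .
\end{equation*}

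For the second term, $\Mcheck$ is an orthogonal projection, so $\|\Mcheck\| \le 1$. Submultiplicativity together with Lemma~\ref{lem:ZcheckZlat:spectral} then bounds this term by $\|\Zcheck \Qdestop - \ZlatOracle\| = \op{\sigmamin(\MlatOracle \ZlatOracle)}$. Lemma~\ref{lem:ZcheckZlat:spectral} requires only Assumptions~\ref{assum:Apop:spectrum}--\ref{assum:latentpositions} and holds under either model, so no appeal to Assumption~\ref{assum:stronger4dhat} is needed. This is consistent with the hypotheses of the present lemma.

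For the first term, submultiplicativity and Lemma~\ref{lem:McheckMlat:spectral} give
\[
\left\| (\Mcheck - \MlatOracle) \ZlatOracle \right\| \le \|\Mcheck - \MlatOracle\|\,\|\ZlatOracle\| = \op{\|\ZlatOracle\|/\sigmamin(\HlatOracle)}.
\]
The growth rates in Equations~\eqref{eq:assum:Hgrowth} and~\eqref{eq:assum:Zgrowth} give $\|\ZlatOracle\| = \Op{\sqrt n}$ and $\sigmamin(\HlatOracle) = \Omega(\sqrt n)$. Hence this term is $\op{1}$, which is $\op{\sqrt n} = \op{\sigmamin(\MlatOracle \ZlatOracle)}$, again by Equation~\eqref{eq:assum:Zgrowth}. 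Combining the two bounds gives the claim.

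The only step needing care is confirming that each cited lemma holds under both the peer and latent contagion models. Lemma~\ref{lem:ZcheckZlat:spectral} depends on $\|(\G - \Gtilde)\Y\|$ being $\op{\sqrt n}$. That follows from Lemma~\ref{lem:GGtilde:spectral}, Equation~\eqref{eq:assum:Grate:stronger}, and $\|\Y\| = \Op{\sqrt n}$, where the last fact comes from Lemma~\ref{lem:Y:latcon} or Lemma~\ref{lem:Y:peercon} depending on the model. Both cited lemmas are stated for either model, so the argument goes through without modification.
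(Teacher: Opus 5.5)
Your proposal is correct and follows the paper's proof, which reruns the argument of Lemma~\ref{lem:MhatZhat:forWeyl}. That argument uses the same triangle-inequality split into $(\Mcheck-\MlatOracle)\ZlatOracle$ and $\Mcheck(\Zcheck\Qdestop-\ZlatOracle)$, with Lemmas~\ref{lem:ZcheckZlat:spectral} and~\ref{lem:McheckMlat:spectral} in place of their $\Zhat,\Mhat$ counterparts, followed by the growth rates in Equations~\eqref{eq:assum:Hgrowth} and~\eqref{eq:assum:Zgrowth}.
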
 %
\begin{proof}
	The proof follows by the same argument as Lemma~\ref{lem:MhatZhat:forWeyl}, using Lemma~\ref{lem:ZcheckZlat:spectral} in place of Lemma~\ref{lem:ZhatZlat:spectral} and Lemma~\ref{lem:McheckMlat:spectral} in place of Lemma~\ref{lem:MhatMlat:spectral}.
\end{proof}

\begin{lemma} \label{lem:MpeerZpeer:forWeyl}
	Suppose that Assumptions~\ref{assum:Apop:spectrum},~\ref{assum:degrees} and~\ref{assum:latentpositions} hold.
	Then under either the latent contagion model in Equation~\eqref{eq:lim-latent-red} with Assumption~\ref{assum:latent-oracle} or the peer contagion model in Equation~\eqref{eq:lim-peer-red} with Assumption~\ref{assum:peer-oracle},
	with $\Qdes \in \R^{(2+p+d) \times (2+p+d)}$ as given by Equation~\eqref{eq:def:Qdes},
	\begin{equation*}
		\left\| \MpeerOracle \ZpeerOracle - \MlatOracle \ZlatOracle \right\|
		= \op{ \sigmamin(\MlatOracle \ZlatOracle) } .
	\end{equation*}
\end{lemma}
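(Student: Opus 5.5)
The plan mirrors the proof of Lemma~\ref{lem:MhatZhat:forWeyl}, with the oracle pair $(\MpeerOracle,\ZpeerOracle)$ in place of the feasible pair $(\Mhat,\Zhat\Qdestop)$. No rotation is needed here, since neither matrix involves $\Xhat$. Adding and subtracting $\MpeerOracle \ZlatOracle$ and applying the triangle inequality gives
\begin{equation*}
\left\| \MpeerOracle \ZpeerOracle - \MlatOracle \ZlatOracle \right\|
\le
\left\| \left( \MpeerOracle - \MlatOracle \right) \ZlatOracle \right\|
+ \left\| \MpeerOracle \left( \ZpeerOracle - \ZlatOracle \right) \right\| .
\end{equation*}
I will show that each right-hand term is $\op{\sigmamin(\MlatOracle\ZlatOracle)}$. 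By Equation~\eqref{eq:assum:Zgrowth}, this amounts to showing each term is $\op{\sqrt{n}}$.

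For the first term, submultiplicativity and Lemma~\ref{lem:MpeerMlat:spectral} give $\| \MpeerOracle - \MlatOracle \| \, \| \ZlatOracle \| = \op{ \|\ZlatOracle\| / \sigmamin(\HlatOracle) }$. Equations~\eqref{eq:assum:Hgrowth} and~\eqref{eq:assum:Zgrowth} give $\|\ZlatOracle\| = \Op{\sqrt n}$ and $\sigmamin(\HlatOracle)=\Omega(\sqrt n)$, so this term is $\op{1}$, which is certainly $\op{\sqrt n}$.

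For the second term, I use that $\MpeerOracle$ is an orthogonal projection, so its spectral norm is at most one. It then suffices to bound $\| \ZpeerOracle - \ZlatOracle \|$. The two design matrices differ only in their last column, so this quantity is at most $\| (\G - \Gtilde)\Y \|$. Lemma~\ref{lem:GGtilde:spectral} together with Equation~\eqref{eq:assum:Grate:stronger} gives $\|\G - \Gtilde\| = \op{1}$. Lemma~\ref{lem:Y:latcon} or Lemma~\ref{lem:Y:peercon} (depending on which model holds) gives $\|\Y\| = \Op{\sqrt n}$. Together these yield $\op{\sqrt n}$, exactly as in the second half of the proof of Lemma~\ref{lem:ZhatZlat:spectral}.

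I expect the main obstacle to be a notational one. Lemma~\ref{lem:ZpeerZlat:spectral} is stated with $\op{\|\MlatOracle\ZlatOracle\|}$ rather than $\op{\sigmamin(\MlatOracle\ZlatOracle)}$, so I will not invoke it directly. Instead I will redo its one-line argument to get the $\op{\sqrt n}$ bound, then convert to $\sigmamin$ using the lower bound in Equation~\eqref{eq:assum:Zgrowth}. I also need to confirm that Lemma~\ref{lem:MpeerMlat:spectral} holds under the same assumption set, and it does, since it uses only Assumptions~\ref{assum:Apop:spectrum}--\ref{assum:latentpositions}. Combining the two bounds completes the proof.
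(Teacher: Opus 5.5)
Your proposal is correct and matches the paper's proof, which runs the argument of Lemma~\ref{lem:MhatZhat:forWeyl} with the split into $\|(\MpeerOracle-\MlatOracle)\ZlatOracle\|$ and $\|\MpeerOracle(\ZpeerOracle-\ZlatOracle)\|$. The first term is handled by Lemma~\ref{lem:MpeerMlat:spectral}, and the second by the projection bound plus $\|(\G-\Gtilde)\Y\| = \op{\sqrt{n}}$. Your detour around Lemma~\ref{lem:ZpeerZlat:spectral} is harmless but unnecessary: by Equation~\eqref{eq:assum:Zgrowth}, both $\|\MlatOracle\ZlatOracle\|$ and $\sigmamin(\MlatOracle\ZlatOracle)$ are of exact order $\sqrt{n}$, so the two forms of the bound coincide.
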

\begin{proof}
	The proof follows the same argument as the proof of Lemma~\ref{lem:MhatZhat:forWeyl}, using Lemma~\ref{lem:ZpeerZlat:spectral} in place of Lemma~\ref{lem:ZhatZlat:spectral} and Lemma~\ref{lem:MpeerMlat:spectral} in place of Lemma~\ref{lem:MhatMlat:spectral}.
\end{proof}

\section{Convergence of Oracle Estimators} \label{apx:oracle-convergence}

Our proofs of Theorems~\ref{thm:peertruepeerfit},~\ref{thm:lattruelatfit},~\ref{thm:peertruelatfit} and~\ref{thm:lattruepeerfit} rely on showing that the estimates $\betahattsls$ and $\thetahattsls$ are close to ``oracle'' estimates based on using the true latent positions rather than estimates thereof (see Sections~\ref{apx:latcon} and~\ref{apx:peercon} below). The following theorem implies that these oracle estimates are asymptotically normal about their population targets, so that we need only to show that the four estimates listed above are close to the oracle estimates. This latter argument is carried out in Sections~\ref{apx:latcon} and~\ref{apx:peercon} below.

\begin{theorem}[Theorem 3 of \citealt{kelejian1998}]
    \label{thm:kelejian1998}
    Under the peer contagion model of Equation~\eqref{eq:lim-peer}, suppose that Assumption~\ref{assum:peer-oracle} holds, so that
    \begin{equation*}
        \Y  = \paren*{\mI - \betay \G}^{-1} \paren*{\1_n \betanaught + \W \betaw + \Xpop \betax  + \be}.
    \end{equation*}
    Then
    \begin{equation*}
        \sqrt{n} \paren*{\betatildetsls - \symbf \beta} \to \mathcal{N} \paren*{0, \sigmaeps^2 \paren*{\ZpeerOracle^\top \MpeerOracle \ZpeerOracle}^{-1}} .
    \end{equation*}
    An analogous result holds, replacing $\G, \ZpeerOracle, \MpeerOracle$ and $\betatildetsls$ with $\Gtilde, \ZlatOracle, \MlatOracle$ and  $\thetatildetsls$, respectively, under Assumption~\ref{assum:latent-oracle} instead of Assumption~\ref{assum:peer-oracle}.
\end{theorem}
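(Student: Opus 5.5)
This is the classical Kelejian--Prucha argument, run conditionally on $\Xpop$ (and on $\A$ in the peer case), so that $\W$, $\Xpop$, $\G$ are treated as fixed. The steps below are identical for the latent case with $\Gtilde$.

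\textbf{Step 1 (error decomposition).} Write $\ZpeerOracle = \begin{bmatrix} \1_n \, \W \, \Xpop \, \G\y \end{bmatrix}$. The structural equation \eqref{eq:lim-peer} gives $\y = \ZpeerOracle \symbf\beta + \be$. Substituting into the oracle estimator $\betatildetsls = (\ZpeerOracle^\top \MpeerOracle \ZpeerOracle)^{-1}\ZpeerOracle^\top \MpeerOracle \y$ yields
\begin{equation*}
\sqrt{n}\paren*{\betatildetsls - \symbf\beta} = \paren*{\tfrac1n \ZpeerOracle^\top \MpeerOracle \ZpeerOracle}^{-1} \tfrac{1}{\sqrt n}\ZpeerOracle^\top \HpeerOracle (\HpeerOracle^\top\HpeerOracle)^{-1}\HpeerOracle^\top \be .
\end{equation*}
It then suffices to show two things. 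First, the matrix $\frac1n \HpeerOracle^\top \ZpeerOracle$ converges in probability to a full-column-rank limit $Q_{HZ}$. Second, $n^{-1/2}\HpeerOracle^\top\be$ satisfies a CLT with covariance $\sigmaeps^2 Q_{HH}$, where $Q_{HH}=\lim \frac1n \HpeerOracle^\top\HpeerOracle$. With these in hand, Slutsky and the identity $Q_{HZ}^\top Q_{HH}^{-1}Q_{HH}Q_{HH}^{-1}Q_{HZ} = Q_{HZ}^\top Q_{HH}^{-1}Q_{HZ}$ give the sandwich-free covariance $\sigmaeps^2(\ZpeerOracle^\top\MpeerOracle\ZpeerOracle/n)^{-1}$. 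The normalization by $n$ is absorbed as in the statement.

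\textbf{Step 2 (limit of the cross-product).} The only random column of $\ZpeerOracle$ is $\G\y$. By \eqref{eq:lim-peer-red},
\begin{equation*}
\G\y = \G(\mI-\betay\G)^{-1}(\1_n\betanaught+\W\betaw+\Xpop\betax) + \G(\mI-\betay\G)^{-1}\be .
\end{equation*}
Assumption~\ref{assum:peer-oracle}(6) directly handles the deterministic part of $\frac1n \HpeerOracle^\top\ZpeerOracle$. For the stochastic part, I will show $\frac1n \HpeerOracle^\top \G(\mI-\betay\G)^{-1}\be = \op{1}$. Its second moment is $\sigmaeps^2 n^{-2}\operatorname{tr}(\cdot)$, which is $O(1/n)$. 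This holds because $\G$ and $(\mI-\betay\G)^{-1}$ have uniformly bounded row and column sums (Assumption~\ref{assum:peer-oracle}(1)--(2), Lemma~\ref{lem:IbgyG:invertible} via the Neumann series), and because the entries of $\HpeerOracle$ are uniformly bounded. Chebyshev's inequality then finishes this step.

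\textbf{Step 3 (CLT).} This is the main step. Each coordinate of $n^{-1/2}\HpeerOracle^\top\be$ is a linear form $n^{-1/2}\sum_i h_{i}\varepsilon_i$, with $h_i$ uniformly bounded and $\varepsilon_i$ i.i.d. with finite fourth moments. I will apply Cramér--Wold and the Lindeberg--Feller CLT for triangular arrays. The Lindeberg condition holds because $\max_i |c^\top h_i|^2/\sum_i (c^\top h_i)^2 = O(1/n)$, using boundedness of the entries together with nonsingularity of $Q_{HH}$. The resulting conditional limit does not depend on the conditioning once the limits are almost sure, so the unconditional statement follows by dominated convergence of characteristic functions.

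The latent case repeats these steps verbatim with $\Gtilde$ and $\HlatOracle$ under Assumption~\ref{assum:latent-oracle}. The only change is that row-stochasticity of $\Gtilde$ follows from $\Apop$ having nonnegative entries.
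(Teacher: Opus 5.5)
The paper gives no proof of this statement; it only cites Kelejian and Prucha. Your outline (error decomposition, convergence of $\frac1n\HpeerOracle^\top\ZpeerOracle$, Cram\'er--Wold plus Lindeberg for $n^{-1/2}\HpeerOracle^\top\be$, then Slutsky, all conditional on $(\A,\W,\Xpop)$) is the standard argument behind that citation. Two steps need repair, though.

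\textbf{Column sums of the inverse (Step 2).} You claim that $(\mI-\betay\G)^{-1}$ has uniformly bounded column sums, but this does not follow from Assumption~\ref{assum:peer-oracle}(1)--(2).
\begin{itemize}
\item Lemma~\ref{lem:IbgyG:invertible} only locates eigenvalues.
\item The Neumann series gives only $\|(\mI-\betay\G)^{-1}\|_1\le\sum_k|\betay|^k c^k$, where $c=\sup_n\|\G\|_1\ge 1$. This diverges once $|\betay|c\ge 1$.
\item The divergence is real: column sums of $\G^k$ can grow geometrically in $k$ for $k$ up to order $\log n$, e.g.\ when the walk on $\A$ drifts toward lower-degree vertices.
\item Kelejian--Prucha impose this boundedness as a separate hypothesis, and Assumption~\ref{assum:peer-oracle} omits it.
\end{itemize}
Step 2 needs much less, and the weaker property does hold. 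Set $S=\G(\mI-\betay\G)^{-1}$, with $\|\cdot\|_1$ and $\|\cdot\|_\infty$ the maximum absolute column and row sums.
\begin{itemize}
\item Row-stochasticity gives $\|S\|_\infty\le(1-|\betay|)^{-1}$.
\item Since $\G^k$ has entries in $[0,1]$, we have $\|\G^k\|_1\le\min(c^k,n)$.
\item For $c>1$ (if $c\le1$ the Neumann bound already works), split the Neumann series at $K=\lfloor\epsilon\log n/\log c\rfloor$ with fixed $\epsilon\in(0,1)$. This gives $\|S\|_1=O(n^{\epsilon})+O(n|\betay|^{K})=o(n)$.
\item For each column $h$ of $\HpeerOracle$, which has bounded entries, $\|S^\top h\|_2^2\le\|S^\top h\|_\infty\|S^\top h\|_1\le\|S\|_1\|S\|_\infty\|h\|_\infty\|h\|_1=o(n^2)$.
\end{itemize}
Hence the conditional second moment of $\frac1n\HpeerOracle^\top S\be$ is $o(1)$ rather than $O(1/n)$, which is enough. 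The same bound works for $\Gtilde$. If Assumption~\ref{assum:peer-oracle}(6) is read literally, as convergence of $\frac1n\HpeerOracle^\top\ZpeerOracle$ including the column $\G\y$, Step 2 is not needed at all.

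\textbf{Removing the conditioning (Step 3).} The claim that the conditional limit ``does not depend on the conditioning once the limits are almost sure'' is not right as stated. Almost-sure convergence does not make $Q_{HH}$ and $Q_{HZ}$ nonrandom. If they are random, the unconditional law of $\sqrt n(\betatildetsls-\symbf\beta)$ is a variance mixture of normals, not a normal. You have two ways out:
\begin{itemize}
\item assume the limits are deterministic; or
\item prove the self-normalized statement $\sigmaeps^{-1}(\ZpeerOracle^\top\MpeerOracle\ZpeerOracle)^{1/2}(\betatildetsls-\symbf\beta)\to\mathcal N(0,\mI)$.
\end{itemize}
The second is the only coherent reading of the paper's display, whose covariance contains the unnormalized finite-sample matrix. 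For this statistic the conditional limit is the same for almost every conditioning path, so your characteristic-function argument then goes through.
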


\section{Latent Contagion Data Model} \label{apx:latcon}

Here we provide proof details of Theorem~\ref{thm:lattruelatfit} and~\ref{thm:lattruepeerfit}, which describe the behavior of our estimators when the responses $\Y$ are driven by contagion on the graph structure encoded by the latent positions $\Xpop$, i.e., what we have termed {\em latent contagion}, as given in Equation~\eqref{eq:lim-latent-red}.
If we had access to this latent structure, we could construct oracle analogues of the estimator,
where $\MlatOracle$ is as defined in Equation~\eqref{eq:def:MlatOracle} and, recalling the definition of $\ZlatOracle$ from Equation~\eqref{eq:def:Zmxs},
\begin{equation} \label{eq:def:ZlatOracle}
  \ZlatOracle = \begin{bmatrix} \1_n \; \W \; \Xpop \; \Gtilde \Y \end{bmatrix}
  \in \R^{n \times (p + d + 2)} .
\end{equation}

We begin by proving convergence of the latent contagion estimators under the latent contagion data model.
We give analogous proofs for the peer contagion estimators in Section~\ref{apx:lattruepeerfit} below.

\subsection{Latent Contagion Estimators}

Our proof of Theorem~\ref{thm:lattruelatfit} relies on showing that when the model in Equation~\eqref{eq:lim-latent-red} holds, our estimator in Equation~\eqref{eq:def:thetahattsls} is suitably close to an ``oracle'' version of the estimator that uses $\ZlatOracle$,
\begin{equation} \label{eq:def:thetaOracleTSLS}
\thetaOracleTSLS = (\ZlatOracle^\top \MlatOracle \ZlatOracle)^{-1}
      	\left( \MlatOracle \ZlatOracle \right)^\top \Y .
\end{equation}
This is established in Lemma~\ref{lem:thetahatTSLS2thetaOracleTSLS} for the TSLS estimator.

\begin{lemma}\label{lem:thetahatTSLS2thetaOracleTSLS}
  Under the latent contagion model in Equation~\eqref{eq:lim-latent-red}, suppose that Assumptions~\ref{assum:Apop:spectrum} through~\ref{assum:stronger4dhat} hold.
  Then there exists a sequence of orthogonal matrices $\Qdes \in \R^{(2+p+d)\times(2+p+d)}$ such that
  \begin{equation*}
    \sqrt{n} \paren*{ \Qdes \thetahattsls - \thetaOracleTSLS } = \op{ 1 } .
  \end{equation*}
\end{lemma}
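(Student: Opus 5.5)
Write $\Zbar = \Zhat\Qdestop$ and $\Mbar = \Mhat$. The projection onto the column space of $\Hhat$ equals the projection onto the column space of $\Hhat\Qinsttop$, since $\Qinst$ is orthogonal. Because $\Qdes$ is orthogonal, we have
\begin{equation*}
\Qdes\thetahattsls = (\Zbar^\top \Mhat \Zbar)^{-1}\Zbar^\top\Mhat\Y .
\end{equation*}
Under the latent contagion model, $\Y = \ZlatOracle\symbf\theta + \be$. Hence
\begin{equation*}
\thetaOracleTSLS - \symbf\theta = (\ZlatOracle^\top\MlatOracle\ZlatOracle)^{-1}\ZlatOracle^\top\MlatOracle\be .
\end{equation*}
For the feasible estimator, write $\Y = \Zbar\symbf\theta + (\ZlatOracle - \Zbar)\symbf\theta + \be$, which gives
\begin{equation*}
\Qdes\thetahattsls - \symbf\theta = (\Zbar^\top\Mhat\Zbar)^{-1}\Zbar^\top\Mhat\big[(\ZlatOracle-\Zbar)\symbf\theta + \be\big].
\end{equation*}
The plan is to show two things. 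First, $n^{-1}\Zbar^\top\Mhat\Zbar$ and $n^{-1}\ZlatOracle^\top\MlatOracle\ZlatOracle$ differ by $\op{1}$, and the latter has smallest eigenvalue bounded below. Second, the two score terms $\Zbar^\top\Mhat[\cdots]$ and $\ZlatOracle^\top\MlatOracle\be$ differ by $\op{\sqrt n}$. Since $\ZlatOracle^\top\MlatOracle\be = \Op{\sqrt n}$ by Theorem~\ref{thm:kelejian1998} (or Lemma~\ref{lem:sglinear}), the standard identity $A^{-1}a - B^{-1}b = A^{-1}(a-b) + (A^{-1}-B^{-1})b$ then yields the claim.

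For the Hessian, note that $\Zbar^\top\Mhat\Zbar = (\Mhat\Zbar)^\top(\Mhat\Zbar)$. Lemma~\ref{lem:MhatZhat:forWeyl} gives $\|\Mhat\Zbar - \MlatOracle\ZlatOracle\| = \op{\sigmamin(\MlatOracle\ZlatOracle)} = \op{\sqrt n}$. Combined with $\|\MlatOracle\ZlatOracle\| = \Op{\sqrt n}$ from Equation~\eqref{eq:assum:Zgrowth}, expanding the product shows the difference of Gram matrices is $\op{n}$. Weyl's inequality then keeps $\sigmamin$ of order $n$, so the inverse difference is $\op{n^{-1}}$.

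For the score, I would split
\begin{equation*}
\Zbar^\top\Mhat\be - \ZlatOracle^\top\MlatOracle\be = (\Mhat\Zbar-\MlatOracle\ZlatOracle)^\top\be .
\end{equation*}
A crude spectral bound here gives only $\op{\sqrt n}\cdot\Op{\sqrt n}$, which is too weak. The main obstacle is exploiting that $\be$ is independent of $\A$ and $\Xpop$. I would condition on $(\A,\Xpop,\W)$. Then $\Mhat\Zbar$ is almost measurable: its $\Ghat\Y$ column contains $\be$ through $\Y$. I would therefore write $\Y = (\mI-\thetay\Gtilde)^{-1}(\text{mean}) + \epstilde$ and handle the $\epstilde$ part separately.

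For the measurable part, Lemma~\ref{lem:sglinear} gives a size $\Op{\|\Mhat\Zbar-\MlatOracle\ZlatOracle\|_F} = \op{\sqrt n}$, since the number of columns is fixed. For the part quadratic in $\be$, of the form $\be^\top \mathbf{K}\be$ with $\mathbf{K}$ built from $(\Mhat\Ghat - \MlatOracle\Gtilde)(\mI-\thetay\Gtilde)^{-1}$, I would bound the centered part by the Frobenius norm and the mean by $\sigmaeps^2\,\mathrm{tr}\,\mathbf{K}$. The trace is controlled using the rank $3p+3d$ of the projections and Lemmas~\ref{lem:MhatMlat:spectral} and~\ref{lem:GhatGtilde:spectral}, yielding $\op{\sqrt n}$.

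For the bias term $\Zbar^\top\Mhat(\ZlatOracle-\Zbar)\symbf\theta$, bound it by $\|\Mhat\Zbar\|\,\|\Mhat(\ZlatOracle-\Zbar)\|$. For the $\Xpop$ block, use Lemma~\ref{lem:XhatXTB:ctrl} with $\mB = \Hhat$-type matrices, via $\Mhat = \Hhat(\Hhat^\top\Hhat)^{-1}\Hhat^\top$. This gives $\|\Hhat^\top(\Xhat\Q^\top-\Xpop)\| = \op{\sqrt n}$, after replacing $\Hhat\Qinsttop$ by $\HlatOracle$ through Lemma~\ref{lem:HhatHlat:spectral} to restore independence.

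For the $(\Ghat-\Gtilde)\Y$ column, use Lemma~\ref{lem:GhatGtilde:quad} (and Lemma~\ref{lem:Y:peerquad}-type arguments) with $\vu$ ranging over the columns of $\HlatOracle$. Each such entry is $\op{\sqrt n}$. Multiplying by $(\Hhat^\top\Hhat)^{-1} = \Op{n^{-1}}$ and by $\|\Hhat\|\,\|\Zbar\| = \Op{n}$ then gives $\op{\sqrt n}$ overall. Dividing by the $\Theta_p(n)$ Hessian and multiplying by $\sqrt n$ concludes.
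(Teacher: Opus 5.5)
Your argument is correct, but it is organized differently from the paper's. The paper never uses $\Y = \ZlatOracle\symbf\theta + \be$. It writes $\Qdes\thetahattsls - \thetaOracleTSLS$ as the difference of the two 2SLS maps applied to $\Y$, splitting into $\Qdes(\Zhat^\top\Mhat\Zhat)^{-1}\Qdestop(\Mhat\Zhat\Qdestop - \MlatOracle\ZlatOracle)^\top\Y$ and $[\Qdes(\Zhat^\top\Mhat\Zhat)^{-1}\Qdestop - (\ZlatOracle^\top\MlatOracle\ZlatOracle)^{-1}](\MlatOracle\ZlatOracle)^\top\Y$. Because $(\MlatOracle\ZlatOracle)^\top\Y$ is of order $n$, it has to show two things: that the Gram matrices differ by $\op{\sqrt n}$ (not merely $\op{n}$), and that $(\Mhat\Zhat\Qdestop-\MlatOracle\ZlatOracle)^\top\Y = \op{\sqrt n}$. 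Both steps rest on the sharp rate $\|\Mhat-\MlatOracle\| = \op{n^{-1/2}}$ of Lemma~\ref{lem:MhatMlat:spectral}, which is used to control $\ZlatOracle^\top(\Mhat-\MlatOracle)\Y$ and $\Zhat^\top(\Mhat-\MlatOracle)\Zhat$. The remaining pieces come from Lemmas~\ref{lem:XhatXTB:ctrl} and~\ref{lem:GhatGtilde:quad} with $\mB = \vu = \MlatOracle\Y$. Throughout, $\Y$ is treated only as a vector of norm $\Op{\sqrt n}$ that is conditionally independent of $\A-\Apop$.

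Your centering removes the signal exactly, since $(\Zhat\Qdestop)^\top\Mhat\Zhat\Qdestop\symbf\theta$ is reproduced identically by the feasible map. As a result, the large term $\ZlatOracle^\top(\Mhat-\MlatOracle)\ZlatOracle\symbf\theta$, which is what forces the paper's $n^{-1/2}$ projection rate, never appears. An $\op{n}$ Gram bound from Lemma~\ref{lem:MhatZhat:forWeyl} suffices, and the projections enter only through that lemma, for which $\|\Mhat-\MlatOracle\| = \op{1}$ would already be enough. Your route is therefore much less dependent on the sharpest (and most delicate) perturbation bound in the chain. The price is twofold. You must use the error structure directly, by conditioning on $(\A,\Xpop,\W)$ to handle $(\Mhat\Zhat\Qdestop-\MlatOracle\ZlatOracle)^\top\be$. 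You also need a separate bias term, controlled through $\Hhat^\top(\ZlatOracle-\Zhat\Qdestop)$. The paper's generic-$\Y$ template also transfers verbatim to the misspecified lemmas. Yours would have to be redone there with $\Y = \ZpeerOracle\symbf\beta+\be$, where $\Y$ depends on $\A$ and Lemma~\ref{lem:Y:peerquad} becomes genuinely necessary.

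A few small points to tighten.
\begin{itemize}
\item The swap of $\Hhat\Qinsttop$ for $\HlatOracle$ is needed in the $(\Ghat-\Gtilde)\Y$ block too, not just the $\Xpop$ block. It works because $\|\Hhat\Qinsttop-\HlatOracle\|\,\|\Ghat-\Gtilde\|\,\|\Y\| = \op{n^{1/4}}\op{n^{-1/4}}\Op{\sqrt n}$.
\item Lemma~\ref{lem:sglinear} does not by itself give $\ZlatOracle^\top\MlatOracle\be = \Op{\sqrt n}$, because $\Gtilde\Y$ contains $\be$. Use Theorem~\ref{thm:kelejian1998}, or your own linear/quadratic split, instead.
\item The measurable piece of the last column is $(\Mhat\Ghat-\MlatOracle\Gtilde)\mXitilde\mLtilde$, not a column of $\Mhat\Zhat\Qdestop-\MlatOracle\ZlatOracle$. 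It is still $\op{\sqrt n}$ by the same estimates.
\item The quadratic piece is in fact $\Op{1}$ with no trace computation, since its matrix has rank at most $6(p+d)$ and bounded spectral norm.
\end{itemize}
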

\begin{proof}
Take $\Qdes$ to be as in Equation~\eqref{eq:def:Qdes}.
  Recalling the definitions in Equations~\eqref{eq:def:thetahattsls} and~\eqref{eq:def:thetaOracleTSLS},
  \begin{equation*}
    \Qdes \thetahattsls - \thetaOracleTSLS
    =
    \left[ \Qdes (\Zhat^\top \Mhat \Zhat)^{-1} \left( \Mhat \Zhat \right)^\top
      - (\ZlatOracle^\top \MlatOracle \ZlatOracle)^{-1}
      \left( \MlatOracle \ZlatOracle \right)^\top \right] \Y .
  \end{equation*}

  Using the fact that $\Qdes$ is orthogonal,
  the triangle inequality yields
  \begin{equation} \label{eq:thetahat2thetaoracle}
  \begin{aligned}
    \left\| \Qdes \thetahattsls - \thetaOracleTSLS \right\|
    & \le
    \left\| \Qdes (\Zhat^\top \Mhat \Zhat)^{-1} \Qdestop
    \left(\Mhat\Zhat\Qdestop-\MlatOracle\ZlatOracle\right)^\top \Y\right\| \\
    & ~~~~~~+
    \left\| \left[ \Qdes (\Zhat^\top \Mhat \Zhat)^{-1} \Qdestop
      - (\ZlatOracle^\top \MlatOracle \ZlatOracle)^{-1} \right]
    \left( \MlatOracle \ZlatOracle \right)^\top \Y \right\| .
  \end{aligned} \end{equation}

  We will control each of these right-hand terms separately.

  \paragraph{Controlling the first term in Equation~\eqref{eq:thetahat2thetaoracle}}
  \enspace

  By submultiplicativity and unitary invariance of the norm,
  \begin{equation*}
    \left\| \Qdes (\Zhat^\top \Mhat \Zhat)^{-1} \Qdestop \!
    \left( \Mhat \Zhat \Qdestop \!-\! \MlatOracle \ZlatOracle \right)^\top 
		\! \Y \right\|
    \le
    \frac{ \left\| \left( \Mhat \Zhat \Qdestop
		\!-\! \MlatOracle \ZlatOracle \right)^{\!\top}\! \Y \right\|}
    {\sigmamin^2( \Mhat \Zhat ) }
    \le
    \frac{ C \left\| \left( \Mhat \Zhat \Qdestop \!-\! \MlatOracle \ZlatOracle 
		\right)^{\!\top} \!\Y \right\|}
    	{\sigmamin^2( \MlatOracle \ZlatOracle ) },
  \end{equation*}
  where the second inequality follows from Weyl's inequality and Lemma~\ref{lem:MhatZhat:forWeyl}.
  Using our assumption in Equation~\eqref{eq:assum:Zgrowth}, it follows that
  \begin{equation} \label{eq:latcon:MZdiffs:Weyl}
    \left\| \Qdes (\Zhat^\top \Mhat \Zhat)^{-1} \Qdestop
    \left( \Mhat \Zhat \Qdestop - \MlatOracle \ZlatOracle \right)^\top \Y
    \right\|
    \le
    \frac{ C \left\| \left( \Mhat \Zhat \Qdestop 
		- \MlatOracle \ZlatOracle \right)^{\!\top} \Y \right\|}
    { n } .
  \end{equation}

  By the triangle inequality,
  \begin{equation} \label{eq:latcon:MZdiffs:numer}
    \left\| \left( \Mhat \Zhat \Qdestop
		- \MlatOracle \ZlatOracle \right)^{\!\top} \Y \right\|
    \le
    \left\| \Qdes \Zhat^\top \left(\Mhat-\MlatOracle\right) \Y \right\|
    + \left\| \left( \Zhat \Qdestop -\ZlatOracle \right)^\top 
		\MlatOracle \Y \right\|.
  \end{equation}
  Again using the triangle inequality,
  \begin{equation} \label{eq:ZhatMdiff:Y:triangle}
    \left\| \Qdes \Zhat^\top \left(\Mhat-\MlatOracle\right) \Y \right\|
    \le \left\| \left(\Zhat\Qdestop-\ZlatOracle\right)^\top 
			\left(\Mhat-\MlatOracle\right) \Y \right\|
    + \left\| \ZlatOracle^\top \left(\Mhat-\MlatOracle\right) \Y \right\|.
  \end{equation}
  By submultiplicativity, Lemma~\ref{lem:MhatMlat:spectral} and the assumptions in Equations~\eqref{eq:assum:Hgrowth} and~\eqref{eq:assum:Zgrowth},
  \begin{equation*}
    \left\| \ZlatOracle^\top \left(\Mhat-\MlatOracle\right) \Y \right\|
    = \op{ \frac{ \| \ZlatOracle \| \| \Y \| }{ \sigmamin(\HlatOracle) } }
    = \op{ \sqrt{n} } ,
  \end{equation*}
  where we have used Lemma~\ref{lem:Y:latcon} to ensure $\|\Y\| = \Op{ \sqrt{n} }$,
  Using submultiplicativity followed by Lemmas~\ref{lem:MhatMlat:spectral} and~\ref{lem:ZhatZlat:spectral},
  \begin{equation*}
    \left\| \left(\Zhat\Qdestop -\ZlatOracle \right)^\top 
		\left(\Mhat-\MlatOracle\right) \Y \right\|
    \le \left\| \Zhat \Qdestop -\ZlatOracle \right\|
    \left\| \Mhat-\MlatOracle \right\| \left\| \Y \right\|
    =
    \op{ \frac{ \sigmamin( \MlatOracle \ZlatOracle ) \| \Y \| }{ \sigmamin(\HlatOracle)} }
    = \op{ \sqrt{n} },
  \end{equation*}
  where the last equality follows from the growth assumption in Equation~\eqref{eq:assum:Hgrowth} and using Lemma~\ref{lem:Y:latcon} to ensure $\|\Y\| = \Op{ \sqrt{n} }$.

  Applying the above two displays to Equation~\eqref{eq:ZhatMdiff:Y:triangle},
  \begin{equation*}
    \left\| \Qdes \Zhat^\top \left(\Mhat-\MlatOracle\right) \Y \right\|
    = \op{ \sqrt{n} }.
  \end{equation*}
  Applying this to Equation~\eqref{eq:latcon:MZdiffs:numer},
  \begin{equation} \label{eq:latcon:MZdiffs:numer:intermezzo}
    \left\| \left( \Mhat \Zhat \Qdestop - \MlatOracle \ZlatOracle \right)^{\!\top} \Y \right\|
    \le
    \left\| \left( \Zhat\Qdestop-\ZlatOracle \right)^\top \MlatOracle \Y \right\|
    + \op{ \sqrt{n} } .
  \end{equation}

  Recalling the structure of $\Zhat$ and $\ZlatOracle$ from Definition~\ref{def:feasibleEstimators:latent} and Equation~\eqref{eq:def:ZlatOracle}, respectively, and applying the triangle inequality,
  \begin{equation} \label{eq:Zdiffs:triangle}
    \left\| (\Zhat\Qdestop - \ZlatOracle )^\top \MlatOracle \Y \right\|
    \le
    \left\| (\Xhat\Q^\top -\Xpop)^\top \MlatOracle \Y \right\|
    + \left| \Y^\top \MlatOracle (\Ghat - \Gtilde) \Y \right| .
  \end{equation}
  Applying Lemma~\ref{lem:XhatXTB:ctrl} with $\mB = \MlatOracle \Y$, noting that $\| \MlatOracle \Y \| \le \| \Y \|$ since $\MlatOracle$ is a projection,
  \begin{equation*}
    \| (\Xhat\Q^\top-\Xpop)^\top \MlatOracle \Y \|
    \le
    C \| \Y \|
    \left(
    \frac{ \sqrt{ \nu + b^2 } \log n }{ \sqrt{\spop_d} }
    +
    \frac{ \kappa(\Apop) (\nu+b^2) n \log^2 n }{ \spop_d^{3/2} }
    \right)
    = \op{ \sqrt{n} },
  \end{equation*}
  where the second equality follows from the growth rates in Equations~\eqref{eq:assum:spectralConc},~\eqref{eq:assum:XandW:spectrumUB} and~\eqref{eq:assum:XhatX:rate:2} and using Lemma~\ref{lem:Y:latcon} to ensure $\|\Y\| = \Op{ \sqrt{n} }$.

  Noting that $\Y$ is independent of $\A-\Apop$ given $\Xpop$ under the latent contagion model,
  Lemma~\ref{lem:GhatGtilde:quad} yields
  \begin{equation*}
    \left| \Y^\top \MlatOracle ( \Ghat-\Gtilde ) \Y \right|
    = \op{ \frac{ \| \Y \|^2 }{ \sqrt{n} } }
    = \op{ \sqrt{n} },
  \end{equation*}
  where we have used the fact that $\| \MlatOracle \Y \| \le \| \Y \|$ trivially and again used Lemma~\ref{lem:Y:latcon} to ensure $\|\Y\| = \Op{ \sqrt{n} }$.
  Applying the above two displays to Equation~\eqref{eq:Zdiffs:triangle},
  \begin{equation*}
    \left\| (\Zhat \Qdestop- \ZlatOracle )^\top \MlatOracle \Y \right\|
    = \op{ \sqrt{n} }.
  \end{equation*}
  Applying this to Equation~\eqref{eq:latcon:MZdiffs:numer:intermezzo},
  and applying the resulting bound to Equation~\eqref{eq:latcon:MZdiffs:Weyl},
  \begin{equation} \label{eq:latcon:MZdiffs:done}
    \left\| \Qdes (\Zhat^\top \Mhat \Zhat)^{-1} \Qdestop
    \left( \Mhat \Zhat \Qdestop - \MlatOracle \ZlatOracle \right)^\top \Y
    \right\|
    = \op{ \frac{1}{\sqrt{n}} } .
  \end{equation}

  \paragraph{Controlling the second term in Equation~\eqref{eq:thetahat2thetaoracle}}
  \enspace

  By Lemma~\ref{lem:MhatZhat:forWeyl} and our growth assumption in Equation~\eqref{eq:assum:Zgrowth}, both $\Zhat^\top \Mhat \Zhat$ and $\ZlatOracle^\top \MlatOracle \ZlatOracle$ are invertible with high probability for all $n$ suitably large.
  Factoring appropriately, applying submultiplicativity and recalling that $\Mhat$ and $\MlatOracle$ are projections,
  \begin{equation*}
    \left\| \left[ \Qdes (\Zhat^\top \Mhat \Zhat)^{-1} \Qdestop
      - (\ZlatOracle^\top \MlatOracle \ZlatOracle)^{-1} \right]
    \left( \MlatOracle \ZlatOracle \right)^\top \Y \right\|
    \le
    \frac{ \left\| \Qdes \Zhat^\top \Mhat \Zhat \Qdestop
      - \ZlatOracle^\top \MlatOracle \ZlatOracle \right\|
      \left\| \ZlatOracle \right\| \left\| \Y \right\| }
    { \sigmamin^2( \Mhat \Zhat ) \sigmamin^2( \MlatOracle \ZlatOracle ) } .
  \end{equation*}
  Using Lemma~\ref{lem:MhatZhat:forWeyl} again along with Weyl's inequality, we can ensure that $\sigmamin( \Mhat \Zhat ) = \Omegap{ \sigmamin( \MlatOracle \ZlatOracle ) }$.
  Our growth assumptions in Equations~\eqref{eq:assum:Zgrowth} and~\eqref{eq:assum:Zgrowth}, as well as using Lemma~\ref{lem:Y:latcon} to ensure $\|\Y\| = \Op{ \sqrt{n} }$, then yield
  \begin{equation} \label{eq:Zcovs:start}
    \left\| \Qdes \left[ (\Zhat^\top \Mhat \Zhat)^{-1} \Qdestop
      - (\ZlatOracle^\top \MlatOracle \ZlatOracle)^{-1} \right]
    \left( \MlatOracle \ZlatOracle \right)^\top \Y \right\|
    \le
    \frac{ C \left\|\Qdes \Zhat^\top \Mhat \Zhat \Qdestop
      - \ZlatOracle^\top \MlatOracle \ZlatOracle \right\| }
    { n } .
  \end{equation}

  Adding and subtracting appropriate quantities and applying the triangle inequality,
  \begin{equation} \label{eq:Zcovs:numertri} \begin{aligned}
    \left\| \Qdes \Zhat^\top \Mhat \Zhat \Qdestop
    - \ZlatOracle^\top \MlatOracle \ZlatOracle \right\|
    &\le
    2\left\| (\Zhat \Qdestop -\ZlatOracle)^\top \MlatOracle \ZlatOracle \right\|
    +
    \left\| \Qdes \Zhat^\top ( \Mhat-\MlatOracle ) \Zhat \Qdestop \right\| \\
    &~~~~~~~~~+
    \left\| (\Zhat\Qdestop-\ZlatOracle)^\top \MlatOracle (\Zhat \Qdestop-\ZlatOracle) \right\| .
  \end{aligned} \end{equation}

  Decomposing $\Zhat \Qdestop-\ZlatOracle$ as in Equation~\eqref{eq:Zdiffs:triangle},
  \begin{equation} \label{eq:Zcovs:linearTerm}
    \left\| (\Zhat \Qdestop-\ZlatOracle)^\top \MlatOracle \ZlatOracle \right\|
    \le
    \left\| (\Xhat \Q^\top-\Xpop)^\top \MlatOracle \ZlatOracle \right\|
    + \left| \ZlatOracle^\top \MlatOracle (\Ghat-\Gtilde) \Y \right| .
  \end{equation}
  Applying Lemma~\ref{lem:XhatXTB:ctrl} with $\mB = \MlatOracle \ZlatOracle$ and using the growth rates in Equations~\eqref{eq:assum:spectralConc},~\eqref{eq:assum:XandW:spectrumUB} and~\eqref{eq:assum:XhatX:rate:2},
  \begin{equation*}
    \left\| (\Xhat \Q^\top-\Xpop)^\top \MlatOracle \ZlatOracle \right\|
    = \op{ \| \MlatOracle \ZlatOracle \| }
    = \op{ \sqrt{n} },
  \end{equation*}
  where the second equality follows from the fact that $\MlatOracle$ is a projection  and the growth assumption in Equation~\eqref{eq:assum:Zgrowth}.
  Using similar growth assumptions, this time with Lemma~\ref{lem:GhatGtilde:quad} and using Lemma~\ref{lem:Y:latcon} to ensure $\|\Y\| = \Op{ \sqrt{n} }$,
  \begin{equation*}
    \left| \ZlatOracle^\top \MlatOracle (\Ghat-\Gtilde) \Y \right|
    = \op{ \frac{ \| \Y \| \| \MlatOracle \ZlatOracle \|  }{ \sqrt{n} } }
    = \op{ \sqrt{n} } .
  \end{equation*}
  Applying the above two displays to Equation~\eqref{eq:Zcovs:linearTerm},
  \begin{equation} \label{eq:Zcovs:linearTerm:done}
    \left\| (\Zhat \Qdestop-\ZlatOracle)^\top \MlatOracle \ZlatOracle \right\|
    = \op{ \sqrt{n} } .
  \end{equation}

  By submultiplicativity,
  \begin{equation*}
    \left\| \Qdes \Zhat^\top ( \Mhat-\MlatOracle ) \Zhat \Qdestop \right\|
    \le
    \left\| \Zhat \Qdestop \right\|^2 \left\| \Mhat-\MlatOracle \right\|
    \le C \left\| \ZlatOracle \right\|^2 \left\| \Mhat-\MlatOracle \right\|,
  \end{equation*}
  where the second inequality follows from Lemma~\ref{lem:ZhatZlat:spectral}, Weyl's inequality, and the fact that $\sigmamin(\MlatOracle \ZlatOracle) \le \sigmamin( \ZlatOracle )$, since $\MlatOracle$ is a projection.
  Applying Lemma~\ref{lem:MhatMlat:spectral} and our growth assumptions in Equations~\eqref{eq:assum:Hgrowth} and~\eqref{eq:assum:Zgrowth},
  \begin{equation} \label{eq:Zcovs:Mdiffs:done}
    \left\| \Qdes \Zhat^\top ( \Mhat-\MlatOracle ) \Zhat \Qdestop \right\|
    = \op{ \sqrt{n} } .
  \end{equation}

  Using idempotence of $\MlatOracle$,
  \begin{equation} \label{eq:Zcovs:quad:start}
    \left\| (\Zhat\Qdestop-\ZlatOracle)^\top \MlatOracle (\Zhat \Qdestop-\ZlatOracle) \right\|
    = \left\| \MlatOracle (\Zhat\Qdestop-\ZlatOracle) \right\|^2 .
  \end{equation}
  Recalling the definition of $\MlatOracle$ from Equation~\eqref{eq:def:MlatOracle} and applying submultiplicativity,
  \begin{equation*}
    \left\| \MlatOracle (\Zhat\Qdestop-\ZlatOracle) \right\|
    \le
    \frac{ \left\| \HlatOracle^\top (\Zhat\Qdestop-\ZlatOracle) \right\| }
    { \sigmamin( \HlatOracle ) } .
  \end{equation*}
  By an argument analogous to those controlling Equations~\eqref{eq:Zdiffs:triangle} and~\eqref{eq:Zcovs:linearTerm} above,
  \begin{equation*}
    \left\| \MlatOracle (\Zhat\Qdestop-\ZlatOracle) \right\|
    = \op{ \frac{ \| \HlatOracle \| }{ \sigmamin( \HlatOracle ) } } .
  \end{equation*}
  Applying this to Equation~\eqref{eq:Zcovs:quad:start},
  \begin{equation*}
    \left\| (\Zhat\Qdestop-\ZlatOracle)^\top \MlatOracle (\Zhat\Qdestop-\ZlatOracle) \right\|
    =
    \op{ \kappa^2( \HlatOracle ) } ,
  \end{equation*}
  and our growth assumption in Equation~\eqref{eq:assum:Hgrowth} yields
  \begin{equation} \label{eq:Zcovs:quad:done}
    \left\| (\Zhat\Qdestop-\ZlatOracle)^\top \MlatOracle (\Zhat\Qdestop-\ZlatOracle) \right\|
    = \op{ \sqrt{n} } .
  \end{equation}

  Applying Equations~\eqref{eq:Zcovs:linearTerm:done},
  ~\eqref{eq:Zcovs:Mdiffs:done}
  and~\eqref{eq:Zcovs:quad:done}
  to Equation~\eqref{eq:Zcovs:numertri},
  \begin{equation*}
    \left\| \Qdes \Zhat^\top \Mhat \Zhat \Qdestop
    - \ZlatOracle^\top \MlatOracle \ZlatOracle \right\|
    = \op{ \sqrt{n} }.
  \end{equation*}
  Applying this to Equation~\eqref{eq:Zcovs:start},
  \begin{equation} \label{eq:Zcovs:hat:done}
    \left\| \left[ \Qdes (\Zhat^\top \Mhat \Zhat)^{-1} \Qdestop
      - (\ZlatOracle^\top \MlatOracle \ZlatOracle)^{-1} \right]
    \left( \MlatOracle \ZlatOracle \right)^\top \Y \right\|
    = \op{ \frac{1}{\sqrt{n}} }.
  \end{equation}
  Applying this and Equation~\eqref{eq:latcon:MZdiffs:done} to Equation~\eqref{eq:thetahat2thetaoracle} and multiplying through by $\sqrt{n}$ completes the proof.
\end{proof}

\begin{proof}[Proof of Theorem~\ref{thm:lattruelatfit}]
  By Theorem~\ref{thm:kelejian1998},
  \begin{equation*} \begin{aligned}
      \sqrt{n} \paren*{\thetaOracleTSLS - \symbf \theta} \to \mathcal{N} \paren*{\0, \Sigma \paren*{\thetahattsls}}
    \end{aligned} \end{equation*}
  Thus, it will suffice for us to show that
  \begin{equation*}
    \sqrt{n} \paren*{ \Qdes \thetahattsls - \thetaOracleTSLS } = \op{ 1 } .
  \end{equation*}
  This is precisely the content of Lemmas~\ref{lem:thetahatTSLS2thetaOracleTSLS}.
\end{proof}

\subsection{Peer Contagion Estimators} \label{apx:lattruepeerfit}

\begin{lemma}\label{lem:betahatTSLS2thetaOracleTSLS}
  Under the latent contagion model in Equation~\eqref{eq:lim-latent-red},
  suppose that Assumptions~\ref{assum:Apop:spectrum} through~\ref{assum:stronger4dhat} hold.
  Then there exists a sequence of orthogonal matrices $\Qdes \in \R^{(2+p+d)\times(2+p+d)}$ such that
  \begin{equation*}
    \sqrt{n} \paren*{ \Qdes \betahattsls - \thetaOracleTSLS } = \op{ 1 } .
  \end{equation*}
\end{lemma}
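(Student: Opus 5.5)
The plan is to mirror the proof of Lemma~\ref{lem:thetahatTSLS2thetaOracleTSLS}, with $\Zcheck,\Mcheck$ in place of $\Zhat,\Mhat$, and with the oracle still $\thetaOracleTSLS$ built from $\ZlatOracle,\MlatOracle$. Take $\Qdes$ as in Equation~\eqref{eq:def:Qdes}. Since $\Qdes$ is orthogonal, the same add-and-subtract step as in Equation~\eqref{eq:thetahat2thetaoracle} bounds $\|\Qdes\betahattsls-\thetaOracleTSLS\|$ by two terms:
\begin{equation*}
\left\| \Qdes (\Zcheck^\top \Mcheck \Zcheck)^{-1} \Qdestop
\left(\Mcheck\Zcheck\Qdestop-\MlatOracle\ZlatOracle\right)^\top \Y\right\|
+
\left\| \left[ \Qdes (\Zcheck^\top \Mcheck \Zcheck)^{-1} \Qdestop
- (\ZlatOracle^\top \MlatOracle \ZlatOracle)^{-1} \right]
\MlatOracle \ZlatOracle \right\| \|\Y\|.
\end{equation*}
Lemma~\ref{lem:McheckZcheck:forWeyl} combined with Weyl's inequality gives $\sigmamin(\Mcheck\Zcheck)=\Omegap{\sqrt n}$. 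It therefore suffices to show that both $\|(\Mcheck\Zcheck\Qdestop-\MlatOracle\ZlatOracle)^\top\Y\|$ and $\|\Qdes\Zcheck^\top\Mcheck\Zcheck\Qdestop-\ZlatOracle^\top\MlatOracle\ZlatOracle\|$ are $\op{\sqrt n}$. Under latent contagion, $\|\Y\|=\Op{\sqrt n}$ by Lemma~\ref{lem:Y:latcon}.

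For the numerator terms I would decompose exactly as in Equations~\eqref{eq:latcon:MZdiffs:numer}--\eqref{eq:ZhatMdiff:Y:triangle}. The terms containing $\Mcheck-\MlatOracle$ are handled by Lemma~\ref{lem:McheckMlat:spectral}, Lemma~\ref{lem:ZcheckZlat:spectral} and the growth rates~\eqref{eq:assum:Hgrowth}--\eqref{eq:assum:Zgrowth}, each giving $\op{\sqrt n}$. The remaining term $\|(\Zcheck\Qdestop-\ZlatOracle)^\top\MlatOracle\Y\|$ splits into $\|(\Xhat\Q^\top-\Xpop)^\top\MlatOracle\Y\|$, which Lemma~\ref{lem:XhatXTB:ctrl} makes $\op{\sqrt n}$, and the scalar $|\Y^\top\MlatOracle(\G-\Gtilde)\Y|$. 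Under latent contagion, $\Y$ and $\MlatOracle$ are functions of $\Xpop,\W,\be$ only, so they are conditionally independent of $\A-\Apop$ given $\Xpop$. Lemma~\ref{lem:GGtilde:quad} with $\vu=\MlatOracle\Y$ and $\vv=\Y$ then gives $\op{\|\Y\|^2/\sqrt n}=\op{\sqrt n}$. This is exactly where using $\G$ instead of $\Ghat$ makes things easier than in the earlier lemma: Assumption~\ref{assum:stronger4dhat} is not needed, only Lemma~\ref{lem:GGtilde:quad} instead of Lemma~\ref{lem:GhatGtilde:quad}.

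For the Gram-matrix difference I would use the three-term split of Equation~\eqref{eq:Zcovs:numertri}. The linear term again reduces to Lemma~\ref{lem:XhatXTB:ctrl} with $\mB=\MlatOracle\ZlatOracle$, plus $|\ZlatOracle^\top\MlatOracle(\G-\Gtilde)\Y|$ handled by Lemma~\ref{lem:GGtilde:quad}. Here $\ZlatOracle$ contains $\Gtilde\Y$, which is also independent of $\A-\Apop$. The $\Mcheck-\MlatOracle$ term follows from Lemmas~\ref{lem:McheckMlat:spectral} and~\ref{lem:ZcheckZlat:spectral}. The quadratic term $\|\MlatOracle(\Zcheck\Qdestop-\ZlatOracle)\|^2$ is controlled through $\|\HlatOracle^\top(\Zcheck\Qdestop-\ZlatOracle)\|/\sigmamin(\HlatOracle)$, using the same two inner-product bounds column by column on $\HlatOracle$. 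The columns of $\HlatOracle$ are independent of $\A-\Apop$, so Lemma~\ref{lem:GGtilde:quad} applies to each of them.

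The main obstacle I expect is not in the bounds themselves but in keeping every quadratic form of the type $\vu^\top(\G-\Gtilde)\vv$ with both $\vu$ and $\vv$ independent of the edge noise. A spectral bound with Lemma~\ref{lem:GGtilde:spectral} alone gives only $\op{n^{3/4}}$-type control from~\eqref{eq:assum:Grate:stronger}, which is not enough for the $\sqrt n$-scale numerator. So the decompositions have to be arranged so that $\Zcheck$, which contains $\G\Y$, never appears on both sides of such a form without first being replaced by $\ZlatOracle$. Wherever a product of two differences survives, as in the quadratic term, I would bound it by the spectral lemmas, where the squared small factor suffices.
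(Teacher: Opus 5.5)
Your proposal is correct and is essentially the paper's proof. The paper also mirrors the argument of Lemma~\ref{lem:thetahatTSLS2thetaOracleTSLS} with the same two-term split, substituting Lemmas~\ref{lem:McheckZcheck:forWeyl}, \ref{lem:McheckMlat:spectral}, \ref{lem:ZcheckZlat:spectral} and~\ref{lem:GGtilde:quad} for their $\Ghat$-based counterparts. Your version is slightly more explicit than the paper's: you point out that Lemma~\ref{lem:GGtilde:quad} is also what controls the scalar $\Y^\top\MlatOracle(\G-\Gtilde)\Y$ in the first term (the paper names it only for the second), and your remark that the spectral bound alone gives only $\op{n^{3/4}}$ there is exactly why the quadratic-form lemma is needed.
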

\begin{proof}
Take $\Qdes$ to be as in Equation~\eqref{eq:def:Qdes}.
Recalling the definitions from Equations~\eqref{eq:def:betahattsls} and~\eqref{eq:def:thetaOracleTSLS},
  \begin{equation*}
    \Qdes \betahattsls - \thetaOracleTSLS
    = \Qdes
	\left(\Zcheck^\top \Mcheck \Zcheck \right)^{-1} \Zcheck^\top \Mcheck \y
    - \left(\ZlatOracle^\top \MlatOracle \ZlatOracle \right)^{-1} \ZlatOracle^\top \MlatOracle \Y .
  \end{equation*}

  Applying the triangle inequality,
  \begin{equation} \label{eq:betahat2thetaOracle}
    \begin{aligned}
      \left\| \Qdes \betahattsls - \thetaOracleTSLS \right\|
       & \le
      \left\| \Qdes \left(\Zcheck^\top \Mcheck \Zcheck \right)^{-1} \Qdestop
      \left( \Mcheck \Zcheck \Qdestop - \MlatOracle \ZlatOracle \right)^\top \Y \right\| \\
       & ~~~~~~~~~+
      \left\|\left[\Qdes\left(\Zcheck^\top\Mcheck\Zcheck\right)^{-1} \Qdestop
        - \left(\ZlatOracle^\top \MlatOracle \ZlatOracle \right)^{-1} \right]
      \left( \MlatOracle \ZlatOracle \right)^\top \Y \right\| .
    \end{aligned} \end{equation}
  We will control each of these right-hand terms separately.

  Following an argument parallel to that given leading up to Equation~\eqref{eq:latcon:MZdiffs:done} in the proof of Lemma~\ref{lem:thetahatTSLS2thetaOracleTSLS},
  but using Lemmas~\ref{lem:McheckZcheck:forWeyl},~\ref{lem:McheckMlat:spectral} and~\ref{lem:ZcheckZlat:spectral}
  in place of Lemmas~\ref{lem:MhatZhat:forWeyl},~\ref{lem:MhatMlat:spectral} and~\ref{lem:ZhatZlat:spectral}, respectively, we obtain
  \begin{equation*}
    \left\| \Qdes (\Zcheck^\top \Mcheck \Zcheck)^{-1} \Qdestop
    \left( \Mcheck \Zcheck \Qdestop - \MlatOracle \ZlatOracle \right)^\top \Y
    \right\|
    = \op{ \frac{1}{\sqrt{n}} } .
  \end{equation*}

  Following an argument parallel to that leading to Equation~\eqref{eq:Zcovs:hat:done} in the proof of Lemma~\ref{lem:thetahatTSLS2thetaOracleTSLS},
  but using Lemmas~\ref{lem:McheckZcheck:forWeyl},~\ref{lem:GGtilde:quad},~\ref{lem:ZcheckZlat:spectral} and~\ref{lem:McheckMlat:spectral}
  in place of
  Lemmas~\ref{lem:MhatZhat:forWeyl},~\ref{lem:GhatGtilde:quad},~\ref{lem:ZhatZlat:spectral} and~\ref{lem:MhatMlat:spectral}, respectively, yields
  \begin{equation*}
    \left\| \Qdes \left[ (\Zcheck^\top \Mcheck \Zcheck )^{-1} \Qdestop
      - (\ZlatOracle^\top \MlatOracle \ZlatOracle)^{-1} \right]
    \left( \MlatOracle \ZlatOracle \right)^\top \Y \right\|
    = \op{ \frac{1}{\sqrt{n}} }.
  \end{equation*}

  Applying the above two displays to Equation~\eqref{eq:betahat2thetaOracle} and multiplying through by $\sqrt{n}$ completes the proof.
\end{proof}

\section{Peer Contagion Data Model} \label{apx:peercon}

Here we provide proof details for Theorems~\ref{thm:peertruepeerfit} and~\ref{thm:peertruelatfit}, which describe the behavior of our estimators when the responses $\Y$ are driven by contagion on the graph structure encoded by the observed network $\A$, i.e., what we have termed {\em peer contagion}, as given in Equation~\eqref{eq:lim-peer-red}.
The key challenge under the peer contagion model, as in the latent contagion model considered in Appendix~\ref{apx:latcon}, is that we do not have access to the latent positions $\Xpop$.
If we had access to this latent structure, we could construct oracle analogues of the estimators in Equation~\eqref{eq:def:betahattsls},
\begin{equation} \label{eq:def:betaOracle}
  \betaOracleTSLS = (\ZpeerOracle^\top \MpeerOracle \ZpeerOracle)^{-1}
  \ZpeerOracle^\top \MpeerOracle \Y,
\end{equation}
where $\MpeerOracle$ is as defined in Equation~\eqref{eq:def:MpeerOracle} and $\ZpeerOracle$ is given by
\begin{equation} \label{eq:def:ZpeerOracle}
  \ZpeerOracle = \begin{bmatrix} \1_n \; \W \; \Xpop \; \G \Y \end{bmatrix}
  \in \R^{n \times (p + d + 2)}.
\end{equation}

We begin by proving convergence of the peer contagion estimators under the peer contagion data model to establish Theorem~\ref{thm:peertruepeerfit}.
We give analogous proofs for the latent contagion estimators in Section~\ref{apx:peertruelatfit} below.

\subsection{Peer Contagion Estimators} \label{apx:peertruepeerfit}

Our proof of Theorem~\ref{thm:peertruepeerfit} relies on showing that when the model in Equation~\eqref{eq:lim-peer} holds, our estimators are close to those in Equation~\eqref{eq:def:betaOracle}.
Lemmas~\ref{lem:betahatTSLS2betaOracleTSLS} establishes this.

\begin{lemma}\label{lem:betahatTSLS2betaOracleTSLS}
  Under the peer contagion model in Equation~\eqref{eq:lim-peer-red}, suppose that Assumptions~\ref{assum:Apop:spectrum} through~\ref{assum:latentpositions} and Assumption~\ref{assum:peer-oracle} hold.
  Then there exists a sequence of orthogonal matrices $\Qdes \in \R^{(2+p+d)\times(2+p+d)}$ such that
  \begin{equation*}
    \sqrt{n} \paren*{ \Qdes \betahattsls - \betaOracleTSLS } = \op{ 1 } .
  \end{equation*}
\end{lemma}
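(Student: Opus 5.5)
We follow the same template as the proof of Lemma~\ref{lem:thetahatTSLS2thetaOracleTSLS}, now with $\Zcheck,\Mcheck$ in place of $\Zhat,\Mhat$ and the peer oracle $\ZpeerOracle,\MpeerOracle$ in place of $\ZlatOracle,\MlatOracle$. Taking $\Qdes$ as in Equation~\eqref{eq:def:Qdes}, we split
\begin{equation*}
  \Qdes \betahattsls - \betaOracleTSLS
  = \Qdes (\Zcheck^\top \Mcheck \Zcheck)^{-1} \Qdestop \bigl( \Mcheck \Zcheck \Qdestop - \MpeerOracle \ZpeerOracle \bigr)^\top \Y
  + \bigl[ \Qdes (\Zcheck^\top \Mcheck \Zcheck)^{-1} \Qdestop - (\ZpeerOracle^\top \MpeerOracle \ZpeerOracle)^{-1} \bigr] (\MpeerOracle \ZpeerOracle)^\top \Y ,
\end{equation*}
and we show that each term is $\op{n^{-1/2}}$.

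\textbf{Conditioning.} Lemmas~\ref{lem:McheckZcheck:forWeyl} and~\ref{lem:MpeerZpeer:forWeyl}, together with Weyl's inequality, give
\begin{equation*}
  \sigmamin(\Mcheck\Zcheck) \text{ and } \sigmamin(\MpeerOracle\ZpeerOracle) \text{ are both } \Omegap{\sigmamin(\MlatOracle\ZlatOracle)} = \Omegap{\sqrt n}.
\end{equation*}
Hence both Gram matrices are invertible with inverses of norm $\Op{1/n}$. It then suffices to show
\begin{equation*}
  \bigl\|(\Mcheck\Zcheck\Qdestop - \MpeerOracle\ZpeerOracle)^\top\Y\bigr\| = \op{\sqrt n}
  \quad\text{and}\quad
  \bigl\|\Qdes\Zcheck^\top\Mcheck\Zcheck\Qdestop - \ZpeerOracle^\top\MpeerOracle\ZpeerOracle\bigr\| = \op{\sqrt n}.
\end{equation*}
We also use $\|\Y\| = \Op{\sqrt n}$ from Lemma~\ref{lem:Y:peercon}.

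\textbf{Projection-difference terms.} Terms involving $\Mcheck - \MpeerOracle$ are handled crudely. Lemma~\ref{lem:McheckMpeer:spectral} gives $\|\Mcheck - \MpeerOracle\| = \op{1/\sqrt n}$, and multiplying by two factors of size $\Op{\sqrt n}$ yields $\op{\sqrt n}$. Lemma~\ref{lem:ZcheckZpeer:spectral} then controls the factor $\Zcheck\Qdestop - \ZpeerOracle$ wherever it appears together with $\Mcheck - \MpeerOracle$.

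\textbf{Main obstacle.} The remaining linear terms are
\begin{equation*}
  (\Zcheck\Qdestop - \ZpeerOracle)^\top \MpeerOracle \Y
  \quad\text{and}\quad
  (\Zcheck\Qdestop - \ZpeerOracle)^\top \MpeerOracle \ZpeerOracle .
\end{equation*}
The column $\G\Y$ is identical in $\Zcheck$ and $\ZpeerOracle$, so $\Zcheck\Qdestop - \ZpeerOracle$ reduces to the block $\Xhat\Q^\top - \Xpop$. We therefore need
\begin{equation*}
  \bigl\|(\Xhat\Q^\top - \Xpop)^\top \mB\bigr\| = \op{\sqrt n}
  \quad\text{for}\quad \mB = \MpeerOracle\Y \ \text{and}\ \mB = \MpeerOracle\ZpeerOracle .
\end{equation*}
Lemma~\ref{lem:XhatXTB:ctrl} delivers this, but it requires $\mB$ to be conditionally independent of $\A - \Apop$. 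Here $\MpeerOracle$, $\G\Y$ and $\Y$ all depend on $\A$, so that hypothesis fails.

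First fallback: the spectral bound $\|\Xhat\Q^\top - \Xpop\|\,\|\mB\|$. This gives only $\op{n^{1/4}}\cdot\Op{\sqrt n}$, which is too weak.

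Plan instead: replace each $\A$-dependent factor by its latent surrogate, which is $\A$-independent given $\Xpop$.
\begin{itemize}
  \item Write $\MpeerOracle = \MlatOracle + (\MpeerOracle - \MlatOracle)$.
  \item Write $\Y = \Ytilde + \zeta_n$ and $\G\Y = \Gtilde\Ytilde + \bigl[(\G - \Gtilde)\Ytilde + \G\zeta_n\bigr]$.
\end{itemize}
Lemma~\ref{lem:XhatXTB:ctrl} applies to the surrogate parts $\MlatOracle\Ytilde$ and $\MlatOracle\ZlatOracle$ (the latter built from $\Ytilde$), giving $\op{\sqrt n}$.

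For the remainders we pair the spectral bound on $\Xhat\Q^\top - \Xpop$ with the smallness of the perturbations.
\begin{itemize}
  \item Lemma~\ref{lem:MpeerMlat:spectral} controls $\MpeerOracle - \MlatOracle$.
  \item $\|\zeta_n\|$ and $\|(\G-\Gtilde)\Ytilde\|$ are $\Op{\sqrt n\,\epsilon_n}$, with $\epsilon_n$ the rate from Lemma~\ref{lem:GGtilde:spectral}. This follows from Lemmas~\ref{lem:Y:peercon} and~\ref{lem:GGtilde:spectral}.
\end{itemize}
Under Equation~\eqref{eq:assum:Grate:stronger}, $\epsilon_n = \op{n^{-1/4}}$. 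The product with $\op{n^{1/4}}$ from $\|\Xhat\Q^\top - \Xpop\|$ is then $\op{1}\cdot\sqrt n$, as required.

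The quadratic term $\|\MpeerOracle(\Xhat\Q^\top - \Xpop)\|^2$ is handled the same way, giving $\op{\sqrt n}$. If this rate bookkeeping is too tight at some step, we would instead apply Lemma~\ref{lem:XhatXTB:ctrl} with a leave-one-out or sample-splitting surrogate for $\MpeerOracle$. Combining the bounds and multiplying through by $\sqrt n$ completes the argument.
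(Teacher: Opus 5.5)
Your proof is correct, and up to the step you flag as the main obstacle it coincides with the paper's. The shared part is: the same two-term split with $\Qdes$ from \eqref{eq:def:Qdes}, Weyl-based control of both Gram inverses, crude bounds on every $\Mcheck-\MpeerOracle$ term, reduction of $\Zcheck\Qdestop-\ZpeerOracle$ to the block $\Xhat\Q^\top-\Xpop$, and the swap $\MpeerOracle=\MlatOracle+(\MpeerOracle-\MlatOracle)$ with the difference handled by Lemma~\ref{lem:MpeerMlat:spectral}.

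The routes differ on $(\Xhat\Q^\top-\Xpop)^\top\MlatOracle\Y$ and $(\Xhat\Q^\top-\Xpop)^\top\MlatOracle\ZpeerOracle$. You replace $\Y$ and $\G\Y$ by the $\A$-independent surrogates $\Ytilde$ and $\Gtilde\Ytilde$, apply Lemma~\ref{lem:XhatXTB:ctrl} to the surrogate, and absorb the remainders through the product $\op{n^{1/4}}\cdot\op{n^{1/4}}=\op{\sqrt n}$. This does close, with two small notes. Getting $\|\G-\Gtilde\|=\op{n^{-1/4}}$ from \eqref{eq:assum:Grate:stronger} uses $\spop_1\ge\min_i\dtilde_i$. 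The bound $\|\zeta_n\|=\Op{\sqrt n\,\epsilon_n}$ comes from the proof, not the statement, of Lemma~\ref{lem:Y:peercon}.

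The paper never decomposes $\Y$. It writes $\MlatOracle=\HlatOracle(\HlatOracle^\top\HlatOracle)^{-1}\HlatOracle^\top$ and applies Lemma~\ref{lem:XhatXTB:ctrl} with $\mB=\HlatOracle$. That matrix is built from $\W$, $\Xpop$ and $\Gtilde$ alone, hence is conditionally independent of $\A-\Apop$. The leftover factor $(\HlatOracle^\top\HlatOracle)^{-1}\HlatOracle^\top\Y$ is $\Op{1}$ using nothing but $\|\Y\|=\Op{\sqrt n}$. The same device gives $\|\MpeerOracle(\Zcheck\Qdestop-\ZpeerOracle)\|=\op{1}$ for the quadratic term. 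There you settle for $\op{n^{1/4}}$, which is valid and immediate since $\MpeerOracle$ is a projection.

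So the paper's factorization is the lighter tool here: the $\A$-dependence is absorbed into a fixed-width instrument matrix, with no surrogate construction, no remainder bookkeeping, and no second use of the latent-position and $\|\G-\Gtilde\|$ rates in a product. Your surrogate argument is the more general move, since it would also handle factors not lying in the span of an $\A$-independent matrix (e.g.\ $(\Xhat\Q^\top-\Xpop)^\top\Y$ with no projection). It is not needed here, and neither is the leave-one-out fallback you mention.
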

\begin{proof}
Take $\Qdes$ to be as in Equation~\eqref{eq:def:Qdes}.
  Recalling the definitions from Equations~\eqref{eq:def:betahattsls} and~\eqref{eq:def:betaOracle} and using the fact that $\Qdes$ is orthogonal,
  \begin{equation*}
    \Qdes \betahattsls - \betaOracleTSLS
    =
    \Qdes \left(\Zcheck^\top \Mcheck \Zcheck \right)^{-1}
	\Qdestop \Qdes \Zcheck^\top \Mcheck \Y
    - (\ZpeerOracle^\top \MpeerOracle \ZpeerOracle)^{-1}
    \ZpeerOracle^\top \MpeerOracle \Y .
  \end{equation*}
  Applying the triangle inequality,
  \begin{equation} \label{eq:betahat2betaOracle}
  \begin{aligned}
  \left\| \Qdes \betahattsls - \betaOracleTSLS \right\|
   & \le
  \left\| \Qdes \left(\Zcheck^\top \Mcheck \Zcheck \right)^{-1} \Qdestop
    	\left( \Mcheck \Zcheck \Qdestop 
    	- \MpeerOracle \ZpeerOracle \right)^\top \Y \right\| \\
   & ~~~~~~~~~+
  \left\| \left[ \Qdes \left(\Zcheck^\top \Mcheck \Zcheck \right)^{-1} \Qdestop
    - \left(\ZpeerOracle^\top \MpeerOracle \ZpeerOracle \right)^{-1} \right]
  \left( \MpeerOracle \ZpeerOracle \right)^\top \Y \right\| .
  \end{aligned} \end{equation}
  We will control each of these right-hand terms separately.

  \paragraph{Controlling the first term in Equation~\eqref{eq:betahat2betaOracle}}
  \enspace

  By submultiplicativity and unitary invariance of the norm,
  \begin{equation*}
    \left\| \Qdes(\Zcheck^\top \Mcheck \Zcheck)^{-1} \Qdestop
    \left( \Mcheck \Zcheck \Qdestop - \MpeerOracle \ZpeerOracle \right)^\top \Y
    \right\|
    \le
    \frac{ \left\| \left( \Mcheck \Zcheck \Qdestop
		- \MpeerOracle \ZpeerOracle \right)^{\!\top} \Y \right\|}
    {\sigmamin^2( \Mcheck \Zcheck ) }
    \le
    \frac{ C \left\| \left( \Mcheck \Zcheck \Qdestop
		- \MpeerOracle \ZpeerOracle \right)^{\!\top} \Y \right\|}
    {\sigmamin^2( \MlatOracle \ZlatOracle ) },
  \end{equation*}
  where the second inequality follows from Weyl's inequality and Lemma~\ref{lem:McheckZcheck:forWeyl}.
  Using our assumption in Equation~\eqref{eq:assum:Zgrowth}, it follows that
  \begin{equation} \label{eq:peercon:peerfit:MZdiffs:Weyl}
    \left\| \Qdes (\Zcheck^\top \Mcheck \Zcheck)^{-1} \Qdestop
    \left( \Mcheck \Zcheck \Qdestop - \MpeerOracle \ZpeerOracle \right)^\top \Y
    \right\|
    \le
    \frac{ C \left\| \left( \Mcheck \Zcheck \Qdestop
		- \MpeerOracle \ZpeerOracle \right)^{\!\top} \Y \right\|}
    { n } .
  \end{equation}

  By the triangle inequality,
  \begin{equation} \label{eq:peercon:peerfit:MZdiffs:numer}
    \left\| \left( \Mcheck \Zcheck \Qdestop
		- \MpeerOracle \ZpeerOracle \right)^{\!\top} \Y \right\|
    \le
    \left\| \Qdes \Zcheck^\top \left(\Mcheck-\MpeerOracle\right) \Y \right\|
    + \left\| \left( \Zcheck \Qdestop -\ZpeerOracle \right)^\top
		\MpeerOracle \Y \right\|.
  \end{equation}
  Again using the triangle inequality,
  \begin{equation} \label{eq:peercon:peerfit:ZM:diff:Y:triangle}
    \left\| \Qdes \Zcheck^\top \left(\Mcheck-\MpeerOracle\right) \Y \right\|
    \le \left\| \left(\Zcheck \Qdestop -\ZpeerOracle\right)^\top 
		\left(\Mcheck-\MpeerOracle\right) \Y \right\|
    + \left\| \ZpeerOracle^\top \left(\Mcheck-\MpeerOracle\right) \Y \right\|.
  \end{equation}

  Using submultiplicativity followed by Lemmas~\ref{lem:ZcheckZpeer:spectral} and ~\ref{lem:McheckMpeer:spectral},
  \begin{equation*}
    \left\| \left(\Zcheck \Qdestop-\ZpeerOracle \right)^\top 
		:\left(\Mcheck-\MpeerOracle\right) \Y \right\|
    \le \left\| \Zcheck \Qdestop -\ZpeerOracle \right\|
    \left\| \Mcheck-\MpeerOracle \right\| \left\| \Y \right\|
    =
    \op{ \frac{ \sigmamin( \MlatOracle \ZlatOracle ) \| \Y \| }
		{ \sigmamin(\HlatOracle)} } .
  \end{equation*}
  Using the fact that $\MlatOracle$ is a projection and applying our growth assumption in Equation~\eqref{eq:assum:Zgrowth}, we have $\sigmamin(\MlatOracle \ZlatOracle) \le \| \ZlatOracle \| = \Op{\sqrt{n}}$, and thus
  \begin{equation} \label{eq:peercon:peerfit:ZM:diff:Y:triangleA}
    \left\| \left(\Zcheck \Qdestop -\ZpeerOracle \right)^\top 
		\left(\Mcheck-\MpeerOracle\right) \Y \right\|
    = \op{ \frac{ \| \Y \| }{ \sigmamin(\HlatOracle) } }
    = \op{ \sqrt{n} },
  \end{equation}
  where the second equality follows from  our growth assumption in Equation~\eqref{eq:assum:Hgrowth} and using Lemma~\ref{lem:Y:peercon} to ensure $\|\Y\|=\Op{\sqrt{n}}$.

  Using Lemma~\ref{lem:Y:peercon} again, write $\Y = \Ytilde + \zeta$, where $\Ytilde$ is independent of $\A - \Apop$ conditional on $\Xpop$ with
  \begin{equation} \label{eq:Ytilde}
    \left\| \Ytilde \right\| = \Op{ \sqrt{n} }
    ~\text{ and }~
    \| \zeta \| = \op{ \sqrt{n} }.
  \end{equation}
  Applying the triangle inequality,
  \begin{equation} \label{eq:peercon:peerfit:Mdiffquad:Ydecomp}
    \left\| \ZpeerOracle^\top \left(\Mcheck-\MpeerOracle\right) \Y \right\|
    \le
    \left\| \ZpeerOracle^\top \left(\Mcheck-\MpeerOracle\right) \Ytilde \right\|
    + \left\| \ZpeerOracle^\top \left(\Mcheck-\MpeerOracle\right) \zeta \right\|.
  \end{equation}
  By submultiplicativity, Lemma~\ref{lem:McheckMpeer:spectral} and the assumptions in Equations~\eqref{eq:assum:Hgrowth} and~\eqref{eq:assum:Zgrowth},
  \begin{equation*}
    \left\| \ZpeerOracle^\top \left(\Mcheck-\MpeerOracle\right) \Ytilde \right\|
    = \op{ \frac{ \| \ZlatOracle \| \| \Ytilde \| }{ \sigmamin(\HlatOracle) } }
    = \op{ \| \Ytilde \| }
    = \op{ \sqrt{n} },
  \end{equation*}
  where the last equality follows from $\|\Ytilde\| = \Op{ \sqrt{n} }$.
  Applying submultiplicativity, Lemma~\ref{lem:McheckMpeer:spectral}
  and the fact that $\zeta = \op{ \sqrt{n} }$,
  \begin{equation*}
    \left\| \ZpeerOracle^\top \left(\Mcheck-\MpeerOracle\right) \zeta \right\|
    = \op{ \frac{ \| \ZpeerOracle \| \sqrt{n} }{ \sigmamin( \HlatOracle ) } }
    = \op{ \sqrt{n} },
  \end{equation*}
  where the second equality follows from our growth assumptions in Equations~\eqref{eq:assum:Hgrowth} and~\eqref{eq:assum:Zgrowth}.
  Applying the above two display equations to Equation~\eqref{eq:peercon:peerfit:Mdiffquad:Ydecomp},
  \begin{equation*}
    \left\| \ZpeerOracle^\top \left(\Mcheck-\MpeerOracle\right) \Y \right\|
    = \op{ \sqrt{n} } .
  \end{equation*}
  Applying this and Equation~\eqref{eq:peercon:peerfit:ZM:diff:Y:triangleA}
  to Equation~\eqref{eq:peercon:peerfit:ZM:diff:Y:triangle},
  \begin{equation*}
    \left\| \Qdes \Zcheck^\top \left(\Mcheck-\MpeerOracle\right) \Y \right\|
    = \op{ \sqrt{n} } .
  \end{equation*}
  Finally, applying this to Equation~\eqref{eq:peercon:peerfit:MZdiffs:numer},
  \begin{equation} \label{eq:peercon:peerfit:MZdiffs:check:numer:intermezzo}
    \left\| \left(\Mcheck\Zcheck\Qdestop 
		- \MpeerOracle\ZpeerOracle\right)^{\!\top} \Y \right\|
    \le
    \left\| \left( \Zcheck \Qdestop -\ZpeerOracle \right)^\top 
		\MpeerOracle \Y \right\| + \op{ \sqrt{n} }.
  \end{equation}

  Recalling the structure of $\ZpeerOracle$ from Equation~\eqref{eq:def:ZpeerOracle} and applying the triangle inequality,
  \begin{equation*}
    \left\| \left(\Zcheck\Qdestop - \ZpeerOracle \right)^\top 
		\MpeerOracle \Y \right\|
    \le
    \| \left(\Xhat \Q^\top -\Xpop \right)^\top \MpeerOracle \Y \|
    + \left| \Y^\top \MpeerOracle \left(\G - \G \right) \Y \right|
    = \| \left(\Xhat \Q^\top-\Xpop \right)^\top \MpeerOracle \Y \| .
  \end{equation*}
  Applying the triangle inequality and recalling $\MlatOracle$ from Equation~\eqref{eq:def:MlatOracle},
  \begin{equation} \label{eq:peercon:peerfit:Zdiffs:triangle1}
    \left\| \left(\Zcheck \Qdestop - \ZpeerOracle \right)^\top 
		\MpeerOracle \Y \right\|
    \le \| \left(\Xhat \Q^\top-\Xpop \right)^\top 
		\left( \MpeerOracle - \MlatOracle \right) \Y \|
    + \| \left(\Xhat\Q^\top-\Xpop\right)^\top \MlatOracle \Y \| .
  \end{equation}

  By submultiplicativity followed by Lemmas~\ref{lem:XhatXTB:ctrl} and~\ref{lem:MpeerMlat:spectral},
  \begin{equation} \label{eq:peercon:peerfit:Zdiffs:submultTerm}
    \begin{aligned}
      \left\| \left(\Xhat\Q^{\!\top}\!-\!\Xpop\right)^{\!\top} \! \!
		\left( \MpeerOracle \!-\! \MlatOracle \right)\! \Y \right\|
       & \le
      \left\| \Xhat \Q^{\!\top} \!-\! \Xpop \right\|
      \left\| \MpeerOracle - \MlatOracle \right\|
      \left\| \Y \right\|   \\
       & =
      \op{
        \! \frac{ \| \Y \| }{ \sigmamin(\HlatOracle) } \!
        \left[\!
          \frac{ \sqrt{ \nu\! +\! b^2 } \sqrt{n} \log n }{ \sqrt{\spop_d} }
          \!+\!
          \frac{ \kappa(\Apop) \!(\nu\!+\!b^2\!) n^{3/2} \log^2 \! n }{ \spop_d^{3/2} }
          \right] \!
      }                     \\
       & = \op{ \sqrt{n} },
    \end{aligned} \end{equation}
  where we have used Lemma~\ref{lem:Y:peercon} to ensure that $\|\Y\| = \Op{\sqrt{n}}$ and the growth rates in Equations~\eqref{eq:assum:spectralConc},~\eqref{eq:assum:XandW:spectrumUB},~\eqref{eq:assum:XhatX:rate:2} and~\eqref{eq:assum:Hgrowth}.
  Applying this to Equation~\eqref{eq:peercon:peerfit:Zdiffs:triangle1},
  \begin{equation} \label{eq:peercon:peerfit:Zdiffs:triangle2}
    \left\| \left(\Zcheck \Qdestop - \ZpeerOracle \right)^\top 
		\MpeerOracle \Y \right\|
    \le \| \left(\Xhat \Q^\top-\Xpop\right)^\top \MlatOracle \Y \|
    + \op{ \sqrt{n} } .
  \end{equation}

  Recalling $\MlatOracle = \HlatOracle \left( \HlatOracle^\top \HlatOracle \right)^{-1} \HlatOracle^\top$ and using submultiplicativity,
  \begin{equation*}
    \left\| \left(\Xhat\Q^\top-\Xpop\right)^\top \MlatOracle \Y \right\|
    \le
    \frac{ \left\| \left(\Xhat\Q^\top-\Xpop\right)^\top \HlatOracle \right\|
      \left\| \HlatOracle \right\| \left\| \Y \right\| }
    { \sigmamin^2( \HlatOracle ) }
    \le
    \frac{ \kappa(\HlatOracle) \left\| (\Xhat\Q^\top-\Xpop)^\top 
		\HlatOracle \right\| \left\| \Y \right\| }
    { \sigmamin( \HlatOracle ) } .
  \end{equation*}
  Applying Lemma~\ref{lem:XhatXTB:ctrl} with $\mB = \HlatOracle$ and using Lemma~\ref{lem:Y:peercon} to ensure that $\| \Y \| = \Op{\sqrt{n}}$,
  \begin{equation*}
    \left\| \left(\Xhat\Q^\top-\Xpop\right)^\top \MlatOracle \Y \right\|
    \le
    \frac{ \kappa(\HlatOracle) \sqrt{n} \| \HlatOracle \| }{ \sigmamin(\HlatOracle) }
    \left[ \frac{ \sqrt{ \nu + b^2 } \log n }{ \sqrt{\spop_d} }
      +
      \frac{ \kappa(\Apop) (\nu+b^2) n \log^2 n }{ \spop_d^{3/2} }
      \right].
  \end{equation*}
  Applying our growth rates in Equations~\eqref{eq:assum:spectralConc},~\eqref{eq:assum:XandW:spectrumUB},~\eqref{eq:assum:XhatX:rate:2} and~\eqref{eq:assum:Hgrowth},
  \begin{equation*}
    \left\| (\Xhat\Q^\top-\Xpop)^\top \MlatOracle \Y \right\| 
	= \op{ \sqrt{n} } .
  \end{equation*}
  Applying this to Equation~\eqref{eq:peercon:peerfit:Zdiffs:triangle2},
  \begin{equation*}
    \left\| (\Zcheck\Qdestop - \ZpeerOracle )^\top \MpeerOracle \Y \right\|
    = \op{ \sqrt{n} } .
  \end{equation*}
  Applying this to Equation~\eqref{eq:peercon:peerfit:MZdiffs:check:numer:intermezzo} in turn,
  \begin{equation*}
    \left\| \left( \Mcheck \Zcheck \Qdestop - \MpeerOracle \ZpeerOracle \right)^{\!\top} \Y \right\|
    = \op{ \sqrt{n} } .
  \end{equation*}
  Finally, applying this bound to Equation~\eqref{eq:peercon:peerfit:MZdiffs:Weyl},
  \begin{equation} \label{eq:peercon:peerfit:MZdiffs:done}
    \left\| \Qdes (\Zcheck^\top \Mcheck \Zcheck)^{-1} \Qdestop
    \left( \Mcheck \Zcheck \Qdestop - \MpeerOracle \ZpeerOracle \right)^\top \Y
    \right\|
    = \op{ \frac{1}{\sqrt{n}} } .
  \end{equation}

  \paragraph{Controlling the second term in Equation~\eqref{eq:betahat2betaOracle}}
  \enspace

  By Lemmas~\ref{lem:McheckZcheck:forWeyl} and~\ref{lem:MpeerZpeer:forWeyl} along with our growth assumption in Equation~\eqref{eq:assum:Zgrowth}, both $\Zcheck^\top \Mcheck \Zcheck$ and $\ZpeerOracle^\top \MpeerOracle \ZpeerOracle$ are invertible with high probability for all $n$ suitably large.
  Thus, factoring appropriately, applying submultiplicativity and recalling that $\Mcheck$ and $\MpeerOracle$ are projections,
  \begin{equation*}
    \left\| \left[\Qdes\left(\Zcheck^\top\Mcheck\Zcheck\right)^{-1}\Qdestop
      - (\ZpeerOracle^\top \MpeerOracle \ZpeerOracle)^{-1} \right]
    \left( \MpeerOracle \ZpeerOracle \right)^\top \Y \right\|
    \le
    \frac{ \left\| \Qdes \Zcheck^\top \Mcheck \Zcheck \Qdestop
      - \ZpeerOracle^\top \MpeerOracle \ZpeerOracle \right\|
      \left\| \ZpeerOracle \right\| \left\| \Y \right\| }
    {\sigmamin^2(\Mcheck \Zcheck) \sigmamin^2(\MpeerOracle \ZpeerOracle) } .
  \end{equation*}
  Using Lemmas~\ref{lem:McheckZcheck:forWeyl} and~\ref{lem:MpeerZpeer:forWeyl} again along with Weyl's inequality, we can ensure that
  \begin{equation*}
    \sigmamin( \Mcheck \Zcheck ) \sigmamin( \MpeerOracle \ZpeerOracle ) 
	= \Omegap{ \sigmamin^2( \MlatOracle \ZlatOracle ) },
  \end{equation*}
  so that our growth assumption in Equation~\eqref{eq:assum:Zgrowth}, along with Lemma~\ref{lem:Y:peercon} yield
  \begin{equation*}
    \left\| \left[ \Qdes (\Zcheck^\top \Mcheck \Zcheck)^{-1} \Qdestop
      - (\ZpeerOracle^\top \MpeerOracle \ZpeerOracle)^{-1} \right]
    \left( \MpeerOracle \ZpeerOracle \right)^\top \Y \right\|
    \le
    \frac{ C \left\| \Qdes \Zcheck^\top \Mcheck \Zcheck \Qdestop
      - \ZpeerOracle^\top \MpeerOracle \ZpeerOracle \right\|
      \left\| \ZpeerOracle \right\| }
    { n^{3/2} } .
  \end{equation*}
  Applying Lemma~\ref{lem:ZpeerZlat:spectral}, Weyl's inequality and our growth assumption in Equation~\eqref{eq:assum:Zgrowth},
  \begin{equation} \label{eq:peercon:peerfit:Zcovs:start}
    \left\| \left[ \Qdes (\Zcheck^\top \Mcheck \Zcheck)^{-1} \Qdestop
      - (\ZpeerOracle^\top \MpeerOracle \ZpeerOracle)^{-1} \right]
    \left( \MpeerOracle \ZpeerOracle \right)^\top \Y \right\|
    \le
    \frac{ C \left\| \Qdes \Zcheck^\top \Mcheck \Zcheck \Qdestop
      - \ZpeerOracle^\top \MpeerOracle \ZpeerOracle \right\| }
    { n } .
  \end{equation}

  Adding and subtracting appropriate quantities and applying the triangle inequality,
  \begin{equation} \label{eq:peercon:peerfit:Zcovs:numertri} \begin{aligned}
    \left\| \Qdes \Zcheck^\top \Mcheck \Zcheck \Qdestop
    - \ZpeerOracle^\top \MpeerOracle \ZpeerOracle \right\|
    &\le
    2\left\| (\Zcheck \Qdestop -\ZpeerOracle)^\top 
		\MpeerOracle \ZpeerOracle \right\|
    +
    \left\| \Qdes \Zcheck^\top (\Mcheck-\MpeerOracle) \Zcheck \Qdestop \right\|\\
    &~~~~~~+
    \left\| (\Zcheck\Qdestop-\ZpeerOracle)^\top \MpeerOracle (\Zcheck\Qdestop-\ZpeerOracle) \right\| .
  \end{aligned} \end{equation}

  Decomposing $\Zcheck\Qdestop-\ZpeerOracle$ as in Equation~\eqref{eq:peercon:peerfit:Zdiffs:triangle1},
  \begin{equation*}
    \left\| (\Zcheck\Qdestop-\ZpeerOracle)^\top \MpeerOracle \ZpeerOracle \right\|
    \le
    \left\| (\Xhat\Q^\top-\Xpop)^\top \MpeerOracle \ZpeerOracle \right\|
    + \left| \ZpeerOracle^\top \MpeerOracle (\G-\G) \Y \right|
    = \left\| (\Xhat\Q^\top-\Xpop)^\top \MpeerOracle \ZpeerOracle \right\| .
  \end{equation*}
  Applying the triangle inequality,
  \begin{equation} \label{eq:peercon:peerfit:Zcovs:linearTerm}
    \left\| (\Zcheck\Qdestop-\ZpeerOracle)^\top 
		\MpeerOracle \ZpeerOracle \right\|
    \le \left\| (\Xhat\Q^\top-\Xpop)^\top
    \left( \MpeerOracle - \MlatOracle \right) \ZpeerOracle \right\|
    + \left\| (\Xhat\Q^\top-\Xpop)^\top \MlatOracle \ZpeerOracle \right\|.
  \end{equation}
  By submultiplicativity followed by Lemmas~\ref{lem:XhatXTB:ctrl} and~\ref{lem:MpeerMlat:spectral},
  \begin{equation*}
    \left\| (\Xhat\Q^{\!\top}-\Xpop)^{\!\top} \!
    \left( \MpeerOracle - \MlatOracle \right) \ZpeerOracle \right\|
    \le
    \op{ \frac{ \sqrt{n} \| \ZpeerOracle \| }{ \sigmamin(\HlatOracle)}
      \frac{ \sqrt{ \nu \!+\! b^2 } \log n }{ \sqrt{\spop_d} } }
    +
    \op{ \frac{ \| \ZpeerOracle \| }{ \sigmamin(\HlatOracle) }
      \frac{ \kappa(\Apop) (\nu\!+\!b^2) n \log^2 n }{ \spop_d^{3/2} } } ,
  \end{equation*}
  and the growth rates in Equations~\eqref{eq:assum:spectralConc},~\eqref{eq:assum:XandW:spectrumUB} and~\eqref{eq:assum:XhatX:rate:2} yield
  \begin{equation*}
    \left\| (\Xhat\Q^{\!\top}-\Xpop)^\top
    \left( \MpeerOracle - \MlatOracle \right) \ZpeerOracle \right\|
    = \op{ \frac{ \sqrt{n} \| \ZpeerOracle \| }{ \sigmamin(\HlatOracle)} }
    + \op{ \frac{ \| \ZpeerOracle \| }{ \sigmamin(\HlatOracle)} }
    = \op{ \sqrt{n} },
  \end{equation*}
  where the second equality follows from Lemma~\ref{lem:ZpeerZlat:spectral} and our growth assumptions in Equations~\eqref{eq:assum:Zgrowth} and~\eqref{eq:assum:Hgrowth}.
  Applying this to Equation~\eqref{eq:peercon:peerfit:Zcovs:linearTerm},
  \begin{equation*}
    \left\| (\Zcheck\Qdestop-\ZpeerOracle)^\top \MpeerOracle \ZpeerOracle \right\|
    \le \left\| (\Xhat\Q^\top-\Xpop)^\top \MlatOracle \ZpeerOracle \right\|
    + \op{ \sqrt{n} } .
  \end{equation*}
  Recalling the definition of $\MlatOracle$ from Equation~\eqref{eq:def:MlatOracle} and applying submultiplicativity,
  \begin{equation*}
    \left\| (\Zcheck\Qdestop-\ZpeerOracle)^\top \MpeerOracle \ZpeerOracle \right\|
    \le
    \frac{ \left\| (\Xhat\Q^\top-\Xpop)^\top \HlatOracle \right\|
      \kappa( \HlatOracle ) \left\| \ZpeerOracle \right\| }
    { \sigmamin( \HlatOracle ) }
    + \op{ \sqrt{n} } .
  \end{equation*}
  Applying Lemma~\ref{lem:XhatXTB:ctrl} with $\mB = \HlatOracle$ and using Equations~\eqref{eq:assum:spectralConc},~\eqref{eq:assum:XandW:spectrumUB},~\eqref{eq:assum:XhatX:rate:2} and~\eqref{eq:assum:Hgrowth},
  \begin{equation*}
    \left\| (\Zcheck\Qdestop-\ZpeerOracle)^\top \MpeerOracle \ZpeerOracle \right\|
    \le \op{ \left\| \ZpeerOracle \right\| } + \op{ \sqrt{n} } .
  \end{equation*}
  Applying Lemma~\ref{lem:ZhatZpeer:spectral} and our growth assumption in Equation~\eqref{eq:assum:Zgrowth},
  \begin{equation} \label{eq:peercon:peerfit:Zcovs:linearTerm:done}
    \left\| (\Zcheck\Qdestop-\ZpeerOracle)^\top \MpeerOracle \ZpeerOracle \right\|
    = \op{ \sqrt{n} },
  \end{equation}

  By submultiplicativity followed by Lemma~\ref{lem:ZcheckZlat:spectral}, Weyl's inequality, and the fact that $\sigmamin(\MlatOracle \ZlatOracle) \le \sigmamin( \ZlatOracle )$,
  \begin{equation*}
    \left\| \Qdes \Zcheck^\top ( \Mcheck-\MpeerOracle ) \Zcheck \Qdestop \right\|
    \le
    \left\| \Zcheck \right\|^2 \left\| \Mcheck-\MpeerOracle \right\|
    \le C \left\| \ZlatOracle \right\|^2 \left\| \Mcheck-\MpeerOracle \right\|,
  \end{equation*}
  Applying Lemma~\ref{lem:McheckMpeer:spectral} and our growth assumptions in Equations~\eqref{eq:assum:Hgrowth} and~\eqref{eq:assum:Zgrowth},
  \begin{equation} \label{eq:peercon:peerfit:Zcovs:Mdiffs:done}
    \left\| \Qdes \Zcheck^\top ( \Mcheck-\MpeerOracle ) \Zcheck \right\|
    = \op{ \sqrt{n} } .
  \end{equation}

  Using idempotence of $\MpeerOracle$,
  \begin{equation} \label{eq:peercon:peerfit:Zcovs:quad:start}
    \left\| (\Zcheck \Qdestop-\ZpeerOracle)^\top 
		\MpeerOracle (\Zcheck-\ZpeerOracle) \right\|
    = \left\| \MpeerOracle (\Zcheck\Qdestop-\ZpeerOracle) \right\|^2 .
  \end{equation}

  Applying the triangle inequality and using submultiplicativity,
  \begin{equation*}
    \left\| \MpeerOracle (\Zcheck\Qdestop-\ZpeerOracle) \right\|
    \le \left\| \MlatOracle (\Zcheck\Qdestop-\ZpeerOracle) \right\|
    + \left\| \MpeerOracle - \MlatOracle \right\|
    \left\| \Zcheck\Qdestop-\ZpeerOracle \right\|
    \le \left\| \MlatOracle (\Zcheck\Qdestop-\ZpeerOracle) \right\|
    + \op{ \frac{ \| \MlatOracle \ZlatOracle \| }
      { \sigmamin(\HlatOracle)} },
  \end{equation*}
  where the second inequality follows from Lemmas~\ref{lem:ZcheckZpeer:spectral} and~\ref{lem:MhatMlat:spectral}.
  Recalling that $\MlatOracle$ is a projection and applying our growth assumptions in Equations~\eqref{eq:assum:Zgrowth} and~\eqref{eq:assum:Hgrowth},
  \begin{equation*}
    \left\| \MpeerOracle (\Zcheck\Qdestop-\ZpeerOracle) \right\|
    \le \left\| \MlatOracle (\Zcheck\Qdestop-\ZpeerOracle) \right\| + \op{ 1 } .
  \end{equation*}
  Recalling the definition of $\MlatOracle$ from Equation~\eqref{eq:def:MlatOracle} and applying submultiplicativity,
  \begin{equation} \label{eq:peercon:peerfit:Zcovs:MZZ:almostDone}
    \left\| \MpeerOracle (\Zcheck\Qdestop-\ZpeerOracle) \right\|
    \le \frac{ \kappa( \HlatOracle ) }{ \sigmamin(\HlatOracle) }
    \left\| \HlatOracle^\top (\Zcheck\Qdestop-\ZpeerOracle) \right\|
    + \op{ 1 } .
  \end{equation}
  Recalling the structure of $\Zcheck$ and $\ZpeerOracle$ from Definition~\ref{def:feasibleEstimators:peer} and Equation~\eqref{eq:def:ZpeerOracle}, respectively,
  \begin{equation*}
    \left\| \HlatOracle^\top (\Zcheck\Qdestop-\ZpeerOracle) \right\|
    \le \left\| \HlatOracle^\top \left( \Xhat\Q^\top - \Xpop \right) \right\|
    + \left\| \HlatOracle^\top \left( \G - \G \right) \Y \right\|
    = \left\| \HlatOracle^\top \left( \Xhat\Q^\top - \Xpop \right) \right\|,
  \end{equation*}
  and Lemma~\ref{lem:XhatXTB:ctrl} implies, using our growth assumptions in Equations~\eqref{eq:assum:spectralConc},~\eqref{eq:assum:XandW:spectrumUB} and~\eqref{eq:assum:XhatX:rate:2},
  \begin{equation*}
    \left\| \HlatOracle^\top (\Zcheck\Qdestop-\ZpeerOracle) \right\|
    = \op{ \left\| \HlatOracle \right\| } .
  \end{equation*}
  Applying this to Equation~\eqref{eq:peercon:peerfit:Zcovs:MZZ:almostDone} and applying our growth assumption in Equation~\eqref{eq:assum:Hgrowth},
  \begin{equation*}
    \left\| \MpeerOracle (\Zcheck\Qdestop-\ZpeerOracle) \right\|
    = \op{ 1 } .
  \end{equation*}
  Applying this in turn to Equation~\eqref{eq:peercon:peerfit:Zcovs:quad:start},
  \begin{equation} \label{eq:peercon:peerfit:Zcovs:quad:done}
    \left\| (\Zcheck\Qdestop-\ZpeerOracle)^\top \MpeerOracle
    (\Zcheck\Qdestop-\ZpeerOracle) \right\|
    = \op{ 1 } .
  \end{equation}

  Applying Equations~\eqref{eq:peercon:peerfit:Zcovs:linearTerm:done},
  ~\eqref{eq:peercon:peerfit:Zcovs:Mdiffs:done}
  and~\eqref{eq:peercon:peerfit:Zcovs:quad:done}
  to Equation~\eqref{eq:peercon:peerfit:Zcovs:start},
  \begin{equation} \label{eq:peercon:peerfit:Zcovs:hat:done}
    \left\| \left[ \Qdes (\Zcheck^\top \Mcheck \Zcheck)^{-1} \Qdestop
      - (\ZpeerOracle^\top \MpeerOracle \ZpeerOracle)^{-1} \right]
    \left( \MpeerOracle \ZpeerOracle \right)^\top \Y \right\|
    = \op{ \frac{1}{\sqrt{n}} }.
  \end{equation}
  Applying this and Equation~\eqref{eq:peercon:peerfit:MZdiffs:done} to Equation~\eqref{eq:betahat2betaOracle} and multiplying through by $\sqrt{n}$ completes the proof.
\end{proof}

\subsection{Latent Contagion Estimators} \label{apx:peertruelatfit}

We now establish corresponding results for the behavior of the latent contagion estimators when the peer contagion model is true.

\begin{lemma}\label{lem:thetahatTSLS2betaOracleTSLS}
  Under the peer contagion model in Equation~\eqref{eq:lim-latent-red},
  suppose that Assumptions~\ref{assum:Apop:spectrum} through~\ref{assum:stronger4dhat} hold.
  Then there exists a sequence of orthogonal matrices $\Qdes \in \R^{(2+p+d)\times(2+p+d)}$ such that
  \begin{equation*}
    \sqrt{n} \paren*{ \Qdes \thetahattsls - \betaOracleTSLS } = \op{ 1 } .
  \end{equation*}
\end{lemma}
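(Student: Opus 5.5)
The plan follows the template of Lemma~\ref{lem:betahatTSLS2betaOracleTSLS}, with $\Zhat,\Mhat$ in place of $\Zcheck,\Mcheck$. Note that the statement should refer to the peer contagion model in Equation~\eqref{eq:lim-peer-red}, so $\Y$ is generated over $\G$. Take $\Qdes$ as in Equation~\eqref{eq:def:Qdes}. Exactly as in Equation~\eqref{eq:betahat2betaOracle}, apply the triangle inequality to split $\|\Qdes\thetahattsls-\betaOracleTSLS\|$ into two terms:
\begin{itemize}
\item a ``numerator'' term $\|\Qdes(\Zhat^\top\Mhat\Zhat)^{-1}\Qdestop(\Mhat\Zhat\Qdestop-\MpeerOracle\ZpeerOracle)^\top\Y\|$;
\item a ``Gram'' term $\|[\Qdes(\Zhat^\top\Mhat\Zhat)^{-1}\Qdestop-(\ZpeerOracle^\top\MpeerOracle\ZpeerOracle)^{-1}](\MpeerOracle\ZpeerOracle)^\top\Y\|$.
\end{itemize}
Both need to be $\op{n^{-1/2}}$. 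The denominators are handled by Weyl's inequality. Lemmas~\ref{lem:MhatZhat:forWeyl} and~\ref{lem:MpeerZpeer:forWeyl} tie both $\Mhat\Zhat\Qdestop$ and $\MpeerOracle\ZpeerOracle$ to $\MlatOracle\ZlatOracle$, whose smallest singular value is $\Omega(\sqrt n)$ by Equation~\eqref{eq:assum:Zgrowth}. This reduces everything to showing the numerator inner product and the Gram-matrix difference are $\op{\sqrt n}$. Throughout, Lemma~\ref{lem:Y:peercon} supplies $\|\Y\|=\Op{\sqrt n}$.

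For the numerator, write
\[
(\Mhat\Zhat\Qdestop-\MpeerOracle\ZpeerOracle)^\top\Y=\Qdes\Zhat^\top(\Mhat-\MpeerOracle)\Y+(\Zhat\Qdestop-\ZpeerOracle)^\top\MpeerOracle\Y .
\]
The projection-difference piece is $\op{\sqrt n}$ by submultiplicativity with Lemma~\ref{lem:MhatMpeer:spectral}, using $\|\Zhat\|=\Op{\sqrt n}$ from Lemma~\ref{lem:ZhatZpeer:spectral}. The design-difference piece splits by columns into
\[
(\Xhat\Q^\top-\Xpop)^\top\MpeerOracle\Y \quad\text{and}\quad \Y^\top\MpeerOracle(\Ghat-\G)\Y .
\]
The first column piece I handle exactly as in Equations~\eqref{eq:peercon:peerfit:Zdiffs:triangle1}--\eqref{eq:peercon:peerfit:Zdiffs:triangle2}: swap $\MpeerOracle$ for $\MlatOracle$ via Lemma~\ref{lem:MpeerMlat:spectral}, then apply Lemma~\ref{lem:XhatXTB:ctrl} with $\mB=\HlatOracle$. $\HlatOracle$ is a legitimate choice because it depends only on $\Xpop,\W$, hence is independent of $\A-\Apop$ given $\Xpop$.

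The main obstacle is the contagion column, $|\Y^\top\MpeerOracle(\Ghat-\G)\Y|$. Both $\MpeerOracle$ and $\Y$ depend on $\A$, so Lemma~\ref{lem:GhatG:quad} cannot be applied directly. My first attempt is to replace $\MpeerOracle$ by $\MlatOracle$, paying
\[
\|\MpeerOracle-\MlatOracle\|\,\|\Ghat-\G\|\,\|\Y\|^2 .
\]
By Lemma~\ref{lem:MpeerMlat:spectral} and Lemma~\ref{lem:GhatG:spectral} with Equation~\eqref{eq:assum:Grate:stronger}, this is $\op{n^{-1/2}}\cdot\op{n^{-1/4}}\cdot\Op{n}=\op{\sqrt n}$. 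It then remains to bound $|\vu^\top(\Ghat-\G)\Y|$ with $\vu=\MlatOracle\Y$. Here $\vu$ still depends on $\A$ through $\Y$. To fix this I decompose $\Y=\Ytilde+\zeta_n$ per Lemma~\ref{lem:Y:peercon} and set $\vu_0=\MlatOracle\Ytilde$, which is conditionally independent of $\A-\Apop$. Lemma~\ref{lem:Y:peerquad} with $\vu=\vu_0$ gives $\op{\|\vu_0\|}=\op{\sqrt n}$. The remainder $\zeta_n^\top\MlatOracle(\Ghat-\G)\Y$ is bounded crudely by $\|\zeta_n\|\,\|\Ghat-\G\|\,\|\Y\|$. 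Tracing the proof of Lemma~\ref{lem:Y:peercon}, $\|\zeta_n\|\lesssim\sqrt n\,\|\G-\Gtilde\|$, so this remainder is $\Op{n\|\G-\Gtilde\|\,\|\Ghat-\G\|}$, which is $\op{\sqrt n}$ by Equation~\eqref{eq:assum:Grate:stronger} and Lemmas~\ref{lem:GGtilde:spectral} and~\ref{lem:GhatG:spectral}.

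For the Gram term, I follow Equation~\eqref{eq:peercon:peerfit:Zcovs:numertri}, splitting the Gram difference into three pieces:
\begin{itemize}
\item a linear piece $(\Zhat\Qdestop-\ZpeerOracle)^\top\MpeerOracle\ZpeerOracle$, where the $\Xhat$ column is treated as above and the $\Ghat\Y$ column needs $\ZpeerOracle^\top\MpeerOracle(\Ghat-\G)\Y$; this is handled by the same $\Ytilde/\zeta_n$ and $\MpeerOracle\to\MlatOracle$ device, now also decomposing the $\G\Y$ column of $\ZpeerOracle$ into $\Gtilde\Ytilde$ plus a small remainder;
\item a projection piece controlled by Lemma~\ref{lem:MhatMpeer:spectral};
\item a quadratic piece $\|\MpeerOracle(\Zhat\Qdestop-\ZpeerOracle)\|^2$, bounded via $\HlatOracle^\top(\Zhat\Qdestop-\ZpeerOracle)$ and the same inner-product lemmas, Lemmas~\ref{lem:XhatXTB:ctrl} and~\ref{lem:Y:peerquad}.
\end{itemize}
Collecting the bounds and multiplying by $\sqrt n$ gives the claim.
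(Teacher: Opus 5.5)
Your proposal is correct and follows the paper's proof closely. It uses the same two-term split with Weyl control of the denominators, the same $\MpeerOracle\to\MlatOracle$ swaps via Lemma~\ref{lem:MpeerMlat:spectral}, the same $\mB=\HlatOracle$ device for the $\Xhat$ column, and the same three-piece Gram decomposition; you are also right that the statement should cite Equation~\eqref{eq:lim-peer-red}.

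The one local difference is the contagion column. You split the left factor as $\Y=\Ytilde+\zeta_n$. The paper instead writes $|\Y^\top\MlatOracle(\Ghat-\G)\Y|\le \|\Y\|\,\kappa(\HlatOracle)\,\sigmamin(\HlatOracle)^{-1}\|\HlatOracle^\top(\Ghat-\G)\Y\|$ and applies Lemma~\ref{lem:Y:peerquad} directly to the conditionally independent columns of $\HlatOracle$, and does the same for $\ZpeerOracle^\top\MlatOracle(\Ghat-\G)\Y$. This spares the paper your remainder term, but your bound $n\|\G-\Gtilde\|\,\|\Ghat-\G\|=\op{\sqrt n}$ for that remainder is also valid.
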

\begin{proof}
Take $\Qdes$ to be as in Equation~\eqref{eq:def:Qdes}.
  Recalling the definitions in Equations~\eqref{eq:def:thetahattsls} and~\eqref{eq:def:betaOracle} and using the fact that $\Qdes$ is orthogonal,
  \begin{equation*}
  \Qdes \thetahattsls - \betaOracleTSLS
   =
   \left[ \Qdes (\Zhat^\top \Mhat \Zhat)^{-1} \Qdestop \left( \Mhat \Zhat \Qdestop \right)^\top
      - (\ZpeerOracle^\top \MpeerOracle \ZpeerOracle)^{-1}
      \left( \MpeerOracle \ZpeerOracle \right)^\top \right] \Y .
  \end{equation*}
  Simplifying and applying the triangle inequality,
  \begin{equation} \label{eq:thetahat2betaoracle}
    \begin{aligned}
      \left\| \Qdes \thetahattsls - \betaOracleTSLS \right\|
       & \le
      \left\| \Qdes (\Zhat^\top \Mhat \Zhat)^{-1} \Qdestop
      \left( \Mhat \Zhat \Qdestop - \MpeerOracle \ZpeerOracle \right)^\top \Y \right\| \\
       & ~~~~~~+
      \left\| \left[ \Qdes (\Zhat^\top \Mhat \Zhat)^{-1} \Qdestop
        - (\ZpeerOracle^\top \MpeerOracle \ZpeerOracle)^{-1} \right]
      \left( \MpeerOracle \ZpeerOracle \right)^\top \Y \right\| .
    \end{aligned} \end{equation}

  We will control each of these right-hand terms separately.

  \paragraph{Controlling the first term in Equation~\eqref{eq:thetahat2betaoracle}}
  \enspace

  By submultiplicativity of the norm,
  \begin{equation*}
    \left\| \Qdes (\Zhat^\top \Mhat \Zhat)^{-1} \Qdestop
    \left( \Mhat \Zhat \Qdestop - \MpeerOracle \ZpeerOracle \right)^\top \Y
    \right\|
    \le
    \frac{ \left\| \left( \Mhat \Zhat \Qdestop 
		- \MpeerOracle \ZpeerOracle \right)^{\!\top} \Y \right\|}
    {\sigmamin^2( \Mhat \Zhat ) }
    \le
    \frac{C\left\| \left( \Mhat \Zhat \Qdestop 
		- \MpeerOracle \ZpeerOracle \right)^{\!\top} \Y \right\|}
    {\sigmamin^2( \MlatOracle \ZlatOracle ) },
  \end{equation*}
  where the second inequality follows from Weyl's inequality and Lemma~\ref{lem:MhatZhat:forWeyl}.
  Using the growth assumption in Equation~\eqref{eq:assum:Zgrowth}, it follows that
  \begin{equation} \label{eq:peercon:latfit:MZdiffs:Weyl}
    \left\| \Qdes (\Zhat^\top \Mhat \Zhat)^{-1} \Qdestop
    \left( \Mhat \Zhat \Qdestop - \MpeerOracle \ZpeerOracle \right)^\top \Y
    \right\|
    \le
    \frac{ C \left\| \left( \Mhat \Zhat \Qdestop 
		- \MpeerOracle \ZpeerOracle \right)^{\!\top} \Y \right\|}
    { n } .
  \end{equation}

  By the triangle inequality,
  \begin{equation} \label{eq:peercon:latfit:MZdiffs:numer}
    \left\| \left( \Mhat \Zhat \Qdestop 
		- \MpeerOracle \ZpeerOracle \right)^{\!\top} \Y \right\|
    \le
    \left\| \Qdes \Zhat^\top \left(\Mhat-\MpeerOracle\right) \Y \right\|
    + \left\| \left( \Zhat \Qdes -\ZpeerOracle \right)^\top \MpeerOracle \Y \right\|.
  \end{equation}
  Again using the triangle inequality,
  \begin{equation} \label{eq:peercon:latfit:ZhatMdiff:Y:triangle}
    \left\| \Qdes \Zhat^\top \left(\Mhat-\MpeerOracle\right) \Y \right\|
    \le \left\| \left(\Zhat\Qdestop -\ZlatOracle\right)^\top \left(\Mhat-\MpeerOracle\right) \Y \right\|
    + \left\| \ZlatOracle^\top \left(\Mhat-\MpeerOracle\right) \Y \right\|.
  \end{equation}
  Using Lemma~\ref{lem:Y:peercon} to write $\Y = \Ytilde + \zeta$, where $\Ytilde$ and $\zeta$ obey the growth rates in Equation~\eqref{eq:Ytilde} and $\Ytilde$ is independent of $\A - \Apop$ conditional on $\Xpop$,
  \begin{equation*}
    \left\| \ZlatOracle^\top \left(\Mhat-\MpeerOracle\right) \Y \right\|
    \le \left\| \ZlatOracle^\top \left(\Mhat-\MpeerOracle\right) \Ytilde \right\|
    +  \left\| \ZlatOracle^\top \left(\Mhat-\MpeerOracle\right) \zeta \right\| .
  \end{equation*}
  Controlling the first term using Lemma~\ref{lem:MhatMpeer:spectral} followed by the growth assumptions in Equations~\eqref{eq:assum:Hgrowth} and~\eqref{eq:assum:Zgrowth}, and using Lemma~\ref{lem:Y:peercon} to ensure that $\|\Y\| = \Op{ \sqrt{n} }$,
  \begin{equation*}
    \left\| \ZlatOracle^\top \left(\Mhat-\MpeerOracle\right) \Y \right\|
    \le \left\| \ZlatOracle^\top \left(\Mhat-\MpeerOracle\right) \zeta \right\|
    + \op{ \frac{ \| \ZlatOracle \| \| \Ytilde \| }{ \sigmamin(\HlatOracle) } }
    \le \left\| \ZlatOracle^\top \left(\Mhat-\MpeerOracle\right) \zeta \right\|
    + \op{ \sqrt{n} } .
  \end{equation*}
  Applying submultiplicativity to the first term followed by Lemma~\ref{lem:MhatMpeer:spectral}, recalling that
  \begin{equation*}
    \left\| \ZlatOracle^\top \left(\Mhat-\MpeerOracle\right) \Y \right\|
    \le
    \op{ \frac{ \left\| \ZlatOracle \right\| \sqrt{n} }{ \sigmamin(\HlatOracle) } }
    + \op{ \sqrt{n} }
    = \op{ \sqrt{n} } ,
  \end{equation*}
  where the equality follows from the growth assumptions in Equations~\eqref{eq:assum:Zgrowth} and~\eqref{eq:assum:Hgrowth}.

  Using submultiplicativity followed by Lemmas~\ref{lem:ZhatZpeer:spectral} and ~\ref{lem:MhatMpeer:spectral},
  \begin{equation*}
    \left\| \left(\Zhat \Qdestop-\ZlatOracle \right)^\top \left(\Mhat-\MpeerOracle\right) \Y \right\|
    \le \left\| \Zhat \Qdestop-\ZlatOracle \right\|
    \left\| \Mhat-\MpeerOracle \right\| \left\| \Y \right\|
    =
    \op{ \frac{ \sigmamin( \MlatOracle \ZlatOracle ) \| \Y \| }
		{ \sigmamin(\HlatOracle)} }
    = \op{ \sqrt{n} },
  \end{equation*}
  where the last equality follows using Lemma~\ref{lem:Y:peercon} to ensure $\|\Y\|=\Op{\sqrt{n}}$ and the growth assumption in Equation~\eqref{eq:assum:Hgrowth}.
  Applying the above two displays to Equation~\eqref{eq:peercon:latfit:ZhatMdiff:Y:triangle},
  \begin{equation*}
    \left\| \Qdes \Zhat^\top \left(\Mhat-\MpeerOracle\right) \Y \right\|
    = \op{ \sqrt{n} }.
  \end{equation*}
  Applying this to Equation~\eqref{eq:peercon:latfit:MZdiffs:numer},
  \begin{equation} \label{eq:peercon:latfit:MZdiffs:numer:intermezzo}
    \left\| \left( \Mhat \Zhat \Qdestop 
		- \MpeerOracle \ZpeerOracle \right)^{\!\top} \Y \right\|
    \le
    \left\|\left(\Zhat\Qdestop-\ZpeerOracle\right)^\top\MpeerOracle\Y\right\|
    + \op{ \sqrt{n} } .
  \end{equation}

  Applying the triangle inequality,
  \begin{equation*}
    \left\| \left( \Zhat\Qdestop-\ZpeerOracle \right)^\top 
		\MpeerOracle \Y \right\|
    \le
    \left\| \left( \Zhat\Qdestop-\ZpeerOracle \right)^\top 
		\MlatOracle \Y \right\|
    +
    \left\| \left( \Zhat\Qdestop-\ZpeerOracle \right)^\top
    		\left( \MpeerOracle - \MlatOracle \right) \Y \right\| .
  \end{equation*}
  Applying submultiplicativity to the second term followed by Lemmas~\ref{lem:ZhatZpeer:spectral},~\ref{lem:MpeerMlat:spectral} and using Lemma~\ref{lem:Y:peercon} to ensure $\| \Y \| = \Op{\sqrt{n}}$,
  \begin{equation*}
    \left\| \left( \Zhat\Qdestop-\ZpeerOracle \right)^\top \MpeerOracle \Y \right\|
    \le
    \left\| \left( \Zhat\Qdestop-\ZpeerOracle \right)^\top \MlatOracle \Y \right\|
    + \op{ \frac{ \sqrt{n} ~\| \MlatOracle \ZlatOracle \| }
      { \sigmamin( \HlatOracle ) } }
    =
    \left\| \left( \Zhat\Qdestop-\ZpeerOracle \right)^\top \MlatOracle \Y \right\|
    + \op{ \sqrt{n} },
  \end{equation*}
  where the second equality follows from our assumptions in Equations~\eqref{eq:assum:Zgrowth} and~\eqref{eq:assum:Hgrowth}.
  Applying this to Equation~\eqref{eq:peercon:latfit:MZdiffs:numer:intermezzo},
  \begin{equation} \label{eq:peercon:latfit:MZdiffs:numer:intermezzo:2}
    \left\| \left( \Mhat \Zhat \Qdestop - \MpeerOracle \ZpeerOracle \right)^{\!\top}
    \Y \right\|
    \le
    \left\| \left( \Zhat \Qdestop-\ZpeerOracle \right)^\top \MlatOracle \Y \right\|
    + \op{ \sqrt{n} } .
  \end{equation}

  Recalling the structure of $\ZpeerOracle$ from Equation~\eqref{eq:def:ZpeerOracle} and applying the triangle inequality,
  \begin{equation} \label{eq:peercon:latfit:Zdiffs:triangle}
    \left\| (\Zhat\Qdestop - \ZpeerOracle )^\top \MlatOracle \Y \right\|
    \le
    \| (\Xhat\Q^\top-\Xpop)^\top \MlatOracle \Y \|
    + \left| \Y^\top \MlatOracle (\Ghat - \G) \Y \right| .
  \end{equation}

  Recalling the definition of $\MlatOracle$ from Equation~\eqref{eq:def:MlatOracle}, and applying submultiplicativity,
  \begin{equation*}
    \| (\Xhat\Q^\top-\Xpop)^\top \MlatOracle \Y \|
    \le \frac{ \left\| (\Xhat\Q^\top-\Xpop)^\top \HlatOracle \right\|
      \left\| \HlatOracle \right\| \left\| \Y \right\| }
    { \sigmamin^2( \HlatOracle ) }
    \le C \left\| (\Xhat\Q^\top-\Xpop)^\top \HlatOracle \right\|,
  \end{equation*}
  where we have used our growth assumption in Equation~\eqref{eq:assum:Hgrowth} and used Lemma~\ref{lem:Y:peercon} to bound $\|\Y\| = \Op{ \sqrt{n} }$.
  Applying Lemma~\ref{lem:XhatXTB:ctrl} with $\mB = \HlatOracle$ and using our growth assumptions in Equations~\eqref{eq:assum:spectralConc},~\eqref{eq:assum:XandW:spectrumUB} and~\eqref{eq:assum:XhatX:rate:2},
  \begin{equation} \label{eq:peercon:latfit:Zdiffs:Xcomponent}
    \| (\Xhat\Q^\top-\Xpop)^\top \MlatOracle \Y \|
    = \op{ \left\| \HlatOracle \right\| }
    = \op{ \sqrt{n} } ,
  \end{equation}
  where the second equality follows from our assumptions in Equation~\eqref{eq:assum:Hgrowth}.
  Applying this to Equation~\eqref{eq:peercon:latfit:Zdiffs:triangle},
  \begin{equation*}
    \left\| (\Zhat \Qdestop - \ZpeerOracle )^\top \MlatOracle \Y \right\|
    \le
    \left| \Y^\top \MlatOracle (\Ghat - \G) \Y \right|
    + \op{ \sqrt{n} } .
  \end{equation*}
  Applying this to Equation~\eqref{eq:peercon:latfit:MZdiffs:numer:intermezzo:2},
  \begin{equation} \label{eq:peercon:latfit:MZdiffs:numer:justYGYleft}
    \left\| \left( \Mhat \Zhat \Qdestop 
		- \MpeerOracle \ZpeerOracle \right)^{\!\top} \Y \right\|
    \le
    \left| \Y^\top \MlatOracle (\Ghat - \G) \Y \right|
    + \op{ \sqrt{n} } .
  \end{equation}

  Recalling the definition of $\MlatOracle$ from Equation~\eqref{eq:def:MlatOracle} and applying submultiplicativity,
  \begin{equation*}
    \left| \Y^\top \MlatOracle (\Ghat - \G) \Y \right|
    \le \frac{ \| \Y \| \kappa( \HlatOracle ) }{ \sigmamin(\HlatOracle) }
    \left\| \HlatOracle^\top (\Ghat - \G) \Y \right\|
    \le C \left\| \HlatOracle^\top (\Ghat - \G) \Y \right\|,
  \end{equation*}
  where the second inequality follows from our assumptions in Equation~\eqref{eq:assum:Hgrowth} and using Lemma~\ref{lem:Y:peercon} to ensure $\| \Y \| = \Op{\sqrt{n} }$.
  Applying Lemma~\ref{lem:Y:peerquad} with $\vu = \HlatOracle$ and using our growth assumption in Equation~\eqref{eq:assum:Hgrowth},
  \begin{equation*}
    \left| \Y^\top \MlatOracle (\Ghat - \G) \Y \right|
    = \op{ \| \HlatOracle \| }
    = \op{ \sqrt{n} } .
  \end{equation*}
  Applying this to Equation~\eqref{eq:peercon:latfit:MZdiffs:numer:justYGYleft}
  and applying the resulting bound to Equation~\eqref{eq:peercon:latfit:MZdiffs:Weyl},
  \begin{equation} \label{eq:peercon:latfit:MZdiffs:done}
    \left\| \Qdes (\Zhat^\top \Mhat \Zhat)^{-1} \Qdestop
    \left( \Mhat \Zhat \Qdestop - \MpeerOracle \ZpeerOracle \right)^\top \Y
    \right\|
    = \op{ \frac{1}{\sqrt{n}} } .
  \end{equation}

  \paragraph{Controlling the second term in Equation~\eqref{eq:thetahat2betaoracle}}
  \enspace

  By Lemmas~\ref{lem:MhatZhat:forWeyl} and~\ref{lem:MpeerZpeer:forWeyl}  and our growth assumption in Equation~\eqref{eq:assum:Zgrowth}, both $\Zhat^\top \Mhat \Zhat$ and $\ZpeerOracle^\top \MpeerOracle \ZpeerOracle$ are invertible with high probability for all $n$ suitably large.
  Thus, factoring appropriately, applying submultiplicativity and recalling that $\Mhat$ and $\MpeerOracle$ are projections,
  \begin{equation*}
    \left\| \left[ \Qdes (\Zhat^\top \Mhat \Zhat)^{-1} \Qdestop
      - (\ZpeerOracle^\top \MpeerOracle \ZpeerOracle)^{-1} \right]
    \left( \MpeerOracle \ZpeerOracle \right)^\top \Y \right\|
    \le
    \frac{ \left\| \Qdes \Zhat^\top \Mhat \Zhat \Qdestop
      - \ZpeerOracle^\top \MpeerOracle \ZpeerOracle \right\|
      \left\| \ZpeerOracle \right\| \left\| \Y \right\| }
    { \sigmamin^2( \Mhat \Zhat ) \sigmamin^2( \MpeerOracle \ZpeerOracle ) } .
  \end{equation*}
  Using Lemmas~\ref{lem:MhatZhat:forWeyl} and~\ref{lem:MpeerZpeer:forWeyl} again along with Weyl's inequality, we can ensure that
  \begin{equation*}
    \sigmamin( \Mhat \Zhat ) \sigmamin(\MpeerOracle\ZpeerOracle)
    = \Omegap{ \sigmamin^2( \MlatOracle \ZlatOracle ) }.
  \end{equation*}
  Our growth assumptions in Equations~\eqref{eq:assum:Zgrowth} and~\eqref{eq:assum:Zgrowth}, as well as Lemma~\ref{lem:Y:peercon} then yield
  \begin{equation} \label{eq:peercon:latfit:Zcovs:start}
    \left\| \left[ \Qdes (\Zhat^\top \Mhat \Zhat)^{-1} \Qdestop
      - (\ZpeerOracle^\top \MpeerOracle \ZpeerOracle)^{-1} \right]
    \left( \MpeerOracle \ZpeerOracle \right)^\top \Y \right\|
    \le
    \frac{ C \left\| \Qdes \Zhat^\top \Mhat \Zhat \Qdestop
      - \ZpeerOracle^\top \MpeerOracle \ZpeerOracle \right\| }
    { n } .
  \end{equation}

  Adding and subtracting appropriate quantities and applying the triangle inequality,
  \begin{equation} \label{eq:peercon:latfit:Zcovs:numertri} \begin{aligned}
    \left\| \Qdes \Zhat^\top \Mhat \Zhat \Qdestop
    - \ZpeerOracle^\top \MpeerOracle \ZpeerOracle \right\|
    &\le
    2\left\| (\Zhat \Qdestop-\ZpeerOracle)^\top \MpeerOracle \ZpeerOracle \right\|
    +
    \left\| \Zhat^\top ( \Mhat-\MpeerOracle ) \Zhat \right\| \\
    &~~~~~~+
    \left\| (\Zhat \Qdestop-\ZpeerOracle)^\top \MpeerOracle (\Zhat\Qdestop-\ZpeerOracle) \right\| .
  \end{aligned} \end{equation}

  Applying the triangle inequality followed by submultiplicativity,
  \begin{equation*} \begin{aligned}
    \left\| (\Zhat\Qdestop-\ZpeerOracle)^\top \MpeerOracle \ZpeerOracle \right\|
    &\le \left\| (\Zhat\Qdestop-\ZpeerOracle)^\top \MlatOracle \ZpeerOracle \right\|
    + \left\| \Zhat\Qdestop-\ZpeerOracle \right\| \left\| \MpeerOracle - \MlatOracle \right\|
    \left\| \ZpeerOracle \right\| \\
    &= \left\| (\Zhat\Qdestop-\ZpeerOracle)^\top \MlatOracle \ZpeerOracle \right\|
    + \op{ \frac{ \| \MlatOracle \ZlatOracle \| \left\| \ZpeerOracle \right\| }{ \sigmamin(\HlatOracle) } }.
  \end{aligned} \end{equation*}
  Controlling $\| \ZpeerOracle \|$ with Lemma~\ref{lem:ZpeerZlat:spectral} and  Weyl's inequality and using using our growth assumptions in Equations~\eqref{eq:assum:Zgrowth} and~\eqref{eq:assum:Hgrowth},
  \begin{equation*}
    \left\| (\Zhat\Qdestop-\ZpeerOracle)^\top \MpeerOracle \ZpeerOracle \right\|
    \le
    \left\| (\Zhat\Qdestop-\ZpeerOracle)^\top \MlatOracle \ZpeerOracle \right\|
    + \op{ \sqrt{n} } .
  \end{equation*}

  Decomposing $\Zhat-\ZpeerOracle$ as in Equation~\eqref{eq:peercon:latfit:Zdiffs:triangle},
  \begin{equation} \label{eq:peercon:latfit:Zcovs:linearTerm:prelim}
    \left\| (\Zhat\Qdestop-\ZpeerOracle)^\top \MpeerOracle \ZpeerOracle \right\|
    \le
    \left\| (\Xhat\Q^\top-\Xpop)^\top \MlatOracle \ZpeerOracle \right\|
    + \left\| \ZpeerOracle^\top \MlatOracle (\Ghat-\G) \Y \right\| + \op{ \sqrt{n} } .
  \end{equation}
  Recalling the definition of $\MlatOracle$ from Equation~\eqref{eq:def:MlatOracle} and using submultiplicativity,
  \begin{equation*}
    \left\| (\Xhat\Q^\top-\Xpop)^\top \MlatOracle \ZpeerOracle \right\|
    \le \frac{ \kappa( \HlatOracle ) \left\| (\Xhat-\Xpop)^\top \HlatOracle \right\| \| \ZpeerOracle \| }
    { \sigmamin( \HlatOracle ) }
    = \op{ \| \ZpeerOracle \| },
  \end{equation*}
  where we have used Lemma~\ref{lem:XhatXTB:ctrl} with $\mB = \HlatOracle$ along with the growth rates in Equations~\eqref{eq:assum:spectralConc}, \eqref{eq:assum:XandW:spectrumUB}, \eqref{eq:assum:XhatX:rate:2}, and~\eqref{eq:assum:Hgrowth}.
  Applying Lemma~\ref{lem:ZpeerZlat:spectral}, Weyl's inequality and our growth assumption in Equation~\eqref{eq:assum:Zgrowth},
  \begin{equation*}
    \left\| (\Xhat\Q^\top-\Xpop)^\top \MlatOracle \ZpeerOracle \right\|
    = \op{ \sqrt{n} } .
  \end{equation*}
  Applying this to Equation~\eqref{eq:peercon:latfit:Zcovs:linearTerm:prelim},
  \begin{equation*}
    \left\| (\Zhat\Qdestop-\ZpeerOracle)^\top \MpeerOracle \ZpeerOracle \right\|
    \le
    \left\| \ZpeerOracle^\top \MlatOracle (\Ghat-\G) \Y \right\| + \op{ \sqrt{n} } .
  \end{equation*}
  Again using the definition of $\MlatOracle$ and submultiplicativity,
  \begin{equation*}
    \left\| (\Zhat\Qdestop-\ZpeerOracle)^\top \MpeerOracle \ZpeerOracle \right\|
    \le
    \frac{ \left\| \ZpeerOracle \right\| \kappa( \HlatOracle ) }{ \sigmamin(\HlatOracle) }
    \left\| \HlatOracle^\top (\Ghat-\G) \Y \right\| + \op{ \sqrt{n} }
    \le C  \left\| \HlatOracle^\top (\Ghat-\G) \Y \right\| + \op{ \sqrt{n} } ,
  \end{equation*}
  Applying Lemma~\ref{lem:Y:peerquad} with $\vu$ equal to each of the columns of $\HlatOracle$ and using our growth assumption in Equation~\eqref{eq:assum:Hgrowth},
  \begin{equation} \label{eq:peercon:latfit:Zcovs:linearTerm:done}
    \left\| (\Zhat\Qdestop-\ZpeerOracle)^\top \MpeerOracle \ZpeerOracle \right\|
    = \op{ \sqrt{n} } .
  \end{equation}

  By submultiplicativity,
  \begin{equation*}
    \left\| \Qdes \Zhat^\top ( \Mhat-\MpeerOracle ) \Zhat \Qdestop \right\|
    \le
    \left\| \Zhat \Qdestop \right\|^2 \left\| \Mhat-\MpeerOracle \right\|
    \le C \left\| \ZlatOracle \right\|^2 \left\| \Mhat-\MpeerOracle \right\|,
  \end{equation*}
  where the second inequality follows from Lemma~\ref{lem:ZhatZlat:spectral}, Weyl's inequality, and the fact that $\sigmamin(\MlatOracle \ZlatOracle) \le \sigmamin( \ZlatOracle )$, since $\MlatOracle$ is a projection.
  Applying Lemma~\ref{lem:MhatMpeer:spectral} and our growth assumptions in Equations~\eqref{eq:assum:Hgrowth} and~\eqref{eq:assum:Zgrowth},
  \begin{equation} \label{eq:peercon:latfit:Zcovs:Mdiffs:done}
    \left\| \Qdes \Zhat^\top ( \Mhat-\MpeerOracle ) \Zhat \Qdestop \right\|
    = \op{ \sqrt{n} } .
  \end{equation}

  Using idempotence of $\MpeerOracle$,
  \begin{equation} \label{eq:peercon:latfit:Zcovs:quad:start}
    \left\| (\Zhat\Qdestop-\ZpeerOracle)^\top \MpeerOracle (\Zhat\Qdestop-\ZpeerOracle) \right\|
    = \left\| \MpeerOracle (\Zhat\Qdestop-\ZpeerOracle) \right\|^2 .
  \end{equation}
  Applying the triangle inequality and submultiplicativity,
  \begin{equation*}
    \left\| \MpeerOracle (\Zhat\Qdestop-\ZpeerOracle) \right\|
    \le \left\| \MlatOracle \left( \Zhat\Qdestop - \ZpeerOracle \right) \right\|
    + \left\| \MpeerOracle-\MlatOracle \right\|
    \left\| \Zhat\Qdestop - \ZpeerOracle \right\|
    \le \left\| \MlatOracle \left( \Zhat\Qdestop - \ZpeerOracle \right) \right\|
    + \op{ 1 },
  \end{equation*}
  where the second inequality follows from Lemmas~\ref{lem:MpeerMlat:spectral} and~\ref{lem:ZhatZpeer:spectral} along with our growth assumptions in Equations~\eqref{eq:assum:Hgrowth} and~\eqref{eq:assum:Zgrowth}.
  Recalling the definition of $\MlatOracle$ from Equation~\eqref{eq:def:MlatOracle} and applying submultiplicativity,
  \begin{equation*}
    \left\| \MpeerOracle (\Zhat\Qdestop-\ZpeerOracle) \right\|
    \le
    \frac{ \kappa(\HlatOracle)
      \left\| \HlatOracle^\top (\Zhat\Qdestop-\ZpeerOracle) \right\| }
    { \sigmamin( \HpeerOracle ) }
    + \op{ 1 } .
  \end{equation*}
  An argument parallel to that leading up to Equation~\eqref{eq:peercon:latfit:Zcovs:linearTerm:done} yields
  \begin{equation*}
    \left\| \HlatOracle^\top (\Zhat\Qdestop-\ZpeerOracle) \right\|
    = \op{ \sqrt{n} },
  \end{equation*}
  from which our growth assumption in Equation~\eqref{eq:assum:Hgrowth} yield
  \begin{equation*}
    \left\| \MpeerOracle (\Zhat\Qdestop-\ZpeerOracle) \right\| = \op{ 1 } .
  \end{equation*}
  Applying this to Equation~\eqref{eq:peercon:latfit:Zcovs:quad:start},
  \begin{equation} \label{eq:peercon:latfit:Zcovs:quad:done}
    \left\| (\Zhat\Qdestop-\ZpeerOracle)^\top \MpeerOracle 
		(\Zhat\Qdestop-\ZpeerOracle) \right\|
    = \op{ 1 } .
  \end{equation}

  Applying Equations~\eqref{eq:peercon:latfit:Zcovs:linearTerm:done},
  ~\eqref{eq:peercon:latfit:Zcovs:Mdiffs:done}
  and~\eqref{eq:peercon:latfit:Zcovs:quad:done}
  to Equation~\eqref{eq:peercon:latfit:Zcovs:numertri},
  \begin{equation*}
    \left\| \Qdes \Zcheck^\top \Mcheck \Zcheck \Qdestop
    - \ZpeerOracle^\top \MpeerOracle \ZpeerOracle \right\|
    = \op{ \sqrt{n} } .
  \end{equation*}
  Applying this to Equation~\eqref{eq:peercon:latfit:Zcovs:start},
  \begin{equation} \label{eq:peercon:latfit:Zcovs:hat:done}
    \left\| \left[ \Qdes (\Zhat^\top \Mhat \Zhat)^{-1} \Qdestop
      - (\ZpeerOracle^\top \MpeerOracle \ZpeerOracle)^{-1} \right]
    \left( \MpeerOracle \ZpeerOracle \right)^\top \Y \right\|
    = \op{ \frac{1}{\sqrt{n}} }.
  \end{equation}
  Applying this and Equation~\eqref{eq:peercon:latfit:MZdiffs:done} to Equation~\eqref{eq:thetahat2betaoracle} and multiplying through by $\sqrt{n}$ completes the proof.
\end{proof}

\section{Projection Equivalence} \label{apx:equivalence}

Here, we prove Theorem~\ref{thm:proj-equivalence}, which concerns asymptotic equivalence of the projection parameters defined in Equation~\eqref{eq:def:projectionParams}.
We handle the two different model settings of Theorem~\ref{thm:proj-equivalence} in two lemmas below.
The latent contagion result, in which the responses are generated as in Equation~\eqref{eq:lim-latent}, is established in Lemma~\ref{lem:proj-equiv:latent}.
The peer contagion result, in which the responses are generated as in Equation~\eqref{eq:lim-peer}, is given in Lemma~\ref{lem:proj-equiv:peer}.
We note that our results on (approximate) projection equivalence do not require the same assumptions as our first four Theorems, and instead hold under slightly weaker assumptions.
For example, instead of the subgamma edge behavior of Definition~\ref{def:subgamma-network}, we require only that the edge noise has bounded second moment.

\begin{lemma} \label{lem:proj-equiv:GGtilde:quaddiff:vbounded:nonasy}
Suppose that $\vv \in \R^n$ is such that $\mA-\mApop$ is independent of $\vv$ conditional on $\Xpop$.
Then
\begin{equation*}
\bbE \left\| (\G-\Gtilde) \vv \right\|^2
\le 
C n \nu_n \sum_{i=1}^n 
	\bbE \frac{ \left\| \vv \right\|_\infty^2 }{ \dtilde_i^2 }.
\end{equation*}
\end{lemma}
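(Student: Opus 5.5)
The plan is to work row by row, conditioning on $(\Xpop,\vv)$, and to split the row-$i$ error into a degree-normalization part and a pure edge-noise part. Write $\mE=\A-\Apop$. For any row $i$ with $d_i>0$, adding and subtracting $\dtilde_i^{-1}\sum_j \A_{ij}v_j$ gives
\begin{equation*}
\big[(\G-\Gtilde)\vv\big]_i
= \Big(\sum_j \A_{ij} v_j\Big)\Big(\frac{1}{d_i}-\frac{1}{\dtilde_i}\Big)
+ \frac{1}{\dtilde_i}\sum_j \mE_{ij} v_j .
\end{equation*}
The key observation for the first term is that $\A$ is nonnegative, so $|\sum_j \A_{ij}v_j|\le d_i\|\vv\|_\infty$. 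This cancels the random $d_i$ in the denominator:
\begin{equation*}
\Big|\Big(\sum_j \A_{ij} v_j\Big)\Big(\frac{1}{d_i}-\frac{1}{\dtilde_i}\Big)\Big|
\le \|\vv\|_\infty \frac{|d_i-\dtilde_i|}{\dtilde_i}.
\end{equation*}
We therefore never need to control $1/d_i$ itself, which is what lets the lemma hold in expectation under only second-moment assumptions. If $d_i=0$, then $\G_{i\cdot}=\symbf 0$ and the row error is $-[\Gtilde\vv]_i$. Since $\Gtilde$ is row-stochastic, this has absolute value at most $\|\vv\|_\infty=\|\vv\|_\infty|d_i-\dtilde_i|/\dtilde_i$, so the same bound applies with a vanishing noise term. (If $\Apop$ has a nonzero diagonal, one adjusts for the $j\neq i$ convention in $d_i$ and absorbs it into $C$.)

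Next, use $(a+b)^2\le 2a^2+2b^2$ and take conditional expectations given $(\Xpop,\vv)$. Since $\mE$ is conditionally independent of $\vv$ given $\Xpop$, and $\dtilde_i$ is $\Xpop$-measurable, Assumption~\ref{assum:equiv:edges} gives two bounds:
\begin{equation*}
\bbE\big[(d_i-\dtilde_i)^2\mid \Xpop,\vv\big]=\sum_j \bbE[\mE_{ij}^2\mid\Xpop]\le n\nu_n,
\end{equation*}
\begin{equation*}
\bbE\Big[\Big(\sum_j \mE_{ij}v_j\Big)^2\,\Big|\,\Xpop,\vv\Big]\le \nu_n\|\vv\|^2\le n\nu_n\|\vv\|_\infty^2 .
\end{equation*}
Both rely on conditional independence and mean zero of the entries within row $i$, so cross terms vanish. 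Combining,
\begin{equation*}
\bbE\big[\big((\G-\Gtilde)\vv\big)_i^2\mid\Xpop,\vv\big]\le \frac{4n\nu_n\|\vv\|_\infty^2}{\dtilde_i^2}.
\end{equation*}

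Finally, sum over $i\in[n]$ and apply the tower property to remove the conditioning. This gives the claim with $C=4$.

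The main obstacle is the randomness of $d_i$ in the denominator, together with possible isolated nodes. The nonnegativity bound above is the step I would rely on to handle both. The remaining steps are routine second-moment calculations.
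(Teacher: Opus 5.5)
Your proof is correct and is essentially the paper's argument. The paper splits $(\G-\Gtilde)\vv$ into $\Dtilde^{-1}(\A-\Apop)\vv$ and $(\D^{-1}-\Dtilde^{-1})\A\vv=\Dtilde^{-1}(\Dtilde-\D)\G\vv$, uses that $\G$ is a transition matrix (equivalent to your nonnegativity bound $|\sum_j \A_{ij}v_j|\le d_i\|\vv\|_\infty$) to remove the random $1/d_i$, and then applies the same conditional second-moment computations together with $\|\vv\|^2\le n\|\vv\|_\infty^2$; your explicit treatment of isolated nodes and of conditioning on $(\Xpop,\vv)$ is, if anything, slightly more careful than the paper's write-up.
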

\begin{proof}
Applying the triangle inequality,
\begin{equation} \label{eq:proj-equiv:GGtilde:quaddiff:tri}
\bbE \left\| \left( \mG - \Gtilde \right) \vv \right\|^2
\le C \bbE \left\|\mDtilde^{-1} \left(\mA-\mApop\right) \vv \right\|^2
+ C \bbE \left\|\left(\D^{-1} - \mDtilde^{-1}\right)\mA \vv \right\|^2 .
\end{equation}
Expanding the norm,
\begin{equation*}
\bbE \left\| \mDtilde^{-1} \! \left(\mA\!-\!\mApop\right) \vv \right\|^2
= \sum_{i=1}^n 
\bbE \left[ \frac{1}{\dtilde_i}\sum_{j=1}^n (\mA\!-\!\mApop)_{ij} v_j \right]^2
= \sum_{i=1}^n \bbE \frac{1}{\dtilde_i^2}
		\left[ \sum_{j=1}^n (\mA\!-\!\mApop)_{ij} v_j \right]^2 .
\end{equation*}
Since the entries of $\mA-\mApop$ are independent (up to symmetry) and mean zero conditional on the latent positions, expanding the square yields
\begin{equation} \label{eq:proj-equiv:GGtilde:quaddiff:term1:done}
\bbE \left\| \mDtilde^{-1} \left(\mA-\mApop\right) \vv \right\|^2
= \sum_{i=1}^n \sum_{j=1}^n
		\bbE \frac{1}{\dtilde_i^2} (\mA\!-\!\mApop)_{ij}^2 v_j^2 
\le C \nu_n \sum_{i=1}^n \sum_{j=1}^n
	\bbE \frac{ v_j^2 }{\dtilde_i^2}
= C \nu_n \bbE \| \vv \|^2 \sum_{i=1}^n \frac{1}{\dtilde_i^2} .
\end{equation}

Factorizing and using the fact that $\mG=\D^{-1}\mA$ is a transition matrix,
\begin{equation*} \begin{aligned}
\bbE \left\| \left( \D^{-1} - \mDtilde^{-1}\right)\mA \vv \right\|^2
&= \bbE \left\| \mDtilde^{-1}\left(\mDtilde-\D\right) \mG \vv \right\|^2
= \sum_{i=1}^n \bbE \left( \frac{d_i - \dtilde_i}{\dtilde_i} \right)^2
			\left( \mG \vv \right)_i^2 \\
&\le C \sum_{i=1}^n \bbE \left( \frac{d_i - \dtilde_i}{\dtilde_i} \right)^2
			\left\| \vv \right\|_\infty^2 .
		\end{aligned} \end{equation*}
Using conditional independence of the entries of $\mA-\mApop$ along with Equation~\eqref{eq:assum:edgeVariance},
\begin{equation*}
\bbE \left\| \left( \D^{-1} - \mDtilde^{-1}\right)\mA \vv \right\|^2
\le C n \nu_n \sum_{i=1}^n \bbE \frac{ \left\| \vv \right\|_\infty^2 }
			{\dtilde_i^2}.
\end{equation*}
Applying this and Equation~\eqref{eq:proj-equiv:GGtilde:quaddiff:term1:done} to Equation~\eqref{eq:proj-equiv:GGtilde:quaddiff:tri} and using the trivial upper bound $\| \vv \|^2 \le n \| \vv \|_\infty^2$ completes the proof.
\end{proof}

\begin{lemma} \label{lem:proj-equiv:GGtilde:quaddiff:meanzero:nonasy}
Suppose that $\vv \in \R^n$ has uncorrelated, mean-zero entries with second moments bounded by $\bbE v_i^2 \le C \sigmavv^2$, and that $\mA-\mApop$ is independent of $\vv$ conditional on $\Xpop$.
Then
\begin{equation*}
\bbE \left\| (\G-\Gtilde) \vv \right\|^2
\le C \sigmavv^2 n \nu_n \sum_{i=1}^n \bbE \frac{1}{\dtilde_i^2} .
\end{equation*}
\end{lemma}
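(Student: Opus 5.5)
Mirroring the proof of Lemma~\ref{lem:proj-equiv:GGtilde:quaddiff:vbounded:nonasy}, the first step is the triangle-inequality split of Equation~\eqref{eq:proj-equiv:GGtilde:quaddiff:tri}:
\begin{equation*}
\bbE \left\| (\G-\Gtilde) \vv \right\|^2
\le C \bbE \left\|\mDtilde^{-1} (\mA-\mApop) \vv \right\|^2
+ C \bbE \left\|(\D^{-1} - \mDtilde^{-1})\mA \vv \right\|^2 .
\end{equation*}
The difference from the earlier lemma is that $\| \vv \|_\infty$ is no longer controlled. Instead, every term will be handled by conditioning on $\Xpop$ and $\mA$ and exploiting that the entries of $\vv$ are mean zero and uncorrelated. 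This turns any quadratic form $\bbE[(\mathbf{a}^\top \vv)^2 \mid \mathbf{a}]$ into $\sum_j a_j^2 \bbE v_j^2 \le C\sigmavv^2 \|\mathbf{a}\|^2$, where $\mathbf{a}$ is a vector measurable with respect to $(\Xpop,\mA)$. Independence of $\vv$ from $\mA-\mApop$ given $\Xpop$ is what makes this conditioning legitimate.

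For the first term, expand row by row as in Equation~\eqref{eq:proj-equiv:GGtilde:quaddiff:term1:done}, conditioning on $\vv$ and $\Xpop$, to get
\begin{equation*}
\sum_i \bbE\, \dtilde_i^{-2}\sum_j (\mA-\mApop)_{ij}^2 v_j^2 \le \nu_n \sum_i \bbE\, \dtilde_i^{-2} \|\vv\|^2 .
\end{equation*}
Conditionally on $\Xpop$, $\bbE\|\vv\|^2 \le C n \sigmavv^2$, and $\dtilde_i$ is $\Xpop$-measurable, so this term is at most $C\sigmavv^2 n\nu_n\sum_i \bbE \dtilde_i^{-2}$. Here I am assuming $\vv$'s moment bound holds conditionally on $\Xpop$, or that $\vv$ is independent of $\Xpop$; I will state this interpretation explicitly.

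For the second term, write the $i$-th entry as $\frac{\dtilde_i - d_i}{\dtilde_i}(\G\vv)_i$. Conditioning on $(\Xpop,\mA)$, $(\G\vv)_i=\sum_j \G_{ij}v_j$, so
\begin{equation*}
\bbE[(\G\vv)_i^2\mid \mA,\Xpop] \le C\sigmavv^2\sum_j \G_{ij}^2 \le C\sigmavv^2 .
\end{equation*}
The last inequality holds because the $i$-th row of $\G$ is a probability vector, so its squared entries sum to at most one (and the row is zero for isolated nodes). It then remains to bound $\sum_i \bbE[(d_i-\dtilde_i)^2/\dtilde_i^2]$. Conditioning on $\Xpop$ and using conditional independence of the entries in row $i$ together with Equation~\eqref{eq:assum:edgeVariance} gives $\bbE[(d_i-\dtilde_i)^2\mid\Xpop]\le n\nu_n$, so this term is at most $C\sigmavv^2 n\nu_n \sum_i \bbE\dtilde_i^{-2}$. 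Summing the two bounds yields the claim.

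The main obstacle is the order of conditioning in the second term. The factor $(d_i-\dtilde_i)^2$ and $\G$ both depend on $\mA$, so the $\vv$-expectation must be taken first, conditional on $\mA$ and $\Xpop$, which requires that $\vv$ be independent of $\mA$ (not merely of $\mA-\mApop$) given $\Xpop$. This is equivalent under the stated hypothesis, since $\mApop$ is $\Xpop$-measurable. Only after that is the $\mA$-expectation of $(d_i-\dtilde_i)^2$ taken. I will spell out this tower-property step carefully. The remaining steps are routine variance computations.
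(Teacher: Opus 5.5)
Your proof is correct and uses the same decomposition $\G-\Gtilde=\Dtilde^{-1}(\A-\Apop)+(\D^{-1}-\Dtilde^{-1})\A$ and the same two estimates as the paper: the edge-variance bound $\nu_n$, and $\sum_j G_{ij}^2\le 1$ combined with $\bbE[(d_i-\dtilde_i)^2\mid\Xpop]\le n\nu_n$. The only difference is the order of steps: the paper first averages over $\vv$ to reduce to $C\sigmavv^2\,\bbE\|\G-\Gtilde\|_F^2$ and then splits, whereas you split first and average over $\vv$ inside each term; your reading of the hypotheses on $\vv$ as holding conditionally on $\Xpop$ (hence on $(\Xpop,\A)$) is exactly what the paper's first step uses implicitly.
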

\begin{proof}
Expanding the square and using the fact that $\vv$ has mean-zero uncorrelated entries,
\begin{equation*} \begin{aligned}
\bbE \left\| (\G-\Gtilde) \vv \right\|^2
&=
\bbE \vv^\top (\G-\Gtilde)^\top (\G-\Gtilde) \vv
= \sum_{i=1}^n 
	\bbE \left[ (\G-\Gtilde)^\top (\G-\Gtilde) \right]_{ii} v_i^2 \\
&\le C \sigmavv^2 \sum_{i=1}^n \sum_{j=1}^n \bbE (\G-\Gtilde)_{ji}^2 ,
\end{aligned} \end{equation*}
where the inequality follows from our assumption that the entries of $\vv$ have bounded second moments.
It follows that, by the triangle inequality,
\begin{equation} \label{eq:proj-equiv:latcon:Gdiffeps:tri}
\bbE \left\| (\G-\Gtilde) \vv \right\|^2
\le 
C \sigmavv^2
\left[
\bbE \left\| \mDtilde^{-1} \left( \mA - \mApop \right) \right\|_F^2
        + 
\bbE \left\| \left(\D^{-1}-\mDtilde^{-1} \right) \mA \right\|_F^2
\right] .
\end{equation}

Expanding the norm, using conditional independence of the edges and Equation~\eqref{eq:assum:edgeVariance},
\begin{equation} \label{eq:proj-equiv:latcon:Gdiffeps:tri:term1:done}
\bbE \left\| \mDtilde^{-1} \left( \mA - \mApop \right) \right\|_F^2
= \sum_{i=1}^n \sum_{j=1}^n \bbE \frac{1}{\dtilde_i^2}
        \bbE\left[ \left( \mA - \mApop \right)_{i,j}^2 \mid \Xpop \right]
\le C n \nu_n \sum_{i=1}^n \bbE \frac{1}{\dtilde_i^2} .
\end{equation}

Using the fact that the rows of $\G = \D^{-1} \A$ sum to $1$ and all entries are between $0$ and $1$,
\begin{equation*}
\bbE \left\| \left(\D^{-1}-\mDtilde^{-1} \right) \mA \right\|_F^2
= \bbE \left\| \left(\mI-\mD \mDtilde^{-1} \right) \mG \right\|_F^2
= \sum_{i=1}^n \sum_{j=1}^n
\bbE \frac{(\dtilde_i-d_i)^2}{\dtilde_i^2} G_{ij}^2 
\le \sum_{i=1}^n \bbE \frac{(\dtilde_i-d_i)^2}{\dtilde_i^2} .
\end{equation*}
By conditional independence of the edges along with Equation~\eqref{eq:assum:edgeVariance},
\begin{equation*}
        \bbE \left\| \left(\D^{-1}-\mDtilde^{-1} \right) \mA \right\|_F^2
        \le C n \nu_n \sum_{i=1}^n \bbE \frac{1}{\dtilde_i^2} .
\end{equation*}
        Applying this and Equation~\eqref{eq:proj-equiv:latcon:Gdiffeps:tri:term1:done} to Equation~\eqref{eq:proj-equiv:latcon:Gdiffeps:tri} completes the proof.
\end{proof}

\begin{lemma} \label{lem:proj-equiv:GGtilde:quaddiff:transpose:nonasy}
Suppose that $\vv \in \R^n$ is such that $\mA-\mApop$ is independent of $\vv$ conditional on $\Xpop$.
Then
\begin{equation*}
\bbE \left\| (\G-\Gtilde)^\top \vv \right\|^2
\le 
C n \nu_n \sum_{i=1}^n \bbE \frac{ v_i^2 }{ \dtilde_i^2 }.
\end{equation*}
\end{lemma}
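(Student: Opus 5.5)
I would mirror the proof of Lemma~\ref{lem:proj-equiv:GGtilde:quaddiff:vbounded:nonasy}, but now the operator acts on the left rather than the right. Writing $\G - \Gtilde = \Dtilde^{-1}(\A-\Apop) + (\D^{-1}-\Dtilde^{-1})\A$ and applying $(x+y)^2\le 2(x^2+y^2)$, it suffices to bound
\begin{equation*}
\bbE \left\| (\A-\Apop)\Dtilde^{-1}\vv \right\|^2
\quad\text{and}\quad
\bbE \left\| \A (\D^{-1}-\Dtilde^{-1})\vv \right\|^2 ,
\end{equation*}
where we use the symmetry of $\A$ and $\Apop$ to move the transposes inside.

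For the first term, set $\vw = \Dtilde^{-1}\vv$, so that $w_i = v_i/\dtilde_i$. The vector $\vw$ is independent of $\A-\Apop$ given $\Xpop$. Expanding $\|(\A-\Apop)\vw\|^2 = \sum_j\big(\sum_i (\A-\Apop)_{ji} w_i\big)^2$ and conditioning on $\Xpop$ and $\vv$, the cross terms vanish by conditional independence and the zero mean of the entries. The only exception is the symmetric pair $(\A-\Apop)_{ji}$ and $(\A-\Apop)_{ij}$, but these appear in different outer sums, so they do not produce cross terms within a single row sum. By Equation~\eqref{eq:assum:edgeVariance} this gives
\begin{equation*}
\bbE \| (\A-\Apop)\vw \|^2 \le \nu_n \sum_j\sum_i \bbE\, w_i^2 = n\nu_n \sum_i \bbE\, v_i^2/\dtilde_i^2 ,
\end{equation*}
which is already of the required form.

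For the second term, write $(\D^{-1}-\Dtilde^{-1})\vv = \D^{-1}(\Dtilde-\D)\Dtilde^{-1}\vv$, so that
\begin{equation*}
\A(\D^{-1}-\Dtilde^{-1})\vv = \G^\top(\Dtilde-\D)\vw .
\end{equation*}
Here I would use that $\G^\top$ has column sums one and nonnegative entries. Viewing $\G^\top\vu$ as a convex combination, Jensen's inequality gives $\|\G^\top \vu\|^2 \le \sum_i u_i^2 \,(\G\onevec)_i$, up to the maximal column sum of $\G$. Using that column sum as a constant, which is uniformly bounded by item 1 of Assumption~\ref{assum:peer-oracle} (or its latent analogue), we get $\|\G^\top\vu\|\le C\|\vu\|$. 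Then
\begin{equation*}
\bbE\|\G^\top(\Dtilde-\D)\vw\|^2 \le C\sum_i \bbE\,(d_i-\dtilde_i)^2 w_i^2 .
\end{equation*}
Conditioning on $\Xpop$ and $\vv$, we have $\bbE[(d_i-\dtilde_i)^2\mid \Xpop]\le n\nu_n$ by independence of the $n$ summands and Equation~\eqref{eq:assum:edgeVariance}. This yields $C n\nu_n\sum_i \bbE\, v_i^2/\dtilde_i^2$, and combining the two terms completes the argument.

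The main obstacle is the bound on $\|\G^\top\|$. Unlike $\G$, its transpose is not stochastic, and $\G$ also depends on $\A$, so we cannot simply condition the randomness away. I would first try the bounded-column-sum condition, which gives $\|\G^\top\|\le \sqrt{\|\G\|_\infty\|\G\|_1}\le C$ deterministically. If that assumption is unavailable, I would fall back on bounding $\G_{ij}\le \A_{ij}/d_i$ together with the degree concentration of Lemma~\ref{lem:degconc}, at the cost of a high-probability rather than an in-expectation statement.
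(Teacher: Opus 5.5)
Your proposal follows the paper's proof essentially step for step. You use the same split $\G-\Gtilde = \Dtilde^{-1}(\A-\Apop) + (\D^{-1}-\Dtilde^{-1})\A$ and the same row-by-row expansion of the noise term via conditional independence and Equation~\eqref{eq:assum:edgeVariance}. You then use the same factorization $\A(\D^{-1}-\Dtilde^{-1})\vv = \G^\top(\Dtilde-\D)\Dtilde^{-1}\vv$, followed by $\bbE[(d_i-\dtilde_i)^2\mid\Xpop]\le n\nu_n$.

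The one place you are more careful than the paper is the bound on $\|\G^\top\|$. The paper simply invokes $\|\G\|\le 1$. That holds for the maximum-row-sum norm, but not for the spectral norm of a non-symmetric row-stochastic matrix. In a star graph, the hub's column of $\G$ has $n-1$ unit entries, so $\|\G\|\ge\sqrt{n-1}$. Your bound $\|\G^\top\|\le\sqrt{\|\G\|_1\|\G\|_\infty}\le C$, using the bounded column sums in item 1 of Assumption~\ref{assum:peer-oracle}, is a correct justification and yields the stated inequality with a constant $C$ in place of $1$.

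One caveat: the latent analogue, Assumption~\ref{assum:latent-oracle}, bounds the column sums of $\Gtilde$, not of $\G$, so it does not supply what you need. This is harmless, because the lemma is only invoked (through Lemmas~\ref{lem:proj-equiv:peercon:qrRecurse:bounded} and~\ref{lem:proj-equiv:peercon:qrRecurse:meanzero}) under the peer contagion model, where Assumption~\ref{assum:peer-oracle} is in force. Your high-probability fallback via Lemma~\ref{lem:degconc} is therefore unnecessary.
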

\begin{proof}
Applying the triangle inequality,
\begin{equation} \label{eq:proj-equiv:GGtilde:quaddiff:transpose:tri}
\bbE \left\| \left( \mG - \Gtilde \right)^\top \vv \right\|^2
\le 
C \bbE \left\|\left(\mA-\mApop\right)^\top \mDtilde^{-1} \vv \right\|^2
+ C \bbE \left\|\mA^\top \left(\D^{-1} - \mDtilde^{-1}\right) \vv \right\|^2 .
\end{equation}
Expanding the norm and taking advantage of the conditional edge independence structure of $\mA-\mApop$,
\begin{equation*}
\bbE \left\| \left(\mA\!-\!\mApop\right)^\top \! \mDtilde^{-1} \vv \right\|^2
= \sum_{i=1}^n
\bbE \left[ \sum_{j=1}^n \frac{(\mA\!-\!\mApop)_{ij}v_j }{\dtilde_j} \right]^2
= \sum_{i=1}^n \sum_{j=1}^n \bbE \frac{(\mA\!-\!\mApop)_{ij}^2 v_j^2 }
	{\dtilde_j^2} .
\end{equation*}
Applying Equation~\eqref{eq:assum:edgeVariance},
\begin{equation} \label{eq:proj-equiv:GGtilde:quaddiff:transpose:term1:done}
\bbE \left\| \left(\mA\!-\!\mApop\right)^\top \! \mDtilde^{-1} \vv \right\|^2
\le
C n \nu_n \sum_{j=1}^n \bbE \frac{ v_j^2 }{\dtilde_j^2} .
\end{equation}

Factorizing and using the fact that $\mG=\D^{-1}\mA$ is a transition matrix,
\begin{equation*}
\bbE \left\| \mA^\top \left( \D^{-1} - \mDtilde^{-1}\right) \vv \right\|^2
= \bbE \left\| \mG^\top \left(\mDtilde-\D\right) \mDtilde^{-1} \vv \right\|^2
\le \bbE \left\| \left(\mDtilde-\D\right) \mDtilde^{-1} \vv \right\|^2,
\end{equation*}
where the inequality follows from submultiplicativity and the fact that $\|\mG\| \le 1$.
Using conditional independence of the entries of $\mA-\mApop$ along with Equation~\eqref{eq:assum:edgeVariance},
\begin{equation*}
\bbE \left\| \mA^\top \left( \D^{-1} - \mDtilde^{-1}\right) \vv \right\|^2
\le \sum_{i=1}^n \bbE \frac{ (\dtilde_i-d_i)^2 v_i^2 }{ \dtilde_i^2 }
\le C n \nu_n \sum_{i=1}^n \frac{ v_i^2 }{ \dtilde_i^2 }.
\end{equation*}
Applying this and Equation~\eqref{eq:proj-equiv:GGtilde:quaddiff:transpose:term1:done} to Equation~\eqref{eq:proj-equiv:GGtilde:quaddiff:transpose:tri} completes the proof.
\end{proof}

\begin{lemma} \label{lem:proj-equiv:GGtilde:lineardiff:bounded:nonasy}
Let $\vu,\vv \in \R^n$ be such that $\mA-\mApop$ is independent of $\vu$ and $\vv$ conditional on $\Xpop$.
Then
\begin{equation*}
\left| \bbE \vu^\top (\G-\Gtilde)^\top \vv \right|
\le C n \nu_n \sum_{i=1}^n \bbE
	\frac{\left\| \vu \right\|_\infty \left\| \vv \right\|_\infty}
		{\dtilde_i^2} .
\end{equation*}
\end{lemma}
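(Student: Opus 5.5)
The plan is to decompose $\G-\Gtilde$ as in the preceding lemmas and bound each piece in expectation. Write
\begin{equation*}
\G - \Gtilde = \Dtilde^{-1}(\A-\Apop) + \left(\D^{-1}-\Dtilde^{-1}\right)\A ,
\end{equation*}
so that
\begin{equation*}
\vu^\top(\G-\Gtilde)^\top\vv = \vv^\top \Dtilde^{-1}(\A-\Apop)\vu + \vv^\top\Dtilde^{-1}(\Dtilde-\D)\G\vu .
\end{equation*}
The first term has conditional mean zero given $\Xpop,\vu,\vv$. This follows because $\A-\Apop$ is conditionally mean zero and independent of $\vu,\vv$, and $\Dtilde$ is a function of $\Xpop$. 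Hence, by the tower property, its unconditional expectation vanishes. So only the second term contributes to $\bbE\,\vu^\top(\G-\Gtilde)^\top\vv$, and this is where the bound must come from.

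For the second term, I would bound it pointwise, using only that $\G$ is row stochastic with nonnegative entries. This gives $|(\G\vu)_i|\le \norm{\vu}_\infty$, and therefore
\begin{equation*}
\left|\vv^\top\Dtilde^{-1}(\Dtilde-\D)\G\vu\right| \le \sum_{i=1}^n \frac{|d_i-\dtilde_i|}{\dtilde_i}\norm{\vu}_\infty\norm{\vv}_\infty .
\end{equation*}
The claimed bound has $\dtilde_i^{-2}$ and a factor $n\nu_n$, so I would convert using $|x|/\dtilde_i \le \tfrac12\left(x^2/(c\,\dtilde_i^2) + c\right)$. That route, however, leaves an additive constant, so it is not the right move.

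The better route is Cauchy--Schwarz in the form $|d_i-\dtilde_i|/\dtilde_i \le 1 + (d_i-\dtilde_i)^2/\dtilde_i^2$, but this has the same defect. The cleanest option matching the stated form is to condition on $\Xpop$ and use $\bbE[(d_i-\dtilde_i)^2\mid\Xpop]\le n\nu_n$, which follows from conditional independence of the edges and Assumption~\ref{assum:equiv:edges}. The obstacle is that the pointwise bound is linear in $|d_i-\dtilde_i|$, while the target is quadratic over $\dtilde_i^2$.

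I would resolve this by a second split, writing $\D^{-1}-\Dtilde^{-1} = \Dtilde^{-1}(\Dtilde-\D)\Dtilde^{-1} + \Dtilde^{-1}(\Dtilde-\D)\Dtilde^{-1}(\Dtilde-\D)\D^{-1}$. In the resulting first-order piece,
\begin{equation*}
\vv^\top \Dtilde^{-1}(\Dtilde-\D)\Dtilde^{-1}\A\vu ,
\end{equation*}
I would replace $\A$ by $\Apop + (\A-\Apop)$. The part with $\Apop$ is linear in $\A-\Apop$ with $\Xpop$-measurable coefficients, so it has mean zero. The part with $\A-\Apop$ is a quadratic form in the edge noise. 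Its expectation picks up only the diagonal pairings, which gives a sum of at most $\sum_{i,j} \nu_n |v_i||u_j|/\dtilde_i^2 \le n\nu_n\sum_i \norm{\vu}_\infty\norm{\vv}_\infty/\dtilde_i^2$.

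The second-order remainder is bounded pointwise by $\sum_i (d_i-\dtilde_i)^2\dtilde_i^{-2}\,|v_i|\,|(\G\vu)_i|$. The $\D^{-1}\A=\G$ factor absorbs the last inverse, so this is at most $\norm{\vu}_\infty\norm{\vv}_\infty\sum_i (d_i-\dtilde_i)^2/\dtilde_i^2$. Taking conditional expectations then gives $C n\nu_n\sum_i \bbE\,\norm{\vu}_\infty\norm{\vv}_\infty/\dtilde_i^2$.

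Collecting the three contributions yields the claim. The main care point is verifying the exact second-order identity for $\D^{-1}-\Dtilde^{-1}$, together with the handling of the mean-zero cross terms.
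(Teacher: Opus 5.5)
Your argument is correct and is essentially the paper's proof. Your identity $\D^{-1}-\Dtilde^{-1} = \Dtilde^{-1}(\Dtilde-\D)\Dtilde^{-1} + \Dtilde^{-2}(\Dtilde-\D)^2\D^{-1}$ produces exactly the paper's three pieces: the mean-zero term $\vv^\top\Dtilde^{-1}(\Dtilde-\D)\Gtilde\vu$, the noise quadratic form $\vv^\top\Dtilde^{-1}(\Dtilde-\D)\Dtilde^{-1}(\A-\Apop)\vu$ handled by diagonal pairings, and the remainder $\vv^\top\Dtilde^{-2}(\Dtilde-\D)^2\G\vu$ handled by row-stochasticity of $\G$; the paper reaches the same terms by adding and subtracting $\Gtilde^\top(\mI-\D\Dtilde^{-1})$ and re-expanding $(\G-\Gtilde)^\top$, and the abandoned AM--GM and Cauchy--Schwarz detours in your second and third paragraphs can simply be dropped.
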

\begin{proof}
Recalling $\G=\D^{-1} \mA$ and $\Gtilde = \Dtilde^{-1} \mApop$, 
\begin{equation*} \begin{aligned}
\bbE \vu^\top \left( \G - \Gtilde \right)^\top \vv
&= \bbE \vu^\top \mA \left( \D^{-1} - \Dtilde^{-1} \right) \vv
 + \bbE \vu^\top \left( \mA - \mApop \right) \Dtilde^{-1} \vv \\
& = \bbE \vb^\top \mA \left( \D^{-1} - \Dtilde^{-1} \right) \vv,
\end{aligned} \end{equation*}
where the second equality follows from our conditional independence and mean zero assumptions on $\mA-\mApop$.
Writing
\begin{equation*}
\mA \left( \D^{-1} - \Dtilde^{-1} \right)
= \mA \D^{-1} \left( \mI - \D \Dtilde^{-1} \right)
= \mG^\top \left( \mI - \D \Dtilde^{-1} \right),
\end{equation*}
it follows that
\begin{equation} \label{eq:proj-equiv:GGtilde:linearterm:nonasy:introGtilde} 
\begin{aligned}
\bbE \vu^\top \left( \G - \Gtilde \right)^\top \vv
&= \bbE \vu^\top \G^\top \left( \mI - \D \Dtilde^{-1} \right) \vv \\
&= \bbE \vu^\top \left( \mG - \Gtilde \right)^\top
	\left( \mI - \D \Dtilde^{-1} \right) \vv,
\end{aligned} \end{equation}
where the second equality follows from the fact that
\begin{equation*}
\bbE \vu^\top \Gtilde^\top \left( \mI - \D \Dtilde^{-1} \right) \vv
= 0,
\end{equation*}
again because $\mA-\mApop$ (and thus also $\mD-\mDtilde$) is mean zero and independent of $\vu$ and $\vv$ conditional on $\Xpop$.
Adding and subtracting appropriate quantities in Equation~\eqref{eq:proj-equiv:GGtilde:linearterm:nonasy:introGtilde} yields
\begin{equation} \label{eq:proj-equiv:linearterm:nonasy:lastSplit}
\begin{aligned}
\bbE \vu^\top \! \left( \G \!-\! \Gtilde \right)^\top \! \vv
&= \bbE \vu^\top \! \left( \mA - \mApop \right) \Dtilde^{-1}
	\! \left( \mI \!-\! \D \Dtilde^{-1} \!\right) \vv
 + \bbE \vu^\top \G^\top \! \left( \mI - \D \Dtilde^{-1} \right)^2 \vv .
\end{aligned} \end{equation}

Expanding the matrix-vector products and applying the triangle inequality,
\begin{equation*} \begin{aligned}
\left| \bbE \vu^\top \mG^\top \left( \mI - \D \Dtilde^{-1} \right)^2 \vv
	\right|
&\le
\sum_{i=1}^n \bbE \left|
	\frac{ (d_i-\dtilde_i)^2 (\mG \vu)_i v_i }{ \dtilde_i^2 } \right|
&\le \sum_{i=1}^n \bbE \left\| \vu \right\|_\infty \left\| \vv \right\|_\infty
	\frac{ (d_i - \dtilde_i)^2 }{ \dtilde_i^2 } ,
\end{aligned} \end{equation*}
where the second inequality follows from the fact that $\G$ is a transition matrix.
By Assumption~\ref{assum:equiv:edges},
\begin{equation} \label{eq:proj-equiv:linearterm:nonasy:ID2:done}
\bbE \vu^\top \mG^\top \left( \mI - \D \Dtilde^{-1} \right)^2 \vv
\le
C n \nu_n \sum_{i=1}^n \bbE
\frac{\left\| \vu \right\|_\infty \left\| \vv \right\|_\infty}{\dtilde_i^2} .
\end{equation}

Expanding the matrix-vector products,
\begin{equation*}
\bbE \vu^\top \left( \mA - \mApop \right) \Dtilde^{-1}
	\left( \mI - \D \Dtilde^{-1} \right) \vv
= \sum_{i=1}^n \sum_{j=1}^n
        \bbE
        \frac{ (d_i-\dtilde_i) u_i }{ \dtilde_i^2 }
        \left(\mA - \mApop \right)_{ij} v_j .
\end{equation*}
Expanding $d_i-\dtilde_i = \sum_j (\mA-\mApop)_{ij}$ and applying our conditional edge independence assumptions,
\begin{equation*} 
\left| \bbE \vu^\top \left( \mA - \mApop \right) \Dtilde^{-1}
        \left( \mI - \D \Dtilde^{-1} \right) \vv \right|
= \left| \sum_{i=1}^n \sum_{j=1}^n \bbE \frac{ u_i v_j }{ \dtilde_i^2 }
			\left(\mA \!-\! \mApop \right)_{ij}^2 \right|.
\end{equation*}
By the triangle inequality and our assumption in Equation~\eqref{eq:assum:edgeVariance},
\begin{equation} \label{eq:proj-equiv:linearterm:nonasy:AP:done}
\left| \bbE \vu^\top \left( \mA - \mApop \right) \Dtilde^{-1}
        \left( \mI - \D \Dtilde^{-1} \right) \vv \right|
\le C n \nu_n  \sum_{i=1}^n
\bbE \frac{ \left\| \vu \right\|_\infty \left\| \vv \right\|_\infty }
		{ \dtilde_i^2 } .
\end{equation}
Applying the triangle inequality to Equation~\eqref{eq:proj-equiv:linearterm:nonasy:lastSplit} followed by Equations~\eqref{eq:proj-equiv:linearterm:nonasy:ID2:done} and~\eqref{eq:proj-equiv:linearterm:nonasy:AP:done} completes the proof.
\end{proof}

\begin{lemma} \label{lem:proj-equiv:GGtilde:lineardiff:meanzero:nonasy}
Let $\vu,\vv \in \R^n$ be such that $\mA-\mApop$ is independent of $\vu$ and $\vv$ conditional on $\Xpop$.
Suppose further that $\max_i \bbE \left| u_i v_i \right| \le C \sigmauv^2$ and that for all $i \neq j$, $\bbE u_i v_j = 0$.
Then
\begin{equation*}
\left| \bbE \vu^\top (\G-\Gtilde)^\top \vv \right|
\le C \sigmauv^2 n \nu_n \sum_{i=1}^n \bbE \frac{1}{\dtilde_i^2} .
\end{equation*}
\end{lemma}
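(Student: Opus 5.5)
The plan is to follow the proof of Lemma~\ref{lem:proj-equiv:GGtilde:lineardiff:bounded:nonasy} step by step. The algebra there does not use boundedness until the final estimates, so the same reduction applies here. We then replace the sup-norm bounds $\|\vu\|_\infty\|\vv\|_\infty$ with the second-moment and uncorrelatedness structure of $\vu,\vv$.

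First, conditional mean zero and independence of $\mA-\mApop$ from $(\vu,\vv)$ given $\Xpop$ kill $\bbE\,\vu^\top(\mA-\mApop)\Dtilde^{-1}\vv$. Similarly, $\bbE\,\vu^\top\Gtilde^\top(\mI-\D\Dtilde^{-1})\vv=0$, since $\Gtilde$ and $\Dtilde$ are $\Xpop$-measurable. This yields the decomposition in Equation~\eqref{eq:proj-equiv:linearterm:nonasy:lastSplit}:
\begin{equation*}
\bbE\,\vu^\top(\G-\Gtilde)^\top\vv
= \bbE\,\vu^\top(\mA-\mApop)\Dtilde^{-1}(\mI-\D\Dtilde^{-1})\vv
+ \bbE\,\vu^\top\G^\top(\mI-\D\Dtilde^{-1})^2\vv .
\end{equation*}

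For the first term, expand as in the earlier proof, using $d_i-\dtilde_i=\sum_k(\mA-\mApop)_{ik}$. Conditional independence, together with independence of $(\vu,\vv)$ from the edges, reduces it to $\sum_{i,j}\bbE\,[u_iv_j(\mA-\mApop)_{ij}^2/\dtilde_i^2]$. Conditioning on $\Xpop$ and $(\vu,\vv)$, the factor $(\mA-\mApop)_{ij}^2$ has conditional expectation at most $\nu_n$. The cross terms with $i\ne j$ carry $u_iv_j$, and their expectation vanishes. To justify this, I would condition first on $\Xpop$: the conditional edge variance $\sigma^2_{ij}(\Xpop)$ multiplies $\bbE[u_iv_j\mid\Xpop]$. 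This requires reading the hypothesis $\bbE u_iv_j=0$ as holding conditionally on $\Xpop$, as it does for $\be$-driven vectors in our applications. Only the diagonal $i=j$ survives, where $(\mA-\mApop)_{ii}=0$ or, more generally, the term is bounded by $\nu_n\sigmauv^2\sum_i\bbE\,\dtilde_i^{-2}$. That bound is well within the claimed $n\nu_n$ rate.

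The main obstacle is the second term, $\sum_i\bbE\,(d_i-\dtilde_i)^2(\G\vu)_iv_i/\dtilde_i^2$, because $\G$ depends on the edges and mixes the coordinates of $\vu$. The plan is to write $(\G\vu)_i=\sum_k\G_{ik}u_k$ and condition on $\A$ and $\Xpop$. Then $\bbE[u_kv_i\mid\A,\Xpop]$ vanishes for $k\ne i$ and is at most $C\sigmauv^2$ in absolute value for $k=i$, using independence of $(\vu,\vv)$ from $\A$ given $\Xpop$. This leaves
\begin{equation*}
\sum_i\bbE\,\G_{ii}(d_i-\dtilde_i)^2/\dtilde_i^2\cdot C\sigmauv^2 ,
\end{equation*}
which is bounded because $0\le \G_{ii}\le 1$ ($\G$ is a transition matrix). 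A final conditional expectation $\bbE[(d_i-\dtilde_i)^2\mid\Xpop]\le n\nu_n$, from Assumption~\ref{assum:equiv:edges}, gives $C\sigmauv^2 n\nu_n\sum_i\bbE\,\dtilde_i^{-2}$.

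Summing the two bounds with the triangle inequality completes the argument.
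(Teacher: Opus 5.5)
Your proposal is correct and follows the paper's proof. The paper uses the same split of $\bbE\,\vu^\top(\G-\Gtilde)^\top\vv$ into $\bbE\,\vu^\top\G^\top(\mI-\D\Dtilde^{-1})^2\vv$ and $\bbE\,\vu^\top(\mA-\mApop)\Dtilde^{-1}(\mI-\D\Dtilde^{-1})\vv$, uncorrelatedness to reduce the second term to diagonal contributions bounded through $\nu_n$, and $0\le G_{ij}\le 1$ with $\bbE[(d_i-\dtilde_i)^2\mid\Xpop]\le n\nu_n$ for the first. For the first term, you condition on $\mA,\Xpop$ so that only $G_{ii}u_iv_i$ survives; this is the correct form of the paper's display, which writes the summand as $G_{ij}u_jv_j$ where it should be $G_{ij}u_jv_i$. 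As you note, the uncorrelatedness hypothesis must be read conditionally on $\Xpop$, and so must the bound on $\bbE|u_iv_i|$, a point you use but do not flag; the paper relies on both implicitly as well.
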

\begin{proof}
Recalling the definitions of $\G$ and $\Gtilde$,
\begin{equation*} \begin{aligned}
\bbE \vu^\top (\G-\Gtilde)^\top \Gtilde \vv 
&= \bbE \vu^\top (\mA-\mApop)^\top \Dtilde^{-1} \vv
+ \bbE \vu^\top \mA^\top \left(\D^{-1}- \Dtilde^{-1}\right) \vv \\
&= \bbE \vu^\top \mA^\top \left(\D^{-1}- \Dtilde^{-1}\right) \vv,
\end{aligned} \end{equation*}
where the second equality follows from the fact that $\mA-\mApop$ is mean zero and independent of $\vu,\vv$ conditional on $\Xpop$.
Factorizing, then adding and subtracting appropriate quantities,
\begin{equation*} \begin{aligned}
\bbE \vu^{\top} \!(\G\!-\!\Gtilde)^{\!\top} \vv 
&= \bbE \vu^\top \G^\top 
	\left(\mI - \D \mDtilde^{-1} \right) \vv \\
&= 
\bbE \vu^\top \left(\G-\Gtilde\right)^\top \!
	\left(\mI - \D \mDtilde^{-1} \right) \vv 
+ \bbE \vu^\top \Gtilde^\top 
	\left(\mI - \D \mDtilde^{-1} \right) \vv \\
&=
\bbE \vu^\top \left(\G-\Gtilde\right)^\top 
	\left(\mI - \D \mDtilde^{-1} \right) \vv ,
\end{aligned} \end{equation*}
where the final equality again follows from the fact that $\mA-\mApop$ is mean zero conditional on $\Xpop$, and thus $\D-\Dtilde$ is, too.
Again expanding
\begin{equation*}
\G-\Gtilde = \left(\D^{-1}-\Dtilde^{-1}\right)\mA
		+ \Dtilde^{-1} \left(\mA-\mApop \right)
\end{equation*}
and writing $\G = \D^{-1} \mA$, 
\begin{equation} 
\label{eq:proj-equiv:GGtilde:lineardiff:meanzero:nonasy:almostThere}
\begin{aligned}
\left| \bbE \vu^{\top} \!(\G\!-\!\Gtilde)^{\!\top} \vv \right|
&\le \left| \bbE \vu^\top \mG^\top \!
	\left(\mI - \D \mDtilde^{-1} \right)^2 \vv \right| \\
&~~~~~~+ 
\left| \bbE \vu^\top \left(\mA-\mApop\right)^\top \mDtilde^{-1}
		\left(\mI - \D \mDtilde^{-1} \right) \vv \right| .
\end{aligned} \end{equation}

Expanding the quadratic form and using our assumption that $\bbE u_i v_j = 0$ for $i \neq j$,
\begin{equation*} \begin{aligned}
\bbE \vu^\top \left(\mA-\mApop\right)^\top \mDtilde^{-1}
		\left(\mI - \D \mDtilde^{-1} \right) \vv
&=
\sum_{j=1}^n \bbE \left[ \left(\mA-\mApop\right)^\top \mDtilde^{-1}
                \left(\mI - \D \mDtilde^{-1} \right) \right]_{jj}
		u_j v_j \\
&=
\sum_{j=1}^n 
\bbE \left(\mA-\mApop\right)_{jj}
		\frac{\dtilde_j-d_j}{\dtilde_j^2} u_j v_j .
\end{aligned} \end{equation*}
Using the (conditional) edge independence structure of $\mA-\mApop$,
\begin{equation*} \begin{aligned}
\left| \bbE \vu^\top \left(\mA-\mApop\right)^\top \mDtilde^{-1}
		\left(\mI - \D \mDtilde^{-1} \right) \vv \right|
&= \left| \sum_{j=1}^n 
\bbE \frac{ \left(\mA-\mApop\right)_{jj}^2 }{ \dtilde_j^2}
		u_j v_j \right| \\
&\le C \nu_n \sum_{j=1}^n
\bbE \frac{ \left| u_j v_j \right| }{ \dtilde_j^2},
\end{aligned} \end{equation*}
where the inequality follows from the triangle inequality and Equation~\eqref{eq:assum:edgeVariance}.
Using our assumption bounding $\bbE |u_j v_j|$, 
\begin{equation}
\label{eq:proj-equiv:GGtilde:lineardiff:meanzero:nonasy:step1}
\bbE \vu^\top \left(\mA-\mApop\right)^\top \mDtilde^{-1}
		\left(\mI - \D \mDtilde^{-1} \right) \vv
\le C \sigmauv^2 \nu_n \sum_{i=1}^n \bbE \frac{ 1 }{ \dtilde_i^2}.
\end{equation}

Again expanding the quadratic form and rearranging,
\begin{equation*} \begin{aligned}
\left| \bbE \vu^\top \mG^\top 
	\left(\mI - \D \mDtilde^{-1} \right)^2 \vv \right|
&=
\left| \sum_{j=1}^n \sum_{i=1}^n
\bbE \frac{(d_i-\dtilde_i)^2}{\dtilde_i^2} G_{ij} u_j v_j 
\right| \\
&\le
C n \nu_n \sum_{j=1}^n \sum_{i=1}^n
\bbE \frac{1}{\dtilde_i^2} G_{ij} \left| u_j v_j \right|,
\end{aligned} \end{equation*}
where we have used Equation~\eqref{eq:assum:edgeVariance}.
Using the fact that $\G$ is a transition matrix along with our assumption bounding $\bbE |u_j v_j|$, it follows that
\begin{equation} 
\label{eq:proj-equiv:GGtilde:lineardiff:meanzero:nonasy:step2}
\left| \bbE \vu^\top \mG^\top 
	\left(\mI - \D \mDtilde^{-1} \right)^2 \vv \right|
\le
C \sigmauv^2 n \nu_n \sum_{i=1}^n \bbE \frac{1}{\dtilde_i^2} .
\end{equation}
Applying Equations~\eqref{eq:proj-equiv:GGtilde:lineardiff:meanzero:nonasy:step1}
and~\eqref{eq:proj-equiv:GGtilde:lineardiff:meanzero:nonasy:step2}
to Equation~\eqref{eq:proj-equiv:GGtilde:lineardiff:meanzero:nonasy:almostThere}
completes the proof.
\end{proof}

\subsection{Projection equivalence under latent contagion}

Here we prove projection equivalence when the true data generating model is latent contagion, as described in Equation~\eqref{eq:lim-latent-red}.
For ease of notation, define the sample covariance matrices
\begin{equation} \label{eq:def:Sigmamxs}
	\SiglatOracle  = \frac{1}{n} \ZlatOracle^\top \ZlatOracle
	~\text{ and }~
	\SigpeerOracle = \frac{1}{n} \ZpeerOracle^\top \ZpeerOracle.
\end{equation}
as well as
\begin{equation} \label{eq:def:XitildeL}
	\mXitilde = \left( \mI - \theta_y \Gtilde \right)^{-1} \in \R^{n \times n}
	~\text{ and }~
	\mLtilde = \1_n \thetanaught + \W \thetaw + \Xpop \thetax.
\end{equation}
We observe that under the latent contagion model of Equation~\eqref{eq:lim-latent-red}, we have
\begin{equation} \label{eq:Y:XitildeL:latent}
\Y = \mXitilde\left(\mLtilde + \be \right) .
\end{equation}

\begin{lemma} \label{lem:proj-equiv:latent}
	Under the latent contagion model in Equation~\eqref{eq:lim-latent-red}, suppose that Assumptions~\ref{assum:latent-oracle},~\ref{assum:equiv:limcov},~\ref{assum:equiv:edges} and~\ref{assum:equiv:LandX} hold.
	Then
	\begin{equation*}
	\norm*{ \left[ \bbE \ZlatOracle^\top \ZlatOracle \right]^{-1} \bbE \brac*{\ZlatOracle^\top \Y} - \left[ \bbE \ZpeerOracle^\top \ZpeerOracle \right]^{-1} \bbE \brac*{\ZpeerOracle^\top \Y} }
	= o( n^{-1/2} ).
	\end{equation*} 
\end{lemma}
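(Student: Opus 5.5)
The plan is to compare the two normalized projection functionals term by term. Write $\ZpeerOracle = \ZlatOracle + \Delta$, where $\Delta$ has zeros in every column except the last, and the last column is $(\G-\Gtilde)\Y$. Using the normalization in Equation~\eqref{eq:def:Sigmamxs}, the difference we must control is
\begin{equation*}
(\bbE \SiglatOracle)^{-1} \tfrac1n \bbE \ZlatOracle^\top \Y - (\bbE \SigpeerOracle)^{-1} \tfrac1n \bbE \ZpeerOracle^\top \Y .
\end{equation*}
We split this with the identity $a^{-1}x - b^{-1}y = a^{-1}(x-y) + a^{-1}(b-a)b^{-1}y$. By Assumption~\ref{assum:equiv:limcov}, $(\bbE\SiglatOracle)^{-1}$ is bounded. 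We will show $\|\bbE\SigpeerOracle - \bbE\SiglatOracle\| = o(n^{-1/2})$; this keeps $(\bbE\SigpeerOracle)^{-1}$ bounded as well. We will also see that $\|\tfrac1n\bbE\ZpeerOracle^\top\Y\| = O(1)$, using Lemma~\ref{lem:Y:latcon}-type moment bounds. With these facts, it suffices to prove two estimates:
\begin{equation*}
\tfrac1n\|\bbE \Delta^\top \Y\| = o(n^{-1/2}), \qquad \tfrac1n\|\bbE(\ZlatOracle^\top\Delta + \Delta^\top\ZlatOracle + \Delta^\top\Delta)\| = o(n^{-1/2}).
\end{equation*}

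Under latent contagion, Equation~\eqref{eq:Y:XitildeL:latent} gives $\Y = \mXitilde(\mLtilde+\be)$. Both $\mXitilde$ and $\mLtilde$ are functions of $\W$ and $\Xpop$ only, so $\Y$ and every column of $\ZlatOracle$ are conditionally independent of $\mA-\mApop$ given $\Xpop$. This is exactly the setting of Lemmas~\ref{lem:proj-equiv:GGtilde:quaddiff:vbounded:nonasy} through~\ref{lem:proj-equiv:GGtilde:lineardiff:meanzero:nonasy}.

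We decompose $\Y$ into a signal part $\mXitilde\mLtilde$ and a noise part $\mXitilde\be$. The signal part is uniformly bounded, because $\W$ and $\Xpop$ are bounded and $\|\mXitilde\|_{\infty\to\infty}\le (1-|\thetay|)^{-1}$ since $\Gtilde$ is row-stochastic. Each entry of $\bbE\Delta^\top\Y$ and of $\bbE\ZlatOracle^\top\Delta$ is a bilinear expression $\bbE\,\vu^\top(\G-\Gtilde)\vv$, with $\vu,\vv$ taken from the columns of $\ZlatOracle$ and $\Y$.
\begin{itemize}
\item When $\vu$ and $\vv$ are both bounded signal-type vectors, Lemma~\ref{lem:proj-equiv:GGtilde:lineardiff:bounded:nonasy} gives the bound $Cn\nu_n\sum_i\bbE\dtilde_i^{-2}$. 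After dividing by $n$, this is $C\nu_n\sum_i\bbE\dtilde_i^{-2} = o(n^{-1/2})$ by Assumption~\ref{assum:equiv:LandX}.
\item When both factors carry $\be$, the cross-correlations vanish, and Lemma~\ref{lem:proj-equiv:GGtilde:lineardiff:meanzero:nonasy} gives the same bound with an extra factor $\sigmaeps^2$. This is again $o(n^{-1/2})$ after dividing by $n$.
\item When exactly one factor carries $\be$, the expectation is zero, since $\be$ is mean zero and independent of everything else.
\end{itemize}
The quadratic term $\tfrac1n\bbE\|(\G-\Gtilde)\Y\|^2$ is handled the same way. Lemma~\ref{lem:proj-equiv:GGtilde:quaddiff:vbounded:nonasy} bounds the signal part, and Lemma~\ref{lem:proj-equiv:GGtilde:quaddiff:meanzero:nonasy} bounds the noise part $\mXitilde\be$. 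Each gives $\tfrac1n\cdot Cn\nu_n\sum_i\bbE\dtilde_i^{-2} = o(n^{-1/2})$.

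The main obstacle is the noise part $\mXitilde\be$, because its entries are correlated rather than uncorrelated as Lemmas~\ref{lem:proj-equiv:GGtilde:quaddiff:meanzero:nonasy} and~\ref{lem:proj-equiv:GGtilde:lineardiff:meanzero:nonasy} require. My first attempt is to use the Neumann series $\mXitilde = \sum_q \thetay^q\Gtilde^q$ and move the powers of $\Gtilde$ onto the other factor. Each $\Gtilde^q$ has $\|\Gtilde^q\|_{\infty\to\infty}\le 1$, and the series converges geometrically, so each term reduces to the uncorrelated case with the same rate. If that fails, I would instead condition on $\Xpop$ and $\mA$ and compute the $\be$-expectation directly as a trace, $\sigmaeps^2\trace(\mXitilde^\top(\G-\Gtilde)\mXitilde)$. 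I would then bound its expectation over $\mA$ using the same degree-concentration expansions as in Lemma~\ref{lem:proj-equiv:GGtilde:lineardiff:meanzero:nonasy}.
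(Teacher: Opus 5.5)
Your plan follows the paper's proof of Lemma~\ref{lem:proj-equiv:latent} closely. You normalize by $n$, split the difference into a numerator-difference term and a difference-of-inverses term, and show $\|\bbE\SigpeerOracle-\bbE\SiglatOracle\|=o(n^{-1/2})$ through the cross and quadratic terms in $(\G-\Gtilde)\Y$. Each piece is then bounded via $\Y=\mXitilde(\mLtilde+\be)$ and the nonasymptotic lemmas of Appendix~\ref{apx:equivalence}.

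You are more careful than the paper in two places. First, the paper collapses $\bbE(\ZpeerOracle-\ZlatOracle)^\top\ZlatOracle$ to the single entry $\bbE\Y^\top(\G-\Gtilde)^\top\Gtilde\Y$, dropping the entries paired with $\1_n,\W,\Xpop$; your bounded and one-$\be$ cases cover these. Second, the paper plugs $\vv=\mXitilde\be$ into Lemmas~\ref{lem:proj-equiv:GGtilde:quaddiff:meanzero:nonasy} and~\ref{lem:proj-equiv:GGtilde:lineardiff:meanzero:nonasy} without noting that its entries are correlated.

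On that obstacle, your first fix does not work. After the Neumann expansion, each term has the form $\bbE(\Gtilde^q\be)^\top(\G-\Gtilde)^\top\Gtilde^{r}\be$. Both vectors are linear in the same $\be$, with cross-covariance $\sigmaeps^2\Gtilde^q(\Gtilde^r)^\top$, which is not diagonal. Since $\Gtilde^q$ cannot be commuted past $\G-\Gtilde$, no term reduces to the uncorrelated case.

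Your fallback is the right route and closes cleanly. Conditioning on $(\Xpop,\W,\A)$ turns each bilinear noise term into $\sigmaeps^2\,\bbE\sum_{i,j}(\G-\Gtilde)_{ij}K_{ij}$, with $K=\mXitilde\mXitilde^\top$ or $K=\Gtilde\mXitilde\mXitilde^\top$. Both choices are $\Xpop$-measurable with entries at most $(1-|\thetay|)^{-2}$. The exact identity
\begin{equation*}
(\G-\Gtilde)_{ij}=\frac{(\A-\Apop)_{ij}}{\dtilde_i}+\frac{\Gtilde_{ij}(\dtilde_i-d_i)}{\dtilde_i}+\frac{(\A-\Apop)_{ij}(\dtilde_i-d_i)}{\dtilde_i^2}+\frac{G_{ij}(\dtilde_i-d_i)^2}{\dtilde_i^2}
\end{equation*}
gives $|\bbE[(\G-\Gtilde)_{ij}\mid\Xpop]|\le \nu_n\dtilde_i^{-2}+\bbE[G_{ij}(d_i-\dtilde_i)^2\mid\Xpop]\,\dtilde_i^{-2}$. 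Summing against bounded $K$ and using $\sum_jG_{ij}\le 1$ yields $Cn\nu_n\sum_i\bbE\dtilde_i^{-2}$, exactly the rate you need.

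The quadratic noise term $\sigmaeps^2\bbE\|(\G-\Gtilde)\mXitilde\|_F^2$ is bounded as in the proof of Lemma~\ref{lem:proj-equiv:GGtilde:quaddiff:meanzero:nonasy}. Use $\|\mXitilde\|_F^2\le n(1-|\thetay|)^{-2}$ and the fact that the rows of $\G\mXitilde$ have Euclidean norm at most $(1-|\thetay|)^{-1}$.
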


In preparation for a proof of Lemma~\ref{lem:proj-equiv:latent}, we establish a handful of technical results.

\begin{lemma} \label{lem:proj-equiv:latcon:linearTerm}
	Under the conditions of Lemma~\ref{lem:proj-equiv:latent},
	\begin{equation*}
		\frac{1}{n} \left\| \bbE \ZlatOracle^\top \Y \right\|
		\le C .
	\end{equation*}
\end{lemma}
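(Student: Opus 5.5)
We bound each block of $\frac1n\bbE \ZlatOracle^\top \Y$ separately, since $\ZlatOracle = \begin{bmatrix} \1_n \; \W \; \Xpop \; \Gtilde \Y \end{bmatrix}$ has a fixed number $p+d+2$ of columns. By Cauchy--Schwarz, for any column $\vz$ of $\ZlatOracle$ we have $\abs{\bbE \vz^\top \Y} \le \bbE\brac*{\norm{\vz}\norm{\Y}} \le (\bbE\norm{\vz}^2)^{1/2}(\bbE\norm{\Y}^2)^{1/2}$. It therefore suffices to show $\bbE\norm{\vz}^2 \le Cn$ for each column and $\bbE\norm{\Y}^2 \le Cn$. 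Summing over the finitely many columns then gives $\frac1n\norm{\bbE \ZlatOracle^\top\Y} \le C$.

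First we bound $\bbE\norm{\Y}^2$. We use the representation \eqref{eq:Y:XitildeL:latent}, $\Y = \mXitilde(\mLtilde+\be)$. Lemma~\ref{lem:IbgyG:invertible} gives only eigenvalue control for the nonsymmetric $\Gtilde$, so we instead use the Neumann series together with Assumption~\ref{assum:latent-oracle}(1). That assumption states that $\Gtilde$ has uniformly bounded row and column sums, so $\norm{\Gtilde}\le \sqrt{\norm{\Gtilde}_1\norm{\Gtilde}_\infty}\le C$. For the series to converge we need $\norm{\thetay\Gtilde}<1$ in an operator norm. Under row-stochasticity we have $\norm{\Gtilde}_\infty=1$, so $\norm{\mXitilde}_\infty \le (1-\abs{\thetay})^{-1}$. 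Since $\Gtilde^\top$ has bounded row sums, $\norm{\mXitilde}_1$ is also bounded, provided the column-sum bound is compatible with $\abs{\thetay}$; this is the standard setting of \citet{kelejian1998}, which Assumption~\ref{assum:latent-oracle} imports. Hence $\norm{\mXitilde}\le C$ almost surely. It follows that $\norm{\Y}^2 \le C(\norm{\mLtilde}^2+\norm{\be}^2)$.

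Next, Assumption~\ref{assum:latent-oracle}(3) makes the entries of $\W$ and $\Xpop$ uniformly bounded almost surely. The parameters are fixed, so $\norm{\mLtilde}^2\le Cn$. By Assumption~\ref{assum:latent-oracle}(4), $\bbE\norm{\be}^2 = n\sigmaeps^2$. Together these give $\bbE\norm{\Y}^2\le Cn$.

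For the columns, $\norm{\1_n}^2=n$, and $\norm{\W_{\cdot k}}^2,\norm{\Xpop_{\cdot k}}^2\le Cn$ by boundedness. For the last column, $\norm{\Gtilde\Y}^2\le \norm{\Gtilde}^2\norm{\Y}^2\le C\norm{\Y}^2$, so $\bbE\norm{\Gtilde\Y}^2\le Cn$ by the previous step. Combining with the Cauchy--Schwarz reduction, each entry of $\bbE\ZlatOracle^\top\Y$ is $O(n)$, which gives the claim.

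The main obstacle is the uniform almost-sure operator-norm bound on $\mXitilde$, because $\Gtilde$ is not symmetric. We rely on the bounded row and column sum condition from Assumption~\ref{assum:latent-oracle}(1), interpreted as in \citet{kelejian1998}, to get $\sup_n\norm{\mXitilde}<\infty$ almost surely. If only row-stochasticity were available, a fallback is a row-wise bound. Since $\mXitilde$ has nonnegative entries and row sums $(1-\thetay)^{-1}$ (for $\thetay\ge0$), we get $\norm{\mXitilde\mLtilde}_\infty\le C$. The noise term $\bbE\norm{\mXitilde\be}^2=\sigmaeps^2\trace(\mXitilde^\top\mXitilde)$ would then be handled via Lemma~\ref{lem:epstilde:ctl}.
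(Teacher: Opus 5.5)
Your argument is correct and uses the same strategy as the paper: every block of $\bbE\ZlatOracle^\top\Y$ is reduced to second moments of $\mLtilde$ and $\be$ through norm bounds on $\mXitilde$ and $\Gtilde$. The paper instead expands $\Y=\mXitilde(\mLtilde+\be)$ and uses that $\be$ is mean zero and independent of the latent structure to drop the cross terms in $\bbE\Y^\top\Gtilde^\top\Y$. Your column-wise Cauchy--Schwarz reduction treats all $p+d+2$ blocks the same way and does not need the cross terms to vanish.

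One step needs tightening. In your main route, boundedness of $\|\mXitilde\|_1$ does not follow from Assumption~\ref{assum:latent-oracle} as written. That assumption bounds the column sums of $\Gtilde$, not those of $\mXitilde=\sum_{k\ge 0}\thetay^k\Gtilde^k$; the Kelejian--Prucha condition on the row and column sums of $(\mI-\thetay\Gtilde)^{-1}$ is not among its items. The Neumann series in $\|\cdot\|_1$ is only controlled if, say, $|\thetay|\,\|\Gtilde\|_1<1$ or $\sup_k\|\Gtilde^k\|_1<\infty$. The paper takes the same step by citing Lemma~\ref{lem:IbgyG:invertible}, but that lemma only locates the eigenvalues of the nonsymmetric matrix $\mI-\thetay\Gtilde$, so it does not bound $\|\mXitilde\|$ either.

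Your fallback is the rigorous version, and you should lead with it.
\begin{itemize}
\item Each $\Gtilde^k$ is entrywise nonnegative and row-stochastic, so $\sum_j|(\mXitilde)_{ij}|\le(1-|\thetay|)^{-1}$ for every $\thetay\in(-1,1)$. No sign restriction is needed.
\item This gives $\|\mXitilde\mLtilde\|_\infty\le C$.
\item It also gives $\|\mXitilde\|_F^2\le\sum_i\bigl(\sum_j|(\mXitilde)_{ij}|\bigr)^2\le n(1-|\thetay|)^{-2}$. Hence $\bbE\|\mXitilde\be\|^2\le n\sigmaeps^2(1-|\thetay|)^{-2}$ directly, using $\bbE[\be\be^\top\mid\Xpop]=\sigmaeps^2\mI$. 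Cite this trace bound rather than Lemma~\ref{lem:epstilde:ctl}, which only gives an in-probability statement.
\item Separately, $\|\Gtilde\|\le\sqrt{\|\Gtilde\|_1\|\Gtilde\|_\infty}\le C$, which uses only the stated row and column sum bounds on $\Gtilde$.
\end{itemize}
Together these yield $\bbE\|\Y\|^2\le Cn$ and $\bbE\|\Gtilde\Y\|^2\le Cn$, and your Cauchy--Schwarz step then finishes the proof.
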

\begin{proof}
	Recalling the structure of $\ZlatOracle$ from Equation~\eqref{eq:def:ZlatOracle},
	\begin{equation*} \begin{aligned}
	\left\| \frac{1}{n} \bbE \brac*{\ZlatOracle^\top \Y} \right\|
	 & \le
	\frac{1}{n} \left\| \bbE \1_n^\top \Y \right\|
	+
	\frac{1}{n} \left\| \bbE \W^\top \Y \right\|
	+
	\frac{1}{n} \left\| \bbE \Xpop^\top \Y \right\|
	+
	\frac{1}{n} \left\| \bbE \Y^\top \Gtilde^\top \Y \right\| .
	\end{aligned} \end{equation*}
	Expanding $\Y = \mXitilde\left(\mLtilde + \be \right)$ as in Equation~\eqref{eq:Y:XitildeL:latent} and using the fact that $\be$ is conditionally independent of $\mA$ conditional on $\Xpop$ under the latent contagion model, the moment bounds in Assumption~\ref{assum:latent-oracle} imply
	\begin{equation*} \begin{aligned}
	\left\| \frac{1}{n} \bbE \brac*{\ZlatOracle^\top \Y} \right\|
	& \le C +
	\frac{1}{n} \left\| \bbE \Y^\top \Gtilde^\top \Y \right\| .
	\end{aligned} \end{equation*}
	Again recalling the decomposition of $\Y$ and using the independence structure of the latent contagion model,
	\begin{equation*} \begin{aligned}
	\left\| \frac{1}{n} \bbE \brac*{\ZlatOracle^\top \Y} \right\|
	 & \le
	C
	+
	\frac{1}{n}
	\bbE \left\| \mLtilde^\top \mXitilde^\top \Gtilde^\top \mXitilde \mLtilde\right\|
	+
	\frac{1}{n}
	\bbE \left\| \be^\top \mXitilde^\top \Gtilde^\top \mXitilde \be \right\| \\
	 & \le
	C
	+ \frac{C}{n} \bbE \left\| \mLtilde \right\|^2
	+ \frac{C}{n} \bbE \left\| \be \right\|^2,
	\end{aligned} \end{equation*}
	where the second inequality follows from submultiplicativity and Lemma~\ref{lem:IbgyG:invertible} (note that while this result is stated for $\G$, it holds for $\Gtilde$, as well).
	Items~\ref{item:assum:oracle:boundedCovariates} and ~\ref{item:assum:oracle:epsilonMoments} in Assumption~\ref{assum:latent-oracle} then imply
	\begin{equation*}
		\left\| \frac{1}{n} \bbE \brac*{\ZlatOracle^\top \Y} \right\|
		\le C,
	\end{equation*}
	as we set out to show.
\end{proof}

\begin{lemma} \label{lem:proj-equiv:latcon:linearTerm:diff}
	Under the conditions of Lemma~\ref{lem:proj-equiv:latent},
	\begin{equation*}
		\left\| \bbE \left( \ZpeerOracle - \ZlatOracle \right)^\top
		\ZlatOracle \right\|
		= o( \sqrt{n} ) .
	\end{equation*}
\end{lemma}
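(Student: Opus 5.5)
The columns $\1_n,\W,\Xpop$ of $\ZpeerOracle$ and $\ZlatOracle$ coincide, so $\ZpeerOracle-\ZlatOracle$ has a single nonzero column, namely $(\G-\Gtilde)\Y$. The target quantity therefore reduces to the vector
\begin{equation*}
\bbE\bigl[(\ZlatOracle^\top(\G-\Gtilde)\Y)\bigr]\in\R^{p+d+2}.
\end{equation*}
We bound it entry by entry: each entry has the form $\bbE\,\vu^\top(\G-\Gtilde)\Y$, where $\vu$ ranges over $\1_n$, the columns of $\W$, the columns of $\Xpop$, and $\Gtilde\Y$. Under latent contagion, Equation~\eqref{eq:Y:XitildeL:latent} gives $\Y=\mXitilde(\mLtilde+\be)$. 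Here $\mXitilde$ and $\mLtilde$ depend only on $(\W,\Xpop)$, and $\be$ is independent of $\A$, so every such $\vu$ and $\Y$ is conditionally independent of $\mA-\mApop$ given $\Xpop$. This lets us use the nonasymptotic lemmas above.

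We split $\Y=\mXitilde\mLtilde+\mXitilde\be$ and handle each piece with one of the two linear-difference lemmas.

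\textbf{Deterministic part.} Entries of the form $\bbE\,\vu^\top(\G-\Gtilde)\mXitilde\mLtilde$ are controlled by Lemma~\ref{lem:proj-equiv:GGtilde:lineardiff:bounded:nonasy}, applied with the roles transposed. This gives a bound of
\begin{equation*}
C n\nu_n\sum_i \bbE\,\frac{\|\vu\|_\infty\|\mXitilde\mLtilde\|_\infty}{\dtilde_i^2}.
\end{equation*}
Item~\ref{item:assum:oracle:boundedCovariates} of Assumption~\ref{assum:latent-oracle} bounds $\|\mLtilde\|_\infty$. Since $\Gtilde$ is row-stochastic, $\mXitilde=\sum_q\thetay^q\Gtilde^q$ satisfies $\|\mXitilde\|_{\infty\to\infty}\le(1-|\thetay|)^{-1}$. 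Hence the sup norms of $\mXitilde\mLtilde$, of $\Gtilde\mXitilde\mLtilde$, and of the bounded covariate columns are all $O(1)$.

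\textbf{Noise part.} For entries involving $\be$ we use Lemma~\ref{lem:proj-equiv:GGtilde:lineardiff:meanzero:nonasy}. Cross terms between a bounded vector and $\mXitilde\be$ vanish, because $\be$ has mean zero and is independent of everything else. The one remaining quadratic-in-$\be$ term is $\bbE\,\be^\top\mXitilde^\top\Gtilde^\top(\G-\Gtilde)\mXitilde\be$. Conditioning on $(\A,\Xpop)$ turns it into $\sigmaeps^2\,\bbE\,\trace\bigl[\mXitilde^\top\Gtilde^\top(\G-\Gtilde)\mXitilde\bigr]$. We then expand $\G-\Gtilde=\Dtilde^{-1}(\A-\mApop)+(\D^{-1}-\Dtilde^{-1})\A$ exactly as in that lemma's proof. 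The first-order terms have mean zero, and the second-order terms contribute $O(\sigmaeps^2 n\nu_n\sum_i\bbE\,\dtilde_i^{-2})$.

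Putting the two parts together, Assumption~\ref{assum:equiv:LandX} gives
\begin{equation*}
\bigl\|\bbE(\ZpeerOracle-\ZlatOracle)^\top\ZlatOracle\bigr\| \le C n\nu_n\sum_i\bbE\,\dtilde_i^{-2}\,(1+\sigmaeps^2)=o(\sqrt n).
\end{equation*}
I read Equation~\eqref{eq:assum:Edegrecip} as ensuring $n\nu_n\sum_i\bbE\,\dtilde_i^{-2}=o(\sqrt n)$; if the stated normalization differs, the factor of $n$ will need to be reconciled.

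The main obstacle is the trace term, because $\mXitilde$ and $\Gtilde$ sit between $\be$ and the difference $\G-\Gtilde$. The trace does not factor through the diagonal-only structure used in Lemma~\ref{lem:proj-equiv:GGtilde:lineardiff:meanzero:nonasy}. I would first try to bound $\bigl|\trace(\mathbf{M}(\G-\Gtilde))\bigr|$ with $\mathbf{M}=\mXitilde\mXitilde^\top\Gtilde^\top$ by writing it as $\sum_i \bigl(\mathbf{M}^\top\bigr)_{i\cdot}(\G-\Gtilde)_{\cdot i}$, i.e. the sum over $i$ of $e_i^\top\mathbf{M}^\top(\G-\Gtilde)e_i$. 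I would then apply Lemma~\ref{lem:proj-equiv:GGtilde:lineardiff:bounded:nonasy} to each $i$ with $\vu=\mathbf{M}e_i$ and $\vv=e_i$. The $\infty$-norms are uniformly bounded, but summing over $i$ costs an extra factor of $n$. To avoid that loss, I would instead exploit the fact that the rows of $\mathbf{M}$ have $\ell_1$ norm $O(1)$: expanding $d_i-\dtilde_i$ shows that only the diagonal second moments $\bbE(\A-\mApop)_{ij}^2$ survive, which should recover the $n\nu_n\sum_i\dtilde_i^{-2}$ rate.
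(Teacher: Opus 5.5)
Your argument is correct and shares the paper's skeleton. Like the paper, you expand $\Y=\mXitilde(\mLtilde+\be)$, discard the terms linear in $\be$, and send the $\mLtilde$--$\mLtilde$ part to Lemma~\ref{lem:proj-equiv:GGtilde:lineardiff:bounded:nonasy}. You depart from the paper in two places, both to your advantage.

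First, the paper equates $\|\bbE(\ZpeerOracle-\ZlatOracle)^\top\ZlatOracle\|$ with the single scalar $|\bbE\Y^\top(\G-\Gtilde)^\top\Gtilde\Y|$. That silently drops the entries paired with $\1_n$, $\W$ and $\Xpop$. You bound all $p+d+2$ entries, and the extra ones follow at once from the same lemma.

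Second, for the $\be$--$\be$ term the paper invokes Lemma~\ref{lem:proj-equiv:GGtilde:lineardiff:meanzero:nonasy} with $\vu=\vv=\mXitilde\be$. That lemma requires $\bbE u_iv_j=0$ for $i\neq j$. The two vectors actually appearing, $\mXitilde\be$ and $\Gtilde\mXitilde\be$, have cross-covariance $\sigmaeps^2\mXitilde\mXitilde^\top\Gtilde^\top$, which is not diagonal. Your reduction to $\sigmaeps^2\,\bbE\trace[\mathbf M(\G-\Gtilde)]$ with $\mathbf M=\mXitilde\mXitilde^\top\Gtilde^\top$, followed by a direct second-moment expansion, is what actually closes this step.

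One adjustment to that step. You do not need $O(1)$ $\ell_1$ norms for the rows of $\mathbf M$. Bounding them would require control of the column sums of $\mXitilde$, which is not among the stated assumptions. The entrywise bound $\max_{i,j}|M_{ij}|\le(1-|\thetay|)^{-2}$ suffices, and it follows from row-stochasticity of $\Gtilde$ alone. This works because the target rate already absorbs one factor of $n$.

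Concretely, with $\mathbf E=\A-\mApop$ and $\delta_j=d_j-\dtilde_j$, write $(\G-\Gtilde)_{ji}=E_{ji}/\dtilde_j-\delta_j\Gtilde_{ji}/\dtilde_j-\delta_jE_{ji}/\dtilde_j^2+\delta_j^2G_{ji}/\dtilde_j^2$. Conditionally on $\Xpop$:
\begin{itemize}
\item The first two terms have mean zero.
\item The third contributes at most $\nu_n\sum_j\dtilde_j^{-2}\sum_i|M_{ij}|\le Cn\nu_n\sum_j\dtilde_j^{-2}$, since only $\bbE[\delta_jE_{ji}\mid\Xpop]=\bbE[E_{ji}^2\mid\Xpop]$ survives.
\item The fourth contributes at most $\max_{i,j}|M_{ij}|\sum_j\bbE[\delta_j^2\mid\Xpop]/\dtilde_j^2\le Cn\nu_n\sum_j\dtilde_j^{-2}$, using $\sum_iG_{ji}\le1$.
\end{itemize}
There is also a harmless index slip: $\sum_i e_i^\top\mathbf M^\top(\G-\Gtilde)e_i$ equals $\trace[\mathbf M^\top(\G-\Gtilde)]$, not $\trace[\mathbf M(\G-\Gtilde)]$.

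Finally, your reading of Equation~\eqref{eq:assum:Edegrecip} is right. With $\sigmaeps^2$ fixed, multiplying by $n$ gives $n\nu_n\sum_i\bbE\dtilde_i^{-2}=o(\sqrt n)$, so there is nothing to reconcile.
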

\begin{proof}
	Recalling the structure of $\ZpeerOracle$ and $\ZlatOracle$ from Equations~\eqref{eq:def:ZpeerOracle} and~\eqref{eq:def:ZlatOracle}, respectively,
	\begin{equation*} 
		\left\| \bbE \left( \ZpeerOracle - \ZlatOracle \right)^\top
		\ZlatOracle \right\|
		=
		\left| \bbE \Y^\top \left( \G - \Gtilde \right)^\top
		\Gtilde \Y \right| .
	\end{equation*} 
	Expanding $\Y$ and recalling the notation established in Equation~\eqref{eq:def:XitildeL}, using the fact that $\be$ is mean zero and conditionally independent of $\mA-\mApop$ along with the triangle inequality,
	\begin{equation*} 
	\left\| \bbE \left( \ZpeerOracle - \ZlatOracle \right)^\top
	\ZlatOracle \right\|
	\le
	\left| \bbE \mLtilde^\top \mXitilde^\top
			\left( \G - \Gtilde \right)^\top \Gtilde
			\mXitilde \mLtilde \right|
	+ \left| \bbE \be^\top \mXitilde^\top
			\left( \G - \Gtilde \right)^\top \Gtilde
			\mXitilde \be \right| .
	\end{equation*} 
	Controlling the first right-hand term by
	Lemma~\ref{lem:proj-equiv:GGtilde:lineardiff:bounded:nonasy}
	with $\vu=\vv = \mXitilde \mLtilde$
	and controlling the second term by
	Lemma~\ref{lem:proj-equiv:GGtilde:lineardiff:meanzero:nonasy}
	with $\vu = \vv = \mXitilde \be$,
	\begin{equation*} 
	\left\| \bbE \left( \ZpeerOracle - \ZlatOracle \right)^\top
	\ZlatOracle \right\|
	\le
	C n \nu_n \sum_{i=1}^n \bbE
	\frac{\left\| \mXitilde \mLtilde \right\|_\infty^2
		+ \sigmaeps^2 }
		{\dtilde_i^2} .
	\end{equation*}
	Using
	$\| \mXitilde \mLtilde \|_\infty \le C\| \mLtilde\|_\infty$
	along with Assumptions~\ref{assum:peer-oracle}
	and~\ref{assum:equiv:LandX}
	completes the proof.
\end{proof}

\begin{lemma} \label{lem:proj-equiv:latcon:quadterm:diff}
	Under the conditions of Lemma~\ref{lem:proj-equiv:latent},
	\begin{equation*}
		\left\| \bbE
		\left(\ZpeerOracle - \ZlatOracle\right)^\top
		\left( \ZpeerOracle - \ZlatOracle \right) \right\|
		= o( \sqrt{n} ).
	\end{equation*}
\end{lemma}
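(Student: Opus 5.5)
Since $\ZpeerOracle$ and $\ZlatOracle$ differ only in their last column, I first reduce the matrix to a scalar. By the structure in Equations~\eqref{eq:def:ZpeerOracle} and~\eqref{eq:def:ZlatOracle}, $\ZpeerOracle - \ZlatOracle = \begin{bmatrix} \zerovec & \zerovec & \zerovec & (\G-\Gtilde)\Y \end{bmatrix}$. Hence the matrix in question has a single nonzero entry,
\begin{equation*}
\left\| \bbE \left(\ZpeerOracle - \ZlatOracle\right)^\top \left(\ZpeerOracle - \ZlatOracle\right) \right\| = \bbE \left\| (\G-\Gtilde)\Y \right\|^2 .
\end{equation*}
It therefore suffices to show that $\bbE \| (\G-\Gtilde)\Y \|^2 = o(\sqrt{n})$.

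Next I split $\Y$. Under latent contagion, Equation~\eqref{eq:Y:XitildeL:latent} gives $\Y = \mXitilde \mLtilde + \mXitilde \be$, and $(a+b)^2 \le 2a^2 + 2b^2$ yields
\begin{equation*}
\bbE \left\| (\G-\Gtilde)\Y \right\|^2 \le 2\,\bbE \left\| (\G-\Gtilde)\mXitilde\mLtilde \right\|^2 + 2\,\bbE \left\| (\G-\Gtilde)\mXitilde\be \right\|^2 .
\end{equation*}
Both $\mXitilde\mLtilde$ and $\mXitilde\be$ are functions of $\Xpop$, $\W$ and $\be$ only, because $\Gtilde$ depends on $\Xpop$ alone. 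They are therefore conditionally independent of $\mA-\mApop$, which is exactly what the nonasymptotic lemmas above require.

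For the deterministic part I apply Lemma~\ref{lem:proj-equiv:GGtilde:quaddiff:vbounded:nonasy} with $\vv = \mXitilde\mLtilde$. Since $\Gtilde$ is row-stochastic with nonnegative entries, the Neumann series gives $\|\mXitilde\|_{\infty\to\infty} \le 1/(1-|\thetay|)$. Uniform boundedness of $\W$ and $\Xpop$ from Assumption~\ref{assum:latent-oracle} then gives $\|\mXitilde\mLtilde\|_\infty \le C$, so this term is at most $C n \nu_n \sum_i \bbE \dtilde_i^{-2}$.

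The noise part is the main obstacle. The entries of $\mXitilde\be$ are correlated, so Lemma~\ref{lem:proj-equiv:GGtilde:quaddiff:meanzero:nonasy} does not apply directly. I would first try conditioning on $\Xpop$ and $\mA$ and using independence of $\be$:
\begin{equation*}
\bbE \left\| (\G-\Gtilde)\mXitilde\be \right\|^2 = \sigmaeps^2\, \bbE \left\| (\G-\Gtilde)\mXitilde \right\|_F^2 \le \sigmaeps^2\, \bbE \left[ \left\| \G-\Gtilde \right\|_F^2 \left\| \mXitilde \right\|^2 \right].
\end{equation*}
This bound needs $\|\mXitilde\|$ to be bounded in spectral norm. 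That is delicate because $\Gtilde$ is not symmetric: Lemma~\ref{lem:IbgyG:invertible} controls eigenvalues, not singular values. I would handle this using $\|\Gtilde\|_1$ bounded, from the uniformly bounded column sums in Assumption~\ref{assum:latent-oracle}, and $\|\Gtilde\|_\infty = 1$, together with a condition $|\thetay|\,\|\Gtilde\| < 1$, to bound $\|\mXitilde\| \le C$. Failing that, I would expand $\mXitilde = \sum_q \thetay^q \Gtilde^q$ and bound term by term.

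With $\|\mXitilde\| \le C$, the Frobenius computation in the proof of Lemma~\ref{lem:proj-equiv:GGtilde:quaddiff:meanzero:nonasy}, Equation~\eqref{eq:proj-equiv:latcon:Gdiffeps:tri} onward, gives $\bbE\|\G-\Gtilde\|_F^2 \le C n\nu_n \sum_i \bbE\dtilde_i^{-2}$. Combining the two parts, $\bbE\|(\G-\Gtilde)\Y\|^2 \le C(1+\sigmaeps^2)\, n \nu_n \sum_i \bbE \dtilde_i^{-2}$. Assumption~\ref{assum:equiv:LandX} states $\nu_n\sum_i \bbE\dtilde_i^{-2} = o(n^{-1/2})$, so this bound is $o(\sqrt{n})$, which completes the argument.
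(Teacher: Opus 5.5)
Your reduction to $\bbE\|(\G-\Gtilde)\Y\|^2$, the split $\Y=\mXitilde\mLtilde+\mXitilde\be$, and your bound on the $\mXitilde\mLtilde$ part via Lemma~\ref{lem:proj-equiv:GGtilde:quaddiff:vbounded:nonasy} are exactly what the paper does. For the noise part, the paper simply applies Lemma~\ref{lem:proj-equiv:GGtilde:quaddiff:meanzero:nonasy} with $\vv=\mXitilde\be$. You are right that this lemma's hypothesis fails, since the entries of $\mXitilde\be$ have covariance $\sigmaeps^2\mXitilde\mXitilde^\top$. Your identity $\bbE\|(\G-\Gtilde)\mXitilde\be\|^2=\sigmaeps^2\,\bbE\|(\G-\Gtilde)\mXitilde\|_F^2$ is the correct replacement.

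The gap is your next step. Pulling out $\|\mXitilde\|$ in spectral norm needs a bound that the stated assumptions do not supply. Lemma~\ref{lem:IbgyG:invertible} only locates eigenvalues. Bounded column sums give $\|\Gtilde\|\le\sqrt{C}$, but nothing ties $|\thetay|$ to $\sqrt{C}$. Your condition $|\thetay|\,\|\Gtilde\|<1$ is therefore an extra hypothesis, and a restrictive one, because $\Gtilde$ is not symmetric and its norm typically exceeds one. For example, a rank-one $\Apop=\xi\xi^\top$ gives $\Gtilde=\1_n p^\top$ with $p=\xi/\1_n^\top\xi$, so $\|\Gtilde\|=\sqrt{n}\,\|p\|>1$ unless $\xi$ is constant, while $\thetay$ may lie anywhere in $(-1,1)$. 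Your Neumann fallback hits the same wall if each term is bounded by $\|\G-\Gtilde\|_F\|\Gtilde^q\|$.

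You do not need $\|\mXitilde\|$ at all. Apply the decomposition from the proof of Lemma~\ref{lem:proj-equiv:GGtilde:quaddiff:meanzero:nonasy} to the product itself:
\[
(\G-\Gtilde)\mXitilde=\Dtilde^{-1}(\A-\Apop)\mXitilde+\Dtilde^{-1}(\Dtilde-\D)\G\mXitilde .
\]
Since $\Gtilde$ is nonnegative and row-stochastic, each row of $\mXitilde=\sum_q\thetay^q\Gtilde^q$ has $\ell_1$-norm at most $(1-|\thetay|)^{-1}$. Hence $\|\mXitilde\|_F^2\le n(1-|\thetay|)^{-2}$. Each row of $\G\mXitilde$ is a convex combination of rows of $\mXitilde$ (or zero), so it has Euclidean norm at most $(1-|\thetay|)^{-1}$. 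Because $\mXitilde$ is a function of $\Xpop$ alone,
\begin{equation*}
\bbE\left[ \|\Dtilde^{-1}(\A-\Apop)\mXitilde\|_F^2 \mid \Xpop \right] \le \nu_n \|\mXitilde\|_F^2 \sum_{i=1}^n \frac{1}{\dtilde_i^2}
~~\text{ and }~~
\|\Dtilde^{-1}(\Dtilde-\D)\G\mXitilde\|_F^2 \le \frac{1}{(1-|\thetay|)^2}\sum_{i=1}^n \frac{(d_i-\dtilde_i)^2}{\dtilde_i^2} .
\end{equation*}
Both terms have expectation at most $C n\nu_n\sum_i\bbE\dtilde_i^{-2}$. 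So $\bbE\|(\G-\Gtilde)\mXitilde\be\|^2\le C\sigmaeps^2 n\nu_n\sum_i\bbE\dtilde_i^{-2}=o(\sqrt{n})$ by Assumption~\ref{assum:equiv:LandX}, using only row sums, and the rest of your argument stands.

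Equivalently, your Neumann fallback works if you bound $\bbE\|(\G-\Gtilde)\Gtilde^q\|_F^2$ in this way, uniformly in $q$, using that the rows of $\Gtilde^q$ and $\G\Gtilde^q$ are probability vectors (or zero). Then sum the series with Minkowski's inequality.
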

\begin{proof}
	Recalling the structure of $\ZpeerOracle$ and $\ZlatOracle$ from Equations~\eqref{eq:def:ZpeerOracle} and~\eqref{eq:def:ZlatOracle}, respectively,
	\begin{equation} \label{eq:proj-equiv:latcon:quadterm:diff:start}
		\left\| \bbE \left(\ZpeerOracle - \ZlatOracle\right)^\top
		\left( \ZpeerOracle - \ZlatOracle \right) \right\|
		= \left| \bbE \Y^\top \left( \mG - \Gtilde \right)^\top
		\left( \mG - \Gtilde \right) \Y \right| .
	\end{equation}
	Under the latent contagion model in Equation~\eqref{eq:lim-latent}, we have $\Y = \mXitilde \left( \mLtilde + \be \right)$, where $\mXitilde$ and $\mLtilde$ are as in Equation~\eqref{eq:def:XitildeL}.
	Since $\be$ is mean zero and independent of $\mA$ conditional on $\Xpop$,
	\begin{equation*}
		\bbE\Y^\top \left( \mG - \Gtilde \right)^\top
		\left( \mG - \Gtilde \right) \Y
		= \bbE \left\| \left( \mG - \Gtilde \right) \mXitilde \mLtilde \right\|^2
		+ \bbE \left\| \left( \mG - \Gtilde \right) \mXitilde \be \right\|^2 ,
	\end{equation*}
	so that applying the triangle inequality to Equation~\eqref{eq:proj-equiv:latcon:quadterm:diff:start} yields
	\begin{equation*} 
	\left\| \bbE \left(\ZpeerOracle - \ZlatOracle\right)^\top
	\left( \ZpeerOracle - \ZlatOracle \right) \right\|
	\le 
	\bbE \left\| \left( \mG - \Gtilde \right) 
			\mXitilde \mLtilde \right\|^2
	+ \bbE \left\| \left( \mG - \Gtilde \right) 
			\mXitilde \be \right\|^2 .
	\end{equation*}
	Using Lemma~\ref{lem:proj-equiv:GGtilde:quaddiff:vbounded:nonasy}
	with $\vv = \mXitilde \mLtilde$ to control the first term
	and using
	Lemma~\ref{lem:proj-equiv:GGtilde:quaddiff:meanzero:nonasy}
	with $\vv = \mXitilde \be$,
	\begin{equation*} 
	\left\| \bbE \left(\ZpeerOracle - \ZlatOracle\right)^\top
	\left( \ZpeerOracle - \ZlatOracle \right) \right\|
	\le 
	C n \nu_n \sum_{i=1}^n \bbE 
	\frac{\| \mXitilde \mLtilde \|_\infty^2 + \sigmaeps^2}{\dtilde_i^2} .
	\end{equation*}
	Using 
	$\| \mXitilde \mLtilde \|_\infty \le C\| \mLtilde\|_\infty$
	along with Assumptions~\ref{assum:peer-oracle}
	and~\ref{assum:equiv:LandX}
	completes the proof.
\end{proof}

\begin{lemma} \label{lem:proj-equiv:latcon:covmxs}
Under the conditions of Lemma~\ref{lem:proj-equiv:latent},
let $\SigpeerOracle$ and $\SiglatOracle$ be as in Equation~\eqref{eq:def:Sigmamxs}.
Then
\begin{equation*}
\left\| \bbE \SigpeerOracle - \bbE \SiglatOracle \right\|
= o( n^{-1/2} ) .
\end{equation*}
\end{lemma}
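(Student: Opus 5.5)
The plan is to write $\ZpeerOracle = \ZlatOracle + \Delta$, where $\Delta = \ZpeerOracle - \ZlatOracle$. By Equations~\eqref{eq:def:ZpeerOracle} and~\eqref{eq:def:ZlatOracle}, $\Delta$ vanishes except in its last column, which equals $(\G-\Gtilde)\Y$. Expanding the Gram matrix gives
\begin{equation*}
n\left(\bbE \SigpeerOracle - \bbE \SiglatOracle\right)
= \bbE \Delta^\top \ZlatOracle + \bbE \ZlatOracle^\top \Delta + \bbE \Delta^\top \Delta .
\end{equation*}
The triangle inequality, together with invariance of the spectral norm under transposition, then bounds $n\|\bbE \SigpeerOracle - \bbE \SiglatOracle\|$ by $2\|\bbE \Delta^\top \ZlatOracle\| + \|\bbE \Delta^\top \Delta\|$.

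The two pieces are already controlled. Lemma~\ref{lem:proj-equiv:latcon:linearTerm:diff} gives $\|\bbE (\ZpeerOracle-\ZlatOracle)^\top \ZlatOracle\| = o(\sqrt{n})$, and Lemma~\ref{lem:proj-equiv:latcon:quadterm:diff} gives $\|\bbE (\ZpeerOracle-\ZlatOracle)^\top(\ZpeerOracle-\ZlatOracle)\| = o(\sqrt{n})$. Summing and dividing by $n$ yields $\|\bbE\SigpeerOracle - \bbE\SiglatOracle\| = o(n^{-1/2})$, which is the claim.

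The main thing to check is that Lemma~\ref{lem:proj-equiv:latcon:linearTerm:diff} bounds the whole block $\bbE \Delta^\top \ZlatOracle$ and not only its bottom-right entry. Its proof reduces to the scalar $\bbE \Y^\top(\G-\Gtilde)^\top \Gtilde\Y$, but $\Delta^\top \ZlatOracle$ also has a nonzero row of cross terms $\bbE \Y^\top(\G-\Gtilde)^\top \mB$, with $\mB$ ranging over $\1_n$, the columns of $\W$, and the columns of $\Xpop$.

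If needed, I will handle these entries the same way. Expand $\Y = \mXitilde(\mLtilde + \be)$ as in Equation~\eqref{eq:Y:XitildeL:latent}. The cross term in $\be$ has zero mean, since $\be$ is mean zero and independent of $\A$ and of $\mB$ given $\Xpop$. For the remaining term, apply Lemma~\ref{lem:proj-equiv:GGtilde:lineardiff:bounded:nonasy} with $\vu = \mXitilde\mLtilde$ and $\vv$ a column of $\mB$. Both vectors are independent of $\A-\Apop$ given $\Xpop$, and they have uniformly bounded sup-norms: for $\vu$ this follows from $\|\mXitilde\mLtilde\|_\infty \le C\|\mLtilde\|_\infty$ and Item~\ref{item:assum:oracle:boundedCovariates} of Assumption~\ref{assum:latent-oracle}, and for $\vv$ from the same item. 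The lemma then bounds each entry by $C n\nu_n \sum_i \bbE \dtilde_i^{-2}$. Since the number of columns $p+d+2$ is fixed, Assumption~\ref{assum:equiv:LandX} makes this $o(\sqrt{n})$. This gives the full-block bound, and the proof is complete.
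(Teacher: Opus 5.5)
Your proof is correct and follows the paper's argument: you use the same expansion of $\ZpeerOracle^\top\ZpeerOracle - \ZlatOracle^\top\ZlatOracle$, the same triangle inequality, and the same appeal to Lemmas~\ref{lem:proj-equiv:latcon:linearTerm:diff} and~\ref{lem:proj-equiv:latcon:quadterm:diff}. Your extra check is warranted, since the proof of Lemma~\ref{lem:proj-equiv:latcon:linearTerm:diff} equates the norm of the whole block with the single entry $\bbE \Y^\top(\G-\Gtilde)^\top\Gtilde\Y$; your bound on the other entries of that row, via Lemma~\ref{lem:proj-equiv:GGtilde:lineardiff:bounded:nonasy}, bounded sup-norms, and Assumption~\ref{assum:equiv:LandX}, correctly fills that gap.
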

\begin{proof}
	Adding and subtracting appropriate quantities and applying the triangle inequality,
	\begin{equation*}
		\left\| \bbE \SigpeerOracle - \bbE \SiglatOracle \right\|
		\le
		\frac{2}{n} \left\| \bbE \left( \ZpeerOracle - \ZlatOracle \right)^\top
		\ZlatOracle \right\|
		+ \frac{1}{n} \left\| \bbE \left(\ZpeerOracle - \ZlatOracle\right)^\top \left( \ZpeerOracle - \ZlatOracle \right)
		\right\| .
	\end{equation*}
	Applying Lemmas~\ref{lem:proj-equiv:latcon:linearTerm:diff} and~\ref{lem:proj-equiv:latcon:quadterm:diff} yields the result.
\end{proof}

\begin{lemma} \label{lem:proj-equiv:latent:peerCovmx}
	Under the conditions of Lemma~\ref{lem:proj-equiv:latent},
	$\bbE \SigpeerOracle$ is invertible and
	\begin{equation*}
		\left\| \left[ \bbE \SigpeerOracle \right]^{-1} \right\|
		\le C .
	\end{equation*}
\end{lemma}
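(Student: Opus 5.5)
The plan is a perturbation argument around $\bbE \SiglatOracle$. Assumption~\ref{assum:equiv:limcov} says that $\bbE \SiglatOracle = \frac{1}{n}\bbE \ZlatOracle^\top \ZlatOracle$ converges to an invertible limit, which I will call $\Sigma_\infty$. Hence, for all $n$ large enough, $\sigma_{\min}(\bbE \SiglatOracle) \ge \frac{1}{2}\sigma_{\min}(\Sigma_\infty) > 0$. Lemma~\ref{lem:proj-equiv:latcon:covmxs} gives
\begin{equation*}
\left\| \bbE \SigpeerOracle - \bbE \SiglatOracle \right\| = o(n^{-1/2}),
\end{equation*}
so the difference is in particular $o(1)$. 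Weyl's inequality for singular values then gives
\begin{equation*}
\sigma_{\min}(\bbE \SigpeerOracle) \ge \sigma_{\min}(\bbE \SiglatOracle) - \left\| \bbE \SigpeerOracle - \bbE \SiglatOracle \right\| \ge \tfrac{1}{4}\sigma_{\min}(\Sigma_\infty)
\end{equation*}
for all $n$ sufficiently large. It follows that $\bbE \SigpeerOracle$ is invertible with $\| [\bbE \SigpeerOracle]^{-1} \| \le 4/\sigma_{\min}(\Sigma_\infty) =: C$.

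Two points need checking. First, the expectations must be finite, so that $\bbE \SigpeerOracle$ is well defined. The only column of $\ZpeerOracle$ that differs from $\ZlatOracle$ is $\G\Y$. We have $\|\G\Y\|^2 \le \|\Y\|^2$ because $\|\G\|\le 1$, or alternatively because $\G$ is row stochastic and Jensen's inequality applies. In addition, $\bbE\|\Y\|^2 \le C n$ by the decomposition $\Y = \mXitilde(\mLtilde+\be)$ in Equation~\eqref{eq:Y:XitildeL:latent}, together with Lemma~\ref{lem:IbgyG:invertible} and items~\ref{item:assum:oracle:boundedCovariates}--\ref{item:assum:oracle:epsilonMoments} of Assumption~\ref{assum:latent-oracle}. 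Second, the lemma as stated is for finite $n$. For small $n$, invertibility must either be absorbed into the phrase "for $n$ sufficiently large", which is implicit in the asymptotic statement, or be taken from the full-rank conditions of Assumption~\ref{assum:latent-oracle}. I would state explicitly that the bound holds for all $n \ge n_0$, and that $C$ may be enlarged to cover the finitely many remaining $n$ at which the matrix is invertible.

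The main obstacle is not this argument, which is short, but making sure that Lemma~\ref{lem:proj-equiv:latcon:covmxs} really delivers operator-norm control of the \emph{expected} Gram matrices. That lemma rests on Lemmas~\ref{lem:proj-equiv:latcon:linearTerm:diff} and~\ref{lem:proj-equiv:latcon:quadterm:diff}. Only their $o(1)$ consequence is needed here, which is much weaker than what they provide. So even if those bounds were somewhat loose, any $o(n)$ control of $\|\bbE(\ZpeerOracle-\ZlatOracle)^\top\ZlatOracle\|$ and $\|\bbE(\ZpeerOracle-\ZlatOracle)^\top(\ZpeerOracle-\ZlatOracle)\|$ would suffice.
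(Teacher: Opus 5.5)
This is the paper's argument: Lemma~\ref{lem:proj-equiv:latcon:covmxs} combined with Assumption~\ref{assum:equiv:limcov}, where your Weyl bound on the smallest singular value spells out the perturbation step that the paper compresses into one line. One small correction to your side remark on finiteness: a row-stochastic $\G$ need not satisfy $\|\G\| \le 1$ in spectral norm, and the Jensen route would need bounded column sums of $\G$, whereas Assumption~\ref{assum:latent-oracle} only bounds those of $\Gtilde$; still, $\|\G\| \le \|\G\|_F \le \sqrt{n}$ already gives $\bbE\|\G\Y\|^2 < \infty$, so your main argument is unaffected.
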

\begin{proof}
	By Lemma~\ref{lem:proj-equiv:latcon:covmxs},
	\begin{equation*}
		\left\| \bbE \SigpeerOracle - \bbE \SiglatOracle  \right\|
		= o( n^{-1/2} ) .
	\end{equation*}
	Combining this with our assumption that $\bbE \SiglatOracle$ converges to an invertible matrix completes the proof.
\end{proof}

\begin{lemma} \label{lem:proj-equiv:latcon:precMxs}
	Under the conditions of Lemma~\ref{lem:proj-equiv:latent},
	$\SiglatOracle$ and $\SigpeerOracle$, as defined in Equation~\eqref{eq:def:Sigmamxs},
	are both invertible and obey
	\begin{equation*}
		\left\|
		\left( \bbE \SigpeerOracle \right)^{-1}
		- \left( \bbE \SiglatOracle \right)^{-1}
		\right\|
		= o( n^{-1/2} ) .
	\end{equation*}
\end{lemma}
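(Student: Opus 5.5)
The plan is the standard perturbation identity for matrix inverses, $B^{-1} - A^{-1} = B^{-1}(A - B)A^{-1}$. We apply it with $A = \bbE \SiglatOracle$ and $B = \bbE \SigpeerOracle$, so that
\begin{equation*}
\left( \bbE \SigpeerOracle \right)^{-1} - \left( \bbE \SiglatOracle \right)^{-1}
= \left( \bbE \SigpeerOracle \right)^{-1}
\left( \bbE \SiglatOracle - \bbE \SigpeerOracle \right)
\left( \bbE \SiglatOracle \right)^{-1} .
\end{equation*}
Submultiplicativity then bounds the spectral norm of the left-hand side by the product of three factors: the two inverse norms and the norm of the difference.

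First, invertibility of $\bbE \SiglatOracle$ with $\|(\bbE \SiglatOracle)^{-1}\| \le C$ for all $n$ large enough comes from Assumption~\ref{assum:equiv:limcov}. Since $\bbE \SiglatOracle$ converges to an invertible limit, its smallest singular value is eventually bounded below by half that of the limit; this is Weyl's inequality applied along the convergent sequence. Second, invertibility of $\bbE \SigpeerOracle$ with a uniformly bounded inverse is exactly Lemma~\ref{lem:proj-equiv:latent:peerCovmx}. Third, the middle factor is $o(n^{-1/2})$ by Lemma~\ref{lem:proj-equiv:latcon:covmxs}. Multiplying these gives $C \cdot o(n^{-1/2}) \cdot C = o(n^{-1/2})$, which is the claim.

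One point of interpretation needs care: the statement speaks of $\SiglatOracle$ and $\SigpeerOracle$ being invertible, but what is actually used, and what Lemma~\ref{lem:proj-equiv:latent:peerCovmx} supplies, is invertibility of their expectations. I would phrase the proof in terms of the expected matrices, which is what the display involves.

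I expect no step to be a real obstacle; everything is inherited from the earlier lemmas. The only subtlety is making the uniform lower bound on $\sigmamin(\bbE \SiglatOracle)$ explicit, namely that it holds for all sufficiently large $n$, so that the constant $C$ does not depend on $n$.
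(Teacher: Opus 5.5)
Your proposal is correct and matches the paper's proof. The paper also bounds the difference by $\|(\bbE \SigpeerOracle)^{-1}\|\,\|(\bbE \SiglatOracle)^{-1}\|\,\|\bbE \SigpeerOracle - \bbE \SiglatOracle\|$, implicitly via the resolvent identity you write out, with the same three ingredients: Lemma~\ref{lem:proj-equiv:latent:peerCovmx} and Assumption~\ref{assum:equiv:limcov} for the two bounded inverses, and Lemma~\ref{lem:proj-equiv:latcon:covmxs} for the $o(n^{-1/2})$ middle factor. Reading ``invertible'' as a statement about the expected matrices is exactly how the paper's proof treats it.
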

\begin{proof}
	By Lemma~\ref{lem:proj-equiv:latent:peerCovmx} and Assumption~\ref{assum:equiv:limcov}, both $\bbE \SigpeerOracle$ and $\bbE \SiglatOracle$ are invertible for $n$ suitably large.
	Applying submultiplicativity,
	\begin{equation*} \begin{aligned}
			\left\| \left( \bbE \SigpeerOracle \right)^{-1}
			- \left( \bbE \SiglatOracle \right)^{-1} \right\|
			 & \le
			\left\| \left( \bbE \SigpeerOracle \right)^{-1}  \right\|
			\left\| \left( \bbE \SiglatOracle \right)^{-1} \right\|
			\left\| \bbE \SigpeerOracle - \bbE \SiglatOracle \right\|.
		\end{aligned} \end{equation*}
	Applying Lemma~\ref{lem:proj-equiv:latent:peerCovmx} and Assumption~\ref{assum:equiv:limcov} again, it follows that
	\begin{equation*}
		\left\| \left( \bbE \SigpeerOracle \right)^{-1}
		- \left( \bbE \SiglatOracle \right)^{-1} \right\|
		\le C \left\| \bbE \SigpeerOracle - \bbE \SiglatOracle \right\|,
	\end{equation*}
	and Lemma~\ref{lem:proj-equiv:latcon:covmxs} yields our desired result.
\end{proof}

\begin{proof}[Proof of Lemma~\ref{lem:proj-equiv:latent}]
	Recalling the definitions from Equation~\eqref{eq:def:Sigmamxs},
	\begin{equation*}
		\norm*{\bbE \brac*{\ZlatOracle^\top \ZlatOracle}^{-1} \bbE \brac*{\ZlatOracle^\top \Y} - \bbE \brac*{\ZpeerOracle^\top \ZpeerOracle}^{-1} \bbE \brac*{\ZpeerOracle^\top \Y}}
		=
		\norm*{
			\left[ \bbE \SiglatOracle \right]^{-1}
			\bbE \frac{ \ZlatOracle^\top \Y }{ n }
			- \left[ \bbE \SigpeerOracle \right]^{-1}
			\bbE \frac{ \ZpeerOracle^\top \Y}{n} }.
	\end{equation*}

	By the triangle inequality, it will suffice for us to show that
	\begin{equation} \label{eq:proj-equiv:latent:term1}
		\left\| \left( \bbE \SigpeerOracle \right)^{-1}
		\bbE \frac{ \left( \ZpeerOracle - \ZlatOracle \right)^\top \Y }{n}
		\right\| = o( n^{-1/2} )
	\end{equation}
	and
	\begin{equation} \label{eq:proj-equiv:latent:term2}
		\left\|
		\left[ \left( \bbE \SigpeerOracle \right)^{-1}
			- \left( \bbE \SiglatOracle \right)^{-1} \right]
		\bbE \frac{ \ZlatOracle^\top \Y}{n} \right\|
		= o( n^{-1/2} ) .
	\end{equation}

	To see Equation~\eqref{eq:proj-equiv:latent:term2}, observe that by submultiplicativity,
	\begin{equation*}
		\left\|
		\left[ \left( \bbE \SigpeerOracle \right)^{-1}
			- \left( \bbE \SiglatOracle \right)^{-1} \right]
		\bbE \frac{ \ZlatOracle^\top \Y}{n} \right\|
		\le
		\left\|\left( \bbE \SigpeerOracle \right)^{-1}
		- \left( \bbE \SiglatOracle \right)^{-1} \right\|
		\left\| \bbE \frac{ \ZlatOracle^\top \Y}{n} \right\|,
	\end{equation*}
	from which Lemmas~\ref{lem:proj-equiv:latcon:linearTerm} and~\ref{lem:proj-equiv:latcon:precMxs} yield Equation~\eqref{eq:proj-equiv:latent:term2}.

	To see Equation~\eqref{eq:proj-equiv:latent:term1}, note that submultiplicativity and Lemma~\ref{lem:proj-equiv:latent:peerCovmx} imply
	\begin{equation*}
		\left\| 
		\left( \bbE \SiglatOracle \right)^{-1} 
		\bbE 
		\frac{ \left( \ZpeerOracle - \ZlatOracle \right)^\top \Y }{n}
		\right\|
		\le
		C \left\| \frac{1}{n}
		\bbE \left( \ZpeerOracle - \ZlatOracle \right)^\top \Y
		\right\| .
	\end{equation*}
	An argument nearly identical to that used to prove Lemma~\ref{lem:proj-equiv:latcon:linearTerm:diff} yields Equation~\eqref{eq:proj-equiv:latent:term1}, completing the proof.
\end{proof}

\subsection{Projection equivalence under peer contagion}

A result analogous to Lemma~\ref{lem:proj-equiv:latent} holds under the peer contagion model, as described in Equation~\eqref{eq:lim-peer-red}.
For ease of notation, define
\begin{equation} \label{eq:def:XiL}
	\mXi = \left( \mI - \beta_y \G \right)^{-1} \in \R^{n \times n}
	~\text{ and }~
	\mLtilde = \1_n \thetanaught + \W \thetaw + \Xpop \thetax,
\end{equation}
and observe that under the peer contagion model of Equation~\eqref{eq:lim-peer-red}, we have
\begin{equation} \label{eq:Y:XiL:peer}
\Y = \mXi\left(\mLtilde + \be \right) .
\end{equation}

\begin{lemma} \label{lem:proj-equiv:peer}
	Under the peer contagion model in Equation~\eqref{eq:lim-peer-red}, suppose that Assumptions~\ref{assum:peer-oracle},~\ref{assum:equiv:limcov},~\ref{assum:equiv:edges} and~\ref{assum:equiv:LandX} hold.
	Then
	\begin{equation*}
	\left\| \left( \bbE \ZpeerOracle^\top \ZpeerOracle \right)^{-1} 
				\bbE \ZpeerOracle^\top \Y 
			- \left(\bbE \ZlatOracle^\top \ZlatOracle \right)^{-1} 
				\bbE \ZlatOracle^\top \Y
	\right\|
	= o( n^{-1/2} ) .
	\end{equation*}
\end{lemma}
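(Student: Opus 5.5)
The argument mirrors the proof of Lemma~\ref{lem:proj-equiv:latent}, with the roles of $\G$ and $\Gtilde$ exchanged and $\mXi=(\mI-\betay\G)^{-1}$ in place of $\mXitilde$. Write the difference of projection parameters in terms of $\SigpeerOracle,\SiglatOracle$ from Equation~\eqref{eq:def:Sigmamxs}. By the triangle inequality it suffices to show three things:
\begin{enumerate}
\item[(a)] $\|n^{-1}\bbE \ZpeerOracle^\top \Y\| \le C$;
\item[(b)] $\|(\bbE\SigpeerOracle)^{-1}-(\bbE\SiglatOracle)^{-1}\| = o(n^{-1/2})$;
\item[(c)] $\|(\bbE\SiglatOracle)^{-1}\, n^{-1}\bbE(\ZpeerOracle-\ZlatOracle)^\top\Y\| = o(n^{-1/2})$.
\end{enumerate}
Step (a) follows as in Lemma~\ref{lem:proj-equiv:latcon:linearTerm}. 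Use Equation~\eqref{eq:Y:XiL:peer}, the bound $\|\mXi\|\le C$ from Lemma~\ref{lem:IbgyG:invertible}, and $\|\G\|\le1$. The boundedness and moment items of Assumption~\ref{assum:peer-oracle} then give $n^{-1}\bbE\|\Y\|^2\le C$. Step (b) reduces, exactly as in Lemmas~\ref{lem:proj-equiv:latcon:covmxs}--\ref{lem:proj-equiv:latcon:precMxs} together with Assumption~\ref{assum:equiv:limcov}, to two bounds: $\|\bbE(\ZpeerOracle-\ZlatOracle)^\top\ZpeerOracle\|=o(\sqrt n)$ and $\bbE\|(\G-\Gtilde)\Y\|^2=o(\sqrt n)$. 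Step (c) needs only $|\bbE\vu^\top(\G-\Gtilde)\Y| = o(\sqrt n)$ for $\vu$ a column of $\ZlatOracle$.

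The quadratic term is handled first, since it is the easy one. The one genuine difference from the latent case is that $\Y$ now depends on $\A-\Apop$ through $\mXi$, so Lemmas~\ref{lem:proj-equiv:GGtilde:quaddiff:vbounded:nonasy}--\ref{lem:proj-equiv:GGtilde:lineardiff:meanzero:nonasy} cannot be applied with $\vv=\Y$ directly. To get around this, decompose $\Y=\Ytilde+\zeta_n$, with $\Ytilde=\mXitilde(\mLtilde+\be)$ independent of $\A-\Apop$ given $\Xpop$, as in Lemma~\ref{lem:Y:peercon}. The resolvent identity gives
\[
\zeta_n=(\mXi-\mXitilde)(\mLtilde+\be)=\betay\,\mXi(\G-\Gtilde)\Ytilde .
\]
Hence $\|(\G-\Gtilde)\Y\|\le C\|(\G-\Gtilde)\Ytilde\|$, because $\|\G-\Gtilde\|\le2$ and $\|\mXi\|\le C$. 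Lemmas~\ref{lem:proj-equiv:GGtilde:quaddiff:vbounded:nonasy} and~\ref{lem:proj-equiv:GGtilde:quaddiff:meanzero:nonasy} then apply with $\vv=\mXitilde\mLtilde$ and $\vv=\mXitilde\be$. Combined with Assumption~\ref{assum:equiv:LandX} and $\|\mXitilde\mLtilde\|_\infty\le C$, this gives $\bbE\|(\G-\Gtilde)\Y\|^2 = o(\sqrt n)$.

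The linear terms are the main obstacle. Consider $\bbE\vu^\top(\G-\Gtilde)\Y$ with $\vu$ a column of $\ZlatOracle$ or $\ZpeerOracle$. Here $\vu$ may contain $\Gtilde\Y$ or $\G\Y$, which again depend on the edges. First replace every occurrence of $\Y$ by $\Ytilde$. Each such swap changes the quantity by $\bbE[\|(\G-\Gtilde)\Ytilde\|\,\|(\G-\Gtilde)\Y\|]$ or by a cross term of the same form. By Cauchy--Schwarz and the quadratic bound just established, these changes are $o(\sqrt n)$. The swapped $\G\Y$ is handled via $\G\Y=\Gtilde\Y+(\G-\Gtilde)\Y$. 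What remains are terms $\bbE\vu^\top(\G-\Gtilde)^\top\vv$ with $\vu,\vv$ built from $\Ytilde$ or covariates, all independent of $\A-\Apop$ given $\Xpop$. Split each such $\vu,\vv$ into its $\mXitilde\mLtilde$ part and its $\mXitilde\be$ part, and apply Lemma~\ref{lem:proj-equiv:GGtilde:lineardiff:bounded:nonasy} to the bounded parts and Lemma~\ref{lem:proj-equiv:GGtilde:lineardiff:meanzero:nonasy} to the noise parts. The mixed noise cross terms vanish because $\be$ has mean zero. Two points need care. First, $\mXitilde\be$ only has correlated rather than uncorrelated entries, so the hypothesis $\bbE u_iv_j=0$ of Lemma~\ref{lem:proj-equiv:GGtilde:lineardiff:meanzero:nonasy} fails. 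I would handle this by expanding the quadratic form as $\bbE\,\be^\top \mXitilde^\top(\G-\Gtilde)^\top \Gtilde\mXitilde\be=\sigmaeps^2\bbE\,\trace[\cdots]$ and bounding the trace entrywise, repeating that lemma's argument. Second, Assumption~\ref{assum:equiv:LandX} must absorb the factor $n\nu_n$ appearing in these bounds, as it does in the latent case. Collecting everything yields (a)--(c) and completes the proof.
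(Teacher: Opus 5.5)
Your proposal is correct. It shares the paper's outer skeleton: the same triangle inequality into (a)--(c), with (b) reduced via Assumption~\ref{assum:equiv:limcov} to $\|\bbE(\ZpeerOracle-\ZlatOracle)^\top\ZpeerOracle\|=o(\sqrt n)$ and $\bbE\|(\G-\Gtilde)\Y\|^2=o(\sqrt n)$. Where it differs is in how it handles the dependence of $\Y$ on the edges.

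\emph{The paper's route.} It expands $\mXi=\sum_q\betay^q\G^q$ and proves the recursive Lemmas~\ref{lem:proj-equiv:peercon:qrRecurse:bounded} and~\ref{lem:proj-equiv:peercon:qrRecurse:meanzero}. These peel each $\G^q$ into $\Gtilde^q$ plus corrections via the power-difference identity, incur a factor $qr$, and are then summed against $|\betay|^{q+r}$.

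\emph{Your route.} You apply the resolvent identity $\Y=\Ytilde+\betay\mXi(\G-\Gtilde)\Ytilde$ once. The first-order part is then literally the latent-case computation with the edge-independent $\Ytilde$. Everything else is second order in $\G-\Gtilde$ and falls to Cauchy--Schwarz. This is shorter and avoids all $q,r$ bookkeeping. It does need a uniform bound $\|\mXi\|\le C$, which Lemma~\ref{lem:IbgyG:invertible} does not literally supply, since it is an eigenvalue statement about a non-normal matrix. The paper, however, leans on the same kind of operator-norm bound for powers of $\G$.

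Your trace expansion for $\mXitilde\be$ also repairs something the paper glosses over. Lemmas~\ref{lem:proj-equiv:GGtilde:quaddiff:meanzero:nonasy} and~\ref{lem:proj-equiv:GGtilde:lineardiff:meanzero:nonasy} assume uncorrelated entries, yet are applied to vectors such as $\mXitilde\be$ or $\Gtilde^q\be$. The same fix is needed in your quadratic step, where you cite Lemma~\ref{lem:proj-equiv:GGtilde:quaddiff:meanzero:nonasy} directly; there it reads $\bbE\|(\G-\Gtilde)\mXitilde\be\|^2=\sigmaeps^2\,\bbE\|(\G-\Gtilde)\mXitilde\|_F^2$.

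Two details need tightening.

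First, consider swapping the $\Y$ that sits to the right of $\G-\Gtilde$. The change is $\betay\,\bbE\,\vu^\top(\G-\Gtilde)\mXi(\G-\Gtilde)\Ytilde$ with, say, $\vu=\1_n$. Bounding it by $\|\vu\|\,\|(\G-\Gtilde)\zeta_n\|$ together with the quadratic bound you cite gives only $o(n^{3/4})$. The second small factor must instead be $\|(\G-\Gtilde)^\top\vu\|$. This is controlled by Lemma~\ref{lem:proj-equiv:GGtilde:quaddiff:transpose:nonasy}, which gives $\bbE\|(\G-\Gtilde)^\top\vu\|^2=o(\sqrt n)$ for any edge-independent $\vu$ with bounded conditional second moments, including $\vu=\Gtilde\Ytilde$ and $\vu=\Ytilde$. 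This is presumably your ``cross term of the same form'', but it requires that lemma, not the bound on $(\G-\Gtilde)\Y$.

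Second, in (c) the relevant scalar is $\bbE\,\Y^\top(\G-\Gtilde)\Y$, so $\vu=\Y$ rather than a column of $\ZlatOracle$. Your swapping argument handles this case unchanged.
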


As with our proof of Lemma~\ref{lem:proj-equiv:latent}, our proof of this result applies a triangle inequality and controls the two resulting terms separately.
Toward this end, we establish a handful of technical results before giving a proof of Lemma~\ref{lem:proj-equiv:peer}.

\begin{lemma} \label{lem:proj-equiv:peercon:linearTerm}
	Under the conditions of Lemma~\ref{lem:proj-equiv:peer},
	\begin{equation*}
		\frac{1}{n} \left\| \bbE \ZpeerOracle^\top \Y \right\|
		\le C .
	\end{equation*}
\end{lemma}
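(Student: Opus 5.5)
The argument mirrors the proof of Lemma~\ref{lem:proj-equiv:latcon:linearTerm}, with $\mXi = (\mI - \betay \G)^{-1}$ in place of $\mXitilde$ and $\G$ in place of $\Gtilde$.

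First, recall the column structure of $\ZpeerOracle$ from Equation~\eqref{eq:def:ZpeerOracle}. The triangle inequality gives
\begin{equation*}
\frac{1}{n}\left\| \bbE \ZpeerOracle^\top \Y \right\|
\le \frac{1}{n}\left| \bbE \1_n^\top \Y \right|
+ \frac{1}{n}\left\| \bbE \W^\top \Y \right\|
+ \frac{1}{n}\left\| \bbE \Xpop^\top \Y \right\|
+ \frac{1}{n}\left| \bbE \Y^\top \G^\top \Y \right| .
\end{equation*}

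For the first three terms, apply Cauchy--Schwarz to each column. For instance, $|\bbE \W_{\cdot k}^\top \Y| \le (\bbE\|\W_{\cdot k}\|^2)^{1/2}(\bbE\|\Y\|^2)^{1/2}$. By item~\ref{item:assum:oracle:boundedCovariates} of Assumption~\ref{assum:peer-oracle}, the entries of $\W$, $\Xpop$ and $\1_n$ are uniformly bounded, so each column has squared norm at most $Cn$. It therefore suffices to show $\bbE\|\Y\|^2 \le Cn$. Write $\Y = \mXi(\mLtilde + \be)$ as in Equation~\eqref{eq:Y:XiL:peer}. By Lemma~\ref{lem:IbgyG:invertible}, which is exactly the $\G$ version, together with the assumption $|\betay|<1$, we get $\|\mXi\| \le C$ almost surely. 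Hence
\begin{equation*}
\|\Y\|^2 \le C\left(\|\mLtilde\|^2 + \|\be\|^2\right).
\end{equation*}
Bounded covariates give $\|\mLtilde\|^2 \le Cn$ almost surely. Item~\ref{item:assum:oracle:epsilonMoments} gives $\bbE\|\be\|^2 = n\sigmaeps^2$.

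For the quadratic term, bound it directly rather than splitting into mean-zero pieces:
\begin{equation*}
|\Y^\top \G^\top \Y| \le \|\G\|\,\|\Y\|^2 \le C\|\Y\|^2 .
\end{equation*}
This needs $\|\G\| \le C$. It follows from the row and column sum bound in item~1 of Assumption~\ref{assum:peer-oracle}, via $\|\G\|\le \sqrt{\|\G\|_1\|\G\|_\infty}$, and it is also the bound the paper uses repeatedly. Taking expectations and using $\bbE\|\Y\|^2\le Cn$ again bounds this term by $C$ after dividing by $n$.

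The main obstacle is the one place where the argument departs from the latent case. Under peer contagion, $\mXi$ depends on $\A$, and $\be$ is independent of $\A$ only conditionally on $\Xpop$. The earlier proof split $\Y$ and used conditional independence to drop cross terms, and that split is no longer clean here. The plan avoids it entirely: it uses only almost-sure operator-norm bounds on $\mXi$ and $\G$ plus the second moment of $\be$. These require no independence between $\be$ and $\A$ beyond the existence of the moments, and they are what I would rely on.
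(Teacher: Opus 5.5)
Your proof is correct, and it differs from the paper's in how it handles the noise. Both proofs start from the same four-block triangle inequality. The paper then writes $\Y = \mXi(\mLtilde + \be)$ and uses that $\be$ is mean zero to drop $\be$ from the three linear blocks. It also drops the cross terms in $\bbE\, \Y^\top \G^\top \Y$. That leaves $\bbE\, \mLtilde^\top \mXi^\top \G^\top \mXi \mLtilde$ and $\bbE\, \be^\top \mXi^\top \G^\top \mXi \be$, each bounded by submultiplicativity.

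You never split $\Y$. Everything goes through the single estimate $\bbE \|\Y\|^2 \le Cn$: Cauchy--Schwarz handles the linear blocks, and $|\Y^\top \G^\top \Y| \le \|\G\|\,\|\Y\|^2$ handles the quadratic one. This needs only second moments of $\be$ and no orthogonality between $\be$ and functions of the network. Since the lemma asks only for an $O(1)$ bound, that is all that is required, and your argument is shorter.

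The paper's mean/noise split earns its keep in Lemma~\ref{lem:proj-equiv:peercon:linearTerm:diff}. There the $o(\sqrt{n})$ rate comes from treating $\mLtilde$ and $\be$ separately, through Lemmas~\ref{lem:proj-equiv:peercon:qrRecurse:bounded} and~\ref{lem:proj-equiv:peercon:qrRecurse:meanzero}. A crude bound in terms of $\|\Y\|^2$ would not deliver that rate.

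The obstacle you anticipate is not a real one. $\mXi$ is a function of $\G$, and the model assumes $\bbE[\be \mid \W, \Xpop, \G] = 0$ (Lemma~\ref{lem:martellioso2022}). So the cross terms in the paper's split do vanish. Your route simply does not need that fact.

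One caveat, which your proof shares with the paper's. Lemma~\ref{lem:IbgyG:invertible} controls the eigenvalues of $\mI - \betay \G$, not the spectral norm of its inverse. Because $\G$ is not symmetric, $\|\mXi\| \le C$ does not follow from it directly. You justified $\|\G\| \le C$ carefully via row and column sums, which is more careful than the paper's $\|\G\| \le 1$, but you did not do the same for $\|\mXi\|$.

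There are two ways to close this:
\begin{itemize}
\item Use the similarity $\mXi = \D^{-1/2}(\mI - \betay \D^{-1/2}\A\D^{-1/2})^{-1}\D^{1/2}$, whose middle factor is symmetric with spectrum in $[-1,1]$. This gives $\|\mXi\| \le (1-|\betay|)^{-1}\sqrt{\max_i d_i/\min_i d_i}$, which is bounded when degrees are of comparable order.
\item Alternatively, assume a Kelejian--Prucha-type bound on the column sums of $\mXi$. Since $\G$ is row-stochastic, the row sums of $\mXi$ are already at most $(1-|\betay|)^{-1}$.
\end{itemize}
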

\begin{proof}
	Recalling the structure of $\ZpeerOracle$ from Equation~\eqref{eq:def:ZpeerOracle},
	\begin{equation*} \begin{aligned}
	\left\| \frac{1}{n} \bbE \brac*{\ZpeerOracle^\top \Y} \right\|
	 & \le
	\frac{1}{n} \left\| \bbE \1_n^\top \Y \right\|
	+
	\frac{1}{n} \left\| \bbE \W^\top \Y \right\|
	+
	\frac{1}{n} \left\| \bbE \Xpop^\top \Y \right\|
	+
	\frac{1}{n} \left\| \bbE \Y^\top \G^\top \Y \right\| .
	\end{aligned} \end{equation*}
	Writing $\Y = \mXi\left(\mLtilde + \be \right)$ as in Equation~\eqref{eq:Y:XiL:peer} and using the fact that $\be$ is mean zero,
	\begin{equation*} \begin{aligned}
	\left\| \frac{1}{n} \bbE \brac*{\ZpeerOracle^\top \Y} \right\|
	 & \le
	\frac{1}{n} \left\| \bbE \1_n^\top \mXi \mLtilde \right\|
	+
	\frac{1}{n} \left\| \bbE \W^\top \mXi \mLtilde \right\|
	+
	\frac{1}{n} \left\| \bbE \Xpop^\top \mXi \mLtilde \right\|
	+
	\frac{1}{n} \left\| \bbE \Y^\top \G^\top \Y \right\| .
	\end{aligned} \end{equation*}
	The moment bounds in Assumption~\ref{assum:peer-oracle} then imply
	\begin{equation*} \begin{aligned}
	\left\| \frac{1}{n} \bbE \brac*{\ZpeerOracle^\top \Y} \right\|
	& \le C +
	\frac{1}{n} \left\| \bbE \Y^\top \G^\top \Y \right\| .
	\end{aligned} \end{equation*}
	Again recalling the decomposition of $\Y$ and using the independence structure of the peer contagion model and the fact that $\be$ is mean zero,
	\begin{equation*} \begin{aligned}
	\left\| \frac{1}{n} \bbE \brac*{\ZpeerOracle^\top \Y} \right\|
	& \le
	C
	+
	\frac{1}{n}
	\bbE \left\| \mLtilde^\top \mXi^\top \G^\top \mXi \mLtilde\right\|
	+
	\frac{1}{n}
	\bbE \left\| \be^\top \mXi^\top \G^\top \mXi \be \right\| \\
	& \le
	C
	+ \frac{C}{n} \bbE \left\| \mLtilde \right\|^2
	+ \frac{C}{n} \bbE \left\| \be \right\|^2,
	\end{aligned} \end{equation*}
	where the second inequality follows from submultiplicativity and Lemma~\ref{lem:IbgyG:invertible}.
	Assumption~\ref{assum:peer-oracle} then implies
	\begin{equation*}
	\left\| \frac{1}{n} \bbE \brac*{\ZpeerOracle^\top \Y} \right\|
	\le C,
	\end{equation*}
	as we set out to show.
\end{proof}

Under the peer contagion model, we have $\Y = \mXi (\mLtilde + \be)$,
in which $\mXi$, as defined in Equation~\eqref{eq:def:XiL},
depends on $\mG$.
To account for the random variation of $\mG$ about $\Gtilde$, we will require a handful of technical lemmas before we control
\begin{equation*}
	\left\| \bbE \left( \ZlatOracle - \ZpeerOracle \right)^\top
			\ZpeerOracle \right\|
\end{equation*}
in Lemma~\ref{lem:proj-equiv:peercon:linearTerm:diff}.
We will make repeated use of the fact that for any non-negative integer $s$,
\begin{equation} \label{eq:powerDiff:noncommute}
\G^{s} - \Gtilde^{s}
= \sum_{\ell=0}^{s-1} \G^{s-\ell-1} (\G-\Gtilde) \Gtilde^\ell .
\end{equation}

\begin{lemma} \label{lem:proj-equiv:peercon:qrRecurse:bounded}
Under the setting of Lemma~\ref{lem:proj-equiv:peer},
suppose that $\vb$ is such that $\mA-\mApop$ is independent of $\vb$ conditional on $\Xpop$.
Then, for any non-negative integers $q$ and $r$,
\begin{equation*}
\left| \bbE \left[ \G^q \vb \right]^\top \left( \G - \Gtilde \right)^\top
			\G^r \vb \right|
\le Cq r n \nu_n \sum_{i=1}^n 
    \bbE \frac{ \left\| \vb \right\|_\infty^2 }{ \dtilde_i^2 } ,
\end{equation*}
where $C$ is a constant not depending on $q$ and $r$.
\end{lemma}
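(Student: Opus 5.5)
We reduce to the base case $q=r=0$, which is exactly Lemma~\ref{lem:proj-equiv:GGtilde:lineardiff:bounded:nonasy} with $\vu=\vv=\vb$: it gives $|\bbE \vb^\top(\G-\Gtilde)^\top\vb| \le Cn\nu_n\sum_i \bbE \|\vb\|_\infty^2/\dtilde_i^2$. (For $q=0$ or $r=0$ the claimed bound with factor $qr$ should be read with $q,r$ replaced by $\max(q,1),\max(r,1)$; we prove the bound with $(q+1)(r+1)$, which is what is used downstream.) The plan is to replace $\G^q$ and $\G^r$ by $\Gtilde^q$ and $\Gtilde^r$, since $\Gtilde^q\vb$ and $\Gtilde^r\vb$ are still conditionally independent of $\mA-\mApop$ given $\Xpop$, and to pay for each replacement with a second-order term.

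\textbf{Step 1 (telescoping).} Use Equation~\eqref{eq:powerDiff:noncommute} to write $\G^q\vb = \Gtilde^q\vb + \sum_{\ell=0}^{q-1}\G^{q-\ell-1}(\G-\Gtilde)\Gtilde^\ell\vb$, and similarly for $\G^r\vb$. Expanding the bilinear form $\bbE[\G^q\vb]^\top(\G-\Gtilde)^\top\G^r\vb$ then gives three kinds of terms.
\begin{itemize}
\item[(i)] The leading term $\bbE[\Gtilde^q\vb]^\top(\G-\Gtilde)^\top\Gtilde^r\vb$. Both vectors are conditionally independent of $\mA-\mApop$, and $\|\Gtilde^s\vb\|_\infty\le\|\vb\|_\infty$ because $\Gtilde$ is row stochastic. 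Lemma~\ref{lem:proj-equiv:GGtilde:lineardiff:bounded:nonasy} therefore bounds it by $Cn\nu_n\sum_i\bbE\|\vb\|_\infty^2/\dtilde_i^2$.
\item[(ii)] Cross terms containing exactly one extra factor of $\G-\Gtilde$. These have the form $\bbE[\G^{q-\ell-1}(\G-\Gtilde)\Gtilde^\ell\vb]^\top(\G-\Gtilde)^\top\Gtilde^r\vb$, together with the symmetric version on the $r$ side.
\item[(iii)] Terms with two extra factors of $\G-\Gtilde$.
\end{itemize}

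\textbf{Step 2 (second-order terms via Cauchy--Schwarz).} For each term in (ii) and (iii), bound the expectation by Cauchy--Schwarz in $L^2$. Each factor has the form $\|M(\G-\Gtilde)\vw\|$ or $\|(\G-\Gtilde)^\top\vw'\|$, where $M$ is a power of $\G$ or $\Gtilde$ and $\vw,\vw'$ are $\Gtilde$-powers of $\vb$. Since $\G$ and $\Gtilde$ are row stochastic and we are working in $\ell_\infty$, we instead use the $\ell_2$ bound $\|\G\|\le C$, which holds by Assumption~\ref{assum:peer-oracle} (bounded row and column sums). We then apply Lemma~\ref{lem:proj-equiv:GGtilde:quaddiff:vbounded:nonasy} to bound $\bbE\|(\G-\Gtilde)\Gtilde^\ell\vb\|^2$ by $Cn\nu_n\sum_i\bbE\|\vb\|_\infty^2/\dtilde_i^2$, and Lemma~\ref{lem:proj-equiv:GGtilde:quaddiff:transpose:nonasy} to bound $\bbE\|(\G-\Gtilde)^\top\Gtilde^r\vb\|^2$ by the same quantity. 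The geometric mean of two such bounds has the same form, so each term contributes $C n\nu_n\sum_i\bbE\|\vb\|_\infty^2/\dtilde_i^2$. There are $O(q)$ terms of type (ii) on the $q$ side, $O(r)$ on the $r$ side, and $O(qr)$ of type (iii), which gives the factor $Cqr$ (more precisely $C(q+1)(r+1)$).

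\textbf{Main obstacle.} The delicate point is the type (iii) terms with three $\G-\Gtilde$ factors, e.g.\ $[\G^{a}(\G-\Gtilde)\vw]^\top(\G-\Gtilde)^\top\G^{c}(\G-\Gtilde)\vw'$. A naive Cauchy--Schwarz leaves an operator norm $\|\G-\Gtilde\|$ in the middle, which is only $O(1)$, not small. We accept this: bounding $\|\G-\Gtilde\|\le 2C$ deterministically, the product of the two outer $L^2$ norms is still $Cn\nu_n\sum_i\bbE\|\vb\|_\infty^2/\dtilde_i^2$, so these terms stay at the required order.

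The remaining care is that the constant must not grow with the powers $\ell$, $a$, $c$. This holds because we only ever use $\|\G^s\|_{\ell_2}$ and $\|\Gtilde^s\|_{\ell_2}$, and these are uniformly bounded via the row- and column-sum bounds ($\|\G\|_1\|\G\|_\infty\le C$ only controls a single power). If that uniform bound fails, the fallback is to absorb the growth in $s$ into the summable factors $|\beta_y|^s$ downstream.
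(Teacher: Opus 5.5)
Your decomposition is the paper's. You telescope $\G^q\vb$ and $\G^r\vb$ with \eqref{eq:powerDiff:noncommute}, bound the pure-$\Gtilde$ term by Lemma~\ref{lem:proj-equiv:GGtilde:lineardiff:bounded:nonasy}, and bound every term with an extra factor $\G-\Gtilde$ by Cauchy--Schwarz plus Lemmas~\ref{lem:proj-equiv:GGtilde:quaddiff:vbounded:nonasy} and~\ref{lem:proj-equiv:GGtilde:quaddiff:transpose:nonasy}. The middle $\G-\Gtilde$ in the triple terms is bounded crudely in operator norm, as in the paper. Your $(q+1)(r+1)$ reading is also right. The paper's own intermediate bound already carries a factor $q+1$. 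The literal factor $qr$ would make the right-hand side vanish at $q=0$ or $r=0$, although the left-hand side does not vanish in general. The downstream Neumann double series still converges with the corrected factor.

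The step that does not go through as written is the bound on $\|\G^s\|$ in spectral norm, uniform in $s$, which the Cauchy--Schwarz steps need. Bounded row and column sums give $\|\G\|\le(\|\G\|_1\|\G\|_\infty)^{1/2}\le C^{1/2}$ (maximum column sum times maximum row sum). For powers they only give $\|\G^s\|\le C^{s/2}$. Your fallback of absorbing this into $|\betay|^s$ needs $C^{1/2}|\betay|<1$, which is not assumed. It also would not give a constant independent of $q$ and $r$, as the statement requires.

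The paper sidesteps this by asserting $\|\G\|\le 1$. That holds for the $\ell_\infty$ operator norm but not for the spectral norm: on the three-vertex path, $\G$ sends the middle basis vector to $(1,0,1)^\top$.

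The clean repair is symmetrization. Because $\A$ is symmetric and nonnegative, $\G^s=\D^{-1/2}\bigl(\D^{-1/2}\A\D^{-1/2}\bigr)^s\D^{1/2}$ with $\|\D^{-1/2}\A\D^{-1/2}\|\le 1$. Hence $\sup_{s\ge 0}\|\G^s\|\le(\max_i d_i/\min_i d_i)^{1/2}$, and likewise $\sup_{s\ge 0}\|\Gtilde^s\|\le(\max_i \dtilde_i/\min_i \dtilde_i)^{1/2}$. Bounded column sums of $\G$ do not by themselves guarantee bounded degree ratios, so this needs an extra almost-sure bound on them, in the spirit of Assumption~\ref{assum:peer-oracle}. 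With that bound, every operator-norm step in your argument, including $\|\G-\Gtilde\|\le C$, holds with a constant free of $q$ and $r$.
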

\begin{proof}
Using Equation~\eqref{eq:powerDiff:noncommute} with $s = r$,
the triangle inequality yields
\begin{equation} \label{eq:proj-equiv:peer:qr:step1} \begin{aligned}
\left| \bbE \left[ \G^q \vb \right]^\top 
        \left( \Gtilde - \G \right)^\top \G^r \vb \right|
&\le \left| \bbE \left[ \G^q \vb \right]^\top 
		\left( \Gtilde - \G \right)^\top \Gtilde^{r+1} \vb \right| \\
&~~~~~~+
\left| \sum_{\ell=0}^{r-1} 
\bbE \left[ \G^q \vb \right]^\top 
                \left( \Gtilde - \G \right)^\top
		\G^{r-\ell-1} (\G-\Gtilde) \Gtilde^\ell \vb \right| .
\end{aligned} \end{equation}
Using Equation~\eqref{eq:powerDiff:noncommute} with $s=q$,
\begin{equation*} \begin{aligned}
\left|
\bbE \left[ \G^q \vb \right]^\top \!
		\!\left( \Gtilde \!-\! \G \right)^{\!\top} 
		\! \Gtilde^r \vb \right|
&\le 
\left| \bbE \left[ \Gtilde^q \vb \right]^\top 
                \left( \Gtilde - \G \right)^\top \Gtilde^r \vb \right| \\
&~~~~~~+
\left| 
\bbE \left[ \sum_{m=0}^{q-1} \!\G^{q-1-m}(\G-\Gtilde)\Gtilde^m \vb \right]^\top 
		\!\!\!\!
                \left( \Gtilde - \G \right)^{\!\top} \! \Gtilde^r \vb
\right| .
\end{aligned} \end{equation*}
Applying Lemma~\ref{lem:proj-equiv:GGtilde:lineardiff:bounded:nonasy} to the first right-hand term with $\vu =  \Gtilde^q \vb$ and $\vv = \Gtilde^r \vb$,
\begin{equation} \label{eq:proj-equiv:peer:qr:RHterm1} \begin{aligned}
\left|
\bbE \left[ \G^q \vb \right]^\top \!
		\!\left( \Gtilde \!-\! \G \right)^{\!\top} 
		\! \Gtilde^r \vb \right|
&\le C n \nu_n \sum_{i=1}^n \bbE
        \frac{\left\| \Gtilde^q \vb \right\|_\infty 
		\left\| \Gtilde^r \vb \right\|_\infty}{\dtilde_i^2} \\
&~~~~~~+
\left| 
\bbE \left[ \sum_{m=0}^{q-1} \!\G^{q-1-m}(\G-\Gtilde)\Gtilde^m \vb \right]^\top 
		\!\!\!\!
                \left( \Gtilde - \G \right)^{\!\top} \! \Gtilde^r \vb
\right| .
\end{aligned} \end{equation}

Rearranging slightly,
\begin{equation*} \begin{aligned}
&\bbE \left[ \sum_{m=0}^{q-1} \G^{q-1-m}(\G-\Gtilde)\Gtilde^m \vb \right]^\top 
                \left( \Gtilde - \G \right)^\top \Gtilde^r \vb \\
&~~~~~~=
\sum_{m=0}^{q-1} \bbE \left[ (\G-\Gtilde)\Gtilde^m \vb \right]^\top
			\left[ \G^{q-1-m} \right]^\top
			\left( \Gtilde - \G \right)^\top \Gtilde^r \vb,
\end{aligned} \end{equation*}
Applying the triangle inequality followed by Cauchy-Schwarz,
\begin{equation*} \begin{aligned}
&\left|
 \bbE \left[ \sum_{m=0}^{q-1} \G^{q-1-m}(\G-\Gtilde)\Gtilde^m \vb \right]^\top 
                \left( \Gtilde - \G \right)^\top \Gtilde^r \vb \right| \\
&~~~~~~\le
\sum_{m=0}^{q-1}
\sqrt{ \bbE \left\| (\G-\Gtilde)\Gtilde^m \vb \right\|^2 }
\sqrt{ \bbE \left\| \left[ \G^{q-1-m} \right]^\top
		\left( \Gtilde-\G \right)^\top \Gtilde^r \vb \right\|^2 }\\
&~~~~~~\le
\sum_{m=0}^{q-1}
\sqrt{ \bbE \left\| (\G-\Gtilde)\Gtilde^m \vb \right\|^2 }
\sqrt{ \bbE \left\| 
		\left( \Gtilde-\G \right)^\top \Gtilde^r \vb \right\|^2 },
\end{aligned} \end{equation*}
where the second inequality follows from submultiplicativity and the fact that $\|\G\|\le1$.
Applying this bound to Equation~\eqref{eq:proj-equiv:peer:qr:RHterm1},
\begin{equation*} \begin{aligned}
\left|
\bbE \left[ \G^q \vb \right]^\top 
		\left( \Gtilde - \G \right)^\top \Gtilde^r \vb \right|
&\le C n \nu_n \sum_{i=1}^n \bbE
        \frac{\left\| \Gtilde^q \vb \right\|_\infty 
		\left\| \Gtilde^r \vb \right\|_\infty}{\dtilde_i^2} \\
&~~~~~~+
\sum_{m=0}^{q-1}
\sqrt{ \bbE \left\| (\G-\Gtilde)\Gtilde^m \vb \right\|^2 }
\sqrt{ \bbE \left\| 
		\left( \Gtilde-\G \right)^\top \Gtilde^r \vb \right\|^2 }.
\end{aligned} \end{equation*}
Applying Lemma~\ref{lem:proj-equiv:GGtilde:quaddiff:vbounded:nonasy} with $\vv = \Gtilde^r \vb$ and
Lemma~\ref{lem:proj-equiv:GGtilde:quaddiff:transpose:nonasy} with $\vv = \Gtilde^m \vb$ for $m=0,1,2\dots,q-1$, and noting that
\begin{equation*}
( \Gtilde^s \vb )_i^2 \le \| \Gtilde^s \vb \|_\infty^2 \le \| \vb \|_\infty^2
\end{equation*}
for any integer $s \ge 0$, since $\Gtilde$ is an averaging operator,
\begin{equation*} \begin{aligned}
\left|
\bbE \left[ \G^q \vb \right]^\top 
		\left( \Gtilde - \G \right)^\top \Gtilde^r \vb \right|
&\le
C (q+1) n \nu_n \sum_{i=1}^n 
        \bbE \frac{ \left\| \vb \right\|_\infty^2 }{ \dtilde_i^2 }.
\end{aligned} \end{equation*}
Applying this bound to Equation~\eqref{eq:proj-equiv:peer:qr:step1},
\begin{equation} \label{eq:proj-equiv:peer:qr:step2} \begin{aligned}
\left| \bbE \left[ \G^q \vb \right]^\top 
        \left( \Gtilde - \G \right)^\top \G^r \vb \right|
&\le 
C (q+1) n \nu_n \sum_{i=1}^n 
        \bbE \frac{ \left\| \vb \right\|_\infty^2 }{ \dtilde_i^2 } \\
&~~~~~~+
\left| \sum_{\ell=0}^{r-1}
\bbE \left[ \G^q \vb \right]^\top 
                \left( \Gtilde - \G \right)^\top
		\G^{r-\ell-1} (\G-\Gtilde) \Gtilde^\ell \vb \right| .
\end{aligned} \end{equation}

Again using Equation~\eqref{eq:powerDiff:noncommute} with $s=q$ and applying the triangle inequality,
\begin{equation} \label{eq:proj-equiv:GGtilde:ellsum} \begin{aligned}
&\left| \sum_{\ell=0}^{r-1}
\bbE \!\left[ \!\left( \Gtilde \!-\! \G \right) \G^q \vb \right]^\top 
		\! \G^{r-\ell-1} (\G\!-\!\Gtilde) \Gtilde^\ell \vb \right| \\ 
~~~~~~&\le
\sum_{\ell=0}^{r-1}
\left| \bbE \left[ \left( \Gtilde - \G \right) \Gtilde^q \vb \right]^\top 
                \G^{r-\ell-1} (\G-\Gtilde) \Gtilde^\ell \vb \right| \\
&~~~~~~+
\sum_{\ell=0}^{r-1} \sum_{m=0}^{q-1}
\left| \bbE \left[ \G^{q-m-1} \! (\G-\Gtilde) \Gtilde^m \vb \right]^\top 
                \left( \Gtilde \!-\! \G \right)^\top \!
                \G^{r-\ell-1} (\G-\Gtilde) \Gtilde^\ell \vb \right|
\end{aligned} \end{equation}
Applying the Cauchy-Schwarz inequality,
\begin{equation*} \begin{aligned}
& \sum_{\ell=0}^{r-1}
\left| \bbE \left[ \left( \Gtilde - \G \right) \G^q \vb \right]^\top 
		\G^{r-\ell-1} (\G-\Gtilde) \Gtilde^\ell \vb \right| \\
&~~~\le
\sum_{\ell=0}^{r-1}
\sqrt{ \bbE \left\| \left( \Gtilde - \G \right) \Gtilde^q \vb \right\|^2 }
\sqrt{ \bbE \left\| \G^{r-\ell-1} (\G-\Gtilde) \Gtilde^\ell \vb \right\|^2 } \\
&~~~~~~~~~+
\sum_{\ell=0}^{r-1} \sum_{m=0}^{q-1}
\sqrt{ \bbE \left\| \left( \Gtilde - \G \right) \G^{q-m-1} (\G-\Gtilde) \Gtilde^m \vb \right\|^2 }
\sqrt{ \bbE \left\| \G^{r-\ell-1} (\G-\Gtilde) \Gtilde^\ell \vb \right\|^2 } \\
&~~~\le
\sum_{\ell=0}^{r-1}
\sqrt{ \bbE \left\| \left( \Gtilde - \G \right) \Gtilde^q \vb \right\|^2 }
\sqrt{ \bbE \left\| (\G-\Gtilde) \Gtilde^\ell \vb \right\|^2 } \\
&~~~~~~~~~+
2\sum_{\ell=0}^{r-1} \sum_{m=0}^{q-1}
\sqrt{ \bbE \left\| (\G-\Gtilde) \Gtilde^m \vb \right\|^2 }
\sqrt{ \bbE \left\| (\G-\Gtilde) \Gtilde^\ell \vb \right\|^2 } ,
\end{aligned} \end{equation*}
where the second inequality follows from submultiplicativity and the bounds $\| \G \| \le 1$ and $\|\G-\Gtilde\| \le 2$.
Applying Lemma~\ref{lem:proj-equiv:GGtilde:quaddiff:vbounded:nonasy} with $\vv = \Gtilde^s \vb$ for $s=0,1,2,\dots,(r-1) \vee (q-1)$,
\begin{equation*} \begin{aligned}
\sum_{\ell=0}^{r-1}
\left| \bbE \left[ \left( \Gtilde - \G \right) \G^q \vb \right]^\top 
		\G^{r-\ell-1} (\G-\Gtilde) \Gtilde^\ell \vb \right|
\le
Cq r n \nu_n \sum_{i=1}^n 
    \bbE \frac{ \left\| \vb \right\|_\infty^2 }{ \dtilde_i^2 },
\end{aligned} \end{equation*}
where we have again used the fact that $\Gtilde$ is an averaging operator to write $\| \Gtilde^s \vb \|_\infty \le \| \vb \|_\infty$.
Applying this bound to Equation~\eqref{eq:proj-equiv:GGtilde:ellsum},
\begin{equation} \label{eq:proj-equiv:GGtilde:emellsum} \begin{aligned}
&\left| \sum_{\ell=0}^{r-1}
\bbE \!\left[ \!\left( \Gtilde \!-\! \G \right) \G^q \vb \right]^\top 
		\! \G^{r-\ell-1} (\G\!-\!\Gtilde) \Gtilde^\ell \vb \right| \\
&~~~~\le
Cq r n \nu_n \sum_{i=1}^n 
    \bbE \frac{ \left\| \vb \right\|_\infty^2 }{ \dtilde_i^2 } \\
&~~~~~~~~~+
\sum_{\ell=0}^{r-1} \sum_{m=0}^{q-1}
\left| \bbE \left[ \G^{q-m-1} \! (\G-\Gtilde) \Gtilde^m \vb \right]^\top 
                \left( \Gtilde \!-\! \G \right)^\top \!
                \G^{r-\ell-1} (\G-\Gtilde) \Gtilde^\ell \vb \right| .
\end{aligned} \end{equation}

By the Cauchy-Schwarz inequality,
\begin{equation*} \begin{aligned}
& \sum_{\ell=0}^{r-1} \sum_{m=0}^{q-1}
\left| \bbE \left[ \G^{q-m-1} \! (\G-\Gtilde) \Gtilde^m \vb \right]^\top 
                \left( \Gtilde \!-\! \G \right)^\top \!
                \G^{r-\ell-1} (\G-\Gtilde) \Gtilde^\ell \vb \right| \\
&~~~\le
\sum_{\ell=0}^{r-1} \sum_{m=0}^{q-1}
\sqrt{ \bbE \left\| \G^{q-m-1} \! (\G-\Gtilde) \Gtilde^m \vb \right\|^2 }
\sqrt{ \bbE \left\| 
                \left( \Gtilde \!-\! \G \right)^\top \!
                \G^{r-\ell-1} (\G-\Gtilde) \Gtilde^\ell \vb \right\|^2 } \\
&~~~\le
2\sum_{\ell=0}^{r-1} \sum_{m=0}^{q-1}
\sqrt{ \bbE \left\| (\G-\Gtilde) \Gtilde^m \vb \right\|^2 }
\sqrt{ \bbE \left\| (\G-\Gtilde) \Gtilde^\ell \vb \right\|^2 } ,
\end{aligned} \end{equation*}
where the second inequality follows from submultiplicativity and the fact that both $\G$ and $\Gtilde$ are transition matrices.
Applying Lemma~\ref{lem:proj-equiv:GGtilde:quaddiff:vbounded:nonasy} with $\vv = \Gtilde^s \vb$ for $s=0,1,2\dots,(r-1)\vee (q-1)$, and noting that $\| \Gtilde^s \vb \|_\infty \le \| \vb \|_\infty$, since $\Gtilde$ is an averaging operator,
\begin{equation*} \begin{aligned}
\sum_{\ell=0}^{r-1} \sum_{m=0}^{q-1}
\left| \bbE \left[ \G^{q-m-1} \! (\G-\Gtilde) \Gtilde^m \vb \right]^\top 
                \left( \Gtilde \!-\! \G \right)^\top \!
                \G^{r-\ell-1} (\G-\Gtilde) \Gtilde^\ell \vb \right| 
\le
C q r n \nu_n \sum_{i=1}^n 
	\bbE \frac{ \left\| \vv \right\|_\infty^2 }{ \dtilde_i^2 }.
\end{aligned} \end{equation*}
Applying this to Equation~\eqref{eq:proj-equiv:GGtilde:emellsum},
\begin{equation*} \begin{aligned}
\left| \sum_{\ell=0}^{r-1}
\bbE \!\left[ \!\left( \Gtilde \!-\! \G \right) \G^q \vb \right]^\top 
		\! \G^{r-\ell-1} (\G\!-\!\Gtilde) \Gtilde^\ell \vb \right| 
\le
Cq r n \nu_n \sum_{i=1}^n 
    \bbE \frac{ \left\| \vb \right\|_\infty^2 }{ \dtilde_i^2 } .
\end{aligned} \end{equation*}
Applying this to Equation~\eqref{eq:proj-equiv:peer:qr:step2} in turn,
\begin{equation*}
\left| \bbE \left[ \G^q \vb \right]^\top 
        \left( \Gtilde - \G \right)^\top \G^r \vb \right|
\le 
Cq r n \nu_n \sum_{i=1}^n 
    \bbE \frac{ \left\| \vb \right\|_\infty^2 }{ \dtilde_i^2 } ,
\end{equation*}
as we set out to show.
\end{proof}

\begin{lemma} \label{lem:proj-equiv:peercon:qrRecurse:meanzero}
Under the setting of Lemma~\ref{lem:proj-equiv:peer},
suppose that $\vb$ has mean-zero, uncorrelated entries with
$\max_i \bbE b_i^2 \le \sigmabb^2$,
and suppose that $\mA-\mApop$ is independent of $\vb$ conditional on $\Xpop$.
Then, for any non-negative integers $q$ and $r$,
\begin{equation*}
\left| \bbE \left[ \G^q \vb \right]^\top \left( \G - \Gtilde \right)^\top
			\G^r \vb \right|
\le C \sigmabb^2 q r n \nu_n \sum_{i=1}^n 
    \bbE \frac{ 1 }{ \dtilde_i^2 } ,
\end{equation*}
where $C$ is a constant not depending on $q$ and $r$.
\end{lemma}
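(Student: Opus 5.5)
I would follow the proof of Lemma~\ref{lem:proj-equiv:peercon:qrRecurse:bounded} line by line. Every place where that proof invokes a ``bounded $\vb$'' estimate, I would substitute the corresponding mean-zero estimate. Concretely, Lemma~\ref{lem:proj-equiv:GGtilde:lineardiff:bounded:nonasy} is replaced by Lemma~\ref{lem:proj-equiv:GGtilde:lineardiff:meanzero:nonasy}, and Lemma~\ref{lem:proj-equiv:GGtilde:quaddiff:vbounded:nonasy} is replaced by Lemma~\ref{lem:proj-equiv:GGtilde:quaddiff:meanzero:nonasy}. Lemma~\ref{lem:proj-equiv:GGtilde:quaddiff:transpose:nonasy} still applies, because it is stated entrywise in $v_i^2$ and so already fits the mean-zero setting.

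The first step expands $\G^r = \Gtilde^r + \sum_{\ell} \G^{r-\ell-1}(\G-\Gtilde)\Gtilde^\ell$ via \eqref{eq:powerDiff:noncommute}, and then does the same for $\G^q$. This yields three kinds of terms.

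\begin{itemize}
\item[(i)] The leading term $\bbE[\Gtilde^q \vb]^\top(\G-\Gtilde)^\top \Gtilde^r \vb$. Here $\vu = \Gtilde^q\vb$ and $\vv = \Gtilde^r\vb$ are functions of $\Xpop$ and $\vb$ only, so they are independent of $\mA-\mApop$. They do not, however, have the diagonal covariance structure that Lemma~\ref{lem:proj-equiv:GGtilde:lineardiff:meanzero:nonasy} assumes.
\item[(ii)] Cross terms in which exactly one factor $(\G-\Gtilde)$ multiplies a $\Gtilde^s\vb$, paired with an adjoint factor. I would handle these by Cauchy--Schwarz.
\item[(iii)] Doubly-expanded terms. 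I would again use Cauchy--Schwarz, together with $\|\G\|\le 1$ and $\|\G-\Gtilde\|\le 2$.
\end{itemize}

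For the Cauchy--Schwarz terms, everything reduces to bounding $\bbE\|(\G-\Gtilde)\Gtilde^s\vb\|^2$ and $\bbE\|(\G-\Gtilde)^\top\Gtilde^s\vb\|^2$. The vector $\Gtilde^s\vb$ is no longer uncorrelated, so I would redo the computation inside Lemma~\ref{lem:proj-equiv:GGtilde:quaddiff:meanzero:nonasy} directly:
\[
\bbE\|(\G-\Gtilde)\Gtilde^s\vb\|^2 = \bbE\,\trace\big[(\G-\Gtilde)\Gtilde^s\,\bbE[\vb\vb^\top\mid\Xpop]\,\Gtilde^{s\top}(\G-\Gtilde)^\top\big] \le \sigmabb^2\,\bbE\|(\G-\Gtilde)\Gtilde^s\|_F^2 .
\]
Since $\|\Gtilde^s\|\le 1$ under the normalization the paper uses, this is at most $\sigmabb^2\,\bbE\|\G-\Gtilde\|_F^2$. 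The Frobenius bound $C n\nu_n\sum_i\bbE\dtilde_i^{-2}$ was already established in that proof. The transpose version is identical, since the Frobenius norm is transpose-invariant. Each summand is therefore $C\sigmabb^2 n\nu_n\sum_i\bbE\dtilde_i^{-2}$, and summing over $\ell<r$ and $m<q$ produces the factor $qr$ (or $q+1$, which is absorbed).

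The main obstacle is term (i), because the lemma for linear differences needs $\bbE u_iv_j=0$ for $i\ne j$. I would not apply that lemma directly. Instead I would rerun its proof with $\vu=\Gtilde^q\vb$ and $\vv=\Gtilde^r\vb$, conditioning on $\Xpop$ and $\vb$.
\begin{itemize}
\item The exact first-order pieces vanish by the conditional mean-zero property of $\mA-\mApop$, exactly as before.
\item This leaves $\bbE\,\vu^\top\G^\top(\mI-\D\Dtilde^{-1})^2\vv$. Since $(d_i-\dtilde_i)^2\ge0$ and $\G$ is stochastic, its absolute value is at most $\sum_i\bbE\,\dtilde_i^{-2}(d_i-\dtilde_i)^2\,|(\G\vu)_i v_i|$. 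Conditioning on $\vb$, this is at most $n\nu_n\sum_i\bbE\,\dtilde_i^{-2}|(\G\vu)_iv_i|$. I would then bound $|(\G\vu)_iv_i|$ by AM--GM and use $\bbE (\Gtilde^s\vb)_i^2\le\sigmabb^2$, because rows of $\Gtilde^s$ are probability vectors (Jensen). Since $\G$ depends on $\mA$, I would condition on $\mA$ and $\Xpop$ when taking the expectation over $\vb$. Independence of $\vb$ from $\mA-\mApop$ given $\Xpop$ permits this.
\item The remaining piece is $\bbE\,\vu^\top(\mA-\mApop)\Dtilde^{-1}(\mI-\D\Dtilde^{-1})\vv$. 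It reduces to $\sum_{i,j}\bbE\,\dtilde_i^{-2}u_iv_j(\mA-\mApop)_{ij}^2$, which needs control of $\sum_j|\bbE[u_iv_j\mid\Xpop]|\le\sigmabb^2\sum_j(\Gtilde^q\Gtilde^{r\top})_{ij}$. That sum is bounded because $\Gtilde^{r\top}$ has column sums equal to $1$ and $\Gtilde^q$ has row sums equal to $1$.
\end{itemize}
Combining these gives the claimed bound $C\sigmabb^2 qrn\nu_n\sum_i\bbE\dtilde_i^{-2}$.
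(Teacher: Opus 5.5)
Your skeleton matches the paper's. Its proof of this lemma is a single instruction: rerun the bounded-case argument with Lemmas~\ref{lem:proj-equiv:GGtilde:lineardiff:meanzero:nonasy} and~\ref{lem:proj-equiv:GGtilde:quaddiff:meanzero:nonasy} substituted for their bounded counterparts. You are right that this substitution is not literally licensed. The vector $\Gtilde^s\vb$ does not have uncorrelated entries, and $\bbE[(\Gtilde^q\vb)_i(\Gtilde^r\vb)_j]$ is in general nonzero for $i\neq j$, so the hypotheses of those lemmas fail. Redoing their computations by hand, conditionally on $\Xpop$ and $\vb$, is the right repair, and on this point you are more careful than the paper. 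Two steps of your redo still need correcting.

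\emph{The row-sum claim in your last bullet is false as justified.} We have $\sum_j(\Gtilde^q\Gtilde^{r\top})_{ij}=(\Gtilde^q\Gtilde^{r\top}\1)_i$, and $\Gtilde^{r\top}\1$ is the vector of \emph{column} sums of $\Gtilde^r$. Row-stochasticity does not control these: the column sums of $\Gtilde$ are $\sum_i\mApop_{ik}/\dtilde_i$, and the assumptions in force do not bound them. You do not need an $O(1)$ bound here. The entries of $\Gtilde^r$ lie in $[0,1]$ and the rows of $\Gtilde^q$ sum to one, so $|\bbE[u_iv_j\mid\Xpop]|\le\sigmabb^2\sum_k(\Gtilde^q)_{ik}(\Gtilde^r)_{jk}\le\sigmabb^2$ for every pair $(i,j)$. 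Summing over $j$ costs a factor $n$ and gives $\sigmabb^2 n\nu_n\sum_i\bbE\dtilde_i^{-2}$, which is exactly the target order.

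\emph{In the $(\mI-\D\Dtilde^{-1})^2$ piece you integrate in the wrong order.} The quantity $(\G\vu)_i$ depends on row $i$ of $\mA$, hence on $d_i$. So you cannot replace $(d_i-\dtilde_i)^2$ by $n\nu_n$ while $|(\G\vu)_iv_i|$ is still inside the expectation. Integrate out $\vb$ first. Given $(\mA,\Xpop)$, conditional independence and row-stochasticity of $\G\Gtilde^q$ and $\Gtilde^r$ give $\bbE[\,|(\G\vu)_iv_i|\mid\mA,\Xpop]\le\sigmabb^2$. Only after that use $\bbE[(d_i-\dtilde_i)^2\mid\Xpop]\le n\nu_n$. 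The paper's proof of Lemma~\ref{lem:proj-equiv:GGtilde:lineardiff:meanzero:nonasy} contains the same slip.

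Two smaller remarks. First, a non-symmetric row-stochastic matrix satisfies $\|\Gtilde^s\|\le 1$ in the $\ell_\infty$ operator norm, not in spectral norm. Your Frobenius step therefore inherits a convention of the paper that is not literally true. You can avoid it by writing $\G-\Gtilde=\Dtilde^{-1}(\mA-\mApop)+\Dtilde^{-1}(\Dtilde-\D)\G$ and computing row by row with $\mathbf{w}=\Gtilde^s\vb$. This needs only $\bbE[w_j^2\mid\Xpop]\le\sigmabb^2$ and $\bbE[(\G\mathbf{w})_i^2\mid\mA,\Xpop]\le\sigmabb^2$, and it again gives $C\sigmabb^2 n\nu_n\sum_i\bbE\dtilde_i^{-2}$.

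Second, the leading term (i) and the $(q+1)$-type contributions cannot be absorbed into $qr$ when $q=0$ or $r=0$. There the stated right-hand side is zero, but the left-hand side need not be. What your argument actually delivers, like the paper's bounded-case argument, is the bound with $(q+1)(r+1)$ in place of $qr$. That weaker bound is all the Neumann-series summation in Lemma~\ref{lem:proj-equiv:peercon:linearTerm:diff} needs.
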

\begin{proof}
The proof follows an argument parallel to that used to prove Lemma~\ref{lem:proj-equiv:peercon:qrRecurse:bounded},
but using Lemmas~\ref{lem:proj-equiv:GGtilde:lineardiff:meanzero:nonasy}
and~\ref{lem:proj-equiv:GGtilde:quaddiff:meanzero:nonasy}
in place of,
respectively,
Lemmas~\ref{lem:proj-equiv:GGtilde:lineardiff:bounded:nonasy}
and~\ref{lem:proj-equiv:GGtilde:quaddiff:vbounded:nonasy}.
\end{proof}

\begin{lemma} \label{lem:proj-equiv:peercon:linearTerm:diff}
	Under the conditions of Lemma~\ref{lem:proj-equiv:peer},
	\begin{equation*}
	\left\| \bbE \left( \ZlatOracle - \ZpeerOracle \right)^\top
			\ZpeerOracle \right\|
	= o( \sqrt{n} ) .
	\end{equation*}
\end{lemma}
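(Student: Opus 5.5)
Under the peer contagion model, $\ZlatOracle - \ZpeerOracle$ is zero except in its last column, which equals $(\Gtilde - \G)\Y$. So the quantity to control reduces to
\begin{equation*}
\left\| \bbE \left( \ZlatOracle - \ZpeerOracle \right)^\top \ZpeerOracle \right\|
\le \sum_{\vw \in \{\1_n, \W_{\cdot k}, \Xpop_{\cdot k}, \G \Y\}} \left| \bbE \vw^\top (\Gtilde - \G) \Y \right| ,
\end{equation*}
a finite sum of scalar cross-moments, since $p$ and $d$ are fixed. Each term will be bounded by $C n \nu_n \sum_i \bbE\, \dtilde_i^{-2}$ times a constant depending on $\betay$ and the bounded covariates. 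Assumption~\ref{assum:equiv:LandX} then gives $o(\sqrt{n})$, because Assumption~\ref{assum:equiv:LandX} states $\nu_n \sum_i \bbE\, \dtilde_i^{-2} = o(n^{-1/2})$.

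The main step is to expand $\Y = \mXi(\mLtilde + \be)$ with the Neumann series $\mXi = \sum_{r \ge 0} \betay^r \G^r$. The cross terms between $\mLtilde$ and $\be$ vanish, because $\be$ is mean zero and independent of $(\A, \Xpop, \W)$.

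For the term with $\vw = \G\Y$, expand both copies of $\Y$. This gives a double sum over $q, r \ge 0$ of $\betay^{q+r}\, \bbE [\G^{q+1}\vb]^\top(\Gtilde - \G)\G^r \vb$, with $\vb = \mLtilde$ or $\vb = \be$; the transpose convention matches Lemma~\ref{lem:proj-equiv:peercon:qrRecurse:bounded} up to relabeling. That lemma, with $\vb = \mLtilde$, bounds each summand by $C (q+1) r\, n \nu_n \sum_i \bbE\, \|\mLtilde\|_\infty^2 / \dtilde_i^2$. Here $\mLtilde$ is independent of $\A - \Apop$ given $\Xpop$, and $\|\mLtilde\|_\infty \le C$ by item~\ref{item:assum:oracle:boundedCovariates} of Assumption~\ref{assum:peer-oracle}. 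Lemma~\ref{lem:proj-equiv:peercon:qrRecurse:meanzero}, with $\vb = \be$ and $\sigmabb^2 = \sigmaeps^2$, handles the noise part. Since $|\betay| < 1$, $\sum_{q,r} |\betay|^{q+r} (q+1) r < \infty$, so the series converges with constant $C(1 - |\betay|)^{-4}$. Exchanging the sum and the expectation is justified by dominated convergence, since $\|\G\| \le 1$ makes the summands geometrically dominated.

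The linear terms $\vw \in \{\1_n, \W_{\cdot k}, \Xpop_{\cdot k}\}$ are handled the same way with $q = 0$. These $\vw$ are bounded and conditionally independent of $\A - \Apop$, so I would generalize Lemma~\ref{lem:proj-equiv:peercon:qrRecurse:bounded} to a left vector $\vu$ different from the right vector $\vb$, with the bound involving $\|\vu\|_\infty \|\vb\|_\infty$. The proof of that lemma goes through verbatim, since it only uses Lemmas~\ref{lem:proj-equiv:GGtilde:lineardiff:bounded:nonasy}, \ref{lem:proj-equiv:GGtilde:quaddiff:vbounded:nonasy} and~\ref{lem:proj-equiv:GGtilde:quaddiff:transpose:nonasy}, which already allow distinct $\vu, \vv$. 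The $\be$ part of these linear terms vanishes in expectation, because it is linear in $\be$.

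I expect the main obstacle to be the mixed term $\bbE\, \be^\top (\G^{q+1})^\top (\Gtilde - \G) \G^r \be$. In it $\G$ sits on both sides and depends on $\A$, so Lemma~\ref{lem:proj-equiv:GGtilde:lineardiff:meanzero:nonasy} cannot be applied directly. Lemma~\ref{lem:proj-equiv:peercon:qrRecurse:meanzero} is exactly what is needed here. I would check that its recursion only ever pairs $\be$ with $\Gtilde^s \be$, whose entries are mean zero and uncorrelated up to the averaging weights, rather than with $\G^s \be$. If it does not, I would fall back on conditioning on $\A$ and using $\bbE[\be \be^\top] = \sigmaeps^2 \mI$ to turn the term into the trace $\sigmaeps^2 \bbE \operatorname{tr}\!\big((\G^{q+1})^\top(\Gtilde - \G)\G^r\big)$, and then bound that trace with the Frobenius estimates in the proof of Lemma~\ref{lem:proj-equiv:GGtilde:quaddiff:meanzero:nonasy}.
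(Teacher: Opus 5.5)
Your plan follows the paper's proof: Neumann-expand $\mXi$, drop the $\mLtilde$--$\be$ cross terms, bound each $(q,r)$ summand with Lemmas~\ref{lem:proj-equiv:peercon:qrRecurse:bounded} and~\ref{lem:proj-equiv:peercon:qrRecurse:meanzero}, sum the series using $|\betay|<1$, and finish with Assumption~\ref{assum:equiv:LandX}. It is also more careful than the paper in two places: the paper's first display keeps only the $\G\Y$ entry of the nonzero row and silently drops the $\1_n$, $\W$, $\Xpop$ entries that your two-vector extension covers, and your doubt about Lemma~\ref{lem:proj-equiv:peercon:qrRecurse:meanzero} is justified, because $\Gtilde^s\be$ has correlated entries, so Lemmas~\ref{lem:proj-equiv:GGtilde:lineardiff:meanzero:nonasy} and~\ref{lem:proj-equiv:GGtilde:quaddiff:meanzero:nonasy} do not apply to it verbatim; your conditional trace reduction repairs this, provided you first replace $\G^s$ by $\Gtilde^s$, so that the leading piece is a bias term and the Frobenius Cauchy--Schwarz bound is applied only to products of two $\G-\Gtilde$ fluctuations.
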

\begin{proof}
	Recalling the structure of $\ZpeerOracle$ and $\ZlatOracle$ from Equations~\eqref{eq:def:ZpeerOracle} and~\eqref{eq:def:ZlatOracle}, respectively,
	\begin{equation*} 
	\left\| \bbE \left( \ZlatOracle - \ZpeerOracle \right)^\top
							\ZpeerOracle \right\|
	= \left| \bbE \Y^\top \left( \Gtilde - \G \right)^\top
				\G \Y \right| .
	\end{equation*} 
Applying the Neumann expansion to $\mXi$ followed by the triangle inequality,
\begin{equation*} \begin{aligned}
\left\| \bbE \left( \ZlatOracle - \ZpeerOracle \right)^\top
				\ZpeerOracle \right\|
&\le
\sum_{q=0}^\infty \sum_{r=0}^\infty \left| \betay \right|^{q+r}
	\left| \bbE
	\left( \mLtilde + \be \right)^\top \left[ \G^q \right]^\top
	\left( \G - \Gtilde \right)^\top
	\G^{r+1} \left( \mLtilde + \be \right) \right| \\
&\le
\sum_{q=0}^\infty \sum_{r=0}^\infty \left| \betay \right|^{q+r}
	\left[
	\left| \bbE \left( \G^q \mLtilde \right)^{\!\top} \!\!
		\left( \G \!-\! \Gtilde \right)^{\!\top} \! 
		\G^{r+1} \mLtilde \right|
	+
	\left| \bbE \left( \G^q \be \right)^{\!\top} \!\!
		\left( \G \!-\! \Gtilde \right)^{\!\top} \!
		\G^{r+1} \be \right| \right] .
\end{aligned} \end{equation*}
Applying Lemma~\ref{lem:proj-equiv:peercon:qrRecurse:bounded} with $\vb = \mLtilde$ and Lemma~\ref{lem:proj-equiv:peercon:qrRecurse:meanzero} with $\vb = \be$,
\begin{equation*} \begin{aligned}
\left\| \bbE \left( \ZlatOracle - \ZpeerOracle \right)^\top
					\ZpeerOracle \right\|
&\le
C \left[ n \nu_n \sum_{i=1}^n \bbE \frac{\|\mLtilde\|_\infty^2+\sigmaeps^2}                      { \dtilde_i^2 } \right]
\sum_{q=0}^\infty \sum_{r=0}^\infty q (r+1) \left| \betay \right|^{q+r+1} \\
&\le
\frac{C |\betay|^2}{(1-\betay)^4}
\left[ n \nu_n \sum_{i=1}^n \bbE \frac{\|\mLtilde\|_\infty^2+\sigmaeps^2}                      { \dtilde_i^2 } \right] ,
\end{aligned}\end{equation*}
where we have used the fact that $|\betay| < 1$.
Since $\betay$ is assumed constant with respect to $n$,
Assumptions~\ref{assum:peer-oracle} and~\ref{assum:equiv:LandX} complete the proof.
\end{proof}

\begin{lemma} \label{lem:proj-equiv:peercon:quadterm:diff}
Under the conditions of Lemma~\ref{lem:proj-equiv:peer},
\begin{equation*}
\left\| \bbE
\left(\ZpeerOracle - \ZlatOracle\right)^\top
\left( \ZpeerOracle - \ZlatOracle \right) \right\|
= o( \sqrt{n} ).
\end{equation*}
\end{lemma}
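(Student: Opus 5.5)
We reduce to a quadratic form in $(\G-\Gtilde)$, exactly as in Lemma~\ref{lem:proj-equiv:latcon:quadterm:diff}, but now handle the fact that $\Y=\mXi(\mLtilde+\be)$ depends on $\G$. By the structure of $\ZpeerOracle$ and $\ZlatOracle$ in Equations~\eqref{eq:def:ZpeerOracle} and~\eqref{eq:def:ZlatOracle}, only the last column differs, so
\begin{equation*}
\left\| \bbE (\ZpeerOracle-\ZlatOracle)^\top(\ZpeerOracle-\ZlatOracle)\right\|
= \bbE \left\| (\G-\Gtilde)\Y \right\|^2 .
\end{equation*}

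The first step is the Neumann expansion $\mXi=\sum_{q\ge0}\betay^q\G^q$. Minkowski's inequality in $L^2$ then gives
\begin{equation*}
\sqrt{\bbE\|(\G-\Gtilde)\Y\|^2}\le \sum_{q=0}^\infty |\betay|^q\Big(\sqrt{\bbE\|(\G-\Gtilde)\G^q\mLtilde\|^2}+\sqrt{\bbE\|(\G-\Gtilde)\G^q\be\|^2}\Big).
\end{equation*}
Working with square roots avoids cross terms between the $\mLtilde$ and $\be$ parts.

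The second step controls each summand. We replace $\G^q$ by $\Gtilde^q$ using Equation~\eqref{eq:powerDiff:noncommute}:
\begin{equation*}
(\G-\Gtilde)\G^q\vb=(\G-\Gtilde)\Gtilde^q\vb+\sum_{\ell=0}^{q-1}(\G-\Gtilde)\G^{q-\ell-1}(\G-\Gtilde)\Gtilde^\ell\vb .
\end{equation*}
Since $\|\G-\Gtilde\|\le 2$ and $\|\G\|\le1$, each term in the sum has norm at most $2\|(\G-\Gtilde)\Gtilde^\ell\vb\|$.
\begin{itemize}
\item For $\vb=\mLtilde$, apply Lemma~\ref{lem:proj-equiv:GGtilde:quaddiff:vbounded:nonasy} with $\vv=\Gtilde^\ell\mLtilde$. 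This is valid because $\Gtilde^\ell\mLtilde$ is $\Xpop$-measurable, hence independent of $\mA-\mApop$ given $\Xpop$. Using $\|\Gtilde^\ell\mLtilde\|_\infty\le\|\mLtilde\|_\infty$, each such term has second moment at most $Cn\nu_n\sum_i\bbE\|\mLtilde\|_\infty^2/\dtilde_i^2$.
\item For $\vb=\be$, $\Gtilde^\ell\be$ has mean-zero entries but may be correlated. Lemma~\ref{lem:proj-equiv:GGtilde:quaddiff:meanzero:nonasy} needs uncorrelated entries, so we do not use it directly. Instead we write $\bbE\|(\G-\Gtilde)\Gtilde^\ell\be\|^2=\sigmaeps^2\,\bbE\|(\G-\Gtilde)\Gtilde^\ell\|_F^2\le\sigmaeps^2\,\bbE\|\G-\Gtilde\|_F^2$. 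The final inequality holds because $\|\Gtilde^\ell\|\le1$; this fact about row-normalized PSD-kernel operators should be checked. The quantity $\bbE\|\G-\Gtilde\|_F^2$ was already bounded in the proof of Lemma~\ref{lem:proj-equiv:GGtilde:quaddiff:meanzero:nonasy} by $Cn\nu_n\sum_i\bbE\dtilde_i^{-2}$.
\end{itemize}

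Writing $R_n=n\nu_n\sum_i\bbE(\|\mLtilde\|_\infty^2+\sigmaeps^2)/\dtilde_i^2$, the $q$-th summand is therefore at most $C(q+1)\sqrt{R_n}$. Summing $\sum_q(q+1)|\betay|^q=(1-|\betay|)^{-2}<\infty$ gives $\bbE\|(\G-\Gtilde)\Y\|^2\le C R_n/(1-|\betay|)^4$.

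The final step invokes Assumption~\ref{assum:peer-oracle} for $\|\mLtilde\|_\infty\le C$ and Assumption~\ref{assum:equiv:LandX}. Together these give $R_n=o(n\cdot n^{-1/2})=o(\sqrt n)$, matching the latent-contagion analogue.

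The main obstacle is the $\be$ part, since $\G^q\be$ mixes $\be$ with the random $\G$. The Frobenius-norm route above handles this. If the claim $\|\Gtilde^\ell\|\le1$ in spectral norm fails, we would instead use $\|\Gtilde^\ell\|\le\|\Gtilde^\ell\|_\infty^{1/2}\|\Gtilde^\ell\|_1^{1/2}$ together with the column-sum bound from Assumption~\ref{assum:latent-oracle}.
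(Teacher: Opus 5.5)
Your argument follows essentially the paper's route, organized more cleanly: you Neumann-expand $\mXi$, trade $\G^q$ for $\Gtilde^q$ via Equation~\eqref{eq:powerDiff:noncommute}, and close with second-moment bounds on $(\G-\Gtilde)\Gtilde^\ell\vb$. The paper instead writes the quadratic form as a double series over $(q,r)$ and defers to the argument of Lemma~\ref{lem:proj-equiv:peercon:linearTerm:diff}. Your Minkowski step on the single series is what the Cauchy--Schwarz steps there amount to.

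Your Frobenius treatment of $\be$ is an improvement over the paper. The paper's route, through Lemma~\ref{lem:proj-equiv:peercon:qrRecurse:meanzero}, applies Lemma~\ref{lem:proj-equiv:GGtilde:quaddiff:meanzero:nonasy} to vectors like $\Gtilde^m\be$, whose entries are correlated, so that lemma's hypothesis is not met.

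The step you flagged is false as stated, and your fallback does not rescue it.

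\textbf{Counterexample.} Take $\Apop = xx^\top$ with $x>0$ non-constant. Then $\Gtilde = \mathbf{1}x^\top/\|x\|_1$ is idempotent, so $\|\Gtilde^\ell\| = \sqrt{n}\,\|x\|_2/\|x\|_1 > 1$ for every $\ell \ge 1$.

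\textbf{Fallback.} Assumption~\ref{assum:latent-oracle} is not a hypothesis of this lemma. Moreover, a column-sum bound $C$ only yields $\|\Gtilde^\ell\|\le C^{\ell/2}$, which can destroy summability against $|\betay|^q$.

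\textbf{Repair.} You do not need any operator-norm bound on $\Gtilde^\ell$. Let $M$ be $\Xpop$-measurable, entrywise non-negative and row-stochastic, such as $M=\Gtilde^\ell$. Write $\G-\Gtilde=\Dtilde^{-1}(\A-\Apop)+\Dtilde^{-1}(\Dtilde-\D)\G$ and bound the two pieces separately.
\begin{itemize}
\item Independence of the entries within each row of $\A-\Apop$ gives $\bbE\|\Dtilde^{-1}(\A-\Apop)M\|_F^2\le \nu_n\,\|M\|_F^2\sum_i\bbE\,\dtilde_i^{-2}$.
\item Each row of $M$ is a probability vector, so $\|M\|_F^2\le n$, and the first piece is at most $n\nu_n\sum_i\bbE\,\dtilde_i^{-2}$.
\item The rows of $\G M$ are also probability vectors, so $\|\Dtilde^{-1}(\Dtilde-\D)\G M\|_F^2\le\sum_i (d_i-\dtilde_i)^2/\dtilde_i^2$, whose expectation is at most $n\nu_n\sum_i\bbE\,\dtilde_i^{-2}$.
\end{itemize}
This is the computation in the proof of Lemma~\ref{lem:proj-equiv:GGtilde:quaddiff:meanzero:nonasy} with $M$ inserted. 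It gives exactly the bound you wanted for the $\be$ part.

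\textbf{Telescoping step.} The bounds $\|\G^k\|\le 1$ and $\|\G-\Gtilde\|\le 2$ you use there are spectral-norm claims of the same kind, and they also fail for general row-stochastic matrices. The paper relies on exactly these (e.g.\ in Lemma~\ref{lem:proj-equiv:peercon:qrRecurse:bounded}), so you are no worse off than the paper here. To justify them, note that for symmetric $\A$ the identity $\G^k=\D^{-1/2}(\D^{-1/2}\A\D^{-1/2})^k\D^{1/2}$ gives $\|\G^k\|\le(\max_i d_i/\min_i d_i)^{1/2}$ uniformly in $k$. The analogous bound holds for $\Gtilde^k$ with $\dtilde_i$ in place of $d_i$. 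This costs only a constant when degrees are comparable.
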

\begin{proof}
Recalling the structure of $\ZpeerOracle$ and $\ZlatOracle$ from Equations~\eqref{eq:def:ZpeerOracle} and~\eqref{eq:def:ZlatOracle}, respectively,
\begin{equation*} 
\left\| \bbE \left(\ZpeerOracle - \ZlatOracle\right)^\top
		\left( \ZpeerOracle - \ZlatOracle \right) \right\|
= \left| \bbE \Y^\top \left( \mG - \Gtilde \right)^\top
		\left( \mG - \Gtilde \right) \Y \right| .
\end{equation*}
Under the peer contagion model in Equation~\eqref{eq:lim-peer}, we have $\Y = \mXi \left( \mLtilde + \be \right)$, where $\mXi$ and $\mLtilde$ are as in Equation~\eqref{eq:def:XiL}.
Applying the Neumann expansion to $\mXi$ followed by the triangle inequality,
\begin{equation*} \begin{aligned}
&\left\| \bbE \left(\ZpeerOracle - \ZlatOracle\right)^{\!\top} \!
		\! \left( \ZpeerOracle - \ZlatOracle \right) \right\|
\le
\sum_{q=0}^\infty \sum_{r=0}^\infty
|\betay|^{q+r}
\left| \bbE \left(\mLtilde \!+\! \be \right)^\top \!
		\left[ \G^q \right]^\top \! \left( \mG - \Gtilde \right)^\top
		\! \left( \mG - \Gtilde \right)
		\G^r \left( \mLtilde \!+\! \be \right) \right| .
\end{aligned} \end{equation*}
Using an argument similar to that of the proof of Lemma~\ref{lem:proj-equiv:peercon:linearTerm:diff} to replace $\G^q$ and $\G^r$ with $\Gtilde^q$ and $\Gtilde^r$, respectively,
\begin{equation*} \begin{aligned}
&\left\| \bbE \left(\ZpeerOracle - \ZlatOracle\right)^{\!\top} \!
		\left( \ZpeerOracle - \ZlatOracle \right) \right\|
\le
\frac{ C \betay^2 }{(1-\betay)^4 }
n \nu_n \sum_{i=1}^n 
    \bbE \frac{ \sigmaeps^2 + \left\| \mLtilde \right\|_\infty^2 }
	{ \dtilde_i^2 } ,
\end{aligned} \end{equation*}
and Assumptions~\ref{assum:peer-oracle} and~\ref{assum:equiv:LandX} complete the proof.
\end{proof}

\begin{lemma} \label{lem:proj-equiv:peercon:covmxs}
Under the conditions of Lemma~\ref{lem:proj-equiv:peer},
let $\SigpeerOracle$ and $\SiglatOracle$ be as in Equation~\eqref{eq:def:Sigmamxs}.
Then
\begin{equation*}
\left\| \bbE \SigpeerOracle - \bbE \SiglatOracle \right\|
= o( n^{-1/2} ) .
\end{equation*}
\end{lemma}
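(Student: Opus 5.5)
This follows the proof of Lemma~\ref{lem:proj-equiv:latcon:covmxs}, with the roles of the two design matrices exchanged so that the cross term is the one already controlled under peer contagion. The starting point is the algebraic identity
\begin{equation*}
\ZpeerOracle^\top \ZpeerOracle - \ZlatOracle^\top \ZlatOracle
= \left( \ZpeerOracle - \ZlatOracle \right)^\top \ZpeerOracle
+ \ZpeerOracle^\top \left( \ZpeerOracle - \ZlatOracle \right)
- \left( \ZpeerOracle - \ZlatOracle \right)^\top \left( \ZpeerOracle - \ZlatOracle \right).
\end{equation*}
This expansion is chosen so that the linear terms involve $\ZpeerOracle$ rather than $\ZlatOracle$, because Lemma~\ref{lem:proj-equiv:peercon:linearTerm:diff} controls exactly $\bbE ( \ZlatOracle - \ZpeerOracle )^\top \ZpeerOracle$. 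Taking expectations, dividing by $n$ and applying the triangle inequality gives
\begin{equation*}
\left\| \bbE \SigpeerOracle - \bbE \SiglatOracle \right\|
\le \frac{2}{n} \left\| \bbE \left( \ZlatOracle - \ZpeerOracle \right)^\top \ZpeerOracle \right\|
+ \frac{1}{n} \left\| \bbE \left( \ZpeerOracle - \ZlatOracle \right)^\top \left( \ZpeerOracle - \ZlatOracle \right) \right\| .
\end{equation*}
Here the second linear term is the transpose of the first, so it has the same spectral norm.

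The first term is $o(\sqrt{n})/n = o(n^{-1/2})$ by Lemma~\ref{lem:proj-equiv:peercon:linearTerm:diff}. The second term is also $o(\sqrt{n})/n = o(n^{-1/2})$, by Lemma~\ref{lem:proj-equiv:peercon:quadterm:diff}. Adding the two bounds gives the claim.

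The one point to check is that the factorisation must put $\ZpeerOracle$, not $\ZlatOracle$, next to the difference in the linear terms. If it does not, we would need a separate bound on $\bbE(\ZpeerOracle-\ZlatOracle)^\top\ZlatOracle$ under peer contagion. That bound would also follow, by writing $\ZlatOracle = \ZpeerOracle - (\ZpeerOracle - \ZlatOracle)$ and reusing the same two lemmas, so the argument closes either way. There is no real obstacle; all the probabilistic work is already done in the two preceding lemmas.
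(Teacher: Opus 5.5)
Your proposal is correct and is essentially the paper's proof. You expand $\ZpeerOracle^\top \ZpeerOracle - \ZlatOracle^\top \ZlatOracle$ around $\ZpeerOracle$ to get the bound $\frac{2}{n}\|\bbE(\ZpeerOracle-\ZlatOracle)^\top\ZpeerOracle\| + \frac{1}{n}\|\bbE(\ZpeerOracle-\ZlatOracle)^\top(\ZpeerOracle-\ZlatOracle)\|$, and then control the two terms with Lemmas~\ref{lem:proj-equiv:peercon:linearTerm:diff} and~\ref{lem:proj-equiv:peercon:quadterm:diff}, exactly as the paper does.
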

\begin{proof}
Adding and subtracting appropriate quantities and applying the triangle inequality,
\begin{equation*}
\left\| \bbE \SigpeerOracle - \bbE \SiglatOracle \right\|
\le
\frac{2}{n} \left\| \bbE \left( \ZpeerOracle - \ZlatOracle \right)^\top
			\ZpeerOracle \right\|
+ \frac{1}{n} \left\| \bbE \left(\ZpeerOracle - \ZlatOracle\right)^\top 
			\left( \ZpeerOracle - \ZlatOracle \right)
		\right\| .
\end{equation*}
Applying Lemmas~\ref{lem:proj-equiv:peercon:linearTerm:diff} and~\ref{lem:proj-equiv:peercon:quadterm:diff} yields the result.
\end{proof}

\begin{lemma} \label{lem:proj-equiv:peercon:precmxBounded}
Under the conditions of Lemma~\ref{lem:proj-equiv:peer}, $\SigpeerOracle$ and $\SiglatOracle$, as defined in Equation~\eqref{eq:def:Sigmamxs}, are both invertible and obey
\begin{equation*}
\max\left\{ \left\| \left( \bbE \SigpeerOracle \right)^{-1} \right\|,
		\left\| \left( \bbE \SiglatOracle \right)^{-1} \right\|
	\right\} \le C .
\end{equation*}
\end{lemma}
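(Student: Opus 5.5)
The argument mirrors Lemma~\ref{lem:proj-equiv:latent:peerCovmx} from the latent contagion subsection, with the roles of $\SiglatOracle$ and $\SigpeerOracle$ interchanged as appropriate. The plan is to bound $(\bbE \SiglatOracle)^{-1}$ first, using the assumptions directly, and then transfer that bound to $(\bbE \SigpeerOracle)^{-1}$ by a perturbation argument driven by Lemma~\ref{lem:proj-equiv:peercon:covmxs}.

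For $\bbE \SiglatOracle$, Assumption~\ref{assum:equiv:limcov} gives $\bbE \SiglatOracle = \bbE \ZlatOracle^\top\ZlatOracle/n \to \Sigma_\infty$ with $\Sigma_\infty$ invertible. Hence $\sigmamin(\bbE \SiglatOracle) \ge \sigmamin(\Sigma_\infty) - \|\bbE \SiglatOracle - \Sigma_\infty\|$, and the subtracted term tends to zero. By Weyl's inequality, for all $n$ suitably large this gives $\sigmamin(\bbE\SiglatOracle) \ge \sigmamin(\Sigma_\infty)/2$. Consequently $\bbE\SiglatOracle$ is invertible and $\|(\bbE\SiglatOracle)^{-1}\| \le 2/\sigmamin(\Sigma_\infty) = C$.

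For $\bbE \SigpeerOracle$, Lemma~\ref{lem:proj-equiv:peercon:covmxs} gives $\|\bbE\SigpeerOracle - \bbE\SiglatOracle\| = o(n^{-1/2})$. A second application of Weyl's inequality yields $\sigmamin(\bbE\SigpeerOracle) \ge \sigmamin(\Sigma_\infty)/2 - o(n^{-1/2}) \ge \sigmamin(\Sigma_\infty)/4$ for large $n$. Therefore $\bbE\SigpeerOracle$ is invertible with $\|(\bbE\SigpeerOracle)^{-1}\|\le 4/\sigmamin(\Sigma_\infty)$. Taking the maximum of the two bounds gives the claimed constant. The finitely many small $n$ are either absorbed into $C$ or excluded, since the statement is asymptotic.

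I expect no real obstacle here: all the work sits in Lemma~\ref{lem:proj-equiv:peercon:covmxs}, which rests on the Neumann-expansion bounds of Lemmas~\ref{lem:proj-equiv:peercon:linearTerm:diff} and~\ref{lem:proj-equiv:peercon:quadterm:diff}. The one point to state carefully is that the claim concerns the expected matrices $\bbE\SigpeerOracle$ and $\bbE\SiglatOracle$, not the random matrices themselves, so the argument works with expectations throughout.
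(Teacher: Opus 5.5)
Your proposal is correct and follows the paper's own argument: the bound on $\|(\bbE\SiglatOracle)^{-1}\|$ comes directly from Assumption~\ref{assum:equiv:limcov}, and the bound on $\|(\bbE\SigpeerOracle)^{-1}\|$ is then transferred from it via Lemma~\ref{lem:proj-equiv:peercon:covmxs}. Your explicit Weyl-inequality steps and the restriction to $n$ suitably large simply spell out details that the paper's two-line proof leaves implicit.
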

\begin{proof}
By Assumption~\ref{assum:equiv:limcov}, $\bbE \SiglatOracle$ converges to an invertible matrix, so that the bound on $\| \left( \bbE \SiglatOracle \right)^{-1} \|$ is immediate.
The bound on $\| \left( \bbE \SigpeerOracle \right)^{-1} \|$ then follows from Lemma~\ref{lem:proj-equiv:peercon:covmxs}.
\end{proof}

\begin{lemma} \label{lem:proj-equiv:peercon:precMxs}
Under the conditions of Lemma~\ref{lem:proj-equiv:peer}, with $\SiglatOracle$ and $\SigpeerOracle$ as defined in Equation~\eqref{eq:def:Sigmamxs}, 
\begin{equation*}
\left\| \left( \bbE \SigpeerOracle \right)^{-1}
		- \left( \bbE \SiglatOracle \right)^{-1} \right\|
= o( n^{-1/2} ) .
\end{equation*}
\end{lemma}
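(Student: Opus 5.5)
The plan mirrors the proof of Lemma~\ref{lem:proj-equiv:latcon:precMxs} in the latent contagion setting. The starting point is the resolvent identity
\begin{equation*}
\left( \bbE \SigpeerOracle \right)^{-1} - \left( \bbE \SiglatOracle \right)^{-1}
= \left( \bbE \SigpeerOracle \right)^{-1}
\left( \bbE \SiglatOracle - \bbE \SigpeerOracle \right)
\left( \bbE \SiglatOracle \right)^{-1},
\end{equation*}
which is valid once both expected covariance matrices are known to be invertible. By Lemma~\ref{lem:proj-equiv:peercon:precmxBounded}, this holds for all $n$ suitably large.

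Applying submultiplicativity of the spectral norm to this identity gives
\begin{equation*}
\left\| \left( \bbE \SigpeerOracle \right)^{-1} - \left( \bbE \SiglatOracle \right)^{-1} \right\|
\le
\left\| \left( \bbE \SigpeerOracle \right)^{-1} \right\|
\left\| \left( \bbE \SiglatOracle \right)^{-1} \right\|
\left\| \bbE \SigpeerOracle - \bbE \SiglatOracle \right\| .
\end{equation*}
The two precision-matrix norms are bounded by a constant $C$, again by Lemma~\ref{lem:proj-equiv:peercon:precmxBounded}. The remaining factor is $o(n^{-1/2})$ by Lemma~\ref{lem:proj-equiv:peercon:covmxs}. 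Multiplying these bounds gives the claimed $o(n^{-1/2})$ rate.

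The only delicate point is the invertibility and uniform boundedness of $(\bbE \SigpeerOracle)^{-1}$, which is where the argument could break. Lemma~\ref{lem:proj-equiv:peercon:precmxBounded} handles this as follows. Assumption~\ref{assum:equiv:limcov} gives convergence of $\bbE \SiglatOracle$ to an invertible limit, so $\sigmamin(\bbE \SiglatOracle)$ is bounded below for large $n$. Weyl's inequality, combined with the $o(n^{-1/2})$ closeness from Lemma~\ref{lem:proj-equiv:peercon:covmxs}, then transfers this lower bound to $\sigmamin(\bbE \SigpeerOracle)$, at least half the limiting value for $n$ large. With that in hand, the rest is the two-line computation above.
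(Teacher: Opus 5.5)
Your proposal is correct and is essentially the paper's own proof: you apply submultiplicativity to the resolvent identity, bound both precision-matrix norms by a constant via Lemma~\ref{lem:proj-equiv:peercon:precmxBounded}, and control the covariance gap with Lemma~\ref{lem:proj-equiv:peercon:covmxs}. Your Weyl-inequality justification for the uniform bound on $\left\| \left( \bbE \SigpeerOracle \right)^{-1} \right\|$ is the same reasoning that Lemma~\ref{lem:proj-equiv:peercon:precmxBounded} uses, which the paper states only tersely.
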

\begin{proof}
By Lemma~\ref{lem:proj-equiv:peercon:precmxBounded} and Assumption~\ref{assum:equiv:limcov}, both $\bbE \SigpeerOracle$ and $\bbE \SiglatOracle$ are invertible for $n$ suitably large.
Applying submultiplicativity,
\begin{equation*} \begin{aligned}
\left\| \left( \bbE \SigpeerOracle \right)^{-1}
		- \left( \bbE \SiglatOracle \right)^{-1} \right\|
& \le
\left\| \left( \bbE \SigpeerOracle \right)^{-1}  \right\|
\left\| \left( \bbE \SiglatOracle \right)^{-1} \right\|
\left\| \bbE \SigpeerOracle - \bbE \SiglatOracle \right\|.
\end{aligned} \end{equation*}
Applying Lemma~\ref{lem:proj-equiv:peercon:precmxBounded} again,
\begin{equation*}
\left\| \left( \bbE \SigpeerOracle \right)^{-1}
- \left( \bbE \SiglatOracle \right)^{-1} \right\|
\le C \left\| \bbE \SigpeerOracle - \bbE \SiglatOracle \right\|,
\end{equation*}
and Lemma~\ref{lem:proj-equiv:peercon:covmxs} yields our desired result.
\end{proof}

\begin{proof}[Proof of Lemma~\ref{lem:proj-equiv:peer}]
Applying the triangle inequality,
\begin{equation} \label{eq:proj-equiv:peer:maintri} \begin{aligned}
\left\| \left[ \bbE \ZpeerOracle^\top \ZpeerOracle \right]^{\!-1} \!\!
	\bbE \ZpeerOracle^\top \Y
	- \! \left[\! \bbE \ZlatOracle^\top \ZlatOracle \right]^{\!-1} \!\!
		\bbE \ZlatOracle^\top \Y \right\|
&\le
\left\| \left[ \!\left( \bbE \ZpeerOracle^\top \ZpeerOracle \right)^{\!-1}
	\!\!-\! \left( \bbE \ZlatOracle^\top \ZlatOracle \right)^{\!-1}\!
	\right] \bbE \ZpeerOracle^\top \Y \right\| \\
&~~~~~~+ \left\| \left( \bbE \ZlatOracle^\top \ZlatOracle \right)^{\!-1}
		\! \left[ \bbE \ZpeerOracle^\top \Y 
		- \bbE \ZlatOracle^\top \Y \right] \right\| .
\end{aligned} \end{equation}

Applying submultiplicativity and recalling the defitinitions of $\SigpeerOracle$ and $\SiglatOracle$ from Equation~\eqref{eq:def:Sigmamxs},
\begin{equation*}
\left\| \left( \bbE \ZlatOracle^\top \ZlatOracle \right)^{\!-1}
		\! \left[ \bbE \ZpeerOracle^\top \Y 
		- \bbE \ZlatOracle^\top \Y \right] \right\|
\le
\left\| \left( \bbE \SiglatOracle \right)^{\!-1} \right\|
\left\| \frac{1}{n} \bbE
		\left( \ZpeerOracle - \ZlatOracle \right)^\top \Y 
		\right\| .
\end{equation*}
Applying Lemma~\ref{lem:proj-equiv:peercon:precmxBounded}
and recalling the structure of $\ZlatOracle$ and $\ZpeerOracle$ from Equations~\eqref{eq:def:ZlatOracle} and~\eqref{eq:def:ZpeerOracle}, respectively,
\begin{equation} \label{eq:proj-equiv:peer:triTerm2:intermezzo}
\left\| \left( \bbE \ZlatOracle^\top \ZlatOracle \right)^{\!-1}
		\! \left[ \bbE \ZpeerOracle^\top \Y 
		- \bbE \ZlatOracle^\top \Y \right] \right\|
\le \frac{C}{n} \left|
	\bbE \Y^\top \left( \G - \Gtilde \right)^\top \Y 
	\right|.
\end{equation}

Applying a Neumann expansion and using Lemmas~\ref{lem:proj-equiv:peercon:qrRecurse:bounded} and~\ref{lem:proj-equiv:peercon:qrRecurse:meanzero} as in the proof of Lemma~\ref{lem:proj-equiv:peercon:linearTerm:diff},
\begin{equation*} \begin{aligned}
\left| \bbE \Y^\top \! \left( \!\G - \Gtilde \right)^{\!\top} \! \Y 
	\right|
&\le
\frac{C |\betay|^2}{(1-\betay)^4}
\left[ n \nu_n \sum_{i=1}^n \bbE \frac{\|\mLtilde\|_\infty^2+\sigmaeps^2}                      { \dtilde_i^2 } \right] .
\end{aligned}\end{equation*}
Since $\betay$ is fixed with respect to $n$, Assumptions~\ref{assum:peer-oracle}
        and~\ref{assum:equiv:LandX} yield
\begin{equation*} 
\left| \bbE \Y^\top \! \left( \!\G - \Gtilde \right)^{\!\top} \! \Y 
	\right|
= o( \sqrt{n} ) .
\end{equation*}
Applying this to Equation~\eqref{eq:proj-equiv:peer:triTerm2:intermezzo},
\begin{equation} \label{eq:proj-equiv:peer:triTerm2}
\left\| \left( \bbE \ZlatOracle^\top \ZlatOracle \right)^{\!-1}
		\! \left[ \bbE \ZpeerOracle^\top \Y 
		- \bbE \ZlatOracle^\top \Y \right] \right\|
= o( n^{-1/2} ) .
\end{equation}

Recalling the definitions from Equation~\eqref{eq:def:Sigmamxs}, Lemma~\ref{lem:proj-equiv:peercon:linearTerm} yields
\begin{equation*} \begin{aligned}
\left\| \left[ \!\left( \bbE \ZpeerOracle^\top \ZpeerOracle \right)^{\!-1}
	\!\!-\! \left( \bbE \ZlatOracle^\top \ZlatOracle \right)^{\!-1}\!
	\right] \bbE \ZpeerOracle^\top \Y \right\|
&\le
\left\| \left( \bbE \SigpeerOracle \right)^{\!-1}
        \!\!-\! \left( \bbE \SiglatOracle \right)^{\!-1}
	\right\|
\left\| \bbE \frac{ \ZpeerOracle^\top \Y }{ n } \right\| \\
&\le
C \left\| \left( \bbE \SigpeerOracle \right)^{\!-1}
        \!\!-\! \left( \bbE \SiglatOracle \right)^{\!-1}
	\right\| .
\end{aligned} \end{equation*}
Applying Lemma~\ref{lem:proj-equiv:peercon:precMxs},
\begin{equation} \label{eq:proj-equiv:peer:triTerm1}
\left\| \left[ \!\left( \bbE \ZpeerOracle^\top \ZpeerOracle \right)^{\!-1}
	\!\!-\! \left( \bbE \ZlatOracle^\top \ZlatOracle \right)^{\!-1}\!
	\right] \bbE \ZpeerOracle^\top \Y \right\| \\
= o( n^{-1/2} ) .
\end{equation}

Applying Equations~\eqref{eq:proj-equiv:peer:triTerm2} and~\eqref{eq:proj-equiv:peer:triTerm1} to Equation~\eqref{eq:proj-equiv:peer:maintri} completes the proof.
\end{proof}

\newpage

\printbibliography

\end{document}